\begin{document}

\ifthenelse{\boolean{techreport}}{%
\title{A Type System for Privacy Properties (Technical Report)}
}
{
\title{A Type System for Privacy Properties}
}

\author{Véronique Cortier}
\affiliation{%
  \institution{CNRS, LORIA}
  \city{Nancy} 
  \country{France} 
}
\email{veronique.cortier@loria.fr}

\author{Niklas Grimm}
\affiliation{%
  \institution{TU Wien}
  \city{Vienna}
  \country{Austria} 
}
\email{niklas.grimm@tuwien.ac.at}

\author{Joseph Lallemand}
\affiliation{%
  \institution{Inria, LORIA}
  \city{Nancy} 
  \country{France} 
}
\email{joseph.lallemand@loria.fr}

\author{Matteo Maffei}
\affiliation{%
  \institution{TU Wien}
  \city{Vienna}
  \country{Austria} 
}
\email{matteo.maffei@tuwien.ac.at}

\begin{abstract}
Mature push button tools have emerged for checking trace properties
(e.g. secrecy or authentication) of security protocols. The case of indistinguishability-based
privacy properties (e.g. ballot privacy or anonymity) is more complex
and constitutes an active research topic with several recent propositions of techniques
and tools.

We explore a novel approach based on type systems and provide a
(sound) type system for proving equivalence of protocols, for a
bounded or an unbounded number of sessions. The resulting prototype
implementation has been tested on various protocols of the
literature. It provides a significant speed-up (by orders of magnitude) compared to tools for a
bounded number of sessions and complements in terms of expressiveness
other state-of-the-art tools, such as ProVerif and Tamarin: e.g., we
show that our analysis technique is the first one to handle a faithful
encoding of the Helios e-voting protocol \veroniquebis{in the context of
  an untrusted ballot box}.
\end{abstract}


\newcommand{\mypar}[1]{\smallskip\noindent\textbf{#1.}}

\renewcommand{\infer}[3][]{\inferrule{#2}{#3}\ifthenelse{\isempty{#1}}{}{\;(\textsc{#1})}}

	\newcommand{\nonce}[3]{\ensuremath{\tau^{#1,#2}_{#3}}\xspace}

\newcommand{\TypeEq}{TypeEq}

\ifthenelse{\boolean{draft}}{
\newcommand{\fix}[3]{{#1\textbf{[#2:} #3\textbf{]}}}
}{
\newcommand{\fix}[3]{}
}

\ifthenelse{\boolean{draft}}{
\newcommand{\colordraft}[2]{{#1 #2}}
}{
\newcommand{\colordraft}[2]{{#2}}
}

\newcommand{\case}[1]{\underline{#1}}

\newcommand{\todo}[1]{\fix{\color{orange}}{TODO}{#1}}
\newcommand{\joseph}[1]{\fix{\color{blue}}{j}{#1}}
\newcommand{\josephbis}[1]{\colordraft{\color{blue}}{#1}}
\newcommand{\veronique}[1]{\fix{\color{violet}}{V}{#1}}
\newcommand{\veroniquebis}[1]{\colordraft{\color{violet}}{#1}}
\newcommand{\niklas}[1]{\fix{\color{red}}{N}{#1}}
\newcommand{\matteo}[1]{\fix{\color{red}}{M}{#1}}


\newcommand{\eg}{\emph{e.g.}\xspace}
\newcommand{\ie}{\emph{i.e.}\xspace}
\newcommand{\wrt}{\emph{w.r.t.}\xspace}
\newcommand{\cf}{\emph{cf.}\xspace}
\newcommand{\etal}{\emph{et al.}\xspace}

\newcommand{\eqdef}{\mathop{\overset{\mathrm{def}}{=}}}

\renewenvironment{description}
  {\begin{list}{}{
    \setlength{\labelwidth}{0pt}
    \setlength{\leftmargin}{1.5\parindent}
    \setlength{\itemindent}{-\leftmargin}
    \setlength{\itemsep}{0pt}
    \let\makelabel\descriptionlabel}}
  {\end{list}}

\newcommand{\proc}[1]{\mathtt{#1}}
\newcommand{\NEWNA}{\proc{new}}
\newcommand{\NEWN}[1]{\NEWNA\; #1}
\newcommand{\NEW}[2]{\NEWNA\; #1:#2}
\newcommand{\NEWnew}[2]{\NEWNA\; #1 : #2}
\newcommand{\OUTNA}{\proc{out}}
\newcommand{\OUT}[1]{\OUTNA(#1)}
\newcommand{\INNA}{\proc{in}}
\newcommand{\IN}[1]{\INNA(#1)}
\newcommand{\PAR}{~|~}
\newcommand{\REPL}{\proc{!}}
\newcommand{\ZERO}{\proc{0}}
\newcommand{\LETNA}{\proc{let}}
\newcommand{\LETINNA}{\proc{in}}
\newcommand{\LETELSENA}{\proc{else}}
\newcommand{\LETIN}[2]{\LETNA\; #1 = #2 \;\LETINNA}
\newcommand{\LET}[4]{\LETIN{#1}{#2}\; #3 \;\LETELSENA\; #4}
\newcommand{\IFNA}{\proc{if}}
\newcommand{\THENNA}{\proc{then}}
\newcommand{\ELSENA}{\proc{else}}
\newcommand{\IFTHEN}[2]{\IFNA\;#1 = #2 \;\THENNA}
\newcommand{\ITE}[4]{\IFTHEN{#1}{#2}\; #3 \;\ELSENA\; #4}
\newcommand{\INLNA}{\proc{inl}}
\newcommand{\INL}[1]{\INLNA(#1)}
\newcommand{\INRNA}{\proc{inr}}
\newcommand{\INR}[1]{\INRNA(#1)}
\newcommand{\MATCHNA}{\proc{match}}
\newcommand{\MATCH}[5]{\MATCHNA\; #1 \;\proc{with}\; \INL{#2} \Rightarrow  #3~|~ \INR{#4} \Rightarrow #5}
\newcommand{\CHOICENA}{\proc{choice}}
\newcommand{\CHOICE}[2]{\CHOICENA(#1,#2)}

\newcommand{\FN}{\mathcal{FN}}
\newcommand{\BN}{\mathcal{BN}}

\newcommand{\fsym}[1]{\mathtt{#1}}
\newcommand{\PAIRNA}{\langle \cdot, \cdot \rangle}
\newcommand{\PAIR}[2]{\langle #1, #2 \rangle}
\newcommand{\HASHNA}{\fsym{h}}
\newcommand{\HASH}[1]{\HASHNA(#1)}
\newcommand{\FSTNA}{\pi_1}
\newcommand{\FST}[1]{\FSTNA(#1)}
\newcommand{\SNDNA}{\pi_2}
\newcommand{\SND}[1]{\SNDNA(#1)}
\newcommand{\ENCNA}{\fsym{enc}}
\newcommand{\ENC}[2]{\ENCNA(#1,#2)}
\newcommand{\DECNA}{\fsym{dec}}
\newcommand{\DEC}[2]{\DECNA(#1,#2)}
\newcommand{\AENCNA}{\fsym{aenc}}
\newcommand{\AENC}[2]{\AENCNA(#1,#2)}
\newcommand{\ADECNA}{\fsym{adec}}
\newcommand{\ADEC}[2]{\ADECNA(#1,#2)}
\newcommand{\SIGNNA}{\fsym{sign}}
\newcommand{\SIGN}[2]{\SIGNNA(#1,#2)}
\newcommand{\CHECKNA}{\fsym{checksign}}
\newcommand{\CHECK}[2]{\CHECKNA(#1,#2)}
\newcommand{\PUBKNA}{\fsym{pk}}
\newcommand{\PUBK}[1]{\PUBKNA(#1)}
\newcommand{\VKNA}{\fsym{vk}}
\newcommand{\VK}[1]{\VKNA(#1)}
\newcommand{\MATCHLNA}{\fsym{matchl}}
\newcommand{\MATCHL}[1]{\MATCHLNA(#1)}
\newcommand{\MATCHRNA}{\fsym{matchr}}
\newcommand{\MATCHR}[1]{\MATCHRNA(#1)}

\newcommand{\F}{\mathcal{F}}
\newcommand{\N}{\mathcal{N}}
\newcommand{\Names}{\mathit{\mathcal{NK}}}
\newcommand{\X}{\mathcal{X}}
\newcommand{\K}{\mathcal{K}}
\newcommand{\AX}{\mathcal{AX}}
\newcommand{\V}{\mathcal{V}}
\newcommand{\W}{\mathcal{W}}
\newcommand{\T}{\mathcal{T}}
\newcommand{\CST}{\mathcal{C}}

\newcommand{\Lleft}{\mathscr{l}}
\newcommand{\Rright}{\mathscr{r}}
\newcommand{\typ}[1]{\mathit{#1}}
\renewcommand{\L}{\mathtt{LL}}
\renewcommand{\H}{\mathtt{HL}}
\renewcommand{\S}{\mathtt{HH}}
\newcommand{\skey}[2]{\mathrm{key}^{#1}(#2)}
\newcommand{\skeyB}[2]{\mathrm{key}^{#1}\left(#2\right)}
\newcommand{\prikey}[2]{\mathrm{prikey}^{#1}(#2)}
\newcommand{\encT}[2]{{(#1)}_{#2}}
\newcommand{\aencT}[2]{{\{#1\}}_{#2}}
\newcommand{\LRT}[2]{\llbracket ~#1\; ; \; ~#2\rrbracket}
\newcommand{\LRTnew}[4]{\llbracket \nonceentry{#1}{#2}\,;\, \nonceentry{#3}{#4}\rrbracket}
\newcommand{\LRTnewnew}[2]{\llbracket #1\,;\, #2\rrbracket}
\newcommand{\LRTn}[6]{\LRTnewnew{\noncetypelab{#1}{#2}{#3}}{\noncetypelab{#4}{#5}{#6}}}
\newcommand{\orT}{\,\vee\,}
\newcommand{\constantnoncetypes}{\T_{\CST,\N}}
\newcommand{\noncetypes}{\T_\N}
\newcommand{\constanttypes}{\T_\CST}
\newcommand{\noncetypessingle}{\noncetypes^1}
\newcommand{\noncetypesinf}{\noncetypes^{\infty}}
\newcommand{\noncetypebase}[1]{\tau_{#1}}
\newcommand{\noncetype}[1]{\noncetypebase{#1}^{\oneorinf}}
\newcommand{\noncetypelab}[3]{\ensuremath{\tau^{#1,#2}_{#3}}\xspace}
\newcommand{\noncetypesingle}[1]{\noncetypebase{#1}^1}
\newcommand{\noncetypeinf}[1]{\noncetypebase{#1}^{\infty}}
\newcommand{\noncenames}[1]{names(#1)}
\newcommand{\nonceentry}[2]{(#1\!:\!#2)}
\newcommand{\oneorinf}{{a}}
\newcommand{\unfold}[1]{\widetilde{#1}}

\newcommand{\branch}[1]{\mathrm{branches}(#1)}

\newcommand{\subtyp}{<:}



\newcommand{\eqT}{\sim}

\newcommand{\teqT}[7]{\color{red}#1;#2;(#3,#4) \vdash #5 \eqT #6 : #7}

\newcommand{\teqTcnew}[5]{#1 \vdash #2 \eqT #3 : #4 \rightarrow #5}
\newcommand{\teqTc}[8]{\teqTcnew{#1}{#5}{#6}{#7}{#8}}

\newcommand{\teqTGDE}[3]{\teqT{\Gamma}{\Delta}{\E}{\E'}{#1}{#2}{#3}}

\newcommand{\teqTcGDE}[4]{\teqTc{\Gamma}{\Delta}{\E}{\E'}{#1}{#2}{#3}{#4}}

\newcommand{\tewf}[1]{#1 \vdash \diamond}


\newcommand{\eqstr}{\approx}

\newcommand{\eqP}{\sim}

\newcommand{\teqPnew}[4]{#1 \vdash #2 \eqP #3 \rightarrow #4}
\newcommand{\teqP}[7]{\teqPnew{#1}{#5}{#6}{#7}}

\newcommand{\teqPGDE}[3]{\teqP{\Gamma}{\Delta}{\E}{\E'}{#1}{#2}{#3}}


\newcommand{\tDestnew}[3]{#1 \vdash #2 : #3}
\newcommand{\tDest}[4]{\tDestnew{#1}{#2\eqC #3}{#4}}


\newcommand{\varenv}[1]{#1_{\X}}
\newcommand{\keyenv}[1]{#1_{\K}}
\newcommand{\nonceenv}[1]{#1_{\N}}




\newcommand{\wtcnew}[4]{\novar{#1} \vdash #2 \eqT #3 : \onlyvar{#1} \rightarrow #4}
\newcommand{\wtc}[7]{\wtcnew{#6}{#4}{#5}{#7}}

\newcommand{\novar}[1]{{{#1}_{\N,\K}}}
\newcommand{\onlyvar}[1]{{{#1}_{\X}}}


\newcommand{\wtcDE}[4]{\wtc{\Delta}{\E}{\E'}{#1}{#2}{#3}{#4}}

\newcommand{\eqC}{\sim} 
\newcommand{\UnionAll}{{\cup_{\forall}}}
\newcommand{\UnionCart}{{\cup_{\times}}}
\newcommand{\BigUnionCart}{\mathop{\bigcup_{\times}}}

\newcommand{\constSubst}[6]{c^{#1}_{#5,#6}}
\newcommand{\constSubstGDE}[2]{\constSubst{\Gamma}{\Delta}{\E}{\E'}{#1}{#2}}

\newcommand{\phiL}[1]{\phi_\Lleft(#1)}
\newcommand{\phiR}[1]{\phi_\Rright(#1)}

\newcommand{\phiEnew}[1]{\phi_\L^{#1}}
\newcommand{\phiE}[3]{\phiEnew{#1}}
\newcommand{\phiEE}{\phiE{\Gamma}{\E}{\E'}}
\newcommand{\phiEEO}{\phiE{\Gammao}{\E}{\E'}}
\newcommand{\phiZ}{\phiE{\Gamma}{\EPO}{\EQO}}

\newcommand{\E}{\ensuremath{\mathcal{E}}}
\newcommand{\EPO}{\E_P^0}
\newcommand{\EQO}{\E_Q^0}
\newcommand{\R}[2]{\mathcal{R}(#1,#2)}
\newcommand{\PP}{\mathcal{P}}
\newcommand{\QQ}{\mathcal{Q}}
\newcommand{\PPP}{\mathscr{P}}
\newcommand{\redAction}[1]{\xrightarrow{\;#1\;}}
\newcommand{\silentAction}{\tau}
\newcommand{\redWord}[1]{\xrightarrow{\;#1\;}_*}
\newcommand{\redSplit}{\longrightarrow_s}
\newcommand{\processNorm}[1]{\overline{#1}}
\newcommand{\domNA}{\mathrm{dom}}
\newcommand{\dom}[1]{\domNA(#1)}
\newcommand{\varNA}{\mathrm{vars}}
\newcommand{\var}[1]{\varNA(#1)}
\newcommand{\bvarsNA}{\mathrm{bvars}}
\newcommand{\bvars}[1]{\bvarsNA(#1)}
\newcommand{\namesNA}{\mathrm{names}}
\newcommand{\names}[1]{\namesNA(#1)}
\newcommand{\keysNA}{\mathrm{keys}}
\newcommand{\keys}[1]{\keysNA(#1)}
\newcommand{\bnamesNA}{\mathrm{nnames}}
\newcommand{\bnames}[1]{\bnamesNA(#1)}
\newcommand{\nnamesNA}{\mathrm{nnames}}
\newcommand{\nnames}[1]{\nnamesNA(#1)}
\newcommand{\fvarNA}{\mathrm{fvars}}
\newcommand{\fvar}[1]{\fvarNA(#1)}
\newcommand{\fnamesNA}{\mathrm{fnames}}
\newcommand{\fnames}[1]{\fnamesNA(#1)}
\newcommand{\traceNA}{\mathrm{trace}}
\newcommand{\trace}[1]{\traceNA(#1)}

\renewcommand{\c}{\overline{c}}
\newcommand{\cc}{\tilde{c}}

\newcommand{\Gammao}{\overline{\Gamma}}

\newcommand{\EG}{\E_\Gamma}
\newcommand{\EGG}{\E_{\Gamma'}}
\newcommand{\EGGG}{\E_{\Gamma''}}
\newcommand{\EGO}{\E_{\Gammao}}


\newcommand{\incTrace}{\sqsubseteq_t}
\newcommand{\equivTrace}{\approx_t}

\newcommand{\evalNA}{\downarrow}
\newcommand{\eval}[1]{#1 \evalNA}
\newcommand{\eqEv}{=_\evalNA}
\newcommand{\eqCEv}{\eqC_\evalNA}
\newcommand{\eqL}[5]{#1, #2 \vdash #3 =_\L #4 \rightarrow #5}
\newcommand{\eqSilent}{=_\silentAction}

\newcommand{\inst}[3]{{\left\llbracket#1\right\rrbracket}_{#2, #3}}

\newcommand{\splitr}[1]{{\longrightarrow_{#1}}}
\newcommand{\splitrd}{\splitr{\Gamma}}


\newcommand{\Voter}{\mathsf{Voter}}
\newcommand{\Pass}{\mathsf{Pass}}
\newcommand{\Read}{\mathsf{Read}}


\newcommand{\refine}[2]{\{| x_\Lleft = #1 \wedge x_\Rright = #2 |\}}


\newcommand{\MATCHXORNA}[1]{\proc{match\_xor_{#1}}}
\newcommand{\MATCHXOR}[6]{\MATCHXORNA{#1}\; #2 \;\proc{with}\; \INL{#3} \Rightarrow  #4~|~ \INR{#5} \Rightarrow #6}

\newcommand{\nrep}[1]{{{#1}^*}}
\newcommand{\instTerm}[3]{{\left[\:{#1}\:\right]_{#2}^{#3}}}
\newcommand{\instTerma}[2]{\instTerm{#1}{#2}{\Gamma}}
\newcommand{\instProc}[3]{\instTerm{#1}{#2}{#3}}
\newcommand{\instProca}[2]{\instProc{#1}{#2}{\Gamma}}
\newcommand{\instConst}[3]{\instTerm{#1}{#2}{#3}}
\newcommand{\instConsta}[2]{\instConst{#1}{#2}{\Gamma}}
\newcommand{\instTyp}[2]{{\left[\:{#1}\:\right]^{#2}}}
\newcommand{\instG}[3]{{\left[\:{#1}\:\right]^{#3}_{#2}}}
\newcommand{\instGr}[2]{{\left[\:{#1}\:\right]_{#2}}}
\newcommand{\instGG}[2]{\instTyp{#1}{#2}}
\newcommand{\instD}[2]{\instTyp{#1}{#2}}
\newcommand{\instE}[2]{\instTyp{#1}{#2}}
\newcommand{\instCst}[3]{\instG{#1}{#2}{#3}}
\newcommand{\instCstr}[2]{\instGr{#1}{#2}}
\newcommand{\NN}{\N_0}
\newcommand{\NI}{\N_i}
\newcommand{\NR}{\N_*}
\newcommand{\XX}{\X_0}
\newcommand{\XR}{\X_*}
\newcommand{\XI}{\X_i}

\newcommand{\checkconst}{\mathtt{check\_const}}
\newcommand{\checkconststar}{\mathtt{check\_const}}
\newcommand{\stepOa}{\mathtt{step0a}}
\newcommand{\stepI}{\mathtt{step1}}
\newcommand{\stepII}{\mathtt{step2}}
\newcommand{\stepIII}{\mathtt{step3}}
\newcommand{\stepIV}{\mathtt{step4}}
\newcommand{\stepIVstar}{\mathtt{step4}}

\newcommand{\TNonce}{\textsc{TNonce}\xspace}
\newcommand{\TNonceL}{\textsc{TNonceL}\xspace}
\newcommand{\TCst}{\textsc{TCstFN}\xspace}
\newcommand{\TPubkey}{\textsc{TPubKey}\xspace}
\newcommand{\TVkey}{\textsc{TVKey}\xspace}
\newcommand{\TKey}{\textsc{TKey}\xspace}
\newcommand{\TVar}{\textsc{TVar}\xspace}
\newcommand{\TPair}{\textsc{TPair}\xspace}
\newcommand{\TEnc}{\textsc{TEnc}\xspace}
\newcommand{\TAenc}{\textsc{TAenc}\xspace}
\newcommand{\TEncH}{\textsc{TEncH}\xspace}
\newcommand{\TAencH}{\textsc{TAencH}\xspace}
\newcommand{\TEncL}{\textsc{TEncL}\xspace}
\newcommand{\TAencL}{\textsc{TAencL}\xspace}
\newcommand{\TSignL}{\textsc{TSignL}\xspace}
\newcommand{\TSignH}{\textsc{TSignH}\xspace}
\newcommand{\THash}{\textsc{THash}\xspace}
\newcommand{\THashL}{\textsc{THashL}\xspace}
\newcommand{\THigh}{\textsc{THigh}\xspace}
\newcommand{\TSub}{\textsc{TSub}\xspace}
\newcommand{\TOr}{\textsc{TOr}\xspace}
\newcommand{\TLRone}{\ensuremath{\textsc{TLR}^1}\xspace}
\newcommand{\TLRinf}{\ensuremath{\textsc{TLR}^\infty}\xspace}
\newcommand{\TLRp}{{\textsc{TLR}'}\xspace}
\newcommand{\TLRLp}{{\textsc{TLRL}'}\xspace}
\newcommand{\TLRVar}{\textsc{TLRVar}\xspace}

\newcommand{\PZero}{\textsc{PZero}\xspace}
\newcommand{\POut}{\textsc{POut}\xspace}
\newcommand{\PIn}{\textsc{PIn}\xspace}
\newcommand{\PNew}{\textsc{PNew}\xspace}
\newcommand{\PPar}{\textsc{PPar}\xspace}
\newcommand{\POr}{\textsc{POr}\xspace}
\newcommand{\PLet}{\textsc{PLet}\xspace}
\newcommand{\PLetLR}{\textsc{PLetLR}\xspace}
\newcommand{\PIfL}{\textsc{PIfL}\xspace}
\newcommand{\PIfLR}{\textsc{PIfLR}\xspace}
\newcommand{\PIfS}{\textsc{PIfS}\xspace}
\newcommand{\PIfLRinf}{\textsc{PIfLR*}\xspace}
\newcommand{\PIfP}{\textsc{PIfP}\xspace}
\newcommand{\PIfI}{\textsc{PIfI}\xspace}
\newcommand{\PIfLRp}{\textsc{PIfLR'*}\xspace}
\newcommand{\PLetDec}{\textsc{PLetDec}\xspace}
\newcommand{\PLetAdecSame}{\textsc{PLetAdecSame}\xspace}
\newcommand{\PLetAdecDiff}{\textsc{PLetAdecDiff}\xspace}
\newcommand{\PIfAll}{\textsc{PIfAll}\xspace}

\newcommand{\GNil}{\textsc{GNil}\xspace}
\newcommand{\GNonce}{\textsc{GNonce}\xspace}
\newcommand{\GVar}{\textsc{GVar}\xspace}
\newcommand{\GKey}{\textsc{GKey}\xspace}

\newcommand{\SRefl}{\textsc{SRefl}\xspace}
\newcommand{\SHigh}{\textsc{SHigh}\xspace}
\newcommand{\STrans}{\textsc{STrans}\xspace}
\newcommand{\SPairL}{\textsc{SPairL}\xspace}
\newcommand{\SPairS}{\textsc{SPairS}\xspace}
\newcommand{\SPair}{\textsc{SPair}\xspace}
\newcommand{\SPairSp}{\textsc{SPairS'}\xspace}
\newcommand{\SKey}{\textsc{SKey}\xspace}
\newcommand{\SEnc}{\textsc{SEnc}\xspace}
\newcommand{\SAenc}{\textsc{SAenc}\xspace}

\newcommand{\DDecH}{\textsc{DDecH}\xspace}
\newcommand{\DDecL}{\textsc{DDecL}\xspace}
\newcommand{\DDecT}{\textsc{DDecT}\xspace}
\newcommand{\DAdecH}{\textsc{DAdecH}\xspace}
\newcommand{\DAdecL}{\textsc{DAdecL}\xspace}
\newcommand{\DAdecT}{\textsc{DAdecT}\xspace}
\newcommand{\DCheckH}{\textsc{DCheckH}\xspace}
\newcommand{\DCheckL}{\textsc{DCheckL}\xspace}
\newcommand{\DFst}{\textsc{DFst}\xspace}
\newcommand{\DFstL}{\textsc{DFstL}\xspace}
\newcommand{\DSnd}{\textsc{DSnd}\xspace}
\newcommand{\DSndL}{\textsc{DSndL}\xspace}

\newcommand{\rep}{rep}
\newcommand{\repP}[2]{\rep^P_{#1}(#2)}
\newcommand{\repC}[2]{\rep^C_{#1}(#2)}
\newcommand{\repc}[2]{\rep^c_{#1}(#2)}
\newcommand{\rept}[2]{\rep^t_{#1}(#2)}
\newcommand{\repv}[2]{\rep^v_{#1}(#2)}
\newcommand{\repn}[2]{\rep^n_{#1}(#2)}
\newcommand{\repCinv}[2]{{\rep^C}^{-1}_{#1}(#2)}
\newcommand{\repcinv}[2]{{\rep^c}^{-1}_{#1}(#2)}
\newcommand{\reptinv}[2]{{\rep^t}^{-1}_{#1}(#2)}
\newcommand{\repvinv}[2]{{\rep^v}^{-1}_{#1}(#2)}
\newcommand{\repninv}[2]{{\rep^n}^{-1}_{#1}(#2)}
\newcommand{\bijn}[2]{b^n_{#1}(#2)}
\newcommand{\bijv}[2]{b^v_{#1}(#2)}
\newcommand{\bijninv}[2]{{b^n}^{-1}_{#1}(#2)}
\newcommand{\bijvinv}[2]{{b^v}^{-1}_{#1}(#2)}

\ifthenelse{\boolean{techreport}}{%
\renewcommand\footnotetextcopyrightpermission[1]{} 
}
{

\copyrightyear{2017}
\acmYear{2017}
\setcopyright{acmlicensed}
\acmConference{CCS '17}{October 30-November 3, 2017}{Dallas, TX, USA}\acmPrice{15.00}\acmDOI{10.1145/3133956.3133998}
\acmISBN{978-1-4503-4946-8/17/10}

\ccsdesc[500]{Security and privacy~Formal security models}
\ccsdesc[500]{Security and privacy~Logic and verification}

\keywords{Protocols; privacy; symbolic models; type systems}
}

\begin{CCSXML}
<ccs2012>
<concept>
<concept_id>10002978.10002986.10002989</concept_id>
<concept_desc>Security and privacy~Formal security models</concept_desc>
<concept_significance>500</concept_significance>
</concept>
<concept>
<concept_id>10002978.10002986.10002990</concept_id>
<concept_desc>Security and privacy~Logic and verification</concept_desc>
<concept_significance>500</concept_significance>
</concept>
</ccs2012>
\end{CCSXML}

\newif\ifallrules
\allrulesfalse

\maketitle

\section{Introduction}
\label{sec:intro}

Formal methods proved to be indispensable  tools for the  analysis of advanced cryptographic 
protocols such as those for key distribution~\cite{SchmidtMCB12},
mobile payments~\cite{CFGT17},
e-voting~\cite{DKR-jcs08,Backes:2008:AVR,Cortier:2015:TVE}, and e-health~\cite{Maffei:2013:SPD}. In the last years, mature push-button analysis tools 
have  emerged and have been successfully applied to many protocols from the
literature in the context of \emph{trace properties} such as
authentication or confidentiality. These tools employ a variety of analysis techniques, such as model checking (e.g.,   Avispa~\cite{avispa} and Scyther \cite{scyther}),   Horn clause resolution (e.g., 
ProVerif~\cite{proverif}), term rewriting (e.g.,  Scyther~\cite{scyther} and Tamarin~\cite{tamarin}), and type systems~\cite{Bengtson:2011,FM:2011,Backes:2014:UIR,Bugliesi:2015:ART}.

A current and very active topic is the adaptation of these techniques to the
more involved case of \emph{trace equivalence} properties. These 
are the natural symbolic counterpart of cryptographic indistinguishability properties, and they are at the heart of privacy properties such as ballot
privacy~\cite{DKR-jcs08}, untraceability~\cite{MyrtoRFID09}, or
anonymity~\cite{private-auth,passport}. They are also used to express stronger
forms of confidentiality, such as strong secrecy~\cite{Cortier2006}, or game-based like
properties~\cite{CLC-CCS2008}. 

\mypar{Related Work} Numerous model checking-based  tools have recently been proposed for the case of a bounded number of sessions, i.e., when protocols are executed a bounded
number of times. These tools encompass SPEC~\cite{SPEC},
APTE~\cite{APTE,APTE-por}, Akiss~\cite{akiss}, or
SAT-Equiv~\cite{SAT-equiv}. 
These tools   vary in the class of
cryptographic primitives and the class of protocols they can
consider. However, due to the complexity of the problem, they all
suffer from the state explosion problem and most of them  can typically
analyse no more than 3-4 sessions of (relatively small) protocols,
with the exception of SAT-Equiv which can more easily reach about 10
sessions. The only tools that can verify equivalence properties for an
unbounded number of sessions are  ProVerif~\cite{proverif-equiv}. Maude-NPA~\cite{maude-equiv}, and  Tamarin~\cite{tamarin-equiv}. ProVerif checks a  property that is stronger than trace equivalence, namely diff equivalence, which  works well in practice provided that protocols have a
similar structure. However, as for trace properties, the internal
design of ProVerif renders the tool unable to distinguish between
exactly one  session and infinitely many: this
over-approximation often yields false attacks, in particular when the security of a
protocol relies on the fact that some action is only performed once. 
\veroniquebis{Maude-NPA also checks diff-equivalence but often does not terminate.}
Tamarin can handle an unbounded number of sessions and is
very flexible in terms of supported  protocol classes but it often requires human
interactions. Finally, some recent work has started to leverage type systems to enforce relational properties for programs, exploring this approach also in the context of cryptographic protocol implementation~\cite{BartheFGSSB14}: like ProVerif,  the resulting tool  is unable to distinguish between
exactly one  session and infinitely many, and furthermore it is only
semi-automated, in that it often requires non-trivial 
\veroniquebis{lemmas to guide the tool}
and a specific programming discipline. 


Many recent results have been obtained in the area of relational verification of
programs using Relational Hoare
Logic~\cite{Benton04,Yang07,BartheGB09}.
The results hold on programs and cannot be directly applied  to
cryptographic protocols due to the more special treatment of the primitives.

\mypar{Our contribution} In this paper, we consider a novel type checking-based approach. 
Intuitively, a type system over-approximates protocol behavior. Due
to this over-approximation, it is no longer possible to \emph{decide}
security properties but the types typically convey 
sufficient information to \emph{prove} security.
Extending this approach to equivalence properties is a delicate
task. Indeed, two protocols $P$ and $Q$ are in equivalence if
(roughly) any trace of $P$ has an equivalent trace in $Q$ (and
conversely). Over-approximating behavior may not preserve
equivalence. 

Instead, we develop a somewhat hybrid approach: we design a type
system to over-approximate the set of possible traces and we collect
the set of sent messages into \emph{constraints}. We then propose a
procedure for proving (static) equivalence of the constraints. 
These do not only contain sent messages but also reflect
internal checks made by the protocols, which  is crucial to guarantee
that whenever a message is accepted by $P$, it is also accepted by $Q$
(and conversely). 

As a result, we provide a sound type system for proving equivalence of
protocols 
for both a bounded and
an unbounded number of sessions, or a mix of both. This is
particularly convenient to analyse systems where some actions are
limited (e.g., no revote, or limited access to some resource).
\veroniquebis{
More specifically, we show that whenever two protocols $P$ and $Q$ are
type-checked to be equivalent, then they are in trace equivalence, for
the standard notion of trace equivalence~\cite{trace-equivalence}, against a full
Dolev-Yao attacker. In particular, one advantage of our approach is
that it proves security directly in a security model that is similar to the ones
used by the other popular tools, in contrast to many other security
proofs based on type systems.
Our result holds for protocols with all standard primitives
(symmetric and asymetric encryption, signatures, pairs, hash), with
atomic long-term keys (no fresh keys) and no private channels.
Similarly to ProVerif, we need the two protocols $P$ and $Q$ to have a
rather similar structure.
}

We provide a prototype implementation of our type system, that we
evaluate on several protocols of the literature. In the case of a
bounded number of sessions, our tool provides a significant speed-up
(less than one second to analyse a dozen of sessions while other tools typically
do not answer
within 12 hours, with a few exceptions). To be fair, let us emphasize
that these tools can \emph{decide} equivalence while our tool checks
sufficient conditions by the means of our type system.
In the case of an unbounded number of sessions, the performance of our
prototype tool is comparable to ProVerif. In contrast to ProVerif, our tool can consider a mix of
bounded and unbounded number of sessions. As an application, we can
prove for the first time ballot privacy of the well-known Helios e-voting protocol~\cite{helios},
without assuming a reliable channel between honest voters and the
ballot box. ProVerif fails in this case as ballot privacy only holds
under the assumption that honest voters vote at most once,  otherwise
the protocol is subject to a copy attack~\cite{Roenne}. For similar reasons, also Tamarin fails to verify this protocol.
 
\veroniquebis{In most of our example, only a few straightforward type
  annotations were needed, such as indicated which keys are supposed
  to be secret or public. The case of the helios protocol is more
  involved and requires to describe the form of encrypted ballots that
can be sent by a voter.}

Our prototype, the protocol models, as well as a technical report are
available here~\cite{oursite}.


\section{Overview of our Approach}
\label{sec:overview}

In this section, we introduce the key ideas underlying our approach on a simplified  version of the Helios voting protocol.
Helios~\cite{helios} is a verifiable voting protocol that has been used in
various elections, including the election of the rector of the
University of Louvain-la-Neuve.
Its behavior is
 depicted below:
\begin{align*}
S &\rightarrow V_i: \quad r_{i}\\
V_i &\rightarrow S: \quad {[{\{v_i\}}^{r_{i},r'_i}_{\PUBK{k_s}}]}_{k_i}\\
S &\rightarrow V_1,\dots,V_n : \; v_1  , \ldots, v_n 
\end{align*}
where $\{m\}^r_{\PUBK{k}}$ denotes the asymmetric encryption of message $m$ with the key $\PUBK{k}$ randomized with the nonce $r$,
and $[m]_k$ denotes the signature of $m$ with key $k$.
$v_i$ is a value in the set $\{0,1\}$, which represents the candidate  $V_i$ votes for. In the first  step, the voter casts her vote, encrypted with the election's public key $\PUBK{k_s}$ and then signed. 
Since generating a good random number is difficult for the voter's client (typically a JavaScript run in a browser), a typical trick is to input some randomness ($r_i$) from the server and to add it to its own randomness ($r_i'$).
In the second step the server outputs the tally (i.e., a randomized permutation of the valid votes received in the voting phase). 
Note that the original Helios protocol does not assume signed ballots. Instead, voters authenticate themselves through a login mechanism. For simplicity, we abstract this authenticated channel by a signature.

	A voting protocol provides vote privacy~\cite{DKR-jcs08} if an attacker is not able to know which voter voted for which candidate. Intuitively, this can be modeled as the following trace equivalence property, which  requires the attacker not to be able to distinguish  $A$ voting 0 and $B$ voting 1 from $A$ voting 1 and $B$ voting 0. Notice that the attacker may control an unbounded number of voters:
	\begin{align*}
	&Voter(k_a,0) \;\PAR\; Voter(k_b,1) \;\PAR\;  CompromisedVoters  \PAR\; S\\
	\equivTrace \;& Voter(k_a,1) \;\PAR\; Voter(k_b,0) \;\PAR\; CompromisedVoters  \PAR\; S
	\end{align*}
	Despite its simplicity, this protocol has a few interesting features that make its analysis particularly challenging. First of all,  the server is supposed to discard ciphertext duplicates, otherwise a malicious eligible voter   $E$ could intercept $A$'s ciphertext, sign it, and send it to the server~\cite{Helios-CSF11}, as exemplified below:
	\begin{align*}
	A &\rightarrow S: \quad {[{\{v_a\}}^{r_a,r'_a}_{\PUBK{k_s}}]}_{k_a}\\
		E &\rightarrow S: \quad {[{\{v_a\}}^{r_a,r'_a}_{\PUBK{k_s}}]}_{k_e}\\
	B &\rightarrow S: \quad {[{\{v_b\}}^{r_b,r'_b}_{\PUBK{k_s}}]}_{k_b}\\
	S &\rightarrow A,B : \; v_a  ,  v_b, v_a
	\end{align*}
	This would make the two tallied results distinguishable, thereby breaking trace equivalence since $v_a, v_b, v_a \not\equivTrace	v_b, v_a, v_b$

	Even more interestingly, each voter is supposed to be able to vote \emph{only once}, otherwise the same attack would apply~\cite{Roenne} even if the server discards ciphertext duplicates (as the randomness used by the voter in the two ballots would be different). This makes the analysis particularly challenging, and in particular out of scope of existing cryptographic protocol analyzers like ProVerif, which abstract away from the number of protocol sessions.

	With our type system, we can successfully verify the aforementioned privacy property using the following types:
		\begin{align*}
    r_a&:\noncetypelab{\L}{1}{r_a},\; r_b:\noncetypelab{\L}{1}{r_b},
	r'_a:\noncetypelab{\S}{1}{r'_a},\; r'_b:\noncetypelab{\S}{1}{r'_b}\\
	k_a&:\skey{\S}{\aencT{\LRTnewnew{\noncetypelab{\L}{1}{0}}{\noncetypelab{\L}{1}{1}}*\H*\noncetypelab{\S}{1}{r'_a}}{k_s}}\\
	k_b&:\skey{\S}{\aencT{\LRTnewnew{\noncetypelab{\L}{1}{1}}{\noncetypelab{\L}{1}{0}}*\H*\noncetypelab{\S}{1}{r'_b}}{k_s}}\\
  k_s&:\skeyB{\S}{\begin{array}{l}(\LRTnewnew{\noncetypelab{\L}{1}{0}}{\noncetypelab{\L}{1}{1}}*\H*\noncetypelab{\S}{1}{r'_a}) ~ \orT ~ \\(\LRTnewnew{\noncetypelab{\L}{1}{1}}{\noncetypelab{\L}{1}{0}}*\H*\noncetypelab{\S}{1}{r'_b})\end{array}}
    \end{align*}
	We assume standard security labels: $ \S$ stands for high confidentiality and high integrity, $ \H$ for high confidentiality and low integrity, and $\L$ for low confidentiality and low integrity (for simplicity, we omit the low confidentiality and high integrity type, since we do not need it in our examples).
	The type $\noncetypelab{l}{1}{i}$ describes randomness of security label $l$ produced by the randomness generator  at position $i$ in the program, which can be invoked at most once. 
$\noncetypelab{l}{\infty}{i}$ is similar, with the difference that the randomness generator can be invoked an unbounded number of times. These types induce a partition on random values, in which each set contains at most one element or an unbounded number of elements, respectively. This turns out to be useful, as explained below, to type-check protocols, like Helios, in which the number of times messages of a certain shape are produced is relevant for the security of the protocol.

The  type of $k_a$ (resp. $k_b$) says that this key is supposed to encrypt 0 and 1 (resp. 1 and 0) on the left- and right-hand side of the equivalence relation, further describing the type of the randomness. The type of $k_s$   inherits the two  payload types, which are combined in disjunctive form.  In fact, public key types  implicitly convey an additional payload type, the one characterizing messages encrypted by the attacker: these are of low confidentiality and turn out to be the same on the left- and right-hand side. 
Key types are crucial to type-check the server code: we verify the signatures produced by  $A$ and $B$ and can then use the ciphertext type derived from the type of $k_a$ and $k_b$ to infer after decryption the vote cast by $A$ and $B$, respectively. While processing the other ballots, the server discards the ciphertexts produced with randomness matching the one used by $A$ or $B$: given that these random values are used only once,    we know that the remaining ciphertexts must come from the attacker and thus convey the same vote on the left- and on the right-hand side. This suffices to type-check the final output, since the two tallied results on the left- and right-hand side are the same, and thus fulfill trace equivalence. 

The type system generates a set of constraints, which, if ``consistent'', suffice to prove that the protocol is trace equivalent. Intuitively, these constraints characterize the indistinguishability of the messages output by the process.  The constraints generated for this simplified version of Helios are reported below: 
\[\renewcommand{\arraystretch}{1.2}\begin{array}{l@{}l}
C = \{(\{& \SIGN{\AENC{\PAIR{0}{\PAIR{x}{r'_a}}}{\PUBK{k_S}}}{k_a} \eqC\\
        &\hfill\SIGN{\AENC{\PAIR{1}{\PAIR{x}{r'_a}}}{\PUBK{k_S}}}{k_a},\\ 
		& \AENC{\PAIR{0}{\PAIR{x}{r'_a}}}{\PUBK{k_S}} \eqC \AENC{\PAIR{1}{\PAIR{x}{r'_a}}}{\PUBK{k_S}},\\
		& \SIGN{\AENC{\PAIR{1}{\PAIR{y}{r'_b}}}{\PUBK{k_S}}}{k_b} \eqC\\
		& \hfill \SIGN{\AENC{\PAIR{0}{\PAIR{y}{r'_b}}}{\PUBK{k_S}}}{k_b},\\ 
		& \AENC{\PAIR{1}{\PAIR{y}{r'_b}}}{\PUBK{k_S}} \eqC \AENC{\PAIR{0}{\PAIR{y}{r'_b}}}{\PUBK{k_S}}\},\\
		& [x:\L, y:\L])\}
\end{array}\]
These constraints are  consistent if the set of left messages of the constraints is in (static) equivalence with the set of the right messages of the constraints.
This is clearly the case here, since encryption hides the content of the plaintext. Just to give an example of non-consistent constraints, consider the following ones:
\[C' =  \{\{ \HASH{n_1} \eqC \HASH{n_2},\;\; \HASH{n_1} \eqC \HASH{n_1}\}\}\]
where $n_1$, $n_2$ are two confidential nonces.
While the first constraint alone is consistent, since $n_1$ and $n_2$ are of high confidentiality and the attacker cannot 
thus distinguish between $\HASH{n_1}$ and $\HASH{n_2}$, the two constraints all together are not consistent, since the attacker can clearly notice if the two terms output by the process are the same or not. 
We developed a dedicated procedure to check the consistency of such constraints. 
		

\section{Framework}
\label{sec:calculus}

In symbolic models, security protocols are typically modeled as processes of a process algebra, such as the applied pi-calculus~\cite{AbadiFournet2001}. We present here a calculus close to~\cite{Length-cav2013} 
inspired from the calculus underlying the ProVerif tool~\cite{BlanchetFnTPS16}.

\subsection{Terms}
Messages are  modeled as terms. We assume an infinite set of names $\N$ for nonces, further partitioned into the set $\FN$ of free nonces (created by the attacker) and the set $\BN$ of bound nonces (created by the protocol parties), an infinite set of names $\K$ for keys, ranged over by $k$, and an infinite set of variables $\V$.
Cryptographic primitives are modeled through a \emph{signature} $\F$, that is a set of function symbols, given with their arity (that is, the number of arguments).
Here, we will consider the following signature:
\[
\F_c= \{\PUBKNA,\VKNA,\ENCNA,\AENCNA,\SIGNNA,\PAIRNA,\HASHNA\}
\]
that models respectively public and verification key, symmetric and asymmetric encryption, concatenation and hash.
The companion primitives (symmetric and asymmetric decryption, signature check, and projections) are represented by the following signature:
\[
\F_d = \{\DECNA,\ADECNA,\CHECKNA,\FSTNA,\SNDNA\}
\]
We also consider  a set $\CST$ of (public) constants (used as agents names for instance). 
Given a signature $\F $, a set of names $\N$ and a set of variables $\V$, the set of \emph{terms} $\T(\F,\V,\N)$ is the set inductively defined by applying functions to variables in $\V$ and names in $\N$. 
We denote by $\names{t}$ (resp.  $\var{t}$) the set of names (resp. variables) occurring in $t$.
A term is \emph{ground} if it does not contain variables.

Here, we will consider the set $\T(\F_c\cup\F_d\cup\CST,\V,\N\cup\K)$
of \emph{cryptographic terms}, simply called \emph{terms}.
\emph{Messages} are terms from $\T(\F_c\cup\CST,\V,\N\cup\K)$ with atomic keys, that is, a term $t\in \T(\F_c\cup\CST,\V,\N\cup\K)$ is a message if any subterm of $t$ of the form $\PUBK{t'}$, $\VK{t'}$, $\ENC{t_1}{t'}$, $\AENC{t_1}{t_2}$, or $\SIGN{t_1}{t'}$ is such that $t'\in \K $ and $t_2=\PUBK{t_2'}$ with  $t_2'\in \K $. 
We assume the set of variables to be split into two subsets $\V = \X \uplus \AX$ where $\X $ are variables used in processes while $\AX$ are variables used to store messages.
An \emph{attacker term} is a term from $\T(\F_c\cup\F_d\cup\CST,\AX,\FN)$.

A \emph{substitution} $\sigma = \{M_1 / x_1, \dots, M_k / x_k\}$ is a mapping from
variables $x_1, \dots, x_k \in \V$ to messages $M_1, \dots, M_k$. 
We let $\dom{\sigma} = \{x_1 , \dots , x_k\}$.
We say that $\sigma$ is ground if all messages $M_1, \dots, M_k$ are ground.
We let $\names{\sigma} = \bigcup_{1 \leq i \leq k} \names{M_i}$.
The application of a substitution $\sigma $ to a term $t$ is denoted $t\sigma$ and is defined as usual.

The \emph{evaluation} of a term $t$, denoted $\eval{t}$, corresponds to the application of the cryptographic primitives. For example, the decryption succeeds only if the right decryption key is used. Formally, $\eval{t}$ is recursively defined as follows.
\[
\begin{array}{r@{\;}c@{\;}ll}
\eval{u} & = & u & \text{if $u\in\N\cup\V\cup\K\cup\CST$}\\
\eval{\PUBK{t}} & = & \PUBK{\eval{t}} & \text{if $\eval{t}\in\K$}\\ 
\eval{\VK{t}} & = & \VK{\eval{t}} & \text{if $\eval{t}\in\K$}\\ 
\eval{\HASH{t}} & = & \HASH{\eval{t}} & \text{if $\eval{t}\neq\bot$}\\ 
\eval{\PAIR{t_1}{t_2}} & = & \PAIR{\eval{t_1}}{\eval{t_2}} & \text{if $\eval{t_1} \neq \bot$ and $\eval{t_2} \neq \bot$}\\
 \eval{\ENC{t_1}{t_2}} & = & \ENC{\eval{t_1}}{\eval{t_2}} & \text{if $\eval{t_1} \neq \bot$ and $\eval{t_2} \in\K$}\\
 \eval{\SIGN{t_1}{t_2}} & = & \SIGN{\eval{t_1}}{\eval{t_2}} & \text{if $\eval{t_1} \neq \bot$ and $\eval{t_2} \in\K$}\\
\eval{\AENC{t_1}{t_2}} & = & \AENC{\eval{t_1}}{\eval{t_2}} & \text{if $\eval{t_1} \neq \bot$ and $\eval{t_2} =\PUBK{k}$}\\
         &&\qquad \text{for some $k\in\K$}\\
\end{array}\]
\[\begin{array}{r@{\;}c@{\;}ll}
\eval{\FST{t}} & = & t_1 & \text{if $\eval{t}=\PAIR{t_1}{t_2}$}\\
\eval{\SND{t}} & = & t_2 & \text{if $\eval{t}=\PAIR{t_1}{t_2}$}\\
 \eval{\DEC{t_1}{t_2}} & = & t_3 & \text{if $\eval{t_1} = \ENC{t_3}{t_4}$ and $t_4=\eval{t_2}$}\\
 \eval{\ADEC{t_1}{t_2}} & = & t_3 & \text{if $\eval{t_1} = \AENC{t_3}{\PUBK{t_4}}$ and $t_4=\eval{t_2}$}\\
 \eval{\CHECK{t_1}{t_2}} & = & t_3 & \text{if $\eval{t_1} = \SIGN{t_3}{t_4}$ and $\eval{t_2} = \VK{t_4}$}\\
\eval{t} & = & \bot & \text{otherwise}
\end{array}
\]
Note that the evaluation of term $t$ succeeds only if the underlying keys are atomic and always returns a message or $\bot$.
We write $t \eqEv t'$ if $\eval{t} = \eval{t'}$.

\subsection{Processes}
Security protocols describe how messages should be exchanged between participants. We model them through a process algebra, whose syntax is displayed in Figure~\ref{fig:syntax}.
\begin{figure}
\[
\begin{array}{rll}
\multicolumn{3}{l}{\text{Destructors used in processes:}}\\
\multicolumn{3}{l}{d::= 	 \DEC{\cdot}{k} ~|~ 
        \ADEC{\cdot}{k} ~|~ 
         \CHECK{\cdot}{\VK{k}} ~|~ 
		 \FST{\cdot} ~|~ 
		 \SND{\cdot} }\\
\\
\multicolumn{2}{l}{\text{Processes:}}\\
P,Q &::= \\
&\ZERO &\\
		| & \NEWN{n}.P &\text{ for } n\in\BN (n \text{ bound in } P)
  \\
		| & \OUT{M}.P &\\
		| & \IN{x}.P &\text{ for } x\in\X (x \text{ bound in } P)\\
		| & P \PAR Q \\
		| & \LET{x}{d(y)}{P}{Q} &\text{ for } x, y \in \X (x \text{ bound in } P)\\\
		| & \ITE{M}{N}{P}{Q} \\
		| & !P \\
\end{array}
\]
where $M,N$ are messages.
\caption{Syntax for processes.}
\label{fig:syntax}
\end{figure}
We identify processes up to $\alpha$-renaming, i.e., capture avoiding substitution of bound names and variables, which are defined as usual. Furthermore, we assume that all bound names and variables in the process are distinct. 

A \emph{configuration} of the system is a quadruple $(\E; \PP; \phi ; \sigma)$ where:
\begin{itemize}
\item $\PP$ is a multiset of processes that represents the current active processes;
\item $\E$ is a set of names, which represents the private names of the processes;
\item $\phi$ is a substitution  with $\dom{\phi}\subseteq \AX$ and for
  any $x\in\dom{\phi}$, $\phi(x)$ (also denoted $x\phi$) is a message
  that only contains variables in $\dom{\sigma}$. $\phi$ represents the terms already output.
  \item $\sigma$ is a ground substitution;
\end{itemize}
The semantics of processes is given through a transition relation  $\redAction{\alpha}$ on the quadruples provided in Figure~\ref{fig:semantics} ($\silentAction$ denotes a silent action). The relation $\redWord{w}$ is defined as the reflexive transitive closure of $\redAction{\alpha}$, where $w$ is the concatenation of all actions.
We also write equality up to silent actions $\eqSilent$.

\begin{figure*}
\begin{minipage}{\textwidth}
\[
\begin{array}{rclr}
(\E; \{P_1 \PAR P_2\} \cup \PP; \phi ;\sigma) 	& \redAction{\silentAction}	& (\E; \{P_1, P_2\} \cup \PP; \phi; \sigma)
	& \text{\textsc{Par}}\\
%
%
(\E; \{\ZERO\} \cup \PP; \phi; \sigma)	& \redAction{\silentAction}	& (\E; \PP; \phi; \sigma)
	& \text{\textsc{Zero}}\\
%
%
(\E; \{\NEWN{n}.P\} \cup \PP; \phi; \sigma) 	&\redAction{\silentAction} 	& (\E \cup \{n\}; \{P\} \cup \PP; \phi; \sigma)
	& \text{\textsc{New}}\\
%
%
%
(\E ; \{\OUT{t}.P\}\cup \PP ; \phi; \sigma) 	&\redAction{\NEWN{ax_n}.\OUT{ax_n}} 	& (\E ; \{P\} \cup \PP ; \phi \cup \{t/ax_n\}; \sigma)
	& \text{\textsc{Out}}\\
\multicolumn{3}{r}{\text{if $t\sigma$ is a ground term, } {ax}_n \in \AX \text{ and } n = |\phi | + 1} & \\
%
%
(\E; \{\IN{x}. P\} \cup \PP ; \phi; \sigma)	&\redAction{\IN{R}}	& (\E; \{ P\} \cup \PP ; \phi; \sigma \cup \{\eval{(R\phi\sigma)}/x\})
	& \text{\textsc{In}}\\
\multicolumn{3}{r}{\text{if $R$ is an attacker term such that } \var{R} \subseteq \dom{\phi}, 
} & \\
\multicolumn{3}{r}{\text{and } \eval{(R \phi \sigma)} \neq \bot} & \\
%
%
{(\E; \{\LET{x}{d(M)}{P}{Q}\} \cup \PP ; \phi ; \sigma)} 
	& \redAction{\silentAction} & (\E ; \{P\} \cup \PP ; \phi ; \sigma \cup \{\eval{d(M\sigma)}/x\}) 	& \text{\textsc{Let-In}}\\
\multicolumn{3}{r}{\text{if $M\sigma$ is ground and } \eval{d(M\sigma)} \neq \bot} & \\
%
%
{(\E; \{\LET{x}{d(M)}{P}{Q}\} \cup \PP ; \phi ; \sigma)}
	& \redAction{\silentAction}& (\E ; \{Q\} \cup \PP ; \phi ; \sigma) 	& \text{\textsc{Let-Else}}\\
\multicolumn{3}{r}{\text{if $M\sigma$ is ground and $\eval{d(M\sigma)} = \bot$, \ie $d$ cannot be applied to $M\sigma$}} & \\
%
%
{(\E; \{\ITE{M}{N}{P}{Q}\} \cup \PP ; \phi ; \sigma)}
	& \redAction{\silentAction} &(\E ; \{P\} \cup \PP ; \phi ; \sigma) 	& \text{\textsc{If-Then}}\\
\multicolumn{3}{r}{\text{if $M$, $N$ are messages such that $M\sigma$, $N\sigma$ are ground and $M\sigma = N\sigma$}} & \\
%
%
{(\E; \{\ITE{M}{N}{P}{Q}\} \cup \PP ; \phi; \sigma)}
	& \redAction{\silentAction} & (\E ; \{Q\} \cup \PP ; \phi ; \sigma) 	& \text{\textsc{If-Else}}\\
	\multicolumn{3}{r}{\text{if $M$, $N$ are messages such that $M\sigma$, $N\sigma$ are ground and $M\sigma \neq N\sigma$}} & \\
	{(\E; \{!P\} \cup \PP ; \phi; \sigma)}
	& \redAction{\silentAction} & (\E ; \{P,!P\} \cup \PP ; \phi ; \sigma) 	& \text{\textsc{Repl}}\\
\end{array}
\]
\end{minipage}
\caption{Semantics}
\label{fig:semantics}
\end{figure*}

Intuitively, process $\NEWN{n}.P$ creates a fresh nonce, stored in $\E$, and behaves like $P$. Process $\OUT{M}.P $ emits $M$ and behaves like $P$. Process $\IN{x}.P$ inputs any term computed by the attacker provided it evaluates as a message and then behaves like $P$. 
Process $P \PAR Q$ corresponds to the parallel composition of $P$ and $Q$.
Process $ \LET{x}{d(y)}{P}{Q}$ behaves like $P$ in which $x$ is replaced by $d(y)$ if $d(y)$ can be successfully evaluated and behaves like $Q$ otherwise. Process $\ITE{M}{N}{P}{Q}$ behaves like $P$ if $M$ and $N$ correspond to two equal messages and behaves like $Q$ otherwise. The replicated process $!P$ behaves as an unbounded number of copies of $P$.

A \emph{trace} of a process $P$ is any possible sequence of
transitions in the presence of an attacker that may read, forge, and
send messages. Formally, the set of traces $\trace{P} $ is defined as follows.
\[
\trace{P} = \{(w,\NEWN{\E}. \phi,\sigma) | 
(\emptyset; \{P\} ; \emptyset; \emptyset)
\redWord{w} (\E ; \PP ; \phi; \sigma)\}\
\]

\begin{example}
\label{ex:voting-protocol}
Consider the Helios protocol presented in Section~\ref{sec:overview}.
For simplicity, we describe here a
simplified version with only two (honest) voters $A$ and $B$ and a
voting server $S$. 
%
This (simplified) protocol can be modeled  by the process:
\[\NEWN{r_a}. Voter(k_a,v_a,r_a) \;\PAR\;\NEWN{r_b}. Voter(k_b,v_b,r_b) \;\PAR\; P_S\]
where $Voter(k, v, r)$ represents voter $k$ willing to vote for $v$
using randomness $r$ while $P_S$ represents the voting server.
%
$Voter(k, v, r)$ simply outputs a signed encrypted
vote.
\[Voter(k, v, r) = \OUT{\SIGN{\AENC{\PAIR{v}{r}}{\PUBK{k_S}}}{k}}\]

The voting server receives ballots from $A$ and $B$ and then outputs
the decrypted ballots, after some mixing.
\begin{align*}
P_S = & \IN{x_1}.\IN{x_2}.\\
&\LETIN{y_1}{\CHECK{x_1}{\VK{k_a}}}\\
&\LETIN{y_2}{\CHECK{x_2}{\VK{k_b}}}\\
&\LETIN{z_1}{\ADEC{y_1}{k_s}}\quad \LETIN{z_1'}{\FST{z_1}}\\
&\LETIN{z_2}{\ADEC{y_2}{k_s}}\quad \LETIN{z_2'}{\FST{z_2}} \\
&\quad(\OUT{z_1'} \PAR \OUT{z_2'})
\end{align*}
\end{example}

\subsection{Equivalence}
When processes evolve, sent messages are stored in a substitution $\phi $ while private names are stored in $\E $. A \emph{frame} is simply an expression of the form $\NEWN{\E}. \phi$ where $\dom{\phi} \subseteq \AX$. We define $\dom{\NEWN{\E}. \phi}$ as $\dom{\phi}$. Intuitively, a frame represents the knowledge of an attacker.

Intuitively, two sequences of messages are indistinguishable to an attacker if he cannot perform any test that could distinguish them. This is typically modeled as static equivalence~\cite{AbadiFournet2001}. Here, we consider of variant of~\cite{AbadiFournet2001} where the attacker is also given the ability to observe when the evaluation of a term fails, as defined for example in~\cite{Length-cav2013}.

\begin{definition}[Static Equivalence]
Two ground frames $\NEWN{\E}.\phi$ and $\NEWN{\E'}.\phi'$ are statically equivalent if and only if they have the same domain, and for all attacker terms $R, S$
with variables in $\dom{\phi} = \dom{\phi'}$, we have
\[
(R \phi \eqEv S \phi) \iff (R \phi' \eqEv S \phi')
\]
\end{definition}

Then two processes $P$ and $Q$ are in equivalence if no matter how the adversary interacts with $P$, a similar interaction may happen with $Q$, with equivalent resulting frames.

\begin{definition}[Trace Equivalence]
Let $P$, $Q$ be two processes. We write $P \incTrace Q$ if for all $(s,\psi,
\sigma)\in \trace{P}$, there exists $(s',\psi', \sigma')\in\trace{Q}$ such that
$s \eqSilent s'$ and $\psi\sigma$ and $\psi'\sigma'$ are statically equivalent.
We say that $P$ and $Q$ are trace equivalent, and we write $P \equivTrace Q$, if $P \incTrace Q$ and $Q \incTrace P$.
\end{definition}

Note that this definition already includes the attacker's behavior,
since  processes may input any message forged by the attacker.

\begin{example}
As explained in Section~\ref{sec:overview},
ballot privacy is typically modeled as an equivalence
property~\cite{DKR-jcs08} 
that requires
that an attacker
cannot distinguish when Alice is voting $0$ and Bob is voting $1$
from the scenario where the two votes are swapped.

Continuing Example~\ref{ex:voting-protocol}, ballot privacy of Helios
can be expressed as follows:
\begin{align*}
&\NEWN{r_a}. Voter(k_a,0,r_a) \;\PAR\;\NEWN{r_b}. Voter(k_b,1,r_b) \;\PAR\; P_S\\
\equivTrace \;&\NEWN{r_a}. Voter(k_a,1,r_a) \;\PAR\;\NEWN{r_b}. Voter(k_b,0,r_b) \;\PAR\; P_S
\end{align*}




\end{example}


\section{Typing}
We now introduce a  type system to statically check trace equivalence between  processes. Our typing judgements thus capture properties of pairs of terms or 
 processes, which we will refer to as \emph{left} and \emph{right} term or process, respectively. 
 
%

\subsection{Types}

\begin{figure}
\[
  \begin{array}{llll}
     l ::= & & \L ~|~  \H ~|~ \S \\ 
    T ::= & & l ~|~  T * T ~|~  \skey{l}{T}  ~|~ \encT{T}{k}  ~|~  \aencT{T}{k} &  \\
    &| &  \LRTnewnew{\noncetypelab{l}{\oneorinf}{n}}{\noncetypelab{l'}{\oneorinf}{m}} \text{ with } \oneorinf \in \{1,\infty\} ~|~   T \orT T &
  \end{array}
\]
\caption{Types for terms  (selected)}
\label{fig:types}
\end{figure}
A selection of the types for messages are defined in Figure~\ref{fig:types} and
explained below.  We assume three security labels (namely, $\S,\H,\L$), ranged
over by $l$, whose first (resp. second) component denotes the confidentiality
(resp. integrity) level. Intuitively, messages of high confidentiality cannot
be learned by the attacker, while messages of high integrity cannot originate
from the attacker. Pair types  describe the type of their components, as usual. 
Type $\skey{l}{T}$ describes keys of security level $l$ used to encrypt (or sign) messages of type $T$. 
 The type $ \encT{T}{k} $ (resp. $ \aencT{T}{k} $) describes  symmetric (resp. asymmetric) encryptions with key
    $k$  of a message of type $T$. 
      The type
    $\noncetypelab{l}{\oneorinf}{i}$ describes  nonces and constants of
    security level $l$: the label $\oneorinf$ ranges over $\{\infty,1\}$,
    denoting whether the nonce is bound within a replication or not (constants
    are always typed with $\oneorinf=1$). We assume a different identifier  $i$
    for each constant and restriction in the process. The type
    $\noncetypelab{l}{1}{i}$ is populated by a single name, (i.e., $i$
    describes a constant or a non-replicated nonce) and
    \smash{$\noncetypelab{l}{\infty}{i}$} is a special type, that is instantiated to
    \smash{$\noncetypelab{l}{1}{i_j}$} in the $jth$ replication of the process.
  Type $\LRTnewnew{\noncetypelab{l}{\oneorinf}{n}}{\noncetypelab{l'}{\oneorinf}{m}}$ is a refinement type that restricts the set
    of values which  can be taken by a message to values of type $\noncetypelab{l}{\oneorinf}{n}$ on the left and type $\noncetypelab{l'}{\oneorinf}{m}$ on the right.
For a refinement type 
    $\LRTnewnew{\noncetypelab{l}{\oneorinf}{n}}{\noncetypelab{l}{\oneorinf}{n}}$
    with equal types on both sides we simply write
    $\noncetypelab{l}{\oneorinf}{n}$.
  Messages of type $T \orT T'$ are messages that can have type $T$ or
    type $T'$.


\subsection{Constraints}
When typing messages, we generate constraints of the form  $(M \eqC N)$, meaning that the attacker sees $M$ and $N$ in the left and right process, respectively, and these two messages are thus required to be indistinguishable.  

\subsection{Typing Messages}

\begin{figure*}
\begin{framed}
\[ \infer[TNonce]
  {\Gamma(n)=\noncetypelab{l}{\oneorinf}{n} \\ 
  \Gamma(m)=\noncetypelab{l}{\oneorinf}{m} \\ 
  l \in \{\S, \H\} }
  {\teqTcnew{\Gamma}{n}{m}{l}{\emptyset}}
 \hspace{10pt}
 \infer[TNonceL]
  {\Gamma(n)=\noncetypelab{\L}{\oneorinf}{n}}
  {\teqTcnew{\Gamma}{n}{n}{\L}{\emptyset}}
 \hspace{10pt}
 \infer[TCstFN]
  {a \in \CST \cup \FN}
  {\teqTcnew{\Gamma}{a}{a}{\L}{\emptyset}}
\]
\[
 \infer[TPubKey]
  {k\in \dom{\Gamma}}
  {\teqTcnew{\Gamma}{\PUBK{k}}{\PUBK{k}}{\L}{\emptyset}}
 \hspace{10pt}
 \infer[TVKey]
  {k\in \dom{\Gamma}}
  {\teqTcnew{\Gamma}{\VK{k}}{\VK{k}}{\L}{\emptyset}}
 \hspace{10pt}
 \infer[TKey]
  { \Gamma(k) = T}
  {\teqTcnew{\Gamma}{k}{k}{T}{\emptyset}}
\]
\[
\infer[TVar]
  { \Gamma(x) = T}
  {\teqTcnew{\Gamma}{x}{x}{T}{\emptyset}}
\hspace{10pt}
 \infer[TPair]
  {\teqTcnew{\Gamma}{M}{N}{T}{c} \\
   \teqTcnew{\Gamma}{M'}{N'}{T'}{c'}}
  {\teqTcnew{\Gamma}{\PAIR{M}{M'}}{\PAIR{N}{N'}}{T * T'}{c\cup c'}}
\]
\[ \infer[TEnc]
  {\teqTcnew{\Gamma}{M}{N}{T}{c}}
  {\teqTcnew{\Gamma}{\ENC{M}{k}}{\ENC{N}{k}}{\encT{T}{k}}{c}}
\hspace{10pt}
 \infer[TEncH]
  {\teqTcnew{\Gamma}{M}{N}{\encT{T}{k}}{c}\\
   \Gamma(k) = \skey{\S}{T}}
  {\teqTcnew{\Gamma}{M}{N}{\L}{c \cup \{M \eqC N\}}}
\]
\[ \infer[TEncL]
  {\teqTcnew{\Gamma}{M}{N}{\encT{\L}{k}}{c}\\
   \Gamma(k) = \skey{\L}{T}}
  {\teqTcnew{\Gamma}{M}{N}{\L}{c}}
\hspace{10pt}
  \infer[TAenc]
  {\teqTcnew{\Gamma}{M}{N}{T}{c}}
  {\teqTcnew{\Gamma}{\AENC{M}{\PUBK{k}}}{\AENC{N}{\PUBK{k}}}{\aencT{T}{k}}{c}}
\]
\[ \infer[TAencH]
  {\teqTcnew{\Gamma}{M}{N}{\aencT{T}{k}}{c}\\
   \Gamma(k) = \skey{\S}{T}}
  {\teqTcnew{\Gamma}{M}{N}{\L}{c \cup \{M \eqC N\}}}
\hspace{10pt}
  \infer[TAencL]
  {\teqTcnew{\Gamma}{M}{N}{\aencT{\L}{k}}{c}\\
   k \in \dom{\Gamma}}
  {\teqTcnew{\Gamma}{M}{N}{\L}{c}}
\]
\[ \infer[TSignH]
  {\teqTcnew{\Gamma}{M}{N}{T}{c}\\
   \teqTcnew{\Gamma}{M}{N}{\L}{c'}\\
   \Gamma(k) = \skey{\S}{T}}
  {\teqTcnew{\Gamma}{\SIGN{M}{k}}{\SIGN{N}{k}}{\L}{c \cup c' \cup \{\SIGN{M}{k} \eqC \SIGN{N}{k}\}}}
\]
\[ \infer[TSignL]
  {\teqTcnew{\Gamma}{M}{N}{\L}{c}\\
   \Gamma(k) = \skey{\L}{T} }
  {\teqTcnew{\Gamma}{\SIGN{M}{k}}{\SIGN{N}{k}}{\L}{c}}
\hspace{10pt}
\infer[THash]
  {\names{M} \cup \names{N} \cup \var{M} \cup \var{N} \subseteq \dom{\Gamma} \cup \FN}
  {\teqTcnew{\Gamma}{\HASH{M}}{\HASH{N}}{\L}{\{\HASH{M}\eqC \HASH{N}\}}}
\]
\[
 \infer[THashL]
  {\teqTcnew{\Gamma}{M}{N}{\L}{c}}
  {\teqTcnew{\Gamma}{\HASH{M}}{\HASH{N}}{\L}{c}}
\hspace{10pt}
\infer[THigh]
  {\names{M} \cup \names{N} \cup \var{M} \cup \var{N} \subseteq \dom{\Gamma} \cup \FN}
  {\teqTcnew{\Gamma}{M}{N}{\H}{\emptyset}}
\]
\[ \infer[TSub]
  {\teqTcnew{\Gamma}{M}{N}{T'}{c}\\
   T' \subtyp T}
  {\teqTcnew{\Gamma}{M}{N}{T}{c}}
\hspace{10pt}
 \infer[TOr]
  {\teqTcnew{\Gamma}{M}{N}{T}{c}}
  {\teqTcnew{\Gamma}{M}{N}{T \orT T'}{c}}
\]
\[
 \infer[TLR$^1$]
  {\Gamma(m)=\noncetypelab{l}{1}{m} \quad\text{or}\quad m\in\FN\cup\CST \;\wedge\; l = \L\\\\
  \Gamma(n)=\noncetypelab{l'}{1}{n} \quad\text{or}\quad n\in\FN\cup\CST \;\wedge\; l' = \L} 
  {\teqTcnew{\Gamma}{m}{n}{\LRTnewnew{\noncetypelab{l}{1}{m}}{\noncetypelab{l'}{1}{n}}}{\emptyset}}
\hspace{10pt}
 \infer[TLR$^\infty$]
  {\Gamma(m)=\noncetypelab{l}{\infty}{m} \\
  \Gamma(n)=\noncetypelab{l'}{\infty}{n}} 
  {\teqTcnew{\Gamma}{m}{n}{\LRTnewnew{\noncetypelab{l}{\infty}{m}}{\noncetypelab{l'}{\infty}{n}}}{\emptyset}}
\]
\[
\infer[TLR']
  {\teqTcnew{\Gamma}{M}{N}{\LRTnewnew{\noncetypelab{l}{\oneorinf}{m}}{\noncetypelab{l}{\oneorinf}{n}}}{c} \\ 
   l \in \{\H,\S\}
  }
  {\teqTcnew{\Gamma}{M}{N}{l}{c}}
\hspace{10pt}
\infer[TLRL']
  {\teqTcnew{\Gamma}{M}{N}{\LRTnewnew{\noncetypelab{\L}{\oneorinf}{n}}{\noncetypelab{\L}{\oneorinf}{n}}}{c}}
  {\teqTcnew{\Gamma}{M}{N}{\L}{c}}
\]
\[ \infer[TLRVar]
  {\teqTcnew{\Gamma}{x}{x}{\LRTnewnew{\noncetypelab{l}{1}{m}}{\noncetypelab{l'}{1}{n}}}{\emptyset}\\
  \teqTcnew{\Gamma}{y}{y}{\LRTnewnew{\noncetypelab{l''}{1}{m'}}{\noncetypelab{l'''}{1}{n'}}}{\emptyset}}
  {\teqTcnew{\Gamma}{x}{y}{\LRTnewnew{\noncetypelab{l}{1}{m}}{\noncetypelab{l'''}{1}{n'}}}{\emptyset}}
\]
%
%
\end{framed}
\caption{Rules for Messages}
\label{fig:termtypingrules}
\end{figure*}

Typing judgments are parametrized over a typing environment $\Gamma$, which is a list of mappings from names and variables to types. 
The  typing judgement for messages is of the form
the form
$
  \teqTcnew{\Gamma}{M}{N}{T}{c}
$
which reads as follows: under the environment $\Gamma$,  $M$ and $N$ are of type $T$ and either this is a high confidentiality type (i.e.,  $M$ and $N$ are not disclosed to the attacker) or $M$ and $N$ are indistinguishable for the attacker assuming the set of constraints $c$ holds true. 
We present an excerpt of the  typing rules for messages in
Figure~\ref{fig:termtypingrules} and comment on them in the following. 
%
%
%

 Confidential nonces (i.e. nonces with label $l=\S$ or $l=\H$) are
    typed with their label from the typing environment. As the attacker may
    not observer them, they may be different in the left and the right message
    and we do not add any constraints (\textsc{TNonce}).
  Public terms are given type $\L$ 
    if they are the same in the left and the right message (\textsc{TNonceL, TCstFN, TPubKey, TVKey}). 
 We require keys and variables to be the same in the two processes, deriving their type from the environment (\textsc{TKey} and \textsc{TVar}).  
 The rule for pairs operates recursively component-wise (\textsc{TPair}).
  
   For symmetric key encryptions (\textsc{TEnc}), we have to make sure that
    the payload type matches the key type (which is achieved by  rule \textsc{TEncH}). We add the generated ciphertext to
    the set of constraints, because even though the attacker cannot read the
    plaintext, he can perform an equality check on the  ciphertext that he
    observed. If we type an encryption with a key that is of low confidentiality
    (i.e., the attacker has access to it), then we need to make sure the
    payload is of type $\L$, because the attacker can simply decrypt the
    message and recover the plaintext (\textsc{TEncL}). 
    The rules for asymmetric encryption are the same, with the only difference
    that we can always chose to ignore the key type and use type $\L$ to check
    the payload.
    This allows us to type messages produced by  the attacker, which has access to the
    public key but does not  need to respect its type.
    Signatures are also handled similarly, the difference here is that we
    need to type the payload with $\L$ even if an honest key is used, as the
    signature does not hide the content. 
  The  first typing rule for hashes
    (\textsc{THash}) gives them type $\L$ and adds the term to the constraints,
    without looking at the arguments of the hash function: intuitively this is
    justified, because the hash function makes it impossible to recover the
    argument. 
    The second rule (\textsc{THashL}) gives type $\L$ only if we can also give
    type $\L$ to the argument of the hash function, but does not add any
    constraints on its own, it is just passing on the constraints created for
    the arguments. This means we are typing the message  as if the hash
    function would not have been applied and use the message without the hash,
    which is a strictly stronger result.
    Both rules have their applications: while the former has to be used
    whenever we hash a secret, the latter may be useful to avoid the creation
    of unnecessary constraints when hashing terms like constants or public
    nonces.
   Rule \textsc{THigh} states that we can give type $\H$ to every message,
    which intuitively means that we can treat every message as if it were
   confidential. 
   Rule \textsc{TSub} allows us to type  messages according to the  subtyping relation, which is standard and defined in 
    Figure~\ref{fig:subtypingrules}.
   Rule \textsc{TOr} allows us to give a union type to messages, if they
    are typable with at least one of the two types.
   \textsc{TLR$^1$} and \textsc{TLR$^\infty$} are the introduction rules for refinement types, while  \textsc{TLR'} and \textsc{TLRL'} are the corresponding elimination rules. 
   Finally,  \textsc{TLRVar} allows to derive a new refinement type for two
    variables for which we have singleton refinement types, by taking the left refinement
    of the left variable and the right refinement of the right variable. We will see application of this rule in the e-voting protocol, where we use it to  combine A's vote (0 on the left, 1 on the right) and B's vote (1 on the
    left, 0 on the right), into a message that is the same on both sides.

\begin{figure}
\begin{framed}
\[
\infer[SRefl]
  { }
  {T \subtyp T}
\hspace{10pt}
\infer[SHigh]
  { }
  {T \subtyp \H}
\]
\[
\infer[STrans]
  {T \subtyp T' \\ T' \subtyp T''}
  {T \subtyp T''}
\hspace{10pt}
\infer[SPairL]
  { }
  {\L * \L \subtyp \L}
\]
\[
\infer[SPair]
  {T_1 \subtyp T_1' \\
   T_2 \subtyp T_2'}
  {T_1 * T_2 \subtyp T_1' * T_2'}
\hspace{10pt}
\infer[SPairS]
  { }
  {\S * T \subtyp \S}
\]
\[
\infer[SPairS']
  { }
  {T * \S \subtyp \S}
\hspace{10pt}
\infer[SKey]
  { }
  {\skey{l}{T} \subtyp l}
\]
\[
\infer[SEnc]
  {T \subtyp T'}
  {\encT{T}{k} \subtyp \encT{T'}{k}}
\hspace{10pt}
\infer[SAenc]
  {T \subtyp T'}
  {\aencT{T}{k} \subtyp \aencT{T'}{k}}
\]
\end{framed}
\caption{Subtyping Rules}
\label{fig:subtypingrules}
\end{figure}

\subsection{Typing Processes}

The typing judgement for processes is of the form 
$
  \teqPnew{\Gamma}{P}{Q}{C}
$
and can be interpreted as follows:
If two processes $P$ and $Q$ can be typed in $\Gamma$ and if the generated
constraint set $C$ is consistent, then $P$ and $Q$ are trace
equivalent. 
We assume in this section  that $P$ and $Q$ do not contain replication
and that variables and names are renamed to avoid any capture. We also
assume processes to be given with type annotations for nonces.

When typing processes, the typing environment $\Gamma$ is passed down and
extended from the root towards the leafs of the syntax tree of the process,
i.e., following the execution semantics. 
The generated constraints $C$ however, are passed up from the leafs towards the
root, so that at the root we get all generated  constraints, modeling the
attacker's global view on the process execution.

More precisely, each possible execution path of the process - there may be
multiple paths because of conditionals - creates its own set of constraints $c$
together with the typing environment $\Gamma$ that contains types for all names
and variables appearing in $c$. Hence a \emph{constraint set} $C$ is a set
elements of the form $(c,\Gamma)$ for a set of constraints $c$. The typing
environments are required in the constraint checking procedure, as they helps us
to be more precise when checking the consistency of constraints.

\begin{figure*}
\begin{framed}
\[
\infer[PZero]
  {\tewf{\Gamma}\\ \Gamma\text{ does not contain union types}}
  {\teqPnew{\Gamma}{\ZERO}{\ZERO}{(\emptyset,\Gamma)}}
\]
\[
\infer[POut]
  {\teqPnew{\Gamma}{P}{Q}{C} \\
   \teqTcnew{\Gamma}{M}{N}{\L}{c}}
  {\teqPnew{\Gamma}{\OUT{M}.P}{\OUT{N}.Q}{C \UnionAll c}}
\hspace{10pt}
\infer[PIn]
  {\teqPnew{\Gamma,x:\L}{P}{Q}{C}}
  {\teqPnew{\Gamma}{\IN{x}.P}{\IN{x}.Q}{C}}
\]
\[
\infer[PNew]
  {
   \teqPnew{\Gamma,{n:\noncetypelab{l}{\oneorinf}{n}}}{P}{Q}{C}}
  {\teqPnew{\Gamma}{\NEWnew{n}{\noncetypelab{l}{\oneorinf}{n}}.P}{\NEWnew{n}{\noncetypelab{l}{\oneorinf}{n}}.Q}{C}}
\]
\[
\infer[PPar]
  {\teqPnew{\Gamma}{P}{Q}{C} \\
   \teqPnew{\Gamma}{P'}{Q'}{C'}}
  {\teqPnew{\Gamma}{P \PAR P'}{Q \PAR Q'}{C \UnionCart C'}}
\hspace{10pt}
\infer[POr]
  {\teqPnew{\Gamma,x:T}{P}{Q}{C}\\
  \teqPnew{\Gamma,x:T'}{P}{Q}{C'}}
  {\teqPnew{\Gamma, x:T\orT T'}{P}{Q}{C \cup C'}}
\]
\[
\infer[PLet]
  {\tDestnew{\Gamma}{d(y)}{T}\\
   \teqPnew{\Gamma,x:T}{P}{Q}{C}\\
   \teqPnew{\Gamma}{P'}{Q'}{C'}}
  {\teqPnew{\Gamma}{\LET{x}{d(y)}{P}{P'}}{\LET{x}{d(y)}{Q}{Q'}}{C \cup C'}}
\]
\[
\infer[PLetLR]
   {\Gamma(y) = \LRTnewnew{\noncetypelab{l}{\oneorinf}{n}}{\noncetypelab{l'}{\oneorinf}{m}}\\
   \teqPnew{\Gamma}{P'}{Q'}{C'}}
  {\teqPnew{\Gamma}{\LET{x}{d(y)}{P}{P'}}{\LET{x}{d(y)}{Q}{Q'}}{C'}}
\]
\[
\infer[PIfL]
  {\teqPnew{\Gamma}{P}{Q}{C}\\
  \teqPnew{\Gamma}{P'}{Q'}{C'}\\
  \teqTcnew{\Gamma}{M}{N}{\L}{c}\\
  \teqTcnew{\Gamma}{M'}{N'}{\L}{c'}}
  {\teqPnew{\Gamma}{\ITE{M}{M'}{P}{P'}}{\ITE{N}{N'}{Q}{Q'}}{\left( C \cup C'\right) \UnionAll (c \cup c')}}
\]
\ifallrules
\[
\infer[PIfP]
  {\teqPnew{\Gamma}{P}{Q}{C}\\
  \teqPnew{\Gamma}{P'}{Q'}{C'}\\\\
  \teqTcnew{\Gamma}{M}{N}{\L}{c}\\
  \teqTcnew{\Gamma}{t}{t}{\L}{c'} \\
  t \in \K \cup \N \cup \CST}
  {\teqPnew{\Gamma}{\ITE{M}{t}{P}{P'}}{\ITE{N}{t}{Q}{Q'}}{C \cup C'}}
\]
\fi
\[
\infer[PIfLR]
  {\teqTcnew{\Gamma}{M_1}{N_1}{\LRTnewnew{\noncetypelab{l}{1}{m}}{\noncetypelab{l'}{1}{n}}}{\emptyset}\\
  \teqTcnew{\Gamma}{M_2}{N_2}{\LRTnewnew{\noncetypelab{l''}{1}{m'}}{\noncetypelab{l'''}{1}{n'}}}{\emptyset}\\\\
  b = (\noncetypelab{l}{1}{m} \overset{?}{=} \noncetypelab{l''}{1}{m'}) \\ 
  b' = (\noncetypelab{l'}{1}{n} \overset{?}{=} \noncetypelab{l'''}{1}{n'}) \\ 
  \teqPnew{\Gamma}{P_b}{Q_{b'}}{C}}
  {\teqPnew{\Gamma}{\ITE{M_1}{M_2}{P_\top}{P_\bot}}{\ITE{N_1}{N_2}{Q_\top}{Q_\bot}}{C}}
\]
\[
\infer[PIfS]
  {\teqPnew{\Gamma}{P'}{Q'}{C'}\\
  \teqTcnew{\Gamma}{M}{N}{\L}{c}\\
  \teqTcnew{\Gamma}{M'}{N'}{\S}{c'}}
  {\teqPnew{\Gamma}{\ITE{M}{M'}{P}{P'}}{\ITE{N}{N'}{Q}{Q'}}{C'}}
\]
\ifallrules
\[
\infer[PIfI]
  {\teqPnew{\Gamma}{P'}{Q'}{C'}\\
  \teqTcnew{\Gamma}{M}{N}{T*T'}{c}\\
  \teqTcnew{\Gamma}{M'}{N'}{\LRT{M''}{N''}}{c'}\\
  \text{$M''$, $N''$ not pairs}}
  {\teqPnew{\Gamma}{\ITE{M}{M'}{P}{P'}}{\ITE{N}{N'}{Q}{Q'}}{C'}}
\]
\joseph{this could be much more general: we only type the else branch as soon as the types of $M$, $N$ and $M'$, $N'$ are incompatible, \ie $M$, $N$ are of type $T*T'$ and
$M'$, $N'$ of type $T+T'$, $\encT{T}{k}$, $\aencT{T}{k}$, or $\LRT{M''}{N''}$ with $M''$, $N''$ ground and not pairs;
and similarly if $M$, $N$ are of type $T+T'$, $\encT{T}{k}$ or $\aencT{T}{k}$.
}
\fi
\ifallrules
\[
\infer[PLetLR*]
  {x \notin \dom{\Gamma}\\
   \Gamma(y) = \LRT{\nrep{m}}{\nrep{n}}\\
   \teqPnew{\Gamma}{P'}{Q'}{C'}}
  {\teqPnew{\Gamma}{\LET{x}{d(y)}{P}{P'}}{\LET{x}{d(y)}{Q}{Q'}}{C'}}
\]
\fi
\[
\infer[PIfLR*]
  {\teqTcnew{\Gamma}{M_1}{N_1}{\LRTnewnew{\noncetypelab{l}{\infty}{m}}{\noncetypelab{l'}{\infty}{n}}}{\emptyset}\\
  \teqTcnew{\Gamma}{M_2}{N_2}{\LRTnewnew{\noncetypelab{l}{\infty}{m}}{\noncetypelab{l'}{\infty}{n}}}{\emptyset}\\\\
  \teqPnew{\Gamma}{P}{Q}{C}\\
  \teqPnew{\Gamma}{P'}{Q'}{C'}}
  {\teqPnew{\Gamma}{\ITE{M_1}{M_2}{P}{P'}}{\ITE{N_1}{N_2}{Q}{Q'}}{C\cup C'}}
\]
\ifallrules
\[
\infer[PIfLR'*]
  {\teqTcnew{\Gamma}{M_1}{N_1}{\LRT{M_1'}{N_1'}}{c_1}\\
  \teqTcnew {\Gamma}{M_2}{N_2}{\LRT{\nrep{m}}{\nrep{n}}}{c_2}\\\\
  M_1'\neq m, N_1'\neq n, M_1', N_1' \text{ are ground}\\
  \teqPnew{\Gamma}{P'}{Q'}{C}}
  {\teqPnew{\Gamma}{\ITE{M_1}{M_2}{P}{P'}}{\ITE{N_1}{N_2}{Q}{Q'}}{C}}
\]
\[
\infer[PIfI*]
  {\teqPnew{\Gamma}{P'}{Q'}{C'}\\
  \teqTcnew{\Gamma}{M}{N}{T*T'}{c}\\
  \teqTcnew{\Gamma}{M'}{N'}{\LRT{\nrep{m}}{\nrep{n}}}{c'}}
  {\teqPnew{\Gamma}{\ITE{M}{M'}{P}{P'}}{\ITE{N}{N'}{Q}{Q'}}{C'}}
\]
\joseph{this could be more general, as soon as $M$ $N$ are of a type which can't be a nonce we know the test fails on both sides}
\fi
\end{framed}
\caption{Rules for processes}
\label{fig:processtypingrules}
\end{figure*}
An excerpt of our typing rules for processes is  presented in
Figure~\ref{fig:processtypingrules} and explained in the following.
Rule \textsc{PZero} copies the current typing environment in the constraints and checks the well-formedness of the environment ($\Gamma \vdash \diamond$), which is defined as expected. 
 Messages output on the network are possibly learned by the attacker, so they have to be of   type $\L$ (\textsc{POut}). 
The generated constraints are added to each element of the constraint set for the continuation process, using the
operator $\UnionAll$ defined as
\[C \UnionAll c' := \left\{ (c \cup c',\Gamma) \;|\; (c,\Gamma) \in C \right\}.\]
Conversely, messages input from the network are given type $\L$ (\textsc{PIn}). 
 Rule \textsc{PNew} introduces a new nonce, which may be used in the
    continuation processes. 
 While typing parallel composition  (\textsc{PPar}), we type the individual subprocesses and take the product union
    of the generated constraint sets as the new constraint set. 
    The \emph{product union} of constraint sets is defined as 
\begin{align*}
  &C \UnionCart C' := \{  (c \cup c', \Gamma \cup \Gamma') \;|\;\\
   &\quad (c,\Gamma) \in C \,\wedge\, (c',\Gamma') \in C'
   \,\wedge\, \Gamma, \Gamma' \text{ are compatible} \}
\end{align*}
    where \emph{compatible} environments are those that agree on the type of all
    arguments of the shared domain.   This operation models the fact that a
    process $P \PAR P'$ can have every trace that is a combination of any trace
    of $P$ with any trace of $P'$.
    The branches that are discarded due to incompatible environments correspond
   to  impossible executions (e.g., taking the left branch in $P$ and the right
    branch in $P'$ in  two conditionals with the same guard).
   \textsc{POr} is the elimination rule for union types, which requires  the continuation process to be well-typed with both types. 

    To ensure that the destructor application fails or succeeds equally in the
    two processes, we allow only the same destructor to be applied to the same
    variable in both processes (\textsc{PLet}). As usual, we then type-check the then as well as the else branch and then take the union of the corresponding constraints.  The typing rules for destructors are
    presented in Figure~\ref{fig:destructorrules}. 
    These are mostly standard: for instance, after decryption,  the type of the payload is determined by the one of  the decryption key, as long as this is of high integrity (\textsc{DDecH}). We can as well exploit strong types for ciphertexts, typically introduced by verifying a surrounding signature (see, e.g., the types for Helios) to derive the type of the payload (\textsc{DDecT}). 
    In the case of public key encryption, we have to be careful, since the public encryption key is accessible to the attacker: we thus give the payload type $T \orT
    \L$  (rule \textsc{DAdecH}). For operations involving corrupted
    keys (label $\L$) we know that the payload is public and hence give the derived message type $\L$.

In the  special case in which we know that the 
    concrete value of the argument of the destructor application is a nonce or constant due to a refinement type, and we
    know statically that any destructor application will fail,  we only need to type-check  the else branch (\textsc{PLetLR}).
    As  for destructor applications, 
    the difficulty while typing conditionals is to make sure  that the same branch is taken in both
    processes (\textsc{PIfL}). To ensure this we use a trick: We type both the left and the
    right operands of the conditional with type $\L$ and add both generated
     sets of constraints to the constraint set.  Intuitively, this means that the attacker could perform the equality test himself, since the guard is of type $\L$, which means that the conditional must take the same branch on the left and on the right. 
In the special case  in which we can statically determine
    the concrete value of the terms in the conditional (because the corresponding type is populated by a singleton), we  have to typecheck only the single combination of branches that will be
    executed (\textsc{PIfLR}). 
   Another special case  is if  the messages on the right  are of type $\S$ and the ones on the left of type $\L$. As a secret of high integrity  can never be equal
    to a public value of low integrity, we know that both
    processes will take the else branch (\textsc{PIfS}). 
     This rule is crucial, since it may allow us
    to prune the low typing branch of asymmetric decryption. 
   The last special case for conditionals is 
    when we have a refinement type with replication for both operands of the
    equality check (\textsc{PIfLR*}). Although we know that the nonces on both sides are of the
    same type and hence both are elements of the same set, we cannot assume
    that they are equal, as the sets are infinite, unlike in rule
    \textsc{PIfLR}.
    Yet, 
\veroniquebis{concrete instantiations of nonces will have the same
  index for the left  and the right process. This is because we check
  for a variant of diff-equivalence.}
    This ensures
    that the equality check always yields the same result in the two processes. All these special cases highlight how a careful treatment of  names in terms of equivalence classes (statically captured by types) is a powerful device to enhance the expressiveness of the analysis.

  Finally, notice that we do not have any typing rule for replication: this is in line with our general idea of typing a bounded number of sessions and then extending this result  to the unbounded case in the constraint checking phase, as detailed in  Section~\ref{sec:results}.

\begin{figure}
\begin{framed}
\[ \infer[DDecH]
  {\Gamma(k) = \skey{\S}{T}\\
   \Gamma(x) = \L}
  {\tDestnew{\Gamma}{\DEC{x}{k}}{T}}
\]
\[ \infer[DDecL]
  {\Gamma(k) = \skey{\L}{T}\\
   \Gamma(x) = \L}
  {\tDestnew{\Gamma}{\DEC{x}{k}}{\L}}
\]
\[ \infer[DDecT]
  {\Gamma(x) = \encT{T}{k}}
  {\tDestnew{\Gamma}{\DEC{x}{k}}{T}}
\]
%
%
\[ \infer[DAdecH]
  {\Gamma(k) = \skey{\S}{T}\\
   \Gamma(x) = \L}
  {\tDestnew{\Gamma}{\ADEC{x}{k}}{T \orT \L}}
\]
\[ \infer[DAdecL]
  {\Gamma(k) = \skey{\L}{T}\\
   \Gamma(x) = \L}
  {\tDestnew{\Gamma}{\ADEC{x}{k}}{\L}}
\]  
\[ \infer[DAdecT]
  {\Gamma(x) = \aencT{T}{k}}
  {\tDestnew{\Gamma}{\ADEC{x}{k}}{T}}
\]
\[ \infer[DCheckH]
  {\Gamma(k) = \skey{\S}{T}\\
   \Gamma(x) = \L}
  {\tDestnew{\Gamma}{\CHECK{x}{\VK{k}}}{T}}
\]
\[ \infer[DCheckL]
  {\Gamma(k) = \skey{\L}{T}\\ 
   \Gamma(x) = \L}
  {\tDestnew{\Gamma}{\CHECK{x}{\VK{k}}}{\L}}
\]  
%
  
\[ \infer[DFst]
  {\Gamma(x) = T * T'}
  {\tDestnew{\Gamma}{\FST{x}}{T}}
\hspace{10pt}
\infer[DSnd]
  {\Gamma(x) = T * T'}
  {\tDestnew{\Gamma}{\SND{x}}{T'}}
\]
\[ \infer[DFstL]
  {\Gamma(x) = \L}
  {\tDestnew{\Gamma}{\FST{x}}{\L}}
\hspace{10pt}
 \infer[DSndL]
  {\Gamma(x) = \L}
  {\tDestnew{\Gamma}{\SND{x}}{\L}}
\]
\end{framed}
\caption{Destructor Rules}
\label{fig:destructorrules}
\end{figure}

\section{Consistency of Constraints}

Our type system  guarantees trace equivalence of two processes only if the 
generated constraints are \emph{consistent}. In this section we give a slightly
simplified definition of consistency of constraints and explain how it
captures the attacker's capability to distinguish processes based on their
outputs.

To define consistency, we need the following ingredients:
\begin{itemize}
  \item $\phiL{c}$ and $\phiR{c}$ denote the frames that are composed of the
    left and the right terms of the constraints respectively (in the same order).
  \item $\phiEnew{\Gamma}$ denotes the frame that is composed of all low
    confidentiality nonces and keys in $\Gamma$, as well as all public
    encryption keys and verification keys in $\Gamma$. \veroniquebis{This
    intuitively corresponds to the initial knowledge of the attacker.}
  \item Let $\E_{\Gamma}$ be the set of all nonces occurring in $\Gamma$.
  \item Two ground substitutions $\sigma, \sigma'$ are well-formed in $\Gamma$
    if they preserve the types for variables in $\Gamma$ (i.e.,
    $\teqTcnew{\Gamma}{\sigma(x)}{\sigma'(x)}{\Gamma(x)}{c_x}$).
\end{itemize} 
\begin{definition}[Consistency]
A set of constraints $c$ is \emph{consistent} in an environment
$\Gamma$ if for all substitutions $\sigma$,$\sigma'$ well-typed in $\Gamma$ the
frames $\NEWN{\E_{\Gamma}}.(\phiEnew{\Gamma} \cup \phiL{c}\sigma)$ and
$\NEWN{\E_{\Gamma}}.(\phiEnew{\Gamma} \cup \phiR{c}\sigma')$ are statically
equivalent.
We say that $(c,\Gamma)$ is consistent if $c$ is consistent in $\Gamma$ and
that a constraint set $C$ is consistent in $\Gamma$ if each element $(c, \Gamma) \in C$
is consistent.
\end{definition}
We define consistency of constraints in terms of static equivalence, as this
notion exactly captures all capabilities of our attacker: to distinguish two
processes, he can arbitrarily apply constructors and destructors on observed
messages to create new terms, on which he can then perform equality tests or
check the applicability of destructors.
We require that this property holds for any well-typed substitutions, to soundly
cover that fact that we do not know the content of variables statically, except
for the information we get by typing. In Section \ref{sec:consistency-check} we introduce an algorithm  to check consistency of constraints.

\section{Main results}
\label{sec:results}
In this
section, we state our two main soundness theorems, entailing trace equivalence by typing for the bounded and
unbounded case, and we explain how to automatically check consistency.
\subsection{Soundness of the type system}

Our type system soundly enforces trace equivalence: if we can typecheck $P$ and $Q$ then $P$ and
$Q$ are equivalent, provided that the corresponding constraint set is consistent.
\begin{theorem}[Typing implies trace equivalence]
\label{thm:typing-sound}
For all $P$, $Q$, and $C$, for all $\Gamma$ containing only keys, if~ $\teqPnew{\Gamma}{P}{Q}{C}$
and $C$ is consistent, then
$P \equivTrace Q$.
\end{theorem}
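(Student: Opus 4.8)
The plan is to prove the two trace inclusions $P \incTrace Q$ and $Q \incTrace P$ separately; since the typing judgement and the notion of consistency are symmetric under swapping the left and right components (static equivalence being a symmetric relation), it suffices to establish $P \incTrace Q$, the other inclusion following by applying the same argument to the mirror derivation of $Q \sim P$. To show $P \incTrace Q$ I would set up a reduction invariant, i.e. a relation $\mathcal{I}$ between reachable configurations $(\E_P;\PP_P;\phi_P;\sigma_P)$ of $P$ and $(\E_Q;\PP_Q;\phi_Q;\sigma_Q)$ of $Q$, and prove it is a typed bisimulation witnessing both matching control flow and static equivalence of the accumulated frames. The invariant should record: an extended environment $\Gamma'$ (the initial key environment enriched with the nonces created so far via \PNew and with type $\L$ for the input variables introduced by \PIn); a pairing of the active processes of $\PP_P$ and $\PP_Q$ such that each pair is jointly typable, $\teqPnew{\Gamma'}{P_i}{Q_i}{C_i}$; well-typedness of the two substitutions, $\teqTcnew{\Gamma'}{\sigma_P(x)}{\sigma_Q(x)}{\Gamma'(x)}{c_x}$ for every $x$; and the fact that the already-emitted frames coincide with the left/right instantiations of a set of committed constraints $c_{\mathrm{out}}$, namely $\phi_P = \phiL{c_{\mathrm{out}}}\sigma_P$ and $\phi_Q = \phiR{c_{\mathrm{out}}}\sigma_Q$, where $(c_{\mathrm{out}} \cup \bigcup_i c_i,\,\Gamma')$ is a consistent element of the statically generated constraint set. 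This last clause is where consistency of $C$ is consumed.

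The core of the argument is a preservation lemma stating that whenever the invariant holds and the $P$-configuration takes a step $\redAction{\alpha}$, the $Q$-configuration can take a step with the same label up to $\silentAction$ and the invariant is re-established. I would proceed by case analysis on the reduction rule. The structural cases (\textsc{Par}, \textsc{Zero}, \textsc{New}) are immediate, with \PNew adding the new nonce to $\Gamma'$. For \textsc{Out}, rule \POut types the two emitted messages at $\L$ with a constraint $c$; both sides emit under the same fresh handle $ax_n$ (the frames have equal size by the invariant), and we move the constraint $M \eqC N$ into $c_{\mathrm{out}}$, preserving the frame-alignment clause. For \textsc{In}, $Q$ replays the very same attacker recipe $R$; here I would invoke an auxiliary lemma (below) to show that $\eval{(R\phi_P\sigma_P)} \neq \bot \iff \eval{(R\phi_Q\sigma_Q)} \neq \bot$ and that the two resulting values are related at type $\L$, so the substitutions stay well-typed when the \PIn-introduced variable is added at $\L$. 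The branching cases are the delicate ones: for conditionals I must show $P$ and $Q$ always select the same branch, using \PIfL (the guard is $\L$ on both sides, so consistency lets the attacker reproduce the test, forcing branch agreement), \PIfS (a high-integrity secret can never equal a low value, so both go to the else branch), \PIfLR (singleton refinements fix the branch statically), and \PIfLRinf (the diff-equivalence discipline guarantees matching indices); the \PLet and \PLetLR cases similarly require showing that the destructor succeeds on one side iff it succeeds on the other and that the bound variable receives the type predicted by the destructor rules of Figure~\ref{fig:destructorrules}.

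Two supporting lemmas underlie these cases. The first states that attacker computations over low, consistent frames stay low: if the frame contents are all $\L$-typed and related by a consistent constraint set, then for any attacker term $R$ the evaluations $\eval{(R\phi_P\sigma_P)}$ and $\eval{(R\phi_Q\sigma_Q)}$ either both fail or are both defined and jointly typable at $\L$, and adjoining the generated constraints keeps the set consistent. This is precisely where the failure-sensitive variant of static equivalence is needed, and it is what guarantees that inputs, guards and destructor tests behave identically on the two sides. The second lemma bridges the invariant to the goal: from the frame-alignment clause and consistency of the path's element $(c_{\mathrm{out}} \cup \bigcup_i c_i,\,\Gamma')$ one obtains static equivalence of $\NEWN{\E_{\Gamma'}}.\phi_P\sigma_P$ and $\NEWN{\E_{\Gamma'}}.\phi_Q\sigma_Q$, after accounting for the attacker's implicit access to the public material collected in $\phiEnew{\Gamma'}$. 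With these in place the theorem follows: the initial configurations $(\emptyset;\{P\};\emptyset;\emptyset)$ and $(\emptyset;\{Q\};\emptyset;\emptyset)$ lie in $\mathcal{I}$ (here we use that $\Gamma$ contains only keys, so no nonces or variables need matching initially and the committed constraints are empty), and replaying a trace of $P$ through the preservation lemma builds a trace of $Q$ with the same observable actions and statically equivalent frames, i.e. $P \incTrace Q$.

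The main obstacle I anticipate is the preservation lemma, and within it the two intertwined difficulties of control-flow alignment and consistency maintenance. Establishing that $P$ and $Q$ always take the same branch of every conditional and the same success/failure outcome of every destructor application is not a single uniform argument but a separate analysis for each specialised rule (\PIfL, \PIfS, \PIfLR, \PIfLRinf, \PLet, \PLetLR), each leaning on a slightly different consequence of typing and of consistency; and throughout these steps one must keep the accumulated constraint set consistent as the attacker adaptively chooses inputs, which is exactly the content of the first supporting lemma. By contrast, the output case and the concluding passage from the invariant to static equivalence are comparatively routine bookkeeping.
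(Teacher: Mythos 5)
Your proposal follows essentially the same route as the paper's proof: a preservation (``invariant'') lemma relating configurations, proved by case analysis on the reduction rule, with exactly the paper's two supporting pillars — a lemma that attacker recipes over low, typably related frames either fail on both sides or yield results typable at $\L$ with constraints that are a \emph{subset} of the existing ones (so consistency is maintained for free), and a lemma converting consistency of low frames into static equivalence; trace inclusion then follows by induction on the trace, and the converse inclusion by the left/right symmetry of typing and consistency, just as in the paper. Your identification of the delicate cases (\PIfL, \PIfS, \PIfLR, \PIfLRinf, \PLet, \PLetLR) and of the mechanisms each one uses matches the paper's case analysis.

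One clause of your invariant is, however, mis-stated in a way that would make the argument circular if taken literally. You require $\phi_P = \phiL{c_{\mathrm{out}}}\sigma_P$ and $\phi_Q = \phiR{c_{\mathrm{out}}}\sigma_Q$, committing the whole-message constraint $M \eqC N$ at each output, and you ask that $(c_{\mathrm{out}} \cup \bigcup_i c_i,\,\Gamma')$ be a consistent element of the statically generated set. But rule \POut does not place $M \eqC N$ into the constraint set; it adds only the constraints $c$ generated by typing $M \sim N$ at $\L$, which record the \emph{subterms} opaque to the attacker (honest encryptions, hashes, signatures, via \textsc{TEncH}, \textsc{THash}, \textsc{TSignH}), while the surrounding message structure (pairs, public material, low encryptions) is not constrained. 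Consistency of whole-message constraints is essentially static equivalence of the frames — the conclusion, not something extractable from consistency of $C$. The paper's invariant instead carries the typing judgement $\teqTcnew{\Gamma'}{\phi_P}{\phi_Q}{\L}{c_\phi}$, where $c_\phi$ accumulates the \POut-generated constraints, obtains consistency of $\inst{({\UnionCart}_i C_i) \UnionAll c_\phi}{\sigma_P}{\sigma_Q} \UnionAll c_\sigma$ because it sits inside an element of $C$ (consistency being closed under subsets and under instantiation by well-typed substitutions), and only at the end converts this to static equivalence of the actual frames via the lemma that low-typed terms are destructor-free recipes over $\phiL{c_\phi} \cup \phiEnew{\Gamma'}$ — precisely the ``public material'' adjustment your bridging lemma gestures at, which is the correct repair of your alignment clause. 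A secondary omission: your plan never mentions union types, yet \textsc{DAdecH} introduces them and \POr eliminates them; the paper must first reduce to a branch of $\Gamma$ without union types (ground substitutions are well-typed in some branch, and typing of processes and frames restricts to branches with smaller constraint sets) before the per-rule analysis goes through, and the \PLet case must re-branch the type returned by the destructor.
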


\josephbis{%
To prove this theorem, we first show that typing is preserved by reduction,
and guarantees that the same actions can be observed on both sides.
More precisely, we show that if $\PP$ and $\QQ$ are multisets of processes which are
pairwise typably equivalent (with consistent constraints),
and if a reduction step with action $\alpha$ can be performed to reduce $\PP$ into $\PP'$,
then $\QQ$ can be reduced in one or several steps, with the same action $\alpha$,
to some multiset $\QQ'$ such that the processes in $\PP'$ and $\QQ'$ are still typably equivalent (with consistent constraints).
This is done by carefully examining all the possible typing rules used to type the processes in $\PP$ and $\QQ$.
In addition we show that the frames of messages output when reducing $\PP$ and $\QQ$
are typably equivalent with consistent constraints; and that this entails their static equivalence.

This implies that if $P$ and $Q$ are typable with a consistent constraint,
then for each trace of $P$, by induction on the length of the trace, there exists a trace of $Q$ with the same sequence of
actions, and with a statically equivalent frame.
That is to say $P \incTrace Q$.
Similarly we show $Q \incTrace P$, and we thus have $P \equivTrace Q$.
}

Since we do not have typing rules for  replication,
Theorem~\ref{thm:typing-sound} only allows us to prove equivalence of
protocols for a \emph{finite} number of sessions. An arguably surprising result, however, is that, thanks to  our infinite nonce types, 
we can prove  equivalence for an
\emph{unbounded} number of sessions, as detailed in the  next section.

\subsection{Typing replicated processes}


For more clarity, in this section, without loss of generality
we consider that 
for each infinite nonce type $\noncetypelab{l}{\infty}{m}$ appearing in the processes,
the set of names $\BN$ contains an infinite number of fresh names $ \{m_i\;|\; i\in\mathbb{N}\}$ which do not appear
in the processes or environments.
We similarly assume that for all the variables $x$ appearing in the processes, the set $\X$ of all variables also contains
fresh variables $\{x_i\;|\;i\in\mathbb{N}\}$ which do not appear in the processes or environments.

Intuitively, whenever we can typecheck a process of the form $\NEWnew{n}{\noncetypelab{l}{1}{n}}.\;
\NEWnew{m}{\noncetypelab{l}{\infty}{m}}.\; P$, we can actually typecheck
\[\NEWnew{n}{\noncetypelab{l}{1}{n}}.\; (\NEWnew{m_1}{\noncetypelab{l}{1}{m_1}}. P_1 \PAR \dots \PAR \NEWnew{m_k}{\noncetypelab{l}{1}{m_k}}. P_k)\]
where 
in $P_i$, the nonce $m$ has been replaced by $m_i$ and variables $x$ have
been renamed to $x_i$.

Formally, we denote by $\instTerm{t}{i}{\Gamma}$, the term $t$ in which names
$n$ such that $\Gamma(n)=\noncetypelab{l}{\infty}{n}$ for some $l$ are replaced by $n_i$,
and variables $x$ are replaced by $x_i$.

Similarly, when 
a term is of type $\LRTnewnew{\noncetypelab{l}{\infty}{m}}{\noncetypelab{l'}{\infty}{p}}$, it can
be of type $\LRTnewnew{\noncetypelab{l}{1}{m_i}}{\noncetypelab{l'}{1}{p_i}}$ for any $i$.
The nonce type $\noncetypelab{l}{\infty}{m}$ represents infinitely many nonces (one for each session). 
That is, for $n$ sessions, the  type $\LRTnewnew{\noncetypelab{l}{\infty}{m}}{\noncetypelab{l'}{\infty}{p}}$
represents all  $\LRTnewnew{\noncetypelab{l}{1}{m_i}}{\noncetypelab{l'}{1}{p_i}}$. Formally, given a type
$T$, we define its expansion  to $n$ sessions, denoted
$\instTyp{T}{n}$, as follows.
\[
\begin{array}{r@{\;}c@{\;\;}l}
\instTyp{l}{n} &=& l \\
\instTyp{T * T'}{n} &=& \instTyp{T}{n}*\instTyp{T'}{n} \\
\instTyp{T + T'}{n} &=& \instTyp{T}{n}+\instTyp{T'}{n} \\
\instTyp{\skey{l}{T}}{n} &=& \skey{l}{\instTyp{T}{n}} \\
\instTyp{\encT{T}{k}}{n} &=& \encT{\instTyp{T}{n}}{k} \\
\instTyp{\aencT{T}{k}}{n} &=& \aencT{\instTyp{T}{n}}{k} \\
\instTyp{T \orT T'}{n} &=& \instTyp{T}{n}\orT\instTyp{T'}{n}\\
\instTyp{\LRTnewnew{\noncetypelab{l}{1}{m}}{\noncetypelab{l'}{1}{p}}}{n} &=& \LRTnewnew{\noncetypelab{l}{1}{m}}{\noncetypelab{l'}{1}{p}}\\
\instTyp{\LRTnewnew{\noncetypelab{l}{\infty}{m}}{\noncetypelab{l'}{\infty}{p}}}{n} &=& \bigvee_{j=1}^{n} \LRTnewnew{\noncetypelab{l}{1}{m_j}}{\noncetypelab{l'}{1}{p_j}} \\
\end{array}
\]
where $l, l'\in\{\L, \S, \H\}$, $k\in\K$. 
Note that the size of the expanded type $\instTyp{T}{n}$ depends on $n$.

We need to adapt typing environments accordingly.
For any typing environment $\Gamma$, we define its renaming for session $i$ as:
\begin{align*}
\instGr{\Gamma}{i} = \;&
\{x_i : T \;|\; \Gamma(x) = T\} \cup \{k : T \;|\; \Gamma(k) = T\}\\
&\cup \;\{m : \noncetypelab{l}{1}{m} \;|\; \Gamma(m) = \noncetypelab{l}{1}{m}\}\\
&\cup \;\{m_i : \noncetypelab{l}{1}{m_i} \;|\; \Gamma(m) = \noncetypelab{l}{\infty}{m}\}.
\end{align*}
and then its expansion to $n$ sessions as
\begin{align*}
\instG{\Gamma}{i}{n} = \;&
\{x_i : \instTyp{T}{n} \;|\; \instGr{\Gamma}{i}(x_i) = T\} \cup \{k : \instTyp{T}{n} \;|\; \instGr{\Gamma}{i}(k) = T\}\\
&\cup \;\{m : \noncetypelab{l}{1}{m} \;|\; \instGr{\Gamma}{i}(m) = \noncetypelab{l}{1}{m}\}.
\end{align*}

Note that in $\instG{\Gamma}{i}{n}$, due to the expansion, the size of the types depends on $n$.

By construction, the environments contained in the constraints generated by typing do not
contain union types.
However, refinement types with infinite nonce types introduce union types when expanded.
In order to recover environments without union types after expanding, which, as we will explain in the next subsection,
is needed for our consistency checking procedure,
we define $\branch{\instG{\Gamma}{i}{n}}$ as the set of all $\Gamma'$,
with the same domain as $\instG{\Gamma}{i}{n}$, such that for all $x$, $\Gamma'(x)$ is not a union type, and
either
\begin{itemize}
  \item $\instG{\Gamma}{i}{n}(x) = \Gamma'(x)$; 
  
\item or there exist types $T_1$,\dots,$T_k$,$T'_1$,\dots,$T'_{k'}$ such that
\[\instG{\Gamma}{i}{n}(x) = T_1 \orT \dots \orT T_k \orT \Gamma'(x) \orT T'_1 \orT\dots\orT T'_{k'}\]
\end{itemize}

Finally, when typechecking two processes containing nonces with infinite nonce types,
we collect constraints that represent families of constraints.

Given a set of constraints $c$, and an environment $\Gamma$, we define the renaming of $c$ for session $i$ in $\Gamma$ as 
$\instConst{c}{i}{\Gamma} = \{\instTerm{u}{i}{\Gamma} \eqC \instTerm{v}{i}{\Gamma}\;|\; u \eqC v \in c\}$.
This is propagated to constraint sets as follows:
the renaming of $C$ for session $i$ is
$\instCstr{C}{i} = \{(\instConst{c}{i}{\Gamma}, \instGr{\Gamma}{i})\;|\;(c,\Gamma)\in C\}$
and its expansion to $n$ sessions is
$\instCst{C}{i}{n} = \{(\instConst{c}{i}{\Gamma}, \Gamma')\;|\;\exists \Gamma.\; (c, \Gamma)\in C \;\wedge\; \Gamma' \in \branch{\instG{\Gamma}{i}{n}}\}$.

Again, note that the size of $\instCstr{C}{i}$ does not depend on the number of sessions considered, while
the size of the types present in $\instCst{C}{i}{n}$ does.
For example, for
$C = \{(\{\HASH{x} \eqC\HASH{x}\}, [x:\LRTnewnew{\noncetypelab{\S}{\infty}{m}}{\noncetypelab{\S}{\infty}{p}}])\}$,
we have
$\instCstr{C}{i} = \{(\{\HASH{x_i} \eqC\HASH{x_i}\}, [x_i:\LRTnewnew{\noncetypelab{\S}{\infty}{m}}{\noncetypelab{\S}{\infty}{p}}])\}$
and
$\instCst{C}{i}{n} = \{(\{\HASH{x_i} \eqC\HASH{x_i}\}, [x_i:\bigvee_{j=1}^{n} \LRTnewnew{\noncetypelab{\S}{1}{m_j}}{\noncetypelab{\S}{1}{p_j}}])\}$.


\newcommand{\Deltan}{\instD{\Delta}{n}}
\newcommand{\En}{\instE{\E}{n}}
\newcommand{\EEn}{\instE{\E'}{n}}
\newcommand{\Cin}{\instCst{C}{i}{n}}
\newcommand{\Ci}{\instCst{C}{1}{n}}
\newcommand{\Cii}{\instCst{C}{2}{n}}
\newcommand{\UCn}{\UnionCart_{1\leq i \leq n}\instCst{C}{i}{n}}
\newcommand{\Cud}{\instCst{C}{1}{n}\UnionCart\instCst{C}{2}{n}}
\newcommand{\unn}{\llbracket 1, n\rrbracket}
\newcommand{\nm}{\overline{n}}

Our type system is sound for replicated processes provided that the
collected constraint sets are consistent, when instantiated with all
possible instantiations of the nonces and keys.
\begin{theorem}
\label{thm:typing-sound-replicated}
Consider $P$, $Q$, $P'$ ,$Q'$, $C$, $C'$,
such that $P$, $Q$ and $P'$, $Q'$ do not share any variable.
Consider $\Gamma$, containing only keys and nonces with types of the form $\noncetypelab{l}{1}{n}$.

Assume that $P$ and $Q$ only bind nonces with infinite nonce types,
\ie using $\NEWnew{m}{\noncetypelab{l}{\infty}{m}}$ for some label $l$;
while $P'$ and $Q'$ only bind nonces with finite types, \ie using $\NEWnew{m}{\noncetypelab{l}{1}{m}}$.

Let us abbreviate by $\NEWN{\nm}$ the sequence of declarations of each nonce $m\in\dom{\Gamma}$.
If
\begin{itemize}
\item $\teqPnew{\Gamma}{P}{Q}{C}$,
\item $\teqPnew{\Gamma}{P'}{Q'}{C'}$,
\item $C'\UnionCart(\UCn)$ is consistent for all $n$,
\end{itemize}
then \hfill$\NEWN{\nm}. \;((!P)\PAR P') \equivTrace
\NEWN{\nm}. \;((!Q)\PAR Q')$.\hfill~
\end{theorem}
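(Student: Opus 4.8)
The plan is to reduce the replicated statement to the finite-session soundness result already established in Theorem~\ref{thm:typing-sound}. The key observation is that $\NEWN{\nm}.\,((!P)\PAR P')$ and $\NEWN{\nm}.\,((!Q)\PAR Q')$ are trace equivalent if and only if, for every bound $n$ on the number of unfoldings of the replication, the finite processes
\[
R_n = \NEWN{\nm}.\,\bigl((\textstyle\bigparallel_{i=1}^{n} \instProca{P}{i})\PAR P'\bigr),
\qquad
R'_n = \NEWN{\nm}.\,\bigl((\textstyle\bigparallel_{i=1}^{n} \instProca{Q}{i})\PAR Q'\bigr)
\]
are trace equivalent, where $\instProca{P}{i}$ is the $i$-th copy of $P$ with its $\infty$-typed nonces $m$ renamed to $m_i$ and its variables $x$ renamed to $x_i$ (using the instantiation operator $\instTerm{\cdot}{i}{\Gamma}$ defined above). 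This holds because any trace of the replicated process uses only finitely many copies of $P$, and conversely every $R_n$ is a sub-behaviour of the replicated process; so trace inclusion for all $n$ yields trace inclusion for the replicated process, and vice versa.

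First I would show a \emph{typability lemma}: if $\teqPnew{\Gamma}{P}{Q}{C}$ with $P$, $Q$ binding only $\infty$-typed nonces, then for each $i$ the renamed copy is typable, namely $\teqPnew{\instGr{\Gamma}{i}}{\instProca{P}{i}}{\instProca{Q}{i}}{\instCstr{C}{i}}$. This is proved by induction on the typing derivation of $P$, $Q$: each rule is stable under the renaming $\instTerm{\cdot}{i}{\Gamma}$, since renaming an $\infty$-nonce $m$ to the fresh $1$-nonce $m_i$ turns every use of \textsc{TLR}$^\infty$ into a use of \textsc{TLR}$^1$, and likewise \textsc{PIfLR*} (which tolerates equal $\infty$-types on the two sides precisely because the renaming gives matching indices $m_i$, $p_i$) collapses to the deterministic branch selection of \textsc{PIfLR}. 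The freshness assumptions on $\{m_i\}$ and $\{x_i\}$ guarantee that the renamed environments $\instGr{\Gamma}{i}$ have pairwise disjoint variable domains, so that combining the $n$ copies of $P$ in parallel with $P'$ is well-typed by repeated application of \textsc{PPar} together with the rule for $P'$. Tracking the constraints through \textsc{PPar} shows that the overall process $R_n$, $R'_n$ type-checks with constraint set $C' \UnionCart (\UCn)$, which is \emph{precisely} the expansion-to-$n$-sessions hypothesis of the theorem. The subtlety here is that the product union $\UnionCart$ must merge the per-session constraints into the same global environment, and the expansion $\instCst{C}{i}{n}$ replaces the surviving $\infty$-refinement types by the disjunction $\bigvee_{j=1}^n$ over all indices, with $\branch{\cdot}$ stripping the resulting union types so that each element again carries a union-free environment as required by Theorem~\ref{thm:typing-sound}.

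Given the typability lemma and the consistency hypothesis, I would apply Theorem~\ref{thm:typing-sound} to conclude $R_n \equivTrace R'_n$ for every $n$, and then lift this to the replicated processes by the finite-unfolding argument above. \textbf{The main obstacle I anticipate} is the correct bookkeeping of the index $n$ in the expanded types: the constraint set generated by the $n$-fold parallel composition of the renamed copies is \emph{not literally} $\UCn$, because typing each $\instProca{P}{i}{}$ against the finite environment $\instGr{\Gamma}{i}$ produces refinement types mentioning the single index $i$, whereas the theorem's consistency hypothesis talks about the expanded environment $\instG{\Gamma}{i}{n}$ in which each $\infty$-type has been replaced by the full disjunction over $j \in \{1,\dots,n\}$. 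I would bridge this gap by a \emph{weakening argument on constraints}: each concrete typing with index $i$ is a particular disjunct of the expanded type, so a witness substitution for the concrete constraints is also well-typed for the expanded environment, whence consistency of $C'\UnionCart(\UCn)$ (the stronger, expanded statement) implies consistency of the actual constraints collected by typing $R_n$, $R'_n$. Verifying that this implication goes in the right direction---and that the $\branch{\cdot}$ decomposition does not lose any execution path---is the delicate point, and it is exactly where the infinite-nonce machinery earns its keep.
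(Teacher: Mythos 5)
Your overall skeleton matches the paper's: unfold the replication into $n$ renamed copies, establish a typability result for the unfolded system, invoke Theorem~\ref{thm:typing-sound} via the consistency hypothesis, and lift over all $n$. The genuine gap is in your typability lemma. You claim each copy typechecks against the merely \emph{renamed} environment, as $\teqPnew{\instGr{\Gamma}{i}}{\instProca{P}{i}}{\instProca{Q}{i}}{\instCstr{C}{i}}$, and you defer the mismatch with the expanded set $\UCn$ to an after-the-fact weakening on the consistency side. But that judgement is in general not derivable, and no argument about substitutions being ``well-typed for the expanded environment'' can repair a typing derivation that does not exist. The problem is that key types are shared across copies: a key type $\skey{\S}{T}$ whose payload $T$ contains $\LRTnewnew{\noncetypelab{l}{\infty}{m}}{\noncetypelab{l'}{\infty}{p}}$ is left untouched by $\instGr{\cdot}{i}$, while the renamed nonce $m_i$ now has the finite type $\noncetypelab{l}{1}{m_i}$; hence \TLRinf no longer applies, the payload-matching premise of \TEncH cannot be met, and the refinement types introduced by \DAdecH and consumed by \PIfLRinf do not line up with the $1$-typed nonces. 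If you instead rename the $\infty$-refinements inside key and variable types to the single index $i$, then the copies assign the key $k$ incompatible types and \PPar fails --- and the single-index refinement is also semantically wrong, since an input variable of copy $i$ can at runtime be bound to a nonce $m_j$ created in another session $j$.

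The paper resolves this by building the expansion into the typability statement itself (Theorem~\ref{lem-proof:typing-process-star}): one proves $\instProca{P}{i}$, $\instProca{Q}{i}$ typable in the \emph{expanded} environment $\instG{\Gamma}{i}{n}$, in which every $\infty$-refinement --- in key payloads and variable types alike, so that all copies agree on key types --- becomes the $n$-fold union $\bigvee_{j=1}^{n}\LRTnewnew{\noncetypelab{l}{1}{m_j}}{\noncetypelab{l'}{1}{p_j}}$. Inside the derivation, \TLRinf is simulated by \TLRone followed by \TOr, and \PIfLRinf is replayed as \PIfLR on each element of $\branch{\instG{\Gamma}{i}{n}}$, with equal or unequal indices $j,j'$ selecting the then- or else-branch; the union/branch machinery you relegate to constraint bookkeeping is thus needed inside the typing derivation. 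The collected constraints then form a \emph{subset} of $\instCst{C}{i}{n}$, so consistency of the constraints actually generated follows from the hypothesis by the subset lemma (Lemma~\ref{lem-proof:cons-subset}), with no renamed-to-expanded bridge required.
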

Theorem~\ref{thm:typing-sound} requires to check consistency of one constraint set. Theorem~\ref{thm:typing-sound-replicated} now
requires to check consistency of an infinite family of contraint sets. 
Instead of \emph{deciding} consistency, we provide a procedure that
checks a slightly stronger condition.

\subsection{Procedure for consistency}
\label{sec:consistency-check}
Checking consistency of a set of constraints amounts to checking static equivalence of the
corresponding frames. Our procedure follows the spirit
of~\cite{AbadiRogaway} for checking computational
indistinguishability: we first open encryption, signatures and pairs
as much as possible. Note that the type of a key indicates whether it
is public or secret. The two resulting frames should have the same
shape. Then, for unopened components, we simply need to check that
they satisfy the same equalities.

From now on, we only consider constraint sets that can actually be generated
when typing processes, as these are the only ones for which we need to check consistency.

Formally, the procedure $\checkconststar$ is described in
Figure~\ref{fig:checkconststar}. It consists of four steps.
First, we replace variables with refinements of finite nonce types by their left and right values.
In particular a variable with a union type is not associated with a single value and thus cannot be replaced.
This is why the branching operation needs to be performed when expanding environments
containing refinements with types of the form $\noncetypelab{l}{\infty}{n}$.
Second, we recursively open the constraints as much as possible.
Third, we check that the resulting constraints have the same shape.
Finally, as soon as two constraints $M \eqC M'$ and $N \eqC N'$ are such that $M$, $N$ are
unifiable, we must have $M'=N'$, and conversely. The condition is
slightly more involved, especially when the constraints contain
variables of refined types with infinite nonce types.

\begin{figure}
\begin{framed}\begin{minipage}{8cm}
\small
\newcommand{\separ}{\hfill\rule{0.5\linewidth}{0.5pt}\hfill~\vspace{0.3em}}
$\stepI_\Gamma(c) := \inst{c}{\sigma_F}{\sigma_F'}$, with
\begin{align*}
F := \{&x\in\dom{\Gamma} \;|\;\\ &\exists m,n,l,l'.\;\Gamma(x) = \LRTnewnew{\noncetypelab{l}{1}{m}}{\noncetypelab{l'}{1}{n}}\}
\end{align*}
and $\sigma_F,\sigma_F'$ defined by
\[\left\{\begin{array}{l}
\bullet\;\dom{\sigma_F} = \dom{\sigma_F'} = F\\
\bullet\;\forall x \in F.\;\forall m,n,l,l'.\\
\quad\LRTnewnew{\noncetypelab{l}{1}{m}}{\noncetypelab{l'}{1}{n}} \Rightarrow \sigma_F(x) = m\;\wedge\;\sigma_F'(x) = n
\end{array}\right.\]

\separ

$\stepII_\Gamma(c)$ is recursively defined by, for all $M$, $N$, $M'$, $N'$:
\begin{itemize}[leftmargin=0.5cm, nosep]
\item $\stepII_\Gamma(\{\PAIR{M}{N} \eqC \PAIR{M'}{N'}\} \cup c') :=$\par
$\stepII_\Gamma(\{M\eqC M', N\eqC N'\} \cup c')$

\item For all $k\in\K$, if $\exists T.\Gamma(k) = \skey{\L}{T}$:
\[\renewcommand{\arraystretch}{1.1}\begin{array}{l}
\bullet\;\stepII_\Gamma(\{\ENC{M}{k} \eqC \ENC{M'}{k}\} \cup c, c') := \\
\quad\stepII_\Gamma(\{M\eqC M'\} \cup c')\\
\bullet\;\stepII_\Gamma(\{\AENC{M}{\PUBK{k}} \eqC \AENC{M'}{\PUBK{k}}\} \cup c, c') :=\\
\quad \stepII_\Gamma(\{M\eqC M'\} \cup c')\\
\bullet\;\stepII_\Gamma(\{\SIGN{M}{k} \eqC \SIGN{M'}{k}\} \cup c') := \\
\quad \stepII_\Gamma(\{M\eqC M'\} \cup c')\\
\end{array}\]

\item For all $k\in\K$, if $\exists T.\Gamma(k) = \skey{\S}{T}$:
\[\begin{array}{l}
\stepII_{\Gamma}(\{\SIGN{M}{k} \eqC \SIGN{M'}{k}\} \cup c') := \\
\quad\{\SIGN{M}{k} \eqC \SIGN{M'}{k}\}\cup\stepII_{\Gamma}(\{M\eqC M'\} \cup c')\\
\end{array}\]

\item For all other terms $M,N$:
\[\stepII_{\Gamma}(\{M\eqC N\}\cup c') := \{M\eqC N\} \cup \stepII_{\Gamma}(c').\]
\end{itemize}

\separ

$\stepIII_\Gamma(c):=$ check that for all $M \eqC N \in c$, $M$ and $N$ are both
\begin{itemize}[leftmargin=0.5cm]
\item a key $k\in\K$ such that $\exists T. \Gamma(k)=\skey{\L}{T}$;
\item nonces $m,n\in\N$ such that
\[\exists a\in\{1,\infty\}.\;
\Gamma(n) = \noncetypelab{\L}{\oneorinf}{n}\;\wedge\;\Gamma(m)=\noncetypelab{\L}{\oneorinf}{n},\]

\item or public keys, verification keys, or constants;
\item or $\ENC{M'}{k}$, $\ENC{N'}{k}$ such that $\exists T. \Gamma(k)=\skey{\S}{T}$;
\item or either $\HASH{M'}$, $\HASH{N'}$ or $\AENC{M'}{\PUBK{k}}$, $\AENC{N'}{\PUBK{k}}$, where $\exists T. \Gamma(k)=\skey{\S}{T}$; 
such that $M'$ and $N'$ contain directly under pairs some $n$ with $\Gamma(n) = \S$ or $k$ such that $\exists T. \Gamma(k)=\skey{\S}{T}$;
\item or $\SIGN{M'}{k}$, $\SIGN{N'}{k}$ such that $\exists T. \Gamma(k)=\skey{\S}{T}$.
\end{itemize}

\separ

$\stepIV_\Gamma(c) :=$ If for all $M \eqC M'$ and $N \eqC N'\in c$
such that $M$, $N$ are unifiable with a most general unifier $\mu$,
and such that
\[\begin{array}{l}
\forall x\in\dom{\mu}.\exists l, l', m, p.\; 
(\Gamma(x)=\LRTnewnew{ \noncetypelab{l}{\infty}{m}}{ \noncetypelab{l'}{\infty}{p}}) \Rightarrow\\
\quad (x\mu\in\X\;\vee\;\exists i.\; x\mu=m_i)
\end{array}\]
we have
\[M'\alpha\theta = N'\alpha\theta\]

where
\[\begin{array}{l}
\forall x\in\dom{\mu}.\forall l, l', m, p, i.\\ 
\quad(\Gamma(x)=\LRTnewnew{ \noncetypelab{l}{\infty}{m}}{ \noncetypelab{l'}{\infty}{p}} \;\wedge\;\mu(x)=m_i)\Rightarrow\theta(x)=p_i
\end{array}\]

and $\alpha$ is the restriction of $\mu$ to
$\{x\in\dom{\mu}\;|\; \Gamma(x)=\L \;\wedge\; \mu(x)\in\N\}$;

and if the symmetric condition for the case where $M'$, $N'$ are unifiable holds as well,
then return \texttt{true}.

\separ

$\checkconststar(C):=$ for all $(c,\Gamma)\in C$, 
 let $c_1 := \stepII_\Gamma(\stepI_\Gamma(c))$ and check that $\stepIII_\Gamma(c_1) = \mathtt{true}$ and $\stepIV_\Gamma(c_1) = \mathtt{true}$.
 
\end{minipage}\end{framed}
\caption{Procedure for checking consistency.}
\label{fig:checkconststar}
\end{figure}

\josephbis{
\begin{example}
\label{ex:voting-consistency}
Continuing Example~\ref{ex:voting-protocol}, when typechecked with appropriate key types, the simplified model of Helios
yields constraint sets containing notably the following two constraints.
\begin{align*}
\{ \; \AENC{\PAIR{0}{r_a}}{\PUBK{k_s}} &\eqC \AENC{\PAIR{1}{r_a}}{\PUBK{k_s}},\\
     \AENC{\PAIR{1}{r_b}}{\PUBK{k_s}} &\eqC \AENC{\PAIR{0}{r_b}}{\PUBK{k_s}} \; \}
\end{align*}

For simplicity, consider the set $c$ containing only these two constraints,
together with a typing environment $\Gamma$ where $r_a$ and $r_b$ are respectively given types
$\noncetypelab{\S}{1}{r_a}$ and $\noncetypelab{\S}{1}{r_b}$, and $k_s$ is given type $\skey{\S}{T}$ for some $T$.

The procedure $\checkconststar(\{(c,\Gamma)\})$ can detect that the
constraint $c$ is consistent and returns $\mathtt{true}$.
Indeed, as $c$ does not contain variables, $\stepI_{\Gamma}(c)$ simply returns $(c,\Gamma)$.
Since $c$ only contains messages encrypted with secret keys, $\stepII_{\Gamma}(c)$ also leaves $c$ unmodified.
$\stepIII_{\Gamma}(c)$ then returns $\mathtt{true}$,
since the messages appearing in $c$ are messages asymmetrically encrypted with secret keys, which contain a secret nonce
($r_a$ or $r_b$) directly under pairs.
Finally $\stepIV_{\Gamma}(c)$ trivially returns $\mathtt{true}$, as the messages
$\AENC{\PAIR{0}{r_a}}{\PUBK{k_s}}$ and $\AENC{\PAIR{1}{r_b}}{\PUBK{k_s}}$ cannot be unified,
as well as the messages
$\AENC{\PAIR{1}{r_a}}{\PUBK{k_s}}$ and $\AENC{\PAIR{0}{r_b}}{\PUBK{k_s}}$.

Consider now the following set $c'$, where encryption has not been randomised:
\begin{align*}
c'=\{ \; \AENC{0}{\PUBK{k_s}} &\eqC \AENC{1}{\PUBK{k_s}},\\
     \AENC{1}{\PUBK{k_s}} &\eqC \AENC{0}{\PUBK{k_s}} \; \}
\end{align*}

The procedure $\checkconststar(\{(c',\Gamma)\})$ returns $\mathtt{false}$.
Indeed, contrary to the case of $c$, $\stepIII_{\Gamma}(c')$ fails,
as the encrypted message do not contain a secret nonce.
Actually, the corresponding frames are indeed not statically
equivalent since the adversary can reconstruct the encryption of $0$
and $1$ with the key $\PUBK{k_s}$ (in his initial knowledge), and
check for equality.
\end{example}
}


For constraint sets without infinite nonce types, $\checkconststar$ entails consistency.
\begin{theorem}
\label{thm:soundness-proc-consistency}
Let $C$ be a set of constraints such that
\[\forall (c, \Gamma)\in C.\; \forall l, l', m, p.\; 
\Gamma(x)
\neq\LRTnewnew{\noncetypelab{l}{\infty}{m}}{\noncetypelab{l'}{\infty}{p}}.\]
If $\checkconststar(C) = \mathtt{true}$,
then $C$ is consistent.
\end{theorem}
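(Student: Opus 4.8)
The plan is to prove consistency directly from its definition. Fixing an arbitrary pair $(c,\Gamma)\in C$ and arbitrary substitutions $\sigma,\sigma'$ well-typed in $\Gamma$, I would show that the two frames $\NEWN{\EG}.(\phiEnew{\Gamma}\cup\phiL{c}\sigma)$ and $\NEWN{\EG}.(\phiEnew{\Gamma}\cup\phiR{c}\sigma')$ are statically equivalent. The argument follows the four steps of $\checkconststar$, each justified by a lemma asserting that it preserves, or faithfully captures, static equivalence of the associated frames.

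For $\stepI$, a variable $x$ with type $\LRTnewnew{\noncetypelab{l}{1}{m}}{\noncetypelab{l'}{1}{n}}$ has a singleton refinement, so any well-typed $\sigma,\sigma'$ must satisfy $\sigma(x)=m$ and $\sigma'(x)=n$ (these types are populated by a single name). Hence the substitution $\sigma_F,\sigma_F'$ yields exactly the same instantiated frames, and $c$ is consistent in $\Gamma$ iff $\stepI_\Gamma(c)$ is; by the hypothesis excluding infinite refinement types, every remaining variable then carries an ordinary type ($\L$, $\S$, $\H$, pairs, encryptions, \dots). Next, $\stepII$ repeatedly opens pairs, low-key encryptions and signatures, and the payloads of secret-key signatures. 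The key lemma is that each such rewriting preserves static equivalence, because the attacker can always project pairs and already possesses, in $\phiEnew{\Gamma}$, the low keys and all public and verification keys: giving him $\PAIR{M}{N}$ has the same distinguishing power as giving him $M$ and $N$; giving him $\ENC{M}{k}$ for a low key $k$ is equivalent to giving him $M$; and from $\SIGN{M}{k}$ he recovers $M$ via $\VK{k}$, so we retain the signature (whose equality he can test) and additionally expose $M$. Secret-key encryptions and hashes are left untouched. Thus consistency reduces to that of $c_1 := \stepII_\Gamma(\stepI_\Gamma(c))$, whose entries are now atomic opaque components: low nonces and keys, public and verification keys, constants, secret-key encryptions, hashes, and secret-key signatures.

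It then remains to show that, when $\stepIII_\Gamma(c_1)$ and $\stepIV_\Gamma(c_1)$ both return \texttt{true}, the two atomic frames $\phi,\phi'$ are statically equivalent for every well-typed $\sigma,\sigma'$. Following the Abadi--Rogaway style, I would normalise arbitrary attacker recipes $R,S$ relative to the frame by pushing destructors inward until they either fail or reach an atomic entry, so that a test $R\phi\eqEv S\phi$ reduces to a boolean combination of destructor-applicability checks and equalities between atomic components. Step $\stepIII$ pairs left and right terms of the same shape, and its side-condition that hashes and public-key encryptions contain a secret nonce directly under pairs ensures the attacker cannot reconstruct and re-test them; hence all applicability checks yield the same result on both frames. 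Step $\stepIV$ guarantees the equality pattern matches: whenever two left components are unifiable, hence made equal by some instantiation of the remaining low variables captured by $\alpha$, the corresponding right components are forced equal, and conversely. Since with no infinite nonce types the correction substitution $\theta$ is trivial, this is precisely the statement that every equality holding on the left holds on the right and vice versa, yielding $R\phi\eqEv S\phi \iff R\phi'\eqEv S\phi'$.

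The main obstacle I anticipate is the recipe-normalisation argument underpinning the last step. One must show rigorously that arbitrary nested destructor applications over a frame of opaque components reduce to tests on the atomic entries, and -- crucially -- that the unification-based check of $\stepIV$ soundly quantifies over all instantiations of the low-typed variables. Concretely, one has to match a given $\sigma$ that makes two left terms equal with a most general unifier $\mu$, and argue that its relevant restriction $\alpha$, to low variables mapped to nonces, already suffices to conclude equality of the right terms under the paired $\sigma'$. Getting this correspondence between concrete instantiations and most general unifiers exactly right, while tracking which variables are attacker-controlled versus fixed by typing, is the delicate technical core.
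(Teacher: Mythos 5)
Your proposal is correct and follows essentially the same route as the paper's proof: reduce consistency of $c$ to that of $\stepI$ (using that singleton refinement types force $\sigma(x)=m$, $\sigma'(x)=n$ for well-typed substitutions), then to that of $\stepII$ via attacker recipes in both directions, then establish static equivalence of the resulting opaque frames by eliminating destructors from arbitrary recipes (the shape conditions of $\stepIII$ ensuring entries cannot be reconstructed or opened) and by matching any concrete instantiation making two left entries equal with a most general unifier whose restriction $\alpha$ to low variables mapped to nonces transfers to the right side, exactly as $\stepIV$ checks. The delicate point you flag is resolved in the paper precisely as you anticipate: a low-typed variable whose left value is a nonce must take the same nonce on the right (low nonces are equal on both sides of the typing judgement), so $\alpha\sigma'=\sigma'$ and $M'\alpha = N'\alpha$ yields $M'\sigma' = N'\sigma'$.
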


\josephbis{%
We prove this theorem by showing that, for each of the first two steps of the procedure,
if $\mathsf{step}i_{\Gamma}(c)$ is consistent in $\Gamma$, then $c$ is consistent in $\Gamma$.
It then suffices to check the consistency of the constraint $\stepII_\Gamma(\stepI_\Gamma(c))$ in $\Gamma$.
Provided that $\stepIII_\Gamma$ holds, we show that this constraint is saturated in the sense that
any message obtained by the attacker by decomposing terms in the constraint already occurs in the constraint;
and the constraint only contains messages which cannot be reconstructed by the attacker from the rest of the constraint.
Using this property, we finally prove that the simple unification tests performed in $\stepIV$ are
sufficient to ensure static equivalence of each side of the constraint for any well-typed instantiation of the variables.
}

As a direct consequence of Theorems~\ref{thm:typing-sound}
and~\ref{thm:soundness-proc-consistency}, we now have a procedure to
prove trace equivalence of processes without replication.

For proving trace equivalence of processes with replication, we need
to check consistency of an infinite family of constraint sets, as
prescribed by Theorem~\ref{thm:typing-sound-replicated}. 
As mentioned earlier, not only the number of constraints is unbounded,
but the size of the type of some
(replicated) variables is also unbounded (\ie of the form
$\bigvee_{j=1}^{n} \LRTnewnew{\noncetypelab{l}{1}{m_j}}{\noncetypelab{l'}{1}{p_j}}$).
We use here two ingredients: we first show that it is sufficient to
apply our procedure to two constraints only.
Second, we show that our procedure applied to variables with replicated types,
\ie nonce types of the form $\noncetypelab{l}{\infty}{n}$ implies consistency
of the corresponding constraints with types of unbounded size.

\subsection{Two constraints suffice}
Consistency of a constraint set $C$ does not guarantee consistency of
$\UCn$. For example, consider
\[C = \{(\{\HASH{m} \eqC\HASH{p}\}, [m:\noncetypelab{\S}{\infty}{m}, p:\noncetypelab{\S}{1}{p}])\}\]
which can be obtained when typing
\[\begin{array}{l}
\NEWnew{m}{\noncetypelab{\S}{\infty}{m}}.\;\NEWnew{p}{\noncetypelab{\S}{1}{p}}.\;\OUT{\HASH{m}} \eqC\\
\NEWnew{m}{\noncetypelab{\S}{\infty}{m}}.\;\NEWnew{p}{\noncetypelab{\S}{1}{p}}.\;\OUT{\HASH{p}}.
\end{array}\]
$C$ is consistent: since $m$, $p$ are secret, the attacker cannot distinguish between their hashes.
However $\UCn$ contains (together with some environment):
\[\{\HASH{m_1} \eqC\HASH{p}, \HASH{m_2} \eqC\HASH{p}, \dots ,\HASH{m_n} \eqC\HASH{p}\}\]
which is not, since the attacker can notice that the value on the right is always the same, while the value on the left is not.

Note however that the
inconsistency of $\UCn$ would have been discovered when checking
the consistency of two copies of the constraint set only.
Indeed, $\instConst{C}{1}{n}\UnionCart \instConst{C}{2}{n}$ contains (together with some environment):
\[\{\HASH{m_1} \eqC\HASH{p}, \HASH{m_2} \eqC\HASH{p}\}\]
which is already inconsistent, for the same reason.

Actually, checking consistency (with our procedure) of two constraints
$\instCst{C}{1}{n}$ and $\instCst{C}{2}{n}$ entails consistency of $\UCn$.
Note that this does not mean that consistency of $\instCst{C}{1}{n}$
and $\instCst{C}{2}{n}$ implies consistency of $\UCn$. Instead, our
procedure ensures a stronger property, for which two constraints suffice.
\begin{theorem}
\label{thm:consistency-two-to-n}
Let $C$ and $C'$ be two constraint sets, which do not contain any common variables.
For all $n\in\mathbb{N}$, 
\[\begin{array}{l}
\checkconststar(\instCst{C}{1}{n} \UnionCart \instCst{C}{2}{n} \UnionCart \instCst{C'}{1}{n}) = \mathtt{true}\;\Rightarrow\\
\quad\checkconststar((\UnionCart_{1\leq i \leq n}\instCst{C}{i}{n})\UnionCart\instCst{C'}{1}{n}) = \mathtt{true}.
\end{array}\]
\end{theorem}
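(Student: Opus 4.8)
The plan is to argue by contraposition and to exploit the invariance of the $n$-fold product union under permutations of session indices. First I would record how $\checkconststar$ inspects a constraint set: it proceeds elementwise, and on a single element $(c,\Gamma)$ its verdict is a conjunction of two \emph{local} checks, since $\stepIII_\Gamma$ tests each constraint of $\stepII_\Gamma(\stepI_\Gamma(c))$ individually and $\stepIV_\Gamma$ tests each \emph{pair} of such constraints. Hence $\checkconststar$ returns $\mathtt{false}$ on a product union exactly when, for some element, either one opened constraint fails $\stepIII$ or one pair of opened constraints fails $\stepIV$. The crucial structural fact is that a single constraint originates from exactly one factor of the product union (one copy $\instCst{C}{i}{n}$, or the $C'$-factor), and a pair from at most two factors; moreover $\stepI$ and $\stepII$ rewrite each constraint using only its own term structure together with the branch-independent types of the shared keys, so their action on a constraint depends only on the factor it comes from.

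Next I would set up the renaming machinery. For a permutation $\rho$ of $\{1,\dots,n\}$ I let $\rho$ act on terms, constraints and environments by sending each session-indexed nonce $m_i$ to $m_{\rho(i)}$ and each session-indexed variable $x_i$ to $x_{\rho(i)}$, while fixing keys, constants, finite nonces, and everything belonging to $C'$ (legitimate because $C$ and $C'$ share no variables and $C'$ binds no infinite nonces). I would then prove three lemmas: (a) $\stepI$ and $\stepII$ commute with $\rho$, i.e. $\stepII_{\rho\Gamma}(\stepI_{\rho\Gamma}(\rho c)) = \rho\,\stepII_\Gamma(\stepI_\Gamma(c))$, since substitution of finite-refinement variables and structural opening depend only on term shape and on key types, all preserved by $\rho$; (b) the predicate checked by $\stepIII_\Gamma$ is $\rho$-invariant, as it depends only on term structure and on security labels, and $m_i$ and $m_{\rho(i)}$ carry the same label; (c) the firing condition (unifiability) and the verdict $M'\alpha\theta = N'\alpha\theta$ of $\stepIV_\Gamma$ are $\rho$-invariant, because a bijective renaming commutes with most general unifiers and with substitution application.

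With these in hand, suppose $\checkconststar(D_n)=\mathtt{false}$, where $D_n := (\UnionCart_{1\le i\le n}\instCst{C}{i}{n})\UnionCart\instCst{C'}{1}{n}$. Then some element of $D_n$ contains either one opened constraint failing $\stepIII$, coming from a single copy $i$ of $C$ (or from the $C'$-factor), or a pair failing $\stepIV$, coming from at most two copies $i,j$ (one possibly the $C'$-factor). Choosing a permutation $\rho$ with $\rho(i)=1$ and $\rho(j)=2$ and applying lemmas (a)–(c), the $\rho$-image of the witness fails the same local check and now lives in the copies $1,2$ of $C$ together with the fixed $C'$-factor. I would then assemble an element of $D_2 := \instCst{C}{1}{n}\UnionCart\instCst{C}{2}{n}\UnionCart\instCst{C'}{1}{n}$ realizing exactly these constraints: the involved factors use the $\rho$-renamed branches of copies $i,j$, and the remaining factor uses any compatible branch — compatibility being automatic, since distinct factors share only keys and finite nonces, whose types are branch-independent. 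The witness is then present in this element of $D_2$ and fails the corresponding check, so $\checkconststar(D_2)=\mathtt{false}$, which is the contrapositive of the claim.

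The step I expect to be the main obstacle is controlling the session indices that enter through refinement types rather than through the process copy itself. Because the expanded type of a replicated input variable $x_i$ is $\bigvee_{j=1}^n \LRTnewnew{\noncetypelab{l}{1}{m_j}}{\noncetypelab{l'}{1}{p_j}}$, a branch may replace $x_i$ by a nonce of a session $j$ distinct from both $i$ and the partner copy, so a failing $\stepIV$ pair can a priori mention nonces from more than two sessions. The heart of the argument is to show this does not break the two-copy reduction: the branch-sessions behave like freely renamable indices, so $\rho$ can be extended to act on them bijectively, and $D_2$ — whose input variables still range over all $n$ session branches — can reinstantiate exactly the renamed branches. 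I would make this precise by verifying that $\stepIII$ and $\stepIV$ depend on a branch-substituted nonce only through its security label and through the equality pattern of indices, both of which a bijection preserves; this is what ultimately confines any detectable inconsistency to two copies.
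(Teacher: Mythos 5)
Your proposal is correct and takes essentially the same route as the paper's proof: the paper likewise reduces every check on $(\UnionCart_{1\leq i \leq n}\instCst{C}{i}{n})\UnionCart\instCst{C'}{1}{n}$ to the two-copy set via index-exchanging renamings $\delta^1_i\delta^2_j$ (your permutation $\rho$ with $\rho(i)=1$, $\rho(j)=2$), using exactly your lemmas that $\stepI$ and $\stepII$ commute with such bijections, that $\stepIII$ is label- and shape-invariant, and that unifiability and the $\stepIV$ verdict commute with bijective renaming, with the same-copy and $C'$-factor cases padded by an arbitrary compatible element just as you describe. Your anticipated obstacle — a branch instantiating $x_i$ with a nonce $m_j$ from a third session — is resolved in the paper precisely as you propose, by observing that the expanded union type $\bigvee_{j=1}^{n} \LRTnewnew{\noncetypelab{l}{1}{m_j}}{\noncetypelab{l'}{1}{p_j}}$ is closed under permutation of indices, so $\instG{\Gamma'_i}{i}{n}\delta^1_i\delta^2_j = \instG{\Gamma'_i}{1}{n}$ and the renamed branch is again a legitimate branch.
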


\josephbis{%
It is rather easy to show that if
\[\checkconststar(\instCst{C}{1}{n} \UnionCart \instCst{C}{2}{n} \UnionCart \instCst{C'}{1}{n}) = \mathtt{true},\]
then the first three steps of the procedure $\checkconststar$ can be
successfully applied to each element of
$(\UnionCart_{1\leq i \leq n}\instCst{C}{i}{n})\UnionCart\instCst{C'}{1}{n}$.
However the case of the fourth step is more intricate.
When applying the procedure $\checkconststar$ to an element of
the constraint set
$(\UnionCart_{1\leq i \leq n}\instCst{C}{i}{n})\UnionCart\instCst{C'}{1}{n}$,
if $\stepIV$ fails,
then the constraint contains an inconsistency, \ie 
elements $M\eqC M'$ and $N\eqC N'$ for which the unification condition from $\stepIV$ does not hold.
Intuitively, the property holds by contraposition thanks to the fact that
a similar inconsistency, up to reindexing the nonces and variables, can then already be found
when considering only the first two constraint sets, \ie in
$\instCst{C}{1}{n} \UnionCart \instCst{C}{2}{n} \UnionCart \instCst{C'}{1}{n}$.
The actual proof requires a careful examination of the structure of the constraint set  
$(\UnionCart_{1\leq i \leq n}\instCst{C}{i}{n})\UnionCart\instCst{C'}{1}{n}$,
to establish this reindexing.
}

\subsection{Reducing the size of types}
The
procedure $\checkconststar$ applied to replicated types implies consistency of
corresponding constraints with unbounded types.
\begin{theorem}
\label{lem:consistency-stars-sound}
Let $C$ be a constraint set.
Then for all $i$,
\[\begin{array}{l}
\checkconststar(\instCstr{C}{i}) = \mathtt{true} \;\Rightarrow\;\\
\quad\forall n\geq 1.\checkconststar(\instCst{C}{i}{n})= \mathtt{true}
\end{array}\]
\end{theorem}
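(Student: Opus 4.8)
The plan is to prove the implication elementwise. Since $\checkconststar$ succeeds on a constraint set exactly when it succeeds on each of its elements, I fix $n\geq 1$ and an arbitrary $(c',\Gamma')\in\instCst{C}{i}{n}$ and show that the procedure succeeds on it. By the definition of $\instCst{C}{i}{n}$, this element arises from some $(c,\Gamma)\in C$ with $c'=\instConst{c}{i}{\Gamma}$ and $\Gamma'\in\branch{\instG{\Gamma}{i}{n}}$. The matching symbolic element is $(c',\instGr{\Gamma}{i})\in\instCstr{C}{i}$: it carries the \emph{same} constraint $c'$ but the symbolic environment $\instGr{\Gamma}{i}$, and by hypothesis $\checkconststar$ returns $\mathtt{true}$ on it. The two environments differ only in that every refinement type $\LRTnewnew{\noncetypelab{l}{\infty}{m}}{\noncetypelab{l'}{\infty}{p}}$ that $\instGr{\Gamma}{i}$ assigns to a variable $x$ is replaced in $\Gamma'$, after expansion and branch selection, by a single disjunct $\LRTnewnew{\noncetypelab{l}{1}{m_j}}{\noncetypelab{l'}{1}{p_j}}$ for some index $j$. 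I record this branch choice as a pair of substitutions: $\rho$ acting on the left with $\rho(x)=m_j$, and $\rho'$ acting on the right with $\rho'(x)=p_j$.

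First I would show that $\stepI$, $\stepII$ and $\stepIII$ on $(c',\Gamma')$ are the images under $\rho/\rho'$ of the corresponding steps on $(c',\instGr{\Gamma}{i})$. For $\stepI$, the symbolic run leaves these variables untouched, because they carry infinite refinement types which $\stepI$ does not substitute, whereas the concrete run substitutes them through $\rho/\rho'$; on all genuinely finite refinement variables, present identically in both environments, the two runs coincide. For $\stepII$, opening commutes with $\rho/\rho'$ since replacing a variable by a nonce never creates a new pair, encryption or signature at the head, so the recursion unfolds in lockstep. For $\stepIII$, the shape conditions are inherited: each instantiated nonce $m_j$, $p_j$ carries the same confidentiality label as the refinement type it replaces, so whichever shape clause applied symbolically applies again after instantiation (in the secret case both sides still contain a secret directly under pairs).

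The crux is $\stepIV$. As $\Gamma'$ contains no variable of infinite refinement type, the concrete test reduces to plain unification, whereas the symbolic test uses the dedicated clause with its side condition on the most general unifier $\mu$ and the re-indexing substitution $\theta$. I would argue that every concrete test is subsumed by a symbolic one. Take concrete constraints $M\rho\eqC M'\rho'$ and $N\rho\eqC N'\rho'$, images of symbolic $M\eqC M'$ and $N\eqC N'$, whose left sides unify by some $\tau$. Because $\rho$ maps variables only to rigid nonces, $\rho\tau$ is itself a unifier of $M$ and $N$; hence these are symbolically unifiable, and their most general unifier $\mu$ factors the concrete unifier as $\rho\tau=\mu\lambda$. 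For each infinite refinement variable $x$ we then have $\mu(x)\lambda=m_j$, which forces $\mu(x)$ to be either a variable or the nonce $m_j$ itself, exactly the side condition required by the symbolic $\stepIV$. The symbolic test therefore fired and gave $M'\alpha\theta=N'\alpha\theta$, and it remains to specialize this equality to the required concrete one $M'\rho'\alpha_{\mathrm{con}}=N'\rho'\alpha_{\mathrm{con}}$: on an infinite refinement variable with $\mu(x)=m_j$ the re-indexing yields $\theta(x)=p_j=\rho'(x)$, while on the low nonce variables $\alpha$ and $\alpha_{\mathrm{con}}$ agree through the same factorization. The symmetric case, where the right sides unify, is handled identically.

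The step I expect to be the main obstacle is precisely this last matching in $\stepIV$: reconciling the symbolic re-indexing $\theta$, which acts on a variable through $\mu$, with the concrete branch substitution $\rho'$, in particular when a single branch identifies two distinct symbolic variables with the same nonce index, so that the concrete left sides unify without any generic symbolic counterpart. The factorization $\rho\tau=\mu\lambda$ of the concrete unifier through the symbolic most general unifier is the key device that both validates the side condition on $\mu$ and guarantees that $\theta$ and $\rho'$ assign the same right-hand nonce. Extra care is needed in the case where $\mu(x)$ is a variable rather than a nonce, where one must check that applying the outer substitution $\lambda$ after $\theta$ still reproduces the concrete assignment made by $\rho'$.
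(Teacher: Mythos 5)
Your proposal is correct and follows essentially the same route as the paper's proof: the branch choice is encoded as a pair of left/right substitutions (the paper's $\sigma_\Lleft,\sigma_\Rright$, your $\rho,\rho'$), steps 1--3 are shown to commute with this instantiation, and $\stepIV$ is handled by factoring the concrete unifier through the symbolic most general unifier, which both validates the side condition and lets the re-indexing substitution $\theta$ be reconciled with $\rho'$ (the paper writes this as $\sigma_\Rright=\theta'\tau'$ and $\alpha=\alpha'\alpha''$, exactly the two points you flag as needing care). Nothing in your sketch diverges from or falls short of the paper's argument.
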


\josephbis{%
Again here, it is rather easy to show that if $\checkconststar(\instCstr{C}{i}) = \mathtt{true}$
then the first three steps of the procedure $\checkconststar$ can successfully be applied to each
element of $\instCst{C}{i}{n}$.
The case of $\stepIV$ is more involved.
The property holds thanks to the condition on the most general unifier expressed in $\stepIV$.
Intuitively, this condition is written in such a way that if, when applying $\stepIV$ to an element of
$\instCst{C}{i}{n}$, two messages can be unified, then the corresponding messages (with replicated types)
in $\instCstr{C}{i}$ can be unified with a most general unifier $\mu$ satisfying the condition.
The proof uses this idea to show that if $\stepIV$ succeeds on all elements of
$\instCstr{C}{i}$, then it also succeeds on the elements of $\instCst{C}{i}{n}$.
}

\subsection{Checking the consistency of the infinite constraint}

Theorems~\ref{thm:typing-sound-replicated},
\ref{thm:consistency-two-to-n}, and~\ref{lem:consistency-stars-sound}
provide a sound procedure for checking trace equivalence of processes
with and without replication.
\begin{theorem}
Let $C$, and $C'$ be two constraint sets without any common variable.
\[\begin{array}{l}
\checkconststar(\instCstr{C}{1}\UnionCart\instCstr{C}{2}\UnionCart\instCstr{C'}{1}) = \mathtt{true}\;\Rightarrow\\
\forall n. \;\instCst{C'}{1}{n}\UnionCart(\UCn)\text{ is consistent.}
\end{array}\]
\end{theorem}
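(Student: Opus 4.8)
The plan is to chain the three preparatory results---Theorem~\ref{lem:consistency-stars-sound} (reducing the size of types), Theorem~\ref{thm:consistency-two-to-n} (two constraints suffice), and Theorem~\ref{thm:soundness-proc-consistency} (soundness of the procedure)---each of which discharges exactly one of the gaps separating the finite, two-copy check in the hypothesis from the unbounded, $n$-copy consistency in the conclusion. Concretely, starting from $\checkconststar(\instCstr{C}{1}\UnionCart\instCstr{C}{2}\UnionCart\instCstr{C'}{1}) = \mathtt{true}$, I would first pass from the $\infty$-typed constraint sets to their $n$-session expansions, then from two copies of $C$ to $n$ copies, and finally from a successful run of $\checkconststar$ to genuine consistency.

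For the first step I would invoke Theorem~\ref{lem:consistency-stars-sound}. The only gap is that this theorem is phrased for a single renamed set $\instCstr{C}{i}$, whereas here I apply it to the product $\instCstr{C}{1}\UnionCart\instCstr{C}{2}\UnionCart\instCstr{C'}{1}$. I would close this gap by observing that the operation taking $\instCstr{C}{i}$ to $\instCst{C}{i}{n}$ acts only on the \emph{type annotations} of the environments---replacing each $\infty$ refinement $\LRTnewnew{\noncetypelab{l}{\infty}{m}}{\noncetypelab{l'}{\infty}{p}}$ by $\bigvee_{j=1}^n \LRTnewnew{\noncetypelab{l}{1}{m_j}}{\noncetypelab{l'}{1}{p_j}}$ and then branching---while leaving the constraint terms untouched. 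This operation commutes with $\UnionCart$, so the expansion of the product is exactly $\instCst{C}{1}{n}\UnionCart\instCst{C}{2}{n}\UnionCart\instCst{C'}{1}{n}$; and since the argument underlying Theorem~\ref{lem:consistency-stars-sound} relies only on a constraint set being in renamed form (with $\infty$ types confined to variable annotations) and on the unifier condition of $\stepIV$, it applies verbatim to the product. This yields $\checkconststar(\instCst{C}{1}{n}\UnionCart\instCst{C}{2}{n}\UnionCart\instCst{C'}{1}{n}) = \mathtt{true}$ for every $n$.

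With this in hand, Theorem~\ref{thm:consistency-two-to-n} immediately gives $\checkconststar((\UCn)\UnionCart\instCst{C'}{1}{n}) = \mathtt{true}$ for every $n$, since its hypothesis is precisely the two-copy check just established. Finally, I would apply Theorem~\ref{thm:soundness-proc-consistency} to the set $\instCst{C'}{1}{n}\UnionCart(\UCn)$: after expansion every $\infty$ refinement type has been replaced by a single branch of finite refinement types, so the side condition of that theorem (no $\LRTnewnew{\noncetypelab{l}{\infty}{m}}{\noncetypelab{l'}{\infty}{p}}$ occurring in any environment) is met, and a successful $\checkconststar$ therefore entails consistency. Commutativity and associativity of $\UnionCart$ identify this set with the one in the conclusion.

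The main obstacle is the bookkeeping in the first step: establishing that expansion genuinely commutes with the product union, which requires checking that branching a union of expanded environments, $\branch{\instG{(\Gamma_1\cup\Gamma_2)}{i}{n}}$, coincides with combining the compatible branches of $\instG{\Gamma_1}{i}{n}$ and $\instG{\Gamma_2}{i}{n}$ taken separately. This is routine but delicate, since one must verify that per-variable branching interacts correctly with the compatibility requirement of $\UnionCart$, and that the session indices fixed by the renamings $\instCstr{C}{1}$, $\instCstr{C}{2}$, $\instCstr{C'}{1}$ (which select disjoint copies $m_1, m_2, \dots$ of each replicated nonce) line up with the indices produced by the $\stepIV$ unification test. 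Once this commutation is in place, the remaining implications are direct appeals to the cited theorems.
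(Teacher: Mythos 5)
Your proposal is correct and follows essentially the same route as the paper: the paper's proof chains exactly the same three results (soundness of $\checkconststar$, two-copies-suffice, and the reduction of type sizes), merely presented as successive reductions of the goal rather than forward from the hypothesis. The gap you flag in the first step is real, and the paper closes it just as you suggest, by proving a generalized version of Theorem~\ref{lem:consistency-stars-sound} for a single pair $(c,\Gamma)$ whose variables may carry both indices $1$ and $2$, and applying it element-wise to the product --- which is precisely the commutation-of-expansion-with-$\UnionCart$ and branching bookkeeping you describe.
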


All detailed proofs are available online~\cite{oursite}.


\section{Experimental results}
We have implemented a prototype type-checker \TypeEq~and applied it on various
examples briefly described below.

{\bf Symmetric key protocols.} For the sake of comparison, we consider 5 symmetric key
protocols taken from the benchmark of~\cite{SAT-equiv}, and described in~\cite{clark1997}: Denning-Sacco, Wide Mouth Frog,
Needham-Schroeder, Yahalom-Lowe, 
and Otway-Rees. All these
protocols aim at exchanging a key $k$. We prove strong secrecy of the
key, as defined in~\cite{Abadi2000}, i.e., $P(k_1)\approx_t P(k_2)$
where $k_1$ and $k_2$ are public names. Intuitively, an attacker
should not be able to tell which key is used even if he knows the two possible values in
advance. For some of the protocols, we truncated the last step, when
it consists in using the exchanged key for encryption, since our
framework currently covers only encryption with long-term (fixed)
keys.

{\bf Asymmetric key protocols.} In addition to the symmetric key
protocols, we consider the well-known Needham-Schroeder-Lowe (NSL)
protocol~\cite{lowe96breaking} and we again prove strong secrecy of the nonce
sent by the receiver (Bob).

{\bf Helios.} We model the Helios protocol for two honest voters and
infinitely many dishonest ones, as informally described in
Section~\ref{sec:overview}. The corresponding process includes a non
trivial else branch, used to express the weeding phase~\cite{Helios-CSF11}, where
dishonest ballots equal to some honest one are discarded. As
emphasised in Section~\ref{sec:overview}, Helios is secure only if
honest voters vote at most once. Therefore the protocol includes non
replicated processes (for voters) as well as a replicated process (to
handle dishonest ballots). 

All our experiments have been run on a single Intel Xeon E5-2687Wv3 3.10GHz core, with 378GB of RAM (shared with the 19 other cores).
All corresponding files can be found online at~\cite{oursite}.
\subsection{Bounded number of sessions}

We first compare our tool with  tools designed for a bounded number
of sessions: SPEC~\cite{SPEC},
APTE (and its APTE-POR variant)~\cite{APTE,APTE-por}, Akiss~\cite{akiss}, or
SAT-Equiv~\cite{SAT-equiv}.
The protocol models may  slightly differ due to the
subtleties of each tool. For example, several of these tools require
\emph{simple} processes where each sub-process emits on a distinct
channel. We do not need such an assumption. In addition, SAT-Equiv only
covers symmetric encryption and therefore could not be applied to
Helios or NSL.
SAT-Equiv further assumes protocols to be well-typed,
which sometimes requires to tag protocols. Since we consider only
untagged versions (following the original description of each
protocol), SAT-Equiv failed to prove the Otway-Rees protocol.
Moreover, Helios involves non-trivial else branches, which are
only supported by APTE.

The number of sessions we consider denotes the number of processes in parallel in each scenario.
For symmetric key protocols, we start with a
simple scenario with only two honest participants A, B and a honest server~S (3 sessions).
We consider increasingly more complex scenarios (6, 7, 10, 12, and 14 sessions) featuring a dishonest agent C.
In the complete scenario (14 sessions) each agent among A, B (and C) runs the protocol once as the initiator,
and once as the responder with each other agent (A, B, C).
In the case of NSL, we similarly consider a scenario with two honest agents A, B running the protocol once (2 sessions),
and two scenarios with an additional dishonest agent C, up to the complete scenario (8 sessions) where each agent runs NSL
once as initiator, once as responder, with each agent.
For Helios, we consider 2 honest voters, and one dishonest voter only, as well as a ballot box.
The corresponding results are reported in Figure~\ref{fig:exp-bounded}.
We write TO for Time Out (12 hours), MO for Memory Out (more than 64 GB of RAM),
SO for Stack Overflow, BUG in the case of APTE, when the proof failed due to bugs in the tool, and x when the
tool could not handle the protocol for the reasons discussed
previously.
In all cases, our tool is almost instantaneous and outperforms by orders of magnitude the competitors.

\begin{figure}
\begin{center}
\resizebox{8.3cm}{!}{
\begin{tabular}{|c|c|c|c|c|c|c|c|}%
\hline
\multicolumn{2}{|c|}{Protocols (\# sessions)} &\!\!Akiss\!\!& APTE &\!\!APTE-POR\!\!& Spec & Sat-Eq & \TypeEq\\
\hline
                               & 3  & 0.08s & 0.32s & 0.02s & 9s   & 0.09s & 0.002s\\
                               & 6  & 3.9s  & TO    & 1.6s  & 191m & 0.3s  & 0.003s\\ 
Denning -                      & 7  & 29s   &       & 3.6s  & TO   & 0.8s  & 0.004s\\ 
Sacco                          & 10 & SO    &       & 12m   &      & 1.8s  & 0.004s\\
                               & 12 &       &       & TO    &      & 3.4s  & 0.005s\\
                               & 14 &       &       &       &      & 5s    & 0.006s\\
\hline
                & 3  & 0.03s & 0.05s & 0.009s & 8s  & 0.06s & 0.002s \\
                & 6  & 0.4s  & 28m   & 0.4s   & 52m & 0.2s  & 0.003s \\ 
 Wide Mouth     & 7  & 1.4s  & TO    & 1.9s   & MO  & 2.3s  & 0.003s \\ 
 Frog           & 10 & 46s   &       & 5m31s  &     & 5s    & 0.004s \\
                & 12 & 71m   &       & TO     &     & 1m    & 0.005s \\
                & 14 & TO    &       &        &     & 4m20s & 0.006s \\

\hline
                  & 3  & 0.1s & 0.4s & 0.02s & 52s & 0.5s  & 0.003s \\
                  & 6  & 20s  & TO   & 4s    & MO  & 4s    & 0.003s \\ 
Needham -         & 7  & 2m   &      & 8m    &     & 36s   & 0.003s \\ 
Schroeder         & 10 & SO   &      & TO    &     & 1m50s & 0.005s \\
                  & 12 &      &      &       &     & 4m47s & 0.005s \\
                  & 14 &      &      &       &     & 11m   & 0.007s \\
\hline
              & 3  & 0.16s & 3.6s & 0.03s & 6s   & 1.4s & 0.003s \\
              & 6  & 33s   & TO   & 44s   & 132m & 1m   & 0.004s \\ 
Yahalom -     & 7  & 11m   &      & 36m   & MO   & 17m  & 0.004s \\ 
Lowe          & 10 & SO    &      & TO    &      & 63m  & 0.009s \\
              & 12 &       &      &       &      & TO   & 0.04s  \\
              & 14 &       &      &       &      &      & 0.05s  \\
\hline
\multirow{6}{*}{Otway-Rees}
              & 3  & 2m12s & BUG & 1.7s & 27m & x & 0.004s \\
              & 6  & TO    &     & SO   & MO  &   & 0.011s \\ 
              & 7  &       &     &      &     &   & 0.012s \\ 
              & 10 &       &     &      &     &   & 0.02s  \\
              & 12 &       &     &      &     &   & 0.03s  \\
              & 14 &       &     &      &     &   & 0.1s   \\
\hline
Needham-          & 2 & 0.1s & 4s  & 0.06s & 31s & x & 0.003s \\
Schroeder-        & 4 & 2m   & BUG & BUG   & MO  &   & 0.003s \\ 
Lowe              & 8 & TO   &     &       &     &   & 0.007s \\ 
\hline
\multirow{1}{*}{Helios} & 3 & x & TO & BUG & x & x & 0.002s \\
\hline
\end{tabular}}
\end{center}
\caption{Experimental results for the  bounded case}
\label{fig:exp-bounded}
\end{figure}

\subsection{Unbounded numbers of sessions}
We then compare our type-checker with ProVerif~\cite{proverif-equiv},
for an unbounded number of sessions, on three examples: Helios,
Denning-Sacco, and NSL. As expected, ProVerif cannot prove Helios
secure since it cannot express that voters vote only once. 
This may sound surprising, since proofs of Helios in ProVerif already
exist (e.g. \cite{Helios-CSF11,Myrto3Voters}). Interestingly, these
models actually implicitly assume a reliable channel between honest
voters and the voting server: whenever a voter votes, she first sends
her vote to the voting server on a secure channel, before letting the
attacker see it.
This model prevents an attacker from reading and blocking a message,
while this
can be easily done in practice (by breaking the connection). 
We also failed to prove (automatically) Helios in Tamarin~\cite{tamarin-equiv}.
The reason is that the weeding procedure makes Tamarin enter a loop where
it cannot detect that, as soon as a ballot is not weed, it has been forged by
the adversary.

For the sake of comparison, we run both tools (ProVerif and \TypeEq)
on a symmetric protocol (Denning-Sacco) and an  asymmetric protocol
(Needham-Schroeder-Lowe). The execution times are very similar.
The corresponding results are reported in Figure~\ref{fig:exp-unbounded}.
\begin{figure}
\begin{tabular}{|c|c|c|}
\hline
Protocols & ProVerif & \TypeEq\\
\hline
Helios                 & x     & 0.003s \\
Denning-Sacco          & 0.05s & 0.05s\\
Needham-Schroeder-Lowe & 0.08s & 0.09s\\
\hline
\end{tabular}
\caption{Experimental results for unbounded numbers of sessions}
\label{fig:exp-unbounded}
\end{figure}

\section{Conclusion}

We presented a novel type system for verifying trace equivalence in security protocols. It can be applied to various protocols, with support for else branches,  standard cryptographic primitives, as well as  a bounded and an unbounded number of sessions. 
We believe that our prototype implementation demonstrates that this approach is promising and opens the way to the development of an efficient technique for proving equivalence properties in even larger classes of protocols.

Several interesting problems remain to be studied. For example, a limitation of ProVerif is that it cannot properly handle global states. We plan to explore this case by enriching our types to express the fact that an event is ``consumed''. 
Also, for the moment, our type system only applies to protocols $P,Q$ that have the same structure. One advantage  of a type system is its modularity: it is relatively easy to add a few rules without redoing the whole proof. We plan to add rules to cover protocols with different structures (e.g. when branches are swapped). 
Another direction is the treatment of primitives with algebraic properties (e.g. Exclusive Or, or homomorphic encryption). It seems possible to extend the type system and discharge the difficulty to the consistency of the constraints, which seems easier to handle (since this captures the static case). 
Finally, our type system is sound w.r.t. equivalence in a symbolic model. An interesting question is whether it also entails computational indistinguishability. Again, we expect that an advantage of our type system is the possibility to discharge most of the difficulty to the constraints.

\clearpage

\bibliographystyle{ACM-Reference-Format}
\bibliography{references}

%
%
\ifthenelse{\boolean{techreport}}{%
\begin{appendices}
\onecolumn

\section{Typing rules and definitions}

We give on Figures~
\ref{fig-proof:processtypingrules} and \ref{fig-proof:wellformednesstyping} 
a complete version of our typing rules for processes, as well as the formal definition of the well-formedness judgement for typing environments.

\begin{figure}
\begin{framed}
\[
\infer[PZero]
  {\tewf{\Gamma}\\ \branch{\Gamma} = \{\Gamma\}}
  {\teqPnew{\Gamma}{\ZERO}{\ZERO}{(\emptyset,\Gamma)}}
\]
\[
\infer[POut]
  {\teqPnew{\Gamma}{P}{Q}{C} \\
   \teqTcnew{\Gamma}{M}{N}{\L}{c}}
  {\teqPnew{\Gamma}{\OUT{M}.P}{\OUT{N}.Q}{C \UnionAll c}}
\hspace{10pt}
\infer[PIn]
  {\teqPnew{\Gamma,x:\L}{P}{Q}{C}}
  {\teqPnew{\Gamma}{\IN{x}.P}{\IN{x}.Q}{C}}
\]
\[
\infer[PNew]
  {
   \teqPnew{\Gamma,{n:\noncetypelab{l}{\oneorinf}{n}}}{P}{Q}{C}}
  {\teqPnew{\Gamma}{\NEWnew{n}{\noncetypelab{l}{\oneorinf}{n}}.P}{\NEWnew{n}{\noncetypelab{l}{\oneorinf}{n}}.Q}{C}}
\]
\[
\infer[PPar]
  {\teqPnew{\Gamma}{P}{Q}{C} \\
   \teqPnew{\Gamma}{P'}{Q'}{C'}}
  {\teqPnew{\Gamma}{P \PAR P'}{Q \PAR Q'}{C \UnionCart C'}}
\hspace{10pt}
\infer[POr]
  {\teqPnew{\Gamma,x:T}{P}{Q}{C}\\
  \teqPnew{\Gamma,x:T'}{P}{Q}{C'}}
  {\teqPnew{\Gamma, x:T\orT T'}{P}{Q}{C \cup C'}}
\]
\[
\infer[PLet]
  {\tDestnew{\Gamma}{d(y)}{T}\\
   \teqPnew{\Gamma,x:T}{P}{Q}{C}\\
   \teqPnew{\Gamma}{P'}{Q'}{C'}}
  {\teqPnew{\Gamma}{\LET{x}{d(y)}{P}{P'}}{\LET{x}{d(y)}{Q}{Q'}}{C \cup C'}}
\]
\[
\infer[PLetLR]
   {\Gamma(y) = \LRTnewnew{\noncetypelab{l}{\oneorinf}{n}}{\noncetypelab{l'}{\oneorinf}{m}}\\
   \teqPnew{\Gamma}{P'}{Q'}{C'}}
  {\teqPnew{\Gamma}{\LET{x}{d(y)}{P}{P'}}{\LET{x}{d(y)}{Q}{Q'}}{C'}}
\]
\[
\infer[PIfL]
  {\teqPnew{\Gamma}{P}{Q}{C}\\
  \teqPnew{\Gamma}{P'}{Q'}{C'}\\
  \teqTcnew{\Gamma}{M}{N}{\L}{c}\\
  \teqTcnew{\Gamma}{M'}{N'}{\L}{c'}}
  {\teqPnew{\Gamma}{\ITE{M}{M'}{P}{P'}}{\ITE{N}{N'}{Q}{Q'}}{\left( C \cup C'\right) \UnionAll (c \cup c')}}
\]

\[
\infer[PIfLR]
  {\teqTcnew{\Gamma}{M_1}{N_1}{\LRTnewnew{\noncetypelab{l}{1}{m}}{\noncetypelab{l'}{1}{n}}}{\emptyset}\\
  \teqTcnew{\Gamma}{M_2}{N_2}{\LRTnewnew{\noncetypelab{l''}{1}{m'}}{\noncetypelab{l'''}{1}{n'}}}{\emptyset}\\\\
  b = (\noncetypelab{l}{1}{m} \overset{?}{=} \noncetypelab{l''}{1}{m'}) \\ 
  b' = (\noncetypelab{l'}{1}{n} \overset{?}{=} \noncetypelab{l'''}{1}{n'}) \\ 
  \teqPnew{\Gamma}{P_b}{Q_{b'}}{C}}
  {\teqPnew{\Gamma}{\ITE{M_1}{M_2}{P_\top}{P_\bot}}{\ITE{N_1}{N_2}{Q_\top}{Q_\bot}}{C}}
\]
\[
\infer[PIfS]
  {\teqPnew{\Gamma}{P'}{Q'}{C'}\\
  \teqTcnew{\Gamma}{M}{N}{\L}{c}\\
  \teqTcnew{\Gamma}{M'}{N'}{\S}{c'}}
  {\teqPnew{\Gamma}{\ITE{M}{M'}{P}{P'}}{\ITE{N}{N'}{Q}{Q'}}{C'}}
\]

\[
\infer[PIfLR*]
  {\teqTcnew{\Gamma}{M_1}{N_1}{\LRTnewnew{\noncetypelab{l}{\infty}{m}}{\noncetypelab{l'}{\infty}{n}}}{\emptyset}\\
  \teqTcnew{\Gamma}{M_2}{N_2}{\LRTnewnew{\noncetypelab{l}{\infty}{m}}{\noncetypelab{l'}{\infty}{n}}}{\emptyset}\\\\
  \teqPnew{\Gamma}{P}{Q}{C}\\
  \teqPnew{\Gamma}{P'}{Q'}{C'}}
  {\teqPnew{\Gamma}{\ITE{M_1}{M_2}{P}{P'}}{\ITE{N_1}{N_2}{Q}{Q'}}{C\cup C'}}
\]

\[
\infer[PIfP]
  {\teqPnew{\Gamma}{P}{Q}{C}\\
  \teqPnew{\Gamma}{P'}{Q'}{C'}\\\\
  \teqTcnew{\Gamma}{M}{N}{\L}{c}\\
  \teqTcnew{\Gamma}{t}{t}{\L}{c'} \\
  t \in \K \cup \N \cup \CST}
  {\teqPnew{\Gamma}{\ITE{M}{t}{P}{P'}}{\ITE{N}{t}{Q}{Q'}}{C \cup C'}}
\]

\[
\infer[PIfI]
  {\teqPnew{\Gamma}{P'}{Q'}{C'}\\
  \teqTcnew{\Gamma}{M}{N}{T*T'}{c}\\
  \teqTcnew{\Gamma}{M'}{N'}{\LRTnewnew{\noncetypelab{l}{a}{m}}{\noncetypelab{l'}{a}{n}}}{\emptyset}}
  {\teqPnew{\Gamma}{\ITE{M}{M'}{P}{P'}}{\ITE{N}{N'}{Q}{Q'}}{C'}}
\]

\[
\infer[PIfLR'*]
  {\teqTcnew{\Gamma}{M_1}{N_1}{\LRTnewnew{\noncetypelab{l}{a}{m}}{\noncetypelab{l'}{a}{n}}}{\emptyset}\\
   \teqTcnew{\Gamma}{M_2}{N_2}{\LRTnewnew{\noncetypelab{l''}{a'}{m'}}{\noncetypelab{l'''}{a'}{n'}}}{\emptyset}\\\\
  \noncetypelab{l}{a}{m}\neq \noncetypelab{l''}{a'}{m'}, \noncetypelab{l'}{a}{n}\neq \noncetypelab{l'''}{a'}{n'}\\
  \teqPnew{\Gamma}{P'}{Q'}{C}}
  {\teqPnew{\Gamma}{\ITE{M_1}{M_2}{P}{P'}}{\ITE{N_1}{N_2}{Q}{Q'}}{C}}
\]

\end{framed}
\caption{Rules for processes}
\label{fig-proof:processtypingrules}
\end{figure}

\begin{figure}
\begin{framed}
\[
  \infer[GNil]
  {}
  {\tewf{[]}}
\hspace{10pt}
  \infer[GNonce]
  {\tewf{\Gamma}\\ n\in\BN}
  {\tewf{\Gamma,n:\noncetypelab{l}{\oneorinf}{n}}}
\]
\[
  \infer[GVar]
  {\tewf{\Gamma}\\
  \keys{T} \subseteq \dom{\Gamma}}
  {\tewf{\Gamma,x:T}}
\]
\[
  \infer[GKey]
  {\tewf{\Gamma}\\
  \keys{T} \subseteq \dom{\Gamma}}
  {\tewf{\Gamma,k:\skeyB{l}{T}}}
\]
\end{framed}
\caption{Well-formedness of the typing environment}

\label{fig-proof:wellformednesstyping}
\end{figure}

In this section, we also provide additional definitions (or more precise versions of previous definitions)
regarding constraints, and especially their consistency,
that the proofs require.

\begin{definition}[Constraint]
A constraint is defined as a couple of messages, separated by the symbol $\eqC$:
\[u \eqC v\]
\end{definition}

We will consider sets of constraints, which we usually denote $c$.
We will also consider couples $(c, \Gamma)$ composed of such a set, and a typing environment $\Gamma$.
Finally we will denote sets of such tuples $C$, and call them \emph{constraint sets}.

\begin{definition}[Compatible environments]
We say that two typing environments $\Gamma$, $\Gamma'$ are compatible if they are equal on the intersection of their domains, \ie if
\[\forall x \in \dom{\Gamma} \cap \dom{\Gamma'}.\; \Gamma(x) = \Gamma'(x)\]
\end{definition}

\begin{definition}[Union of environments]
Let $\Gamma$, $\Gamma'$ be two compatible environments.
Their union $\Gamma \cup \Gamma'$ is defined by
\begin{itemize}
\item $\dom{\Gamma \cup \Gamma'} = \dom{\Gamma}\cup\dom{\Gamma'}$
\item $\forall x \in \dom{\Gamma}.\quad (\Gamma \cup \Gamma')(x) = \Gamma(x)$
\item $\forall x \in \dom{\Gamma'}.\quad (\Gamma \cup \Gamma')(x) = \Gamma'(x)$
\end{itemize}
Note that this function is well defined since $\Gamma$ and $\Gamma'$ are assumed to be compatible.
\end{definition}

\begin{definition}[Operations on constraint sets]
We define two operations on constraints.
\begin{itemize}
\item the product union of constraint sets:
\begin{align*}
  &C \UnionCart C' := \{  (c \cup c', \Gamma \cup \Gamma') \;|\;\\
   &\quad (c,\Gamma) \in C \,\wedge\, (c',\Gamma') \in C'
   \,\wedge\, \Gamma, \Gamma' \text{ are compatible} \}
\end{align*}

  \item the addition of a set of constraints $c'$ to all elements of a constraint set $C$:
\begin{align*}
C \UnionAll c' := &C \UnionCart \{(c',\emptyset)\} \\
 =& \left\{ (c \cup c',\Gamma) \;|\; (c,\Gamma) \in C \right\}
\end{align*}
\end{itemize}
\end{definition}

\begin{definition}
For any typing environment $\Gamma$, we denote by $\onlyvar{\Gamma}$ its restriction to variables,
by $\novar{\Gamma}$ its restriction to names and keys,
and by $\EG$ the set of the names it contains, \ie $\N\cap\dom{\Gamma}$.
\end{definition}

\begin{definition}[Well-typed substitutions]
Let $\Gamma$ 
be a typing environment, $\theta$, $\theta'$ two substitutions, and $c$ a set of constraints.
We say that $\theta$, $\theta'$ are well-typed in $\Gamma$, 
and write $\wtcDE{\theta}{\theta'}{\Gamma}{c}$, if they are ground and 
\begin{itemize}
\item $\dom{\theta} = \dom{\theta'} = \dom{\onlyvar{\Gamma}}$,
\item and
\[\forall x \in \dom{\onlyvar{\Gamma}}, \; \teqTc{\novar{\Gamma}}{\Delta}{\E}{\E'}{\theta(x)}{\theta'(x)}{\Gamma(x)}{c_x}\]
for some $c_x$ such that $c = \bigcup_{x\in\dom{\onlyvar{\Gamma}}} c_x$.
\end{itemize}
\end{definition}

\begin{definition}[$\L$ substitutions]
Let $\Gamma$ be an environment, $\phi$, $\phi'$ two substitutions and $c$ a set of constraints.
We say that $\phi$, $\phi'$ have type $\L$ in $\Gamma$ with constraint $c$,
and write $\teqTc{\Gamma}{}{}{}{\phi}{\phi'}{\L}{c}$ if
\begin{itemize}
\item $\dom{\phi} = \dom{\phi'}$;
\item for all $x\in\dom{\phi}$ there exists $c_x$ such that $\teqTc{\Gamma}{}{}{}{\phi(x)}{\phi'(x)}{\L}{c_x}$ 
and $c = \bigcup_{x\in\dom{\phi}} c_x$.
\end{itemize}
\end{definition}

\begin{definition}[Frames associated to a set of constraints]
If $c$ is a set of constraints, let $\phiL{c}$ and $\phiR{c}$ be the frames composed of the terms respectively on the left and on the right of the $\eqC$ symbol in the constraints of $c$ (in the same order).
\end{definition}

\begin{definition}[Instantiation of constraints]
If $c$ is a set of constraints, and $\sigma$, $\sigma'$ are two substitutions, let $\inst{c}{\sigma}{\sigma'}$ be the instantiation of $c$ by $\sigma$ on the left and $\sigma'$ on the right, \ie
\[\inst{c}{\sigma}{\sigma'} = \{M\sigma \eqC N\sigma' \;|\; M\eqC N \in c\}.\]

Similarly we write for a constraint set $C$
\[\inst{C}{\sigma}{\sigma'} = \{(\inst{c}{\sigma}{\sigma'}, \Gamma) \;|\; (c,\Gamma)\in C\}.\]
\end{definition}

\begin{definition}[Frames associated to environments]
If $\Gamma$ is a typing environment,
we denote $\phiEE$ the frame containing all the keys $k$ such that $\Gamma(k) = \skey{\L}{T}$ for
some $T$, all the public keys $\PUBK{k}$ and $\VK{k}$ for $k\in\dom{\Gamma}$, and all the nonces $n$ such that $\Gamma(n) = \noncetypelab{\L}{a}{n}$ (for $a\in\{\infty,1\}$).
\end{definition}

\begin{definition}[Branches of a type]
If $T$ is a type, we write $\branch{T}$ the set of all types $T'$ such that $T'$ is not a union type, and either
\begin{itemize}
\item $T = T'$; 
\item or there exist types $T_1$,\dots,$T_k$,$T'_1$,\dots,$T'_{k'}$ such that
\[T = T_1 \orT \dots \orT T_k \orT T \orT T'_1 \orT\dots\orT T'_{k'}\]
\end{itemize}
\end{definition}

\begin{definition}[Branches of an environment]
For a typing environment $\Gamma$, we write $\branch{\Gamma}$ the sets of all environments $\Gamma'$ such that
\begin{itemize}
\item $\dom{\Gamma'} = \dom{\Gamma}$
\item $\forall x \in \dom{\Gamma}.\quad \Gamma'(x) \in \branch{\Gamma(x)}$.
\end{itemize}
\end{definition}

\begin{definition}[Consistency]
We say that $c$ is consistent in a typing environment $\Gamma$, 
if for any subsets $c' \subseteq c$ and $\Gamma' \subseteq \Gamma$ such that $\novar{\Gamma'} = \novar{\Gamma}$ and $\var{c'} \subseteq \dom{\Gamma'}$,
for any ground substitutions $\sigma$, $\sigma'$ well-typed in $\Gamma'$, 
the frames $\NEWN{\EG}.(\phiEE \cup \phiL{c'}\sigma)$ and $\NEWN{\EG}.(\phiEE \cup \phiR{c'}\sigma')$ are statically equivalent.

We say that $(c,\Gamma)$ is consistent 
if $c$ is consistent in $\Gamma$. 

We say that a constraint set $C$ is consistent 
if each element $(c, \Gamma) \in C$ is consistent. 
\end{definition}


\section{Proofs}

In this section, we provide the detailed proofs to all of our theorems.

Unless specified otherwise, the environments $\Gamma$ considered in the lemmas are implicitly assumed to be well-formed.

\subsection{General results and soundness}

In this subsection, we prove soundness for non replicated processes, as well as several results
regarding the type system that this proof uses.

\begin{lemma}[Subtyping properties]
\label{lem-proof:subtyping}
The following properties of subtyping hold:
\begin{enumerate}
\item $\forall T.\quad \H \subtyp T \implies T = \H$
\item $\forall T.\quad \L \subtyp T \implies T=\L \;\vee\; T=\H$
\item $\forall T.\quad \S \subtyp T \implies T=\S \;\vee\; T=\H$
\item $\forall T_1, T_2, T_3.\quad T_1 * T_2 \subtyp T_3 \implies T_3 = \L \;\vee\; T_3 = \H \;\vee\; T_3 = \S \;\vee\; (\exists T_4, T_5.\quad T_3 = T_4*T_5)$ \ie $T_3$ is $\L$, $\H$, $\S$ or a pair type.
\item $\forall T, T_1, T_2.\quad T \subtyp T_1*T_2 \implies (\exists T_1', T_2'.\quad T = T_1'*T_2' \;\wedge\; T_1' \subtyp T_1 \;\wedge\;T_2' \subtyp T_2$)
\item $\forall T_1, T_2.\quad T_1 * T_2 \subtyp \L \implies T_1 \subtyp \L \;\wedge\; T_2 \subtyp \L$
\item $\forall T_1, T_2.\quad T_1 * T_2 \subtyp \S \implies T_1 \subtyp \S \;\vee\; T_2 \subtyp \S$
\item $\forall T_1, T_2, k.\quad T_1 \subtyp \encT{T_2}{k} \implies (\exists T_3 \subtyp T_2. \quad T_1 = \encT{T_3}{k})$
\item $\forall T_1, T_2, k.\quad T_1 \subtyp \aencT{T_2}{k} \implies (\exists T_3 \subtyp T_2. \quad T_1 = \aencT{T_3}{k})$
\item $\forall T_1, T_2, k.\quad \encT{T_1}{k} \subtyp T_2 \implies T_2 = \H \;\vee\; (\exists T_3. T_1 \subtyp T_3\;\wedge\; T_2 = \encT{T_3}{k})$
\item $\forall T_1, T_2, k.\quad \aencT{T_1}{k} \subtyp T_2 \implies 
T_2 = \H \;\vee\; (\exists T_3. T_1 \subtyp T_3\;\wedge\; T_2 = \aencT{T_3}{k})$
\item $\forall T, m,n,l,l'.\quad T \subtyp \LRTn{l}{a}{m}{l'}{a}{n} \implies T = \LRTn{l}{a}{m}{l'}{a}{n}$
\item $\forall T, m,n,l,l'.\quad \LRTn{l}{a}{m}{l'}{a}{n} \subtyp T \implies T = \H \;\vee\; T = \LRTn{l}{a}{m}{l'}{a}{n}$
\item $\forall T_1, T_2. \quad T_1 \subtyp T_2 \implies \text{ neither $T_1$ nor $T_2$ are union types unless $T_2 = \H$ or $T_1 = T_2$}$.
\item $\forall T, l, T'.\quad T \subtyp \skey{l}{T'} \implies T = \skey{l}{T'}$.
\item $\forall T. \quad T \subtyp \L \implies
 \text{$T$ is a pair type}
 \;\vee\; (\exists T'.\quad T = \skey{\L}{T'})\;\vee\; T = \L$.
\item $\forall T. \quad T \subtyp \S \implies
 \text{$T$ is a pair type} \;\vee\; (\exists T'.\quad T = \skey{\S}{T'})\;\vee\; T = \S$.
\end{enumerate}
\end{lemma}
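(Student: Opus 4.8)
The plan is to prove all seventeen properties simultaneously by strong induction on the height of the subtyping derivation, since the only rule that is not shape-directed, \STrans, reduces a judgement to two strictly shorter sub-derivations on which the full inductive hypothesis (all seventeen claims) may be invoked. Proving them one at a time would fail precisely because the transitivity case of one property typically needs the inversion content of several others.

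First I would dispatch the cases where the last rule applied is an axiom or a shape-directed rule. For each property I inspect which rules of Figure~\ref{fig:subtypingrules} can possibly produce a judgement $T \subtyp T'$ with the prescribed shape on the relevant side. For instance, for property (1) the only rules whose conclusion can carry $\H$ on the left are \SRefl\ and \SHigh, both forcing $T' = \H$; none of \SPairL, \SPair, \SPairS, \SPairSp, \SKey, \SEnc, \SAenc\ can have $\H$ on the left, so these are vacuous. The same routine inspection handles the axiom and structural cases of every other property: e.g. for property (5), $T \subtyp T_1 * T_2$, the only non-transitivity rule with a pair on the right is \SPair, giving directly $T = T_1' * T_2'$ with $T_1' \subtyp T_1$ and $T_2' \subtyp T_2$; and for (8), $T_1 \subtyp \encT{T_2}{k}$, the only applicable rule is \SEnc.

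The crux is the \STrans\ case, where the conclusion $T \subtyp T''$ is obtained from $T \subtyp T'$ and $T' \subtyp T''$. Here I apply the inductive hypothesis to the two shorter premises and recombine, which is exactly why the properties must be proved together. A representative instance is property (5): from $T' \subtyp T_1 * T_2$ the IH (property (5)) yields $T' = T_1'' * T_2''$ with $T_i'' \subtyp T_i$; applying the IH again to $T \subtyp T_1'' * T_2''$ gives $T = T_1' * T_2'$ with $T_i' \subtyp T_i''$; finally \STrans\ reassembles $T_i' \subtyp T_i$. The encryption properties (8)--(11) and the refinement/key properties (12)--(17) follow the same template, using \SEnc\ and \SAenc\ (resp. absorption into $\H$ via property (1)) to propagate the shape constraint through the intermediate type $T'$. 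Several transitivity subcases must cross-reference siblings: e.g. property (2) invokes property (1) once the intermediate type turns out to be $\H$, and property (6) relies on property (5) to first expose the pair structure of $T_1 * T_2$.

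The main obstacle is exactly this transitivity case: one must verify that for every property the shape constraint is preserved when passing through an arbitrary intermediate $T'$, and in particular that the absorbing type $\H$ (reachable by \SHigh\ from anything) is correctly accounted for — this is why most of the right-hand-side characterizations explicitly permit the disjunct $T = \H$ (properties (10), (11), (13)). Organizing the simultaneous induction so that each transitivity subcase has the needed sibling properties available, and then checking the finitely many shape combinations, is tedious but entirely mechanical once the template above is fixed.
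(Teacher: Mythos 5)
Your proof is correct, and at its core it is the same argument as the paper's: induction on the subtyping derivation, inversion on the last rule applied, with the \STrans{} case resolved by appealing to the other properties. The genuine difference is packaging, and your justification for it is mistaken: you claim that proving the properties one at a time ``would fail'', but the paper does exactly that, running a separate induction for each point in an order that keeps the dependency graph acyclic (points 2 and 3 use point 1 in the \STrans{} case, point 4 uses points 1--3, points 6 and 7 use the earlier points, points 10--13 use points 1 and 2, and points 16 and 17 use the pair/encryption inversions). The sequential route works precisely because each point's transitivity case only ever needs \emph{earlier} points, which are then available at full strength, applicable to arbitrary derivations rather than only sub-derivations. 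Your simultaneous strong induction on derivation height is equally sound, since in every transitivity subcase you invoke the sibling properties only on the two premises, which are strictly shorter; what it buys you is freedom from exhibiting a linear dependency order (and robustness if the dependencies were ever cyclic), at the cost of carrying all seventeen statements through every case. One small slip in your shape analysis: for properties such as (5) and (8) you assert that \SPair{} (resp.\ \SEnc{}) is the \emph{only} non-transitivity rule whose conclusion has the required right-hand shape, overlooking \SRefl{}, which can also conclude $T_1 * T_2 \subtyp T_1 * T_2$ or $\encT{T_2}{k} \subtyp \encT{T_2}{k}$; these reflexivity instances are trivially discharged (take the components themselves, with \SRefl{} supplying the component subtypings), so nothing breaks, but the case should be listed.
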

\begin{proof}
\begin{itemize}
\item Points 1 to 3 are immediate by induction on the subtyping proof (the proofs of the second and third points use the first one in the \STrans case).
\item Point 4 is immediate by induction on the proof of $T_1 * T_2 \subtyp T_3$ (the \STrans case uses the first three points).
\item Point 5 is proved by induction on the proof of $T \subtyp T_1*T_2$ (the \STrans case uses the induction hypothesis).
\item Points 6 and 7 are proved by induction on the proof of $T_1 * T_2 \subtyp \L$ (resp. $\S$; using the previous points in the \STrans case).
\item Point 8 and 9 immediate by induction on the subtyping proof.
\item Points 10 to 13 are proved by induction on the subtyping proof (using the first two points
in the \STrans case).
\item Points 14 and 15 are immediate by induction on the subtyping proof.
\item Points 16 and 17 are proved by induction on the subtyping proof, using points 5, 8, 9 in the \STrans case.
\end{itemize}
\end{proof}

\begin{lemma}[Terms of type $T \orT T'$]
\label{lem-proof:ground-term-ortype}
For all $\Gamma$, $T$, $T'$, for all ground terms $t$, $t'$, for all $c$, if
\[\teqTc{\Gamma}{\Delta}{\E}{\E'}{t}{t'}{T \orT T'}{c}\]
then 
\[\teqTc{\Gamma}{\Delta}{\E}{\E'}{t}{t'}{T}{c} \quad\text{or}\quad \teqTc{\Gamma}{\Delta}{\E}{\E'}{t}{t'}{T'}{c}\]
\end{lemma}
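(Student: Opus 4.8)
The plan is to argue by induction on the derivation of $\teqTcnew{\Gamma}{t}{t'}{T \orT T'}{c}$, the crucial observation being that only two typing rules can possibly conclude with a union type. First I would record a structural fact about ground terms: since $t$ and $t'$ are ground and every message-typing rule decomposes the terms in its conclusion into (sub)terms in its premises, every term occurring at any node of the derivation is again ground. Consequently the rules \TVar and \TLRVar, which type variables, can never appear in the derivation, and I may ignore them entirely.

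Next I would inspect the conclusion of every remaining rule of Figure~\ref{fig:termtypingrules} and check the syntactic shape of the type it produces. The rules \TNonceL, \TCst, \TPubkey, \TVkey, \TEncH, \TAencH, \TEncL, \TAencL, \TSignH, \TSignL, \THash, \THashL, and \TLRLp conclude with $\L$; \THigh concludes with $\H$; \TNonce and \TLRp conclude with a label $l$; \TPair, \TEnc, \TAenc, \TKey, \TLRone, and \TLRinf conclude with a pair type, an encryption type, a key type, or a refinement type. None of these is a union type; for \TKey in particular I would note that, by well-formedness of $\Gamma$ (rule \GKey), keys only ever receive key types $\skey{l}{T}$. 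Hence the only rules whose conclusion can carry the type $T \orT T'$ are \TOr and \TSub, and these are the two cases the induction must handle.

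If the last rule is \TOr, its premise is exactly $\teqTcnew{\Gamma}{t}{t'}{T}{c}$, so the left disjunct of the statement holds immediately and with the same constraint set $c$. If the last rule is \TSub, its premise is $\teqTcnew{\Gamma}{t}{t'}{T''}{c}$ for some $T''$ with $T'' \subtyp (T \orT T')$, again with $c$ unchanged. Here I would invoke point~14 of Lemma~\ref{lem-proof:subtyping}: since $T \orT T'$ is a union type and is distinct from $\H$, the subtyping $T'' \subtyp (T \orT T')$ forces $T'' = T \orT T'$. Thus the premise is a derivation of $\teqTcnew{\Gamma}{t}{t'}{T \orT T'}{c}$ of strictly smaller height, and applying the induction hypothesis to it yields $\teqTcnew{\Gamma}{t}{t'}{T}{c}$ or $\teqTcnew{\Gamma}{t}{t'}{T'}{c}$, which is exactly what is required.

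I expect the main obstacle to be the \TSub case, since a priori subtyping could introduce a union type indirectly from some non-union $T''$, which would break the clean case analysis. Point~14 of Lemma~\ref{lem-proof:subtyping} is precisely what rules this out: it collapses \TSub to the trivial situation $T'' = T \orT T'$, so that the induction strictly decreases the derivation height and terminates. The exhaustive rule inspection of the second step is routine but must be complete, as it is what guarantees that \TOr and \TSub are the only two cases to consider.
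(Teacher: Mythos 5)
Your proof is correct and follows essentially the same route as the paper's: induction on the typing derivation, observing that every rule except \textsc{TSub} and \textsc{TOr} is excluded (by the shape of its conclusion type or, for the variable rules, by groundness), handling \textsc{TOr} directly from its premise, and collapsing \textsc{TSub} via point~14 of Lemma~\ref{lem-proof:subtyping} so that the induction hypothesis applies. Your explicit appeal to well-formedness of $\Gamma$ in the \textsc{TKey} case is a detail the paper leaves implicit, but it is the same argument.
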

\begin{proof}
We prove this property by induction on the derivation of $\teqTc{\Gamma}{\Delta}{\E}{\E'}{t}{t'}{T \orT T'}{c}$.

The last rule of the derivation cannot be \TNonce, \TNonceL, \TCst, \TPair, 
\TKey, \TPubkey, \TVkey, \TEnc, \TEncH, \TEncL, \TAenc, \TAencH, \TSignH, \TSignL, \THash, \THashL, \THigh, \TLRone, \TLRinf, \TLRVar, \TLRp, or \TLRLp since the type in their conclusion cannot be $T \orT T'$.
It cannot be \TVar since $t$, $t'$ are ground.


In the \TSub case we know that $\teqTc{\Gamma}{\Delta}{\E}{\E'}{t}{t'}{T''}{c}$ (with a shorter derivation) for some $T'' \subtyp T \orT T'$; thus, by Lemma~\ref{lem-proof:subtyping}, $T'' = T \orT T'$, and the claim holds by the induction hypothesis.

Finally in the \TOr case, the premise of the rule directly proves the claim.
\end{proof}

\begin{lemma}[Terms and branch types]
\label{lem-proof:term-branch}
For all $\Gamma$, $T$, $c$, for all ground terms $t$, $t'$, if
	\[\teqTc{\Gamma}{\Delta}{\E}{\E'}{t}{t'}{T}{c}\]
then there exists $T' \in \branch{T}$ such that
\[\teqTc{\Gamma}{\Delta}{\E}{\E'}{t}{t'}{T'}{c}\]
\end{lemma}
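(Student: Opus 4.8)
The plan is to obtain this as an iterated application of Lemma~\ref{lem-proof:ground-term-ortype}, which eliminates a single top-level union. Concretely, I would argue by induction on the structure of the type $T$, peeling off one $\orT$ at a time until only a non-union disjunct remains.

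First I would treat the case where $T$ is not a union type. Here the definition of $\branch{T}$ forces $\branch{T} = \{T\}$: the clause $T = T'$ admits only $T' = T$, and the disjunction clause cannot apply since $T$ is not of the form $T_1 \orT \dots \orT T' \orT \dots$. Taking $T' = T$, the hypothesis $\teqTc{\Gamma}{\Delta}{\E}{\E'}{t}{t'}{T}{c}$ is already the desired conclusion. In the remaining case $T = T_1 \orT T_2$, I would apply Lemma~\ref{lem-proof:ground-term-ortype} to the typing hypothesis, obtaining either $\teqTc{\Gamma}{\Delta}{\E}{\E'}{t}{t'}{T_1}{c}$ or $\teqTc{\Gamma}{\Delta}{\E}{\E'}{t}{t'}{T_2}{c}$; by symmetry assume the former. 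Since $T_1$ is a strict subterm of $T$, the induction hypothesis yields some $T' \in \branch{T_1}$ with $\teqTc{\Gamma}{\Delta}{\E}{\E'}{t}{t'}{T'}{c}$, and it then only remains to check that $T' \in \branch{T}$.

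That last point reduces to the inclusion $\branch{T_1} \cup \branch{T_2} \subseteq \branch{T_1 \orT T_2}$, which I would read directly off the syntactic definition of $\branch{\cdot}$: every element of $\branch{T_i}$ is, by definition, a non-union type occurring as a top-level disjunct of $T_i$ (or equal to $T_i$), and writing $T_1 \orT T_2$ as a single list of top-level disjuncts simply concatenates the disjunct-lists of $T_1$ and $T_2$; hence such a type also occurs as a top-level disjunct of $T_1 \orT T_2$ and lies in $\branch{T_1 \orT T_2}$. The whole mathematical content is carried by Lemma~\ref{lem-proof:ground-term-ortype}; the only mild obstacle is the bookkeeping of the disjunction structure needed to justify this inclusion from the precise (list-based) definition of $\branch{\cdot}$, and I expect no genuine difficulty there.
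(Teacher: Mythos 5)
Your proof is correct and follows essentially the same route as the paper: the paper's own argument is precisely to apply Lemma~\ref{lem-proof:ground-term-ortype} successively to $\teqTc{\Gamma}{\Delta}{\E}{\E'}{t}{t'}{T}{c}$ until $T$ is no longer a union type, which is exactly your structural induction peeling off one $\orT$ at a time. Your explicit verification of the inclusion $\branch{T_1}\cup\branch{T_2}\subseteq\branch{T_1\orT T_2}$ is just the bookkeeping the paper leaves implicit.
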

\begin{proof}
This property is a corollary of Lemma~\ref{lem-proof:ground-term-ortype}.
We indeed prove it by successively applying this lemma to $\teqTc{\Gamma}{\Delta}{\E}{\E'}{t}{t'}{T}{c}$ until $T$ is not a union type.
\end{proof}

\begin{lemma}[Substitutions type in a branch]
\label{lem-proof:ground-subst-branch}
For all $\Gamma$, $c$, for all ground substitutions $\sigma$, $\sigma'$, if
\[\wtc{\Delta}{\E}{\E'}{\sigma}{\sigma'}{\Gamma}{c}\]
then there exists $\Gamma' \in \branch{\Gamma}$ such that
\[\wtc{\Delta}{\E}{\E'}{\sigma}{\sigma'}{\Gamma'}{c}\]
\end{lemma}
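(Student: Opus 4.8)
The plan is to derive this directly as a corollary of Lemma~\ref{lem-proof:term-branch} (Terms and branch types), applied pointwise to each variable in the common domain of the substitutions.

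First I would unfold the hypothesis $\wtc{\Delta}{\E}{\E'}{\sigma}{\sigma'}{\Gamma}{c}$. By definition of well-typed substitutions, $\sigma$ and $\sigma'$ are ground, $\dom{\sigma} = \dom{\sigma'} = \dom{\onlyvar{\Gamma}}$, and for every $x \in \dom{\onlyvar{\Gamma}}$ there is a constraint set $c_x$ with $\teqTcnew{\novar{\Gamma}}{\sigma(x)}{\sigma'(x)}{\Gamma(x)}{c_x}$ and $c = \bigcup_{x} c_x$. Since each $\sigma(x)$ and $\sigma'(x)$ is ground, I can invoke Lemma~\ref{lem-proof:term-branch} on each of these judgments to obtain a type $T_x \in \branch{\Gamma(x)}$ such that $\teqTcnew{\novar{\Gamma}}{\sigma(x)}{\sigma'(x)}{T_x}{c_x}$, crucially with the \emph{same} constraint set $c_x$.

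Next I would assemble the witness environment $\Gamma'$: it has the same domain as $\Gamma$, agrees with $\Gamma$ on names and keys (so that $\novar{\Gamma'} = \novar{\Gamma}$), and sets $\Gamma'(x) = T_x$ for each variable $x$. This $\Gamma'$ lies in $\branch{\Gamma}$: for variables, $\Gamma'(x) = T_x \in \branch{\Gamma(x)}$ by construction, and for names and keys the assigned types (nonce types and key types) are never union types, so $\branch{\Gamma(x)} = \{\Gamma(x)\}$ and the branch condition holds there trivially. Finally I would verify $\wtc{\Delta}{\E}{\E'}{\sigma}{\sigma'}{\Gamma'}{c}$: the domain condition is immediate since $\onlyvar{\Gamma'}$ and $\onlyvar{\Gamma}$ share their domain, and for each $x$ the required judgment $\teqTcnew{\novar{\Gamma'}}{\sigma(x)}{\sigma'(x)}{\Gamma'(x)}{c_x}$ is exactly the one produced above, using $\novar{\Gamma'} = \novar{\Gamma}$ and $\Gamma'(x) = T_x$; since the $c_x$ are unchanged, $\bigcup_x c_x = c$ still holds.

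The point to get right — rather than a genuine obstacle — is that the term-typing judgments in the definition of well-typed substitutions are taken under the name-and-key restriction $\novar{\Gamma}$, not under $\Gamma$ itself. Because replacing the variable types to form $\Gamma'$ leaves $\novar{\Gamma}$ untouched, the derivations returned by Lemma~\ref{lem-proof:term-branch} transfer verbatim and no re-derivation is needed. The only other thing to check is that branching never affects the name/key part of the environment, which follows since those entries carry nonce or key types, which are not union types.
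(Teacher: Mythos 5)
Your proof is correct and matches the paper's own argument: both apply Lemma~\ref{lem-proof:term-branch} pointwise to each $\teqTcnew{\novar{\Gamma}}{\sigma(x)}{\sigma'(x)}{\Gamma(x)}{c_x}$ to obtain $T_x \in \branch{\Gamma(x)}$ with unchanged $c_x$, then take $\Gamma' = \novar{\Gamma} \cup \{x : T_x\}$. Your explicit remarks that the judgments live under $\novar{\Gamma}$ (which branching leaves untouched) and that nonce/key types are never union types are, if anything, slightly more careful than the paper's write-up.
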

\begin{proof}
This property follows from Lemma~\ref{lem-proof:term-branch}.
Indeed, by definition, $c = \bigcup_{x \in \dom{\Gamma}} c_x$ for some $c_x$ such that for all $x \in \dom{\Gamma} (=\dom{\sigma} = \dom{\sigma'})$,
\[\teqTc{\Gamma}{\Delta}{\E}{\E'}{\sigma(x)}{\sigma'(x)}{\Gamma(x)}{c_x}\]
Hence by applying Lemma~\ref{lem-proof:term-branch} we obtain a type $T_x \in \branch{\Gamma(x)}$ such that 
\[\teqTc{\Gamma}{\Delta}{\E}{\E'}{\sigma(x)}{\sigma'(x)}{T_x}{c_x}\]
Thus if we denote $\Gamma''$ by $\forall x \in \dom{\onlyvar{\Gamma}}.\Gamma''(x) = T_x$,
and $\Gamma' = \novar{\Gamma}\cup\Gamma''$,
we have $\Gamma' \in \branch{\Gamma}$ and
$\wtc{\Delta}{\E}{\E'}{\sigma}{\sigma'}{\Gamma'}{c}$.
\end{proof}

\begin{lemma}[Typing terms in branches]
\label{lem-proof:type-terms-branches}
For all $\Gamma$, $T$, $c$, for all terms $t$, $t'$, for all $\Gamma' \in \branch{\Gamma}$,
if $\teqTc{\Gamma}{\Delta}{\E}{\E'}{t}{t'}{T}{c}$
then
$\teqTc{\Gamma'}{\Delta}{\E}{\E'}{t}{t'}{T}{c}$.

Corollary: in that case, there exists $T' \in\branch{T}$ such that $\teqTc{\Gamma'}{\Delta}{\E}{\E'}{t}{t'}{T'}{c}$.
\end{lemma}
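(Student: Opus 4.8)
The plan is to proceed by structural induction on the derivation of $\teqTc{\Gamma}{\Delta}{\E}{\E'}{t}{t'}{T}{c}$, replaying each rule under $\Gamma'$. Two preliminary observations do most of the work. First, $\dom{\Gamma'} = \dom{\Gamma}$ by definition of $\branch{\Gamma}$, so every side condition that only checks domain membership transfers verbatim: this covers the premises of \TCst, \TPubKey, \TVKey, \THash and \THigh, including conditions of the form $\names{M}\cup\var{M}\subseteq\dom{\Gamma}\cup\FN$. Second, $\Gamma$ and $\Gamma'$ can differ only on \emph{variable} entries: by well-formedness (rules \GNonce and \GKey) every nonce entry of $\Gamma$ has the non-union form $\noncetypelab{l}{\oneorinf}{n}$ and every key entry the form $\skey{l}{T}$, and a non-union type is its own unique branch, so $\Gamma'$ agrees with $\Gamma$ on all nonces and keys (even when a key payload $T$ is itself a union, since the entry $\skey{l}{T}$ is not a union type).

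Given these facts, almost all cases are immediate. For the structural rules \TPair, \TSub, \TOr, \THashL, \TEnc and \TAenc I would apply the induction hypothesis to each subderivation and reapply the same rule, the subtyping and union side conditions being unaffected by the change of environment. For the rules that consult a nonce or key entry — \TNonce, \TNonceL, \TKey, \TEncH, \TEncL, \TAencH, \TAencL, \TSignH, \TSignL, the refinement rules \TLRone, \TLRinf, \TLRp, \TLRLp and \TLRVar — the consulted entries coincide in $\Gamma$ and $\Gamma'$, so after invoking the induction hypothesis on any premises the rule fires unchanged under $\Gamma'$. The only genuinely new case is \TVar, where $\Gamma(x)=T$ may be a union type while $\Gamma'(x)=T'$ is only a branch of $T$. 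Here I would apply \TVar under $\Gamma'$ to get $x:T'$ and then re-introduce the remaining disjuncts of $T$ by repeated \TOr to recover $x:T$. I expect this step to be the main obstacle: \TOr only appends a disjunct on the right, so reconstructing the exact type $T$ from an arbitrary component $T'$ requires reading $\orT$ up to associativity and commutativity — exactly the convention already built into $\branch{\cdot}$, which flattens a union into its unordered set of non-union components. I would state this convention explicitly at the outset, after which the reconstruction is routine.

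For the corollary I would exploit that $\Gamma'$ carries no union-typed variable entry (each is a branch, hence non-union). Consequently, in the transferred derivation of $\teqTc{\Gamma'}{\Delta}{\E}{\E'}{t}{t'}{T}{c}$ the only rules that can conclude a union type are \TOr and an instance of \TSub that merely reproduces the type (by Lemma~\ref{lem-proof:subtyping}, $T'\subtyp T$ with $T$ a union forces $T'=T$). Peeling these off from the root — none of them alters $t$, $t'$ or $c$ — leaves a derivation whose conclusion is a non-union $T'\in\branch{T}$, which is the claim; for ground $t,t'$ this is precisely Lemma~\ref{lem-proof:term-branch} applied under $\Gamma'$.
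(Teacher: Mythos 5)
Your proof of the main lemma is correct and essentially identical to the paper's: the same induction over the typing derivation, the same key observation that $\Gamma$ and $\Gamma'$ can only differ on variable entries (nonce and key entries being non-union, hence their own unique branch), and the same treatment of \TVar via one application of \TVar under $\Gamma'$ followed by repeated \TOr to rebuild $\Gamma(x)$. Your explicit caveat that \TOr only appends a disjunct on the right, so that rebuilding $T$ from an arbitrary $T'\in\branch{T}$ requires reading $\orT$ up to associativity and commutativity, is a point the paper silently assumes (its own \TVar case also says "by applying rule \TOr as many times as necessary"), so flagging the convention is a careful addition rather than a deviation. For the corollary your route is mildly but genuinely different: the paper re-runs an induction on the \emph{original} derivation under $\Gamma$, with the root cases \TVar, \TSub, \TOr, whereas you peel \TOr and type-reproducing \TSub instances off the root of the \emph{transferred} derivation under $\Gamma'$, using that $\Gamma'$ carries no union-typed entries so \TVar can never conclude a union type there, and that $\TSub$ with a union target forces equality of types (point 14 of the subtyping lemma). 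Both arguments rest on the same facts and both work for non-ground terms (which is why neither can simply cite the ground-term branch lemma); yours buys a slightly smaller case analysis at the root, while the paper's has the advantage of not needing to verify that no other rule can conclude a union type under $\Gamma'$ — a one-time enumeration over the rule set that your version does require and that you should spell out.
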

\begin{proof}
We prove this property by induction on the derivation of $\teqTc{\Gamma}{\Delta}{\E}{\E'}{t}{t'}{T}{c}$.
In most cases for the last rule applied, $\Gamma(x)$ is not directly involved in the premises, for any variable $x$.
Rather, $\Gamma$ appears only in other typing judgements, or is used in $\Gamma(k)$ or $\Gamma(n)$ for some key $k$ or
nonce $n$, and keys or nonces cannot have union types.
Hence, since the typing rules for terms do 
not change $\Gamma$, the claim directly follows from the induction hypothesis.
For instance in the \TPair case, we have $t = \PAIR{t_1}{t_2}$, $t' = \PAIR{t_1'}{t_2'}$, $T = T_1 * T_2$, $c = c_1 \cup c_2$, 
$\teqTc{\Gamma}{\Delta}{\E}{\E'}{t_1}{t_1'}{T_1}{c_1}$, and $\teqTc{\Gamma}{\Delta}{\E}{\E'}{t_2}{t_2'}{T_2}{c_2}$.
Thus by the induction hypothesis,
$\teqTc{\Gamma'}{\Delta}{\E}{\E'}{t_1}{t_1'}{T_1}{c_1}$, and $\teqTc{\Gamma'}{\Delta}{\E}{\E'}{t_2}{t_2'}{T_2}{c_2}$; and therefore
by rule \TPair, $\teqTc{\Gamma'}{\Delta}{\E}{\E'}{t}{t'}{T}{c}$.
The cases of rules 
\TEnc, \TEncH, \TEncL, \TAenc, \TAencH, \TAencL, \THashL, \TSignH, \TSignL, \TLRp, \TLRLp, \TLRVar, \TSub, \TOr are similar.

The cases of rules \TNonce, \TNonceL, TCst, \TKey, \TPubkey, \TVkey, \THash, \THigh, \TLRone, and \TLRinf are immediate since these rules use neither $\Gamma$ nor another typing judgement in their premise.

Finally, in the \TVar case, $t = t' = x$ for some variable $x$ such that $\Gamma(x) = T$, and $c = \emptyset$.
Rule \TVar also proves that $\teqTc{\Gamma'}{\Delta}{\E}{\E'}{x}{x}{\Gamma'(x)}{\emptyset}$.
Since $\Gamma'(x)\in\branch{\Gamma(x)}$, by applying rule \TOr as many times as necessary, we have 
$\teqTc{\Gamma'}{\Delta}{\E}{\E'}{x}{x}{\Gamma(x)}{\emptyset}$, \ie
$\teqTc{\Gamma'}{\Delta}{\E}{\E'}{x}{x}{T}{\emptyset}$, which proves the claim.

\bigskip

The corollary then follows, again by induction on the typing derivation.
If $T$ is not a union type, $\branch{T} = \{T\}$ and the claim is directly the previous property.
Otherwise, the last rule applied in the typing derivation can only be \TVar, \TSub, or \TOr.
The \TSub case follows trivially from the induction hypothesis; since $T$ is a union type, it is its own only subtype.
In the \TVar case, $t = t' = x$ for some variable $x$ such that $\Gamma(x) = T$.
Hence, by definition, $\Gamma'(x)\in\branch{T}$, and by rule \TVar we have $\teqTc{\Gamma'}{\Delta}{\E}{\E'}{t}{t'}{\Gamma'(x)}{c}$.
Finally, in the \TOr case, we have $T = T_1 \orT T_2$ for some $T_1$, $T_2$ such that
$\teqTc{\Gamma}{\Delta}{\E}{\E'}{t}{t'}{T_1}{c}$.
By the induction hypothesis, there exists $T_1'\in\branch{T_1}$ such that
$\teqTc{\Gamma'}{\Delta}{\E}{\E'}{t}{t'}{T_1'}{c}$.
Since, by definition, $\branch{T_1}\subseteq\branch{T_1\orT T_2}$, this proves the claim.
\end{proof}

\begin{lemma}[Typing destructors in branches]
\label{lem-proof:type-dest-branches}
For all $\Gamma$, 
$T$, $d$, $x$, for all $\Gamma' \in \branch{\Gamma}$,
if $\tDestnew{\Gamma}{d(x)}{T}$
then
$\tDestnew{\Gamma'}{d(x)}{T}$.
\end{lemma}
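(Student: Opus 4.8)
The plan is to proceed by case analysis on the last rule used in the derivation of $\tDestnew{\Gamma}{d(x)}{T}$, which must be one of the destructor rules of Figure~\ref{fig:destructorrules}. The crucial observation is that, unlike the term typing rules, the destructor rules contain no recursive typing subderivations and no appeal to subtyping: the premises of each rule consist solely of environment lookups of the form $\Gamma(k)$ for a key $k$ and $\Gamma(x)$ for the argument variable $x$, each matched against a fixed shape. Hence the only thing I need to verify is that these lookups return the same type in $\Gamma'$ as in $\Gamma$, after which the very same rule re-derives the conclusion.

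First I would record the elementary fact that, for a type $S$ that is not a union type, $\branch{S} = \{S\}$: the second clause in the definition of branches would force $S$ to have a top-level disjunction, contradicting non-union-ness, so only the first clause ($S' = S$) can apply. Consequently, for any $\Gamma' \in \branch{\Gamma}$ one has $\dom{\Gamma'} = \dom{\Gamma}$, and $\Gamma'(y) = \Gamma(y)$ for every $y \in \dom{\Gamma}$ whose type $\Gamma(y)$ is not a union. I would then inspect the premises rule by rule. In every rule a key lookup $\Gamma(k)$ is matched against $\skey{\S}{T}$ or $\skey{\L}{T}$, which are not union types, so $\Gamma'(k) = \Gamma(k)$ (alternatively, well-formedness via rule \GKey already guarantees that keys never carry union types). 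Likewise, every lookup of the argument variable is matched against a non-union shape: $\Gamma(x) = \L$ in \DDecH, \DDecL, \DAdecH, \DAdecL, \DCheckH, \DCheckL, \DFstL, \DSndL; $\Gamma(x) = \encT{T}{k}$ in \DDecT; $\Gamma(x) = \aencT{T}{k}$ in \DAdecT; and $\Gamma(x) = T * T'$ in \DFst, \DSnd. In each case the matched type is non-union, so $\Gamma'(x) = \Gamma(x)$. Therefore all premises of the applied rule hold verbatim with $\Gamma$ replaced by $\Gamma'$, and the same rule derives $\tDestnew{\Gamma'}{d(x)}{T}$.

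I expect essentially no obstacle here: the statement is strictly simpler than the companion Lemma~\ref{lem-proof:type-terms-branches} on typing terms in branches, since destructor judgements bottom out immediately in environment lookups and never recurse into union-typed subterms or require a \TOr/\TVar-style argument. The only point needing care is the elementary fact that branching leaves non-union types untouched, which is immediate from the definition of $\branch{\cdot}$; once that is in hand the case analysis is entirely mechanical.
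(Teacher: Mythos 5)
Your proof is correct and follows essentially the same route as the paper: the paper's proof likewise observes that every destructor rule's premises are bare environment lookups matched against non-union shapes, so that $\Gamma'$ agrees with $\Gamma$ on exactly those lookups and the same rule reapplies. Your only additions — the explicit singleton-branch fact $\branch{S}=\{S\}$ for non-union $S$ and the rule-by-rule enumeration — just spell out details the paper leaves implicit.
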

\begin{proof}
This property is immediate by examining the typing rules for destructors.
Indeed, $\Gamma$ and $\Gamma'$ only differ on variables, and the rules for destructors only
involve $\Gamma(x)$ for $x\in\X$ in conditions of the form $\Gamma(x) = T$ for some type $T$ which is not a union type.

Hence in these cases $\Gamma'(x)$ is also $T$, and the same rule can be applied to $\Gamma'$ to prove the claim.
\end{proof}

\begin{lemma}[Typing processes in branches]
\label{lem-proof:type-processes-branches}
For all $\Gamma$,
$C$, for all processes $P$, $Q$, for all $\Gamma' \in \branch{\Gamma}$,
if $\teqP{\Gamma}{\Delta}{\E}{\E'}{P}{Q}{C}$
then there exists $C' \subseteq C$ such that
$\teqP{\Gamma'}{\Delta}{\E}{\E'}{P}{Q}{C'}$.
\end{lemma}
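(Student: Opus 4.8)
The plan is to prove the statement by induction on the derivation of $\teqPnew{\Gamma}{P}{Q}{C}$, reusing at each node the "branch" lemmas already established for the sub-judgements: Lemma~\ref{lem-proof:type-terms-branches} for every term side-condition $\teqTcnew{\Gamma}{M}{N}{T}{c}$ (it preserves both the type $T$ and the generated constraints $c$), and Lemma~\ref{lem-proof:type-dest-branches} for every destructor side-condition $\tDestnew{\Gamma}{d(y)}{T}$ (same type $T$). The bookkeeping for $C'\subseteq C$ is uniform, since the three operators combining constraint sets are monotone: $C_1\subseteq C_2$ gives $C_1\UnionAll c\subseteq C_2\UnionAll c$, and if moreover $C_1'\subseteq C_2'$ then $C_1\cup C_1'\subseteq C_2\cup C_2'$ and $C_1\UnionCart C_1'\subseteq C_2\UnionCart C_2'$ (each is defined elementwise, so shrinking an operand shrinks the result). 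Intuitively, passing from $\Gamma$ to a branch $\Gamma'$ prunes exactly the \POr-subderivations whose disjunct $\Gamma'$ did not select; the surviving leaves carry the very same environments, so the surviving constraints form a genuine subset of $C$.

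First I would dispatch the routine cases. For \PZero the premise $\branch{\Gamma}=\{\Gamma\}$ forces $\Gamma'=\Gamma$, so $C'=C$. For \POut, \PIn, \PNew, \PPar, and the conditional rules \PIfL, \PIfLR, \PIfS, \PIfLRinf, \PIfP, \PIfI, \PIfLRp, \PLetLR, the ambient environment is either unchanged or extended by a non-union type ($\L$, a nonce type, or a refinement type), so $\Gamma'$ (suitably extended) is again a branch of each premise environment. The induction hypothesis supplies typings with smaller constraint sets, Lemma~\ref{lem-proof:type-terms-branches} re-establishes the term side-conditions with the identical type and constraints (the relevant types $\L$, $\S$, $T*T'$ and $\LRTnewnew{\cdot}{\cdot}$ are not union types, and in \PIfLR/\PIfLRp the booleans depend only on these preserved refinement types, so the same branch $P_b,Q_{b'}$ is selected), and monotonicity closes the subset obligation.

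The genuinely interesting case is \POr, with conclusion environment $\Gamma_0,x:T\orT T'$. Here I would use $\branch{T\orT T'}=\branch{T}\cup\branch{T'}$: a branch $\Gamma'$ resolves $x$ into, say, $\branch{T}$, hence is already a branch of the first premise environment $\Gamma_0,x:T$; the induction hypothesis on that premise yields $C'\subseteq C_1\subseteq C_1\cup C_2$, so one disjunct suffices and the union type is never reconstructed.

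The hard part will be \PLet, because \DAdecH can bind $x$ to a union type $T=T_0\orT\L$. Re-deriving the $\LETNA$-binding under $\Gamma'$ must again use \PLet (Lemma~\ref{lem-proof:type-dest-branches} preserves the destructor judgement with the same $T$), which forces the first premise to be retyped under $\Gamma',x:T$; if $T$ is a union, $\Gamma',x:T$ is not a branch, so the induction hypothesis does not apply verbatim. I would resolve this by strengthening the induction to \emph{partial} branches, i.e.\ environments agreeing with $\Gamma$ except that each union-typed variable may be either left untouched or replaced by one of its branches. Every case above survives this generalisation (in \POr one may now also keep $x:T\orT T'$ and recurse into both premises, recombining with \POr), while \PLet becomes immediate: apply the strengthened hypothesis to the first premise with the partial branch $\Gamma',x:T$ (which resolves all of $\Gamma$'s union types but keeps the freshly bound $x:T$) and to the else premise with $\Gamma'$, then reassemble with \PLet. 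Since a full branch is a special case of a partial branch, the original statement follows.
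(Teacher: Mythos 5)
Your proposal is correct and follows the same skeleton as the paper's proof: induction on the typing derivation, re-establishing term side-conditions via Lemma~\ref{lem-proof:type-terms-branches}, destructor side-conditions via Lemma~\ref{lem-proof:type-dest-branches}, and closing the subset obligation $C'\subseteq C$ by monotonicity of $\cup$, $\UnionAll$ and $\UnionCart$ (the latter is point~4 of Lemma~\ref{lem-proof:cons-subset}); your \POr and \PZero cases coincide with the paper's. Where you genuinely diverge is \PLet. The paper only details \POut, \PPar and \POr and dismisses the remaining cases as ``similar'', but you correctly observed that this hides a real issue: when the destructor type $T$ is a union (as produced by \DAdecH, or by \DDecH/\DDecT with a union payload such as the Helios key $k_s$, or by \DFst/\DSnd on a pair of unions), the environment $\Gamma',x:T$ needed to rebuild the \PLet instance is \emph{not} in $\branch{\Gamma,x:T}$, so the induction hypothesis as stated does not apply. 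Your fix --- strengthening the induction to partial branches, where each union-typed variable may be left unresolved --- is sound, and your \POr case analysis under the strengthening (resolve $x$ and use one premise, or keep $x:T\orT T'$, recurse into both premises and reassemble with \POr) is exactly what is needed. The alternative that stays within the paper's formulation is to apply the unstrengthened induction hypothesis at every full branch $\Gamma',x:T''$ with $T''\in\branch{T}$ (each of these is in $\branch{\Gamma,x:T}$), obtaining constraint sets $C_{T''}\subseteq C$, and then recombine them by repeated applications of \POr --- this is precisely the content of Lemma~\ref{lem-proof:process-all-branches}, whose proof is independent, so no circularity arises; your strengthening buys a single uniform induction at the cost of generalising the statement, while the reassembly route keeps the lemma statement as is at the cost of an inner induction on the structure of $T$.

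One small obligation you assert but do not discharge: under the strengthened induction, the term and destructor side-conditions are invoked at \emph{partial} branches (e.g.\ inside the subderivation for $\Gamma,x:T$ with target $\Gamma',x:T$), whereas Lemmas~\ref{lem-proof:type-terms-branches} and~\ref{lem-proof:type-dest-branches} are stated only for full branches. Both generalise verbatim --- variables of non-union type are untouched by a partial branch, and in the \TVar case one either concludes directly (variable untouched) or applies \TVar followed by \TOr exactly as in the paper's proof --- but for the strengthened induction to be self-contained you should state and prove these partial-branch versions explicitly.
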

\begin{proof}
We prove this lemma by induction on the derivation of $\teqP{\Gamma}{\Delta}{\E}{\E'}{P}{Q}{C}$.
In all the cases for the last rule applied in this derivation, we can show that the conditions of this rule still hold in
$\Gamma'$ (instead of $\Gamma$) using
\begin{itemize}
\item Lemma~\ref{lem-proof:type-terms-branches} for the conditions of the form $\teqTc{\Gamma}{\Delta}{\E}{\E'}{M}{N}{T}{c}$;
\item Lemma~\ref{lem-proof:type-dest-branches} for the conditions of the form $\tDestnew{\Gamma}{d(y)}{T}$;
\item the fact that if $\Gamma(x)$ is not a union type,
then $\Gamma'(x) = \Gamma(x)$, for conditions such as "$\Gamma(x) = \L$", "$\Gamma(x) = \LRTn{l}{a}{m}{l'}{a}{n}$" (in the \PLetLR case);
\item the induction hypothesis for the conditions of the form $\teqP{\Gamma}{\Delta}{\E}{\E'}{P'}{Q'}{C''}$.
In this case, the induction hypothesis produces a $C''' \subseteq C''$, which can then be used to show $C' \subseteq C$, since $C'$ and $C$ are usually respectively $C'''$ and $C''$ with some terms added.
\end{itemize}

We detail here the cases of rules \POut, \PPar, and \POr. The other cases are similar, as explained above.

If the last rule is \POut, then we have $P = \OUT{M}. P'$, $Q = \OUT{N}. Q'$, $C = C''\UnionAll c$ for some
$P'$, $Q'$, $M$, $N$, $C''$, $c$, such that $\teqP{\Gamma}{\Delta}{\E}{\E'}{P'}{Q'}{C''}$ and $\teqTc{\Gamma}{\Delta}{\E}{\E'}{M}{N}{\L}{c}$.
Hence by Lemma~\ref{lem-proof:type-terms-branches}, $\teqTc{\Gamma'}{\Delta}{\E}{\E'}{M}{N}{\L}{c}$, and by the induction hypothesis
applied to $P'$, $Q'$, $\teqP{\Gamma'}{\Delta}{\E}{\E'}{P'}{Q'}{C'''}$ for some $C'''$ such that $C'''\subseteq C''$.
Therefore by rule \POut, $\teqP{\Gamma'}{\Delta}{\E}{\E'}{P}{Q}{C''' \UnionAll c}$, and since 
$C''' \UnionAll c \subseteq C'' \UnionAll c (=C)$, this proves the claim.

\medskip

If the last rule is \PPar, then we have $P = P_1 \PAR P_2$, $Q = Q_1 \PAR Q_2$, $C = C_1 \UnionCart C_2$
for some $P_1$, $P_2$, $Q_1$, $Q_2$, $C_1$, $C_2$ such that $\teqP{\Gamma}{\Delta}{\E}{\E'}{P_1}{Q_1}{C_1}$ and
$\teqP{\Gamma}{\Delta}{\E}{\E'}{P_2}{Q_2}{C_2}$.
Thus by applying the induction hypothesis twice, we have $\teqP{\Gamma'}{\Delta}{\E}{\E'}{P_1}{Q_1}{C_1'}$
and $\teqP{\Gamma'}{\Delta}{\E}{\E'}{P_2}{Q_2}{C_2'}$ with $C_1' \subseteq C_1$ and $C_2' \subseteq C_2$.
Therefore by rule \PPar, $\teqP{\Gamma'}{\Delta}{\E}{\E'}{P_1 \PAR P_2}{Q_1 \PAR Q_2}{C_1' \UnionCart C_2'}$,
and since $C_1'\UnionCart C_2' \subseteq C_1 \UnionCart C_2 (= C)$, this proves the claim.

\medskip

If the last rule is \POr, then there exist $\Gamma''$, $x$, $T_1$, $T_2$, $C_1$ and $C_2$ such that 
$\Gamma = \Gamma'', x:T_1\orT T_2$, $C = C_1 \cup C_2$,
$\teqP{\Gamma'', x:T_1}{\Delta}{\E}{\E'}{P}{Q}{C_1}$ and
$\teqP{\Gamma'', x:T_2}{\Delta}{\E}{\E'}{P}{Q}{C_2}$.
By definition of branches, it is clear that $\branch{\Gamma} = \branch{\Gamma'', x:T_1\orT T_2} = \branch{\Gamma'', x:T_1} \cup \branch{\Gamma'', x:T_2}$.
Thus, since $\Gamma'\in\branch{\Gamma}$, we know that $\Gamma'\in\branch{\Gamma'', x:T_1}$ or $\Gamma'\in\branch{\Gamma'', x:T_2}$.
We write the proof for the case where $\Gamma'\in\branch{\Gamma'', x:T_1}$, the other case is analogous.
By applying the induction hypothesis to 
$\teqP{\Gamma'', x:T_1}{\Delta}{\E}{\E'}{P}{Q}{C_1}$,
there exists $C_1' \subseteq C_1$ such that $\teqP{\Gamma'}{\Delta}{\E}{\E'}{P}{Q}{C_1'}$.
Since $C_1\subseteq C$, this proves the claim.
\end{proof}

\begin{lemma}[Environments in the constraints]
\label{lem-proof:env-constraints-bound}
For all $\Gamma$,
$C$, for all processes $P$, $Q$,
if
\[\teqP{\Gamma}{\Delta}{\E}{\E'}{P}{Q}{C}\]
then for all $(c, \Gamma') \in C$,
\[\dom{\Gamma'} \subseteq \dom{\Gamma} \cup \bvars{P} \cup \bvars{Q}\cup\nnames{P}\cup\nnames{Q}\]
(where $\bvars{P}$, $\nnames{P}$ respectively denote the sets of bound variables and names in $P$).
\end{lemma}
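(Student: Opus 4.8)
The plan is to proceed by structural induction on the derivation of $\teqPnew{\Gamma}{P}{Q}{C}$. The base case is \PZero, where $C = \{(\emptyset,\Gamma)\}$: the only environment occurring in a pair of $C$ is $\Gamma$ itself, and $\dom{\Gamma}\subseteq\dom{\Gamma}$ holds trivially. For the inductive step the guiding observation is that the three constraint-combining operations all behave predictably with respect to the environments they carry. The operation $\UnionAll$ leaves every stored environment untouched (it only adds constraints), the set union $\cup$ used by \POr, \PLet and the conditional rules ensures every pair $(c,\Gamma')$ in the conclusion already occurs in one of the premises, and $\UnionCart$ replaces a pair of compatible environments $(\Gamma_1,\Gamma_2)$ by their union, for which $\dom{\Gamma_1\cup\Gamma_2}=\dom{\Gamma_1}\cup\dom{\Gamma_2}$. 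Hence in every case the environments appearing in $C$ come either verbatim or as a finite union from the environments produced by the premises, and it suffices to bound those via the induction hypothesis.

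The cases where the ambient environment is \emph{not} extended are immediate. For \POut, \PIfL, \PIfS, \PIfLR, \PIfLRinf and \PLetLR, each premise is typed in the same $\Gamma$, and the bound variables and names of the compound process contain those of every subprocess (e.g.\ $\bvars{P}\subseteq\bvars{\ITE{M}{M'}{P}{P'}}$ and similarly for $\nnames$). The induction hypothesis applied to each premise therefore yields exactly the desired inclusion for the conclusion. The case \PPar is the one genuinely using $\UnionCart$: for a pair $(c\cup c',\Gamma_1\cup\Gamma_2)$ of $C\UnionCart C'$ with $(c,\Gamma_1)\in C$ and $(c',\Gamma_2)\in C'$, the induction hypotheses give bounds on $\dom{\Gamma_1}$ and $\dom{\Gamma_2}$ in terms of $P,Q$ and $P',Q'$ respectively, and taking their union gives the bound for the conclusion, using $\bvars{P\PAR P'}=\bvars{P}\cup\bvars{P'}$ and $\nnames{P\PAR P'}=\nnames{P}\cup\nnames{P'}$.

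The remaining cases are those in which a premise is typed in a strictly larger environment, namely \PIn (which adds $x:\L$), \PNew (which adds $n:\noncetypelab{l}{\oneorinf}{n}$) and \PLet (whose first continuation is typed under $\Gamma,x:T$). Here the induction hypothesis produces a bound mentioning $\dom{\Gamma}\cup\{x\}$ (resp.\ $\dom{\Gamma}\cup\{n\}$) rather than $\dom{\Gamma}$, so the step that must be checked is that the freshly bound variable or name is absorbed by $\bvars{\cdot}$ or $\nnames{\cdot}$ of the compound process: $x\in\bvars{\IN{x}.P}$, $n\in\nnames{\NEWnew{n}{\noncetypelab{l}{\oneorinf}{n}}.P}$, and $x\in\bvars{\LET{x}{d(y)}{P}{P'}}$, each of which holds directly from the definition of the binders in Figure~\ref{fig:syntax}. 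The rule \POr deserves a brief remark: although its premises are typed under $\Gamma,x:T$ and $\Gamma,x:T'$, its conclusion is typed under $\Gamma,x:T\orT T'$, so $x$ already belongs to the ambient domain and no appeal to $\bvars{\cdot}$ is needed; the two domains coincide.

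The proof is thus a routine structural induction, and there is no deep obstacle. The only point requiring care is the bookkeeping just described, namely matching each environment extension introduced by a binding rule against the corresponding clause in the definitions of $\bvars{\cdot}$ and $\nnames{\cdot}$, and verifying that the $\UnionCart$ step correctly distributes the bound variables and names across the two components of a parallel composition.
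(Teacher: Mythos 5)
Your proof is correct and follows essentially the same route as the paper's: a structural induction on the typing derivation, with the base case \PZero, the observation that $\UnionAll$ leaves stored environments untouched while $\cup$ and $\UnionCart$ produce environments verbatim or as unions from the premises, and the absorption of freshly bound variables and names by $\bvarsNA$ and $\nnamesNA$ in the \PIn, \PNew and \PLet cases. Your explicit remark on \POr (that the premise and conclusion environments have the same domain, so no appeal to binders is needed) is in fact slightly more careful than the paper, which simply groups \POr with the other remaining cases as ``similar.''
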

\begin{proof}
We prove this lemma by induction on the typing derivation of $\teqP{\Gamma}{\Delta}{\E}{\E'}{P}{Q}{C}$.

If the last rule applied in this derivation is \PZero, we have $C = \{(\emptyset,\Gamma)\}$,
and the claim clearly holds.

\medskip

In the \PPar case, we have $P = P_1 \PAR P_2$, $Q = Q_1 \PAR Q_2$, and $C = C_1 \UnionCart C_2$ for some $P_1$, $P_2$, $Q_1$, $Q_2$, $C_1$, $C_2$ such that $\teqP{\Gamma}{\Delta}{\E}{\E'}{P_1}{Q_1}{C_1}$
and $\teqP{\Gamma}{\Delta}{\E}{\E'}{P_2}{Q_2}{C_2}$.
Thus any element of $C$ is of the form $(c_1 \cup c_2, \Gamma_1 \cup \Gamma_2)$ where $(c_1, \Gamma_1) \in C_1$, 
$(c_2, \Gamma_2) \in C_2$, and $\Gamma_1$, $\Gamma_2$ are compatible.
By the induction hypothesis,
$\dom{\Gamma_1} \subseteq \dom{\Gamma} \cup \bvars{P_1} \cup \bvars{Q_1} \cup \nnames{P_1}\cup\nnames{Q_1} \subseteq \dom{\Gamma} \cup \bvars{P} \cup \bvars{Q} \cup \nnames{P}\cup\nnames{Q}$, and similarly for $\Gamma_2$.
Therefore, since $\dom{\Gamma_1 \cup \Gamma_2} = \dom{\Gamma_1}\cup\dom{\Gamma_2}$ (by definition), the claim holds.

\medskip

In the \PIn and \PLet 
cases, the typing judgement appearing in the condition of the rule uses $\Gamma$ extended with an additional variable, which is bound in $P$ and $Q$.
We detail the \PIn case, the other case is similar.
We have $P = \IN{x}. P'$, $Q = \IN{x}. Q'$ for some $x$, $P'$, $Q'$ such that $x \notin \dom{\Gamma}$ and
$\teqP{\Gamma, x:\L}{\Delta}{\E}{\E'}{P'}{Q'}{C}$.
Hence by the induction hypothesis, if $(c, \Gamma')\in C$, $\dom{\Gamma'} \subseteq \dom{\Gamma, x:\L} \cup \bvars{P'} \cup \bvars{Q'}\cup\nnames{P'}\cup\nnames{Q'}$.
Since $\bvars{P} = \{x \} \cup \bvars{P'}$ and $\bvars{Q} = \{x \} \cup \bvars{Q'}$, this proves the claim.

\medskip

The case of rule \PNew is similar, extending $\Gamma$ with a nonce instead of a variable.

\medskip

In the \POut case, there exist $P'$, $Q'$, $M$, $N$, $C'$, $c$ such that $P = \OUT{M}. P'$, $Q = \OUT{N}. Q'$,
$C = C' \UnionAll c$, $\teqTc{\Gamma}{}{}{}{M}{N}{\L}{c}$ and $\teqP{\Gamma}{}{}{}{P'}{Q'}{C'}$.
If $(c',\Gamma')\in C$, by definition of $\UnionAll$ there exists $c''$ such that $(c'',\Gamma')\in C'$ and
$c' = c\cup c''$.
By the induction hypothesis, we thus have
\[\dom{\Gamma'} \subseteq \dom{\Gamma} \cup \bvars{P'} \cup \bvars{Q'}\cup\nnames{P'}\cup\nnames{Q'}\]
and since $\bvars{P'}=\bvars{P}$, $\nnames{P'}=\nnames{P}$, and similarly for $Q$, this proves the claim.

\medskip

In the \PIfL case, there exist $P'$, $P''$, $Q'$, $Q''$, $M$, $N$, $M'$, $N'$, $C'$, $C''$, $c$, $c'$
such that $P = \ITE{M}{M'}{P'}{P''}$, $Q = \ITE{N}{N'}{Q'}{Q''}$,
$C = (C'\cup C'') \UnionAll (c\cup c')$, $\teqTc{\Gamma}{}{}{}{M}{N}{\L}{c}$, $\teqTc{\Gamma}{}{}{}{M'}{N'}{\L}{c'}$,
$\teqP{\Gamma}{}{}{}{P'}{Q'}{C'}$, and $\teqP{\Gamma}{}{}{}{P''}{Q''}{C''}$.
If $(c'',\Gamma')\in C$, by definition of $\UnionAll$ there exist $c'''$, such that $(c''',\Gamma')\in C'\cup C''$ and
$c'' = c'''\cup c\cup c'$.
We write the proof for the case where $(c''',\Gamma')\in C'$, the other case is analogous.
By the induction hypothesis, we thus have
\[\dom{\Gamma'} \subseteq \dom{\Gamma} \cup \bvars{P'} \cup \bvars{Q'}\cup\nnames{P'}\cup\nnames{Q'}\]
and since $\bvars{P'}\subseteq\bvars{P}$, $\nnames{P'}\subseteq\nnames{P}$, and similarly for $Q$, this proves the claim.

\medskip

The cases of rules \POr, \PLetLR, \PIfLR, \PIfS, \PIfLRinf, \PIfP, \PIfI, and \PIfLRp
remain. All these cases are similar, we write the proof for the \PIfLRinf case.
In this case, there exist $P'$, $P''$, $Q'$, $Q''$, $M$, $N$, $M'$, $N'$, $C'$, $C''$, $l$, $l'$, $m$, $n$
such that $P = \ITE{M}{M'}{P'}{P''}$, $Q = \ITE{N}{N'}{Q'}{Q''}$,
$C = C'\cup C''$, $\teqTc{\Gamma}{}{}{}{M}{N}{\LRTn{l}{\infty}{m}{l'}{\infty}{n}}{\emptyset}$,
$\teqTc{\Gamma}{}{}{}{M'}{N'}{\LRTn{l}{\infty}{m}{l'}{\infty}{n}}{\emptyset}$,
$\teqP{\Gamma}{}{}{}{P'}{Q'}{C'}$, and $\teqP{\Gamma}{}{}{}{P''}{Q''}{C''}$.
If $(c,\Gamma')\in C$, we thus know that $(c,\Gamma')\in C'$ or $(c,\Gamma')\in C''$.
We write the proof for the case where $(c,\Gamma')\in C'$, the other case is analogous.
By the induction hypothesis, we thus have
\[\dom{\Gamma'} \subseteq \dom{\Gamma} \cup \bvars{P'} \cup \bvars{Q'}\cup\nnames{P'}\cup\nnames{Q'}\]
and since $\bvars{P'}\subseteq\bvars{P}$, $\nnames{P'}\subseteq\nnames{P}$, and similarly for $Q$, this proves the claim.
\end{proof}

\begin{lemma}[Environments in the constraints do not contain union types]
\label{lem-proof:env-constr-union}
For all $\Gamma$,
$C$, for all processes $P$, $Q$,
if
\[\teqP{\Gamma}{\Delta}{\E}{\E'}{P}{Q}{C}\]
then for all $(c, \Gamma') \in C$,
\[\branch{\Gamma'} = \{\Gamma'\}\]
\ie for all $x\in\dom{\Gamma'}$, $\Gamma'(x)$ is not a union type.
\end{lemma}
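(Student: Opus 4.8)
The plan is to argue by structural induction on the derivation of $\teqPnew{\Gamma}{P}{Q}{C}$, using the complete set of process rules from Figure~\ref{fig-proof:processtypingrules}. Before starting the induction I would record one elementary observation about environments: if $\Gamma_1$ and $\Gamma_2$ are compatible and neither of them assigns a union type to any variable in its domain, then $\Gamma_1 \cup \Gamma_2$ assigns no union type either. This is immediate from the definition of the union of environments, since for every $x \in \dom{\Gamma_1 \cup \Gamma_2}$ the type $(\Gamma_1 \cup \Gamma_2)(x)$ is either $\Gamma_1(x)$ or $\Gamma_2(x)$, both of which are non-union by assumption. I will also use the fact that $\UnionAll$, being defined as $\UnionCart\{(c',\emptyset)\}$, leaves every environment in its first argument unchanged.

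The base case is rule \PZero, whose conclusion carries the single element $(\emptyset, \Gamma)$ and whose premise is precisely $\branch{\Gamma} = \{\Gamma\}$, i.e. $\Gamma$ contains no union type; this is exactly the claim. For the inductive step, the key structural remark is that in every rule the constraint set $C$ in the conclusion is built from the constraint set(s) of the immediate subderivation(s) using only three operations: (i) $C \UnionAll c$, which adds a fixed set of constraints to every element and leaves the environments untouched; (ii) ordinary set union $C \cup C'$; and (iii) product union $C \UnionCart C'$. In cases (i) and (ii) --- covering \POut, \PIn, \PNew, \POr, \PLet, \PLetLR, \PIfL, \PIfLR, \PIfS, \PIfLRinf, \PIfP, \PIfI, and \PIfLRp --- every environment occurring in $C$ already occurs in a subderivation's constraint set (for the rules with a single subderivation $C$ is literally that set), so the conclusion follows directly from the induction hypothesis. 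In case (iii), used by \PPar, every environment in $C$ has the form $\Gamma_1 \cup \Gamma_2$ with $(\cdot,\Gamma_1)$ and $(\cdot,\Gamma_2)$ ranging over the two subderivations; by the induction hypothesis neither $\Gamma_1$ nor $\Gamma_2$ contains a union type, and by the preliminary observation neither does their union.

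The one point I would flag explicitly, to pre-empt an apparent gap, is that several rules type a continuation under an environment that \emph{does} contain a union type --- most visibly \PLet combined with \DAdecH, which adds $x : T \orT \L$, and \POr, whose conclusion environment $\Gamma, x : T \orT T'$ is itself built from a union type. This is harmless for the present lemma, because the statement constrains only the environments \emph{stored in the constraints}, not the typing environments appearing along the derivation. Indeed, the mechanism that makes the induction close is precisely that such union types can never survive to a leaf: the only axiom is \PZero, whose premise forbids union types, so every union type introduced into the typing environment must first be eliminated by an application of \POr before any constraint element is emitted. I do not expect any genuine obstacle here; the only case requiring more than a one-line appeal to the induction hypothesis is the product union of \PPar, which is dispatched by the preliminary observation on unions of environments.
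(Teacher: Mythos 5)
Your proof is correct and follows exactly the route the paper takes: the paper dispatches this lemma with the single sentence ``immediate by induction on the typing derivation,'' and your argument is precisely that induction spelled out, with the \PZero premise $\branch{\Gamma}=\{\Gamma\}$ as base case, the observation that $\UnionAll$, $\cup$, and $\UnionCart$ only pass through or union compatible non-union environments, and the correct remark that union types in the \emph{typing} environment (from \POr or \textsc{DAdecH}) are irrelevant because they must be split by \POr before reaching the \PZero leaf that emits an environment into the constraints. Nothing to fix.
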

\begin{proof}
This property is immediate by induction on the typing derivation.
\end{proof}

\begin{lemma}[Typing is preserved by extending the environment]
\label{lem-proof:typing-contextinclusion}
For all $\Gamma$, $\Gamma'$,
$P$, $Q$, $C$, $c$, $t$, $t'$, $T$, $c$, if $\tewf{\Gamma}$ and $\tewf{\Gamma\cup\Gamma'}$ (we do not require that
$\Gamma'$ is well-formed):
\begin{itemize}
\item if $\dom{\Gamma} \cap \dom{\Gamma'} = \emptyset$, 
and if $\teqTc{\Gamma}{\Delta}{\E_1}{\E_2}{t}{t'}{T}{c}$, then
$\teqTc{\Gamma \cup \Gamma'}{\Delta}{\E_1}{\E_2}{t}{t'}{T}{c}$.
\item if $\dom{\Gamma} \cap \dom{\Gamma'} = \emptyset$,
and if $\tDestnew{\Gamma}{d(y)}{T}$, then
$\tDestnew{\Gamma \cup \Gamma'}{d(y)}{T}$.
\item if $(\dom{\Gamma} \cup \bvars{P} \cup \bvars{Q}\cup\nnames{P}\cup\nnames{Q}) \cap \dom{\Gamma'} = \emptyset$,
and if $\teqP{\Gamma}{\Delta}{\E_1}{\E_2}{P}{Q}{C}$, then
$\teqP{\Gamma \cup \Gamma'}{\Delta}{\E_1}{\E_2}{P}{Q}{C'}$.
where
$C' = \{(c,\Gamma_c \cup \Gamma'')|(c,\Gamma_c)\in C \wedge \Gamma'' \in \branch{\Gamma'}\}$
(note that the union is well defined, \ie $\Gamma_c$ and $\Gamma''$ are compatible,
thanks to Lemma~\ref{lem-proof:env-constraints-bound})
\end{itemize}
\end{lemma}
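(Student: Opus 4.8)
The plan is to establish the three items in order, each by induction on the height of the relevant typing derivation, using repeatedly that extending by $\Gamma'$ with $\dom{\Gamma}\cap\dom{\Gamma'}=\emptyset$ neither changes an existing lookup — $(\Gamma\cup\Gamma')(u)=\Gamma(u)$ for every $u\in\dom{\Gamma}$ — nor tightens any side condition, since $\dom{\Gamma}\subseteq\dom{\Gamma\cup\Gamma'}$. For the first item I would induct on the derivation of $\teqTcnew{\Gamma}{t}{t'}{T}{c}$. No term rule binds a name or variable, so the environment is identical in every premise; in the axiom rules (\TNonce, \TNonceL, and the rest) the lookups and the membership conditions such as $k\in\dom{\Gamma}$ or $\names{M}\cup\var{M}\subseteq\dom{\Gamma}\cup\FN$ are preserved verbatim, and the structural and subtyping cases close immediately under the induction hypothesis, leaving $T$ and $c$ untouched. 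The second item needs only a case analysis, the destructor judgement not being recursive: each rule consults only $\Gamma(k)$ and $\Gamma(x)$, which are unchanged, so the same rule re-applies over $\Gamma\cup\Gamma'$.

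The third item is the substantive one, because $\Gamma'$ may carry union types whereas \PZero demands an environment with no top-level union and \PPar must glue the two sides along one consistent branch; this is precisely what the distribution of $\branch{\Gamma'}$ over the constraint environments in the conclusion records. I would argue by a two-level induction. The outer induction is on the total number of top-level $\orT$-connectives occurring in the variable types of $\Gamma'$. In the base case $\branch{\Gamma'}=\{\Gamma'\}$, and I prove $\teqPnew{\Gamma\cup\Gamma'}{P}{Q}{\{(c,\Gamma_c\cup\Gamma')\mid (c,\Gamma_c)\in C\}}$ by a structural induction on the process derivation. The only rule that copies the ambient environment into a constraint is \PZero, and there the ambient $\Gamma$ is itself free of top-level unions (else \PZero would not apply); hence $\Gamma\cup\Gamma'$ is too, \PZero re-applies and yields $(\emptyset,\Gamma\cup\Gamma')$, with $\tewf{\Gamma\cup\Gamma'}$ supplied by hypothesis. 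The binder rules \PIn, \PNew, \PLet extend the environment with $x:\L$, $n:\noncetypelab{l}{\oneorinf}{n}$, or $x:T$; the disjointness hypothesis is maintained because the fresh bound name or variable lies in $\bvars{P}\cup\bvars{Q}\cup\nnames{P}\cup\nnames{Q}$, assumed disjoint from $\dom{\Gamma'}$, and well-formedness of the extended environment is re-established by \GVar, \GNonce, and \GKey. All remaining rules (\POut, \PIfL, \PIfS, and the other conditionals and lets) merely recombine the constraint sets returned by the sub-derivations through $\UnionAll$, $\UnionCart$, and $\cup$, which commute with the map $(c,\Gamma_c)\mapsto(c,\Gamma_c\cup\Gamma')$.

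For the outer inductive step I would choose a variable $x$ with $\Gamma'(x)=T_a\orT T_b$ a top-level union, and apply the outer induction hypothesis to the two environments obtained by splitting off one disjunct, i.e. re-typing $x$ to $T_a$ and to $T_b$, each of which has strictly fewer top-level $\orT$-connectives than $\Gamma'$. Since $x\in\dom{\Gamma'}$ is disjoint from $\dom{\Gamma}$, we have $(\Gamma\cup\Gamma')(x)=T_a\orT T_b$, and the two results are exactly the premises of \POr on $x$; applying \POr reassembles $\teqPnew{\Gamma\cup\Gamma'}{P}{Q}{\cdot}$. Because $\branch{T_a\orT T_b}=\branch{T_a}\cup\branch{T_b}$, the union of the two constraint sets produced is exactly $\{(c,\Gamma_c\cup\Gamma'')\mid (c,\Gamma_c)\in C,\ \Gamma''\in\branch{\Gamma'}\}$, namely $C'$.

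I expect the real bookkeeping to sit in the \PPar case of the branch-free inner induction, where the constraint set is $C_1\UnionCart C_2$ and I must check that the map $(c,\Gamma_c)\mapsto(c,\Gamma_c\cup\Gamma')$ commutes with $\UnionCart$. This is where Lemma~\ref{lem-proof:env-constraints-bound} is indispensable: it bounds the domains of the constraint environments $\Gamma_1,\Gamma_2$ by $\dom{\Gamma}\cup\bvars{\cdot}\cup\nnames{\cdot}$, which is disjoint from $\dom{\Gamma'}$. Hence $(\Gamma_1\cup\Gamma')\cup(\Gamma_2\cup\Gamma')=(\Gamma_1\cup\Gamma_2)\cup\Gamma'$, and $\Gamma_1\cup\Gamma'$ is compatible with $\Gamma_2\cup\Gamma'$ exactly when $\Gamma_1$ is compatible with $\Gamma_2$; the constraint-set identity for \PPar then follows, and with it the whole induction. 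Note that the global consistency of the branch choice across a parallel composition is automatic in this framing, since each invocation of the base case fixes a single branch-free $\Gamma'$ that is attached uniformly at every \PZero leaf, while the outer \POr induction is what unions over all of $\branch{\Gamma'}$.
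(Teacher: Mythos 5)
Your proof is correct; the ingredients are the paper's (induction over the process typing derivation, \POr to split the union types of $\Gamma'$, Lemma~\ref{lem-proof:env-constraints-bound} for the disjointness needed at \PPar), but the decomposition is genuinely different. The paper runs a \emph{single} induction on the derivation and performs the splitting locally at each \PZero leaf: since \PZero demands an environment that is its own only branch, \POr is applied there to break the unions of $\Gamma'$, which is what deposits all of $\branch{\Gamma'}$ into the constraint environments. You instead hoist every \POr application to the root, via an outer induction on the number of top-level $\orT$-connectives of $\Gamma'$ --- in effect re-proving Lemma~\ref{lem-proof:process-all-branches} inline --- leaving a branch-free inner induction. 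The trade-off shows at \PPar: in the paper's one-pass arrangement the induction hypothesis already carries all branches of $\Gamma'$ on both sides, and one must notice that the compatibility condition built into $\UnionCart$ silently matches branches across the two components (two environments $\Gamma_1\cup\Gamma''_1$ and $\Gamma_2\cup\Gamma''_2$ with $\Gamma''_1,\Gamma''_2\in\branch{\Gamma'}$ share all of $\dom{\Gamma'}$, so compatibility forces $\Gamma''_1=\Gamma''_2$, and Lemma~\ref{lem-proof:env-constraints-bound} reduces the rest to compatibility of $\Gamma_1$ and $\Gamma_2$); your factoring makes exactly this point trivial, as your closing remark observes, because the inner pass attaches one fixed branch-free environment uniformly at every leaf. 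What the paper's version buys is brevity, with the only delicate point confined to \PZero; what yours buys is that the commutation facts for $\cup$, $\UnionAll$, $\UnionCart$ need only be checked for a branch-free attachment, at the cost of an extra induction layer and of applying \POr to an arbitrary variable of the environment rather than the last one --- a flexibility the paper itself exploits, e.g.\ in the \PZero case of Theorem~\ref{lem-proof:typing-process-star}. Two checks you leave implicit are indeed routine: your measure strictly decreases ($1+n_a+n_b$ drops to $n_a$ when a disjunct is selected), and $\tewf{\Gamma\cup\Gamma'_a}$ follows from $\tewf{\Gamma\cup\Gamma'}$ because $\keys{T_a}\subseteq\keys{T_a\orT T_b}$ in rule \GVar.
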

\begin{proof}
\begin{itemize}
\item The first point is immediate by induction on the type derivation.
\item The second point is immediate by examining the typing rules for destructors.
\item The third point is immediate by induction on the type derivation of the processes.
In the \PZero case, to satisfy the condition that the environment is its own only branch, rule \POr needs to be applied first, in order to split all the
union types in $\Gamma'$, which yields the environments $\branch{\Gamma\cup\Gamma'}$ in the constraints.
\end{itemize}
\end{proof}

\begin{lemma}[Consistency for Subsets]
\label{lem-proof:cons-subset}
The following statements about constraints hold:
\begin{enumerate}
\item If $(c,\Gamma)$ is consistent, 
and $c' \subseteq c$ then $(c',\Gamma)$ is consistent. 
\item Let $C$ be a consistent constraint set. 
Then every subset
 $C' \subseteq C$ is also consistent.
\item If $C \UnionAll c'$ is consistent 
then $C$ also is.
\item If $C_1 \subseteq C_2$ and $C'_1 \subseteq C'_2$, then $C_1 \UnionCart C'_1 \subseteq C_2 \UnionCart C'_2$.
\item $\inst{\cdot}{\sigma}{\sigma'}$ commutes with $\cup$, $\UnionCart$, $\UnionAll$, \ie for all $C$, $C'$,
$\sigma$, $\sigma'$,
$\inst{C \UnionCart C'}{\sigma}{\sigma'} = \inst{C}{\sigma}{\sigma'} \UnionCart \inst{C'}{\sigma}{\sigma'}$
and similarly for $\cup$, $\UnionAll$.
\item If $\sigma_1$ and $\sigma_1'$ are ground and have disjoint domains, as well as $\sigma_2$ and $\sigma_2'$,
then for all $c$,
$\inst{\inst{c}{\sigma_1}{\sigma_2}}{\sigma_1'}{\sigma_2'} = \inst{c}{\sigma_1 \cup \sigma_1'}{\sigma_2 \cup \sigma_2'}$
\item if $C$ is consistent, 
if $\wtc{\Delta}{\E}{\E'}{\sigma}{\sigma'}{\Gamma}{c}$ for some $c$,
and if for all $(c', \Gamma') \in C$, $\Gamma \subseteq \Gamma'$, then
$\inst{C}{\sigma}{\sigma'}$ is consistent. 


\end{enumerate}
\end{lemma}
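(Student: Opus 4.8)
The plan is to prove the seven points in sequence, reusing the earlier ones; points~1--6 are bookkeeping, and point~7 carries the real content. For point~1 I would simply unfold the definition of consistency: consistency of $(c,\Gamma)$ already quantifies over \emph{all} subsets of $c$, so for $(c',\Gamma)$ with $c' \subseteq c$ it suffices to note that any $c'' \subseteq c'$ also satisfies $c'' \subseteq c$, whence the static equivalence of the associated frames is a direct instance of the consistency of $(c,\Gamma)$. Point~2 is immediate, since a constraint set is consistent exactly when each of its elements is, and every element of $C' \subseteq C$ is an element of $C$. Point~3 then follows from point~1: for each $(c,\Gamma)\in C$ the pair $(c\cup c',\Gamma)$ lies in $C \UnionAll c'$ and is therefore consistent, and since $c \subseteq c \cup c'$, point~1 gives consistency of $(c,\Gamma)$.

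Points~4--6 are purely syntactic. For point~4 I would take an arbitrary element of $C_1 \UnionCart C_1'$, witnessed by some $(c,\Gamma)\in C_1$ and $(c',\Gamma')\in C_1'$ with $\Gamma,\Gamma'$ compatible; the same witnesses lie in $C_2$ and $C_2'$, so the element lies in $C_2 \UnionCart C_2'$. Point~5 reduces to the observation that $\inst{\cdot}{\sigma}{\sigma'}$ acts constraint-by-constraint, hence $\inst{(c\cup c')}{\sigma}{\sigma'} = \inst{c}{\sigma}{\sigma'} \cup \inst{c'}{\sigma}{\sigma'}$; commutation with $\cup$ and $\UnionCart$ is then routine, and the case of $\UnionAll$ follows since $C \UnionAll c' = C \UnionCart \{(c',\emptyset)\}$. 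For point~6 I would argue on a single term: applying $\sigma_1$ and then $\sigma_1'$ to a left term equals applying $\sigma_1 \cup \sigma_1'$, because $\sigma_1$ is ground (its images contain no variable for $\sigma_1'$ to act on) and the domains are disjoint; the right side is symmetric.

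The main obstacle is point~7. I would fix an element $(\inst{c'}{\sigma}{\sigma'},\Gamma')$ of $\inst{C}{\sigma}{\sigma'}$, coming from $(c',\Gamma')\in C$, which is consistent by hypothesis. To establish consistency of $(\inst{c'}{\sigma}{\sigma'},\Gamma')$ I would take an arbitrary subset $d \subseteq \inst{c'}{\sigma}{\sigma'}$, which is necessarily of the form $d = \inst{c''}{\sigma}{\sigma'}$ for some $c'' \subseteq c'$, together with an environment and ground well-typed substitutions $\rho,\rho'$ as prescribed by the consistency definition, and show static equivalence of the two frames. The key manoeuvre is to merge $\sigma$ with $\rho$ and $\sigma'$ with $\rho'$. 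Since $\sigma,\sigma'$ are ground with domain $\dom{\onlyvar{\Gamma}}$ and the variables occurring in $d$ are disjoint from this domain, I may restrict $\rho,\rho'$ to the variables actually appearing in the frames — which does not change the frames — so as to make the domains disjoint; point~6 then gives $\phiL{d}\rho = \phiL{c''}(\sigma\cup\rho)$ and $\phiR{d}\rho' = \phiR{c''}(\sigma'\cup\rho')$.

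It then remains to verify that $\sigma\cup\rho$ and $\sigma'\cup\rho'$, restricted to $\var{c''}$, are ground and well-typed in $\Gamma'$: the $\rho,\rho'$ part is well-typed by choice, and the $\sigma,\sigma'$ part is well-typed in $\Gamma$, hence — as $\Gamma \subseteq \Gamma'$ and by Lemma~\ref{lem-proof:typing-contextinclusion}, typing being preserved under extension of the environment — well-typed in $\Gamma'$. Because $\phiEnew{\Gamma'}$ and $\E_{\Gamma'}$ are the same attacker frame and name set appearing in both consistency statements, invoking consistency of $c'$ in $\Gamma'$ with the merged substitutions yields precisely the static equivalence required for $\inst{c'}{\sigma}{\sigma'}$. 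The delicate part is the bookkeeping of domains: making the subset $c''$, the restricted variable environment, and the domain of the merged substitution all match the quantifiers of the consistency definition, and confirming that restricting $\rho,\rho'$ leaves the frames unchanged, which holds since a frame depends only on the substitution restricted to the variables it mentions.
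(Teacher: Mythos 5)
Your proposal is correct and follows essentially the same route as the paper: points~1--6 are dispatched definitionally exactly as in the paper (including deriving point~3 from point~1 via $(c\cup c',\Gamma)\in C\UnionAll c'$), and for point~7 the paper performs the same merging manoeuvre, composing the original substitutions with the challenge ones (noting $\sigma\theta = \sigma\cup\theta|_{\dom{\Gamma_1}\backslash\dom{\Gamma}}$ by groundness, where you instead restrict $\rho,\rho'$ up front), checking well-typedness of the merged pair in $\Gamma_1\cup\Gamma\subseteq\Gamma''$, and observing that the attacker frame and name set are unchanged since $\novar{\Gamma_1}=\novar{\Gamma''}$. Your domain bookkeeping matches what the paper does implicitly, so there is nothing to add.
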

\begin{proof}
Points 1 and 2 follow immediately from the definition of consistency and of static equivalence.

Point 3 follows from the point 1: for every $(c,\Gamma) \in C$,
$(c \cup c',\Gamma)$ is in $C \UnionAll c'$; which is consistent since $C \UnionAll c'$ is;
therefore $(c,\Gamma)$ also is.

Point 4 follows from the definition of $\UnionCart$. If $(c, \Gamma) \in C_1 \UnionCart C'_1$, there exists
$(c_1,\Gamma_1) \in C_1$, $(c'_1,\Gamma'_1) \in C'_1$ such that
$(c, \Gamma) = (c_1 \cup c'_1, \Gamma_1 \cup \Gamma'_1)$ (and $\Gamma_1$, $\Gamma'_1$ are compatible).
Since $C_1 \subseteq C_2$, $(c_1,\Gamma_1) \in C_2$. Similarly, $(c'_1,\Gamma'_1) \in C'_2$.
Therefore $(c, \Gamma) \in C_2 \UnionCart C'_2$.

Points 5 and 6 follow from the definitions of $\inst{\cdot}{\sigma}{\sigma'}$, $\UnionCart$, $\UnionAll$.

\medskip

Point 7 follows from the definitions of $\inst{\cdot}{\sigma}{\sigma'}$, and of consistency.
Indeed, let $(c'',\Gamma'')\in \inst{C}{\sigma}{\sigma'}$.
There exists $c'''$ such that $c'' = \inst{c'''}{\sigma}{\sigma'}$, and $(c''',\Gamma'')\in C$.
Let $c_1\subseteq c''$ and $\Gamma_1\subseteq \Gamma''$ such that $\novar{\Gamma_1}=\novar{\Gamma''}$ and
$\var{c_1}\subseteq\dom{\Gamma_1}$.
Let $\theta$, $\theta'$ be well-typed in $\Gamma_1$.
Since $c'' = \inst{c'''}{\sigma}{\sigma'}$, there exists $c_2\subseteq c'''$ such that $c_1 = \inst{c_2}{\sigma}{\sigma'}$.
If we show that $\sigma\theta$ and $\sigma'\theta'$ are well-typed in $\Gamma_1\cup\Gamma$,
it will follow from the consistency of $C$ that
$\NEWN{\E_{\Gamma_2}}.(\phiEnew{\Gamma_2} \cup \phiL{c_2}\sigma\theta)$ and 
$\NEWN{\E_{\Gamma_2}}.(\phiEnew{\Gamma_2} \cup \phiR{c_2}\sigma'\theta')$ are statically equivalent,
where $\Gamma_2 = \Gamma_1\cup\Gamma \subseteq \Gamma''$.
Since $\novar{\Gamma_1}=\novar{\Gamma''}$ and $\Gamma\subseteq \Gamma''$, we have $\E_{\Gamma_2} = \E_{\Gamma_1}$,
and $\phiEnew{\Gamma_2} = \phiEnew{\Gamma_1}$.
Hence 
$\NEWN{\E_{\Gamma_1}}.(\phiEnew{\Gamma_1} \cup \phiL{c_1}\theta)$ and 
$\NEWN{\E_{\Gamma_1}}.(\phiEnew{\Gamma_1} \cup \phiR{c_1}\theta')$ are statically equivalent,

It only remains to be proved that $\sigma\theta$ and $\sigma'\theta'$ are well-typed in $\Gamma_2$.

Since $\sigma$ is ground, $\sigma\theta = \sigma \cup \theta|_{\dom{\Gamma_1}\backslash\dom{\Gamma}}$, and similarly for
$\sigma'\theta'$. Hence, since $\sigma$, $\sigma'$ are well-typed in $\Gamma$, and $\theta$, $\theta'$ are well-typed in $\Gamma_1$, their compositions also are, which concludes the proof.
\end{proof}

\begin{lemma}[Environments in constraints contain a branch of the typing environment]
\label{lem-proof:env-const-branch}
For all $\Gamma$, 
$C$, for all processes $P$, $Q$,
if $\teqP{\Gamma}{\Delta}{\E}{\E'}{P}{Q}{C}$ then
for all $(c,\Gamma') \in C$,
there exists $\Gamma'' \in \branch{\Gamma}$ such that $\Gamma'' \subseteq \Gamma'$.
\end{lemma}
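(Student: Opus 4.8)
The plan is to argue by structural induction on the derivation of $\teqP{\Gamma}{\Delta}{\E}{\E'}{P}{Q}{C}$, establishing for every pair $(c,\Gamma')\in C$ the existence of some $\Gamma''\in\branch{\Gamma}$ with $\Gamma''\subseteq\Gamma'$ (environments being read as sets of bindings, so $\subseteq$ means inclusion of the graphs). The only base case is \PZero, where $C=\{(\emptyset,\Gamma)\}$ and the side condition $\branch{\Gamma}=\{\Gamma\}$ guarantees that $\Gamma$ is free of union types; then taking $\Gamma''=\Gamma=\Gamma'$ settles the claim immediately.

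For the inductive step I would group the rules according to how they act on the environments recorded in the constraint set. A first group consists of the rules that leave these environments untouched, merely adding constraints via $\UnionAll$ or selecting a single sub-derivation typed in the very same $\Gamma$: \POut, \PIfL, \PIfLR, \PIfS, \PLetLR, \PIfI and \PIfLRp. In each of these the environments occurring in $C$ are exactly those occurring in the premise's constraint set, so the claim transfers verbatim from the induction hypothesis. A second group combines two constraint sets typed in the same $\Gamma$: for \PPar the environment of a resulting element is a union $\Gamma_1\cup\Gamma_2$ of two compatible environments, and the branch $\Gamma''\subseteq\Gamma_1$ supplied by the induction hypothesis is a fortiori contained in $\Gamma_1\cup\Gamma_2$; for the union cases \PIfLRinf and \PIfP, each element of $C\cup C'$ stems from one of the two premises, both typed in $\Gamma$, so the induction hypothesis applies directly.

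The two interesting groups are those that modify $\Gamma$. The rules \PIn, \PNew and \PLet extend the environment with a fresh binding ($x:\L$, $n:\noncetypelab{l}{\oneorinf}{n}$, or $x:T$ for the then-branch, where $T$ may well be a union type such as $T_0\orT\L$ arising from \DAdecH). Here the induction hypothesis yields a branch of the \emph{extended} environment; I would then use the decomposition $\branch{\Gamma,x:T}=\{\,\Gamma_0\cup\{x:S\}\mid\Gamma_0\in\branch{\Gamma},\,S\in\branch{T}\,\}$ to restrict that branch to $\dom{\Gamma}$, obtaining an element of $\branch{\Gamma}$ which, having dropped only the binding of $x$, is still a subset of $\Gamma'$. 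Finally, \POr is the only rule that introduces a union type into the conclusion environment $\Gamma,x:T\orT T'$; since each element of $C\cup C'$ is typed in $\Gamma,x:T$ or in $\Gamma,x:T'$, and since $\branch{T\orT T'}=\branch{T}\cup\branch{T'}$ gives $\branch{\Gamma,x:T},\branch{\Gamma,x:T'}\subseteq\branch{\Gamma,x:T\orT T'}$, the branch returned by the induction hypothesis is already a branch of the conclusion environment and requires no restriction.

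I expect the only real obstacle to be the bookkeeping around the $\branch{\cdot}$ operator under environment extension: one has to verify carefully the decomposition $\branch{\Gamma,x:T}=\{\,\Gamma_0\cup\{x:S\}\,\}$ and that restricting a branch of $\Gamma,x:T$ back to $\dom{\Gamma}$ lands inside $\branch{\Gamma}$ while preserving the inclusion into $\Gamma'$. This is precisely the point where union types — introduced by \POr and propagated through \PLet — must be handled, and everything else reduces to reading off the shape of $C$ from the definitions of $\UnionAll$ and $\UnionCart$.
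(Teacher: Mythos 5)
Your proof is correct and follows essentially the same route as the paper's: induction on the typing derivation, with \PZero as the base case settled by the side condition $\branch{\Gamma}=\{\Gamma\}$, direct transfer of the induction hypothesis for the rules that keep $\Gamma$ fixed, the inclusion $\branch{\Gamma,x:T_i}\subseteq\branch{\Gamma,x:T_1\orT T_2}$ for \POr, and inclusion into one component of the compatible union for \PPar. The only differences are cosmetic: the paper dispatches \PIn, \PNew and \PLet as ``similar'' remaining cases, whereas you spell out the pointwise decomposition of $\branch{\Gamma,x:T}$ and the restriction back to $\dom{\Gamma}$ (a detail the paper leaves implicit), and your \PPar step ($\Gamma''\subseteq\Gamma_1\subseteq\Gamma_1\cup\Gamma_2$) is marginally more direct than the paper's observation that the two branches must coincide by compatibility.
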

\begin{proof}
We prove this property by induction on the type derivation of $\teqP{\Gamma}{\Delta}{\E}{\E'}{P}{Q}{C}$.
In the \PZero case, $C = \{(\emptyset, \Gamma)\}$, and by assumption $\branch{\Gamma}=\{\Gamma\}$, hence
the claim trivially holds.

In the \PPar case, we have $P = P_1 \PAR P_2$, $Q = Q_1 \PAR Q_2$, and $C = C_1 \UnionCart C_2$ for some $P_1$, $P_2$, $Q_1$, $Q_2$, $C_1$, $C_2$ such that $\teqP{\Gamma}{\Delta}{\E}{\E'}{P_1}{Q_1}{C_1}$
and $\teqP{\Gamma}{\Delta}{\E}{\E'}{P_2}{Q_2}{C_2}$.
Thus any element of $C$ is of the form $(c_1 \cup c_2, \Gamma_1 \cup \Gamma_2)$ where $(c_1, \Gamma_1) \in C_1$, 
$(c_2, \Gamma_2) \in C_2$, and $\Gamma_1$, $\Gamma_2$ are compatible.
By the induction hypothesis, both $C_1$ and $C_2$ contain a branch of $\Gamma$.
The claim holds, as these are necessarily the same branch, since $\Gamma_1$ and $\Gamma_2$ are compatible.

\medskip

In the \POr case, we have $\Gamma = \Gamma'', x:T_1\orT T_2$ for some $x$, $\Gamma''$, $T_1$, $T_2$ such that
$\teqP{\Gamma'', x:T_1}{\Delta}{\E}{\E'}{P}{Q}{C_1}$ and $\teqP{\Gamma'', x:T_2}{\Delta}{\E}{\E'}{P}{Q}{C_2}$, and $C = C_1 \cup C_2$.
Thus by the induction hypothesis, if $(c,\Gamma') \in C_i$ (for $i \in\{1,2\}$), then $\Gamma'$ contains some $\Gamma''' \in \branch{\Gamma'', x:T_i} \subseteq \branch{\Gamma}$, and the claim holds.

\medskip

In the \POut case, there exist $P'$, $Q'$, $M$, $N$, $C'$, $c$ such that $P = \OUT{M}. P'$, $Q = \OUT{N}. Q'$,
$C = C' \UnionAll c$, $\teqTc{\Gamma}{}{}{}{M}{N}{\L}{c}$ and $\teqP{\Gamma}{}{}{}{P'}{Q'}{C'}$.
If $(c',\Gamma')\in C$, by definition of $\UnionAll$ there exists $c''$ such that $(c'',\Gamma')\in C'$ and
$c' = c\cup c''$.
Hence by applying the induction hypothesis to $\teqP{\Gamma}{}{}{}{P'}{Q'}{C'}$, there exists $\Gamma''\in\branch{\Gamma}$ such that $\Gamma''\subseteq \Gamma'$.

\medskip

In the \PIfL case, there exist $P'$, $P''$, $Q'$, $Q''$, $M$, $N$, $M'$, $N'$, $C'$, $C''$, $c$, $c'$
such that $P = \ITE{M}{M'}{P'}{P''}$, $Q = \ITE{N}{N'}{Q'}{Q''}$,
$C = (C'\cup C'') \UnionAll (c\cup c')$, $\teqTc{\Gamma}{}{}{}{M}{N}{\L}{c}$, $\teqTc{\Gamma}{}{}{}{M'}{N'}{\L}{c'}$,
$\teqP{\Gamma}{}{}{}{P'}{Q'}{C'}$, and $\teqP{\Gamma}{}{}{}{P''}{Q''}{C''}$.
If $(c'',\Gamma')\in C$, by definition of $\UnionAll$ there exist $c'''$, such that $(c''',\Gamma')\in C'\cup C''$ and
$c'' = c'''\cup c\cup c'$.
We write the proof for the case where $(c''',\Gamma')\in C'$, the other case is analogous.
By applying the induction hypothesis to $\teqP{\Gamma}{}{}{}{P'}{Q'}{C'}$, there exists $\Gamma''\in\branch{\Gamma}$ such that $\Gamma''\subseteq \Gamma'$, which proves the claim.

\medskip

All remaining cases are similar. We write the proof for the \PIfLRinf case.
In this case, there exist $P'$, $P''$, $Q'$, $Q''$, $M$, $N$, $M'$, $N'$, $C'$, $C''$, $l$, $l'$, $m$, $n$
such that $P = \ITE{M}{M'}{P'}{P''}$, $Q = \ITE{N}{N'}{Q'}{Q''}$,
$C = C'\cup C''$, $\teqTc{\Gamma}{}{}{}{M}{N}{\LRTn{l}{\infty}{m}{l'}{\infty}{n}}{\emptyset}$,
$\teqTc{\Gamma}{}{}{}{M'}{N'}{\LRTn{l}{\infty}{m}{l'}{\infty}{n}}{\emptyset}$,
$\teqP{\Gamma}{}{}{}{P'}{Q'}{C'}$, and $\teqP{\Gamma}{}{}{}{P''}{Q''}{C''}$.
If $(c,\Gamma')\in C$, we thus know that $(c,\Gamma')\in C'$ or $(c,\Gamma')\in C''$.
We write the proof for the case where $(c,\Gamma')\in C'$, the other case is analogous.
By applying the induction hypothesis to $\teqP{\Gamma}{}{}{}{P'}{Q'}{C'}$, there exists $\Gamma''\in\branch{\Gamma}$ such that $\Gamma''\subseteq \Gamma'$, which proves the claim.
\end{proof}

\begin{lemma}[All branches are represented in the constraints]
\label{lem-proof:all-branch-const}
For all $\Gamma$, $\Delta$, 
$C$, for all processes $P$, $Q$,
if $\teqP{\Gamma}{\Delta}{\E}{\E'}{P}{Q}{C}$ then for all $\Gamma' \in \branch{\Gamma}$,
there exists $(c,\Gamma'') \in C$,
such that $\Gamma' \subseteq \Gamma''$.
\end{lemma}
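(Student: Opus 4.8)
The plan is to prove the statement by induction on the typing derivation of $\teqP{\Gamma}{\Delta}{\E}{\E'}{P}{Q}{C}$, mirroring the proof of the dual Lemma~\ref{lem-proof:env-const-branch}: whereas that lemma shows every constraint environment \emph{contains} a branch of $\Gamma$, here I must show every branch of $\Gamma$ is \emph{contained in} some constraint environment. In the base case \PZero, the rule's side condition forces $\branch{\Gamma}$ to be the singleton $\{\Gamma\}$, and $(\emptyset,\Gamma)\in C$ with $\Gamma\subseteq\Gamma$ settles it.

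For the bulk of the cases the induction hypothesis transfers almost directly, and the work is to track how $\branch{\Gamma}$ relates to the branch sets of the environments in the premises. The rules that neither change $\Gamma$ nor drop any branch from the constraint set --- namely \POut, \PLetLR, \PIfL, \PIfLR, \PIfS, \PIfLRinf, \PIfP, \PIfI, and \PIfLRp --- all combine the premise constraint sets using $\cup$ or $\UnionAll$, both of which preserve the environment component and only grow (or keep) the family of environments; so for any $\Gamma'\in\branch{\Gamma}$ I apply the hypothesis to a suitable premise and lift the resulting $(c,\Gamma'')$. For \PLet I would simply use the else-branch premise $\teqP{\Gamma}{\Delta}{\E}{\E'}{P'}{Q'}{C'}$, whose environment is exactly $\Gamma$, so the hypothesis yields a witness in $C'\subseteq C$. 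For \PIn and \PNew the environment is extended by a single binding whose type ($\L$, respectively a nonce type) is not a union; hence $\branch{\Gamma,x:\L} = \{\Gamma',x:\L \mid \Gamma'\in\branch{\Gamma}\}$, and applying the hypothesis to $\Gamma',x:\L$ gives a $\Gamma''\supseteq \Gamma',x:\L\supseteq\Gamma'$.

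The genuinely delicate cases are \POr and \PPar. In \POr we have $\Gamma = \Gamma_0,x:T_1\orT T_2$ and $C = C_1\cup C_2$ with premises over $\Gamma_0,x:T_1$ and $\Gamma_0,x:T_2$; the key observation is the set identity $\branch{\Gamma_0,x:T_1\orT T_2} = \branch{\Gamma_0,x:T_1}\cup\branch{\Gamma_0,x:T_2}$, so each $\Gamma'\in\branch{\Gamma}$ lies in one of the two, and the corresponding hypothesis produces a witness in $C_1$ or $C_2$. This is precisely the point where the ``all branches'' direction relies on \POr keeping \emph{both} sub-constraint-sets.

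The case I expect to be the main obstacle is \PPar, where $C = C_1\UnionCart C_2$: applying the hypothesis to both premises for the \emph{same} $\Gamma'$ gives $(c_1,\Gamma_1'')\in C_1$ and $(c_2,\Gamma_2'')\in C_2$ with $\Gamma'\subseteq\Gamma_i''$, but to land in $C$ I must check that $\Gamma_1''$ and $\Gamma_2''$ are compatible, so that $(c_1\cup c_2,\Gamma_1''\cup\Gamma_2'')\in C_1\UnionCart C_2$. Compatibility will follow because on $\dom{\Gamma}$ both restrict to $\Gamma'$ (since $\Gamma'\subseteq\Gamma_i''$ and branches preserve domains), while outside $\dom{\Gamma}$ their domains are disjoint: by Lemma~\ref{lem-proof:env-constraints-bound} the extra bindings of $\Gamma_i''$ lie among the bound variables and names of the $i$-th parallel component, namely $\bvars{P_i}\cup\bvars{Q_i}\cup\nnames{P_i}\cup\nnames{Q_i}$, and these are globally distinct across components. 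Hence $\dom{\Gamma_1''}\cap\dom{\Gamma_2''}\subseteq\dom{\Gamma}$, where the two agree, so they are compatible and $\Gamma'\subseteq\Gamma_1''\cup\Gamma_2''$ completes the case.
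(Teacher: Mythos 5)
Your proposal is correct and follows essentially the same route as the paper's proof: induction on the derivation, with \PZero settled by the side condition $\branch{\Gamma}=\{\Gamma\}$, \POr handled via the identity $\branch{\Gamma_0,x:T_1\orT T_2}=\branch{\Gamma_0,x:T_1}\cup\branch{\Gamma_0,x:T_2}$, and \PPar resolved exactly as in the paper by invoking Lemma~\ref{lem-proof:env-constraints-bound} together with the global disjointness of bound variables and names to establish compatibility of the two witnesses before forming their union in $C_1\UnionCart C_2$. The only difference is cosmetic: you spell out \PIn, \PNew, and \PLet (the latter via the else-branch premise), which the paper dismisses as ``similar'' cases.
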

\begin{proof}
We prove this property by induction on the type derivation of $\teqP{\Gamma}{\Delta}{\E}{\E'}{P}{Q}{C}$.
In the \PZero case, $C = \{(\emptyset, \Gamma)\}$, and by assumption $\branch{\Gamma}=\{\Gamma\}$,
hence the claim trivially holds.

In the \PPar case, we have $P = P_1 \PAR P_2$, $Q = Q_1 \PAR Q_2$, and $C = C_1 \UnionCart C_2$ for some $P_1$, $P_2$, $Q_1$, $Q_2$, $C_1$, $C_2$ such that $\teqP{\Gamma}{\Delta}{\E}{\E'}{P_1}{Q_1}{C_1}$
and $\teqP{\Gamma}{\Delta}{\E}{\E'}{P_2}{Q_2}{C_2}$.
By the induction hypothesis, there exists $(c_1, \Gamma_1) \in C_1$ and $(c_2, \Gamma_2) \in C_2$ such that
$\Gamma' \subseteq \Gamma_1$ and $\Gamma' \subseteq \Gamma_2$.
By Lemma~\ref{lem-proof:env-constraints-bound}, $\dom{\Gamma_1}$ and $\dom{\Gamma_2}$ only contain $\dom{\Gamma} (= \dom{\Gamma'})$ and variables and names in $\bvars{P_1}\cup\bvars{Q_1}\cup\nnames{P_1}\cup\nnames{Q_1}$ and $\bvars{P_2}\cup\bvars{Q_2}\cup\nnames{P_2}\cup\nnames{Q_2}$ respectively.
Since $\Gamma_1(x) = \Gamma_2(x) = \Gamma'(x)$ for all $x \in \dom{\Gamma'}$, and 
since the sets $\bvars{P_1}\cup\bvars{Q_1}\cup\nnames{P_1}\cup\nnames{Q_1}$ and $\bvars{P_2}\cup\bvars{Q_2}\cup\nnames{P_2}\cup\nnames{Q_2}$ are disjoint by well formedness of the processes
$P_1 \PAR P_2$ and $Q_1 \PAR Q_2$, $\Gamma_1$ and $\Gamma_2$ are compatible.
Thus $(c_1 \cup c_2, \Gamma_1 \cup \Gamma_2)\in C_1 \UnionCart C_2 (= C)$, and the claim holds since
$\Gamma' \subseteq \Gamma_1 \cup \Gamma_2$.

\medskip

In the \POr case, we have $\Gamma = \Gamma'', x:T_1\orT T_2$ for some $x$, $\Gamma''$, $T_1$, $T_2$ such that
$\teqP{\Gamma'', x:T_1}{\Delta}{\E}{\E'}{P}{Q}{C_1}$ and $\teqP{\Gamma'', x:T_2}{\Delta}{\E}{\E'}{P}{Q}{C_2}$, and $C = C_1 \cup C_2$.
Since $\branch{\Gamma} = \branch{\Gamma'', x:T_1} \cup \branch{\Gamma'', x:T_2}$, 
we know that $\Gamma' \in \branch{\Gamma'', x:T_i}$ for some $i$.
We conclude this case directly by applying the induction hypothesis to $\teqP{\Gamma'', x:T_i}{\Delta}{\E}{\E'}{P}{Q}{C_i}$.

\medskip

In the \POut case, there exist $P'$, $Q'$, $M$, $N$, $C'$, $c$ such that $P = \OUT{M}. P'$, $Q = \OUT{N}. Q'$,
$C = C' \UnionAll c$, $\teqTc{\Gamma}{}{}{}{M}{N}{\L}{c}$ and $\teqP{\Gamma}{}{}{}{P'}{Q'}{C'}$.
By applying the induction hypothesis to $\teqP{\Gamma}{}{}{}{P'}{Q'}{C'}$, there exists $(c'',\Gamma'')\in C'$ such that
$\Gamma'\subseteq \Gamma''$.
By definition of $\UnionAll$, $(c''\cup c,\Gamma'')\in C$, which proves the claim.

\medskip

In the \PIfL case, there exist $P'$, $P''$, $Q'$, $Q''$, $M$, $N$, $M'$, $N'$, $C'$, $C''$, $c$, $c'$
such that $P = \ITE{M}{M'}{P'}{P''}$, $Q = \ITE{N}{N'}{Q'}{Q''}$,
$C = (C'\cup C'') \UnionAll (c\cup c')$, $\teqTc{\Gamma}{}{}{}{M}{N}{\L}{c}$, $\teqTc{\Gamma}{}{}{}{M'}{N'}{\L}{c'}$,
$\teqP{\Gamma}{}{}{}{P'}{Q'}{C'}$, and $\teqP{\Gamma}{}{}{}{P''}{Q''}{C''}$.
By applying the induction hypothesis to $\teqP{\Gamma}{}{}{}{P'}{Q'}{C'}$, there exists $(c'',\Gamma'')\in C'$ such that $\Gamma'\subseteq \Gamma''$.
By definition of $\UnionAll$, $(c''\cup c \cup c',\Gamma'')\in C$, which proves the claim.

\medskip

All remaining cases are similar. We write the proof for the \PIfLRinf case.
In this case, there exist $P'$, $P''$, $Q'$, $Q''$, $M$, $N$, $M'$, $N'$, $C'$, $C''$, $l$, $l'$, $m$, $n$
such that $P = \ITE{M}{M'}{P'}{P''}$, $Q = \ITE{N}{N'}{Q'}{Q''}$,
$C = C'\cup C''$, $\teqTc{\Gamma}{}{}{}{M}{N}{\LRTn{l}{\infty}{m}{l'}{\infty}{n}}{\emptyset}$,
$\teqTc{\Gamma}{}{}{}{M'}{N'}{\LRTn{l}{\infty}{m}{l'}{\infty}{n}}{\emptyset}$,
$\teqP{\Gamma}{}{}{}{P'}{Q'}{C'}$, and $\teqP{\Gamma}{}{}{}{P''}{Q''}{C''}$.
By applying the induction hypothesis to $\teqP{\Gamma}{}{}{}{P'}{Q'}{C'}$, there exists $(c'',\Gamma'')\in C'$ such that $\Gamma'\subseteq \Gamma''$.

If $(c,\Gamma')\in C$, we thus know that $(c,\Gamma')\in C'$ or $(c,\Gamma')\in C''$.
We write the proof for the case where $(c,\Gamma')\in C'$, the other case is analogous.
By applying the induction hypothesis to $\teqP{\Gamma}{}{}{}{P'}{Q'}{C'}$, there exists $\Gamma''\in\branch{\Gamma}$ such that $\Gamma''\subseteq \Gamma'$, which proves the claim.

\end{proof}

\begin{lemma}[Refinement types]
\label{lem-proof:lr-ground}
For all $\Gamma$, 
for all terms $t$, $t'$, for all $m$, $n$, $a$, $l$, $l'$, $c$, if
$\teqTc{\Gamma}{\Delta}{\E}{\E'}{t}{t'}{\LRTn{l}{a}{m}{l'}{a}{n}}{c}$ then $c = \emptyset$ and
\begin{itemize}
\item either $t = m$, $t' = n$, $a = \infty$ and $\Gamma(m) = \noncetypelab{l}{a}{m}$ and $\Gamma(n) = \noncetypelab{l'}{a}{n}$;
\item or $t = m$, $t' = n$, $a = 1$,
and $(\Gamma(m) = \noncetypelab{l}{a}{m}) \vee (m\in\FN\cup\CST \wedge l = \L)$,
and $(\Gamma(n) = \noncetypelab{l'}{a}{n}) \vee (n\in\FN\cup\CST \wedge l' = \L)$;

\item or $t$ and $t'$ are variables $x,y\in\X$ and there exist labels $l''$, $l'''$, and names $m'$, $n'$ such that
$\Gamma(x) = \LRTn{l}{a}{m}{l''}{a}{n'}$ and $\Gamma(y) = \LRTn{l'''}{a}{m'}{l'}{a}{n}$.
\end{itemize}
In particular if $t$, $t'$ are ground then only the first case can occur.
\end{lemma}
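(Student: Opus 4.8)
The plan is to proceed by induction on the derivation of $\teqTc{\Gamma}{\Delta}{\E}{\E'}{t}{t'}{\LRTn{l}{a}{m}{l'}{a}{n}}{c}$, inspecting the last rule applied. The first observation is that the conclusion is a refinement type, which immediately rules out every rule whose conclusion has another shape: \TNonce, \TNonceL, \TCst, \TPubkey, \TVkey, \TKey, \TPair, \TEnc, \TEncH, \TEncL, \TAenc, \TAencH, \TAencL, \TSignH, \TSignL, \THash, \THashL, \THigh, \TLRp, \TLRLp (whose conclusions are base labels, pairs, (a)symmetric ciphertexts, signatures, or key types), as well as \TOr (whose conclusion is a union type). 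This leaves only \TVar, \TLRone, \TLRinf, \TLRVar, and \TSub. In each of these rules the generated constraint is $\emptyset$ — and in the \TSub case it is inherited unchanged, hence $\emptyset$ by the induction hypothesis — so $c = \emptyset$ always holds.

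For the base rules the three disjuncts fall out directly. \TLRinf forces $a = \infty$, $t = m$, $t' = n$ with $\Gamma(m) = \noncetypelab{l}{\infty}{m}$ and $\Gamma(n) = \noncetypelab{l'}{\infty}{n}$, matching the first disjunct; \TLRone forces $a = 1$, $t = m$, $t' = n$ together with exactly the side conditions of the second disjunct; and \TVar (which applies precisely when $\Gamma(x)$ is itself a refinement type) yields $t = t' = x$ with $\Gamma(x) = \LRTn{l}{a}{m}{l'}{a}{n}$, an instance of the third disjunct with $y = x$. For \TSub I would invoke point~12 of Lemma~\ref{lem-proof:subtyping}, which states that the only subtype of a refinement type is that type itself; hence the premise derives the same judgement with a strictly smaller derivation, and the claim follows from the induction hypothesis.

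The only case requiring genuine bookkeeping is \TLRVar. Here $t = x$ and $t' = y$ are variables, and the two premises derive $\teqTc{\Gamma}{\Delta}{\E}{\E'}{x}{x}{\cdot}{\emptyset}$ and $\teqTc{\Gamma}{\Delta}{\E}{\E'}{y}{y}{\cdot}{\emptyset}$ at singleton refinement types, with the conclusion type taking its left component from the first premise and its right component from the second. I would apply the induction hypothesis to each premise; since $x, y \in \X$ are variables and therefore not names, the first two disjuncts are excluded and the third must hold. Because both sides of each premise coincide ($x$ with $x$, and $y$ with $y$), this pins down $\Gamma(x)$ and $\Gamma(y)$ to be exactly the refinement types occurring in the respective premises. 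Reading off the left component of $\Gamma(x)$ and the right component of $\Gamma(y)$ and matching them against the conclusion type then produces precisely the third disjunct of the statement.

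Finally, the \emph{in particular} clause is immediate: when $t$ and $t'$ are ground they cannot be variables, so the third disjunct is impossible and only the name cases (the first two disjuncts) can occur. I expect the main — indeed essentially the only — delicate point to be the \TLRVar case, where one must carefully track the numerous labels and nonce identifiers of the rule and align them with those of the third disjunct. The remainder of the argument is a routine exclusion of inapplicable rules together with the subtyping fact from Lemma~\ref{lem-proof:subtyping}.
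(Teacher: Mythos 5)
Your proof is correct and follows essentially the same route as the paper's: induction on the typing derivation, narrowing (via the shape of the type and well-formedness of $\Gamma$) to the five rules \TVar, \TLRone, \TLRinf, \TLRVar, and \TSub, handling \TSub through point~12 of Lemma~\ref{lem-proof:subtyping}, and applying the induction hypothesis to the premises in the \TLRVar case. You merely spell out the \TLRVar bookkeeping and the $c=\emptyset$ tracking that the paper leaves implicit, which does not change the approach.
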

\begin{proof}
The proof of this property is immediate by induction on the typing derivation for the terms.
Indeed, because of the form of the type, and by well-formedness of $\Gamma$,
the only rules which can lead to $\teqTc{\Gamma}{\Delta}{\E}{\E'}{t}{t'}{\LRTn{l}{a}{m}{l'}{a}{n}}{c}$ are \TVar, \TLRone, \TLRinf, \TLRVar, and \TSub.

In the \TVar, \TLRone, \TLRinf cases the claim directly follows from the premises of the rule.

In the \TLRVar case, $t$ and $t'$ are necessarily variables, and their types in $\Gamma$ are obtained directly
by applying the induction hypothesis to the premises of the rule.


Finally in the \TSub case, $\teqTc{\Gamma}{\Delta}{\E}{\E'}{t}{t'}{T}{c}$ and $T\subtyp \LRTn{l}{a}{m}{l'}{a}{n}$.
By Lemma~\ref{lem-proof:subtyping}, $T = \LRTn{l}{a}{m}{l'}{a}{n}$ and we conclude by the induction hypothesis.
\end{proof}

\begin{lemma}[Encryption types]
\label{lem-proof:enc-types}
For all environment $\Gamma$, type $T$, key $k\in\K$, messages $M$, $N$, and set of constraints $c$:

\begin{enumerate}
\item If $\teqTc{\Gamma}{\Delta}{\E}{\E'}{M}{N}{\encT{T}{k}}{c}$ then
  \begin{itemize}
    \item either there exist $M'$, $N'$, such that $M = \ENC{M'}{k}$, $N = \ENC{N'}{k}$, and
    $\teqTc{\Gamma}{\Delta}{\E}{\E'}{M'}{N'}{T}{c}$ with a shorter derivation (than the one for
    $\teqTc{\Gamma}{\Delta}{\E}{\E'}{M}{N}{\encT{T}{k}}{c}$);

    \item or $M$ and $N$ are variables.
  \end{itemize}

\item If $\teqTc{\Gamma}{\Delta}{\E}{\E'}{M}{N}{\aencT{T}{k}}{c}$ then
  \begin{itemize}
    \item either there exist $M'$, $N'$, such that $M = \AENC{M'}{\PUBK{k}}$, $N = \AENC{N'}{\PUBK{k}}$, and
    $\teqTc{\Gamma}{\Delta}{\E}{\E'}{M'}{N'}{T}{c}$ with a shorter derivation (than the one for
    $\teqTc{\Gamma}{\Delta}{\E}{\E'}{M}{N}{\aencT{T}{k}}{c}$);

    \item or $M$ and $N$ are variables.
  \end{itemize}

\item If $T \subtyp \L$ and $\teqTc{\Gamma}{\Delta}{\E}{\E'}{\ENC{M}{k}}{N}{T}{c}$ then $T = \L$.

\item If $T \subtyp \L$ and $\teqTc{\Gamma}{\Delta}{\E}{\E'}{\AENC{M}{\PUBK{k}}}{N}{T}{c}$ then $T = \L$.

\item If $\teqTc{\Gamma}{\Delta}{\E}{\E'}{\ENC{M}{k}}{N}{\L}{c}$ then
there exists $N'$ such that $N = \ENC{N'}{k}$, and
  \begin{itemize}
    \item either there exist $T'$ and $c'$ such that
    $\Gamma(k) = \skey{\S}{T'}$, $c = \{\ENC{M}{k}\eqC N\} \cup c'$, and $\teqTc{\Gamma}{\Delta}{\E}{\E'}{M}{N'}{T'}{c'}$;
    \item or there exists $T'$ such that $\Gamma(k) = \skey{\L}{T'}$ and $\teqTc{\Gamma}{\Delta}{\E}{\E'}{M}{N'}{\L}{c}$.
  \end{itemize}

\item If $\teqTc{\Gamma}{\Delta}{\E}{\E'}{\AENC{M}{\PUBK{k}}}{N}{\L}{c}$ then
there exists $N'$ such that $N = \AENC{N'}{\PUBK{k}}$, and
  \begin{itemize}
    \item either there exist $T'$ and $c'$ such that
    $\Gamma(k) = \skey{\S}{T'}$, $c = \{\AENC{M}{\PUBK{k}}\eqC N\} \cup c'$, and $\teqTc{\Gamma}{\Delta}{\E}{\E'}{M}{N'}{T'}{c'}$;
    \item or $k\in\dom{\Gamma}$ and $\teqTc{\Gamma}{\Delta}{\E}{\E'}{M}{N'}{\L}{c}$.
  \end{itemize}

\item The symmetric properties to the previous four points, \ie when the term on the right is an encryption, also hold.
\end{enumerate}
\end{lemma}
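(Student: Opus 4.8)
The plan is to prove the seven statements by structural induction on the term typing derivation, carried out in the order $1,2$, then $3,4$, then $5,6$, and finally $7$, since each later group invokes the earlier ones. Throughout I use the standing convention that $\Gamma$ is well-formed, so that keys carry key types; this immediately rules out spurious rules such as \TKey\ assigning an encryption type to a key. The structural facts of Lemma~\ref{lem-proof:subtyping} (its points on encryption subtyping and on subtypes of $\L$) are the main external ingredient.

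For point~$1$ I inspect the last rule of a derivation of $\teqTc{\Gamma}{\Delta}{\E}{\E'}{M}{N}{\encT{T}{k}}{c}$. The only rules whose conclusion can carry a type of the shape $\encT{T}{k}$ are \TEnc, \TVar, and \TSub\ (every other rule produces a label, pair, refinement, key, or union type). The \TEnc\ case yields the first alternative directly, with a strictly shorter subderivation for the payload; \TVar\ forces $M=N=x$, giving the second alternative. In the \TSub\ case the premise types $M,N$ with some $T'\subtyp\encT{T}{k}$; by Lemma~\ref{lem-proof:subtyping}(8) we have $T'=\encT{T_3}{k}$ with $T_3\subtyp T$, so the induction hypothesis applies to the shorter premise, and one further \TSub\ reestablishes the payload type at $T$. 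A short length count ($|D_2|+1\le|D_1|<|D_1|+1$) confirms the re-derivation is still strictly shorter than the original, which is the only delicate bookkeeping here. Point~$2$ is identical, using Lemma~\ref{lem-proof:subtyping}(9) and \TAenc\ in place of (8) and \TEnc.

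For points~$3$ and~$4$ the left term is fixed to be an encryption. Casing on the last rule: the introduction rules \TEnc/\TAenc\ give $T$ an encryption type, which by Lemma~\ref{lem-proof:subtyping}(10) (resp.\ (11)) is never a subtype of $\L$, contradicting the hypothesis $T\subtyp\L$; \THigh\ gives $T=\H$, not a subtype of $\L$; and \TOr\ gives a union type, excluded by Lemma~\ref{lem-proof:subtyping}(14); so these cases are vacuous. The elimination rules \TEncH/\TEncL\ (resp.\ \TAencH/\TAencL) conclude $T=\L$ outright, and \TSub\ closes via the induction hypothesis together with Lemma~\ref{lem-proof:subtyping}(2). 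Points~$5$ and~$6$ then analyse a derivation of $\teqTc{\Gamma}{\Delta}{\E}{\E'}{\ENC{M}{k}}{N}{\L}{c}$ (resp.\ with $\AENC{M}{\PUBK{k}}$): only \TEncH, \TEncL, and \TSub\ can conclude type $\L$ for an encryption. In the \TEncH\ case I feed the premise $\teqTc{\Gamma}{\Delta}{\E}{\E'}{\ENC{M}{k}}{N}{\encT{T'}{k}}{c'}$ to point~$1$; since $\ENC{M}{k}$ is not a variable this forces $N=\ENC{N'}{k}$ together with the payload typing, matching the $\skey{\S}{T'}$ alternative with $c=\{\ENC{M}{k}\eqC N\}\cup c'$. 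The \TEncL\ case likewise uses point~$1$ to land in the $\skey{\L}{T'}$ alternative, and \TSub\ reduces to the induction hypothesis once point~$3$ (resp.\ $4$) collapses its type to $\L$. Point~$7$ follows because the term typing rules are symmetric under swapping the left and right components (a constraint $M\eqC N$ is merely reversed), so the arguments for points~$3$--$6$ transfer verbatim to the case where the right term is the encryption.

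The main obstacle I anticipate is not conceptual but careful case management: exhaustively confirming, for each point, that no typing rule beyond the handful identified can produce the relevant conclusion, and correctly threading the strictly-shorter-derivation claim of points~$1$ and~$2$ through their \TSub\ cases, which is precisely what makes these points usable as induction engines in the later destructor/decryption lemmas. Selecting the exactly correct clause of Lemma~\ref{lem-proof:subtyping} in each \TSub\ and \TEnc\ case is the one place where a misstep would silently break the argument.
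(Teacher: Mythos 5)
Your overall route is the paper's own: induction on the typing derivation, a case analysis on the last rule, the subtyping inversion lemma driving the \TSub{} cases, points 1--2 bootstrapped into points 5--6, and a symmetry argument for point 7; the shorter-derivation bookkeeping in points 1--2 also matches. The one genuine gap is that your case enumerations in points 3--6 are incomplete, and the missing cases cannot be dismissed by conclusion-shape inspection, which is the only justification you offer. The rules \TEncH, \TEncL, \TAencH, \TAencL, \TLRp{} and \TLRLp{} all have shape-generic conclusions (arbitrary $M$, $N$), so each of them can syntactically end a derivation whose left term is $\ENC{M}{k}$. Your claim in point 5 that only \TEncH, \TEncL{} and \TSub{} can conclude type $\L$ for a symmetric encryption therefore needs an argument: \TAencH{} and \TAencL{} must be excluded because their premise would type $\ENC{M}{k}$ at some type $\aencT{T'}{k'}$, which by your point 2 forces the term to be an asymmetric encryption or a variable, a contradiction; and \TLRLp{} must be excluded because its premise gives the terms a refinement type $\LRTnewnew{\noncetypelab{\L}{\oneorinf}{n}}{\noncetypelab{\L}{\oneorinf}{n}}$, and by the refinement inversion lemma (Lemma~\ref{lem-proof:lr-ground}) terms with such a type are nonces or variables, never encryptions --- a lemma your proposal never invokes. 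The mirror exclusions (\TEncH/\TEncL{} inside point 6, and the swapped versions inside point 7) are needed as well; this is exactly how the paper discharges these cases.

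In points 3--4 the same omission is benign but should still be stated: \TLRp{} concludes with a label in $\{\H,\S\}$, which contradicts $T\subtyp\L$ by points 1 and 3 of Lemma~\ref{lem-proof:subtyping}, while \TLRLp{} concludes $T=\L$ outright, so the claim of point 3 holds in that case directly. With these cases added, your argument goes through and coincides with the paper's proof; indeed your explicit use of point 1 inside the \TEncH/\TEncL{} cases of point 5 spells out a step the paper leaves implicit. But as written, the exhaustiveness claim you yourself identified as the delicate part is exactly where the proof is unsound: the refinement rules and the cross asymmetric/symmetric elimination rules are live cases that require Lemma~\ref{lem-proof:lr-ground} and point 2 respectively to close.
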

\begin{proof}
We prove point 1 by induction on the derivation of $\teqTc{\Gamma}{\Delta}{\E}{\E'}{M}{N}{\encT{T}{k}}{c}$.
Because of the form of the type, and by well-formedness of $\Gamma$,
the only possibilities for the last rule applied are 
\TVar, \TEnc, and \TSub.
The claim clearly holds in the \TVar and \TEnc cases.
In the \TSub case, we have $\teqTc{\Gamma}{\Delta}{\E}{\E'}{M}{N}{T' \subtyp \encT{T}{k}}{c}$,
and by Lemma~\ref{lem-proof:subtyping}, there exists $T'' \subtyp T$ such that $T' = \encT{T''}{k}$.
Therefore, by applying the induction hypothesis to $\teqTc{\Gamma}{\Delta}{\E}{\E'}{M}{N}{T'}{c}$
\begin{itemize}
\item either $M$ and $N$ are either two variables, and the claim holds;
\item or there exist $M'$, $N'$ such that $M = \ENC{M'}{k}$, $N = \ENC{N'}{k}$,
and $\teqTc{\Gamma}{\Delta}{\E}{\E'}{M'}{N'}{T''}{c}$, with a derivation shorter than the one for
$\teqTc{\Gamma}{\Delta}{\E}{\E'}{M}{N}{T'}{c}$. Thus by subtyping (rule \TSub),
$\teqTc{\Gamma}{\Delta}{\E}{\E'}{M'}{N'}{T}{c}$ with a shorter derivation that $\teqTc{\Gamma}{\Delta}{\E}{\E'}{M}{N}{\encT{T}{k}}{c}$, which proves the property.
\end{itemize}

\medskip

Point 2 has a similar proof to point 1.

\bigskip

We now prove point 3 by induction on the proof of
$\teqTc{\Gamma}{\Delta}{\E}{\E'}{\ENC{M}{k}}{N}{T}{c}$.
Because of the form of the terms, the last rule applied can only be
\THigh, \TOr, \TEnc, \TEncH, \TEncL, \TAencH, \TAencL, \TLRp, \TLRLp or \TSub.

The \THigh, \TLRp, \TOr, \TEnc cases are actually impossible by Lemma~\ref{lem-proof:subtyping}, since $T \subtyp \L$.
%
%
In the \TSub case, we have $\teqTc{\Gamma}{\Delta}{\E}{\E'}{\ENC{M}{k}}{N}{T'}{c}$ for some $T'$ such that $T'\subtyp T$.
By transitivity of $\subtyp$, $T'\subtyp \L$, and the induction hypothesis proves the claim.
In all other cases, $T = \L$ and the claim holds.

\medskip

Point 4 has a similar proof to point 3.

\bigskip

We prove point 5 by induction on the proof of
$\teqTc{\Gamma}{\Delta}{\E}{\E'}{\ENC{M}{k}}{N}{\L}{c}$.
Because of the form of the terms and of the type (\ie $\L$) the last rule applied can only be
\TEncH, \TEncL, \TAencH, \TAencL, \TLRLp or \TSub.

The \TLRLp case is impossible, since by Lemma~\ref{lem-proof:lr-ground} it would imply that $\ENC{M}{k}$ is either a variable or a nonce.

In the \TSub case, we have $\teqTc{\Gamma}{\Delta}{\E}{\E'}{\ENC{M}{k}}{N}{T'}{c}$ for some $T'$ such that $T'\subtyp \L$.
By point 3, $T' = \L$, and the premise of the rule thus gives a shorter derivation
of $\teqTc{\Gamma}{\Delta}{\E}{\E'}{\ENC{M}{k}}{N}{\L}{c}$.
The induction hypothesis applied to this shorter derivation proves the claim.

The \TAencH and \TAencL cases are impossible, since the condition of the rule
would then imply $\teqTc{\Gamma}{\Delta}{\E}{\E'}{\ENC{M}{k}}{N}{\aencT{T}{k}}{c'}$ for some $T$, $k$, $c'$, which is not 
possible by point 2.

Finally, in the \TEncH and \TEncL cases, the premises of the rule directly proves the claim.

\medskip

Point 6 has a similar proof to point 5.

\bigskip

The symmetric properties, as described in point 7, have analogous proofs.
\end{proof}

\begin{lemma}[Signature types]
\label{lem-proof:sign-types}
For all environment $\Gamma$, type $T$, key $k\in\K$, messages $M$, $N$, and set of constraints $c$:

\begin{enumerate}
\item If $T \subtyp \L$ and $\teqTc{\Gamma}{\Delta}{\E}{\E'}{\SIGN{M}{k}}{N}{T}{c}$ then $T = \L$.

\item If $\teqTc{\Gamma}{\Delta}{\E}{\E'}{\SIGN{M}{k}}{N}{\L}{c}$ then
there exists $N'$ such that $N = \SIGN{N'}{k}$, and
  \begin{itemize}
    \item either there exist $T'$, $c'$ and $c''$ such that
    $\Gamma(k) = \skey{\S}{T'}$, $c = \{\SIGN{M}{k}\eqC N\} \cup c' \cup c''$, $\teqTc{\Gamma}{\Delta}{\E}{\E'}{M}{N'}{T'}{c'}$, and
    $\teqTc{\Gamma}{\Delta}{\E}{\E'}{M}{N'}{\L}{c''}$;
    \item or there exists $T'$ such that $\Gamma(k) = \skey{\L}{T'}$ and $\teqTc{\Gamma}{\Delta}{\E}{\E'}{M}{N'}{\L}{c}$.
  \end{itemize}

\item The symmetric properties to the previous four points, \ie when the term on the right is a signature, also hold.
\end{enumerate}
\end{lemma}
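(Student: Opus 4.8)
The plan is to prove all three points by induction on the typing derivation, mirroring the structure of the proof of Lemma~\ref{lem-proof:enc-types} (points~3, 5 and the symmetric point~7). The observation that makes the signature case cleaner than the encryption case is that the grammar of Figure~\ref{fig:types} contains \emph{no} signature type: a term of the form $\SIGN{M}{k}$ can therefore only be assigned type $\L$ (via \TSignH or \TSignL), type $\H$ (via \THigh), or a type obtained from these by \TSub or \TOr. In particular \TVar never applies, since the left term is syntactically a signature rather than a variable, and the refinement rules \TLRp and \TLRLp are excluded by Lemma~\ref{lem-proof:lr-ground}, which forces any term carrying a refinement type to be a nonce or a variable.

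For point~1, I would examine the last rule deriving $\teqTc{\Gamma}{\Delta}{\E}{\E'}{\SIGN{M}{k}}{N}{T}{c}$ under the hypothesis $T \subtyp \L$. The cases \TSignH and \TSignL directly give $T = \L$. The case \THigh is impossible, as it would give $T = \H$ and $\H \subtyp \L$ contradicts point~1 of Lemma~\ref{lem-proof:subtyping}; and \TOr is impossible since a union type can only be a subtype of $\H$ or of itself (point~14 of Lemma~\ref{lem-proof:subtyping}), neither compatible with $T \subtyp \L$. In the remaining case \TSub the premise types $\SIGN{M}{k}$ at some $T' \subtyp T \subtyp \L$; the induction hypothesis (applied to a strictly shorter derivation) yields $T' = \L$, and combining $\L \subtyp T$ (point~2 of Lemma~\ref{lem-proof:subtyping}, giving $T = \L$ or $T = \H$) with $T \subtyp \L$ rules out $T = \H$ and forces $T = \L$.

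For point~2, I would again do a case analysis on the last rule deriving $\teqTc{\Gamma}{\Delta}{\E}{\E'}{\SIGN{M}{k}}{N}{\L}{c}$. Since both \TSignH and \TSignL conclude only when the right-hand term is a signature with the same key, each immediately yields $N = \SIGN{N'}{k}$ together with exactly the asserted decomposition: \TSignH gives $\Gamma(k) = \skey{\S}{T'}$, $c = \{\SIGN{M}{k} \eqC N\} \cup c' \cup c''$ with both $\teqTc{\Gamma}{\Delta}{\E}{\E'}{M}{N'}{T'}{c'}$ and $\teqTc{\Gamma}{\Delta}{\E}{\E'}{M}{N'}{\L}{c''}$, while \TSignL gives $\Gamma(k) = \skey{\L}{T'}$ and $\teqTc{\Gamma}{\Delta}{\E}{\E'}{M}{N'}{\L}{c}$. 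The case \TSub reduces to a premise typing $\SIGN{M}{k}$ at some $T' \subtyp \L$; by point~1 already proved $T' = \L$, so the premise is a strictly shorter derivation of the same judgement and the induction hypothesis applies. No other rule can yield conclusion type $\L$ for a term of the form $\SIGN{M}{k}$: \TVar is excluded because the left term is syntactically a signature, and \TLRLp is excluded by Lemma~\ref{lem-proof:lr-ground} since a signature cannot carry a refinement type.

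Point~3 then follows by the symmetric argument, swapping the roles of the left and right terms; this is legitimate precisely because \TSignH and \TSignL constrain \emph{both} operands to be signatures with the same key. The main obstacle I anticipate is bookkeeping rather than conceptual: the \TSub case in point~1 must be closed not only by the induction hypothesis but by a small antisymmetry argument extracted from points~1 and~2 of Lemma~\ref{lem-proof:subtyping}, and one must carefully verify that every derivation fed to the induction hypothesis in the \TSub cases is strictly shorter, so that the induction remains well-founded.
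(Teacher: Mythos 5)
Your overall strategy matches the paper's: induction on the typing derivation, case analysis on the last rule, \textsc{TSub} closed via Lemma~\ref{lem-proof:subtyping}, the refinement rules excluded via Lemma~\ref{lem-proof:lr-ground}, and the symmetric point handled by swapping sides. Your handling of \TSignH, \TSignL, \THigh, \TOr, \TVar and \TSub is correct (indeed, in the \TSub case of point~1 you spell out the antisymmetry step more carefully than the paper does).

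However, your case enumeration has a genuine gap: it omits \TEncH, \TEncL, \TAencH and \TAencL. Your justification for the shortened list --- that the type grammar contains no signature type, so a signature ``can only be assigned type $\L$ via \TSignH or \TSignL'' --- begs the question, because the conclusions of those four rules have the form $\teqTcnew{\Gamma}{M}{N}{\L}{c}$ for \emph{arbitrary} metavariables $M$, $N$; syntactically, each of them is a possible last rule for a judgement $\teqTcnew{\Gamma}{\SIGN{M}{k}}{N}{\L}{c}$. In point~1 this omission is harmless, since those rules conclude type $\L$ anyway (the paper simply folds them into ``in all other cases, $T=\L$''). But in point~2 it is fatal as written: if the derivation ended with, say, \TAencH, then nothing yet forces $N$ to be a signature, and the constraint would decompose as $c = c' \cup \{\SIGN{M}{k} \eqC N\}$ with a premise about $\aencT{T}{k'}$ rather than about the payloads of two signatures, so neither disjunct of your conclusion is established in that branch. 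The paper closes these branches explicitly: by points~1 and~2 of Lemma~\ref{lem-proof:enc-types}, any term typable at $\encT{T}{k'}$ or $\aencT{T}{k'}$ is either a variable or a term headed by the corresponding encryption constructor, so the premise $\teqTcnew{\Gamma}{\SIGN{M}{k}}{N}{\encT{T}{k'}}{c'}$ (resp.\ the asymmetric variant) demanded by each of these four rules is unsatisfiable. Since you already cite Lemma~\ref{lem-proof:enc-types} as your structural model, the repair is a one-line addition --- but without it the induction is not exhaustive and the ``no other rule can yield conclusion type $\L$'' claim is unproven.
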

\begin{proof}
We prove point 1 by induction on the proof of
$\teqTc{\Gamma}{\Delta}{\E}{\E'}{\SIGN{M}{k}}{N}{T}{c}$.
Because of the form of the terms, the last rule applied can only be
\THigh, \TOr, \TEncH, \TEncL, \TAencH, \TAencL, \TSignH, \TSignL, \TLRp, \TLRLp or \TSub.

The \THigh, \TLRp, \TOr cases are actually impossible by Lemma~\ref{lem-proof:subtyping}, since $T \subtyp \L$.
%
%
In the \TSub case, we have $\teqTc{\Gamma}{\Delta}{\E}{\E'}{\SIGN{M}{k}}{N}{T'}{c}$ for some $T'$ such that $T'\subtyp T$.
By transitivity of $\subtyp$, $T'\subtyp \L$, and the induction hypothesis proves the claim.
In all other cases, $T = \L$ and the claim holds.

\bigskip

We prove point 2 by induction on the proof of
$\teqTc{\Gamma}{\Delta}{\E}{\E'}{\SIGN{M}{k}}{N}{\L}{c}$.
Because of the form of the terms and of the type (\ie $\L$) the last rule applied can only be
\TEncH, \TEncL, \TAencH, \TAencL, \TSignH, \TSignL, \TLRLp or \TSub.

The \TLRLp case is impossible, since by Lemma~\ref{lem-proof:lr-ground} it would imply that $\SIGN{M}{k}$ is either a variable or a nonce.

In the \TSub case, we have $\teqTc{\Gamma}{\Delta}{\E}{\E'}{\SIGN{M}{k}}{N}{T'}{c}$ for some $T'$ such that $T'\subtyp \L$.
By point 3, $T' = \L$, and the premise of the rule thus gives a shorter derivation
of $\teqTc{\Gamma}{\Delta}{\E}{\E'}{\SIGN{M}{k}}{N}{\L}{c}$.
The induction hypothesis applied to this shorter derivation proves the claim.

The \TEncH, \TEncL, \TAencH and \TAencL cases are impossible, since the condition of the rule
would then imply $\teqTc{\Gamma}{\Delta}{\E}{\E'}{\SIGN{M}{k}}{N}{\encT{T}{k}}{c'}$ (or $\aencT{T}{k}$) for some $T$, $k$, $c'$, which is not possible by Lemma~\ref{lem-proof:enc-types}.

Finally, in the \TSignH and \TSignL cases, the premises of the rule directly proves the claim.

\bigskip

The symmetric properties, as described in point 3, have analogous proofs.
\end{proof}

\begin{lemma}[Pair types]
\label{lem-proof:pair-types} For all environment $\Gamma$, for all $M$, $N$, $T$, $c$:
\begin{enumerate}
\item For all $T_1$, $T_2$, if $\teqTc{\Gamma}{\Delta}{\E}{\E'}{M}{N}{T_1*T_2}{c}$ then
  \begin{itemize}
    \item either there exist $M_1$, $M_2$, $N_1$, $N_2$, $c_1$, $c_2$ such that
    $M = \PAIR{M_1}{M_2}$, $N = \PAIR{N_1}{N_2}$, $c = c_1 \cup c_2$, and
    $\teqTc{\Gamma}{\Delta}{\E}{\E'}{M_1}{N_1}{T_1}{c_1}$ and $\teqTc{\Gamma}{\Delta}{\E}{\E'}{M_2}{N_2}{T_2}{c_2}$;
    \item or $M$ and $N$ are variables.
  \end{itemize}

\item For all $M_1$, $M_2$, if $T \subtyp \L$ and $\teqTc{\Gamma}{\Delta}{\E}{\E'}{\PAIR{M_1}{M_2}}{N}{T}{c}$ then
either $T = \L$ or there exists $T_1$, $T_2$ such that $T = T_1*T_2$.

\item For all $M_1$, $M_2$, if $\teqTc{\Gamma}{\Delta}{\E}{\E'}{\PAIR{M_1}{M_2}}{N}{\L}{c}$ then
there exist $N_1$, $N_2$, $c_1$, $c_2$, such that $c = c_1 \cup c_2$, $N = \PAIR{N_1}{N_2}$,
$\teqTc{\Gamma}{\Delta}{\E}{\E'}{M_1}{N_1}{\L}{c_1}$ and $\teqTc{\Gamma}{\Delta}{\E}{\E'}{M_2}{N_2}{\L}{c_2}$.

\item The symmetric properties to the previous two points (\ie when the term on the right is a pair) also hold.
\end{enumerate}
\end{lemma}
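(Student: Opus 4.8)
The plan is to prove all four points by induction on typing derivations, mirroring the structure of the inversion lemmas for encryption and signatures (Lemmas~\ref{lem-proof:enc-types} and~\ref{lem-proof:sign-types}) and relying throughout on the subtyping inversion facts collected in Lemma~\ref{lem-proof:subtyping}. I would establish point 1 first, since points 2 and 3 reuse it, and then obtain point 4 by symmetry.

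For point 1, I would induct on the derivation of $\teqTc{\Gamma}{\Delta}{\E}{\E'}{M}{N}{T_1*T_2}{c}$. Since the conclusion type is a pair type, and since key, nonce, refinement, union and $\H$ types are excluded by well-formedness of $\Gamma$ and by the shape of the rules, the only possible last rules are \TPair, \TVar and \TSub. The \TPair case yields the decomposition directly, with $c = c_1 \cup c_2$ and the two component judgements; the \TVar case forces $M = N$ to be the same variable, giving the second disjunct. In the \TSub case the premise carries some type $T' \subtyp T_1 * T_2$, so by Lemma~\ref{lem-proof:subtyping}(5) we have $T' = T_1' * T_2'$ with $T_1' \subtyp T_1$ and $T_2' \subtyp T_2$; applying the induction hypothesis to the premise and then re-subtyping each component with \TSub closes the case.

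For point 2, I would again induct on the derivation. Because the left term $\PAIR{M_1}{M_2}$ is a pair, every rule whose left operand must be an encryption, signature, hash, nonce or constant is inapplicable; \THigh and \TOr are ruled out since $\H$ and union types cannot sit below $\L$ (Lemma~\ref{lem-proof:subtyping}(1,14)) whereas we assume $T \subtyp \L$; and \TLRLp is excluded by Lemma~\ref{lem-proof:lr-ground}, which would force the term to be a variable or nonce. This leaves \TPair, giving $T = T_1 * T_2$ (the second disjunct), and \TSub, handled by transitivity of $\subtyp$ and the induction hypothesis. For point 3, the same inversion shows that the only rule that can assign the exact type $\L$ to a pair term is \TSub; its premise has type $T' \subtyp \L$, and point 2 tells us that either $T' = \L$, in which case the induction hypothesis applies to the shorter derivation, or $T' = T_1 * T_2$ is a pair type. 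In the latter case I invoke point 1 to obtain $N = \PAIR{N_1}{N_2}$ together with component judgements of types $T_1$ and $T_2$ (the variable alternative of point 1 is impossible, as $M$ is a pair), then use Lemma~\ref{lem-proof:subtyping}(6) to get $T_1 \subtyp \L$ and $T_2 \subtyp \L$, and finally \TSub to retype both components at $\L$, recombining the constraints as $c = c_1 \cup c_2$.

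Point 4 is the mirror image of points 2 and 3 with the pair occurring on the right, and its proof is obtained by the symmetric argument. The one genuinely delicate step is the interaction inside point 3: one must be certain that the only route by which a pair term can acquire type $\L$ is a \TSub step from a pair type (via \SPairL) or from $\L$ itself, and that the component constraint sets recombine exactly as $c_1 \cup c_2$. This is precisely where the subtyping inversion lemma and point 1 of the present lemma do the work, so proving point 1 cleanly at the outset is the key enabling step and removes the only real obstacle.
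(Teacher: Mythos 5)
Your proposal is correct and matches the paper's own proof essentially step for step: induction on the typing derivation with the last rule narrowed to \TVar, \TPair, \TSub for point 1 (via Lemma~\ref{lem-proof:subtyping}), the same exclusions of \TEncH, \TEncL, \TAencH, \TAencL, \TLRp, \TLRLp, \THigh, \TOr via Lemmas~\ref{lem-proof:enc-types}, \ref{lem-proof:lr-ground} and~\ref{lem-proof:subtyping} for points 2 and 3, point 3 assembled from points 1 and 2 with \SPairL-style subtyping inversion exactly as the paper does, and point 4 by symmetry. There is no gap.
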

\begin{proof}
Let us prove point 1 by induction on the typing derivation $\teqTc{\Gamma}{\Delta}{\E}{\E'}{M}{N}{T_1*T_2}{c}$.
Because of the form of the type, and by well-formedness of $\Gamma$,
the only possibilities for the last rule applied are 
\TVar, \TPair, and \TSub.

The claim clearly holds in the \TVar and \TPair cases.


In the \TSub case, $\teqTc{\Gamma}{\Delta}{\E}{\E'}{M}{N}{T'}{c}$ for some $T' \subtyp T_1*T_2$, and by
Lemma~\ref{lem-proof:subtyping}, $T' = T_1'*T_2'$ for some $T_1'$, $T_2'$ such that $T_1' \subtyp T_1$ and $T_2' \subtyp T_2$.
Therefore, by applying the induction hypothesis to
$\teqTc{\Gamma}{\Delta}{\E}{\E'}{M}{N}{T_1'*T_2'}{c}$, $M$ and $N$ are either two variables, and the claim holds;
or two pairs, \ie there exist $M_1$, $M_2$, $N_1$, $N_2$, $c_1$, $c_2$ such that $M = \PAIR{M_1}{M_2}$,
$N = \PAIR{N_1}{N_2}$, $c = c_1 \cup c_2$, and for $i \in \{1,2\}$,
$\teqTc{\Gamma}{\Delta}{\E}{\E'}{M_i}{N_i}{T_i'}{c_i}$.
Hence, by subtyping, $\teqTc{\Gamma}{\Delta}{\E}{\E'}{M_i}{N_i}{T_i}{c_i}$,
and the claim holds.

\bigskip

We now prove point 2 by induction on the proof of
$\teqTc{\Gamma}{\Delta}{\E}{\E'}{\PAIR{M_1}{M_2}}{N}{T}{c}$.
Because of the form of the terms, the last rule applied can only be
\THigh, \TOr, \TPair, \TEncH, \TEncL, \TAencH, \TAencL, \TLRp, \TLRLp or \TSub.

The \THigh, \TLRp, 
and \TOr cases are actually impossible by Lemma~\ref{lem-proof:subtyping}, since $T \subtyp \L$.

The \TLRLp and case is also impossible, since by Lemma~\ref{lem-proof:lr-ground} it would imply that $\PAIR{M_1}{M_2}$ is either a variable or a nonce.

The \TEncH, \TEncL, \TAencH, \TAencL cases are impossible, since the condition of the rule
would then imply $\teqTc{\Gamma}{\Delta}{\E}{\E'}{\PAIR{M_1}{M_2}}{N}{\encT{T}{k}}{c'}$ (or $\aencT{T}{k}$) for some $T$, $k$, $c'$, which is not possible by Lemma~\ref{lem-proof:enc-types}.

In the \TPair case, the claim clearly holds.


Finally, in the \TSub case, we have $\teqTc{\Gamma}{\Delta}{\E}{\E'}{\PAIR{M_1}{M_2}}{N}{T'}{c}$ for some $T'$ such that $T'\subtyp T$.
By transitivity of $\subtyp$, $T'\subtyp \L$, and we may apply the induction hypothesis to
$\teqTc{\Gamma}{\Delta}{\E}{\E'}{\PAIR{M_1}{M_2}}{N}{T'}{c}$.
Hence either $T' = \L$ or $T' = T_1'*T_2'$ for some $T_1'$, $T_2'$.
By Lemma~\ref{lem-proof:subtyping}, this implies in the first case that $T = \L$ and in the second case that $T = \L$ or $T$ is also a pair type ($T \neq \H$ and $T \neq \S$ in both cases, since we already know that $T \subtyp \L$).

\bigskip

We prove point 3 as a consequence of the first two points, by induction on the derivation of $\teqTc{\Gamma}{\Delta}{\E}{\E'}{\PAIR{M_1}{M_2}}{N}{\L}{c}$.
The last rule in this derivation can only be \TEncH, \TEncL, \TAencH, \TAencL, \TLRp, \TLRLp or \TSub by the form of the types and terms,
but similarly to the previous point \TEncH, \TEncL, \TAencH, \TAencL, \TLRp and \TLRLp are actually not possible.

Hence the last rule of the derivation is \TSub.
We have $\teqTc{\Gamma}{\Delta}{\E}{\E'}{\PAIR{M_1}{M_2}}{N}{T}{c}$ for some $T$ such that $T\subtyp \L$.
By point 2, either $T = \L$ or there exist $T_1$, $T_2$ such that $T = T_1*T_2$.
If $T = \L$, we have a shorter proof of $\teqTc{\Gamma}{\Delta}{\E}{\E'}{\PAIR{M_1}{M_2}}{N}{\L}{c}$
and we conclude by the induction hypothesis.
Otherwise, since $T \subtyp \L$, by Lemma~\ref{lem-proof:subtyping}, $T_1 \subtyp \L$ and $T_2 \subtyp \L$.
Moreover by the first property,
there exist $N_1$, $N_2$, $c_1$, $c_2$ such that $N = \PAIR{N_1}{N_2}$, $c = c_1 \cup c_2$,
$\teqTc{\Gamma}{\Delta}{\E}{\E'}{M_1}{N_1}{T_1}{c_1}$, and $\teqTc{\Gamma}{\Delta}{\E}{\E'}{M_2}{N_2}{T_2}{c_2}$.

Thus by subtyping, $\teqTc{\Gamma}{\Delta}{\E}{\E'}{M_1}{N_1}{\L}{c_1}$ and
$\teqTc{\Gamma}{\Delta}{\E}{\E'}{M_2}{N_2}{\L}{c_2}$,
which proves the claim.
\end{proof}

\begin{lemma}[Type for keys, nonces and constants]
\label{lem-proof:type-key-nonce}
For all environment $\Gamma$, for all messages $M$, $N$, for all key $k\in \K$, for all nonce or constant 
$n\in\N\cup\CST$, for all $c$, $l$,
the following properties hold:
\begin{enumerate}
\item For all $T$, if $\teqTc{\Gamma}{\Delta}{\E}{\E'}{M}{N}{\skey{l}{T}}{c}$,
then $c = \emptyset$; and either $M = N$ are in $\K$ and $\Gamma(M) = \skey{l}{T}$; or $M$ and $N$ are variables.

\item If $l \in \{\L, \S\}$, and $\teqTc{\Gamma}{\Delta}{\E}{\E'}{k}{N}{l}{c}$,
then $N = k$, $c=\emptyset$, and there exists $T$ such that $\Gamma(k) = \skey{l}{T}$.

\item If $\teqTc{\Gamma}{}{}{}{\PUBK{k}}{N}{\L}{c}$, then
$k \in\dom{\Gamma}$ and $N=\PUBK{k}$.

\item If $\teqTc{\Gamma}{}{}{}{\VK{k}}{N}{\L}{c}$, then
$k \in\dom{\Gamma}$ and $N=\VK{k}$.

\item If $\teqTc{\Gamma}{\Delta}{\E}{\E'}{n}{N}{\S}{c}$,
then $n\in\BN$, $c = \emptyset$ and either $\Gamma(n) = \noncetypelab{\S}{1}{n}$ or $\noncetypelab{\S}{\infty}{n}$.

\item If $\teqTc{\Gamma}{\Delta}{\E}{\E'}{n}{N}{\L}{c}$,
then $N = n$, $c = \emptyset$, and either there exists $a\in\{1,\infty\}$ such that $\Gamma(n) = \noncetypelab{\L}{a}{n}$,
or $n\in\FN\cup\CST$.

\item The symmetric properties to the previous five points (\ie with $k$ (resp. $\PUBK{k}$, $\VK{k}$, $n$) on the right) also hold.
\end{enumerate}
\end{lemma}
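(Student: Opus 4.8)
The plan is to prove each of the seven points by induction on the typing derivation of the hypothesis, performing a case analysis on the last typing rule applied, exactly in the style of the preceding inversion lemmas (Lemmas~\ref{lem-proof:enc-types}, \ref{lem-proof:sign-types}, \ref{lem-proof:pair-types}). In every case the syntactic shape of the left term together with the required type drastically restricts which rules can conclude the judgement, and the workhorse throughout is the subtyping inversion lemma (Lemma~\ref{lem-proof:subtyping}), which I would invoke to discharge each \TSub case. Since all the relevant base rules (\TKey, \TNonce, \TNonceL, \TCst, \TPubkey, \TVkey) produce an empty constraint set, and neither \TSub nor \TOr modifies the constraints, the conclusion $c = \emptyset$ follows immediately once the shape of the derivation is pinned down.

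For the key cases (points 1 and 2), I would observe that a term of type $\skey{l}{T}$ can only be introduced by \TKey, \TVar, or \TSub; by point 15 of Lemma~\ref{lem-proof:subtyping}, $T' \subtyp \skey{l}{T}$ forces $T' = \skey{l}{T}$, so \TSub introduces no new cases and we are left with $M = N = k \in \K$ and $\Gamma(k) = \skey{l}{T}$, or else $M,N$ variables. For point 2, typing a bare key $k$ yields via \TKey the type $\Gamma(k)$, which is some key type $\skey{l'}{T}$; to reach the label $l \in \{\L,\S\}$ by subtyping, points 16 and 17 of Lemma~\ref{lem-proof:subtyping} force $l' = l$, hence $\Gamma(k) = \skey{l}{T}$ and $N = k$, while rule \THigh is excluded since it produces $\H$. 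The public- and verification-key cases (points 3 and 4) are the most direct: $\PUBK{k}$ (resp. $\VK{k}$) can only be typed by \TPubkey (resp. \TVkey), \THigh, or \TSub, and since the target type is $\L$, which is not reachable from $\H$, only the intended rule survives, giving $N = \PUBK{k}$ (resp. $N = \VK{k}$) and $k \in \dom{\Gamma}$.

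For the nonce and constant cases (points 5 and 6), I would use points 16 and 17 of Lemma~\ref{lem-proof:subtyping} to argue that a nonce cannot carry a pair type or a key type, so $T \subtyp \S$ (resp. $T \subtyp \L$) forces $T = \S$ (resp. $T = \L$). Hence the type $\S$ can come only from \TNonce, which requires $\Gamma(n) = \noncetypelab{\S}{a}{n}$ with $n \in \BN$ (constants are typed $\L$ by \TCst, never $\S$), whereas the type $\L$ can come only from \TNonceL or \TCst, both of which force $N = n$ together with the stated conditions on $\Gamma(n)$. Point 7 then follows by the identical argument with the left and right terms exchanged.

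The main obstacle I anticipate is the bookkeeping in the \TSub and \TOr cases: I must confirm that \TOr never applies, since its conclusion is a union type and none of $\skey{l}{T}$, $\L$, $\S$ is a union type, and that every \TSub step is collapsed by the appropriate clause of Lemma~\ref{lem-proof:subtyping} so that the induction hypothesis applies to a strictly shorter derivation with the same term and the same constraints. Once these two points are handled uniformly, the surviving cases are routine readings of the premises of the base rules.
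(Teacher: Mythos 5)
Your overall strategy (induction on the derivation, case analysis on the last rule, with Lemma~\ref{lem-proof:subtyping} collapsing the \TSub steps) is the same as the paper's, but your enumeration of candidate last rules is incomplete, and this is a genuine gap. The rules \TEncH, \TEncL, \TAencH, \TAencL, \TLRp{} and \TLRLp{} have conclusions of the form $\teqTcnew{\Gamma}{M}{N}{\L}{c}$ (or label $l$) for \emph{arbitrary} terms $M$, $N$ --- the constraint on the shape of the term sits in the premise, not the conclusion --- so they are syntactically applicable whenever the target type is $\L$ or $\S$, regardless of whether the left term is a bare key, $\PUBK{k}$, $\VK{k}$, or a nonce. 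Your claims that ``$\PUBK{k}$ can only be typed by \TPubkey, \THigh, or \TSub'' and that ``the type $\S$ can come only from \TNonce'' are therefore false as stated. The paper discharges the encryption rules by inversion: the premise would give $\teqTcnew{\Gamma}{k}{N}{\encT{T}{k'}}{c'}$ (resp.\ $\aencT{T}{k'}$), which by Lemma~\ref{lem-proof:enc-types} forces the left term to be an encryption or a variable, contradicting its actual shape. The refinement rules \TLRp{}/\TLRLp{} are excluded for keys and public/verification keys via Lemma~\ref{lem-proof:lr-ground} (the premise would force $k$ or $\PUBK{k}$ to equal a nonce), but crucially, for points~5 and~6 they are \emph{not} contradictory: a nonce $n$ may well carry a refinement type $\LRTn{\S}{a}{n}{\S}{a}{p}$, and \TLRp{} then concludes $\teqTcnew{\Gamma}{n}{p}{\S}{\emptyset}$. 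This is a genuine case of the induction that your proof never sees; the paper handles it by applying Lemma~\ref{lem-proof:lr-ground} to the premise, which directly yields $\Gamma(n)=\noncetypelab{\S}{a}{n}$, $c=\emptyset$, and the claim.

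Two smaller omissions in the same vein: in point~2, after Lemma~\ref{lem-proof:subtyping} (points 16--17) reduces the \TSub case to $T$ being a pair type, a key type, or $l$ itself, you must still rule out the pair-type option, which requires Lemma~\ref{lem-proof:pair-types} (a term of pair type is a pair or a variable, and $k$ is neither); and in point~5 the conclusion $n\in\BN$ does not come from \TNonce{} alone but from well-formedness of $\Gamma$ (rule \GNonce{} only admits bound nonces into the environment). Your instinct to proceed ``in the style of'' Lemmas~\ref{lem-proof:enc-types}--\ref{lem-proof:pair-types} is right, but those lemmas are not just stylistic precedents here --- they are load-bearing: without invoking them to kill (or, for nonces, to resolve) the elimination-rule cases, the case analysis does not close.
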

\begin{proof}
Point 1 is easily proved by induction on the derivation of
$\teqTc{\Gamma}{\Delta}{\E}{\E'}{M}{N}{\skey{l}{T}}{c}$.
Indeed, by the form of the type the last rule can only be \TKey, \TVar, or \TSub.
In the \TKey and \TVar cases the claim clearly holds.
In the \TSub case, by Lemma~\ref{lem-proof:subtyping}, $\skey{l}{T}$ is its only subtype, thus there exists a shorter
derivation of $\teqTc{\Gamma}{\Delta}{\E}{\E'}{M}{N}{\skey{l}{T}}{c}$, an the claim holds by the induction hypothesis.

\bigskip

We prove point 2 by induction on the derivation of $\teqTc{\Gamma}{\Delta}{\E}{\E'}{k}{N}{l}{c}$.
Because of the form of the terms and type, and by well-formedness of $\Gamma$,
the last rule applied can only be \TEncH, \TEncL, \TAencH, \TAencL, \TLRp, \TLRLp or \TSub.

The \TEncH, \TEncL, \TAencH, \TAencL cases are impossible since they would imply that $\teqTc{\Gamma}{\Delta}{\E}{\E'}{k}{N}{\encT{T'}{k'}}{c'}$
(or $\aencT{T'}{k'}$) for some $T'$, $k'$, $c'$, which is impossible by Lemma~\ref{lem-proof:enc-types}.

The \TLRp and \TLRLp cases are impossible. Indeed in these cases, we have $\teqTc{\Gamma}{\Delta}{\E}{\E'}{k}{N}{\LRTn{l}{a}{m}{l'}{a}{n}}{\emptyset}$ for some $m$, $n$.
Lemma~\ref{lem-proof:lr-ground} then implies that $m = k$ (and $n = N$), which is contradictory.

Finally, in the \TSub case, we have $\teqTc{\Gamma}{\Delta}{\E}{\E'}{k}{N}{T}{c}$ for some $T$ such that $T\subtyp l$.
By Lemma~\ref{lem-proof:subtyping}, this implies that $T$ is either a pair type,
a key type, or $l$. The first case is impossible by Lemma~\ref{lem-proof:pair-types}, since $k\in\K$.
The last case is trivial by the induction hypothesis.
Only the case where $T = \skey{l}{T'}$ (for some $T'$) remains.
By point 1, in that case, since $k$ is not a variable, we have 
$\Gamma(k) = \skey{l}{T'}$ and $k = N$, and therefore the claim holds.

\bigskip

Similarly, we prove point 3 by induction on the derivation of $\teqTc{\Gamma}{\Delta}{\E}{\E'}{\PUBK{k}}{N}{\L}{c}$.
Because of the form of the terms and type, and by well-formedness of $\Gamma$,
the last rule applied can only be \TEncH, \TEncL, \TAencH, \TAencL, \TLRp, \TLRLp, \TPubkey or \TSub.

The \TEncH, \TEncL, \TAencH, \TAencL cases are impossible since they would imply that $\teqTc{\Gamma}{\Delta}{\E}{\E'}{\PUBK{k}}{N}{\encT{T'}{k'}}{c'}$
(or $\aencT{T'}{k'}$) for some $T'$, $k'$, $c'$, which is impossible by Lemma~\ref{lem-proof:enc-types}.

The \TLRp and \TLRLp cases are impossible. Indeed in these cases, we have $\teqTc{\Gamma}{\Delta}{\E}{\E'}{\PUBK{k}}{N}{\LRTn{l}{a}{m}{l'}{a}{n}}{\emptyset}$ for some $m$, $n$.
Lemma~\ref{lem-proof:lr-ground} then implies that $m = \PUBK{k}$ (and $n = N$), which is contradictory.

In the \TSub case, we have $\teqTc{\Gamma}{\Delta}{\E}{\E'}{\PUBK{k}}{N}{T}{c}$ for some $T$ such that $T\subtyp l$.
By Lemma~\ref{lem-proof:subtyping}, this implies that $T$ is either a pair type,
a key type, or $l$. Just as in the previous point, the first case is impossible and the last one is trivial.
The case where $T = \skey{l}{T'}$ (for some $T'$) is also impossible by point 1, since $\PUBK{k}$ is not in $\K\cup\X$.

Finally in the \TPubkey case, the claim clearly holds.

\medskip

Point 4 has a similar proof to point 3.

\bigskip

The remaining properties have similar proofs to point 2.
For point 5, \ie if $\teqTc{\Gamma}{\Delta}{\E}{\E'}{n}{t}{\S}{c}$, only the \TNonce, \TSub, and \TLRp cases are possible.
The claim clearly holds in the \TNonce case.

In the \TLRp case, we have $\teqTc{\Gamma}{\Delta}{\E}{\E'}{n}{t}{\LRTn{\S}{a}{m}{\S}{a}{p}}{\emptyset}$ for some $m$, $p$.
Lemma~\ref{lem-proof:lr-ground} then implies that $m = n$, and $p = t$, and $\Gamma(n)=\noncetypelab{\S}{a}{m}$, and 
$\Gamma(p)=\noncetypelab{\S}{a}{p}$, which proves the claim.

In the \TSub case, $\teqTc{\Gamma}{\Delta}{\E}{\E'}{n}{t}{T}{c}$ for some $T \subtyp \S$, thus by Lemma~\ref{lem-proof:subtyping} $T$ is either a pair type (impossible by Lemma~\ref{lem-proof:pair-types}), a key type (impossible by point 1), or $\S$ (and we conclude by the induction hypothesis).

\medskip

For point 6, similarly, only the \TNonceL, \TCst, \TSub, \TLRLp cases are possible.
The \TSub case is proved in the same way as for the third property.
The \TLRLp case is proved similarly to the previous point.
Finally the claim clearly holds in the \TNonceL and \TCst cases.

\bigskip

The symmetric properties, as described in point 7, have analogous proofs.
\end{proof}

\begin{lemma}[Type $\L$ implies same head symbol]
\label{lem-proof:l-same-head}
For all $\Gamma$, $M$, $N$, $c$,
if $\teqTc{\Gamma}{\Delta}{\E}{\E'}{M}{N}{\L}{c}$ then either $M$ and $N$ have the same head symbol and use the same key if this symbol is $\ENCNA$, $\AENCNA$ or $\SIGN$; or $M$, $N$ both are variables. 
\end{lemma}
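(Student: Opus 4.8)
The plan is to proceed by induction on the derivation of $\teqTc{\Gamma}{\Delta}{\E}{\E'}{M}{N}{\L}{c}$ and to examine the last typing rule applied. By inspection of the rules for messages (Figure~\ref{fig:termtypingrules}), and using well-formedness of $\Gamma$, only a handful of rules can conclude with the type $\L$: namely \TNonceL, \TCst, \TPubkey, \TVkey, \TVar, \TEncH, \TEncL, \TAencH, \TAencL, \TSignH, \TSignL, \THash, \THashL, \TLRLp, and \TSub. Every other rule concludes with a type that is syntactically not $\L$ (a key type, a nonce or refinement type, a pair type, an encryption type, $\H$, or a union type), so these need not be considered.

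First I would dispatch the rules whose conclusion already pins down the shape of both $M$ and $N$. For \TNonceL, \TCst, \TPubkey, and \TVkey the two terms are syntactically equal, so their head symbols coincide trivially; for \TVar both terms are the same variable $x$, so the second disjunct holds. For \TSignH and \TSignL the rule forces $M = \SIGN{M'}{k}$ and $N = \SIGN{N'}{k}$ with the same key $k$, and for \THash and \THashL both terms are hashes, so in each of these cases the head symbols (and the key, where relevant) agree immediately.

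For the encryption and refinement rules I would appeal to the inversion lemmas already established. In the \TEncH and \TEncL cases the premise types $M$, $N$ with $\encT{T}{k}$ (resp.\ $\encT{\L}{k}$), so Lemma~\ref{lem-proof:enc-types}(1) yields that either $M = \ENC{M'}{k}$ and $N = \ENC{N'}{k}$ with the same key $k$, or $M$ and $N$ are both variables; the \TAencH and \TAencL cases are identical using Lemma~\ref{lem-proof:enc-types}(2). In the \TLRLp case the premise assigns the refinement type $\LRTnewnew{\noncetypelab{\L}{\oneorinf}{n}}{\noncetypelab{\L}{\oneorinf}{n}}$, so Lemma~\ref{lem-proof:lr-ground} forces $M$ and $N$ to be the same nonce $n$ or both variables.

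The only case requiring genuine work is \TSub, which is therefore the main obstacle, though it remains routine given the preceding lemmas. Here $\teqTc{\Gamma}{\Delta}{\E}{\E'}{M}{N}{T'}{c}$ for some $T' \subtyp \L$, and Lemma~\ref{lem-proof:subtyping}(16) tells us that $T'$ is either $\L$ itself, a pair type, or a low key type $\skey{\L}{T''}$. If $T' = \L$, the premise is a strictly shorter derivation of the same judgement and the induction hypothesis applies directly. If $T' = T_1 * T_2$, Lemma~\ref{lem-proof:pair-types}(1) gives that $M$ and $N$ are either both pairs, hence share the head symbol $\PAIRNA$, or both variables. If $T' = \skey{\L}{T''}$, Lemma~\ref{lem-proof:type-key-nonce}(1) gives that $M$ and $N$ are the same key in $\K$ or both variables. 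In every case the required disjunction holds, which completes the induction.
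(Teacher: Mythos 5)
Your proof is correct, and it takes a slightly different route from the paper's in the one case that matters. The paper does not prove the statement for type $\L$ directly: it strengthens the induction hypothesis to the more general claim that the conclusion holds whenever $\teqTcnew{\Gamma}{M}{N}{T}{c}$ for \emph{any} $T \subtyp \L$. With that strengthening, the \TSub case becomes trivial (by transitivity $T' \subtyp T \subtyp \L$, so the induction hypothesis applies immediately to the premise), but the price is a wider rule analysis: the generalized statement makes \TKey, \TPair and \TLRVar live cases (handled as ``immediate''), and rules such as \TNonce, \THigh, \TOr, \TLRone, \TLRinf, \TLRp must be dismissed via Lemma~\ref{lem-proof:subtyping} as contradicting $T \subtyp \L$. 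You instead keep the statement at exactly $\L$, which shrinks the list of applicable last rules, and resolve \TSub by inverting the subtype relation with Lemma~\ref{lem-proof:subtyping} (point 16): $T' \subtyp \L$ forces $T'$ to be $\L$, a pair type, or $\skey{\L}{T''}$, and the latter two cases are closed off without the induction hypothesis by the inversion Lemmas~\ref{lem-proof:pair-types} (point 1) and~\ref{lem-proof:type-key-nonce} (point 1). This is non-circular, since both inversion lemmas are established before this one and do not rely on it. The trade-off is mild: the paper's generalization is self-contained within the induction, while your version leans more heavily on the previously proved structural lemmas but proves only what is stated; both yield the same case coverage (your enumeration of the rules concluding $\L$ is complete, given well-formedness of $\Gamma$ to exclude \TKey), and your uses of Lemmas~\ref{lem-proof:enc-types} and~\ref{lem-proof:lr-ground} in the \TEncH, \TEncL, \TAencH, \TAencL and \TLRLp cases match the paper's exactly.
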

\begin{proof}
We prove a slightly more general property: for all $T \subtyp \L$,
if $\teqTc{\Gamma}{\Delta}{\E}{\E'}{M}{N}{T}{c}$ then either $M$ and $N$ have the same head symbol and use the same key if 
this symbol is $\ENCNA$, $\AENCNA$ or $\SIGN$; or $M$, $N$ both are variables.

This is proved by induction on the typing derivation.
Many of the cases for the last rule applied are immediate, since they directly state that the two terms have the same
head symbol (with the same key) or are variables.
This covers rules \TNonceL, \TCst, \TPubkey, \TVkey, \TKey, \TVar, \TPair, \TEnc, \TAenc, \TSignH, \TSignL, \THash, \THashL, \TLRVar.
Among the remaining cases, some are also immediate thanks to the assumption that $T\subtyp \L$, as they contradict it
(which we prove using Lemma~\ref{lem-proof:subtyping}).
This covers rules \TNonce, \THigh, \TOr, \TLRone, \TLRinf, \TLRp.
Moreover in the case of rule \TSub the claim follows directly from the application of the induction hypothesis to the premise of the rule.

Only the cases of rules \TEncH, \TEncL, \TAencH, \TAencL, and \TLRLp remain.
In the \TEncH case, $\teqTc{\Gamma}{\Delta}{\E}{\E'}{M}{N}{\encT{T}{k}}{c}$ for some $T$, $k$, and therefore by Lemma~\ref{lem-proof:enc-types} $M$ and $N$ are either two variables or some terms encrypted with $k$, and in both cases
the claim holds.
The \TEncL, \TAencH, \TAencL cases are similar, using Lemma~\ref{lem-proof:enc-types}.
In the \TLRLp case, $\teqTc{\Gamma}{\Delta}{\E}{\E'}{M}{N}{\LRTn{\L}{a}{m}{\L}{a}{m}}{c}$ for some $m$.
Thus by Lemma~\ref{lem-proof:lr-ground}, either $M$ and $N$ are two variables or $M=N=m$, and in any case the claim holds, which concludes this proof.
\end{proof}

\begin{lemma}[Application of destructors]
\label{lem-proof:app-dest}
For all $\Gamma$, $y$, such that $y\notin\dom{\Gamma}$, for all $d$, $T$, $T'$, $c$, for all ground messages $M$, $N$,
if $\tDestnew{\Gamma, y:T}{d(y)}{T'}$ and $\teqTc{\Gamma}{\Delta}{\E}{\E'}{M}{N}{T}{c}$, then:
\begin{enumerate}
\item\label{item:app_desta} We have:
\[\eval{(d(M))} = \bot \Longleftrightarrow \eval{(d(N))} = \bot\]
\item\label{item:app_destb} And if $\eval{(d(M))} \neq \bot$ then there exists $c' \subseteq c$ such that 
\[\teqTc{\Gamma}{\Delta}{\E}{\E'}{\eval{(d(M))}}{\eval{(d(N))}}{T'}{c'}\]
\end{enumerate}
\end{lemma}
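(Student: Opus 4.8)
The plan is to proceed by a (non-inductive) case analysis on the single destructor rule used to derive $\tDestnew{\Gamma, y:T}{d(y)}{T'}$. Each of the twelve rules of Figure~\ref{fig:destructorrules} simultaneously fixes the destructor $d$, forces the type $T$ assigned to $y$ into a particular shape, and determines the output type $T'$; so the task reduces to treating these twelve cases and, in each, inverting the typing derivation $\teqTc{\Gamma}{\Delta}{\E}{\E'}{M}{N}{T}{c}$ with the matching structural lemma. Throughout I would use that, since $M$ and $N$ are ground messages, $\eval{M} = M$ and $\eval{N} = N$, so whether $\eval{(d(M))} = \bot$ is decided entirely by the top constructor of $M$ (and the key, when $d$ is a decryption or a signature check).

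The \emph{structured} cases \DDecT, \DAdecT, \DFst, and \DSnd are the easiest. Here $T$ is respectively $\encT{T'}{k}$, $\aencT{T'}{k}$, or a pair type, and part~\ref{item:app_desta} is immediate: Lemma~\ref{lem-proof:enc-types}(1--2) (resp. Lemma~\ref{lem-proof:pair-types}(1)), together with the fact that $M$, $N$ are not variables, forces both $M$ and $N$ to carry the expected constructor and key, so the destructor succeeds on both. For part~\ref{item:app_destb}, the same inversion lemma hands back the typing of the extracted subterm with the required constraint set — namely $c' = c$ for the encryption cases and the relevant component $c_1 \subseteq c$ or $c_2 \subseteq c$ for \DFst and \DSnd.

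The \emph{low} cases \DDecH, \DDecL, \DAdecL, \DCheckH, \DCheckL, \DFstL, and \DSndL all have $T = \L$. For part~\ref{item:app_desta} I would invoke Lemma~\ref{lem-proof:l-same-head}: since $M$ and $N$ are ground and typed at $\L$, they share the same head symbol, and the same key when this symbol is an encryption or a signature; as the success of $d$ on a ground message depends only on this data, $d$ succeeds on $M$ iff it succeeds on $N$. For part~\ref{item:app_destb}, when the destructor does succeed I would apply the decomposition lemmas for head-$\L$ terms — Lemma~\ref{lem-proof:enc-types}(5--6), Lemma~\ref{lem-proof:sign-types}(2), or Lemma~\ref{lem-proof:pair-types}(3) — selecting the branch consistent with $\Gamma(k)$ (the $\skey{\L}{\cdot}$ branch for \DDecL, \DAdecL, \DCheckL; the $\skey{\S}{\cdot}$ branch for \DDecH, \DCheckH). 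In each, this yields the result typing with $c' \subseteq c$.

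The one genuinely delicate case, and the expected main obstacle, is \DAdecH, whose output type is the union $T' \orT \L$. Inverting $\teqTc{\Gamma}{\Delta}{\E}{\E'}{M}{N}{\L}{c}$ for the asymmetric ciphertext via Lemma~\ref{lem-proof:enc-types}(6) produces two branches, reflecting whether the ciphertext was typed with \TAencH (payload at the key's payload type, with $c' \subseteq c$) or with \TAencL (payload at $\L$, with $c' = c$); I would then re-form the declared union type by a single application of \TOr in either branch. The delicate points are tracking the constraint subset precisely across the $\{\cdots \eqC N\} \cup c'$ decompositions produced by Lemmas~\ref{lem-proof:enc-types} and~\ref{lem-proof:sign-types}, and reconciling the two sides of the union (treating $\orT$ up to commutativity, so that \TOr can produce $T' \orT \L$ from a derivation of either $T'$ or $\L$). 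All remaining steps are routine and symmetric in $M$ and $N$, so the symmetric halves of the inversion lemmas (their final point) dispatch the mirror-image subcases.
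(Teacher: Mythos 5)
Your proposal is correct and follows essentially the same route as the paper's proof: a case analysis over the destructor typing rules, with part~(1) discharged by Lemma~\ref{lem-proof:l-same-head} in the $T=\L$ cases and by ground-term inversion (Lemmas~\ref{lem-proof:enc-types}, \ref{lem-proof:sign-types}, \ref{lem-proof:pair-types}) in the structured cases, and part~(2) by those same inversion lemmas with identical constraint bookkeeping, including the final \TOr step for \DAdecH (whose reliance on commutativity of $\orT$ the paper elides and you rightly flag). The only cosmetic differences are that you partition by the twelve rules rather than the five destructors, and that part~(1) for \DAdecH --- which also has $T=\L$ and so falls to the same Lemma~\ref{lem-proof:l-same-head} argument as your other low cases --- is left implicit in your write-up.
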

\begin{proof}

We distinguish four cases for $d$.

\begin{itemize}
\item \case{$d = \DEC{\cdot}{k}$.}
We know that $\tDestnew{\Gamma, y:T}{d(y)}{T'}$, which can be proved using either rule \DDecH, rule \DDecL, or rule \DDecT. In the first two cases, $T=\L$, and in the last case $T = \encT{T'}{k}$.

\begin{itemize}
\item Let us prove \ref{item:app_desta}) by contraposition.
Assume $\eval{d(M)} \neq \bot$.
Hence, $M=\ENC{M'}{k}$ for some $M'$. Lemma~\ref{lem-proof:l-same-head} in the \DDecH and \DDecL cases
(where $\teqTc{\Gamma}{\Delta}{\E}{\E'}{M}{N}{\L}{c}$), and Lemma~\ref{lem-proof:enc-types} in the \DDecT case
(where $\teqTc{\Gamma}{\Delta}{\E}{\E'}{M}{N}{\encT{T'}{k}}{c}$), guarantee that there exists $N'$ such that
$N = \ENC{N'}{k}$.
Therefore $\eval{d(N)} \neq \bot$ which proves the first direction of \ref{item:app_desta}). The other direction is analogous.

\item Moreover, still assuming $\eval{d(M)} \neq \bot$, and keeping the notations from the previous point, we have $\eval{d(M)} = M'$ and $\eval{d(N)} = N'$.
The destructor typing rule applied to prove $\tDestnew{\Gamma, y:\L}{d(y)}{T'}$ can be \DDecT, \DDecH, or \DDecL.
\begin{itemize}
\item \case{In the \DDecT case} we have $T=\encT{T'}{k}$ and therefore $\teqTc{\Gamma}{\Delta}{\E}{\E'}{M}{N}{\encT{T'}{k}}{c}$.
Lemma~\ref{lem-proof:enc-types} (point 1) then guarantees that $\teqTc{\Gamma}{\Delta}{\E}{\E'}{M'}{N'}{T'}{c}$, which proves point \ref{item:app_destb}).

\item \case{In the \DDecH case} we have $T=\L$ and $\Gamma(k)=\skey{\S}{T'}$.
Thus, we have $\teqTc{\Gamma}{\Delta}{\E}{\E'}{\ENC{M'}{k}}{\ENC{N'}{k}}{\L}{c}$, and by Lemma~\ref{lem-proof:enc-types} (point 5),
we know that there exists $c'$ such that $c = c' \cup \{M\eqC N\}$ and
$\teqTc{\Gamma}{\Delta}{\E}{\E'}{M'}{N'}{T'}{c'}$, which proves point \ref{item:app_destb}).

\item \case{In the \DDecL case} we have $T = T' =\L$ and there exists $T''$ such that $\Gamma(k)=\skey{\L}{T''}$.
Thus, we have $\teqTc{\Gamma}{\Delta}{\E}{\E'}{\ENC{M'}{k}}{\ENC{N'}{k}}{\L}{c}$, and by Lemma~\ref{lem-proof:enc-types} (point 5),
we know that
$\teqTc{\Gamma}{\Delta}{\E}{\E'}{M'}{N'}{\L}{c}$, which proves point \ref{item:app_destb}).
\end{itemize}
In all cases, point \ref{item:app_destb}) holds, which concludes this case.
\end{itemize}

\item \case{$d = \ADEC{\cdot}{k}$.}
We know that $\tDestnew{\Gamma, y:T}{d(y)}{T'}$, which can be proved using either rule \DAdecH, rule \DAdecL, or rule \DAdecT. In the first two cases, $T=\L$, and in the last case $T = \aencT{T'}{k}$.

\begin{itemize}
\item Let us prove \ref{item:app_desta}) by contraposition.
Assume $\eval{d(M)} \neq \bot$.
Hence, $M=\AENC{M'}{\PUBK{k}}$ for some $M'$. Lemma~\ref{lem-proof:l-same-head} in the \DAdecH and \DAdecL cases
(where $\teqTc{\Gamma}{\Delta}{\E}{\E'}{M}{N}{\L}{c}$), and Lemma~\ref{lem-proof:enc-types} in the \DAdecT case
(where $\teqTc{\Gamma}{\Delta}{\E}{\E'}{M}{N}{\aencT{T'}{k}}{c}$), guarantee that there exists $N'$ such that
$N = \AENC{N'}{\PUBK{k}}$.
Therefore $\eval{d(N)} \neq \bot$ which proves the first direction of \ref{item:app_desta}). The other direction is analogous.

\item Moreover, still assuming $\eval{d(M)} \neq \bot$, and keeping the notations from the previous point, we have $\eval{d(M)} = M'$ and $\eval{d(N)} = N'$.
The destructor typing rule applied to prove $\tDestnew{\Gamma, y:\L}{d(y)}{T'}$ can be \DAdecT, \DAdecH, or \DAdecL.

\begin{itemize}
\item \case{In the \DAdecT case} we have $T=\aencT{T'}{k}$ and thus $\teqTc{\Gamma}{\Delta}{\E}{\E'}{M}{N}{\aencT{T'}{k}}{c}$.
Lemma~\ref{lem-proof:enc-types} then guarantees that $\teqTc{\Gamma}{\Delta}{\E}{\E'}{M'}{N'}{T'}{c}$ which proves the claim.

\item \case{In the \DAdecH case} we have $T = \L$, and there exists $T''$ such that $T' = T'' \orT \L$ and 
$\Gamma(k)=\skey{\S}{T''}$.
Thus, we have $\teqTc{\Gamma}{\Delta}{\E}{\E'}{\AENC{M'}{\PUBK{k}}}{\AENC{N'}{\PUBK{k}}}{\L}{c}$.
Hence by Lemma~\ref{lem-proof:enc-types} (point 5),
we know that either there exists $c'$ such that $c = c' \cup \{M\eqC N\}$ and
$\teqTc{\Gamma}{\Delta}{\E}{\E'}{M'}{N'}{T''}{c'}$; or $\teqTc{\Gamma}{\Delta}{\E}{\E'}{M'}{N'}{\L}{c}$.
By rule \TOr, we then have $\teqTc{\Gamma}{\Delta}{\E}{\E'}{M'}{N'}{T''\orT \L}{c'}$ (resp. $c$),
which proves point \ref{item:app_destb}).

\item \case{In the \DAdecL case} we have $T = T' =\L$ and there exists $T''$ such that $\Gamma(k)=\skey{\L}{T''}$.
Thus, we have $\teqTc{\Gamma}{\Delta}{\E}{\E'}{\AENC{M'}{\PUBK{k}}}{\AENC{N'}{\PUBK{k}}}{\L}{c}$,
and by Lemma~\ref{lem-proof:enc-types} (point 5), we know that
$\teqTc{\Gamma}{\Delta}{\E}{\E'}{M'}{N'}{\L}{c}$, which proves point \ref{item:app_destb}).
\end{itemize}
In all cases, point \ref{item:app_destb}) holds, which concludes this case.
\end{itemize}

\item \case{$d = \CHECK{\cdot}{\VK{k}}$.}
We know that $\tDestnew{\Gamma, y:T}{d(y)}{T'}$, which can be proved using either rule \DCheckH or rule \DCheckL.
In both cases, $T=\L$.

\begin{itemize}
\item Let us prove \ref{item:app_desta}) by contraposition.
Assume $\eval{d(M)} \neq \bot$.
Hence, $M=\SIGN{M'}{k}$ for some $M'$. Lemma~\ref{lem-proof:l-same-head}
(applied to $\teqTc{\Gamma}{\Delta}{\E}{\E'}{M}{N}{\L}{c}$) guarantees that there exists $N'$ such that
$N = \SIGN{N'}{k}$.
Therefore $\eval{d(N)} \neq \bot$ which proves the first direction of \ref{item:app_desta}). The other direction is analogous.

\item Moreover, still assuming $\eval{d(M)} \neq \bot$, and keeping the notations from the previous point, we have $\eval{d(M)} = M'$ and $\eval{d(N)} = N'$.
The destructor typing rule applied to prove $\tDestnew{\Gamma, y:\L}{d(y)}{T'}$ can be \DCheckH or \DCheckL.

\begin{itemize}
\item \case{In the \DCheckH case} we have $T = \L$, and $\Gamma(k)=\skey{\S}{T'}$.
Thus we have $\teqTc{\Gamma}{\Delta}{\E}{\E'}{\SIGN{M'}{k}}{\SIGN{N'}{k}}{\L}{c}$.
Hence by Lemma~\ref{lem-proof:sign-types} (point 2),
we know that there exist $c'$, $c''$ such that $c = c' \cup c'' \cup \{M\eqC N\}$,
$\teqTc{\Gamma}{\Delta}{\E}{\E'}{M'}{N'}{T'}{c'}$, and $\teqTc{\Gamma}{\Delta}{\E}{\E'}{M'}{N'}{\L}{c''}$.
This proves point \ref{item:app_destb}).

\item \case{In the \DCheckL case} we have $T = \L$, and there exists $T''$ such that $\Gamma(k)=\skey{\L}{T''}$.
Hence by Lemma~\ref{lem-proof:sign-types} (point 2),
we know that 
$\teqTc{\Gamma}{\Delta}{\E}{\E'}{M'}{N'}{\L}{c}$.
This proves point \ref{item:app_destb}).
\end{itemize}
In all cases, point \ref{item:app_destb}) holds, which concludes this case.
\end{itemize}

\item \case{$d = \FSTNA$.}
We know that $\tDestnew{\Gamma, y:T}{d(y)}{T'}$, which can be proved using either rule \DFst or \DFstL.
In the first case, $T = T_1 * T_2$ is a pair type, and in the second case $T = \L$.

\begin{itemize}
\item We prove \ref{item:app_desta}) by contraposition.
Assume $\eval{d(M)} \neq \bot$.
Hence, $M=\PAIR{M_1}{M_2}$ for some $M_1, M_2$.
Thus, by applying Lemma~\ref{lem-proof:pair-types} to $\teqTc{\Gamma}{\Delta}{\E}{\E'}{M}{N}{T}{c}$,
in any case we know that there exist $N_1$, $N_2$ such that $N = \PAIR{N_1}{N_2}$.
Therefore $\eval{d(N)} \neq \bot$ which proves the first direction of \ref{item:app_desta}). The other direction is analogous.

\item Moreover, still assuming $\eval{d(M)} \neq \bot$, and keeping the notations from the previous point, we have $\eval{d(M)} = M_1$ and $\eval{d(N)} = N_1$.
In addition, we know that $\tDestnew{\Gamma, y:T}{d(y)}{T'}$, which can be proved using either rule \DFst or \DFstL.
Lemma~\ref{lem-proof:pair-types}, which we applied in the previous point, also implies that
there exist $c_1$, $c_2$, such that $c=c_1\cup c_2$ and for $i\in\{1,2\}$,
$\teqTc{\Gamma}{\Delta}{\E}{\E'}{M_i}{N_i}{T_i}{c_i}$ (in the \DFst case) or
$\teqTc{\Gamma}{\Delta}{\E}{\E'}{M_i}{N_i}{\L}{c_i}$ (in the \DFstL case).

We distinguish two cases for the rule applied to prove $\tDestnew{\Gamma, y:T}{d(y)}{T'}$.

\begin{itemize}
\item \case{\DFst:} Then $T=T_1*T_2$ and $T' = T_1$, and $\teqTc{\Gamma}{\Delta}{\E}{\E'}{M_1}{N_1}{T_1}{c_1 (\subseteq c)}$
proves~\ref{item:app_destb}).
\item \case{\DFstL:} Then $T = T' = \L$, and $\teqTc{\Gamma}{\Delta}{\E}{\E'}{M_1}{N_1}{\L}{c_1 (\subseteq c)}$
proves~\ref{item:app_destb}).
\end{itemize}
In both cases, point \ref{item:app_destb}) holds, which concludes this case.
\end{itemize}

\item \case{$d = \SNDNA$.}
This case is similar to the previous one.
\end{itemize}
\end{proof}

\begin{lemma}[$\L$ type is preserved by attacker terms]
\label{lem-proof:l-type-recipe}
For all $\Gamma$,
for all frames $\psi = \NEWN{\E_\Gamma}.\phi$ and $\psi' = \NEWN{\E_\Gamma}.\phi'$ with
$\teqTc{\Gamma}{\Delta}{\E}{\E'}{\phi}{\phi'}{\L}{c}$, for all attacker term $R$ such that $\var{R} \subseteq \dom{\phi}$,\\
either there exists $c' \subseteq c$ such that
\[
 \teqTc{\Gamma}{\Delta}{\E}{\E'}{\eval{R \phi}}{\eval{R \phi'}}{\L}{c'}
\]
or
\[\eval{R \phi} = \eval{R \phi'} = \bot.\]
\end{lemma}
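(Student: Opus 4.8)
The plan is to proceed by structural induction on the attacker term $R$, following the inductive definition of evaluation $\eval{\cdot}$ and splitting on the top symbol of $R$. Note first that, since $\var{R}\subseteq\dom{\phi}\subseteq\AX$ while $\phi$ maps these variables to ground messages and the symbols in $\FN\cup\CST$ are ground, both $R\phi$ and $R\phi'$ are ground; hence $\eval{R\phi}$ and $\eval{R\phi'}$ are each either a message or $\bot$. For the base cases, if $R=x\in\dom{\phi}$ then $\eval{R\phi}=\phi(x)$ and $\eval{R\phi'}=\phi'(x)$ (messages evaluate to themselves), and the definition of the $\L$-typed substitution $\teqTc{\Gamma}{\Delta}{\E}{\E'}{\phi}{\phi'}{\L}{c}$ directly supplies a $c_x\subseteq c$ with $\teqTc{\Gamma}{\Delta}{\E}{\E'}{\phi(x)}{\phi'(x)}{\L}{c_x}$. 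If $R\in\FN\cup\CST$ then $R\phi=R\phi'=R$ and rule \TCstFN types it at $\L$ with the empty constraint.

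For the constructor cases I would apply the induction hypothesis to each immediate subterm and split on success versus failure. Each evaluation rule for a constructor is defined exactly under side conditions (all arguments $\neq\bot$, and the key argument in $\K$ or of the form $\PUBK{k}$), so whenever a subterm fails on the left the induction hypothesis forces it to fail on the right, and both whole evaluations yield $\bot$. When all subterms succeed at type $\L$ with constraints included in $c$, I would rebuild the typing derivation: for pairs via \TPair and subtyping through \SPairL; for $\HASH{\cdot}$ via \THashL; for $\PUBK{\cdot}$ and $\VK{\cdot}$ via \TPubkey and \TVkey after Lemma~\ref{lem-proof:type-key-nonce} pins the key argument to a single $k\in\dom{\Gamma}$ identical on both sides. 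For $\ENC{R_1}{R_2}$, $\AENC{R_1}{R_2}$ and $\SIGN{R_1}{R_2}$, the key argument is produced by an attacker recipe, hence typed at $\L$; Lemma~\ref{lem-proof:type-key-nonce} then shows $\eval{R_2\phi}=\eval{R_2\phi'}=k$ with $\Gamma(k)=\skey{\L}{T}$ (resp.\ $k\in\dom{\Gamma}$), so the low-key rules \TEncL, \TAencL, \TSignL deliver the required $\L$-typing; if $\eval{R_2\phi}$ is not a key, Lemma~\ref{lem-proof:l-same-head} guarantees $\eval{R_2\phi'}$ is not one either, and both evaluations fail.

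The destructor cases are where the design of the type system must really be exploited, and I expect them to be the main obstacle. For the projections $\FST{R'}$, $\SND{R'}$ I would use the pair-inversion Lemma~\ref{lem-proof:pair-types}: when $\eval{R'\phi}$ is a pair of type $\L$ its components are again $\L$ with constraints drawn from $c$, while Lemma~\ref{lem-proof:l-same-head} synchronizes the pair/non-pair shape across the two frames. The genuinely delicate cases are $\DEC{R_1}{R_2}$, $\ADEC{R_1}{R_2}$ and $\CHECK{R_1}{R_2}$. The crucial observation is that the decryption (resp.\ verification) key is itself the value of an attacker recipe, hence typed at $\L$; by Lemma~\ref{lem-proof:type-key-nonce} it is therefore a single \emph{low} key $k$, equal on both sides, with $\Gamma(k)=\skey{\L}{\cdot}$ (resp.\ a verification key with $k\in\dom{\Gamma}$). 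For decryption, the encryption-inversion Lemma~\ref{lem-proof:enc-types} then shows $\eval{R_1\phi}$ and $\eval{R_1\phi'}$ are encryptions under one and the same key $k'$, so that decryption succeeds on the left exactly when $k'=k$, which is precisely the condition governing the right, synchronizing success and failure; and since the matching key is low, the inversion lemma cannot be in its secret-key branch and must land in the low branch, yielding that the recovered plaintexts are of type $\L$ with a constraint subset of $c$. For signature checking the same synchronization argument applies, with Lemma~\ref{lem-proof:sign-types} giving the $\L$-typing of the checked payload in \emph{both} branches—secret and low signing key—because signatures never hide their content. The only remaining work is the bookkeeping of which constraint subset to select and the careful matching of the evaluation side conditions across the two frames; no idea beyond the inversion lemmas and Lemma~\ref{lem-proof:l-same-head} is required.
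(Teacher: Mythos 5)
Your proposal is correct and takes essentially the same route as the paper's own proof: structural induction on $R$, with keys pinned down via Lemma~\ref{lem-proof:type-key-nonce}, success/failure of evaluation synchronized across the two frames via Lemma~\ref{lem-proof:l-same-head}, and the destructor cases discharged by the inversion Lemmas~\ref{lem-proof:pair-types}, \ref{lem-proof:enc-types} and~\ref{lem-proof:sign-types}, including the two decisive observations that an attacker-supplied decryption key is necessarily a low key (forcing the low branch of the encryption inversion) while for signature checking both the secret- and low-key branches of Lemma~\ref{lem-proof:sign-types} already yield an $\L$-typing of the payload. Nothing is missing beyond the routine constraint-subset bookkeeping you already flag.
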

\begin{proof}
Let us recall that $\E_\Gamma$ denotes the set of names in $\Gamma$.

We show this property by induction over the attacker term $R$.

Induction Hypothesis: the statement holds for all subterms of $R$.
There are several cases for $R$. The base cases are the cases where $R$ is a variable, a name in $\FN$ or a constant in $\CST$.
\begin{enumerate}

\item \case{$R=x$}
 Since $\var{R}\subseteq \dom{\phi}$, we have $x \in \dom{\phi} = \dom{\phi'}$, hence $\eval{R \phi} = \phi(x)$ and $\eval{R \phi'}= \phi'(x)$.
 Since $\teqTc{\Gamma}{\Delta}{\E}{\E'}{\phi}{\phi'}{\L}{c}$, we have $\teqTc{\Gamma}{\Delta}{\E}{\E'}{\phi(x)}{\phi'(x)}{\L}{c_x}$ for some $c_x \subseteq c$, and the claim holds.

\item \case{$R=a$}
 with $a\in \CST\cup\FN$. Then $\eval{R\phi}=\eval{R\phi'}=a$ and by rule \TCst, we have $\teqTc{\Gamma}{\Delta}{\E}{\E'}{a}{a}{\L}{\emptyset}$. Hence the claim holds.

\item \case{$R=\PUBK{K}$}
 We apply the induction hypothesis to $K$ and distinguish three cases.
 \begin{enumerate}
 \item \case{If $\eval{K\phi} = \bot$} then $\eval{K\phi'} = \bot$, hence $\eval{R\phi} = \eval{R\phi'} = \bot$. 
 \item \case{If $\eval{K\phi} \neq \bot$ and is not a key}
	then $\eval{K\phi'} \neq \bot$ (by IH), and
  by IH we have $\teqTc{\Gamma}{\Delta}{\E}{\E'}{\eval{K \phi}}{\eval{K\phi'}}{\L}{c'}$ for some $c' \subseteq c$.
  Then by Lemma~\ref{lem-proof:l-same-head},
   $\eval{K\phi'}$ is not a key either.
	Hence $\eval{R\phi} = \eval{R\phi'} = \bot$.
 \item \case{If $\eval{K\phi}$ is a key}, then by IH there exists $c'\subseteq c$ such that
  $\teqTc{\Gamma}{\Delta}{\E}{\E'}{\eval{K \phi}}{\eval{K\phi'}}{\L}{c'}$.
  Hence by Lemma~\ref{lem-proof:type-key-nonce}
  $\eval{K\phi} = \eval{K\phi'}$, and
  $\Gamma(\eval{K\phi}) = \Gamma(\eval{K\phi'}) = \skey{\L}{T}$ for some $T$.
  Therefore by rule \TPubkey,
  $\teqTc{\Gamma}{\Delta}{\E}{\E'}{\eval{R \phi}}{\eval{R\phi'}}{\L}{\emptyset}$ and the claim holds.
 \end{enumerate}

\item \case{$R=\VK{K}$}
 We apply the induction hypothesis to $K$ and distinguish three cases.
 \begin{enumerate}
 \item \case{If $\eval{K\phi} = \bot$} then $\eval{K\phi'} = \bot$, hence $\eval{R\phi} = \eval{R\phi'} = \bot$. 
 \item \case{If $\eval{K\phi} \neq \bot$ and is not a key}
	then $\eval{K\phi'} \neq \bot$ (by IH), and by IH we have
  $\teqTc{\Gamma}{\Delta}{\E}{\E'}{\eval{K \phi}}{\eval{K\phi'}}{\L}{c'}$ for some $c' \subseteq c$.
  Then, by Lemma~\ref{lem-proof:l-same-head},
  $\eval{K\phi'}$ is not a key either.
	Hence $\eval{R\phi} = \eval{R\phi'} = \bot$.
 \item \case{If $\eval{K\phi}$ is a key}, then by IH there exists $c'\subseteq c$ such that
  $\teqTc{\Gamma}{\Delta}{\E}{\E'}{\eval{K \phi}}{\eval{K\phi'}}{\L}{c'}$.
  Hence by Lemma~\ref{lem-proof:type-key-nonce}
  $\eval{K\phi} = \eval{K\phi'}$, and
  $\Gamma(\eval{K\phi}) = \Gamma(\eval{K\phi'}) = \skey{\L}{T}$ for some $T$.
  Therefore by rule \TVkey,
  $\teqTc{\Gamma}{\Delta}{\E}{\E'}{\eval{R \phi}}{\eval{R\phi'}}{\L}{\emptyset}$ and the claim holds.
 \end{enumerate}

\item \case{$R = \PAIR{R_1}{R_2}$}
 where $R_1$ and $R_2$ are also attacker terms.
 We then apply the induction hypothesis to the same frames and $R_1$, $R_2$.
 We distinguish two cases:
 \begin{enumerate}
 \item \case{$\eval{R_1 \phi} = \bot \vee \eval{R_2 \phi} = \bot$}
  In this case we also have $\eval{R_1 \phi'} = \bot \vee \eval{R_2\phi'} = \bot$ and
  therefore $\eval{R \phi} = \eval{R \phi'} = \bot$.
 \item \case{$\eval{R_1 \phi} \neq \bot \wedge \eval{R_2 \phi} \neq \bot$}
  In this case, by the induction hypothesis, we also have $\eval{R_1 \phi'} \neq \bot \wedge \eval{R_2\phi'} \neq \bot$,
  and we also know that there exist $c_1 \subseteq c$ and $c_2 \subseteq c$ such that
  $\teqTc{\Gamma}{\Delta}{\E}{\E'}{\eval{R_1 \phi}}{\eval{R_1\phi'}}{\L}{c_1}$ and
  $\teqTc{\Gamma}{\Delta}{\E}{\E'}{\eval{R_2 \phi}}{\eval{R_2\phi'}}{\L}{c_2}$.
    
  Thus, by the rule \TPair followed by \TSub,
  $\teqTc{\Gamma}{\Delta}{\E}{\E'}{\eval{R \phi}}{\eval{R \phi'}}{\L}{c_1 \cup c_2}$.
  Since $c_1 \cup c_2 \subseteq c$, this proves the case.
 \end{enumerate}

\item \case{$R = \ENC{S}{K}$}
\label{proof:case_enc}
 We apply the induction hypothesis to $K$ and distinguish three cases.
 \begin{enumerate}
 \item \case{If $\eval{K\phi} = \bot$} then $\eval{K\phi'} = \bot$, hence $\eval{R\phi} = \eval{R\phi'} = \bot$. 
 \item \case{If $\eval{K\phi} \neq \bot$ and is not a key}
	then $\eval{K\phi'} \neq \bot$ (by IH), and by IH we have
  $\teqTc{\Gamma}{\Delta}{\E}{\E'}{\eval{K \phi}}{\eval{K\phi'}}{\L}{c'}$ for some $c' \subseteq c$.
  Then, by Lemma~\ref{lem-proof:l-same-head},
  $\eval{K\phi'}$ is not a key either.
	Hence $\eval{R\phi} = \eval{R\phi'} = \bot$.
 \item \case{If $\eval{K\phi}$ is a key}, then by IH there exists $c' \subseteq c$ such that
  $\teqTc{\Gamma}{\Delta}{\E}{\E'}{\eval{K \phi}}{\eval{K\phi'}}{\L}{c'}$.
  Hence by Lemma~\ref{lem-proof:type-key-nonce}
  $\eval{K\phi} = \eval{K\phi'}$, and
  $\Gamma(\eval{K\phi}) = \Gamma(\eval{K\phi'}) = \skey{\L}{T}$ for some $T$.
  We then apply the IH to $S$, and either $\eval{S\phi} = \eval{S\phi'} = \bot$,
  in which case $\eval{R\phi} = \eval{R\phi'} = \bot$; or there exists $c'' \subseteq c$ such that
  $\teqTc{\Gamma}{\Delta}{\E}{\E'}{\eval{S \phi}}{\eval{S\phi'}}{\L}{c''}$.
  Since $\eval{R\phi} = \ENC{\eval{S\phi}}{\eval{K\phi}}$, and similarly for $\phi'$,
  by rule \TEnc, we have
  $\teqTc{\Gamma}{\Delta}{\E}{\E'}{\eval{R \phi}}{\eval{R\phi'}}{\encT{\L}{\eval{K\phi}}}{c''}$,
  and then by rule \TEncL,
  $\teqTc{\Gamma}{\Delta}{\E}{\E'}{\eval{R \phi}}{\eval{R\phi'}}{\L}{c''}$.
 \end{enumerate}

\item \case{$R = \AENC{S}{K}$}
\label{proof:case_aenc}
 We apply the induction hypothesis to $K$ and distinguish three cases.
 \begin{enumerate}
 \item \case{If $\eval{K\phi} = \bot$} then $\eval{K\phi'} = \bot$, hence $\eval{R\phi} = \eval{R\phi'} = \bot$. 
 \item \case{If $\eval{K\phi} \neq \bot$ and is not $\PUBK{k}$ for some $k\in\K$}
  then $\eval{K\phi'} \neq \bot$ (by IH), and
  by IH there exists $c'\subseteq c$ such that
  $\teqTc{\Gamma}{\Delta}{\E}{\E'}{\eval{K \phi}}{\eval{K\phi'}}{\L}{c'}$.
  Then, by Lemma~\ref{lem-proof:l-same-head}, $\eval{K\phi'}$ is not a public key either.
  Hence $\eval{R\phi} = \eval{R\phi'} = \bot$.
 \item \case{If $\eval{K\phi}=\PUBK{k}$ for some $k\in\K$}, then by IH there exists $c'\subseteq c$ such that
  $\teqTc{\Gamma}{\Delta}{\E}{\E'}{\PUBK{k}}{\eval{K\phi'}}{\L}{c'}$.
  Thus, by Lemma~\ref{lem-proof:type-key-nonce}, $\eval{K\phi'} = \PUBK{k}$ and $k\in \dom{\Gamma}$.
  We then apply the IH to $S$, and either $\eval{S\phi} = \eval{S\phi'} = \bot$,
  in which case $\eval{R\phi} = \eval{R\phi'} = \bot$; or there exists $c'' \subseteq c$ such that
  $\teqTc{\Gamma}{\Delta}{\E}{\E'}{\eval{S \phi}}{\eval{S\phi'}}{\L}{c''}$.
  Therefore, by rule \TAenc,
  $\teqTc{\Gamma}{\Delta}{\E}{\E'}{\eval{R \phi}}{\eval{R\phi'}}{\aencT{\L}{k}}{c''}$,
  and by rule \TAencL we have
  $\teqTc{\Gamma}{\Delta}{\E}{\E'}{\eval{R \phi}}{\eval{R\phi'}}{\L}{c''}$.
 \end{enumerate}

\item \case{$R = \SIGN{S}{K}$}
\label{proof:case_sign}
 We apply the induction hypothesis to $K$ and distinguish three cases.
 \begin{enumerate}
 \item \case{If $\eval{K\phi} = \bot$} then $\eval{K\phi'} = \bot$, hence $\eval{R\phi} = \eval{R\phi'} = \bot$. 
 \item \case{If $\eval{K\phi} \neq \bot$ and is not a key $k\in\K$}
	then $\eval{K\phi'} \neq \bot$ (by IH), and by IH we have
  $\teqTc{\Gamma}{\Delta}{\E}{\E'}{\eval{K \phi}}{\eval{K\phi'}}{\L}{c'}$ for some $c' \subseteq c$.
  Then, by Lemma~\ref{lem-proof:l-same-head},
  $\eval{K\phi'}$ is not a key either.
  Hence $\eval{R\phi} = \eval{R\phi'} = \bot$.
 \item \case{If $\eval{K\phi}=k$ for some $k\in\K$}, then by IH there exists $c'\subseteq c$ such that
  $\teqTc{\Gamma}{\Delta}{\E}{\E'}{\eval{K \phi}}{\eval{K\phi'}}{\L}{c'}$.
  Hence by Lemma~\ref{lem-proof:type-key-nonce}, $\eval{K\phi'} = k$ and $\Gamma(k) = \skey{\L}{T}$ for some $T$.
  We then apply the IH to $S$, and either $\eval{S\phi} = \eval{S\phi'} = \bot$,
  in which case $\eval{R\phi} = \eval{R\phi'} = \bot$; or there exists $c'' \subseteq c$ such that
  $\teqTc{\Gamma}{\Delta}{\E}{\E'}{\eval{S \phi}}{\eval{S\phi'}}{\L}{c''}$.
  Therefore by rule \TSignL,
  $\teqTc{\Gamma}{\Delta}{\E}{\E'}{\eval{R \phi}}{\eval{R\phi'}}{\L}{c''}$.
 \end{enumerate}

\item \case{$R = \HASH{S}$}
\label{proof:case_hash}
 We apply the induction hypothesis to $S$. We distinguish two cases:
 \begin{enumerate}
 \item \case{$\eval{S \phi} = \bot$}
  In this case we also have $\eval{S \phi'} = \bot$ and
  therefore $\eval{R \phi} = \eval{R \phi'} = \bot$.
 \item \case{$\eval{S \phi} \neq \bot$}
  In this case, by the induction hypothesis, we also have $\eval{S \phi'}\neq \bot$,
  and we also know that there exists $c' \subseteq c$ such that
  $\teqTc{\Gamma}{\Delta}{\E}{\E'}{\eval{S \phi}}{\eval{S\phi'}}{\L}{c'}$
  Thus, by rule \THashL,
  $\teqTc{\Gamma}{\Delta}{\E}{\E'}{\eval{R \phi}}{\eval{R \phi'}}{\L}{c'}$, which proves this case.
 \end{enumerate}

\item \case{$R = \FST{S}$} 
 \label{proof:case_fst}
 We apply the induction hypothesis to $S$ and distinguish three cases.
 \begin{enumerate}
 \item \case{$\eval{S \phi} = \bot$}
  Then $\eval{S \phi'} = \bot$ (by IH), hence $\eval{R \phi} = \eval{R \phi'} = \bot$.
 \item \case{$\eval{S \phi} \neq \bot$ and is not a pair}
 	Then by IH there exists $c'\subseteq c$ such that
	$\teqTc{\Gamma}{\Delta}{\E}{\E'}{\eval{S \phi}}{\eval{S\phi'}}{\L}{c'}$,
    which implies that $\eval{R\phi'}$ and is not a pair either by Lemma~\ref{lem-proof:l-same-head}.
    Hence $\eval{R \phi} = \eval{R \phi'} = \bot$.
 \item \case{$\eval{S \phi} = \PAIR{t_1}{t_2}$ is a pair}
  Then by IH there exists $c'\subseteq c$ such that
  $\teqTc{\Gamma}{\Delta}{\E}{\E'}{\eval{S \phi}}{\eval{S\phi'}}{\L}{c'}$.
  This implies, by Lemma~\ref{lem-proof:pair-types}, that $\eval{S \phi'} = \PAIR{t_1'}{t_2'}$ is also a pair,
  and that $\teqTc{\Gamma}{\Delta}{\E}{\E'}{t_1}{t_1'}{\L}{c''}$ for some $c'' \subseteq c'$.
  Since $\eval{R \phi} = t_1$ and $\eval{R \phi'} = t_1'$, this proves the case.
 \end{enumerate}

\item \case{$R = \SND{S}$} 
 This case is analogous to the case \ref{proof:case_fst}.

\item \case{$R = \DEC{S}{K}$} 
 We apply the induction hypothesis to $K$ and, similarly to the case \ref{proof:case_enc}, we distinguish several cases.
 \begin{enumerate}
 \item \case{If $\eval{K\phi} = \bot$ or is not a key} then, as in case \ref{proof:case_enc}, $\eval{R\phi} = \eval{R\phi'} = \bot$. 
 \item \case{If $\eval{K\phi}$ is a key}, then
 	similarly to case \ref{proof:case_enc} we can show that $\eval{K\phi} = \eval{K\phi'}$, and
  $\Gamma(\eval{K\phi}) = \Gamma(\eval{K\phi'}) = \skey{\L}{T}$ for some $T$.
  We then apply the IH to $S$, which creates two cases.
  Either $\eval{S\phi} = \eval{S\phi'} = \bot$, or there exists $c' \subseteq c$ such that
  $\teqTc{\Gamma}{\Delta}{\E}{\E'}{\eval{S \phi}}{\eval{S\phi'}}{\L}{c'}$.
  In the first case, the claim holds, since $\eval{R\phi} = \eval{R\phi'} = \bot$.
  In the second case, by Lemma~\ref{lem-proof:l-same-head}, we know that
  $\eval{S\phi}$ is an encryption by $\eval{K\phi}$ if and only if $\eval{S\phi'}$ also is an encryption by this key.
  Consequently, if $\eval{S\phi}$ is not an encryption by $\eval{K\phi}$
  ($=\eval{K\phi'}$), then it is the same for $\eval{S\phi'}$; and $\eval{R\phi} = \eval{R\phi'} = \bot$.
  Otherwise, $\eval{S\phi} = \ENC{t}{\eval{K\phi}}$ and $\eval{S\phi'}=\ENC{t'}{\eval{K\phi'}}$ for some $t$, $t'$.
  In that case, by IH, we have $\teqTc{\Gamma}{\Delta}{\E}{\E'}{\ENC{t}{\eval{K\phi}}}{\ENC{t'}{\eval{K\phi'}}}{\L}{c'}$.
  Therefore, by Lemma~\ref{lem-proof:enc-types} (point 5), $\teqTc{\Gamma}{\Delta}{\E}{\E'}{t}{t'}{\L}{c'}$,
  which is to say $\teqTc{\Gamma}{\Delta}{\E}{\E'}{\eval{R\phi}}{\eval{R\phi'}}{\L}{c'}$.
  Hence the claim holds in this case.
 \end{enumerate}

\item \case{$R = \ADEC{S}{K}$} 
 We apply the induction hypothesis to $K$ and, similarly to the case \ref{proof:case_enc}, we distinguish several cases.
 \begin{enumerate}
 \item \case{If $\eval{K\phi} = \bot$ or is not a key} then, as in case \ref{proof:case_enc}, $\eval{R\phi} = \eval{R\phi'} = \bot$. 
 \item \case{If $\eval{K\phi}$ is a key}, then
 	similarly to case \ref{proof:case_enc} we can show that $\eval{K\phi} = \eval{K\phi'}$, and
  $\Gamma(\eval{K\phi}) = \Gamma(\eval{K\phi'}) = \skey{\L}{T}$ for some $T$.
  We then apply the IH to $S$, which creates two cases.
  Either $\eval{S\phi} = \eval{S\phi'} = \bot$, or there exists $c' \subseteq c$ such that
  $\teqTc{\Gamma}{\Delta}{\E}{\E'}{\eval{S \phi}}{\eval{S\phi'}}{\L}{c'}$.
  In the first case, the claim holds, since $\eval{R\phi} = \eval{R\phi'} = \bot$.
  In the second case, by Lemma~\ref{lem-proof:l-same-head}, we know that
  $\eval{S\phi}$ is an asymmetric encryption by $\PUBK{\eval{K\phi}}$ if and only if $\eval{S\phi'}$ also is an encryption by this key.
  Consequently, if $\eval{S\phi}$ is not an encryption by $\PUBK{\eval{K\phi}}$
  ($=\PUBK{\eval{K\phi'}}$), then it is the same for $\eval{S\phi'}$, and $\eval{R\phi} = \eval{R\phi'} = \bot$.
  Otherwise, $\eval{S\phi} = \AENC{t}{\PUBK{\eval{K\phi}}}$ and $\eval{S\phi'}=\AENC{t'}{\PUBK{\eval{K\phi'}}}$ for some $t$, $t'$.
  Thus by IH we have $\teqTc{\Gamma}{\Delta}{\E}{\E'}{\AENC{t}{\PUBK{\eval{K\phi}}}}{\AENC{t'}{\PUBK{\eval{K\phi'}}}}{\L}{c'}$.
  Therefore,
  by Lemma~\ref{lem-proof:enc-types} (point 6), we know that
  $\teqTc{\Gamma}{\Delta}{\E}{\E'}{t}{t'}{\L}{c'}$,
  which is to say $\teqTc{\Gamma}{\Delta}{\E}{\E'}{\eval{R\phi}}{\eval{R\phi'}}{\L}{c'}$.
  Hence the claim holds in this case.
 \end{enumerate}

\item \case{$R = \CHECK{S}{K}$} 
 We apply the induction hypothesis to $K$ and, similarly to the case \ref{proof:case_aenc}, we distinguish several cases.
 \begin{enumerate}
 \item \case{If $\eval{K\phi} = \bot$ or is not a verification key} then, as in case \ref{proof:case_aenc}, we can show that $\eval{R\phi} = \eval{R\phi'} = \bot$. 
 \item \case{If $\eval{K\phi}$ is a verification key $\VK{k}$ for some $k\in \K$}, then
 	similarly to case \ref{proof:case_aenc} we can show that $\eval{K\phi} = \eval{K\phi'}$, and
  $k\in\dom{\Gamma}$.
  We then apply the IH to $S$, which creates two cases.
  Either $\eval{S\phi} = \eval{S\phi'} = \bot$, or there exists $c' \subseteq c$ such that
  $\teqTc{\Gamma}{\Delta}{\E}{\E'}{\eval{S \phi}}{\eval{S\phi'}}{\L}{c'}$.
  In the first case, the claim holds, since $\eval{R\phi} = \eval{R\phi'} = \bot$.
  In the second case, by Lemma~\ref{lem-proof:l-same-head}, we know that
  $\eval{S\phi}$ is a signature by $k (=\eval{K\phi})$ if and only if $\eval{S\phi'}$ also is a signature by this key.
  Consequently, if $\eval{S\phi}$ is not signed by $k$,
  then neither is $\eval{S\phi'}$, and $\eval{R\phi} = \eval{R\phi'} = \bot$.
  Otherwise, $\eval{S\phi} = \SIGN{t}{k}$ and $\eval{S\phi'}=\SIGN{t'}{k}$ for some $t$, $t'$.
  Thus by IH we have $\teqTc{\Gamma}{\Delta}{\E}{\E'}{\SIGN{t}{k}}{\SIGN{t'}{k}}{\L}{c'}$.
  Therefore, by Lemma~\ref{lem-proof:sign-types} (point 2), we know that there exists $c'' \subseteq c'$ such that
  $\teqTc{\Gamma}{\Delta}{\E}{\E'}{t}{t'}{\L}{c''}$.
  That is to say $\teqTc{\Gamma}{\Delta}{\E}{\E'}{\eval{R\phi}}{\eval{R\phi'}}{\L}{c''}$.
  Hence the claim holds in this case.
 \end{enumerate}
\end{enumerate} 
\end{proof}

\begin{lemma}[Substitution preserves typing]
\label{lem-proof:subst-typing}
For all $\Gamma$, $\Gamma'$, $\Gamma''$, such that $\tewf{\Gamma\cup\Gamma'\cup\Gamma''}$,
(we do not require that $\Gamma$ and $\Gamma'$ are well-formed),
for all $M$, $N$, $T$, $c_\sigma$, $c$,
for all ground substitutions $\sigma$, $\sigma'$,
if
\begin{itemize}
\item $\Gamma$, $\Gamma'$ only contains variables, and have disjoint domains;
\item $\Gamma''$ only contains names and keys;
\item for all $x\in\dom{\Gamma'}$, $\Gamma'(x)$ is not of the form $\LRTn{l}{1}{m}{l'}{1}{n}$,
\item $\wtc{\Delta}{\E}{\E'}{\sigma}{\sigma'}{(\Gamma\cup\Gamma'')}{c_\sigma}$,
\item and $\teqTc{\Gamma\cup\Gamma'\cup\Gamma''}{\Delta}{\E}{\E'}{M}{N}{T}{c}$
\end{itemize}
then there exists $c' \subseteq \inst{c}{\sigma}{\sigma'} \cup c_{\sigma}$ such that
\[\teqTc{\Gamma'\cup\Gamma''}{\Delta}{\E}{\E'}{M\sigma}{N\sigma'}{T}{c'}.\]

In particular, if we have $\Gamma = \onlyvar{\Gamma'''}$, $\Gamma' = \emptyset$,
and $\Gamma'' = \novar{\Gamma'''}$ for some $\Gamma'''$, then the first three conditions trivially hold.
\end{lemma}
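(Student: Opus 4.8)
The plan is to proceed by structural induction on the derivation of $\teqTcnew{\Gamma\cup\Gamma'\cup\Gamma''}{M}{N}{T}{c}$, in each case re-applying the same typing rule to the substituted terms $M\sigma$ (on the left) and $N\sigma'$ (on the right). Since $\sigma,\sigma'$ are ground with $\dom{\sigma}=\dom{\sigma'}=\dom{\Gamma}$, they leave every name, key and constant unchanged; and because $\Gamma''$ collects exactly the names and keys of the combined environment, the side conditions that mention only keys or nonces (e.g.\ $\Gamma(k)=\skey{l}{T}$, $\Gamma(n)=\noncetypelab{l}{a}{n}$, $k\in\dom{\Gamma}$) carry over verbatim to $\Gamma'\cup\Gamma''$. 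Hence the atomic base cases (\TNonce, \TNonceL, \TCst, \TKey, \TPubkey, \TVkey, \TLRone, \TLRinf) are immediate, and the congruence cases (\TPair, the encryption families \TEnc--\TAencL, the signature rules \TSignH, \TSignL, \THash, \THashL, together with \TSub and \TOr) follow by applying the induction hypothesis to the premises and re-assembling with the same rule. For the two well-scopedness rules \THash and \THigh I would first record the auxiliary observation that a term typable in $\Gamma''$ has all its names in $\dom{\Gamma''}\cup\FN$; this guarantees that replacing the variables of $M,N$ lying in $\dom{\Gamma}$ by $\sigma(x),\sigma'(x)$ keeps the scoping premise true in $\Gamma'\cup\Gamma''$.

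The interesting work is in the two rules that introduce variables. In \TVar we have $M=N=x$ and $T=(\Gamma\cup\Gamma'\cup\Gamma'')(x)$. If $x\in\dom{\Gamma'}$ then $x\sigma=x\sigma'=x$ and $\Gamma'(x)=T$, so re-applying \TVar in $\Gamma'\cup\Gamma''$ gives the result with the empty constraint. If instead $x\in\dom{\Gamma}$, then $x\sigma=\sigma(x)$, $x\sigma'=\sigma'(x)$, and well-typedness of $\sigma,\sigma'$ yields directly $\teqTcnew{\Gamma''}{\sigma(x)}{\sigma'(x)}{\Gamma(x)}{c_x}$ for some $c_x\subseteq c_\sigma$ (using $\novar{\Gamma\cup\Gamma''}=\Gamma''$); Lemma~\ref{lem-proof:typing-contextinclusion} then extends this to $\Gamma'\cup\Gamma''$. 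The genuinely delicate case is \TLRVar, where the left term $x$ and the right term $y$ differ, both carrying singleton refinement types $\Gamma(x)=\LRTn{l}{1}{m}{l'}{1}{n}$ and $\Gamma(y)=\LRTn{l''}{1}{m'}{l'''}{1}{n'}$ (these equalities follow from Lemma~\ref{lem-proof:lr-ground} applied to the premises). Here the third hypothesis is exactly what is needed: since $\Gamma'$ contains no singleton refinement types and $\Gamma''$ no variables, both $x$ and $y$ must lie in $\dom{\Gamma}$, hence are substituted. Applying Lemma~\ref{lem-proof:lr-ground} to the well-typedness judgements $\teqTcnew{\Gamma''}{\sigma(x)}{\sigma'(x)}{\Gamma(x)}{c_x}$ and $\teqTcnew{\Gamma''}{\sigma(y)}{\sigma'(y)}{\Gamma(y)}{c_y}$ — in which the images are ground, so only the nonce case can occur — pins down $\sigma(x)=m$ and $\sigma'(y)=n'$, together with the \TLRone side conditions on $m$ and $n'$ in $\Gamma''$. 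The goal $\teqTcnew{\Gamma'\cup\Gamma''}{m}{n'}{\LRTn{l}{1}{m}{l'''}{1}{n'}}{c'}$ is then obtained by a single application of \TLRone followed by context extension, with $c'=\emptyset$.

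For the constraint bound I would maintain throughout the induction the invariant that the produced $c'$ is a subset of $\inst{c}{\sigma}{\sigma'}\cup c_\sigma$, using that $\inst{\cdot}{\sigma}{\sigma'}$ commutes with $\cup$ (Lemma~\ref{lem-proof:cons-subset}). The only new constraints are those inserted by \TEncH, \TAencH, \TSignH and \THash, each of the shape $\{M\eqC N\}$; after substitution this becomes $\{M\sigma\eqC N\sigma'\}=\inst{\{M\eqC N\}}{\sigma}{\sigma'}$, which is contained in $\inst{c}{\sigma}{\sigma'}$, while the contributions coming from \TVar sit inside $c_\sigma$. The ``in particular'' statement is then immediate: taking $\Gamma=\onlyvar{\Gamma'''}$, $\Gamma'=\emptyset$ and $\Gamma''=\novar{\Gamma'''}$ makes the first three hypotheses hold. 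I expect \TLRVar to be the main obstacle, precisely because it is the only rule whose left and right subjects are syntactically different variables, so one must reconstruct from the refinement types and the well-typedness of $\sigma,\sigma'$ that the two images recombine into a single well-typed refinement; the no-singleton-refinement condition on $\Gamma'$ is the hypothesis that makes this reconstruction possible.
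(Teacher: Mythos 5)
Your proposal is correct and follows essentially the same route as the paper's proof: induction on the typing derivation, with \TVar and \TLRVar as the only delicate cases, resolved exactly as you describe via the well-typedness of $\sigma,\sigma'$, Lemma~\ref{lem-proof:lr-ground} (first to force $x,y\in\dom{\Gamma}$ using the no-singleton-refinement hypothesis on $\Gamma'$, then to pin down $\sigma(x)=m$ and $\sigma'(y)=n'$), a final application of \TLRone, and Lemma~\ref{lem-proof:typing-contextinclusion} for context extension. The only (harmless) micro-deviation is that in \TLRVar the paper first invokes the induction hypothesis on the two premises and then applies Lemma~\ref{lem-proof:lr-ground} to the resulting judgements, whereas you apply it directly to the well-typedness judgements of the substitutions; your explicit scoping observation for \THash/\THigh and the constraint bookkeeping via instantiation commuting with unions likewise just spell out steps the paper treats as immediate.
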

\begin{proof}
Note that $\novar{\Gamma}=\novar{\Gamma'} = \emptyset$, $\onlyvar{\Gamma}=\Gamma$, $\onlyvar{\Gamma'}=\Gamma'$,
$\novar{\Gamma''}=\Gamma''$ and $\onlyvar{\Gamma''}=\emptyset$.
This proof is done by induction on the typing derivation for the terms.
The claim clearly holds in the \TNonce, \TNonceL, \TCst, \TPubkey, \TVkey, \TKey, \THash, \THigh, \TLRone, \TLRinf
since their conditions do not use $\Gamma(x)$ (for any variable $x$) or another type judgement, and they still apply to the 
messages $M\sigma$ and $N\sigma'$.

\bigskip

It follows directly from the induction hypothesis in all other cases except the \TVar and \TLRVar cases,
which are the base cases.

\bigskip

In the \TVar case, the claim also holds, since $M = N = x$ for some variable $x\in\dom{\Gamma}\cup\dom{\Gamma'}$.
If $x\in\dom{\Gamma'}$, then $x\sigma = x\sigma' = x$, and $T = \Gamma'(x)$.
Thus, by rule \TVar, $\teqTc{\Gamma'\cup\Gamma''}{\Delta}{\E}{\E'}{x\sigma}{x\sigma'}{\Gamma'(x)}{\emptyset}$
and the claim holds.
If $x\in\dom{\Gamma}$, then $T = \Gamma(x)$,
and, since by hypothesis the substitutions are well-typed, there exists $c_x \subseteq c_{\sigma,\sigma'}$ such that 
$\teqTc{\novar{(\Gamma\cup\Gamma'')}}{\Delta}{\E}{\E'}{\sigma(x)}{\sigma'(x)}{\Gamma(x)}{c_x}$.
Thus, since $\novar{(\Gamma\cup\Gamma'')} = \Gamma''$, and by applying Lemma~\ref{lem-proof:typing-contextinclusion} to
$\Gamma'$,
$\teqTc{\Gamma'\cup\Gamma''}{\Delta}{\E}{\E'}{\sigma(x)}{\sigma'(x)}{\Gamma(x)}{c_x}$ and the claim holds.

\bigskip

Finally, in the \TLRVar case, there exist two variables $x$, $y$, and types $\noncetypelab{l}{1}{m}$,
$\noncetypelab{l'}{1}{n}$, $\noncetypelab{l''}{1}{m'}$, $\noncetypelab{l'''}{1}{n'}$, such that
$M=x$, $N=y$, $c = \emptyset$,
$\teqTc{\Gamma\cup\Gamma'\cup\Gamma''}{\Delta}{\E}{\E'}{x}{x}{\LRTn{l}{1}{m}{l'}{1}{n}}{\emptyset}$,
$\teqTc{\Gamma\cup\Gamma'\cup\Gamma''}{\Delta}{\E}{\E'}{y}{y}{\LRTn{l''}{1}{m'}{l'''}{1}{n'}}{\emptyset}$, and
$T = \LRTn{l}{1}{m}{l'''}{1}{n'}$.

By Lemma~\ref{lem-proof:lr-ground}, this implies that $(\Gamma\cup\Gamma'\cup\Gamma'')(x) = \LRTn{l}{1}{m}{l'}{1}{n}$ and $(\Gamma\cup\Gamma'\cup\Gamma'')(y) = \LRT{l''}{1}{m'}{l'''}{1}{n'}$.
Hence, since by hypothesis $\Gamma'$ does not contain such types, and $\Gamma''$ does not contain variables, 
$x\in\dom{\Gamma}$ and $y\in\dom{\Gamma}$.

Moreover, by the induction hypothesis, there exist $c'$, $c''\subseteq c_\sigma$ such that
$\teqTc{\Gamma'\cup\Gamma''}{\Delta}{\E}{\E'}{x\sigma}{x\sigma'}{\LRTn{l}{1}{m}{l'}{1}{n}}{c'}$, and
$\teqTc{\Gamma'\cup\Gamma''}{\Delta}{\E}{\E'}{y\sigma}{y\sigma'}{\LRTn{l''}{1}{m'}{l'''}{1}{n'}}{c''}$.
That is to say, since $x, y\in\dom{\Gamma}=\dom{\sigma}=\dom{\sigma'}$, that
$\teqTc{\Gamma'\cup\Gamma''}{\Delta}{\E}{\E'}{\sigma(x)}{\sigma'(x)}{\LRTn{l}{1}{m}{l'}{1}{n}}{c'}$, and
$\teqTc{\Gamma'\cup\Gamma''}{\Delta}{\E}{\E'}{\sigma(y)}{\sigma'(y)}{\LRTn{l''}{1}{m'}{l'''}{1}{n'}}{c''}$.
Hence, by Lemma~\ref{lem-proof:lr-ground}, and since $\sigma$, $\sigma'$ are ground, we have
$\sigma(x) = m$, $\sigma'(x) = n$, $\sigma(y) = m'$, and $\sigma'(y) = n'$,
and $\Gamma''(m)=\noncetypelab{l}{1}{m}$ and $\Gamma''(n') = \noncetypelab{l'''}{1}{n'}$.

Thus, by rule \TLRone, $\teqTc{\Gamma'}{\Delta}{\E}{\E'}{\sigma(x)}{\sigma'(y)}{\LRTn{l}{1}{m}{l'''}{1}{n'}}{\emptyset}$,
which proves the claim.
\end{proof}

\begin{lemma}[Types $\L$ and $\S$ are disjoint]
\label{lem-proof:LS-disjoint}
For all $\Gamma$, for all ground terms $M$, $M'$, $N$, $N'$, for all sets of constraints $c$, $c'$,
if $\teqTc{\Gamma}{\Delta}{\E}{\E'}{M}{N}{\L}{c}$ and $\teqTc{\Gamma}{\Delta}{\E}{\E'}{M'}{N'}{\S}{c'}$
then $M \neq M'$ and $N \neq N'$.
\end{lemma}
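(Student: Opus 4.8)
The plan is to reduce the statement to a single claim about left components: \emph{no ground term $t$ admits both a typing $\teqTc{\Gamma}{\Delta}{\E}{\E'}{t}{N}{\L}{c}$ and a typing $\teqTc{\Gamma}{\Delta}{\E}{\E'}{t}{N'}{\S}{c'}$.} If this holds, then instantiating $t = M = M'$ immediately yields $M \neq M'$, and $N \neq N'$ follows from the symmetric statement about right components, proved identically via the right-hand inversion lemmas (point 7 of Lemma~\ref{lem-proof:type-key-nonce}, point 4 of Lemma~\ref{lem-proof:pair-types}, and the symmetry of $\subtyp$). So I focus on the left claim.

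First I would establish an inversion lemma describing the shape of $\S$-typed ground terms: if $\teqTc{\Gamma}{\Delta}{\E}{\E'}{t}{t'}{\S}{c}$ with $t$ ground, then $t$ is (i) a nonce $n \in \BN$ with $\Gamma(n) = \noncetypelab{\S}{\oneorinf}{n}$, or (ii) a key $k$ with $\Gamma(k) = \skey{\S}{T}$, or (iii) a pair $\PAIR{t_1}{t_2}$ one of whose components is itself $\S$-typed. This is proved by induction on the typing derivation: the only rules whose conclusion type can be $\S$ are \TNonce (giving case (i) directly), \TLRp (forcing $t$ to be a secret nonce via Lemma~\ref{lem-proof:lr-ground}, again case (i)), and \TSub. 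In the \TSub case the premise carries a type $T' \subtyp \S$, and point 17 of Lemma~\ref{lem-proof:subtyping} restricts $T'$ to be $\S$ (handled by the induction hypothesis), a key type $\skey{\S}{T}$ (yielding case (ii) through point 1 of Lemma~\ref{lem-proof:type-key-nonce}, since $t$ is ground hence not a variable), or a pair type $T_1 * T_2$; in this last situation point 1 of Lemma~\ref{lem-proof:pair-types} makes $t$ a pair, and point 7 of Lemma~\ref{lem-proof:subtyping} sends one of $T_1, T_2$ below $\S$, so the corresponding component is $\S$-typed by \TSub, giving case (iii).

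The main claim then follows by structural induction on $t$. If $t$ is a nonce, point 6 of Lemma~\ref{lem-proof:type-key-nonce} forces $\Gamma(t) = \noncetypelab{\L}{\oneorinf}{t}$ or $t \in \FN$, while case (i) forces $t \in \BN$ and $\Gamma(t) = \noncetypelab{\S}{\oneorinf}{t}$; both alternatives are contradictory, since $\FN \cap \BN = \emptyset$ and $\Gamma$ cannot assign $t$ two distinct labels. If $t$ is a key, point 2 of Lemma~\ref{lem-proof:type-key-nonce} gives $\Gamma(t) = \skey{\L}{T}$ from the $\L$-typing and $\Gamma(t) = \skey{\S}{T'}$ from the $\S$-typing, again impossible. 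If $t$ is a constant, a public key $\PUBK{k}$, a verification key $\VK{k}$, an encryption, a signature, or a hash, then none of the shapes (i)--(iii) applies, so $t$ cannot be $\S$-typed at all and the hypothesis is vacuously unsatisfiable. Finally, if $t = \PAIR{t_1}{t_2}$, the $\L$-typing gives, by point 3 of Lemma~\ref{lem-proof:pair-types}, that \emph{both} $t_1$ and $t_2$ are $\L$-typed, whereas case (iii) gives that one of them, say $t_i$, is $\S$-typed; this contradicts the induction hypothesis applied to the strict subterm $t_i$.

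The main obstacle is the $\S$-inversion lemma, specifically the bookkeeping in the \TSub case: one must track exactly how a pair type can sit below $\S$ (points 7 and 17 of Lemma~\ref{lem-proof:subtyping}) and transfer ``$\S$-ness'' to a single component while preserving groundness, so that the recursion in the pair case is applied to a genuine subterm and is therefore well-founded. Everything else is a routine, finite case split driven by the already-established inversion lemmas, with the nonce and key cases reducing to the observation that $\Gamma$ is a function assigning each name or key a unique, labelled type.
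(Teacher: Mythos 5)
Your proposal is correct and matches the paper's own proof in essentials: the paper likewise first proves an inversion result for $\S$-typed ground terms (by induction on the derivation, with exactly the cases \TNonce, \TLRp, \TSub handled via Lemmas~\ref{lem-proof:subtyping}, \ref{lem-proof:lr-ground}, \ref{lem-proof:type-key-nonce}, and \ref{lem-proof:pair-types}) and then derives the contradiction by induction on the term, splitting into the nonce, key, and pair cases just as you do. Your only deviations are cosmetic: you push the pair case of the inversion through to an $\S$-typed component rather than leaving the type as $\S*T$ or $T*\S$, and you make the vacuous cases (constants, encryptions, hashes, etc.) explicit where the paper leaves them implicit.
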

\begin{proof}
First, it is easy to see by induction on the type derivation that for all ground terms $M$, $N$, for all $c$, if
$\teqTc{\Gamma}{\Delta}{\E}{\E'}{M}{N}{\S}{c}$ then either
\begin{itemize}
\item $M$ is a nonce $m\in\N$ such that $\Gamma(m) = \noncetypelab{\S}{a}{m}$ for some $a\in\{\infty, 1\}$;
\item or $M$ is a key and $\Gamma(M) = \skey{\S}{T}$ for some $T$;
\item or $\teqTc{\Gamma}{\Delta}{\E}{\E'}{M}{N}{\S*T}{c'}$ for some $T$, $c'$;
\item or $\teqTc{\Gamma}{\Delta}{\E}{\E'}{M}{N}{T*\S}{c'}$ for some $T$, $c'$.
\end{itemize}

Indeed, (as $\Gamma$ is well-formed) the only possible cases are \TNonce, \TSub, and \TLRp.
In the \TNonce case the claim clearly holds.
In the \TSub case we use Lemma~\ref{lem-proof:subtyping} followed by Lemma~\ref{lem-proof:type-key-nonce}.
In the \TLRp case we apply Lemma~\ref{lem-proof:lr-ground} and the claim directly follows.

\bigskip

Let us now show that for all $M$, $N$, $N'$ ground, for all $c$, $c'$,
$\teqTc{\Gamma}{\Delta}{\E}{\E'}{M}{N}{\L}{c}$ and $\teqTc{\Gamma}{\Delta}{\E}{\E'}{M}{N'}{\S}{c'}$ cannot both hold.
(This corresponds, with the notations of the statement of the lemma, to proving by contradiction that $M \neq M'$.
The proof that $N \neq N'$ is analogous.)

We show this property by induction on the size of $t_1$.

Since $\teqTc{\Gamma}{\Delta}{\E}{\E'}{M}{N'}{\S}{c'}$, by the property stated in the beginning of this proof,
we can distinguish four cases.

\begin{itemize}
\item \case{If $M$ is a nonce and $\Gamma(M) = \noncetypelab{\S}{a}{M}$}:
then this contradicts Lemma~\ref{lem-proof:type-key-nonce}.
Indeed, this lemma (point 5) implies that $M\in\BN$, but also (by point 6),
since $\teqTc{\Gamma}{\Delta}{\E}{\E'}{M}{N}{\L}{c}$,
that either $\Gamma(M) = \noncetypelab{\L}{a}{M}$ for some $a\in\{1,\infty\}$, or $M\in\FN\cup\CST$.

\item \case{If $M$ is a key and $\Gamma(M) = \skey{\S}{T}$ for some $T$}:
then by Lemma~\ref{lem-proof:type-key-nonce}, since $\teqTc{\Gamma}{\Delta}{\E}{\E'}{M}{N}{\L}{c}$,
there exists $T'$ such that $\Gamma(M) = \skey{\L}{T'}$. This contradicts $\Gamma(M) = \skey{\S}{T}$.

\item \case{If $\teqTc{\Gamma}{\Delta}{\E}{\E'}{M}{N'}{\S*T}{c''}$ for some $T$, $c''$}:
then by Lemma~\ref{lem-proof:pair-types}, since $M$, $N'$ are ground,
there exist $M_1$, $M_2$, $N'_1$, $N'_2$, $c_1'$ such that
$M = \PAIR{M_1}{M_2}$, $N' = \PAIR{N'_1}{N'_2}$,
and $\teqTc{\Gamma}{\Delta}{\E}{\E'}{M_1}{N'_1}{\S}{c_1'}$.
Moreover, since $\teqTc{\Gamma}{\Delta}{\E}{\E'}{M}{N}{\L}{c}$, also by Lemma~\ref{lem-proof:pair-types},
there exist $N_1$, $N_2$, $c_1$ such that $N = \PAIR{N_1}{N_2}$
and $\teqTc{\Gamma}{\Delta}{\E}{\E'}{M_1}{N_1}{\L}{c_1}$.
However, by the induction hypothesis, $\teqTc{\Gamma}{\Delta}{\E}{\E'}{M_1}{N'_1}{\S}{c_1'}$ and
$\teqTc{\Gamma}{\Delta}{\E}{\E'}{M_1}{N_1}{\L}{c_1}$ is impossible.

\item \case{If $\teqTc{\Gamma}{\Delta}{\E}{\E'}{M}{N'}{T*\S}{c''}$ for some $T$, $c''$}:
this case is similar to the previous one.
\end{itemize}
\end{proof}

\begin{lemma}[Low terms are recipes on their constraints]
\label{lem-proof:low-recipe}
For all ground messages $M$, $N$, for all $T \subtyp \L$, for all $\Gamma$, $c$,
if $\teqTc{\Gamma}{\Delta}{\E}{\E'}{M}{N}{T}{c}$ then there exists an attacker recipe $R$ without destructors such that
$M = R (\phiL{c}\cup\phiEE)$ and
$N = R (\phiR{c}\cup\phiEE)$.
\end{lemma}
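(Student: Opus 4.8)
The plan is to prove the statement by induction on the typing derivation of $\teqTc{\Gamma}{\Delta}{\E}{\E'}{M}{N}{T}{c}$, using crucially that $M$ and $N$ are ground and that $T \subtyp \L$. By Lemma~\ref{lem-proof:subtyping} (point~16), $T \subtyp \L$ implies that $T$ is $\L$, a pair type, or $\skey{\L}{T'}$ for some $T'$. This already rules out \THigh and \TNonce (which produce $\H$ or $\S$), the refinement-introduction rules \TLRone, \TLRinf, \TLRVar and the elimination rule \TLRp (whose types are, again by Lemma~\ref{lem-proof:subtyping}, not below $\L$), as well as \TOr (a union type is below $\L$ only when it equals $\L$). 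The case \TVar is impossible since $M$, $N$ are ground. For \TSub I would simply invoke the induction hypothesis on the strictly smaller premise, whose type is again $\subtyp \L$ by transitivity.

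The informative cases split into three families. First, \emph{atomic public data}: for \TCst the common term is a constant or free name $a$, and $R := a$ is itself a destructor-free recipe; for \TNonceL, \TPubkey, \TVkey, \TLRLp (the last reduced, via Lemma~\ref{lem-proof:lr-ground} and groundness, to a low nonce) and \TKey (where $T \subtyp \L$ forces $\Gamma(k) = \skey{\L}{T'}$), the term $M = N$ is a low nonce, a low key, a public key, or a verification key, hence lies in $\phiEE$, and the recipe is the corresponding axiom variable of $\phiEE$. Second, \emph{constraint-producing rules}: for \TEncH, \TAencH, \TSignH and \THash the whole pair $(M,N)$ is added to $c$ as a constraint $M \eqC N$, so $M$ occurs in $\phiL{c}$ and $N$ in $\phiR{c}$ at the same axiom position, and that axiom variable is the required recipe. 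Third, \emph{structural low rules}: for \TPair we have $T = T_1 * T_2 \subtyp \L$, whence $T_1, T_2 \subtyp \L$ by Lemma~\ref{lem-proof:subtyping} (point~6), and the induction hypothesis on the two components yields $R_1$, $R_2$ with $R := \PAIR{R_1}{R_2}$; for \TEncL, \TAencL and \TSignL (respectively \THashL), reading the premises and, where needed, the inversion Lemmas~\ref{lem-proof:enc-types} and~\ref{lem-proof:sign-types}, we obtain $M = \ENC{M'}{k}$, $M = \AENC{M'}{\PUBK{k}}$, $M = \SIGN{M'}{k}$ (resp.\ $M = \HASH{M'}$) with $M'$, $N'$ of type $\L$ on the same $c$, so the induction hypothesis gives a payload recipe $R'$, and since the key $k$ is low ($\skey{\L}{T'}$) or its public part $\PUBK{k}$ is public and thus lies in $\phiEE$, we set $R := \ENC{R'}{y_k}$, $\AENC{R'}{y_{\PUBK{k}}}$, $\SIGN{R'}{y_k}$ (resp.\ $\HASH{R'}$), where $y_k$ is the axiom of $\phiEE$ pointing to the key. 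All these recipes use only symbols of $\F_c$, so they are destructor-free as required.

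The main obstacle is a purely bookkeeping one, arising in \TPair (and wherever a premise carries a strict sub-constraint $c' \subseteq c$): the recipe $R_i$ returned by the induction hypothesis reads axioms of $\phiL{c_i}$, whereas the goal frame is $\phiL{c}$ with $c = c_1 \cup c_2$. I would dispatch this with an auxiliary observation that a destructor-free recipe is stable under reindexing of axioms along any injection mapping each constraint of $c_i$ to the same constraint of $c$ — such an injection exists because $c_i \subseteq c$ and the associated left and right values coincide — and under extension of the frame by $\phiEE$. Once $R_1$ and $R_2$ are aligned to the common frame $\phiL{c} \cup \phiEE$ (and symmetrically $\phiR{c} \cup \phiEE$), the recipe $\PAIR{R_1}{R_2}$ evaluates to $M$ on the left and to $N$ on the right, closing the induction. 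I expect this reindexing-and-stability argument, rather than any individual rule, to be the only delicate point of the proof.
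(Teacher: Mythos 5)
Your proposal is correct and follows essentially the same route as the paper's proof: induction on the typing derivation, eliminating the impossible rules via Lemma~\ref{lem-proof:subtyping} and groundness, resolving the atomic cases through $\phiEE$, the constraint-producing rules (\TEncH, \TAencH, \TSignH, \THash) through the constraint variable, and the structural cases (\TPair, \TEncL, \TAencL, \TSignL, \THashL) through the induction hypothesis together with the inversion Lemmas~\ref{lem-proof:enc-types}, \ref{lem-proof:sign-types} and~\ref{lem-proof:lr-ground}. The only divergence is your explicit reindexing step for \TPair, which the paper silently absorbs by treating $\phiL{c_i}$ as a subframe of $\phiL{c}$ (the same implicit identification it uses, e.g., in Lemma~\ref{lem-proof:l-cons-stat-eq}); this is a harmless refinement, not a different method.
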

\begin{proof}
We prove this lemma by induction on the typing derivation of $\teqTc{\Gamma}{\Delta}{\E}{\E'}{M}{N}{T}{c}$.
We distinguish several cases for the last rule in this derivation.

\begin{itemize}
\item \case{\TNonce, \TEnc, \TAenc, \THigh, \TOr, \TLRone, \TLRinf, \TLRp, \TLRVar}: these cases are not possible, since the type they give to terms is never a subtype of $\L$ by Lemma~\ref{lem-proof:subtyping}.

\item \case{\TVar}: this case is not possible since $M$, $N$ are ground.

\item \case{\TSub}: this case is directly proved by applying the induction hypothesis to the judgement
$\teqTc{\emptyset}{\Delta}{\E}{\E'}{M}{N}{T'}{c}$ where $T' \subtyp T \subtyp \L$, which appears in the conditions of this rule, and has a shorter derivation.

\item \case{\TLRLp}: in this case, $\teqTc{\Gamma}{\Delta}{\E}{\E'}{M}{N}{\LRTn{\L}{a}{n}{\L}{a}{n}}{c'}$ for some nonce $n$, some $a\in\{\infty,1\}$,
some $c'$, and $c = \emptyset$. By Lemma~\ref{lem-proof:lr-ground}, this implies that $M=N=n$, and
$\Gamma(n) = \noncetypelab{\L}{a}{n}$.
Thus, by definition, there exists $x$ such that $\phiEE(x) = n$ and the claim holds with
$R=x$.

\item \case{\TNonceL}: in this case $M = N = n$ for some $n \in \N$ such that $\Gamma(n) = \noncetypelab{\L}{a}{n}$ for some $a\in\{1,\infty\}$.
Hence, by definition, there exists $x$ such that $\phiEE(x) = n$ and the claim holds with $R=x$.

\item \case{\TCst}: then $M = N = a \in \CST\cup\FN$, and the claim holds with $R = a$.

\item \case{\TKey}: then $M = N = k \in \K$ and there exists $T'$ such that $\Gamma(k) = \skey{\L}{T'}$.
By definition, there exists $x$ such that $\phiEE(x) = k$ and the claim holds with
$R=x$.

\item \case{\TPubkey, \TVkey}: then $M = N = \PUBK{k}$ (resp. $\VK{k}$) for some $k\in\dom{\Gamma}$.
By definition, there exists $x$ such that $\phiEE(x) = \PUBK{k}$ (resp. $\VK{k}$) and the claim holds 
with $R=x$.

\item \case{\TPair, \THashL}: these cases are similar.
We detail the \TPair case. In that case, $T = T_1 * T_2$ for some $T_1$, $T_2$.
By Lemma~\ref{lem-proof:subtyping}, $T_1$, $T_2$ are subtypes of $\L$.
In addition, there exist $M_1$, $M_2$, $N_1$, $N_2$, $c_1$, $c_2$ such
that $\teqTc{\Gamma}{\Delta}{\E}{\E'}{M_i}{N_i}{T_i}{c_i}$ (for $i\in\{1,2\}$). By applying the induction hypothesis to these two judgements (which have shorter proofs), we obtain $R_1$, $R_2$ such that for all $i$,
$M_i = R_i (\phiL{c_i}\cup\phiEE)$ and
$N_i = R_i (\phiR{c_i}\cup\phiEE)$.
Therefore the claim holds with $R = \PAIR{R_1}{R_2}$.

\item \case{\TEncH, \TAencH, \THash, \TSignH}: these four cases are similar. In each case, by the form of the typing rule,
we have $c = \{M \eqC N\} \cup c'$ for some $c'$.
Therefore by definition of $\phiL{c}$, $\phiR{c}$, there exists $x$ such that $\phiL{c}(x) = M$ and $\phiR{c}(N) = N$.
The claim holds with $R = x$.

\item \case{\TEncL, \TAencL}: these two cases are similar, we write the proof for the \TEncL case.
The form of this rule application is:
\[ \inferrule
 {\inferrule*{\Pi}{\teqTc{\Gamma}{\Delta}{\E}{\E'}{M}{N}{\encT{\L}{k}}{c}}\\
  \Gamma(k) = \skey{\L}{T'}}
 {\teqTc{\Gamma}{\Delta}{\E}{\E'}{M}{N}{\L}{c}}
\]
By Lemma~\ref{lem-proof:enc-types}, there exist $M'$, $N'$ such that $M = \ENC{M'}{k}$, $N = \ENC{N'}{k}$, and
$\teqTc{\Gamma}{\Delta}{\E}{\E'}{M'}{N'}{\L}{c}$ with a proof shorter that $\Pi$.
Thus by applying the induction hypothesis to this judgement, there exists $R$ such that
$M' = R (\phiL{c}\cup\phiEE)$ and
$N' = R (\phiR{c}\cup\phiEE)$.

Moreover, since $\Gamma(k) = \skey{\L}{T'}$, by definition, there exists $x$ such that $\phiEE(x) = k$
(in the asymmetric case, the messages are encrypted with a public key $\PUBK{k}$, which also appears in this frame).
Therefore, the claim holds with the recipe $\ENC{R}{x}$.

\item \case{\TSignL}: the form of this rule application is:
\[ \inferrule
 {\inferrule*{\Pi}{\teqTc{\Gamma}{\Delta}{\E}{\E'}{M'}{N'}{\L}{c}}\\
  \Gamma(k) = \skey{\L}{T'}}
 {\teqTc{\Gamma}{\Delta}{\E}{\E'}{\SIGN{M'}{k}}{\SIGN{N'}{k}}{\L}{c}}
\]
with $M = \SIGN{M'}{k}$, $N = \SIGN{N'}{k}$.
Thus by applying the induction hypothesis to $\teqTc{\Gamma}{\Delta}{\E}{\E'}{M'}{N'}{\L}{c}$,
there exists $R$ such that
$M' = R (\phiL{c}\cup\phiE{\Delta}{\E}{\E'})$ and
$N' = R (\phiR{c}\cup\phiE{\Delta}{\E}{\E'})$.

Moreover, since $\Gamma(k) = \skey{\L}{T'}$, by definition, there exists $x$ such that $\phiEE(x) = k$.
Therefore, the claim holds with the recipe $\SIGN{R}{x}$.
\end{itemize}
\end{proof}

\begin{lemma}[Low frames with consistent constraints are statically equivalent]
\label{lem-proof:l-cons-stat-eq}
For all ground $\phi$, $\phi'$, for all $c$, $\Gamma$, if
\begin{itemize}
\item $\teqTc{\Gamma}{\Delta}{\E}{\E'}{\phi}{\phi'}{\L}{c}$
\item and $c$ is consistent in $\novar{\Gamma}$,
\end{itemize}
then
$\NEWN{\E_\Gamma}.\phi$ and $\NEWN{\E_\Gamma}.\phi'$ are statically equivalent.
\end{lemma}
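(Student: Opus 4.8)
The plan is to reduce the desired static equivalence between $\NEWN{\EG}.\phi$ and $\NEWN{\EG}.\phi'$ to the static equivalence of the constraint frames $\NEWN{\EG}.(\phiEE\cup\phiL{c})$ and $\NEWN{\EG}.(\phiEE\cup\phiR{c})$, which consistency hands to us directly. The bridge is Lemma~\ref{lem-proof:low-recipe}: since by the definition of $\L$-typed substitutions each pair $\phi(x)$, $\phi'(x)$ is typed $\L$ with some $c_x\subseteq c$, there is a destructor-free attacker recipe $R_x$ with $\phi(x)=R_x(\phiEE\cup\phiL{c})$ and $\phi'(x)=R_x(\phiEE\cup\phiR{c})$, lifting from $c_x$ to $c$ via the common ordering of the constraints. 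The same recipe works on both sides, which is the crucial output of that lemma.

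First I would observe that, because $\phi$ and $\phi'$ are ground, every constraint produced while typing the messages $\phi(x)$ and $\phi'(x)$ relates ground subterms of these messages (the only constraint-introducing rules, \TEncH, \TAencH, \THash, \TSignH, record subterms of the typed terms); hence $c$ is ground and $\var{c}=\emptyset$. Consequently, instantiating consistency of $c$ in $\novar{\Gamma}$ with the environment $\novar{\Gamma}$ itself and the empty substitutions $\sigma=\sigma'=\emptyset$ immediately yields static equivalence of $\NEWN{\EG}.(\phiEE\cup\phiL{c})$ and $\NEWN{\EG}.(\phiEE\cup\phiR{c})$; here I use that $\phiEE$ and $\EG$ depend only on the names and keys of $\Gamma$, so they are unchanged by restricting to $\novar{\Gamma}$. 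I would also record that $\dom{\phi}=\dom{\phi'}$, supplying the domain half of the static equivalence of the goal frames.

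The heart of the argument is the recipe-composition step. Given an attacker term $R$ with $\var{R}\subseteq\dom{\phi}$, define $\hat R:=R\{x\mapsto R_x\mid x\in\dom{\phi}\}$, an attacker term whose variables lie in $\dom{\phiEE\cup\phiL{c}}=\dom{\phiEE\cup\phiR{c}}$. Using that the axiom sets $\dom{\phi}$ and $\dom{\phiEE\cup\phiL{c}}$ are disjoint, and that each $R_x$ is destructor-free so that $R_x(\phiEE\cup\phiL{c})=\phi(x)$ is already a message, one verifies the syntactic identities $\hat R(\phiEE\cup\phiL{c})=R\phi$ and $\hat R(\phiEE\cup\phiR{c})=R\phi'$, whence $\eval{R\phi}=\eval{\hat R(\phiEE\cup\phiL{c})}$ and $\eval{R\phi'}=\eval{\hat R(\phiEE\cup\phiR{c})}$. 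Applying the construction to a second attacker term $S$ as well, I then chain, for arbitrary $R,S$ over $\dom{\phi}$: $R\phi\eqEv S\phi$ iff $\hat R(\phiEE\cup\phiL{c})\eqEv\hat S(\phiEE\cup\phiL{c})$, iff $\hat R(\phiEE\cup\phiR{c})\eqEv\hat S(\phiEE\cup\phiR{c})$ by the constraint-frame static equivalence applied to $\hat R,\hat S$, iff $R\phi'\eqEv S\phi'$. This is exactly the static equivalence of $\NEWN{\EG}.\phi$ and $\NEWN{\EG}.\phi'$.

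The main obstacle is bookkeeping rather than any deep argument. One must (i) fix disjoint axiom namings for the output frame $\phi$, the initial-knowledge frame $\phiEE$, and the constraint frame $\phiL{c}$, so that the double substitution underlying $\hat R(\phiEE\cup\phiL{c})=R\phi$ goes through cleanly; (ii) justify lifting each $R_x$ from $\phiL{c_x}\cup\phiEE$ to $\phiL{c}\cup\phiEE$ coherently on both sides, which relies on $\phiL{c}$ and $\phiR{c}$ enumerating the constraints of $c$ in the same order, so that a recipe valid over $\phiL{c_x}\cup\phiEE$ is valid, with matching values, over $\phiR{c_x}\cup\phiEE$; and (iii) confirm that attacker recipes never mention the restricted nonces in $\EG$ (they range over $\FN\cup\AX$ by definition of attacker terms), so the restriction $\NEWN{\EG}$ is respected throughout. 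None of these steps requires calculation, but each must be stated carefully for the chain of equivalences to be rigorous.
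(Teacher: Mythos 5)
Your proposal is correct and follows essentially the same route as the paper's proof: both reduce the goal to static equivalence of the constraint frames $\NEWN{\EG}.(\phiEE\cup\phiL{c})$ and $\NEWN{\EG}.(\phiEE\cup\phiR{c})$ (obtained from consistency, using that $c$ is ground), invoke Lemma~\ref{lem-proof:low-recipe} to get a common recipe $R_x$ per frame entry with $c_x\subseteq c$, and substitute these into arbitrary attacker terms to transfer equality tests between the two pairs of frames. Your additional bookkeeping remarks (disjoint axiom names, matching constraint order, destructor-freeness of the $R_x$) are sound refinements of details the paper leaves implicit.
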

\begin{proof}
We can first notice that since $\phi$ and $\phi'$ are ground, so is $c$ (this is easy to see by examining the typing rules for terms).
Let $R$, $R'$ be two attacker recipes, such that
$\var{R} \cup \var{R'} \subseteq \dom{\phi} (= \dom{\phi'})$.

For all $x\in \dom{\phi} (= \dom{\phi'})$, by assumption, there exists $c_x \subseteq c$ such that
$\teqTc{\Gamma}{\Delta}{\E}{\E'}{\phi(x)}{\phi'(x)}{\L}{c_x}$.
By Lemma~\ref{lem-proof:low-recipe}, there exists a recipe $R_x$ such that
$\phi(x) = R_x (\phiL{c_x}\cup\phiEE)$ and
$\phi'(x) = R_x (\phiR{c_x}\cup\phiEE)$.

Since $c_x \subseteq c$, we also have $\phi(x) = R_x (\phiL{c}\cup\phiEE)$ and
$\phi'(x) = R_x (\phiR{c}\cup\phiEE)$.

Let $\overline{R}$ and $\overline{R}'$ be the recipes obtained by replacing every occurence of $x$
with $R_x$ in respectively $R$ and $R'$, for all variable $x \in \dom{\phi} (= \dom{\phi'})$.

We then have
$R\phi = \overline{R}(\phiL{c}\cup\phiEE)$ and 
$R'\phi = \overline{R}'(\phiL{c}\cup\phiEE)$;
and similarly
$R\phi' = \overline{R}(\phiR{c}\cup\phiEE)$ and 
$R'\phi' = \overline{R}'(\phiR{c}\cup\phiEE)$.

Since $c$ is ground, and consistent in $\novar{\Gamma}$,
by definition of consistency,
the frames $\NEWN{\E_\Gamma}.\phiL{c}\cup\phiEE$ and $\NEWN{\E_\Gamma}.\phiR{c}\cup\phiEE$
are statically equivalent.
Hence, by definition of static equivalence,
\[\overline{R}(\phiL{c}\cup\phiEE) = \overline{R}'(\phiL{c}\cup\phiEE)
\quad \Longleftrightarrow \quad
\overline{R}(\phiR{c}\cup\phiEE) = \overline{R}'(\phiR{c}\cup\phiEE)\]
\ie
\[R\phi = R'\phi
\quad \Longleftrightarrow \quad
R\phi' = R'\phi'\]

Therefore, $\NEWN{\E_\Gamma}.\phi$ and $\NEWN{\E_\Gamma}.\phi'$ are statically equivalent.
\end{proof}

\begin{lemma}[Invariant]
\label{lem-proof:invariant}
For all $\Gamma$, 
$\phi_P$, $\phi_P'$, $\phi_Q$, $\sigma_P$, $\sigma_P'$, $\sigma_Q$, $\E'$, 
$c_\phi$, $c_\sigma$,
for all multisets of processes $\PP$, $\PP'$, $\QQ$, where the processes in $\PP$, $\PP'$, $\QQ$ are noted $\{P_i\}$, $\{P'_i\}$, $\{Q_i\}$;
for all constraint sets $\{C_i\}$, if:

\begin{itemize}
\item $|\PP| = |\QQ|$
\item $\dom{\phi_P}=\dom{\phi_Q}$
\item $\forall i,$ there is a derivation $\Pi_i$ of $\teqP{\Gamma}{\Delta}{\EPO}{\EQO}{P_i}{Q_i}{C_i}$,
\item $\teqTc{\Gamma}{\Delta}{\EPO}{\EQO}{\phi_P}{\phi_Q}{\L}{c_\phi}$
\item for all $i\neq j$, the sets of bound variables in $P_i$ and $P_j$ (resp. $Q_i$ and $Q_j$) are disjoint, and similarly for the names 
\item $\sigma_P$, $\sigma_Q$ are ground, and $\wtc{\Delta}{\EPO}{\EQO}{\sigma_P}{\sigma_Q}{\Gamma}{c_\sigma}$,
\item $\inst{({\UnionCart}_i C_i) \UnionAll c_\phi}{\sigma_P}{\sigma_Q} \UnionAll c_\sigma$ is consistent, 
\item $(\E_\Gamma,\PP,\phi_P,\sigma_P) \redAction{\alpha} (\E',\PP',\phi_P',\sigma_P')$,
\end{itemize}
then there exist a word $w$, a multiset $\QQ' = \{Q'_i\}$, constraint sets $\{C_i'\}$, a frame $\phi_Q'$, a substitution $\sigma_Q'$, an environment $\Gamma'$, 
constraints $c_\phi'$, and $c_\sigma'$ such that:
\begin{itemize}
\item $w \eqSilent \alpha$
\item $|\PP'|=|\QQ'|$
\item for all $i\neq j$, the sets of bound variables in $P_i'$ and $P_j'$ (resp. $Q_i'$ and $Q_j'$) are disjoint, and similarly for the bound names;
\item $\teqTc{\Gamma'}{\Delta}{\EPO}{\EQO}{\phi_P'}{\phi_Q'}{\L}{c_\phi'}$
\item $\E' = \E_{\Gamma'}$
\item $(\E_\Gamma,\QQ,\phi_Q,\sigma_Q) \redWord{w} (\E_{\Gamma'},\QQ',\phi_Q',\sigma_Q')$,
\item $\forall i, \teqP{\Gamma'}{\Delta}{\EPO}{\EQO}{P'_i}{Q'_i}{C'_i}$,
\item $\sigma_P'$, $\sigma_Q'$ are ground and $\wtc{\Delta}{\EPO}{\EQO}{\sigma_P'}{\sigma_Q'}{\Gamma'}{c_\sigma'}$,
\item $\dom{\phi_P'} = \dom{\phi_Q'}$,
\item $\inst{({\UnionCart}_i C_i') \UnionAll c_\phi'}{\sigma_P'}{\sigma_Q'} \UnionAll c_\sigma'$ is consistent. 
\end{itemize}
\end{lemma}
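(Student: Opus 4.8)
The plan is to proceed by a case analysis on the semantic rule justifying the step $(\E_\Gamma,\PP,\phi_P,\sigma_P) \redAction{\alpha} (\E',\PP',\phi_P',\sigma_P')$, together with an inversion on the typing derivation $\Pi_k$ of the component $P_k \sim Q_k$ that is being reduced. Since the reduced process $P_k$ determines the applicable semantic rule through its head symbol, and since typability of $P_k \sim Q_k$ constrains the shape of $Q_k$ (the two share the same top-level constructor, up to the branch selection performed by the special conditional rules), in each case I can read off which reduction(s) to replay on the right to build $w$, $\QQ'$, $\phi_Q'$, $\sigma_Q'$ and the refined environment $\Gamma'$. The action-matching requirement $w \eqSilent \alpha$ is immediate: for the internal rules \textsc{Par}, \textsc{Zero}, \textsc{New}, \textsc{Let-In}, \textsc{Let-Else}, \textsc{If-Then}, \textsc{If-Else} the action is $\silentAction$ and the right side performs the corresponding silent step(s), while for \textsc{Out} and \textsc{In} I replay exactly the same observable action. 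The case \textsc{Repl} cannot occur: the type system has no rule for $\REPL P$, so no typable $P_k$ has a replication at top level.

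The structural cases are discharged with the auxiliary lemmas already established. For \textsc{Par}, inversion gives \textsc{PPar} with $C_k = C_1 \UnionCart C_2$, and splitting the pair on both sides leaves the global product union of constraints unchanged up to associativity of $\UnionCart$. For \textsc{Zero} (rule \textsc{PZero}) and \textsc{New} (rule \textsc{PNew}) I either drop a neutral constraint element or extend $\Gamma$ to $\Gamma' = \Gamma, n:\tau$; freshness of $n$ guarantees that enlarging $\E_\Gamma$ to $\E_{\Gamma'}$ preserves the static equivalences underlying consistency. For \textsc{Out} (rule \textsc{POut}) I use Lemma~\ref{lem-proof:subst-typing} to push $\sigma_P,\sigma_Q$ through the judgement $M \sim N : \L$, obtaining that the freshly output $M\sigma_P$ and $N\sigma_Q$ are of type $\L$; the output constraint $c$ simply migrates from the process part $C_k = C' \UnionAll c$ into the frame part $c_\phi'$, so the instantiated global constraint is unchanged. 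For \textsc{In} (rule \textsc{PIn}) I replay the same attacker recipe $R$ and invoke Lemma~\ref{lem-proof:l-type-recipe} on the low-typed frames $\phi_P\sigma_P \sim \phi_Q\sigma_Q$ (obtained again via Lemma~\ref{lem-proof:subst-typing}): this yields simultaneously that the input succeeds on both sides or fails on both, and that the received value is of type $\L$, so $\sigma_P',\sigma_Q'$ remain well-typed in $\Gamma' = \Gamma, x:\L$.

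The destructor and conditional cases are the delicate ones. For \textsc{Let-In}/\textsc{Let-Else} under rule \textsc{PLet}, Lemma~\ref{lem-proof:app-dest} simultaneously settles that $d(y\sigma_P)$ evaluates to $\bot$ exactly when $d(y\sigma_Q)$ does --- so both sides enter the same branch --- and, on success, that the result has the declared type $T$ with a subset of the available constraints, keeping $\sigma_P',\sigma_Q'$ well-typed in $\Gamma,x:T$; rule \textsc{PLetLR} is handled by Lemma~\ref{lem-proof:lr-ground}, which forces $y\sigma_P$ to be a nonce or constant so that the destructor fails on both sides and only the else branch is entered. The conditional is where the interplay with consistency is essential, and is the \emph{main obstacle}. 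Under \textsc{PIfL}, the guards are typed $M \sim N : \L$ and $M' \sim N' : \L$ and their constraints are folded into the consistent global set; the induced static equivalence of the left and right frames (Lemma~\ref{lem-proof:l-cons-stat-eq}) transfers the equality test, giving $M\sigma_P = M'\sigma_P \iff N\sigma_Q = N'\sigma_Q$, so the \emph{same} branch is taken on both sides and the corresponding sub-derivation supplies the new typing. The remaining conditional rules each pin the branch by a tailored argument: \textsc{PIfLR} uses the singleton refinement types (Lemma~\ref{lem-proof:lr-ground}) to compute the exact branches selected on each side; \textsc{PIfS} uses Lemma~\ref{lem-proof:LS-disjoint} to rule out equality of a low and a secret value, forcing the else branch; and \textsc{PIfLR*} relies on the diff-equivalence indexing of replicated nonces, whereby matching indices guarantee that the test agrees on both sides.

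Throughout, the recurring obligation is to re-establish the final consistency statement, namely that $\inst{({\UnionCart}_i C_i') \UnionAll c_\phi'}{\sigma_P'}{\sigma_Q'} \UnionAll c_\sigma'$ is consistent. In every case the new instantiated constraint set is either identical to the previous one (constraints merely move between the process, frame and substitution components, as for \textsc{Out} and \textsc{Par}) or is a subset of it (branches are pruned, as for the conditionals and \textsc{Let-Else}); consistency is then inherited from the hypothesis by Lemma~\ref{lem-proof:cons-subset}, whose points on subsets, on $\UnionCart$-monotonicity and on commutation of $\inst{\cdot}{\cdot}{\cdot}$ with the constraint operators are exactly what the bookkeeping requires, together with Lemmas~\ref{lem-proof:env-const-branch} and~\ref{lem-proof:all-branch-const} to keep the environments attached to constraints aligned across the two sides. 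Assembling the per-case constructions yields the required $w$, $\QQ'$, $\Gamma'$ and constraint data satisfying all the listed invariants.
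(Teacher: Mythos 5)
Your overall plan coincides with the paper's proof: the same case analysis (driven by the last typing rule of the derivation for the reduced component, which is equivalent to your semantic-rule-plus-inversion framing), the same per-case lemmas (\ref{lem-proof:subst-typing} and \ref{lem-proof:l-type-recipe} for \textsc{In}, \ref{lem-proof:app-dest} for \textsc{PLet}, \ref{lem-proof:lr-ground} for \textsc{PLetLR} and \textsc{PIfLR}, \ref{lem-proof:LS-disjoint} for \textsc{PIfS}, \ref{lem-proof:l-cons-stat-eq} for \textsc{PIfL}), and the same bookkeeping via \ref{lem-proof:cons-subset} for consistency. However, there is one genuine omission: you never address union types, and your inversion scheme breaks where they appear. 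First, the last rule of $\Pi_k$ can be \textsc{POr}, which leaves the process unchanged and splits the environment on a variable of type $T \orT T'$; your case analysis, keyed to the head symbol of $P_k$, has no case for it. The paper dispatches this once and for all by a preliminary reduction: since $\sigma_P,\sigma_Q$ are ground and well-typed, Lemma~\ref{lem-proof:ground-subst-branch} selects a branch $\Gamma'' \in \branch{\Gamma}$ in which they remain well-typed, Lemmas~\ref{lem-proof:type-processes-branches} and~\ref{lem-proof:type-terms-branches} re-type all processes and frames in $\Gamma''$ with subset constraint sets, and consistency is inherited by Lemma~\ref{lem-proof:cons-subset}; thereafter one may assume $\Gamma$ is union-free and \textsc{POr} is impossible.

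Second, the same issue resurfaces inside your \textsc{PLet} case: the destructor type $T$ returned by \ref{lem-proof:app-dest} can itself be a union type (rule \textsc{DAdecH} yields $T \orT \L$), so you cannot simply set $\Gamma' = \Gamma, x:T$. The paper first applies Lemma~\ref{lem-proof:ground-term-ortype} to the ground values $\eval{d(\sigma_P(y))}$, $\eval{d(\sigma_Q(y))}$ to pick a non-union branch $T' \in \branch{T}$ they actually inhabit, and only then extends the environment with $x:T'$. This matters because the consistency-preservation argument for the cases that extend the substitutions hinges on $T'$ not being a union type: one needs every environment occurring in the new constraint set $C_i'$ to contain exactly $x:T'$ (via Lemma~\ref{lem-proof:env-const-branch}, whose conclusion is only usable for non-union types) so that the identity $({\UnionCart}_{j\neq i} C_j') \UnionCart C_i' = ({\UnionCart}_{j\neq i} C_j) \UnionCart C_i'$ holds and the well-typed-instantiation point of Lemma~\ref{lem-proof:cons-subset} applies to the extension $\{\sigma_P'(x)/x\},\{\sigma_Q'(x)/x\}$. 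Your summary that constraints "merely move between components or shrink" is accurate for \textsc{Out}, \textsc{Par} and the branch-pruning cases, but the substitution-extension cases require this extra branch-selection step, without which the induction does not close.
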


\begin{proof}

First, we show that it is sufficient to prove this lemma in the case where $\Gamma$ does not contain any union types.
Indeed, assume we know the property holds in that case.
Let us show that the lemma then also holds in the other case, \ie if $\Gamma$ contains union types.
By hypothesis, $\sigma_P$, $\sigma_Q$ are ground, and $\wtc{\Delta}{\EPO}{\EQO}{\sigma_P}{\sigma_Q}{\Gamma}{c_\sigma}$.
Hence we know by Lemma~\ref{lem-proof:ground-subst-branch} that there exists a branch $\Gamma''\in\branch{\Gamma}$
(thus $\Gamma''$ does not contain union types), such that $\wtc{\Delta}{\EPO}{\EQO}{\sigma_P}{\sigma_Q}{(\Gamma'')}{c_\sigma}$.

Moreover, by Lemma~\ref{lem-proof:type-processes-branches}, $\forall i, \teqP{\Gamma''}{\Delta}{\EPO}{\EQO}{P_i}{Q_i}{C_i'' \subseteq C_i}$;
and by Lemma~\ref{lem-proof:type-terms-branches},
$\teqTc{\Gamma''}{\Delta}{\EPO}{\EQO}{\phi_P}{\phi_Q}{\L}{c_\phi}$.
In addition by Lemma~\ref{lem-proof:cons-subset},
$\inst{({\UnionCart}_i C_i'') \UnionAll c_\phi}{\sigma_P}{\sigma_Q} \UnionAll c_\sigma$
is a subset of $\inst{({\UnionCart}_i C_i) \UnionAll c_\phi}{\sigma_P}{\sigma_Q} \UnionAll c_\sigma$ and is
therefore consistent. 
Thus, if the lemma holds when the environment does not contain union types, it can be applied to the same processes, frames, substitutions and to $\Gamma''$, which directly concludes the proof.

Therefore, we may assume that $\Gamma$ does not contain any union types.


Note that the assumption on the disjointness of the sets of bound variables (and names) in the processes implies, using Lemma~\ref{lem-proof:env-constraints-bound}, that since
$({\UnionCart}_i \inst{C_i}{\sigma_P}{\sigma_Q} \UnionAll c_\sigma)$ is consistent 
(by Lemma~\ref{lem-proof:cons-subset}), each of the $\inst{C_i}{\sigma_P}{\sigma_Q} \UnionAll c_\sigma$ is also consistent. 
Moreover, this disjointness property for $\PP'$ and $\QQ'$ follows from the other points, as it is easily proved by examining the reduction rules that it is preserved by reduction.


\vspace{1em}
By hypothesis, $(\E_\Gamma, \PP, \phi_P, \sigma_P)$ reduces to $(\E_{\Gamma'}, \PP', \phi_P', \sigma_P')$
We know from the form of the reduction rules that exactly one process $P_i \in \PP$ is reduced, while the others are unchanged.
By the assumptions, there is a corresponding process $Q_i \in \QQ$ and a derivation $\Pi_i$ of
$\teqP{\Gamma}{\Delta}{\EPO}{\EQO}{P_i}{Q_i}{C_i}$.

We continue the proof by a case disjunction on the last rule of $\Pi_i$.
Let us first consider the cases of the rules \PZero, \PPar, \PNew, 
 and \POr.

\begin{itemize}
\item \case{ \PZero}:
then $P_i = Q_i = \ZERO$. Hence, the reduction rule applied to $\PP$ is Zero, and $\PP' = \PP \backslash \{P_i\}$,
$\E_P' = \E_P$, $\phi_P' = \phi_P$, and $\sigma_P' = \sigma_P$.
The same reduction can be performed in $\QQ$:
\[(\E_Q, \QQ, \phi_Q, \sigma_Q) \redAction{\silentAction} (\E_Q, \QQ\backslash\{Q_i\}, \phi_Q, \sigma_Q)\]

Since the other processes, the frames, environments and substitutions do not change in this reduction, all the claims
clearly hold in this case (with $c_\phi' = c_\phi$, $c_\sigma' = c_\sigma$).
In particular, the consistency of the constraints follow from the consistency hypothesis.
Indeed,
\begin{align*}
({\UnionCart}_{j\neq i} C_j) \UnionCart C_i \UnionAll c_\phi &=
({\UnionCart}_{j\neq i} C_j) \UnionCart \{(\emptyset,\Gamma)\} \UnionAll c_\phi\\
&= ({\UnionCart}_{j\neq i} C_j) \UnionAll c_\phi,
\end{align*}
since $\Gamma$ is already contained in the environments appearing in each $C_j$ (by Lemma~\ref{lem-proof:env-const-branch}).
Thus
\[\inst{({\UnionCart}_{j} C_j') \UnionAll c_\phi'}{\sigma_P'}{\sigma_Q'} \UnionCart c_\sigma' =
 \inst{({\UnionCart}_{j} C_j) \UnionAll c_\phi}{\sigma_P}{\sigma_Q} \UnionCart c_\sigma\]

\item \case{\PPar}:
then $P_i = P_i^1 \PAR P_i^2$, $Q_i = Q_i^1 \PAR Q_i^2$. Hence, the reduction rule applied to $\PP$ is Par:
\[(\EG, \PP, \phi_P, \sigma_P) \redAction{\silentAction} (\EG, \PP\backslash\{P_i\} \cup \{P_i^1, P_i^2\}, \phi_P,
\sigma_P).\]
We choose $\Gamma' = \Gamma$.

In addition
\[\Pi_i =
\inferrule
 {\inferrule*{\Pi^1}{\teqP{\Gamma}{\Delta}{\EPO}{\EQO}{P_i^1}{Q_i^1}{C_i^1}} \\
  \inferrule*{\Pi^2}{\teqP{\Gamma}{\Delta}{\EPO}{\EQO}{P_i^2}{Q_i^2}{C_i^2}}}
 {\teqP{\Gamma}{\Delta}{\EPO}{\EQO}{P_i}{Q_i}{C_i = C_i^1 \UnionCart C_i^2}}.
\]

The same reduction rule can be applied to $\QQ$:
\[(\EG, \QQ, \phi_Q, \sigma_Q) \redAction{\silentAction}
(\EG, \QQ\backslash\{Q_i\}\cup\{Q_i^1, Q_i^2\}, \phi_Q, \sigma_Q)\]

In this case again, the claims on the substitutions and frames hold since they do not change in the reduction.
Moreover the processes in $\PP'$ and $\QQ'$ are still pairwise typably equivalent.
Indeed, all the processes from $\PP$ and $\QQ$ are unchanged, except for $P_i$ and $Q_i$ which are reduced to $P_i^1$, 
$P_i^2$, $Q_i^1$, $Q_i^2$, and those are typably equivalent using $\Pi^1$ and $\Pi^2$.

Finally the constraint set is still consistent, since: 
\begin{align*}
\inst{({\UnionCart}_j C_j') \UnionAll c_\phi'}{\sigma_P'}{\sigma_Q'} \UnionAll c_\sigma' &=
\inst{({\UnionCart}_{j\neq i} C_j) \UnionCart C_i^1 \UnionCart C_i^2 \UnionAll c_\phi}{\sigma_P}{\sigma_Q} \UnionAll c_\sigma\\
&= \inst{({\UnionCart}_j C_j) \UnionAll c_\phi}{\sigma_P}{\sigma_Q} \UnionAll c_\sigma
\end{align*}

\item \case{\PNew}:
then $P_i = \NEW{n}{\noncetypelab{l}{\oneorinf}{n}}. P_i'$ and $Q_i = \NEW{n}{\noncetypelab{l}{\oneorinf}{n}}. Q_i'$.
$P_i$ is reduced to $P_i'$ by rule New:

\[(\EG, \PP, \phi_P, \sigma_P) \redAction{\silentAction} (\EG\cup\{n\}, \PP\backslash\{P_i\} \cup \{P'_i\}, \phi_P,
\sigma_P).\]

In addition
\[\Pi_i =
\inferrule
 {\inferrule*{\Pi_i'}{\teqP{\Gamma, n : \noncetypelab{l}{\oneorinf}{n}}{\Delta}{\EPO}{\EQO}{P_i'}{Q'_i}{C_i}}}
 {\teqP{\Gamma}{\Delta}{\EPO}{\EQO}{P_i}{Q_i}{C_i}}.
\]

We choose $\Gamma' = \Gamma, n:\noncetypelab{l}{\oneorinf}{n}$, and we have $\EGG = \EG\cup\{n\}$.

The same reduction rule can be applied to $\QQ$:
\[(\EG, \QQ, \phi_Q, \sigma_Q) \redAction{\silentAction}
(\EGG, \QQ\backslash\{Q_i\}\cup\{Q_i'\}, \phi_Q, \sigma_Q)\]

The claim clearly holds: the processes are still pairwise typable (using $\Pi_i'$ in the case of $P_i'$ and $Q_i'$; and $\Pi_j$ for $j\neq i$ as these processes are unchanged by the reduction),
and all the frames, substitutions, and constraints are unchanged, and since $\sigma$, $\sigma'$ are well-typed in $\Gamma'$ if and only if they are well-typed in $\Gamma$.

%
%

\item \case{POr}:
this case is not possible, since we have already eliminated the case where $\Gamma$ contains union types.

\end{itemize}

\vspace{1em}
In all the other cases for the last rule in $\Pi_i$, we know that the head symbol of $P_i$ is not $\PAR$, $\ZERO$
or $\NEWNA$.

Hence, the form of the reduction rules implies that $P_i \in \PP$
is reduced to exactly one process $P_i' \in \PP'$, while the other processes in $\PP$ do not change (\ie $P_j' = P_j$ for $j\neq i$), and $\E' = \EG$.
If we show in each case that the same reduction rule that is applied to $P_i$ can be applied to reduce $\QQ$ to a multiset $\QQ'$ by reducing process $Q_i$ into $Q_i'$,
we will also have $Q_j' = Q_j$ for $j\neq i$.
Therefore the claim on the cardinality of the processes multisets will hold.

Since $P_i$, $Q_i$ can be typed and the head symbol of $P_i$ is not $\NEWNA$, it is clear by examining the typing rules that the head symbol of $Q_i$ is not $\NEWNA$ either.
Hence, we will choose a $\Gamma'$ containing the same nonces as $\Gamma$, and we will have $\EGG = \EG$.

The proofs for theses cases follow the same structure:
\begin{itemize}
\item The typing rule gives us information on the form of $P_i$ and $Q_i$.
\item The form of $P_i$ gives us information on which reduction rule was applied to $\PP$.
\item The form of $Q_i$ is the same as $P_i$. Hence (additional conditions may need to be checked depending on
the rule) $Q_i$ can be reduced to some process $Q_i'$ by applying the same reduction rule that was applied to $P_i$ (except in the \PIfLR case).
\item thus $\QQ$ can be reduced too, with the same actions as $\PP$. We then check the additional conditions on the 
typing of the processes, frames and substitutions, and the consistency condition.
\end{itemize}

\vspace{1em}

First, let us consider the $ \POut$ case.
\begin{itemize}
\item \case{\POut}: then $P_i = \OUT{M}. P_i'$ and reduces to $P_i'$ via the Out rule, and $Q_i = \OUT{N}.Q_i'$ for some $N$ and $Q_i'$.
In addition 
\[\Pi_i =
\inferrule
 {\inferrule*{\Pi}{\teqP{\Gamma}{\Delta}{\EPO}{\EQO}{P_i'}{Q_i'}{C_i'}} \\
  \inferrule*{\Pi'}{\teqTc{\Gamma}{\Delta}{\EPO}{\EQO}{M}{N}{\L}{c}}}
 {\teqP{\Gamma}{\Delta}{\EPO}{\EQO}{P_i}{Q_i}{C_i = C_i' \UnionAll c}}.
\]
We have $\E' = \EG$, $\sigma_P' = \sigma_P$, $\phi_P' = \phi_P \cup \{M / ax_n\}$, and $\alpha = \NEWN ax_n. \OUT{ax_n}$.

The same reduction rule Out can be applied to reduce the process $Q_i$ into $Q_i'$,
hence the claim on the reduction of $\QQ$ holds.
We choose $\Gamma' = \Gamma$.
We have $\EGG = \EG$, $\sigma_Q' = \sigma_Q$, and $\phi_Q' = \phi_Q \cup \{N / ax_n\}$.
We also choose $c_\phi' = c_\phi \cup c$ and $c_\sigma' = c_\sigma$.
The substitutions $\sigma_P$, $\sigma_Q$ are not extended by the reduction, and the typing environment does not change,
which trivially proves the claim regarding the substitutions.

Moreover, since only $M$ and $N$ are added to the frames in the reduction, $\Pi'$ suffices to prove the claim that $\teqTc{\Gamma}{\Delta}{\EPO}{\EQO}{\phi_P'}{\phi_Q'}{\L}{c_\phi'}$.
Since all processes other that $P_i$ and $Q_i$ are unchanged by the reduction (and since the typing environment is also unchanged), $\Pi$ suffices to proves the claim
that $\forall j.\; \teqP{\Gamma'}{\Delta}{\EPO}{\EQO}{P_j'}{Q_j'}{C_j'}$ (with $C_j' = C_j$ for $j\neq i$).

Thus, in this case, it only remains to be proved that
$\inst{({\UnionCart}_j C_j') \UnionAll c_\phi'}{\sigma_P'}{\sigma_Q'}\UnionAll c_\sigma'$ is consistent. 
This constraint set is equal to
\[\inst{({\UnionCart}_{j\neq i} C_j) \UnionCart C_i' \UnionAll (c_\phi \cup c)}{\sigma_P}{\sigma_Q} \UnionAll c_\sigma\]
\ie to
\[\inst{({\UnionCart}_{j\neq i} C_j) \UnionCart (C_i' \UnionAll c) \UnionAll c_\phi}{\sigma_P}{\sigma_Q} \UnionAll c_\sigma\]
\ie
\[\inst{({\UnionCart}_{j} C_j) \UnionAll c_\phi}{\sigma_P}{\sigma_Q} \UnionAll c_\sigma\]
which is consistent 
by hypothesis. Hence the claim holds in this case.
\end{itemize}

\vspace{1em}
In the remaining cases, from the form of the typing rules for processes, the head symbol of neither $P_i$ nor $Q_i$ is $\OUTNA$.
Thus, the reduction applied to $P_i$ (from the assumption), as well as the one applied to $Q_i$ (we still must show it exists and is the same as for $P_i$, except in case \PIfLR, where $P_i$ can follow one branch of the conditional while $Q_i$ follows the other), cannot be Out.
Therefore no new term is output on either side, and $\phi_P' = \phi_P$ and $\phi_Q' = \phi_Q$. Hence the claim on the
domains of the frames holds by assumption.
Moreover, as we will see, in all cases $\Gamma'$ is either $\Gamma$, or $\Gamma, x:T$ where $x$ is a variable declared in
(the head of) $P_i$ and $Q_i$, and $T$ is not a union type.
The condition that $\EG = \EGG$ formulated previously will thus hold.

We choose $c_\phi'=c_\phi$.
The claim that $\teqTc{\Gamma'}{\Delta}{\EPO}{\EQO}{\phi_P'}{\phi_Q'}{\L}{c_\phi'}$ is then actually that
$\teqTc{\Gamma'}{\Delta}{\EPO}{\EQO}{\phi_P}{\phi_Q}{\L}{c_\phi}$,
which is true by Lemma~\ref{lem-proof:typing-contextinclusion},
since by hypothesis $\teqTc{\Gamma}{\Delta}{\EPO}{\EQO}{\phi_P}{\phi_Q}{\L}{c_\phi}$.

Besides, in the cases where we choose $\Gamma' = \Gamma$ then it is true (by hypothesis) that for $j\neq i$, $\teqP{\Gamma'}{\Delta}{\EPO}{\EQO}{P_j'}{Q_j'}{C_j}$.
In the cases where we choose $\Gamma' = \Gamma,x:T$, where $x$ is bound in $P_i$ and $Q_i$, then, since the processes are assumed to use different variable names, $x$ does not appear in $P_j$ or $Q_j$ (for $j\neq i$).
Hence, if $j \neq i$, using the assumption that $\teqP{\Gamma}{\Delta}{\EPO}{\EQO}{P_j}{Q_j}{C_j}$, by
Lemma~\ref{lem-proof:typing-contextinclusion}, we have
$\teqP{\Gamma'}{\Delta}{\EPO}{\EQO}{P_j'}{Q_j'}{C_j'}$,
where $C_j' = \{(c,\Gamma_c \cup \{x:T\})|(c,\Gamma_c)\in C_j\}$.

Hence, for each remaining possible last rule of $\Pi_i$, we only have to show that:
\begin{enumerate}[a)]
\item
\label{item:eqt-red} The same reduction rule can be applied to $Q_i$ as to $P_i$, with the same action.
(Except in the case of the rule \PIfLR, as we will see, where rule If-Then may be applied on one side while rule If-Else 
is applied on the other side, but this has no influence on the argument, as these two rules both represent a silent
action, and have a very similar form.)
\item
\label{item:eqt-subst}$(\sigma_P',\sigma_Q')$ are ground, and $\wtc{\Delta}{\EPO}{\EQO}{\sigma_P'}{\sigma_Q'}{\Gamma'}{c_\sigma'}$ for some set of constraints $c_\sigma'$.
Since at most one variable $x$ is added to the substitutions in the reduction, we only have to check that condition on this 
variable, \ie $\teqTc{\novar{\Gamma'}}{\Delta}{\EPO}{\EQO}{\sigma_P'(x)}{\sigma_Q'(x)}{\Gamma'(x)}{c_x}$ for some $c_x$.
We can then choose $c_\sigma' = c_\sigma \cup c_x$.
As we will see in the proof, we will always have $c_x \subseteq \inst{c_\phi}{\sigma_P}{\sigma_Q} \cup c_\sigma$.
\item
\label{item:eqt-proc}the new processes obtained by reducing $P_i$ and $Q_i$ are typably equivalent in
$\Gamma'$, 
 with a constraint $C_i'$, such that

\[\inst{({\UnionCart}_{j\neq i} C_j) \UnionCart C_i' \UnionAll c_\phi}{\sigma_P}{\sigma_Q} \UnionAll c_\sigma\]
is consistent. 

\bigskip 

The actual claim, from the statement of the lemma, is that
\[\inst{({\UnionCart}_{j\neq i} C_j') \UnionCart C_i' \UnionAll c_\phi}{\sigma_P'}{\sigma_Q'} \UnionAll c_\sigma'\]
is consistent, 
 but we can show that the previous condition is sufficient.

In the case where $\Gamma = \Gamma'$, we have $\sigma_P' = \sigma_P$, $\sigma_Q' = \sigma_Q$,
$C_j' = C_j$ for $j\neq i$, and $c_\sigma' = c_\sigma$. Thus the proposed condition is clearly sufficient (it is even necessary in this case).

In the case where $\Gamma' = \Gamma, x:T$ for some $T$ which is not a union type, and the substitutions $\sigma_P'$, $\sigma_Q'$ are $\sigma_P$, $\sigma_Q$ extended with a term associated to $x$, the proof that the condition is sufficient
is more involved.
First, we show that $({\UnionCart}_{j\neq i} C_j') \UnionCart C_i' = ({\UnionCart}_{j\neq i} C_j) \UnionCart C_i'$.
Indeed, if $S$ denotes the set $({\UnionCart}_{j\neq i} C_j') \UnionCart C_i'$, we have
\begin{align*}
S	&= \{(\bigcup_{j} c'_j,\bigcup_{j} \Gamma'_j) \,|\,
\forall j.\; (c'_j, \Gamma'_j)\in C_j' \;\wedge\;\forall j, j'.\; \Gamma'_j \text{ and } \Gamma'_{j'} \text{ are compatible})\}\\
	&= \{(c_i' \cup \bigcup_{j\neq i} c_j,
	   \Gamma_i' \cup \bigcup_{j\neq i} (\Gamma_j, x:T)\,|\,
	   (c_i',\Gamma_i') \in C_i'
	   \;\wedge\;
	   (\forall j\neq i.\; (c_j, \Gamma_j)\in C_j)
	   \;\wedge\\
	&\qquad
	(\forall j\neq i, j'\neq i.\; (\Gamma_j, x:T) \text{ and } (\Gamma_{j'}, x:T) \text{ are compatible})
	\;\wedge\; 
	(\forall j\neq i.\; \Gamma_i' \text{ and } (\Gamma_{j}, x:T) \text{ are compatible}))\}
\end{align*}
since we already know that for $j\neq i$, $C_j' = \{(c,\Gamma_c \cup \{x:T\})|(c,\Gamma_c)\in C_j\}$.
Assuming we show that $\teqP{\Gamma,x:T}{\Delta}{\EPO}{\EQO}{P_i'}{Q_i'}{C_i'}$, by Lemma~\ref{lem-proof:env-const-branch}, we will also have that all the $\Gamma_i'$ appearing in the elements of $C_i'$ contain $x:T$ (since $T$ is not a union type).
Hence:
\begin{align*}
S	&= \{(c_i' \cup \bigcup_{j\neq i} c_j,
	   \Gamma_i' \cup (\bigcup_{j\neq i} \Gamma_j)\,|\,
	   (c_i',\Gamma_i') \in C_i' \;\wedge\; (\forall j\neq i.\; (c_j, \Gamma_j)\in C_j) \;\wedge\\
  &\qquad (\forall j\neq i, j'\neq i.\; \Gamma_j \text{ and } \Gamma_{j'} \text{ are compatible})
	\;\wedge\; 
	(\forall j\neq i.\; \Gamma_i' \text{ and } \Gamma_{j} \text{ are compatible}))\}\\
	&= ({\UnionCart}_{j\neq i} C_j) \UnionCart C_i'
\end{align*}

It is thus sufficient to ensure the consistency of
\[\inst{S \UnionAll c_\phi}{\sigma_P'}{\sigma_Q'} \UnionAll c_\sigma \UnionAll c_x,\]
\ie, since $c_\sigma$ and $c_x$ are ground (since the substitutions are), that
\[\inst{S \UnionAll c_\phi \UnionAll c_\sigma \UnionAll c_x}{\sigma_P'}{\sigma_Q'}\]
is consistent. 
Using Lemma~\ref{lem-proof:cons-subset}, since $\sigma_P' = \sigma_P \cup \{\sigma_P'(x)/x\}$ (and similarly for $Q$),
it then suffices to show the consistency of
\[\inst{\inst{S \UnionAll c_\phi}{\sigma_P}{\sigma_Q} \UnionAll c_\sigma \UnionAll c_x}{\sigma_P'(x)/x}{\sigma_Q'(x)/x}\]
which is equal to
\[\inst{\inst{S \UnionAll c_\phi}{\sigma_P}{\sigma_Q} \UnionAll c_\sigma}{\sigma_P'(x)/x}{\sigma_Q'(x)/x}\]
since $c_x \subseteq \inst{c_\phi}{\sigma_P}{\sigma_Q} \cup c_\sigma$
(by point~\ref{item:eqt-subst}).
Moreover, as we observed previously, the environments in all elements in 
$\inst{S \UnionAll c_\phi}{\sigma_P}{\sigma_Q} \UnionAll c_\sigma$
contain $x:T$.

Therefore by Lemma~\ref{lem-proof:cons-subset}, since $\teqTc{\novar{\Gamma}}{\Delta}{\EPO}{\EQO}{\sigma_P'(x)}{\sigma_Q'(x)}{T}{c_x}$
(as we will show, as point~\ref{item:eqt-subst}), it suffices to ensure that
\[\inst{S \UnionAll c_\phi}{\sigma_P}{\sigma_Q} \UnionAll c_\sigma\]
is consistent,
to prove the claim.
This is the condition stated at the beginning of this point, since
$S = ({\UnionCart}_{j\neq i} C_j) \UnionCart C_i'$.

\end{enumerate}

\bigskip

\vspace{1em}
We can now prove the remaining cases for the last rule of $\Pi_i$,
that is to say the cases of the rules \PIn, \PLet, \PLetLR, \PIfL, \PIfLR, \PIfS, \PIfLRinf, \PIfP, \PIfI, \PIfLRp.

\begin{itemize}
\item \case{\PIn}: then $P_i = \IN{x}. P_i'$ and reduces to $P_i'$ via the In rule, and $Q_i = \IN{x}.Q_i'$ for some $Q_i'$.
In addition 
\[\Pi_i =
\inferrule
 {\inferrule*{\Pi}{\teqP{\Gamma, x:\L}{\Delta}{\EPO}{\EQO}{P_i'}{Q_i'}{C'_i}}}
 {\teqP{\Gamma}{\Delta}{\EPO}{\EQO}{P_i}{Q_i}{C_i = C_i'}}.
\]
We have $\alpha = \IN{R}$ for some attacker recipe $R$ such that $\var{R} \subseteq \dom{\phi_P}$,
and $\eval{R\phi_P\sigma_P} \neq \bot$.
We also have $\E' = \EG$, $\sigma_P' = \sigma_P \cup \{\eval{R\phi_P\sigma_P} / x\}$, $\phi_P' = \phi_P$.

The same reduction rule In can be applied to reduce the process $Q_i$ into $Q'$.
Indeed,
\begin{itemize}
\item $\var{R} \subseteq \dom{\phi_Q}$ since $\dom{\phi_Q} = \dom{\phi_P}$ by hypothesis;
\item $\eval{R\phi_Q\sigma_Q} \neq \bot$. This follows from Lemma \ref{lem-proof:l-type-recipe}, using 
the fact that by Lemma \ref{lem-proof:subst-typing}, $\teqTc{\novar{\Gamma}}{\Delta}{\EPO}{\EQO}{\phi_P\sigma_P}{\phi_Q\sigma_Q}{\L}{c}$, for some $c \subseteq \inst{c_\phi}{\sigma_P}{\sigma_Q} \cup c_\sigma$.
\end{itemize}
Therefore point \ref{item:eqt-red} holds.

We choose $\Gamma' = \Gamma, x:\L$.
We have $\sigma_Q' = \sigma_Q \cup \{\eval{R\phi_Q\sigma_Q}/x\}$.

Lemmas \ref{lem-proof:subst-typing} and \ref{lem-proof:l-type-recipe}, previously evoked, guarantee that 
\[\teqTc{\novar{\Gamma}}{\Delta}{\E_P}{\E_Q}{\eval{R\phi_P\sigma_P}}{\eval{R\phi_Q\sigma_Q}}{\L}{c'}\]
for some $c' \subseteq \inst{c_\phi}{\sigma_P}{\sigma_Q} \cup c_\sigma$.
This proves point \ref{item:eqt-subst}.

Moreover, $\Pi$ and the fact that
\[\inst{({\UnionCart}_{j\neq i} C_j) \UnionCart C_i' \UnionAll c_\phi}{\sigma_P}{\sigma_Q} \UnionAll c_\sigma =
\inst{({\UnionCart}_{j} C_j) \UnionAll c_\phi}{\sigma_P}{\sigma_Q} \UnionAll c_\sigma\]

which is consistent 
by hypothesis, prove point \ref{item:eqt-proc} and conclude this case.

\item \case{\PLet}: then $P_i = \LET{x}{d(y)}{P_i'}{P_i''}$ and $Q_i = \LET{x}{d(y)}{Q_i'}{Q_i''}$ for some $P_i'$, $P_i''$, 
$Q_i'$, $Q_i''$.
$P_i$ reduces to either $P_i'$ via the Let-In rule, or $P_i''$ via the Let-Else rule.
In addition 
\[\Pi_i =
\inferrule
 {x\notin \dom{\Gamma} \\
 \inferrule*{\Pi}{\tDestnew{\Gamma}{d(y)}{T}} \\
 \inferrule*{\Pi'}{\teqP{\Gamma, x:T}{\Delta}{\EPO}{\EQO}{P_i'}{Q_i'}{C_i'}} \\
 \inferrule*{\Pi''}{\teqP{\Gamma}{\Delta}{\EPO}{\EQO}{P_i''}{Q_i''}{C_i''}}}
 {\teqP{\Gamma}{\Delta}{\EPO}{\EQO}{P_i}{Q_i}{C_i = C_i' \cup C_i''}}.
\]

We have $\alpha = \silentAction$.

Since $\teqTc{\novar{\Gamma}}{\Delta}{\EPO}{\EQO}{\sigma_P(y)}{\sigma_Q(y)}{\Gamma(y)}{c_y}$ (for some $c_y \subseteq c_\sigma$, by hypothesis), and using $\Pi$, by
Lemma \ref{lem-proof:app-dest}, we have:
\[\eval{d(\sigma_P(y))} \neq \bot \Longleftrightarrow \eval{d(\sigma_Q(y))} \neq \bot\]
Therefore, if rule Let-In is applied to $P_i$ then it can also be applied to reduce $Q_i$ into $Q_i'$,
and if the rule applied to $P_i$ is Let-Else then it can also be applied to reduce $Q_i$ into $Q_i''$.
This proves point \ref{item:eqt-red}.
We prove here the Let-In case. The Let-Else case is similar (although slightly easier, since no new variable is added to
the substitutions).

In this case we have $\sigma_P' = \sigma_P \cup \{\eval{d(\sigma_P(y))} / x\}$ and $\sigma_Q' = \sigma_Q \cup \{\eval{d(\sigma_Q(y))} / x\}$.

By Lemma~\ref{lem-proof:app-dest}, we know in this case that there exists $c \subseteq c_y$ such that
$\teqTc{\novar{\Gamma}}{\Delta}{\EPO}{\EQO}{\eval{d(\sigma_P(y))}}{\eval{d(\sigma_Q(y))}}{T}{c}$.
Thus, by Lemma~\ref{lem-proof:ground-term-ortype}, there exists $T' \in \branch{T}$ such that
$\teqTc{\novar{\Gamma}}{\Delta}{\EPO}{\EQO}{\eval{d(\sigma_P(y))}}{\eval{d(\sigma_Q(y))}}{T'}{c}$.

We choose $\Gamma' = \Gamma, x:T'$.
Since $\Gamma$ does not contain union types, $\Gamma' \in \branch{\Gamma, x:T}$.

Since $c \subseteq c_y \subseteq c_\sigma$ and $\teqTc{\novar{\Gamma}}{\Delta}{\EPO}{\EQO}{\eval{d(\sigma_P(y))}}{\eval{d(\sigma_Q(y))}}{T'}{c}$, point~\ref{item:eqt-subst} holds.

We now prove that point~\ref{item:eqt-proc} holds.
Using $\Pi'$, we have $\teqP{\Gamma, x:T}{\Delta}{\EPO}{\EQO}{P_i'}{Q_i'}{C_i'}$.
Hence, by Lemma~\ref{lem-proof:type-processes-branches}, there exists $C_i''' \subseteq C_i' (\subseteq C_i)$
such that $\teqP{\Gamma'}{\Delta}{\EPO}{\EQO}{P_i'}{Q_i'}{C_i'''}$.

Since $C_i''' \subseteq C_i$, we have
\[\inst{({\UnionCart}_{j\neq i} C_j) \UnionCart C_i''' \UnionAll c_\phi}{\sigma_P}{\sigma_Q} \UnionAll c_\sigma
\subseteq
\inst{({\UnionCart}_j C_j) \UnionAll c_\phi}{\sigma_P}{\sigma_Q} \UnionAll c_\sigma.
\]
This last constraint set is consistent 
by hypothesis.
Hence, by Lemma \ref{lem-proof:cons-subset},
$\inst{({\UnionCart}_{j\neq i} C_j) \UnionCart C_i''' \UnionAll c_\phi}{\sigma_P}{\sigma_Q} \UnionAll c_\sigma$
is also consistent. 
This proves point \ref{item:eqt-proc} and concludes this case.

\item \case{\PLetLR}: then $P_i = \LET{x}{d(y)}{P_i'}{P_i''}$ and $Q_i = \LET{x}{d(y)}{Q_i'}{Q_i''}$ for some $P_i'$, $P_i''$, $Q_i'$, $Q_i''$.

$P_i$ reduces to either $P_i'$ via the Let-In rule, or $P_i''$ via the Let-Else rule.

In addition 
\[\Pi_i =
\inferrule
 {x\notin \dom{\Gamma} \\
 \Gamma(y) = \LRTn{l}{\oneorinf}{m}{l'}{\oneorinf}{n} \\
 \inferrule*{\Pi''}{\teqP{\Gamma}{\Delta}{\EPO}{\EQO}{P_i''}{Q_i''}{C_i''}}}
 {\teqP{\Gamma}{\Delta}{\EPO}{\EQO}{P_i}{Q_i}{C_i (= C_i'')}}.
\]

We have $\alpha = \silentAction$.

By hypothesis, $\sigma_P$, $\sigma_Q$ are ground and $\wtc{\Delta}{\E_P}{\E_Q}{\sigma_P}{\sigma_Q}{\Gamma}{c_\sigma}$.
Hence, by definition of the well-typedness of substitutions, there exists $c_y \subseteq c_\sigma$ such that
$\teqTc{\novar{\Gamma}}{\Delta}{\EPO}{\EQO}{\sigma_P(y)}{\sigma_Q(y)}{\LRTn{l}{\oneorinf}{m}{l'}{\oneorinf}{n}}{c_y}$.
Therefore by Lemma~\ref{lem-proof:lr-ground},
$\sigma_P(y) = m$ and $\sigma_Q(y) = n$.

Since $m$, $n$ are nonces, $\eval{d(m)} = \eval{d(n)} = \bot$, and
we thus have $\eval{d(\sigma_P(y))} = \bot$.
Therefore the reduction rule applied to $P_i$ can only be Let-Else, and $P_i$ is reduced to $P_i''$.
Since we also have $\eval{d(\sigma_Q(y))} = \bot$, this rule can also be applied to reduce $Q_i$ into $Q_i''$.
This proves point \ref{item:eqt-red}.

We therefore have $\sigma_P' = \sigma_P$ and $\sigma_Q' = \sigma_Q$.
We choose $\Gamma' = \Gamma$.

Since the substitutions and typing environments are unchanged by the reduction, point \ref{item:eqt-subst} clearly holds.

Moreover, $\Pi''$, and the fact that
\[\inst{({\UnionCart}_{j\neq i} C_j) \UnionCart C_i'' \UnionAll c_\phi}{\sigma_P}{\sigma_Q} \UnionAll c_\sigma = 
\inst{({\UnionCart}_j C_j) \UnionAll c_\phi}{\sigma_P}{\sigma_Q} \UnionAll c_\sigma
\]
which is consistent by hypothesis, prove point \ref{item:eqt-proc} and conclude this case.

%
\item \case{\PIfL}: then $P_i = \ITE{M}{M'}{P_i^\top}{P_i^\bot}$ and $Q_i = \ITE{N}{N'}{Q_i^\top}{Q_i^\bot}$ for some $Q_i^\top$, $Q_i^\bot$.
$P_i$ reduces to $P_i'$ which is either $P_i^\top$ via the If-Then rule, or $P_i^\bot$ via the If-Else rule.
In addition 
\[\Pi_i =
\inferrule{%
 \inferrule*{\Pi^\top}{\teqP{\Gamma}{\Delta}{\EPO}{\EQO}{P_i^\top}{Q_i^\top}{C_i^\top}}\\
 \inferrule*{\Pi^\bot}{\teqP{\Gamma}{\Delta}{\EPO}{\EQO}{P_i^\bot}{Q_i^\bot}{C_i^\bot}}\\\\
 \inferrule*{\Pi}{\teqTc{\Gamma}{\Delta}{\EPO}{\EQO}{M}{N}{\L}{c}}\\
 \inferrule*{\Pi'}{\teqTc{\Gamma}{\Delta}{\EPO}{\EQO}{M'}{N'}{\L}{c'}}}
 {\teqP{\Gamma}{\Delta}{\EPO}{\EQO}{P_i}{Q_i}{C_i = (C_i^\top \cup C_i^\bot) \UnionAll (c \cup c')}}
\]

We have $\alpha = \silentAction$, and $\E' = \EG$.


Since $\teqTc{\Gamma}{\Delta}{\EPO}{\EQO}{M}{N}{\L}{c}$,
by Lemma~\ref{lem-proof:subst-typing}, there exists $c'' \subseteq \inst{c}{\sigma_P}{\sigma_Q} \cup c_\sigma$ such that
$\teqTc{\novar{\Gamma}}{\Delta}{\EPO}{\EQO}{M\sigma_P}{N\sigma_Q}{\L}{c''}$.
Similarly, there exists $c''' \subseteq \inst{c'}{\sigma_P}{\sigma_Q} \cup c_\sigma$ such that
$\teqTc{\novar{\Gamma}}{\Delta}{\EPO}{\EQO}{M'\sigma_P}{N'\sigma_Q}{\L}{c'''}$.

Let $\phi = \{M\sigma_P/x, M'\sigma_P/y\}$ and $\phi' = \{N\sigma_Q/x, N'\sigma_Q/y\}$.
We then have $\teqTc{\novar{\Gamma}}{\Delta}{\EPO}{\EQO}{\phi}{\phi'}{\L}{c''\cup c'''}$.

Let us prove that $c \cup c'''$ is consistent in some typing environment.
By hypothesis, $\sigma_P$, $\sigma_Q$ are ground and $\wtc{\Delta}{\EPO}{\EQO}{\sigma_P}{\sigma_Q}{\Gamma}{c_\sigma}$.
Hence, by Lemma~\ref{lem-proof:ground-subst-branch}, there exists $\Gamma'' \in \branch{\Gamma}$ such that $\wtc{\Delta}{\EPO}{\EQO}{\sigma_P}{\sigma_Q}{\Gamma''}{c_\sigma}$.
By Lemma~\ref{lem-proof:all-branch-const}, there exists $(c_1,\Gamma''')\in C_i$ such that $\Gamma'' \subseteq \Gamma'''$.
Since $C_i = (C_i^\top \cup C_i^\bot) \UnionAll (c\cup c')$,
$c_1$ is of the form $c_2 \cup c \cup c'$ for some $c_2$.

As we noted previously, $\inst{C_i}{\sigma_P}{\sigma_Q} \UnionAll c_\sigma$ is consistent. 
Therefore, by Lemma~\ref{lem-proof:cons-subset}, $\{(\inst{c \cup c'}{\sigma_P}{\sigma_Q} \cup c_\sigma, \Gamma''')\}$ is consistent. 
Hence, by the same Lemma, $c'' \cup c'''$ is also consistent in $\Gamma'''$. 

Thus, by Lemma~\ref{lem-proof:l-cons-stat-eq}, $\phi$ and $\phi'$ are statically equivalent.
Hence, in particular, $M\sigma_P = M'\sigma_P \Longleftrightarrow N\sigma_Q = N'\sigma_Q$.

Therefore, if rule If-Then is applied to $P_i$ then it can also be applied to reduce $Q_i$ into $Q_i^\top$,
and if the rule applied to $P_i$ is If-Else then it can also be applied to reduce $Q_i$ into $Q_i^\bot$.
This proves point \ref{item:eqt-red}.
We prove here the If-Then case. The If-Else case is similar.

We choose $\Gamma' = \Gamma$.
We have $\sigma_P' = \sigma_P$ and $\sigma_Q' = \sigma_Q$.

Since the substitutions and environments do not change in this reduction, point \ref{item:eqt-subst}
trivially holds.

Moreover, by hypothesis, 
\[\inst{({\UnionCart}_{j\neq i} C_j) \UnionCart (C_i^\top \cup C_i^\bot) \UnionAll (c \cup c' \cup c_\phi)}{\sigma_P}{\sigma_Q} \UnionAll c_\sigma\]
is consistent. 
Thus by Lemma~\ref{lem-proof:cons-subset},
\[
\inst{({\UnionCart}_{j\neq i} C_j) \UnionCart (C_i^\top \cup C_i^\bot) \UnionAll c_\phi}{\sigma_P}{\sigma_Q} \UnionAll c_\sigma
\]
is also consistent. 
Since, using
$C_i' = C_i^\top$ and $C_i = (C_i^\top \cup C_i^\bot) \UnionAll \left\{ M \eqC N, M' \eqC N'\right\}$,
\[\inst{({\UnionCart}_{j\neq i} C_j) \UnionCart C_i' \UnionAll c_\phi}{\sigma_P}{\sigma_Q} \subseteq 
\inst{({\UnionCart}_{j\neq i} C_j) \UnionCart (C_i^\top \cup C_i^\bot) \UnionAll c_\phi}{\sigma_P}{\sigma_Q},
\]
we have by Lemma~\ref{lem-proof:cons-subset}
that $\inst{({\UnionCart}_{j\neq i} C_j) \UnionCart C_i' \UnionAll c_\phi}{\sigma_P}{\sigma_Q} \UnionAll c_\sigma$ is consistent. 
$\Pi^\top$ and this fact prove point \ref{item:eqt-proc} and conclude this case.
%
%
%
%
%
\item \case{\PIfLR}: then $P_i = \ITE{M_1}{M_2}{P_i^\top}{P_i^\bot}$ and $Q_i = \ITE{N_1}{N_2}{Q_i^\top}{Q_i^\bot}$ for some $Q_i^\top$, $Q_i^\bot$.
$P_i$ reduces to $P_i'$ which is either $P_i^\top$ via the If-Then rule, or $P_i^\bot$ via the If-Else rule.
In addition 
\[\Pi_i =
\inferrule{%
 \inferrule*{\Pi}{\teqTc{\Gamma}{\Delta}{\EPO}{\EQO}{M_1}{N_1}{\LRTn{l}{1}{m}{l'}{1}{n}}{\emptyset}}\\
 \inferrule*{\Pi'}{\teqTc{\Gamma}{\Delta}{\EPO}{\EQO}{M_2}{N_2}{\LRTn{l''}{1}{m'}{l'''}{1}{n'}}{c'}}\\
 b = (\noncetypelab{l}{1}{m} \overset{?}{=} \noncetypelab{l''}{1}{m'})\\
 b' = (\noncetypelab{l'}{1}{n} \overset{?}{=} \noncetypelab{l'''}{1}{n'})\\
 \inferrule*{\Pi''}{\teqP{\Gamma}{\Delta}{\EPO}{\EQO}{P_i^b}{Q_i^{b'}}{C_i'}}}
 {\teqP{\Gamma}{\Delta}{\EPO}{\EQO}{P_i}{Q_i}{C_i = C_i'}}
\]

We have $\alpha = \silentAction$ in any case.

By hypothesis, $\sigma_P$, $\sigma_Q$ are ground and $\wtc{\Delta}{\E_P}{\E_Q}{\sigma_P}{\sigma_Q}{\Gamma}{c_\sigma}$.
Hence, by Lemma~\ref{lem-proof:subst-typing}, using $\Pi$, there exists $c''$ such that
$\teqTc{\novar{\Gamma}}{\Delta}{\EPO}{\EQO}{M_1\sigma_P}{N_1\sigma_Q}{\LRTn{l}{1}{m}{l'}{1}{n}}{c''}$.
Therefore by Lemma~\ref{lem-proof:lr-ground},
$M_1\sigma_P = m$ and $N_1\sigma_Q = n$.
Similarly we can show that $M_2\sigma_P = m'$ and $N_2\sigma_Q = n'$. 

There are four cases for $b$ and $b'$, which are all similar. We write the proof for the case where $b = \top$ and
$b' = \bot$, \ie $\noncetypelab{l}{1}{m} = \noncetypelab{l''}{1}{m'}$ and $\noncetypelab{l'}{1}{n} \neq \noncetypelab{l'''}{1}{n'}$.

Thus the reduction rule applied to $P_i$ is If-Then and $P_i' = P_i^\top$.
On the other hand, rule If-Else can be applied to reduce $Q_i$ into $Q_i' = Q_i^\bot$.
This proves point \ref{item:eqt-red} (these rules both correspond to silent actions).

We choose $\Gamma' = \Gamma$.
We have $\sigma_P' = \sigma_P$ and $\sigma_Q' = \sigma_Q$.

Since the substitutions and environments do not change in this reduction, point \ref{item:eqt-subst}
trivially holds.

Moreover, $\Pi''$ and the fact that 
\[\inst{({\UnionCart}_{j\neq i} C_j) \UnionCart C_i' \UnionAll c_\phi}{\sigma_P}{\sigma_Q} \UnionAll c_\sigma = 
\inst{({\UnionCart}_{j} C_j) \UnionAll c_\phi}{\sigma_P}{\sigma_Q} \UnionAll c_\sigma\]
prove point \ref{item:eqt-proc} and conclude this case.

\item \case{\PIfS}: then $P_i = \ITE{M}{M'}{P_i^\top}{P_i^\bot}$ and $Q_i = \ITE{N}{N'}{Q_i^\top}{Q_i^\bot}$ for some $Q_i^\top$, $Q_i^\bot$.
$P_i$ reduces to $P_i'$ which is either $P_i^\top$ via the If-Then rule, or $P_i^\bot$ via the If-Else rule.
In addition 
\[\Pi_i =
\inferrule{%
 \inferrule*{\Pi^\bot}{\teqP{\Gamma}{\Delta}{\EPO}{\EQO}{P_i^\bot}{Q_i^\bot}{C_i'}}\\
 \inferrule*{\Pi}{\teqTc{\Gamma}{\Delta}{\EPO}{\EQO}{M}{N}{\L}{c}}\\
 \inferrule*{\Pi'}{\teqTc{\Gamma}{\Delta}{\EPO}{\EQO}{M'}{N'}{\S}{c'}}}
 {\teqP{\Gamma}{\Delta}{\EPO}{\EQO}{P_i}{Q_i}{C_i = C_i'}}
\]

We have $\alpha = \silentAction$ in any case.

By hypothesis, $\sigma_P$, $\sigma_Q$ are ground and $\wtc{\Delta}{\E_P}{\E_Q}{\sigma_P}{\sigma_Q}{\Gamma}{c_\sigma}$.
Hence, by Lemma~\ref{lem-proof:subst-typing}, using $\Pi$, there exists $c''$ such that
$\teqTc{\novar{\Gamma}}{\Delta}{\EPO}{\EQO}{M\sigma_P}{N\sigma_Q}{\L}{c''}$.
Similarly we can show that $\teqTc{\novar{\Gamma}}{\Delta}{\EPO}{\EQO}{M'\sigma_P}{N'\sigma_Q}{\S}{c'''}$ for some $c'''$.

Therefore by Lemma~\ref{lem-proof:LS-disjoint},
$M\sigma_P \neq M'\sigma_P$ and $N\sigma_Q \neq N'\sigma_Q$.
Hence the reduction for $P_i$ is necessarily If-Else, which is also applicable to reduce $Q_i$ to $Q_i^\bot$.
This proves point \ref{item:eqt-red}.

We choose $\Gamma' = \Gamma$.
We have $\sigma_P' = \sigma_P$ and $\sigma_Q' = \sigma_Q$.

Since the substitutions and typing environments do not change in this reduction, point \ref{item:eqt-subst}
trivially holds.

Moreover, $\Pi''$ and the fact that 
\[\inst{({\UnionCart}_{j\neq i} C_j) \UnionCart C_i' \UnionAll c_\phi}{\sigma_P}{\sigma_Q} \UnionAll c_\sigma = 
\inst{({\UnionCart}_{j} C_j) \UnionAll c_\phi}{\sigma_P}{\sigma_Q}\]
prove point \ref{item:eqt-proc} and conclude this case.

\item \case{\PIfI}: then $P_i = \ITE{M}{M'}{P_i^\top}{P_i^\bot}$ and $Q_i = \ITE{N}{N'}{Q_i^\top}{Q_i^\bot}$ for some $Q_i^\top$, $Q_i^\bot$. This case is similar to the \PIfS case: the incompatibility of the types of
$M$, $N$ and $M'$, $N'$ ensures that the processes can only follow the else branch.

$P_i$ reduces to $P_i'$ which is either $P_i^\top$ via the If-Then rule, or $P_i^\bot$ via the If-Else rule.
In addition 
\[\Pi_i =
\inferrule{%
 \inferrule*{\Pi^\bot}{\teqP{\Gamma}{\Delta}{\EPO}{\EQO}{P_i^\bot}{Q_i^\bot}{C_i'}}\\
 \inferrule*{\Pi}{\teqTc{\Gamma}{\Delta}{\EPO}{\EQO}{M}{N}{T*T'}{c}}\\
 \inferrule*{\Pi'}{\teqTc{\Gamma}{\Delta}{\EPO}{\EQO}{M'}{N'}{\LRTn{l}{\oneorinf}{m}{l'}{\oneorinf}{n}}{c'}}}
 {\teqP{\Gamma}{\Delta}{\EPO}{\EQO}{P_i}{Q_i}{C_i = C_i'}}
\]

We have $\alpha = \silentAction$ in any case.

By hypothesis, $\sigma_P$, $\sigma_Q$ are ground and $\wtc{\Delta}{\E_P}{\E_Q}{\sigma_P}{\sigma_Q}{\Gamma}{c_\sigma}$.
Hence, by Lemma~\ref{lem-proof:subst-typing}, using $\Pi$, there exists $c''$ such that
$\teqTc{\novar{\Gamma}}{\Delta}{\EPO}{\EQO}{M\sigma_P}{N\sigma_Q}{T*T'}{c''}$.
By Lemma~\ref{lem-proof:pair-types}, this implies that $M\sigma_P$ and $N\sigma_Q$ both are pairs.
%
Similarly we can show that $\teqTc{\novar{\Gamma}}{\Delta}{\EPO}{\EQO}{M'\sigma_P}{N'\sigma_Q}{\LRTn{l}{\oneorinf}{m}{l'}{\oneorinf}{n}}{c'''}$ for some $c'''$.
By Lemma~\ref{lem-proof:lr-ground}, this implies that $M'\sigma_P = m$ and $N'\sigma_Q = n$.
Thus neither of these two terms are pairs.

Therefore
$M\sigma_P \neq M'\sigma_P$ and $N\sigma_Q \neq N'\sigma_Q$.
The end of the proof for this case is then the same as for the \PIfS case.

\item \case{\PIfP}: then $P_i = \ITE{M}{t}{P_i^\top}{P_i^\bot}$ and $Q_i = \ITE{N}{t}{Q_i^\top}{Q_i^\bot}$ for some $Q_i^\top$, $Q_i^\bot$, some messages $M$, $N$, and some $t \in C \cup \K \cup \N$.
$P_i$ reduces to $P_i'$ which is either $P_i^\top$ via the If-Then rule, or $P_i^\bot$ via the If-Else rule.
In addition 
\[\Pi_i =
\inferrule{%
 \inferrule*{\Pi^\top}{\teqP{\Gamma}{\Delta}{\EPO}{\EQO}{P_i^\top}{Q_i^\top}{C_i^\top}}\\
 \inferrule*{\Pi^\bot}{\teqP{\Gamma}{\Delta}{\EPO}{\EQO}{P_i^\bot}{Q_i^\bot}{C_i^\bot}}\\\\
 \inferrule*{\Pi}{\teqTc{\Gamma}{\Delta}{\EPO}{\EQO}{M}{N}{\L}{c}}\\
 \inferrule*{\Pi'}{\teqTc{\Gamma}{\Delta}{\EPO}{\EQO}{t}{t}{\L}{c'}}\\
 t \in \CST \cup \K \cup \N}
 {\teqP{\Gamma}{\Delta}{\EPO}{\EQO}{P_i}{Q_i}{C_i = C_i^\top \cup C_i^\bot}}
\]

We have in any case $\alpha = \silentAction$.


By hypothesis, $\sigma_P$, $\sigma_Q$ are ground and $\wtc{\Delta}{\E_P}{\E_Q}{\sigma_P}{\sigma_Q}{\Gamma}{c_\sigma}$.
Hence, by Lemma~\ref{lem-proof:subst-typing}, using $\Pi$, there exists $c'' \subseteq \inst{c}{\sigma_P}{\sigma_Q} \cup c_\sigma$ such that
$\teqTc{\novar{\Gamma}}{\Delta}{\EPO}{\EQO}{M\sigma_P}{N\sigma_Q}{\L}{c''}$.

We then show that $M\sigma_P = t$ if and only if $N\sigma_Q = t$ (note that since $t$ is ground, $t = t\sigma_P = t\sigma_Q$).
If $M\sigma_P = t$, then $\teqTc{\novar{\Gamma}}{\Delta}{\EPO}{\EQO}{t}{N\sigma_Q}{\L}{c''}$.
In all possible cases for $t$, \ie $t\in\K$, $t\in\N$, and $t\in\CST$, Lemma~\ref{lem-proof:type-key-nonce} implies
that $N\sigma_Q = t$. This proves the first direction of the equivalence, the other direction is similar. 

Therefore, if rule If-Then is applied to $P_i$ then it can also be applied to reduce $Q_i$ into $Q_i^\top$,
and if the rule applied to $P_i$ is If-Else then it can also be applied to reduce $Q_i$ into $Q_i^\bot$.
This proves point \ref{item:eqt-red}.
We prove here the If-Then case. The If-Else case is similar.

We choose $\Gamma' = \Gamma$.
We have $\sigma_P' = \sigma_P$ and $\sigma_Q' = \sigma_Q$.

Since the substitutions and typing environments do not change in this reduction, point \ref{item:eqt-subst}
trivially holds.

Moreover, by hypothesis, 
\[\inst{({\UnionCart}_{j\neq i} C_j) \UnionCart (C_i^\top \cup C_i^\bot) \UnionAll c_\phi}{\sigma_P}{\sigma_Q} \UnionAll c_\sigma\]
is consistent. 
Since, using
$C_i' = C_i^\top$ and $C_i = (C_i^\top \cup C_i^\bot)$, we have
\[\inst{({\UnionCart}_{j\neq i} C_j) \UnionCart C_i' \UnionAll c_\phi}{\sigma_P}{\sigma_Q} \UnionAll c_\sigma \subseteq 
\inst{({\UnionCart}_{j\neq i} C_j) \UnionCart (C_i^\top \cup C_i^\bot) \UnionAll c_\phi}{\sigma_P}{\sigma_Q} \UnionAll c_\sigma,
\]
we have by Lemma~\ref{lem-proof:cons-subset}
that $\inst{({\UnionCart}_{j\neq i} C_j) \UnionCart C_i' \UnionAll c_\phi}{\sigma_P}{\sigma_Q} \UnionAll c_\sigma$ is consistent. 
This fact proves point \ref{item:eqt-proc} and concludes this case.
%
%
%
\item \case{\PIfLRinf}: then $P_i = \ITE{M_1}{M_2}{P_i^\top}{P_i^\bot}$ and $Q_i = \ITE{N_1}{N_2}{Q_i^\top}{Q_i^\bot}$ for some $Q_i^\top$, $Q_i^\bot$.
$P_i$ reduces to $P_i'$ which is either $P_i^\top$ via the If-Then rule, or $P_i^\bot$ via the If-Else rule.
In addition 
\[\Pi_i =
\inferrule{%
 \inferrule*{\Pi}{\teqTc{\Gamma}{\Delta}{\EPO}{\EQO}{M_1}{N_1}{\LRTn{l}{\infty}{m}{l'}{\infty}{n}}{c_1}}\\
 \inferrule*{\Pi'}{\teqTc{\Gamma}{\Delta}{\EPO}{\EQO}{M_2}{N_2}{\LRTn{l}{\infty}{m}{l'}{\infty}{n}}{c_2}}\\
 \inferrule*{\Pi^\top}{\teqP{\Gamma}{\Delta}{\EPO}{\EQO}{P_i^\top}{Q_i^\top}{C_i^\top}}\\
 \inferrule*{\Pi^\bot}{\teqP{\Gamma}{\Delta}{\EPO}{\EQO}{P_i^\bot}{Q_i^\bot}{C_i^\bot}}}
 {\teqP{\Gamma}{\Delta}{\EPO}{\EQO}{P_i}{Q_i}{C_i = C_i^\top \cup C_i^\bot}}
\]

We have $\alpha = \silentAction$ in any case.

By hypothesis, $\sigma_P$, $\sigma_Q$ are ground and $\wtc{\Delta}{\E_P}{\E_Q}{\sigma_P}{\sigma_Q}{\Gamma}{c_\sigma}$.
Hence, by Lemma~\ref{lem-proof:subst-typing}, using $\Pi$, there exists $c''$ such that
$\teqTc{\novar{\Gamma}}{\Delta}{\EPO}{\EQO}{M_1\sigma_P}{N_1\sigma_Q}{\LRTn{l}{\infty}{m}{l'}{\infty}{m}}{c''}$.
Therefore by Lemma~\ref{lem-proof:lr-ground},
$M_1\sigma_P = m$ and $N_1\sigma_Q = n$.
Similarly we can show that $M_2\sigma_P = m$ and $N_2\sigma_Q = n$. 

Hence $M_1' = M_2'$ and $N_1' = N_2'$.

Thus the reduction rule applied to $P_i$ is If-Then and $P_i' = P_i^\top$.
On the other hand, rule If-Then can also be applied to reduce $Q_i$ into $Q_i' = Q_i^\top$.
This proves point \ref{item:eqt-red}.

Note that we still need to type the other branch, even though it is not used here, as when replicating the process this 
test may fail if $M_1$, $N_1$ and $M_2$, $N_2$
are nonces from different sessions.

We choose $\Gamma' = \Gamma$.
We have $\sigma_P' = \sigma_P$ and $\sigma_Q' = \sigma_Q$.

Since the substitutions and environments do not change in this reduction, point \ref{item:eqt-subst}
trivially holds.

Moreover, $\Pi''$ and the fact that, with $C_i' = C_i^\top$, 
\begin{align*}
\inst{({\UnionCart}_{j\neq i} C_j) \UnionCart C_i' \UnionAll c_\phi}{\sigma_P}{\sigma_Q} \UnionAll c_\sigma &\subseteq 
\inst{({\UnionCart}_{j\neq i} C_j) \UnionCart (C_i^\top\cup C_i^\bot) \UnionAll c_\phi}{\sigma_P}{\sigma_Q} \UnionAll c_\sigma \\
& =
\inst{({\UnionCart}_{j} C_j) \UnionAll c_\phi}{\sigma_P}{\sigma_Q} \UnionAll c_\sigma
\end{align*}
prove point \ref{item:eqt-proc} and conclude this case.

\item \case{PIfLR'*}: then $P_i = \ITE{M_1}{M_2}{P_i^\top}{P_i^\bot}$ and $Q_i = \ITE{N_1}{N_2}{Q_i^\top}{Q_i^\bot}$ for some $Q_i^\top$, $Q_i^\bot$.
$P_i$ reduces to $P_i'$ which is either $P_i^\top$ via the If-Then rule, or $P_i^\bot$ via the If-Else rule.
In addition 
\[\Pi_i =
\inferrule{%
 \inferrule*{\Pi}{\teqTc{\Gamma}{\Delta}{\EPO}{\EQO}{M_1}{N_1}{\LRTn{l}{\oneorinf}{m}{l'}{\oneorinf}{n}}{c_1}}\\
 \inferrule*{\Pi'}{\teqTc{\Gamma}{\Delta}{\EPO}{\EQO}{M_2}{N_2}{\LRTn{l''}{\oneorinf}{m'}{l'''}{\oneorinf}{n'}}{c_2}}\\
 \noncetypelab{l}{\oneorinf}{m} \neq \noncetypelab{l''}{\oneorinf}{m'}\\
 \noncetypelab{l'}{\oneorinf}{n} \neq \noncetypelab{l'''}{\oneorinf}{n'}\\
 \inferrule*{\Pi''}{\teqP{\Gamma}{\Delta}{\EPO}{\EQO}{P_i^\bot}{Q_i^\bot}{C_i'}}}
 {\teqP{\Gamma}{\Delta}{\EPO}{\EQO}{P_i}{Q_i}{C_i = C_i'}}
\]

We have $\alpha = \silentAction$ in any case.

By hypothesis, $\sigma_P$, $\sigma_Q$ are ground and $\wtc{\Delta}{\E_P}{\E_Q}{\sigma_P}{\sigma_Q}{\Gamma}{c_\sigma}$.
Hence, by Lemma~\ref{lem-proof:subst-typing}, using $\Pi$, there exists $c''$ such that
$\teqTc{\novar{\Gamma}}{\Delta}{\EPO}{\EQO}{M_1\sigma_P}{N_1\sigma_Q}{\LRTn{l}{\oneorinf}{m}{l'}{\oneorinf}{n}}{c''}$.
Therefore by Lemma~\ref{lem-proof:lr-ground},
$M_1\sigma_P = m$ and $N_1\sigma_Q = n$.
Similarly, using Lemma~\ref{lem-proof:lr-ground}, we can show that $M_2\sigma_P = m'$ and $N_2\sigma_Q = n'$. 

Moreover, since $\noncetypelab{l}{\oneorinf}{m} \neq \noncetypelab{l''}{\oneorinf}{m'}$, we know that $m\neq m'$
(by well-formedness of the processes), and similarly $n \neq n'$.

Hence, $M_1\sigma_P\neq M_2\sigma_P$ and $N_1\sigma_Q\neq N_2\sigma_Q$.

Thus the reduction rule applied to $P_i$ is If-Else and $P_i' = P_i^\bot$.
On the other hand, rule If-Else can also be applied to reduce $Q_i$ into $Q_i' = Q_i^\bot$.
This proves point \ref{item:eqt-red}.

We choose $\Gamma' = \Gamma$.
We have $\sigma_P' = \sigma_P$ and $\sigma_Q' = \sigma_Q$.

Since the substitutions and environments do not change in this reduction, point \ref{item:eqt-subst}
trivially holds.

Moreover, $\Pi''$ and the fact that 
\[\inst{({\UnionCart}_{j\neq i} C_j) \UnionCart C_i' \UnionAll c_\phi}{\sigma_P}{\sigma_Q} \UnionAll c_\sigma = 
\inst{({\UnionCart}_{j} C_j) \UnionAll c_\phi}{\sigma_P}{\sigma_Q} \UnionAll c_\sigma\]
prove point \ref{item:eqt-proc} and conclude this case.

\end{itemize}

\end{proof}

\begin{theorem}[Typing implies trace inclusion]
\label{lem-proof:red-steq}
For all processes $P$, $Q$,
for all 
$\phi_P$, $\phi_Q$, $\sigma_P$, $\sigma_Q$, 
for all multisets of processes $\PP$, $\QQ$ for all constraints $C$, for all sequence $s$ of actions,
for all $\Gamma$ containing only keys,
\[\teqP{\Gamma}{\Delta}{\EPO}{\EQO}{P}{Q}{C},\]
and if $C$ is consistent, 
then
\[P\incTrace Q\]
that is, if
\[(\emptyset,\{P\},\emptyset,\emptyset) \redWord{s} (\E_P,\PP,\phi_P,\sigma_P),\]

then there exists a sequence $s'$ of actions, a multiset $\QQ$, a set of names $\E_Q$, a frame $\phi_Q$, a substitution $\sigma_Q$, such that 
\begin{itemize}
\item $s \eqSilent s'$
\item $(\emptyset,\{Q\},\emptyset,\emptyset) \redWord{s'} (\E_Q,\QQ,\phi_Q,\sigma_Q)$,
\item $\NEWN{\E_P}. \phi_P\sigma_P$ and $\NEWN{\E_Q}.\phi_Q\sigma_Q$ are statically equivalent.
\end{itemize}
\end{theorem}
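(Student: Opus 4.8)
The plan is to prove the theorem by induction on the length of the reduction sequence witnessing the trace of $P$, using Lemma~\ref{lem-proof:invariant} as the inductive engine. That lemma already packages exactly what is needed to pass from one reduction step on the $P$-side to a matching (silent-equal) sequence of steps on the $Q$-side while preserving: pairwise typability of the two process multisets, low-typability of the output frames, well-typedness of the accumulated input substitutions, and consistency of the accumulated product-union constraint set. Hence the bulk of the theorem reduces to instantiating this invariant correctly at the start, threading it through the trace, and reading off static equivalence at the end.

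First I would set up the base case. Starting from $(\emptyset;\{P\};\emptyset;\emptyset)$ on the left and $(\emptyset;\{Q\};\emptyset;\emptyset)$ on the right, I instantiate Lemma~\ref{lem-proof:invariant} with $\PP=\{P\}$, $\QQ=\{Q\}$, the single constraint set $C$, and with $\phi_P=\phi_Q=\emptyset$, $\sigma_P=\sigma_Q=\emptyset$, $c_\phi=c_\sigma=\emptyset$. Every precondition holds trivially here: the empty frames are low-typed with no constraints, the empty substitutions are vacuously well-typed, and $\inst{C\UnionAll\emptyset}{\emptyset}{\emptyset}\UnionAll\emptyset = C$ is consistent by hypothesis. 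Crucially, since $\Gamma$ contains only keys, $\E_\Gamma=\emptyset$, matching the empty set of private names in the initial configuration. Then I carry out the induction on the number of steps in $s$: writing the trace as $(\emptyset;\{P\};\emptyset;\emptyset)\redWord{s_0}(\E_\Gamma;\PP;\phi_P;\sigma_P)\redAction{\alpha}(\E';\PP';\phi_P';\sigma_P')$, the induction hypothesis supplies a matching $Q$-trace $(\emptyset;\{Q\};\emptyset;\emptyset)\redWord{w_0}(\E_\Gamma;\QQ;\phi_Q;\sigma_Q)$ with $w_0\eqSilent s_0$ for which all invariant components hold. Applying Lemma~\ref{lem-proof:invariant} to the single step $\redAction{\alpha}$ then yields a word $w$ with $w\eqSilent\alpha$, a new environment $\Gamma'$, a multiset $\QQ'$, a frame $\phi_Q'$ and a substitution $\sigma_Q'$ such that $Q$ reduces by $w$ and the invariant is re-established at $\Gamma'$; concatenating gives the extended $Q$-trace, with $w_0 w\eqSilent s_0\alpha = s$.

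Finally I would extract static equivalence at the end of the trace. At the last configuration the invariant gives $\teqTc{\Gamma'}{\Delta}{\EPO}{\EQO}{\phi_P}{\phi_Q}{\L}{c_\phi}$, well-typedness of $\sigma_P,\sigma_Q$ in $\Gamma'$, consistency of the instantiated product-union constraint set, and $\E_P=\E_Q=\E_{\Gamma'}$. I first push the substitutions through the frame typing using Lemma~\ref{lem-proof:subst-typing}, obtaining a ground constraint $c$ with $c\subseteq\inst{c_\phi}{\sigma_P}{\sigma_Q}\cup c_\sigma$ and $\teqTc{\novar{\Gamma'}}{\Delta}{\EPO}{\EQO}{\phi_P\sigma_P}{\phi_Q\sigma_Q}{\L}{c}$. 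Since $\inst{c_\phi}{\sigma_P}{\sigma_Q}\cup c_\sigma$ occurs inside every element of the consistent instantiated constraint set, Lemma~\ref{lem-proof:all-branch-const} locates a suitable element whose environment is a branch of $\Gamma'$, and Lemma~\ref{lem-proof:cons-subset} then yields consistency of $c$ in $\novar{\Gamma'}$. Lemma~\ref{lem-proof:l-cons-stat-eq} converts this into static equivalence of $\NEWN{\E_{\Gamma'}}.\phi_P\sigma_P$ and $\NEWN{\E_{\Gamma'}}.\phi_Q\sigma_Q$, which is the required conclusion.

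The main obstacle is not the induction, which is routine once Lemma~\ref{lem-proof:invariant} is available, but this last bookkeeping step: reconciling the \emph{uninstantiated} frame and substitution typing with the consistency of the \emph{instantiated} product-union constraint set. In particular I expect the delicate part to be verifying that the environment in which the extracted ground constraint $c$ is shown consistent carries exactly the set of restricted names $\E_{\Gamma'}$ appearing on both sides of the reduction, so that Lemma~\ref{lem-proof:l-cons-stat-eq} can be applied with the same frame prefix $\NEWN{\E_{\Gamma'}}$ used in the definition of $\incTrace$; the bounds on constraint environments from Lemma~\ref{lem-proof:env-constraints-bound} are what make this matching go through.
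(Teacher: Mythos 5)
Your proposal is correct and follows essentially the same route as the paper's proof: iterate Lemma~\ref{lem-proof:invariant} along the trace starting from the trivially satisfied initial invariant (with $\E_\Gamma=\emptyset$ since $\Gamma$ contains only keys), then extract static equivalence at the end via Lemma~\ref{lem-proof:subst-typing}, the branch lemmas (Lemma~\ref{lem-proof:ground-subst-branch} together with Lemma~\ref{lem-proof:all-branch-const} --- the paper's text cites Lemma~\ref{lem-proof:env-const-branch} at that point, but the direction actually needed, locating a constraint element whose environment contains the given branch, is the one you invoke), Lemma~\ref{lem-proof:cons-subset}, and Lemma~\ref{lem-proof:l-cons-stat-eq}. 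Your closing remark about reconciling the restricted name set with the frame prefix $\NEWN{\E_{\Gamma'}}$, using the bounds from Lemma~\ref{lem-proof:env-constraints-bound} and the fact that $\names{\phi_P\sigma_P}\subseteq\E_{\Gamma'}$, is exactly the final bookkeeping step the paper performs.
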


\begin{proof}
We successively apply Lemma~\ref{lem-proof:invariant} to each of the reduction steps in the reduction 
\[(\emptyset,\{P\},\emptyset,\emptyset) \redWord{s} (\E_P,\PP,\phi_P,\sigma_P).\]

The lemma can indeed be applied successively. At each reduction step of $P$ we obtain a sequence of reduction steps for $Q$ with the same actions, and the conclusions the lemma provides imply the conditions needed for its next application.

It is clear, for the first application, that all the hypotheses of this lemma are satisfied.

In the end, we know that there exist $\Gamma'$, some constraint sets $C_i$, some $c_\phi$, $c_\sigma$, and a reduction 
\[(\emptyset,\{Q\},\emptyset,\emptyset) \redWord{s'} (\EGG,\QQ,\phi_Q,\sigma_Q)\]
with $s \eqSilent s'$, such that (among other conclusions)
\begin{itemize}
\item $\E_P = \EGG$,
\item $\sigma_P$, $\sigma_Q$ are ground and $\wtc{\Delta}{\EPO}{\EQO}{\sigma_P}{\sigma_Q}{\Gamma'}{c_\sigma}$, 
\item $\teqTc{\Gamma'}{\Delta}{\EPO}{\EQO}{\phi_P}{\phi_Q}{\L}{c_\phi}$,
\item $\dom{\phi_P} = \dom{\phi_Q}$,
\item $\forall i, \teqP{\Gamma'}{\Delta}{\EPO}{\EQO}{P_i}{Q_i}{C_i}$,
\item for all $i\neq j$, the sets of bound variables in $P_i$ and $P_j$ (resp. $Q_i$ and $Q_j$) are disjoint, and similarly for the bound names;
\item $\inst{({\UnionCart}_i C_i) \UnionAll c_\phi}{\sigma_P}{\sigma_Q} \UnionAll c_\sigma$ is consistent in $\Delta$, $\EPO$, $\EQO$.
\end{itemize}

To prove the claim, it is then sufficient to show that $\NEWN{\EGG}.\phi_P\sigma_P$ and $\NEWN{\EGG}.\phi_Q\sigma_Q$ are statically equivalent.

We have $\teqTc{\Gamma'}{\Delta}{\EPO}{\EQO}{\phi_P}{\phi_Q}{\L}{c_\phi}$ and 
$\wtc{\Delta}{\EPO}{\EQO}{\sigma_P}{\sigma_Q}{\Gamma'}{c_\sigma}$.
Hence, by Lemma~\ref{lem-proof:subst-typing}, there exists $c \subseteq \inst{c_\phi}{\sigma_P}{\sigma_Q} \cup c_\sigma$
such that $\teqTc{\novar{\Gamma'}}{\Delta}{\EPO}{\EQO}{\phi_P\sigma_P}{\phi_Q\sigma_Q}{\L}{c}$.

We will now show that $(c, \novar{\Gamma'})$ is consistent. 
Since $c \subseteq \inst{c_\phi}{\sigma_P}{\sigma_Q} \cup c_\sigma$,
by Lemma~\ref{lem-proof:cons-subset}, it suffices to show that
$(\inst{c_\phi}{\sigma_P}{\sigma_Q} \cup c_\sigma, \novar{\Gamma'})$ is consistent.

We have $\wtc{\Delta}{\EPO}{\EQO}{\sigma_P}{\sigma_Q}{\Gamma'}{c_\sigma}$.
By Lemma~\ref{lem-proof:ground-subst-branch}, there is a $\Gamma''' \in \branch{\Gamma'}$ such that $\wtc{\Delta}{\EPO}{\EQO}{\sigma_P}{\sigma_Q}{\Gamma'''}{c_\sigma}$.

By Lemma~\ref{lem-proof:env-const-branch}, there exists for all $i$ some $(c_i,\Gamma''_i)\in C_i$ such that $\Gamma''' \subseteq \Gamma''_i$.
The disjointness condition on the bound variables implies by Lemma~\ref{lem-proof:env-constraints-bound}
that for all $i$, $j$, $\Gamma''_i$ and $\Gamma''_j$ are compatible.
Thus ${\UnionCart}_i C_i$ contains $(c',\Gamma'') \eqdef (\bigcup_i c_i, \bigcup_i \Gamma''_i)$.
We have $\Gamma''' \subseteq \Gamma''$.
Therefore $\inst{({\UnionCart}_i C_i) \UnionAll c_\phi}{\sigma_P}{\sigma_Q} \UnionAll c_\sigma$, which is consistent, 
contains
$(\inst{c' \cup c_\phi}{\sigma_P}{\sigma_Q} \cup c_\sigma,\Gamma'')$.
Hence, by Lemma~\ref{lem-proof:cons-subset}, $(\inst{c_\phi}{\sigma_P}{\sigma_Q} \cup c_\sigma, \Gamma'')$ is consistent. 

Therefore, $(c, \Gamma'')$ is consistent. Since $c$ is ground, it follows from the definition of consistency
that $(c,\novar{\Gamma''})$ is also consistent.
Moreover, $\Gamma''' \subseteq \Gamma''$, and $\Gamma'''$ is a branch of $\Gamma'$.
It is then clear that $\novar{\Gamma'}\subseteq \Gamma''$.
Hence, by Lemma~\ref{lem-proof:typing-contextinclusion}, 
since $\teqTc{\novar{\Gamma'}}{\Delta}{\EPO}{\EQO}{\phi_P\sigma_P}{\phi_Q\sigma_Q}{\L}{c}$,
we have $\teqTc{\Gamma''}{\Delta}{\EPO}{\EQO}{\phi_P\sigma_P}{\phi_Q\sigma_Q}{\L}{c}$.

\vspace{1em}
Hence, we have $\teqTc{\Gamma''}{\Delta}{\EPO}{\EQO}{\phi_P\sigma_P}{\phi_Q\sigma_Q}{\L}{c}$ with $(c, \novar{\Gamma''})$
consistent. 

Moreover, $\phi_P\sigma_P$ and $\phi_Q\sigma_Q$ are ground (by well-formedness of the processes).

Therefore, by Lemma~\ref{lem-proof:l-cons-stat-eq}, the frames $\NEWN{\E_{\Gamma''}}.\phi_P\sigma_P$ and $\NEWN{\E_{\Gamma''}}.\phi_Q\sigma_Q$ are statically equivalent.

By definition of the reduction relation, and by well-formedness of the processes,
since 
\[(\emptyset,\{P\},\emptyset,\emptyset) \redWord{s} (\EGG,\PP,\phi_P,\sigma_P)\]
and
\[(\emptyset,\{Q\},\emptyset,\emptyset) \redWord{s'} (\EGG,\QQ,\phi_Q,\sigma_Q)\]
it is clear that $\names{\phi_P\sigma_P} \subseteq \EGG$ and $\names{\phi_Q\sigma_Q} \subseteq \EGG$.

Thus, the only names that are relevant to the frames are $\EGG$.

Hence, $\NEWN{\EGG}.\phi_P\sigma_P$ and $\NEWN{\EGG}.\phi_Q\sigma_Q$ are statically equivalent.
\end{proof}

This theorem corresponds to Theorem~\ref{thm:typing-sound}.

\begin{theorem}[Typing implies trace equivalence]
\label{thm-proof:typing-sound}
For all $\Gamma$ containing only keys, for all 
$P$ and $Q$, if 
\[\teqP{\Gamma}{\Delta}{\EPO}{\EQO}{P}{Q}{C}\] 
and $C$ is consistent, 
then
\[P \equivTrace Q.\]
\end{theorem}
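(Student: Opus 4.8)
The plan is to derive trace equivalence from the already-established trace inclusion (Theorem~\ref{lem-proof:red-steq}) by exploiting a left/right symmetry of the whole framework. By definition $P \equivTrace Q$ means $P \incTrace Q$ \emph{and} $Q \incTrace P$. The first inclusion is immediate: applying Theorem~\ref{lem-proof:red-steq} to the hypotheses $\teqPnew{\Gamma}{P}{Q}{C}$ and the consistency of $C$ yields $P \incTrace Q$ directly. All the real work therefore goes into the reverse inclusion $Q \incTrace P$, which I would obtain by showing that the type system and the notion of consistency are invariant under swapping the left and right components, and then re-applying Theorem~\ref{lem-proof:red-steq} to the pair $(Q,P)$.

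Concretely, first I would introduce a \emph{mirroring} operation $\overline{\cdot}$ that swaps the two sides of every left/right construct: on types it acts as the identity on the security labels $\L,\H,\S$ and on nonce types $\noncetypelab{l}{a}{n}$, commutes with $*$, $\skey{l}{\cdot}$, $\encT{\cdot}{k}$, $\aencT{\cdot}{k}$ and $\orT$, and sends a refinement type $\LRTnewnew{\noncetypelab{l}{a}{m}}{\noncetypelab{l'}{a}{n}}$ to $\LRTnewnew{\noncetypelab{l'}{a}{n}}{\noncetypelab{l}{a}{m}}$; it extends pointwise to environments $\overline{\Gamma}$, to constraints by turning each $M \eqC N$ into $N \eqC M$, and thus to constraint sets $\overline{C}$. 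The key technical step is a \emph{symmetry lemma}, proved by simultaneous induction on the typing derivations: if $\teqTcnew{\Gamma}{M}{N}{T}{c}$ then $\teqTcnew{\overline{\Gamma}}{N}{M}{\overline{T}}{\overline{c}}$; if $\tDestnew{\Gamma}{d(y)}{T}$ then $\tDestnew{\overline{\Gamma}}{d(y)}{\overline{T}}$; and if $\teqPnew{\Gamma}{P}{Q}{C}$ then $\teqPnew{\overline{\Gamma}}{Q}{P}{\overline{C}}$. Most rules are literally symmetric in their left and right arguments, so the mirrored premises reassemble under the same rule. I expect the only cases requiring genuine care to be those whose statement distinguishes the two sides, namely \TLRVar (where one takes the left refinement of the left variable and the right refinement of the right variable, so that mirroring swaps both the variables and the selected sides and the same rule still applies) and the conditional rules \PIfLR and \PIfS (where mirroring exchanges the booleans selecting the continuation branch, respectively the roles of the $\L$- and $\S$-typed operands).

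Next I would establish that consistency is preserved by mirroring: $(c,\Gamma)$ consistent implies $(\overline{c},\overline{\Gamma})$ consistent. This follows because $\phiL{\overline{c}} = \phiR{c}$ and $\phiR{\overline{c}} = \phiL{c}$, because the low-knowledge frame $\phiEnew{\overline{\Gamma}}$ and the private names $\E_{\overline{\Gamma}}$ coincide with $\phiEE$ and $\EG$ (the attacker's initial knowledge is symmetric), and because, by the term part of the symmetry lemma, a pair $(\sigma,\sigma')$ is well-typed in $\overline{\Gamma}$ exactly when $(\sigma',\sigma)$ is well-typed in $\Gamma$. Since static equivalence is itself a symmetric relation, the two frames required to be statically equivalent for $(\overline{c},\overline{\Gamma})$ are precisely those required for $(c,\Gamma)$ with the roles of $\sigma$ and $\sigma'$ interchanged, so consistency transfers; lifting this from pairs $(c,\Gamma)$ to constraint sets is then routine.

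With these ingredients the theorem closes quickly. Starting from $\teqPnew{\Gamma}{P}{Q}{C}$ with $C$ consistent and $\Gamma$ containing only keys, the symmetry lemma gives $\teqPnew{\overline{\Gamma}}{Q}{P}{\overline{C}}$, the consistency-mirroring result gives that $\overline{C}$ is consistent, and $\overline{\Gamma}$ again contains only keys (mirroring maps every key type $\skey{l}{T}$ to the key type $\skey{l}{\overline{T}}$). Hence Theorem~\ref{lem-proof:red-steq}, applied this time to the pair $(Q,P)$ in the environment $\overline{\Gamma}$, yields $Q \incTrace P$. Combined with $P \incTrace Q$ from the first application, the definition of trace equivalence gives $P \equivTrace Q$. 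The main obstacle is purely the bookkeeping of the symmetry lemma, i.e. verifying that every process and term rule admits a mirror instance, the genuinely delicate cases being \TLRVar and the left/right-sensitive conditional rules identified above.
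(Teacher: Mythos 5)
Your proposal is correct and follows essentially the same route as the paper: the paper likewise obtains $P \incTrace Q$ from Theorem~\ref{lem-proof:red-steq} and then derives $Q \incTrace P$ by observing that typing is preserved under swapping the left and right sides (mirroring the environment's refinement types and the constraints' $\eqC$-elements) and that consistency is invariant under this swap. Your write-up merely makes explicit the induction the paper dismisses as ``clear from the typing rules,'' and your identification of \TLRVar and the left/right-sensitive conditional rules as the only delicate cases is accurate (though note that in the mirrored \PIfS instance the $\L$- and $\S$-typed operands keep their respective types, since $\overline{\L}=\L$ and $\overline{\S}=\S$).
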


\begin{proof}
Theorem~\ref{lem-proof:red-steq} proves that under these assumptions, $P \incTrace Q$.
This is sufficient to prove the theorem. Indeed, it is clear from the typing rules for processes and terms that
\[\teqP{\Gamma}{\Delta}{\EPO}{\EQO}{P}{Q}{C} \Leftrightarrow
 \teqP{\Gamma'}{\Delta}{\EQO}{\EPO}{Q}{P}{C'}\]
where $C'$ is the constraint obtained from $C$ by swapping the left and right hand sides of all of its elements,
and $\Gamma'$ is the environment obtained from $\Gamma$ by swapping the left and right types in all refinement types.
Clearly from the definition of consistency, $C$ is consistent 
if and only if $C'$ is.
Therefore, by symmetry, proving that the assumptions imply $P \incTrace Q$ also proves that they imply $Q \incTrace P$, and
thus $P \equivTrace Q$.
\end{proof}



\subsection{Typing replicated processes}

In this subsection, we prove the soundness result for replicated processes.

In this subsection, as well as the following ones, without loss of generality
we assume, for each infinite nonce type $\noncetypelab{l}{\infty}{m}$ appearing in the processes 
we consider, that $\N$ contains an infinite number of fresh names which we will denote by $ \{m_i\;|\; i\in\mathbb{N}\}$;
such that the $m_i$ do not appear in the processes or environments considered.
We will denote by $\NN$ the set of unindexed names and by $\NI$ the set of indexed names.
We similarly assume that for all the variables $x$ appearing in the processes,
the set $\X$ of all variables also contains variables $\{x_i\;|\;i\in\mathbb{N}\}$.
We denote $\XX$ the set of unindexed variables, and $\XI$ the set of indexed variables.

\begin{definition}[Renaming of a process]
For all process $P$,
for all $i\in\mathbb{N}$, for all environment $\Gamma$,
we define $\instProca{P}{i}$,
the renaming of $P$ for session $i$ with respect to $\Gamma$,
as the process obtained from $P$ by:
\begin{itemize}
\item for each nonce $n$ declared in $P$ by $\NEW{n}{\noncetypelab{l}{\infty}{n}}$, and each nonce $n$ such
that $\Gamma(n)=\noncetypelab{l}{\infty}{n}$ for some $l$,
replacing every occurrence of $n$ with $n_i$, and the declaration $\NEW{n}{\noncetypelab{l}{\infty}{n}}$ with
$\NEW{n_i}{\noncetypelab{l}{1}{n_i}}$;
\item replacing every occurence of a variable $x$ with $x_i$.
\end{itemize}
\end{definition}

\begin{lemma}[Typing terms with replicated names]
\label{lem-proof:typing-terms-stars}
For all $\Gamma$, 
$M$, $N$, $T$ and $c$, if 
\[\teqTc{\Gamma}{\Delta}{\E}{\E'}{M}{N}{T}{c}\] 
then for all $i, n\in\mathbb{N}$ such that $1 \leq i \leq n$,
for all $\Gamma'\in\branch{\instG{\Gamma}{i}{n}}$,

\[\teqTc{\Gamma'}{\instD{\Delta}{n}}{\instE{\E}{n}}{\instE{\E'}{n}}{\instTerma{M}{i}}{\instTerma{N}{i}}{\instTyp{T}{n}}{\instConsta{c}{i}}\]%

%
%
%
\end{lemma}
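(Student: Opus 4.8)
The plan is to prove Lemma~\ref{lem-proof:typing-terms-stars} by induction on the derivation of $\teqTc{\Gamma}{\Delta}{\E}{\E'}{M}{N}{T}{c}$, following the same pattern used for the earlier "branch" lemmas (Lemma~\ref{lem-proof:type-terms-branches}). The key observation is that renaming and expansion are essentially syntactic operations that commute with each rule of the term typing system. For each typing rule, I would first rename both sides ($M \mapsto \instTerma{M}{i}$, $N \mapsto \instTerma{N}{i}$) and expand the type ($T \mapsto \instTyp{T}{n}$), then check that the premises, after the same transformations, still satisfy the hypotheses of the rule in the expanded environment $\Gamma'$.

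The routine cases are those rules whose structure is preserved verbatim by renaming and expansion. For \TPair, \TEnc, \TAenc, \TEncH, \TEncL, \TAencH, \TAencL, \TSignH, \TSignL, \THashL, \TSub, \TOr, the transformation distributes over the term constructors and over $\instTyp{\cdot}{n}$ by its inductive definition, so I would invoke the induction hypothesis on each premise and re-apply the same rule; here I must use that $\instG{\Gamma}{i}{n}$ assigns to keys $k$ the expanded key type $\skey{l}{\instTyp{T}{n}}$, matching the expanded payloads. For \TNonce, \TNonceL, \TCstFN, \TPubKey, \TVKey, \TKey, \THash, \THigh the claim follows directly since these rules impose only side-conditions on $\Gamma$ that are stable under renaming (low nonces and constants map to themselves or their indexed copies with label preserved, keys are unchanged). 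The \TVar case uses the corollary mechanism of Lemma~\ref{lem-proof:type-terms-branches}: since $\Gamma' \in \branch{\instG{\Gamma}{i}{n}}$, applying \TVar followed by \TOr recovers the full (possibly union) type $\instTyp{\Gamma(x)}{n}$ for the renamed variable $x_i$.

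The genuinely delicate cases are the refinement-type rules \TLR$^1$, \TLR$^\infty$, and especially \TLRVar, because this is exactly where expansion changes the shape of the type: an infinite refinement $\LRTnewnew{\noncetypelab{l}{\infty}{m}}{\noncetypelab{l'}{\infty}{p}}$ expands to the union $\bigvee_{j=1}^{n} \LRTnewnew{\noncetypelab{l}{1}{m_j}}{\noncetypelab{l'}{1}{p_j}}$. In the \TLR$^\infty$ case, after renaming $m \mapsto m_i$, $n \mapsto n_i$, I would apply \TLR$^1$ (not \TLR$^\infty$) to derive the singleton refinement $\LRTnewnew{\noncetypelab{l}{1}{m_i}}{\noncetypelab{l'}{1}{n_i}}$, and then apply \TOr to weaken it into the $i$-th disjunct of the expanded union type $\instTyp{\LRTnewnew{\noncetypelab{l}{\infty}{m}}{\noncetypelab{l'}{\infty}{n}}}{n}$ — this matches the expansion definition precisely because $1 \leq i \leq n$. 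The main obstacle is the \TLRVar case, where the two variables carry singleton refinements built from possibly-infinite nonce types; here I must track carefully that renaming sends the "left of $x$, right of $y$" combination to the correspondingly-renamed combination, and that taking the $i$-th branch $\Gamma'$ of the expanded environment makes the relevant disjuncts available for both variables so the rule re-applies.

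I expect the bulk of the difficulty to be bookkeeping rather than conceptual: one must be consistent about how $\instTerma{\cdot}{i}$, $\instGr{\Gamma}{i}$, $\instG{\Gamma}{i}{n}$, and $\branch{\cdot}$ interact, and verify that every side-condition "$\Gamma(k) = \skey{l}{T}$'' or "$\Gamma(n) = \noncetypelab{l}{a}{n}$'' is preserved under renaming followed by expansion followed by branch-selection. Since the paper has already established the analogous branch-stability results (Lemmas~\ref{lem-proof:type-terms-branches} and~\ref{lem-proof:ground-term-ortype}), I would lean on those to discharge the union-type manipulations introduced by the expansion of infinite refinements, reducing the novel content to checking that the renaming step alone respects each rule.
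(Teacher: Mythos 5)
Your proposal is correct and follows essentially the same route as the paper's proof: induction on the typing derivation, direct commutation of renaming/expansion for the structural and base rules, \TVar handled by \TVar followed by \TOr (since $\Gamma'(x_i)\in\branch{\instTyp{\Gamma(x)}{n}}$), and \TLRinf handled by applying \TLRone to the renamed nonces $m_i$, $p_i$ and then \TOr to reach the $i$-th disjunct of the expanded union, exactly as in the paper. The only cases you do not name explicitly are the elimination rules \TLRp and \TLRLp, whose $\oneorinf=\infty$ sub-case is discharged precisely by the branch-extraction mechanism of Lemma~\ref{lem-proof:type-terms-branches} that you already invoke in your closing paragraph; and note that \TLRVar, which you flag as the main obstacle, is in fact the easy case, since its premises force singleton refinements so only $\oneorinf=1$ can occur there.
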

\begin{proof}
Let $\Gamma$, 
$M$, $N$, $T$, $c$ be such as assumed in the statement of the lemma.
Let $i, n\in\mathbb{N}$ such that $1\leq i \leq n$.
Let $\Gamma' \in\branch{\instG{\Gamma}{i}{n}}$.

We prove this property by induction on the proof $\Pi$ of \[\teqTc{\Gamma}{\Delta}{\E}{\E'}{M}{N}{T}{c}.\]

There are several possible cases for the last rule applied in $\Pi$.
\newcommand{\Mii}{\instTerma{M}{i}}
\newcommand{\Nii}{\instTerma{N}{i}}
\newcommand{\Gammain}{\instG{\Gamma}{i}{n}}
\newcommand{\ci}{\instConsta{c}{i}}
\newcommand{\Tn}{\instTyp{T}{n}}

\begin{itemize}

\item \case{\TNonce:}
then $M=m$ and $N=p$ for some $m, p\in\N$, $T = l$ for some $l\in\{\S, \H\}$, and

\[\Pi=
\inferrule
  {\Gamma(m) = \noncetypelab{l}{\oneorinf}{m} \\ \Gamma(p) = \noncetypelab{l}{\oneorinf}{p}}
  {\teqTc{\Gamma}{\Delta}{\E}{\E'}{m}{p}{l}{\emptyset}}.
\]

It is clear from the definition of $\Gammain$ that $\Gammain(\instTerma{m}{i}) = \noncetypelab{l}{1}{\instTerma{m}{i}}$,
and that $\Gammain(\instTerma{p}{i}) = \noncetypelab{l}{1}{\instTerma{p}{i}}$.
Hence, $\Gamma'(\instTerma{m}{i}) = \noncetypelab{l}{1}{\instTerma{m}{i}}$ and $\Gamma'(\instTerma{p}{i}) = \noncetypelab{l}{1}{\instTerma{p}{i}}$.
Then, by rule \TNonce, we have $\teqTc{\Gamma'}{\Deltan}{\En}{\EEn}{\Mii}{\Nii}{l}{\emptyset}$ and the claim holds.

\item \case{\TNonceL, \TCst, \TKey, \TPubkey, \TVkey, \THash, \THigh, \TLRone:}
Similarly to the \TNonce case, the claim follows directly from the definition of $\Gammain$, 
$\Mii$, $\Nii$, $\Tn$ and $\ci$ in these cases.

\item \case{\TEncH:}
then $T = \L$ and there exist $T'$, $k$, $c'$ such that

\[\Pi=
\inferrule
  {\inferrule*{\Pi'}{\teqTc{\Gamma}{\Delta}{\E}{\E'}{M}{N}{\encT{T'}{k}}{c'}}\\ \Gamma(k) = \skey{\S}{T'}}
  {\teqTc{\Gamma}{\Delta}{\E}{\E'}{M}{N}{\L}{c = c' \cup \{M\eqC N\}}}.
\]

By applying the induction hypothesis to $\Pi'$,
since $\instTyp{\encT{T'}{k}}{n} = \encT{\instTyp{T'}{n}}{k}$,
there exists a proof $\Pi''$ of
$\teqTc{\Gamma'}{\Deltan}{\En}{\EEn}{\Mii}{\Nii}{\encT{\instTyp{T'}{n}}{k}}{\instConsta{c'}{i}}$

In addition $\Gammain(k) = \skey{\S}{\instTyp{T'}{n}}$ by definition of $\Gammain$.
Hence $\Gamma'(k)=\skey{\S}{\instTyp{T'}{n}}$.

Therefore by rule \TEncH, we have
\[\teqTc{\Gamma'}{\Deltan}{\En}{\EEn}{\Mii}{\Nii}{\L}{\instConsta{c'}{i}\cup \{\Mii \eqC \Nii\} = \ci}\]

\item \case{\TPair, \TEnc, \TEncL, \TAenc, \TAencH, \TAencL, \TSignH, \TSignL, \THashL, \TOr:}
Similarly to the \TEncH case, the claim is proved directly by applying the induction hypothesis to the type
judgement appearing in the conditions of the last rule in these cases.

\item \case{\TVar:}
then $M=N=x$ for some $x\in\X$, and

\[\Pi=
\inferrule
  {\Gamma(x) = T}
  {\teqTc{\Gamma}{\Delta}{\E}{\E'}{x}{x}{T}{\emptyset}}.
\]

We have $\Mii = \Nii = x_i$.

Since $\Gamma'\in\branch{\Gammain}$, we have $\Gamma'(x_i)\in\branch{\Tn}$.

Hence by rule \TVar, $\teqTc{\Gamma'}{\Deltan}{\En}{\EEn}{x_i}{x_i}{\Gamma'(x_i)}{\emptyset}$.
Therefore, by rule \TOr, we have 
\[\teqTc{\Gamma'}{\Deltan}{\En}{\EEn}{x_i}{x_i}{\Tn}{\emptyset}\]
which proves the claim.

\item \case{\TLRp (the \TLRLp case is similar):}
then there exist $m, p, l$ 
such that $T = l$, and
\[\Pi=
\inferrule{
  \inferrule*{\Pi'}{\teqTc{\Gamma}{\Delta}{\E}{\E'}{M}{N}{\LRTn{l}{a}{m}{l}{a}{p}}{c}}}
  {\teqTc{\Gamma}{\Delta}{\E}{\E'}{M}{N}{l}{c}}.
\]
Let us distinguish the case where $a$ is $1$ from the case where $a$ is $\infty$.

\case{If $a$ is 1:} by applying the induction hypothesis to $\Pi'$,
since $\instTyp{\LRTn{l}{a}{m}{l}{a}{p}}{n} = \LRTn{l}{1}{m}{l}{1}{p}$, we have
\[\teqTc{\Gamma'}{\Deltan}{\En}{\EEn}{\Mii}{\Nii}{\LRTn{l}{a}{m}{l}{a}{p}}{\ci}.\]
Thus by rule \TLRp, we have \[\teqTc{\Gammain}{\Deltan}{\En}{\EEn}{\Mii}{\Nii}{l}{\ci}.\]

\case{If $a$ is $\infty$:}
by applying the induction hypothesis to $\Pi'$,
since $\instTyp{\LRTn{l}{a}{m}{l}{a}{p}}{n} = \bigvee_{1\leq j\leq n} \LRTn{l}{1}{m_j}{l}{1}{p_j}$, we have
\[\teqTc{\Gamma'}{\Deltan}{\En}{\EEn}{\Mii}{\Nii}{\bigvee_{1\leq j\leq n} \LRTn{l}{1}{m_j}{l}{1}{p_j}}{\ci}.\]

Thus, by Lemma~\ref{lem-proof:type-terms-branches}, there exists $j\in\llbracket 1, n\rrbracket$ and a proof $\Pi''$ of
\[\teqTc{\Gamma'}{\Deltan}{\En}{\EEn}{\Mii}{\Nii}{\LRTn{l}{1}{m_j}{l}{1}{p_j}}{\ci}.\]
Thus, by rule \TLRp,
\[\teqTc{\Gamma'}{\Deltan}{\En}{\EEn}{\Mii}{\Nii}{l}{\ci},\]
which proves the claim.

\item \case{\TLRVar:} this case is similar to the \TLRp case, but only the case where $a$ is $1$ is possible.

\item \case{\TSub:}
then there exists $T'\subtyp T$ such that
\[\Pi=
\inferrule{
  \inferrule*{\Pi'}{\teqTc{\Gamma}{\Delta}{\E}{\E'}{M}{N}{T'}{c}}\\ T'\subtyp T}
  {\teqTc{\Gamma}{\Delta}{\E}{\E'}{M}{N}{T}{c}}.
\]

By applying the induction hypothesis to $\Pi'$, we have
\[\teqTc{\Gamma'}{\Deltan}{\En}{\EEn}{\Mii}{\Nii}{\instTyp{T'}{n}}{\ci}.\]

Since it is clear by induction on the subtyping rules that $T'\subtyp T$ implies that $\instTyp{T'}{n}\subtyp \Tn$,
rule \TSub can be applied and proves the claim.

\item \case{\TLRinf:}
then $M=m$, $N=p$, $c=\emptyset$, and $T = \LRTn{l}{\infty}{m}{l'}{\infty}{p}$ for some $m,p\in\N$, 
$c = \emptyset$, and

\[\Pi=
\inferrule{
   \Gamma(m) = \noncetypelab{l}{\infty}{m} \\ \Gamma(p) = \noncetypelab{l'}{\infty}{p}}
  {\teqTc{\Gamma}{\Delta}{\E}{\E'}{m}{p}{\LRTn{l}{\infty}{m}{l'}{\infty}{p}}{\emptyset}}.
\]

We have by definition $\Mii = m_i$ and $\Nii=p_i$,
and $\Gammain(m_i) = \noncetypelab{l}{1}{m_i}$, and $\Gammain(p_i) = \noncetypelab{l'}{1}{p_i}.$
Thus $\Gamma'(m_i) = \noncetypelab{l}{1}{m_i}$, and $\Gamma'(p_i) = \noncetypelab{l'}{1}{p_i}.$
Hence by rule \TLRone, we have $\teqTc{\Gamma'}{\Delta}{\E}{\E'}{\Mii}{\Nii}{\LRTn{l}{1}{m_i}{l'}{1}{p_i}}{\emptyset}$.

In addition, $\instTyp{\LRTn{l}{\infty}{m}{l'}{\infty}{p}}{n} = \bigvee_{1\leq j\leq n} \LRTn{l}{1}{m_j}{l'}{1}{p_j}$.
Therefore, by applying rule \TOr,
we have
\[\teqTc{\Gamma'}{\Deltan}{\En}{\EEn}{\Mii}{\Nii}{\instTyp{\LRTn{l}{\infty}{m}{l'}{\infty}{p}}{n}}{\emptyset}\]
which proves the claim.

\end{itemize}

\end{proof}

\begin{lemma}[Typing destructors with replicated names]
\label{lem-proof:typing-dest-stars}
For all $\Gamma$, 
$d$, $x$, $T$, if
\[\tDestnew{\Gamma}{d(x)}{T}\] 
then for all $i, n\in\mathbb{N}$ such that $1 \leq i \leq n$,

\[\tDestnew{\instG{\Gamma}{i}{n}}{d(x_i)}{\instTyp{T}{n}}\] 
\end{lemma}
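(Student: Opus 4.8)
The plan is to proceed by a case analysis on the last (and only) rule used in the derivation of $\tDestnew{\Gamma}{d(x)}{T}$. Since destructor typing is given by the axiom schemes of Figure~\ref{fig:destructorrules} — none of which has a destructor-typing judgement among its premises — there is no induction to carry out, exactly as in the proof of Lemma~\ref{lem-proof:type-dest-branches}. Each rule fixes the shape of $d$, imposes conditions of the form $\Gamma(k) = \skey{l}{T'}$ and/or $\Gamma(x) = T''$, and reads off the result type $T$. For each such rule I would verify that the corresponding conditions still hold in $\instG{\Gamma}{i}{n}$, after renaming $x$ to $x_i$ and leaving every key $k$ untouched, and that the rule then produces exactly $\instTyp{T}{n}$.

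The only ingredient I need is that environment expansion commutes, on the relevant domain, with type expansion: from the definitions of $\instGr{\Gamma}{i}$ and $\instG{\Gamma}{i}{n}$ one reads off that for every key $k \in \dom{\Gamma}$ we have $\instG{\Gamma}{i}{n}(k) = \instTyp{\Gamma(k)}{n}$, and for every variable $x$ we have $\instG{\Gamma}{i}{n}(x_i) = \instTyp{\Gamma(x)}{n}$. Combined with the clauses $\instTyp{\skey{l}{T'}}{n} = \skey{l}{\instTyp{T'}{n}}$, $\instTyp{T_1 * T_2}{n} = \instTyp{T_1}{n} * \instTyp{T_2}{n}$, $\instTyp{\encT{T'}{k}}{n} = \encT{\instTyp{T'}{n}}{k}$, $\instTyp{\aencT{T'}{k}}{n} = \aencT{\instTyp{T'}{n}}{k}$, $\instTyp{l}{n} = l$, and $\instTyp{T_1 \orT T_2}{n} = \instTyp{T_1}{n} \orT \instTyp{T_2}{n}$, every premise of every rule is preserved verbatim under expansion. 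For instance, in the \case{\DDecH} case I have $\Gamma(k) = \skey{\S}{T}$ and $\Gamma(x) = \L$; hence $\instG{\Gamma}{i}{n}(k) = \skey{\S}{\instTyp{T}{n}}$ and $\instG{\Gamma}{i}{n}(x_i) = \L$, so applying \DDecH in the expanded environment yields $\tDestnew{\instG{\Gamma}{i}{n}}{\DEC{x_i}{k}}{\instTyp{T}{n}}$, which is the desired conclusion. The cases \case{\DDecT} (using $\instG{\Gamma}{i}{n}(x_i) = \encT{\instTyp{T}{n}}{k}$), \case{\DAdecT}, \case{\DFst} and \case{\DSnd} (using the pair clause), and \case{\DCheckH} (with $d(x_i) = \CHECK{x_i}{\VK{k}}$, the verification key being built from the unchanged $k$) are all identical in spirit, and the low-key variants \case{\DDecL}, \case{\DAdecL}, \case{\DCheckL}, \case{\DFstL}, \case{\DSndL} are immediate since $\instTyp{\L}{n} = \L$.

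The one case that warrants a moment of attention is \case{\DAdecH}, whose conclusion type is the union $T \orT \L$: here I would invoke $\instTyp{T \orT \L}{n} = \instTyp{T}{n} \orT \L$ so that re-applying \DAdecH in $\instG{\Gamma}{i}{n}$ produces precisely $\instTyp{T \orT \L}{n}$. I expect no genuine obstacle in this lemma; the only thing to be careful about is the bookkeeping — checking that the destructor head $d$ (and in particular any key or verification key it mentions) is untouched by the session renaming $\instProca{\cdot}{i}$, so that $d(x_i)$ really is the expansion of $d(x)$, and that the environment-expansion identities above are applied with the correct argument. No appeal to the earlier term-level lemmas is needed, since the destructor rules never recurse into term typing.
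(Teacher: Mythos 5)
Your proof is correct and takes exactly the route the paper does: the paper's own proof of this lemma is simply ``Immediate by examining the typing rules for destructors,'' and your case analysis, resting on the identities $\instG{\Gamma}{i}{n}(k) = \instTyp{\Gamma(k)}{n}$ and $\instG{\Gamma}{i}{n}(x_i) = \instTyp{\Gamma(x)}{n}$ together with the homomorphic clauses of $\instTyp{\cdot}{n}$, is precisely the verification the paper leaves implicit. Your attention to the \textsc{DAdecH} case and to the fact that keys are untouched by the session renaming is exactly the right bookkeeping.
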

\begin{proof}
Immediate by examining the typing rules for destructors.
\end{proof}

\begin{lemma}[Branches and expansion]
\label{lem-proof:inst-branch}
\begin{itemize}
\item For all $T$, 
\[\bigcup_{T'\in\branch{T}} \branch{\instTyp{T'}{n}} = \branch{\instTyp{T}{n}}\]

\item For all $\Gamma$ 
for all $i, n\in\mathbb{N}$,
\[\bigcup_{\Gamma'\in\branch{\Gamma}} \branch{\instG{\Gamma'}{i}{n}} = \branch{\instG{\Gamma}{i}{n}}\]
\end{itemize}
\end{lemma}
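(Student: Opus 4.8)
The plan is to reduce everything to two elementary commutation facts and then handle the environment case coordinatewise. First I would record two observations that follow directly from the definitions. \textbf{(a)} The branch operator distributes over unions, $\branch{T_1 \orT T_2} = \branch{T_1} \cup \branch{T_2}$, since a non-union type occurs as a top-level disjunct of $T_1 \orT T_2$ exactly when it occurs as a top-level disjunct of $T_1$ or of $T_2$; and for a non-union type $T$ one has $\branch{T} = \{T\}$, because the second clause in the definition of $\branch{\cdot}$ would force $T$ to be a union. \textbf{(b)} Expansion commutes with unions, $\instTyp{T_1 \orT T_2}{n} = \instTyp{T_1}{n} \orT \instTyp{T_2}{n}$, which is one of the defining clauses of $\instTyp{\cdot}{n}$.

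For the first part I would argue by structural induction on $T$. In the base case where $T$ is not a union type, $\branch{T} = \{T\}$ by (a), so the left-hand side is $\branch{\instTyp{T}{n}}$, which is exactly the right-hand side; nothing more is needed. Note that this case already absorbs the delicate situation in which a non-union leaf (e.g.\ an infinite refinement type $\LRTnewnew{\noncetypelab{l}{\infty}{m}}{\noncetypelab{l'}{\infty}{p}}$) expands to a genuine union of $n$ disjuncts, since $\branch{\instTyp{T}{n}}$ is used here as a black box. In the inductive case $T = T_1 \orT T_2$, I would rewrite the left-hand side using (a) as
\[
\Bigl(\bigcup_{T'\in\branch{T_1}}\branch{\instTyp{T'}{n}}\Bigr)\;\cup\;\Bigl(\bigcup_{T'\in\branch{T_2}}\branch{\instTyp{T'}{n}}\Bigr),
\]
apply the induction hypothesis to each summand to obtain $\branch{\instTyp{T_1}{n}} \cup \branch{\instTyp{T_2}{n}}$, and then match this against $\branch{\instTyp{T_1 \orT T_2}{n}}$, which equals the same union by applying first (b) and then (a).

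For the second part I would exploit that both $\branch{\cdot}$ and $\instG{\cdot}{i}{n}$ act pointwise on environments. By definition $\Gamma' \in \branch{\Gamma}$ iff $\dom{\Gamma'} = \dom{\Gamma}$ and $\Gamma'(x) \in \branch{\Gamma(x)}$ for every $x$, and the entry of $\instG{\Gamma'}{i}{n}$ at each domain element depends only on the corresponding entry of $\Gamma'$: it is $\instTyp{\Gamma'(x)}{n}$ (at $x_i$) and $\instTyp{\Gamma'(k)}{n}$ (at $k$) for variables and keys, while every nonce entry is a fixed singleton type $\noncetypelab{l}{1}{\cdot}$ whose unique branch is itself, so the choice of $\Gamma'$ is irrelevant there. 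Unfolding the environment-level definition of $\branch{\cdot}$ and applying the first part of the lemma coordinatewise, an environment $\Delta$ lies in $\branch{\instG{\Gamma}{i}{n}}$ iff for each domain element $y$ there exists $t_y \in \branch{\Gamma(\cdot)}$ (the branch set of the underlying type) with $\Delta(y) \in \branch{\instTyp{t_y}{n}}$; whereas $\Delta$ lies in $\bigcup_{\Gamma'\in\branch{\Gamma}} \branch{\instG{\Gamma'}{i}{n}}$ iff there is a \emph{single} $\Gamma' \in \branch{\Gamma}$ realizing all these memberships simultaneously. The equivalence of these two statements is the finite distributivity ``$\forall y\,\exists t_y$'' $\Leftrightarrow$ ``$\exists\Gamma'\,\forall y$'': the forward direction takes $t_y := \Gamma'(y)$, and the backward direction assembles $\Gamma'(y) := t_y$, which is a legitimate element of $\branch{\Gamma}$ precisely because $\branch{\Gamma}$ is the full product of the per-coordinate branch sets over the common domain.

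I expect the main obstacle to be this last recombination step in the environment case, namely the backward inclusion where one must assemble a single global $\Gamma' \in \branch{\Gamma}$ from branches $t_y$ chosen independently at each coordinate. This is exactly the point at which one relies on the environment-level $\branch{\cdot}$ being the literal independent product of the type-level branch sets, combined with the coordinatewise instance of the first part; once this is set up, the verification is routine unfolding.
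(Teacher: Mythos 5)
Your proof is correct and follows essentially the same route as the paper: part 1 by structural induction on $T$, splitting into the non-union base case (where $\branch{T}=\{T\}$) and the union case handled via $\branch{T_1\orT T_2}=\branch{T_1}\cup\branch{T_2}$ together with the defining clause $\instTyp{T_1\orT T_2}{n}=\instTyp{T_1}{n}\orT\instTyp{T_2}{n}$, and part 2 reduced coordinatewise to part 1 using the definition of $\instG{\Gamma}{i}{n}$. The paper dispatches part 2 with a single sentence, so your explicit product/recombination argument (assembling a global $\Gamma'\in\branch{\Gamma}$ from independently chosen per-coordinate branches, with nonce and key coordinates being singletons) is simply a careful spelling-out of what the paper leaves implicit.
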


\begin{proof}
The first point is proved by induction on $T$.

If $T = T' \orT T''$ for some $T'$, $T''$, then
\[
\begin{array}{rcl}
\branch{\instTyp{T}{n}} &=& \branch{\instTyp{T'}{n}} \cup \branch{\instTyp{T''}{n}}\\
    &=& (\bigcup_{T'''\in \branch{T'}} \branch{\instTyp{T'''}{n}}) \cup (\bigcup_{T'''\in \branch{T''}} \branch{\instTyp{T'''}{n}})
\end{array}
\]
by the induction hypothesis. Since $\branch{T} = \branch{T'}\cup\branch{T''}$, this proves the claim.

Otherwise, $\branch{T} = \{T\}$ and the claim trivially holds.

\vspace{1em}

The second point directly follows from the first point, using the definition of $\instG{\Gamma}{i}{n}$.
\end{proof}

\begin{lemma}[Typing processes in all branches]
\label{lem-proof:process-all-branches}
For all $P$, $Q$, $\Gamma$, 
${\{C_{\Gamma'}\}}_{\Gamma'\in\branch{\Gamma}}$,
if
\[\forall \Gamma'\in\branch{\Gamma}.\quad
\teqP{\Gamma'}{\Delta}{\E}{\E'}{P}{Q}{C_{\Gamma'}}\]
then
\[\teqP{\Gamma}{\Delta}{\E}{\E'}{P}{Q}{\bigcup_{\Gamma'\in\branch{\Gamma}}C_{\Gamma'}}.\]

Consequently if for some $C$, $C_{\Gamma'}\subseteq C$ for all $\Gamma'$, then there exists $C' \subseteq C$ such that
\[\teqP{\Gamma}{\Delta}{\E}{\E'}{P}{Q}{C'}.\]
\end{lemma}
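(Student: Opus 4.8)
The plan is to prove the statement by induction, eliminating the union types in $\Gamma$ one connective at a time via rule \POr. The combinatorial heart of the argument is the decomposition, for $\Gamma$ of the form $\Gamma_0, x:T_1\orT T_2$,
\[\branch{\Gamma} = \branch{\Gamma_0, x:T_1}\;\cup\;\branch{\Gamma_0, x:T_2},\]
which follows immediately from the definition of branches of an environment together with $\branch{T_1\orT T_2}=\branch{T_1}\cup\branch{T_2}$; this is exactly the decomposition already exploited in the \POr cases of Lemmas~\ref{lem-proof:type-processes-branches} and~\ref{lem-proof:all-branch-const}. I would carry out the induction on the total number $N$ of occurrences of the connective $\orT$ among the types $\Gamma(y)$, $y\in\dom{\Gamma}$.

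In the base case $N=0$, no binding of $\Gamma$ carries a union type, hence $\branch{\Gamma}=\{\Gamma\}$ and $\bigcup_{\Gamma'\in\branch{\Gamma}}C_{\Gamma'}=C_{\Gamma}$, so the assumption $\teqP{\Gamma}{\Delta}{\E}{\E'}{P}{Q}{C_\Gamma}$ is already the desired conclusion. For the inductive step $N>0$, some binding carries a union type; since key and nonce bindings always have the non-union forms $\skey{l}{T}$ and $\noncetypelab{l}{a}{n}$, this binding is of a \emph{variable} $x$. As variables never occur inside the type of any other binding (types mention only keys and nonces), $x$ may be moved to the end of the environment list without affecting well-formedness, so we may assume $\Gamma=\Gamma_0, x:T_1\orT T_2$, taking the outermost $\orT$. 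Each of $\Gamma_0, x:T_1$ and $\Gamma_0, x:T_2$ has strictly fewer occurrences of $\orT$ than $\Gamma$, and since $\branch{\Gamma_0, x:T_i}\subseteq\branch{\Gamma}$ the given family restricts to each of them. Applying the induction hypothesis twice yields
\[\teqP{\Gamma_0, x:T_i}{\Delta}{\E}{\E'}{P}{Q}{\bigcup_{\Gamma'\in\branch{\Gamma_0,x:T_i}}C_{\Gamma'}}\]
for $i\in\{1,2\}$, and rule \POr combines these into a derivation for $\Gamma_0, x:T_1\orT T_2$ whose constraint set is the union of the two. By the decomposition of $\branch{\Gamma}$ above, that union equals $\bigcup_{\Gamma'\in\branch{\Gamma}}C_{\Gamma'}$, which closes the induction. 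The corollary is then immediate: if $C_{\Gamma'}\subseteq C$ for every $\Gamma'\in\branch{\Gamma}$, then $C':=\bigcup_{\Gamma'\in\branch{\Gamma}}C_{\Gamma'}\subseteq C$, and the main statement gives $\teqP{\Gamma}{\Delta}{\E}{\E'}{P}{Q}{C'}$.

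The only genuinely delicate point is the reordering in the inductive step: rule \POr acts on the \emph{last} binding of the environment, so to split a union-typed variable sitting in the middle of $\Gamma$ I must first bring it to the end. I would discharge this as a routine exchange property, justified precisely by the observation that a variable is never referenced within any type, so permuting variable bindings preserves both $\tewf{\Gamma}$ and derivability of the process judgement. Everything else is bookkeeping over the branch decomposition, and I expect no further obstacles.
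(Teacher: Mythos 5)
Your proof takes exactly the paper's route: the paper disposes of this lemma with the single remark that it ``is easily proved by successive applications of rule \POr'', and your induction built on the decomposition $\branch{\Gamma_0,x:T_1\orT T_2}=\branch{\Gamma_0,x:T_1}\cup\branch{\Gamma_0,x:T_2}$ is precisely that argument made explicit. Your exchange observation is also sound: since variables never occur inside types, permuting a union-typed variable binding to the end of $\Gamma$ preserves both $\tewf{\Gamma}$ and derivability, so \POr can always be brought to bear; the paper itself silently applies \POr to a binding written $\Gamma'',x:T_1\orT T_2$ in the \POr cases of its other lemmas, so this is a point where you are more careful than the source.

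There is one slip, though it is easily repaired: your measure $N$ counts \emph{all} occurrences of $\orT$ in the types bound by $\Gamma$, whereas $\branch{\cdot}$ only splits unions occurring at top level. If every $\orT$ is nested under a constructor --- e.g.\ $\Gamma(k)=\skey{\S}{T_1\orT T_2}$ (exactly the shape of the key $k_s$ in the Helios typing) or $\Gamma(x)=\encT{T_1\orT T_2}{k}$ --- then $N>0$, yet no binding has the form $x:T_1\orT T_2$, so the claim in your inductive step that ``some binding carries a union type \dots\ of a variable $x$, taking the outermost $\orT$'' fails and \POr is inapplicable. In that situation the lemma is trivially true, because $\branch{\Gamma}=\{\Gamma\}$ and the conclusion coincides with the hypothesis, but your case analysis as written does not cover it. The fix is to take as measure the number of top-level disjuncts, say $\sum_{y\in\dom{\Gamma}}\left(|\branch{\Gamma(y)}|-1\right)$, or equivalently to induct on $|\branch{\Gamma}|$: then the measure is zero if and only if $\branch{\Gamma}=\{\Gamma\}$, the base case absorbs the nested-union environments, and the remainder of your argument, including the corollary, goes through unchanged.
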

\begin{proof}
The first point is easily proved by successive applications of rule \POr.

The second point is a direct consequence of the first point.
\end{proof}

\begin{lemma}[Expansion and union]
\label{lem-proof:inst-union}
\begin{itemize}
\item For all $C$, $C'$, 
such that $\forall (c, \Gamma)\in C\cup C'.\; \branch{\Gamma}=\{\Gamma\}$, \ie such that $\Gamma$ does not contain union types, and $\names{c}\subseteq\dom{\Gamma}\cup\FN$, and $\Gamma$ only nonce types with names from $\NN$ (\ie unindexed names), we have
\[\instCst{C\UnionCart C'}{i}{n} = \instCst{C}{i}{n}\UnionCart\instCst{C'}{i}{n}\]

\item For all $C$, $c$, $\Gamma$, such that $\names{c}\subseteq \dom{\Gamma}$ and $\forall (c, \Gamma')\in C. \;\novar{\Gamma}\subseteq \Gamma'$,
we have
\[\instCst{C\UnionAll c}{i}{n} = \instCst{C}{i}{n}\UnionAll\instConsta{c}{i}\]
\end{itemize}
\end{lemma}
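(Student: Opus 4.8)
The plan is to prove each of the two set equalities by mutual inclusion, reducing everything to three commutation facts about the basic operations involved: the term renaming $\instConst{\cdot}{i}{\cdot}$, the environment expansion $\instGr{\cdot}{i}$ / $\instG{\cdot}{i}{n}$, and the branch operation $\branch{\cdot}$, each taken with respect to compatible union.

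First I would record the behaviour of renaming. Directly from its definition, $\instConst{\cdot}{i}{\Gamma}$ distributes over union of constraint sets, and it is \emph{insensitive} to the choice of environment among candidates that agree on the names occurring in the renamed constraints: if $\Gamma$ and $\Gamma'$ agree on $\dom{\Gamma}$ and $\names{c}\subseteq\dom{\Gamma}\cup\FN$, then $\instConst{c}{i}{\Gamma}=\instConst{c}{i}{\Gamma'}$, because the renaming of a name depends only on whether the environment gives it an infinite nonce type, and free nonces are never renamed (variables are always sent to $x_i$ regardless of the environment). This is exactly what the hypotheses $\names{c}\subseteq\dom{\Gamma}\cup\FN$ (first point) and $\names{c}\subseteq\dom{\Gamma}$ with $\novar{\Gamma}\subseteq\Gamma'$ (second point) are for. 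Second, I would observe that $\instGr{\cdot}{i}$ and hence $\instG{\cdot}{i}{n}$ commute with compatible union, $\instG{\Gamma\cup\Gamma'}{i}{n}=\instG{\Gamma}{i}{n}\cup\instG{\Gamma'}{i}{n}$, since both act symbol-by-symbol and renaming preserves agreement on shared symbols; and that the branch operation commutes with compatible union, i.e. for compatible $\Delta_1,\Delta_2$ the set $\branch{\Delta_1\cup\Delta_2}$ equals $\{\Delta_1'\cup\Delta_2' \mid \Delta_1'\in\branch{\Delta_1},\ \Delta_2'\in\branch{\Delta_2},\ \Delta_1',\Delta_2'\ \text{compatible}\}$, because a branch chooses a top-level disjunct independently at each symbol, and on shared symbols compatibility forces the two choices to coincide — precisely the single choice made by a branch of the union.

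With these facts the inclusion $\subseteq$ of the $\UnionCart$ equation is routine: an element of $\instCst{C\UnionCart C'}{i}{n}$ arises from compatible $(c,\Gamma)\in C$, $(c',\Gamma')\in C'$ and a branch $\Gamma''$ of $\instG{\Gamma\cup\Gamma'}{i}{n}$; I rewrite its constraint component as $\instConst{c}{i}{\Gamma}\cup\instConst{c'}{i}{\Gamma'}$ via renaming-insensitivity, and split $\Gamma''=\Gamma_1\cup\Gamma_2$ into compatible branches, exhibiting the element as the combination of $(\instConst{c}{i}{\Gamma},\Gamma_1)\in\instCst{C}{i}{n}$ and $(\instConst{c'}{i}{\Gamma'},\Gamma_2)\in\instCst{C'}{i}{n}$. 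The second point then follows even more directly: unfolding $C\UnionAll c=C\UnionCart\{(c,\emptyset)\}$ and using $\names{c}\subseteq\dom{\Gamma}$ together with $\novar{\Gamma}\subseteq\Gamma'$ to conclude $\instConst{c}{i}{\Gamma'}=\instConsta{c}{i}$ for every $(\hat c,\Gamma')\in C$, both sides reduce to the same set $\{(\instConst{\hat c}{i}{\Gamma'}\cup\instConsta{c}{i},\Gamma'')\}$; no branching subtlety arises since the added component carries the empty environment.

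The main obstacle is the reverse inclusion $\supseteq$ of the $\UnionCart$ equation, where the compatibility side-conditions must be reconciled: the product union of the expanded sets only requires the expanded, branched environments $\Gamma_1,\Gamma_2$ to be compatible, whereas membership in $\instCst{C\UnionCart C'}{i}{n}$ requires the originals $\Gamma,\Gamma'$ to be compatible. I would close this with an auxiliary lemma asserting that, for the union-free types with unindexed nonce names allowed by the hypotheses, expansion-then-branching reflects equality: distinct such types have disjoint branch sets after expansion, the only genuine splitting occurring on top-level infinite refinement types, whose expanded disjuncts $\LRTnewnew{\noncetypelab{l}{1}{m_j}}{\noncetypelab{l'}{1}{p_j}}$ carry pairwise distinct indexed names, so a shared branch forces the underlying names and labels to agree. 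From this, agreement of $\Gamma_1,\Gamma_2$ on the renamed image of a symbol shared by $\Gamma,\Gamma'$ yields agreement of $\Gamma,\Gamma'$ on that symbol, provided shared nonces receive matching finite/infinite status (so their renamed images land on the same symbol). Verifying this reflection property, tracking carefully how $\instGr{\cdot}{i}$ treats finite nonces, infinite nonces renamed to $m_i$, keys, and variables, is the technically delicate step, and I would isolate it as a standalone sub-lemma before assembling the two inclusions.
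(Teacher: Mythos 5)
Your proposal is correct and follows essentially the same route as the paper's proof: both sides are unfolded using distributivity of the renaming over unions and its insensitivity to the choice of environment, the $\subseteq$ direction splits a branch of $\instG{\Gamma_1\cup\Gamma_2}{i}{n}$ into compatible branches of the two expansions, and the crux — that compatibility of the branched expansions reflects back to compatibility of $\Gamma_1,\Gamma_2$ — is handled by exactly the case analysis you describe (a shared branch of an expanded infinite refinement type $\bigvee_j \LRTnewnew{\noncetypelab{l}{1}{m_j}}{\noncetypelab{l'}{1}{p_j}}$ forces the underlying names and labels to coincide, and all other union-free types expand to non-union types, so equality of expansions gives equality of originals). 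The only cosmetic differences are that the paper inlines this reflection argument rather than isolating it as a sub-lemma, and proves the $\UnionAll$ point by direct unfolding instead of via $C\UnionAll c = C\UnionCart\{(c,\emptyset)\}$, using the same observation that $\instConst{c}{i}{\Gamma'}=\instConst{c}{i}{\Gamma}$ when the environments agree on names and keys.
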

\begin{proof}
The first point follows from the definition of $\instCst{\cdot}{i}{n}$ and $\UnionCart$.
Indeed, if $C$, $C'$ are as assumed in the claim, we have:
\[
\begin{array}{rcl}
\instCst{C\UnionCart C'}{i}{n} & = & \{(\instConsta{c}{i},\Gamma')|\exists\Gamma.\;(c,\Gamma)\in C\UnionCart C' \;\wedge\;
  \Gamma'\in\branch{\instG{\Gamma}{i}{n}}\}\\
    & = & \{(\instConst{c_1\cup c_2}{i}{\Gamma_1\cup\Gamma_2},\Gamma')|\exists\Gamma_1 \Gamma_2.\;
      (c_1,\Gamma_1)\in C \;\wedge\;(c_2,\Gamma_2)\in C' \;\wedge\; \Gamma_1, \Gamma_2 \text{ are compatible}\;\wedge\;\\
  && \quad\Gamma'\in\branch{\instG{\Gamma_1\cup\Gamma_2}{i}{n}}\}\\
    & = &\{(\instConst{c_1}{i}{\Gamma_1}\cup \instConst{c_2}{i}{\Gamma_2},\Gamma')|\exists\Gamma_1 \Gamma_2.\;
      (c_1,\Gamma_1)\in C \;\wedge\;(c_2,\Gamma_2)\in C' \;\wedge\; \Gamma_1, \Gamma_2 \text{ are compatible}\;\wedge\;\\
  &&\quad\Gamma'\in\branch{\instG{\Gamma_1}{i}{n}\cup\instG{\Gamma_2}{i}{n}}\}\\
    & = &\{(\instConst{c_1}{i}{\Gamma_1}\cup \instConst{c_2}{i}{\Gamma_2},\Gamma\cup\Gamma')|\exists\Gamma_1 \Gamma_2.\;
      (c_1,\Gamma_1)\in C \;\wedge\;(c_2,\Gamma_2)\in C' \;\wedge\; \Gamma_1, \Gamma_2 \text{ are compatible}\;\wedge\;\\
  &&\quad\Gamma\in\branch{\instG{\Gamma_1}{i}{n}}\;\wedge\;\Gamma'\in\branch{\instG{\Gamma_2}{i}{n}}\;\wedge\;
  \Gamma, \Gamma'\text{ are compatible}\}
 \end{array}
\]
The last step is proved by directly showing both inclusions.

On the other hand we have:
\[
\begin{array}{rcl}
\instCst{C}{i}{n}\UnionCart \instCst{C'}{i}{n} & = &
 \{(c \cup c', \Gamma \cup \Gamma') | (c, \Gamma) \in \instCst{C}{i}{n} \; \wedge \; (c',\Gamma') \in \instCst{C'}{i}{n} \;\wedge\; \Gamma, \Gamma' \text{ are compatible}\}\\
    & = & \{(\instConst{c_1}{i}{\Gamma_1}\cup \instConst{c_2}{i}{\Gamma_2},\Gamma\cup\Gamma')|\exists\Gamma_1 \Gamma_2.\;
      (c_1,\Gamma_1)\in C \;\wedge\;(c_2,\Gamma_2)\in C' \;\wedge\;\\
  &&\quad \Gamma\in\branch{\instG{\Gamma_1}{i}{n}}\;\wedge\; \Gamma'\in\branch{\instG{\Gamma_2}{i}{n}}\;\wedge\;
  \Gamma, \Gamma' \text{ are compatible}\}\\
    & = & \{(\instConst{c_1}{i}{\Gamma_1}\cup \instConst{c_2}{i}{\Gamma_2},\Gamma\cup\Gamma')|\exists\Gamma_1 \Gamma_2.\;
      (c_1,\Gamma_1)\in C \;\wedge\;(c_2,\Gamma_2)\in C' \;\wedge\;\\
  &&\quad \Gamma\in\branch{\instG{\Gamma_1}{i}{n}}\;\wedge\; \Gamma'\in\branch{\instG{\Gamma_2}{i}{n}}\;\wedge\;
  \Gamma, \Gamma' \text{ are compatible}\;\wedge\;\Gamma_1, \Gamma_2 \text{ are compatible}\}
\end{array}
\]

This last step comes from the fact that if $(c_1,\Gamma_1)\in C$ and $(c_2,\Gamma_2)\in C'$,
then by assumption $\Gamma_1$ and $\Gamma_2$ do not contain union types.
This implies that if
$\Gamma\in\branch{\instG{\Gamma_1}{i}{n}}$ and $\Gamma'\in\branch{\instG{\Gamma_2}{i}{n}}$ are compatible, then
$\Gamma_1$ and $\Gamma_2$ are compatible.
Indeed, let $x\in\dom{\Gamma_1}\cap\dom{\Gamma_2}$.
Hence $x_i\in\dom{\Gamma}\cap\dom{\Gamma'}$, and since they are compatible, $\Gamma(x_i) = \Gamma'(x_i)$.
That is to say that there exists $T \in \branch{\instTyp{\Gamma_1(x)}{n}}\cap\branch{\instTyp{\Gamma_2(x)}{n}}$.

If $\Gamma_1(x) = \LRTn{l}{\infty}{m}{l'}{\infty}{p}$ (for some $m, p, l, l'$), then 
$\instTyp{\Gamma_1(x)}{n} = \bigvee_{1\leq j \leq n} \LRTn{l}{1}{m_j}{l'}{1}{p_j}$, and thus there exists $j\in\llbracket 1, n\rrbracket$
such that $T = \LRTn{l}{1}{m_j}{l'}{1}{p_j}$.
Hence,  $\LRTn{l}{1}{m_j}{l'}{1}{p_j}\in\branch{\instTyp{\Gamma_2(x)}{n}}$. Because of the definition of $\instTyp{\cdot}{n}$,
and since $\Gamma_2(x)$ is not a union type (by assumption),
this implies that $\Gamma_2(x) = \LRTn{l}{\infty}{m}{l'}{\infty}{p}$, and therefore $\Gamma_1(x) = \Gamma_2(x)$.

If $\Gamma_1(x)$ is not of the form $\LRTn{l}{\infty}{m}{l'}{\infty}{p}$ (for some $m, p, l, l'$), then
neither is $\Gamma_2(x)$ (by contraposition, following the same reasoning as in the previous case).
$\Gamma_1(x)$ and $\Gamma_2(x)$ are not of the form $T'\orT T''$ either, by assumption.
Therefore, neither $\instTyp{\Gamma_1(x)}{n}$ nor $\instTyp{\Gamma_2(x)}{n}$ are union types (from the definition of
$\instTyp{\cdot}{n}$).
This implies that $T = \instTyp{\Gamma_1(x)}{n} = \instTyp{\Gamma_2(x)}{n}$, which implies $\Gamma_1(x) = \Gamma_2(x)$.

In both cases $\Gamma_1(x) = \Gamma_2(x)$, and $\Gamma_1$, $\Gamma_2$ are therefore compatible.

Hence $\instCst{C}{i}{n}\UnionCart \instCst{C'}{i}{n} = \instCst{C\UnionCart C'}{i}{n}$, which proves the claim.

\vspace{1em}

The second point directly follows from the definition of $\instCst{\cdot}{i}{n}$ and $\UnionAll$.
Indeed, for all $C$, $c$, $\Gamma$ satisfying the assumptions, we have:
\[
\begin{array}{rcl}
\instCst{C\UnionAll c}{i}{n} & = & \{(\instConst{c'}{i}{\Gamma'},\Gamma'')|\exists\Gamma'.\;(c',\Gamma')\in C\UnionAll c \;\wedge\;
  \Gamma''\in\branch{\instG{\Gamma'}{i}{n}}\}\\
    & = & \{(\instConst{c''\cup c}{i}{\Gamma'},\Gamma'')|\exists\Gamma'.\;
      (c'',\Gamma')\in C \;\wedge\; \Gamma''\in\branch{\instG{\Gamma'}{i}{n}}\}\\
    & = &\{(\instConst{c''}{i}{\Gamma'}\cup \instConst{c}{i}{\Gamma'},\Gamma'')|\exists\Gamma'.\;
      (c'',\Gamma')\in C \;\wedge\; \Gamma''\in\branch{\instG{\Gamma'}{i}{n}}\}\\
    & = &\{(\instConst{c''}{i}{\Gamma'}\cup \instConst{c}{i}{\Gamma},\Gamma'')|\exists\Gamma'.\;
      (c'',\Gamma')\in C \;\wedge\; \Gamma''\in\branch{\instG{\Gamma'}{i}{n}}\}\\
    &   &\quad\text{(since $\Gamma$, $\Gamma'$ give the same types to names and keys)}\\
    & = &\{(\instConst{c'}{i}{\Gamma'},\Gamma'')|\exists\Gamma'.\;
      (c',\Gamma')\in C \;\wedge\;\Gamma''\in\branch{\instG{\Gamma'}{i}{n}}\} \UnionAll \instConst{c}{i}{\Gamma}\\
    & = & \instCst{C}{i}{n} \UnionAll \instConst{c}{i}{\Gamma}
 \end{array}
\]
\end{proof}

\begin{theorem}[Typing processes with expanded types]
\label{lem-proof:typing-process-star}
For all $\Gamma$, 
$P$, $Q$ and $C$, if
\[\teqP{\Gamma}{\Delta}{\E}{\E'}{P}{Q}{C}\] 
then for all $i, n\in\mathbb{N}$ such that $1 \leq i \leq n$, there exists $C'\subseteq \instCst{C}{i}{n}$
such that

\[\teqP{\instG{\Gamma}{i}{n}}{\instD{\Delta}{n}}{\instE{\E}{n}}{\instE{\E'}{n}}{\instProca{P}{i}}{\instProca{Q}{i}}{C'}\] 
\end{theorem}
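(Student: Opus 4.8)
The plan is to prove the statement by induction on the derivation of $\teqPnew{\Gamma}{P}{Q}{C}$, after first strengthening it to a \emph{per-branch} form that is better suited to induction. Concretely, I would prove: for every $\Gamma' \in \branch{\instG{\Gamma}{i}{n}}$ there exists $C' \subseteq \instCst{C}{i}{n}$ with $\teqPnew{\Gamma'}{\instProca{P}{i}}{\instProca{Q}{i}}{C'}$. The theorem then follows immediately, since applying this to every branch yields a family of typings whose constraint sets all lie in $\instCst{C}{i}{n}$, and Lemma~\ref{lem-proof:process-all-branches} (with its corollary) reassembles them into a single derivation $\teqPnew{\instG{\Gamma}{i}{n}}{\instProca{P}{i}}{\instProca{Q}{i}}{C''}$ with $C'' \subseteq \instCst{C}{i}{n}$. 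Working with a fixed branch $\Gamma'$ is crucial because $\Gamma'$ contains no union types, so the process typing rules — whose side conditions type terms and destructors in the current environment — can be applied directly in $\Gamma'$.

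For the structural and communication cases (\PZero, \PPar, \POut, \PIn, \PNew, \POr, \PLet, \PLetLR) the argument is routine bookkeeping. Term side conditions are transported to the branch $\Gamma'$ by Lemma~\ref{lem-proof:typing-terms-stars} (noting $\instTyp{\L}{n} = \L$), destructor side conditions by Lemma~\ref{lem-proof:typing-dest-stars} followed by Lemma~\ref{lem-proof:type-dest-branches}, and the recursive premises by the induction hypothesis. The only care needed is matching the constraint sets: I would use Lemma~\ref{lem-proof:inst-union} to show that the expansion commutes with $\UnionCart$ and $\UnionAll$ (its preconditions — that the environments carried in the constraints contain no union types and that the nonces are unindexed — hold by Lemma~\ref{lem-proof:env-constr-union} and by construction, since the original environment and constraints mention only unindexed names), while distribution of $\instCst{\cdot}{i}{n}$ over $\cup$ is immediate from its definition. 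Monotonicity of these operations under inclusion (Lemma~\ref{lem-proof:cons-subset}, point 4) then turns the inclusions supplied by the induction hypothesis into the desired inclusion in $\instCst{C}{i}{n}$. In the \POr and \PNew cases I additionally use Lemma~\ref{lem-proof:inst-branch} to see that a branch of $\instG{\Gamma}{i}{n}$ is a branch of $\instG{\Gamma_k}{i}{n}$ for exactly one disjunct (resp.\ of the environment extended by the renamed nonce), so that the correct sub-derivation can be invoked. For \PLet, where the then-branch is typed in $\Gamma, x{:}T$ and $\instTyp{T}{n}$ may be a union type, I would invoke the induction hypothesis on all branches of $\Gamma', x_i{:}\instTyp{T}{n}$ and recombine them with Lemma~\ref{lem-proof:process-all-branches} before applying \PLet in $\Gamma'$.

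The delicate cases are the conditionals that exploit refinement types, above all \PIfLRinf (rule \PIfLR* of Figure~\ref{fig-proof:processtypingrules}). There $M_1,N_1$ and $M_2,N_2$ carry the \emph{same} infinite refinement type $\LRTnewnew{\noncetypelab{l}{\infty}{m}}{\noncetypelab{l'}{\infty}{n}}$, and by Lemma~\ref{lem-proof:lr-ground} each of these terms is either the nonce itself or a variable whose type in $\Gamma$ is a matching infinite refinement. After the session-$i$ renaming, every such term becomes a term whose type \emph{in the branch $\Gamma'$} is a single refinement type: an infinite nonce $m$ becomes $m_i$ of type $\noncetypelab{l}{1}{m_i}$, and a variable $x$ becomes $x_i$ whose union type $\instTyp{T}{n}$ is resolved by the branch $\Gamma'$ to one disjunct $\LRTnewnew{\noncetypelab{l}{1}{m_{j_0}}}{\noncetypelab{l'}{1}{n'_{j_0}}}$. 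Consequently the instantiated conditional can be typed by the single-refinement rule \PIfLR: the booleans $b,b'$ are decided by comparing the concrete indices, exactly one combination of branches is selected, and its constraint set is either $\instCst{C}{i}{n}$ or $\instCst{C'}{i}{n}$ — in both cases a subset of $\instCst{C \cup C'}{i}{n}$, which is all the strengthened statement requires. The analogous reasoning, again driven by Lemma~\ref{lem-proof:lr-ground} and the index disequalities, handles \PIfLRp (\PIfLR'*) and \PIfI, while \PIfL, \PIfS and \PIfP are straightforward transports of the $\L$/$\S$ side conditions.

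The main obstacle, and the step I would spend the most care on, is precisely this interaction between the expansion of infinite refinement types into union types and the branch-sensitive conditional rules: one must check that for the fixed session index $i$ the renaming collapses every infinite refinement to a singleton, that the branch $\Gamma'$ consistently resolves all union types introduced by the expansion, and that the single conditional branch selected in the expanded process always corresponds to a constraint component already contained in $\instCst{C \cup C'}{i}{n}$. Keeping the branch bookkeeping aligned across the premises — so that the same $\Gamma'$ is threaded throughout and each resulting constraint component genuinely sits inside the expanded set — is what makes the per-branch formulation worth its overhead, since it lets Lemma~\ref{lem-proof:process-all-branches} absorb all union-type elimination in one final step rather than forcing \POr applications through the entire induction.
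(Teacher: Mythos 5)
Your proposal is correct, and it uses the same essential ingredients as the paper's proof --- induction on the typing derivation, with Lemma~\ref{lem-proof:typing-terms-stars} for term premises, Lemma~\ref{lem-proof:typing-dest-stars} together with Lemma~\ref{lem-proof:type-dest-branches} for destructors, Lemma~\ref{lem-proof:inst-union} to commute expansion with $\UnionCart$ and $\UnionAll$, and Lemma~\ref{lem-proof:lr-ground} plus the index bookkeeping for the refinement-type conditionals --- but it organizes the branch handling differently. The paper keeps the induction statement at the level of the full expanded environment $\instG{\Gamma}{i}{n}$ (which may contain union types) and performs the descent to branches \emph{locally inside each case}: it applies the induction hypothesis in $\instG{\Gamma}{i}{n}$, splits via Lemma~\ref{lem-proof:type-processes-branches}, applies the relevant process rule in each branch $\Gamma'$, and immediately recombines with Lemma~\ref{lem-proof:process-all-branches} (with \POr applications pushed to the \PZero leaves to discharge the side condition $\branch{\Gamma}=\{\Gamma\}$). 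You instead strengthen the induction hypothesis to a per-branch invariant and recombine once at the top, paying only a local recombination where \PLet introduces a fresh union type at $x_i$ --- a sound inversion that avoids the repeated descend/recombine round-trips, at the cost of the branch-membership bookkeeping you correctly discharge via Lemma~\ref{lem-proof:inst-branch} in the \POr and \PNew cases. On the one genuinely delicate point your account matches the paper's \PIfLRinf case exactly: because each disjunct of the expanded type $\bigvee_{j}\LRTnewnew{\noncetypelab{l}{1}{m_j}}{\noncetypelab{l'}{1}{p_j}}$ ties the left and right nonces to the \emph{same} session index $j$, a fixed branch resolves $M_1,N_1$ to one index $j_0$ and $M_2,N_2$ to one index $j_1$, so the booleans of rule \PIfLR satisfy $b=b'=(j_0\overset{?}{=}j_1)$ and only the diagonal then--then or else--else combinations arise --- which is precisely why the premises of \PIfLRinf suffice and the selected constraint component lands inside $\instCst{C\cup C'}{i}{n}$; it would be worth stating this $b=b'$ observation explicitly in a polished write-up, since if mixed combinations could occur the rule's premises would not cover them.
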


\begin{proof}
We prove this theorem by induction on the derivation $\Pi$ of $\teqP{\Gamma}{\Delta}{\E}{\E'}{P}{Q}{C}$.
We distinguish several cases for the last rule applied in this derivation.

\newcommand{\Pii}{\instProca{P}{i}}
\newcommand{\Qii}{\instProca{Q}{i}}
\newcommand{\Gammain}{\instG{\Gamma}{i}{n}}

\begin{itemize}
\item \case{\PZero:}
then $P = Q = \Pii = \Qii = \ZERO$, and $C = \{(\emptyset,\Gamma)\}$. 
Hence
\begin{align*}
\Cin 
	&= \{(\emptyset, \Gamma') \;|\; \Gamma'\in\branch{\instG{\Gamma}{i}{n}}\}
\end{align*}

Thus, by applying rule \POr as many times as necessary to split $\instG{\Gamma}{i}{n}$ into all of its branches,
followed by rule \PZero, we have $\teqP{\Gammain}{\Deltan}{\En}{\EEn}{\Pii}{\Qii}{\Cin}$.

\item \case{\POut:}
then $P = \OUT{M}.P'$, $Q=\OUT{N}.Q'$ for some messages $M$, $N$ and some processes $P'$, $Q'$,
and
\[\Pi=
\inferrule
  {\inferrule*{\Pi'}{\teqP{\Gamma}{\Delta}{\E}{\E'}{P'}{Q'}{C'}} \\
   \inferrule*{\Pi''}{\teqTc{\Gamma}{\Delta}{\E}{\E'}{M}{N}{\L}{c}}}
  {\teqP{\Gamma}{\Delta}{\E}{\E'}{P}{Q}{C = C' \UnionAll c}}.
\]

By applying the induction hypothesis to $\Pi'$,
there exists $C''\subseteq \instCst{C'}{i}{n}$ and
a proof $\Pi'''$ of $\teqP{\Gammain}{\Deltan}{\En}{\EEn}{\instProca{P'}{i}}{\instProca{Q'}{i}}{C''}$.

Hence, by Lemma~\ref{lem-proof:type-processes-branches},
for all $\Gamma'\in\branch{\Gammain}$, there exist $C_{\Gamma'}\subseteq C''$ and a proof $\Pi_{\Gamma'}$ of
$\teqP{\Gamma'}{\Deltan}{\En}{\EEn}{\instProca{P'}{i}}{\instProca{Q'}{i}}{C_{\Gamma'}}$.

Moreover, by Lemma~\ref{lem-proof:typing-terms-stars}, for all $\Gamma'\in\branch{\Gammain}$,
there exists a proof $\Pi'_{\Gamma'}$ of
$\teqTc{\Gamma'}{\Deltan}{\En}{\EEn}{\instTerma{M}{i}}{\instTerma{N}{i}}{\L}{\instConsta{c}{i}}$.

In addition, $\Pii = \instProca{\OUT{M}.P'}{i} = \OUT{\instTerma{M}{i}}.\instProca{P'}{i}$.
Similarly, $\Qii = \OUT{\instTerma{N}{i}}.\instProca{Q'}{i}$.
Therefore, using $\Pi_{\Gamma'}$, $\Pi'_{\Gamma'}$
and rule \POut, we have for all $\Gamma'\in\branch{\Gammain}$ that
$\teqP{\Gamma'}{\Deltan}{\En}{\EEn}{\Pii}{\Qii}{C_{\Gamma'}\UnionAll\instConsta{c}{i}\subseteq
C''\UnionAll\instConsta{c}{i}}$.

Thus by Lemma~\ref{lem-proof:process-all-branches}, there exists
$C_1 \subseteq C''\UnionAll\instConsta{c}{i}$ such that
\[\teqP{\Gammain}{\Deltan}{\En}{\EEn}{\Pii}{\Qii}{C_1}.\]

Finally, $\Cin = \instCst{C'\UnionAll c}{i}{n} = \instCst{C'}{i}{n}\UnionAll\instConsta{c}{i}$ (by Lemma~\ref{lem-proof:inst-union}, whose conditions are satisfied, by Lemma~\ref{lem-proof:env-const-branch}).
Hence 
$C''\UnionAll\instConsta{c}{i} \subseteq \Cin$, which proves the claim.

\item \case{\PIn:}
then $P = \IN{x}.P'$, $Q=\IN{x}.Q'$ for some variable $x$ 
and some processes $P'$, $Q'$,
and
\[\Pi=
\inferrule
  {
   \inferrule*{\Pi'}{\teqP{\Gamma, x:\L}{\Delta}{\E}{\E'}{P'}{Q'}{C}}}
  {\teqP{\Gamma}{\Delta}{\E}{\E'}{P}{Q}{C}}.
\]

Since
\[\instG{\Gamma, x:\L}{i}{n} = \Gammain, x_i:\instTyp{\L}{n} = \Gammain, x_i:\L,\]
by applying the induction hypothesis to $\Pi'$,
there exists $C'\subseteq \instCst{C}{i}{n}$ and a proof $\Pi''$ of $\teqP{\Gammain, x_i:\L}{\Deltan}{\En}{\EEn}{\instProc{P'}{i}{\Gamma,x:\L}}{\instProc{Q'}{i}{\Gamma,x:\L}}{C'}$.


In addition, $\Pii = \instProca{\IN{x}.P'}{i} = \IN{x_i}.\instProca{P'}{i} = \IN{x_i}.\instProc{P'}{i}{\Gamma,x:\L}$.
Similarly, $\Qii = \IN{x_i}.\instProc{Q'}{i}{\Gamma,x:\L}$.

Therefore, using $\Pi''$ and rule \PIn, we have
$\teqP{\Gammain}{\Deltan}{\En}{\EEn}{\Pii}{\Qii}{C' \subseteq \Cin}$.

\item \case{\PNew:}
then $P = \NEW{m}{\noncetypelab{l}{a}{m}}.P'$, $Q = \NEW{m}{\noncetypelab{l}{a}{m}}. Q'$ for some $m, l, a$ and some processes $P', Q'$,
and
\[\Pi=
\inferrule
  {
  \inferrule*{\Pi'}{\teqP{\Gamma, m:\noncetypelab{l}{a}{m}}{\Delta}{\E}{\E'}{P'}{Q'}{C}}}
  {\teqP{\Gamma}{\Delta}{\E}{\E'}{P}{Q}{C}}.
\]

\begin{itemize}
\item \case{If $a = 1$:}

 Since
 \[\instG{\Gamma, m:\noncetypelab{l}{1}{m}}{i}{n} = \Gammain, m:\noncetypelab{l}{1}{m},\]
by applying the induction hypothesis to $\Pi'$,
there exists $C' \subseteq \Cin$ and a proof $\Pi''$ of $\teqP{\Gammain, m:\noncetypelab{l}{1}{m}}{\Deltan}{\En}{\EEn}{\instProc{P'}{i}{\Gamma, m:\noncetypelab{l}{1}{m}}}{\instProc{Q'}{i}{\Gamma, m:\noncetypelab{l}{1}{m}}}{C'}$.

In addition $\Pii = \instProca{\NEWN{m}.P'}{i} = \NEW{m}{\noncetypelab{l}{1}{m}}.\instProca{P'}{i} = \NEW{m}{\noncetypelab{l}{1}{m}}.\instProc{P'}{i}{\Gamma, m:\noncetypelab{l}{1}{m}}$; and similarly for $Q$.
Therefore, using $\Pi''$ and rule \PNew, we have
$\teqP{\Gammain}{\Deltan}{\En}{\EEn}{\Pii}{\Qii}{C'\subseteq \Cin}$.

\item \case{If $a=\infty$:}
 Since
 \[\instG{\Gamma, m:\noncetypelab{l}{\infty}{m}}{i}{n} = \Gammain, m_i:\noncetypelab{l}{1}{m_i},\]
by applying the induction hypothesis to $\Pi'$,
there exists $C' \subseteq \Cin$ and a proof $\Pi''$ of $\teqP{\Gammain, m_i:\noncetypelab{l}{1}{m_i}}{\Deltan}{\En}{\EEn}{\instProc{P'}{i}{\Gamma, m:\noncetypelab{l}{\infty}{m}}}{\instProc{Q'}{i}{\Gamma, m:\noncetypelab{l}{\infty}{m}}}{C'}$.

In addition $\Pii = \instProca{\NEWN{m}.P'}{i} = \NEW{m_i}{\noncetypelab{l}{1}{m_i}}.\instProca{P'[m_i/m]}{i} = \NEW{m_i}{\noncetypelab{l}{1}{m_i}}.\instProc{P'}{i}{\Gamma, m:\noncetypelab{l}{\infty}{m}}$; and similarly for $Q$.
Therefore, using $\Pi''$ and rule \PNew, we have
$\teqP{\Gammain}{\Deltan}{\En}{\EEn}{\Pii}{\Qii}{C'\subseteq \Cin}$.

\end{itemize}


\item \case{\PPar:}
then $P = P'\PAR P''$, $Q=Q'\PAR Q''$ for some processes $P'$, $Q'$, $P''$, $Q''$,
and
\[\Pi=
\inferrule
  {\inferrule*{\Pi'}{\teqP{\Gamma}{\Delta}{\E}{\E'}{P'}{Q'}{C'}}\\
   \inferrule*{\Pi''}{\teqP{\Gamma}{\Delta}{\E}{\E'}{P''}{Q''}{C''}}}
  {\teqP{\Gamma}{\Delta}{\E}{\E'}{P}{Q}{C = C' \UnionCart C''}}.
\]

By applying the induction hypothesis to $\Pi'$,
there exists $C'''\subseteq \instCst{C'}{i}{n}$ and a proof $\Pi'''$ of $\teqP{\Gammain}{\Deltan}{\En}{\EEn}{\instProca{P'}{i}}{\instProca{Q'}{i}}{C'''}$.
Similarly, by applying the induction hypothesis to $\Pi''$,
there exists $C''''\subseteq \instCst{C''}{i}{n}$ and a proof $\Pi''''$ of $\teqP{\Gammain}{\Deltan}{\En}{\EEn}{\instProca{P''}{i}}{\instProca{Q''}{i}}{C''''}$.

In addition, $\Pii = \instProca{P'\PAR P''}{i} = \instProca{P'}{i} \PAR \instProca{P''}{i}$.
Similarly, $\Qii = \instProca{Q'\PAR Q''}{i} = \instProca{Q'}{i} \PAR \instProca{Q''}{i}$.
Finally, $\Cin = \instCst{C'\UnionCart C''}{i}{n} = \instCst{C'}{i}{n}\UnionCart\instCst{C''}{i}{n}$, by Lemma~\ref{lem-proof:inst-union} (using Lemma~\ref{lem-proof:env-constr-union} to ensure the condition that the environments do not contain union types).

Therefore, using $\Pi'''$, $\Pi''''$ and rule \PPar, we have
$\teqP{\Gammain}{\Deltan}{\En}{\EEn}{\Pii}{\Qii}{C'''\UnionCart C''''\subseteq \Cin}$.

\item \case{\POr:}
then $\Gamma = \Gamma', x:T\orT T'$ for some $\Gamma'$, some $x\in\X$ and some types $T$, $T'$, and
\[\Pi=
\inferrule
  {\inferrule*{\Pi_T}{\teqP{\Gamma', x:T}{\Delta}{\E}{\E'}{P}{Q}{C'}}\\
   \inferrule*{\Pi_{T'}}{\teqP{\Gamma', x:T'}{\Delta}{\E}{\E'}{P}{Q}{C''}}}
  {\teqP{\Gamma}{\Delta}{\E}{\E'}{P}{Q}{C = C'\cup C''}}.
\]

By applying the induction hypothesis to $\Pi_T$,
there exist $C_1\subseteq \instCst{C'}{i}{n}$ and
a proof $\Pi_1$ of $\teqP{\instG{\Gamma'}{i}{n}, x_i:\instTyp{T}{n}}{\Deltan}{\En}{\EEn}{\instProc{P}{i}{\Gamma',x:T}}{\instProc{Q}{i}{\Gamma',x:T}}{C_1}$.
Similarly with $\Pi_{T'}$,
there exist $C_2\subseteq \instCst{C''}{i}{n}$ and
a proof $\Pi_2$ of $\teqP{\instG{\Gamma'}{i}{n}, x_i:\instTyp{T'}{n}}{\Deltan}{\En}{\EEn}{\instProc{P}{i}{\Gamma',x:T'}}{\instProc{Q}{i}{\Gamma',x:T'}}{C_2}$.

In addition $\Pii = \instProc{P}{i}{\Gamma',x:T} = \instProc{P}{i}{\Gamma',x:T'}$, and similarly for $Q$.

Thus by rule \POr, we have
\[\teqP{\instG{\Gamma'}{i}{n}, x_i:\instTyp{T}{n} \orT \instTyp{T'}{n}}{\Deltan}{\En}{\EEn}{\Pii}{\Qii}{C_1\cup C_2 \subseteq \instCst{C'}{i}{n} \cup \instCst{C''}{i}{n} = \Cin}.\]

Since $\Gammain = \instG{\Gamma', x:T\orT T'}{i}{n} = \instG{\Gamma'}{i}{n}, x_i:\instTyp{T}{n} \orT \instTyp{T'}{n}$,
this proves the claim in this case.

\item \case{\PLet:}
then $P = \LET{x}{d(y)}{P'}{P''}$, $Q=\LET{x}{d(y)}{Q'}{Q''}$ for some variable $x$ and some processes $P'$, $Q'$,
$P''$, $Q''$, and
\[\Pi=
\inferrule
  {
   \inferrule*{\Pi_d}{\tDestnew{\Gamma}{d(y)}{T}}\\
   \inferrule*{\Pi'}{\teqP{\Gamma, x:T}{\Delta}{\E}{\E'}{P'}{Q'}{C'}}\\
   \inferrule*{\Pi''}{\teqP{\Gamma}{\Delta}{\E}{\E'}{P''}{Q''}{C''}}}
  {\teqP{\Gamma}{\Delta}{\E}{\E'}{P}{Q}{C=C'\cup C''}}.
\]

Since
\[\instG{\Gamma, x:T}{i}{n} = \Gammain, x_i:\instTyp{T}{n},\]
by applying the induction hypothesis to $\Pi'$,
there exist $C'''\subseteq \instCst{C'}{i}{n}$ and a proof $\Pi'''$ of $\teqP{\Gammain, x_i:\instTyp{T}{n}}{\Deltan}{\En}{\EEn}{\instProc{P'}{i}{\Gamma,x:T}}{\instProc{Q'}{i}{\Gamma,x:T}}{C'''}$.
Similarly,
there exist $C''''\subseteq \instCst{C''}{i}{n}$ and a proof $\Pi''''$ of $\teqP{\Gammain}{\Deltan}{\En}{\EEn}{\instProca{P''}{i}}{\instProca{Q''}{i}}{C''''}$.

By Lemma~\ref{lem-proof:typing-dest-stars} applied to $\Pi_d$, we also have
\[\tDestnew{\Gammain}{d(y_i)}{\instTyp{T}{n}}\]


In addition, $\Pii = \instProca{\LET{x}{d(y)}{P'}{P''}}{i} = \LET{x_i}{d(y_i)}{\instProca{P'}{i}}{\instProca{P''}{i}} = \LET{x_i}{d(y_i)}{\instProc{P'}{i}{\Gamma,x:T}}{\instProca{P''}{i}}$.
Similarly, $\Qii = \LET{x_i}{d(y_i)}{\instProc{Q'}{i}{\Gamma,x:T}}{\instProca{Q''}{i}}$.

Therefore, using $\Pi''$ and rule \PLet, we have
$\teqP{\Gammain}{\Deltan}{\En}{\EEn}{\Pii}{\Qii}{C'''\cup C'''' \subseteq \Cin}$.

\item \case{\PLetLR:}
then $P = \LET{x}{d(y)}{P'}{P''}$, $Q=\LET{x}{d(y)}{Q'}{Q''}$ for some variable $x\in\XR$ and some processes $P'$, $Q'$,
$P''$, $Q''$, and
\[\Pi=
\inferrule
  {
   \Gamma(y) = \LRTn{l}{\oneorinf}{m}{l'}{\oneorinf}{p}\\
   \inferrule*{\Pi'}{\teqP{\Gamma}{\Delta}{\E}{\E'}{P''}{Q''}{C}}}
  {\teqP{\Gamma}{\Delta}{\E}{\E'}{P}{Q}{C}}
\]
for some $m$, $p$. 

By applying the induction hypothesis to $\Pi'$,
there exists $C'\subseteq \Cin$ and a proof $\Pi''$ of $\teqP{\Gammain}{\Deltan}{\En}{\EEn}{\instProca{P''}{i}}{\instProca{Q''}{i}}{C'}$.

We have $\Pii = \instProca{\LET{x}{d(y)}{P'}{P''}}{i} = \LET{x_i}{d(y_i)}{\instProca{P'}{i}}{\instProca{P''}{i}}$.
Similarly, $\Qii = \LET{x_i}{d(y_i)}{\instProca{Q'}{i}}{\instProca{Q''}{i}}$.

We distinguish two cases, depending on whether the types in the refinement $\LRTn{l}{\oneorinf}{m}{l'}{\oneorinf}{p}$
are finite nonce types or infinite nonce types, $\ie$ whether $\oneorinf$ is $1$ or $\infty$.

\begin{itemize}
\item \case{If $\oneorinf$ is $1$:}
Then by definition of $\Gammain$, we have $\Gammain(y_i) = \LRTn{l}{1}{m}{l'}{1}{p}$.

Therefore, using $\Pi''$ and rule \PLetLR, we have
$\teqP{\Gammain}{\Deltan}{\En}{\EEn}{\Pii}{\Qii}{C' \subseteq \Cin}$.

\item \case{If $\oneorinf$ is $\infty$:}
Then by definition of $\Gammain$, we have $\Gammain(y_i) = \bigvee_{1\leq j \leq n} \LRTn{l}{1}{m_j}{l'}{1}{p_j}$.
Let $\Gamma'\in\branch{\Gammain}$.
By definition, there exists $j\in\llbracket 1, n \rrbracket$, such that 
$\Gamma'(y_i) = \LRTn{l}{1}{m_j}{l'}{1}{p_j}$.

Using $\Pi''$ and Lemma~\ref{lem-proof:type-processes-branches}, there exist $C'_{\Gamma'} \subseteq \Cin$
and a derivation $\Pi''_{\Gamma'}$ of
$\teqP{\Gamma'}{\Deltan}{\En}{\EEn}{\instProca{P''}{i}}{\instProca{Q''}{i}}{C'_{\Gamma'}}$.

Therefore, using rule \PLetLR, we have, for all $\Gamma'\in\branch{\Gammain}$,
$\teqP{\Gamma'}{\Deltan}{\En}{\EEn}{\Pii}{\Qii}{C'_{\Gamma'} \subseteq \Cin}$.
Thus, by Lemma~\ref{lem-proof:process-all-branches}, we have
\[\teqP{\Gammain}{\Deltan}{\En}{\EEn}{\Pii}{\Qii}{C''},\]
where $C'' \subseteq \Cin$,
which proves the claim in this case.
\end{itemize}

\item \case{\PIfL:}
then $P = \ITE{M}{M'}{P'}{P''}$, $Q=\ITE{N}{N'}{Q'}{Q''}$ for some messages $M$, $N$, $M'$, $N'$, and some processes $P'$, $Q'$, $P''$, $Q''$,
and
\[\Pi=
\inferrule
  {\inferrule*{\Pi'}{\teqP{\Gamma}{\Delta}{\E}{\E'}{P'}{Q'}{C'}} \\
   \inferrule*{\Pi''}{\teqP{\Gamma}{\Delta}{\E}{\E'}{P''}{Q''}{C''}}\\
   \inferrule*{\Pi_1}{\teqTc{\Gamma}{\Delta}{\E}{\E'}{M}{N}{\L}{c}}\\
   \inferrule*{\Pi_2}{\teqTc{\Gamma}{\Delta}{\E}{\E'}{M'}{N'}{\L}{c'}}}
  {\teqP{\Gamma}{\Delta}{\E}{\E'}{P}{Q}{C = (C'\cup C'') \UnionAll (c\cup c')}}.
\]

By applying the induction hypothesis to $\Pi'$,
there exists $C'''\subseteq \instCst{C'}{i}{n}$ and
a proof $\Pi'''$ of $\teqP{\Gammain}{\Deltan}{\En}{\EEn}{\instProca{P'}{i}}{\instProca{Q'}{i}}{C'''}$.
Similarly,
there exists $C''''\subseteq \instCst{C''}{i}{n}$ and
a proof $\Pi''''$ of $\teqP{\Gammain}{\Deltan}{\En}{\EEn}{\instProca{P''}{i}}{\instProca{Q''}{i}}{C''''}$.

Hence, by Lemma~\ref{lem-proof:type-processes-branches},
for all $\Gamma'\in\branch{\Gammain}$, there exist $C_{\Gamma'}\subseteq C'''$ and a proof $\Pi_{1,\Gamma'}$ of
$\teqP{\Gamma'}{\Deltan}{\En}{\EEn}{\instProca{P'}{i}}{\instProca{Q'}{i}}{C_{\Gamma'}}$;
as well as $C'_{\Gamma'}\subseteq C''''$ and a proof $\Pi_{2,\Gamma'}$ of
$\teqP{\Gamma'}{\Deltan}{\En}{\EEn}{\instProca{P''}{i}}{\instProca{Q''}{i}}{C'_{\Gamma'}}$.

Moreover, by Lemma~\ref{lem-proof:typing-terms-stars}, for all $\Gamma'\in\branch{\Gammain}$,
there exists a proof $\Pi'_{1,\Gamma'}$ of
$\teqTc{\Gamma'}{\Deltan}{\En}{\EEn}{\instTerma{M}{i}}{\instTerma{N}{i}}{\L}{\instConsta{c}{i}}$.
Similarly,
there exists a proof $\Pi'_{2,\Gamma'}$ of
$\teqTc{\Gamma'}{\Deltan}{\En}{\EEn}{\instTerma{M'}{i}}{\instTerma{N'}{i}}{\L}{\instConsta{c'}{i}}$.

In addition, $\Pii = \instProca{\ITE{M}{M'}{P'}{P''}}{i} = \ITE{\instTerma{M}{i}}{\instTerma{M'}{i}}{\instProca{P'}{i}}{\instProca{P''}{i}}$.
Similarly, $\Qii = \ITE{\instTerma{N}{i}}{\instTerma{N'}{i}}{\instProca{Q'}{i}}{\instProca{Q''}{i}}$.

Therefore, using $\Pi_{1,\Gamma'}$, $\Pi_{2,\Gamma'}$, $\Pi'_{1,\Gamma'}$, $\Pi'_{2,\Gamma'}$ and rule \PIfL, we have 
\[\teqP{\Gamma'}{\Deltan}{\En}{\EEn}{\Pii}{\Qii}{(C_{\Gamma'}\cup C'_{\Gamma'})\UnionAll(\instConsta{c}{i}\cup\instConsta{c'}{i})\subseteq (C'''\cup C'''')\UnionAll(\instConsta{c}{i}\cup\instConsta{c'}{i})}.\]

Thus by Lemma~\ref{lem-proof:process-all-branches}, there exists
$C_1 \subseteq (C'''\cup C'''')\UnionAll(\instConsta{c}{i}\cup\instConsta{c'}{i})$ such that
\[\teqP{\Gammain}{\Deltan}{\En}{\EEn}{\Pii}{\Qii}{C_1}.\]

Finally, $\Cin = \instCst{(C'\cup C'')\UnionAll (c\cup c')}{i}{n} =
(\instCst{C'}{i}{n}\cup \instCst{C''}{i}{n})\UnionAll(\instConsta{c}{i}\cup \instConsta{c'}{i})$ (by Lemma~\ref{lem-proof:inst-union}, whose conditions are satisfied, by Lemma~\ref{lem-proof:env-const-branch}).
Hence $(C'''\cup C'''')\UnionAll(\instConsta{c}{i}\cup \instConsta{c'}{i}) \subseteq \Cin$, which proves the claim.

\item \case{\PIfP:}
then $P = \ITE{M}{t}{P'}{P''}$, $Q=\ITE{N}{t}{Q'}{Q''}$ for some messages $M$, $N$, some $t\in\K\cup\N\cup\CST$, and some processes $P'$, $Q'$, $P''$, $Q''$,
and
\[\Pi=
\inferrule
  {\inferrule*{\Pi'}{\teqP{\Gamma}{\Delta}{\E}{\E'}{P'}{Q'}{C'}} \\
   \inferrule*{\Pi''}{\teqP{\Gamma}{\Delta}{\E}{\E'}{P''}{Q''}{C''}}\\
   \inferrule*{\Pi_1}{\teqTc{\Gamma}{\Delta}{\E}{\E'}{M}{N}{\L}{c}}\\
   \inferrule*{\Pi_2}{\teqTc{\Gamma}{\Delta}{\E}{\E'}{t}{t}{\L}{c'}}}
  {\teqP{\Gamma}{\Delta}{\E}{\E'}{P}{Q}{C = C' \cup C''}}.
\]

By applying the induction hypothesis to $\Pi'$,
there exist $C'''\subseteq \instCst{C'}{i}{n}$ and
a proof $\Pi'''$ of $\teqP{\Gammain}{\Deltan}{\En}{\EEn}{\instProca{P'}{i}}{\instProca{Q'}{i}}{C'''}$.
Similarly,
there exist $C''''\subseteq \instCst{C''}{i}{n}$ and
a proof $\Pi''''$ of $\teqP{\Gammain}{\Deltan}{\En}{\EEn}{\instProca{P''}{i}}{\instProca{Q''}{i}}{C''''}$.

Hence, by Lemma~\ref{lem-proof:type-processes-branches},
for all $\Gamma'\in\branch{\Gammain}$, there exist $C_{\Gamma'}\subseteq C'''$ and a proof $\Pi_{1,\Gamma'}$ of
$\teqP{\Gamma'}{\Deltan}{\En}{\EEn}{\instProca{P'}{i}}{\instProca{Q'}{i}}{C_{\Gamma'}}$;
as well as $C'_{\Gamma'}\subseteq C''''$ and a proof $\Pi_{2,\Gamma'}$ of
$\teqP{\Gamma'}{\Deltan}{\En}{\EEn}{\instProca{P''}{i}}{\instProca{Q''}{i}}{C'_{\Gamma'}}$.

Moreover, by Lemma~\ref{lem-proof:typing-terms-stars}, for all $\Gamma'\in\branch{\Gammain}$,
there exists a proof $\Pi'_{1,\Gamma'}$ of
$\teqTc{\Gamma'}{\Deltan}{\En}{\EEn}{\instTerma{M}{i}}{\instTerma{N}{i}}{\L}{\instConsta{c}{i}}$.
Similarly,
there exists a proof $\Pi'_{2,\Gamma'}$ of
$\teqTc{\Gamma'}{\Deltan}{\En}{\EEn}{\instTerma{t}{i}}{\instTerma{t}{i}}{\L}{\instConsta{c'}{i}}$.

Since $t\in\K\cup\N\cup\CST$, we also have $\instTerma{t}{i}\in\K\cup\N\cup\CST$.

In addition, $\Pii = \instProca{\ITE{M}{M'}{P'}{P''}}{i} = \ITE{\instTerma{M}{i}}{\instTerma{M'}{i}}{\instProca{P'}{i}}{\instProca{P''}{i}}$.
Similarly, $\Qii = \ITE{\instTerma{N}{i}}{\instTerma{N'}{i}}{\instProca{Q'}{i}}{\instProca{Q''}{i}}$.

Therefore, using $\Pi_{1,\Gamma'}$, $\Pi_{2,\Gamma'}$, $\Pi'_{1,\Gamma'}$, $\Pi'_{2,\Gamma'}$ and rule \PIfP, we have for all $\Gamma'\in\branch{\Gammain}$
\[\teqP{\Gamma'}{\Deltan}{\En}{\EEn}{\Pii}{\Qii}{C_{\Gamma'}\cup C'_{\Gamma'} \subseteq C'''\cup C''''}.\]

Thus by Lemma~\ref{lem-proof:process-all-branches}, there exists
$C_1 \subseteq (C'''\cup C'''')$ such that
\[\teqP{\Gammain}{\Deltan}{\En}{\EEn}{\Pii}{\Qii}{C_1}.\]

Finally, $\Cin = \instCst{C'\cup C''}{i}{n} = \instCst{C'}{i}{n}\cup \instCst{C''}{i}{n}$ (by Lemma~\ref{lem-proof:inst-union}).
Hence $(C'''\cup C'''') \subseteq \Cin$, which proves the claim.

\item \case{\PIfLR:}
then $P = \ITE{M_1}{M_2}{P_\top}{P_\bot}$, $Q=\ITE{N_1}{N_2}{Q_\top}{Q_\bot}$ for some messages $M_1$, $N_1$, $M_2$, $N_2$, and some processes $P_\top$, $Q_\top$, $P_\bot$, $Q_\bot$,
and there exist $m$, $p$, $m'$, $p'$ such that
\[\Pi=
\inferrule
  {\inferrule*{\Pi_1}{\teqTc{\Gamma}{\Delta}{\E}{\E'}{M_1}{N_1}{\LRTn{l}{1}{m}{l'}{1}{p}}{c}}\\
   \inferrule*{\Pi_2}{\teqTc{\Gamma}{\Delta}{\E}{\E'}{M_2}{N_2}{\LRTn{l''}{1}{m'}{l'''}{1}{p'}}{c'}}\\
    b = (\noncetypelab{l}{1}{m} \overset{?}{=} \noncetypelab{l''}{1}{m'}) \\ 
    b' = (\noncetypelab{l''}{1}{p} \overset{?}{=} \noncetypelab{l'''}{1}{p'}) \\
   \inferrule*{\Pi'}{\teqP{\Gamma}{\Delta}{\E}{\E'}{P_b}{Q_{b'}}{C}}}
  {\teqP{\Gamma}{\Delta}{\E}{\E'}{P}{Q}{C}}.
\]

By applying the induction hypothesis to $\Pi'$,
there exist $C'\subseteq \Cin$ and
a proof $\Pi''$ of $\teqP{\Gammain}{\Deltan}{\En}{\EEn}{\instProca{P_b}{i}}{\instProca{Q_{b'}}{i}}{C'}$.

Hence, by Lemma~\ref{lem-proof:type-processes-branches},
for all $\Gamma'\in\branch{\Gammain}$, there exist $C_{\Gamma'}\subseteq C'$, and a proof
$\Pi_{\Gamma'}$ of
$\teqP{\Gamma'}{\Deltan}{\En}{\EEn}{\instProca{P_b}{i}}{\instProca{Q_{b'}}{i}}{C_{\Gamma'}}$.

Moreover, by Lemma~\ref{lem-proof:typing-terms-stars} applied to $\Pi_1$, for all $\Gamma'\in\branch{\Gammain}$,
there exists a proof $\Pi'_{1,\Gamma'}$ of
$\teqTc{\Gamma'}{\Deltan}{\En}{\EEn}{\instTerma{M_1}{i}}{\instTerma{N_1}{i}}{\LRTn{l}{1}{m}{l'}{1}{p}}{\instConsta{c}{i}}$.
Similarly,
there exists a proof $\Pi'_{2,\Gamma'}$ of
$\teqTc{\Gamma'}{\Deltan}{\En}{\EEn}{\instTerma{M_2}{i}}{\instTerma{N_2}{i}}{\LRTn{l''}{1}{m'}{l'''}{1}{p'}}{\instConsta{c'}{i}}$.

In addition, $\Pii = \instProca{\ITE{M_1}{M_2}{P_\top}{P_\bot}}{i} = \ITE{\instTerma{M_1}{i}}{\instTerma{M_2}{i}}{\instProca{P_\top}{i}}{\instProca{P_\bot}{i}}$.
Similarly, $\Qii = \ITE{\instTerma{N_1}{i}}{\instTerma{N_2}{i}}{\instProca{Q_\top}{i}}{\instProca{Q_\bot}{i}}$.

Therefore, using $\Pi_{\Gamma'}$, $\Pi'_{1,\Gamma'}$, $\Pi'_{2,\Gamma'}$ and rule \PIfLR, we have 
for all $\Gamma'\in\branch{\Gammain}$
\[\teqP{\Gamma'}{\Deltan}{\En}{\EEn}{\Pii}{\Qii}{C_{\Gamma'}\subseteq C'\subseteq \Cin}.\]

Thus by Lemma~\ref{lem-proof:process-all-branches}, there exists
$C_1 \subseteq \Cin$ such that
\[\teqP{\Gammain}{\Deltan}{\En}{\EEn}{\Pii}{\Qii}{C_1},\]
which proves the claim.

\item \case{\PIfS:}
then $P = \ITE{M}{M'}{P'}{P''}$, $Q=\ITE{N}{N'}{Q'}{Q''}$ for some messages $M$, $N$, $M'$, $N'$, and some processes $P'$, $Q'$, $P''$, $Q''$,
and
\[\Pi=
\inferrule
  {\inferrule*{\Pi'}{\teqP{\Gamma}{\Delta}{\E}{\E'}{P''}{Q''}{C}}\\
   \inferrule*{\Pi_1}{\teqTc{\Gamma}{\Delta}{\E}{\E'}{M}{N}{\L}{c}}\\
   \inferrule*{\Pi_2}{\teqTc{\Gamma}{\Delta}{\E}{\E'}{M'}{N'}{\S}{c'}}}
  {\teqP{\Gamma}{\Delta}{\E}{\E'}{P}{Q}{C}}.
\]

By applying the induction hypothesis to $\Pi'$,
there exists $C'\subseteq \Cin$ and
a proof $\Pi''$ of $\teqP{\Gammain}{\Deltan}{\En}{\EEn}{\instProca{P''}{i}}{\instProca{Q''}{i}}{C'}$.

Hence, by Lemma~\ref{lem-proof:type-processes-branches},
for all $\Gamma'\in\branch{\Gammain}$, there exist $C_{\Gamma'}\subseteq C'$ and a proof $\Pi_{\Gamma'}$ of
$\teqP{\Gamma'}{\Deltan}{\En}{\EEn}{\instProca{P'}{i}}{\instProca{Q'}{i}}{C_{\Gamma'}}$.

Moreover, by Lemma~\ref{lem-proof:typing-terms-stars}, for all $\Gamma'\in\branch{\Gammain}$,
there exists a proof $\Pi'_{1,\Gamma'}$ of
$\teqTc{\Gamma'}{\Deltan}{\En}{\EEn}{\instTerma{M}{i}}{\instTerma{N}{i}}{\L}{\instConsta{c}{i}}$.
Similarly,
there exists a proof $\Pi'_{2,\Gamma'}$ of
$\teqTc{\Gamma'}{\Deltan}{\En}{\EEn}{\instTerma{M'}{i}}{\instTerma{N'}{i}}{\S}{\instConsta{c'}{i}}$.

In addition, $\Pii = \instProca{\ITE{M}{M'}{P'}{P''}}{i} = \ITE{\instTerma{M}{i}}{\instTerma{M'}{i}}{\instProca{P'}{i}}{\instProca{P''}{i}}$.
Similarly, $\Qii = \ITE{\instTerma{N}{i}}{\instTerma{N'}{i}}{\instProca{Q'}{i}}{\instProca{Q''}{i}}$.

Therefore, using $\Pi_{\Gamma'}$, $\Pi'_{1,\Gamma'}$, $\Pi'_{2,\Gamma'}$ and rule \PIfS, we have for all
$\Gamma'\in\branch{\Gammain}$
\[\teqP{\Gamma'}{\Deltan}{\En}{\EEn}{\Pii}{\Qii}{C_{\Gamma'}\subseteq C' \subseteq \Cin}.\]

Thus by Lemma~\ref{lem-proof:process-all-branches}, there exists
$C_1 \subseteq \Cin$ such that
\[\teqP{\Gammain}{\Deltan}{\En}{\EEn}{\Pii}{\Qii}{C_1},\]
which proves the claim.

\item \case{\PIfI:}
then $P = \ITE{M}{M'}{P'}{P''}$, $Q=\ITE{N}{N'}{Q'}{Q''}$ for some messages $M$, $N$, $M'$, $N'$, and some processes $P'$, $Q'$, $P''$, $Q''$, and there exist types $T$, $T'$, and names $m$, $p$,
such that
\[\Pi=
\inferrule
  {\inferrule*{\Pi'}{\teqP{\Gamma}{\Delta}{\E}{\E'}{P''}{Q''}{C}}\\
   \inferrule*{\Pi_1}{\teqTc{\Gamma}{\Delta}{\E}{\E'}{M}{N}{T*T'}{c}}\\
   \inferrule*{\Pi_2}{\teqTc{\Gamma}{\Delta}{\E}{\E'}{M'}{N'}{\LRTn{l}{\oneorinf}{m}{l'}{\oneorinf}{p}}{c'}}}
  {\teqP{\Gamma}{\Delta}{\E}{\E'}{P}{Q}{C}}.
\]

By applying the induction hypothesis to $\Pi'$,
there exist $C'\subseteq \Cin$ and
a proof $\Pi''$ of $\teqP{\Gammain}{\Deltan}{\En}{\EEn}{\instProca{P''}{i}}{\instProca{Q''}{i}}{C'}$.

Let $\Gamma'\in\branch{\Gammain}$.
By applying Lemma~\ref{lem-proof:type-processes-branches} to $\Pi''$,
there exists $C_{\Gamma'}\subseteq C'$, such that there exists a proof
$\Pi_{\Gamma'}$ of
$\teqP{\Gamma'}{\Deltan}{\En}{\EEn}{\instProca{P''}{i}}{\instProca{Q''}{i}}{C_{\Gamma'}}$.

Moreover, by Lemma~\ref{lem-proof:typing-terms-stars} applied to $\Pi_1$,
there exists a proof $\Pi'_{1,\Gamma'}$ of
$\teqTc{\Gamma'}{\Deltan}{\En}{\EEn}{\instTerma{M}{i}}{\instTerma{N}{i}}{\instTyp{T}{n}*\instTyp{T'}{n}}{\instConsta{c}{i}}$.
Similarly,
there exists a proof $\Pi'_{2,\Gamma'}$ of
$\teqTc{\Gamma'}{\Deltan}{\En}{\EEn}{\instTerma{M'}{i}}{\instTerma{N'}{i}}{\instTyp{\LRTn{l}{\oneorinf}{m}{l'}{\oneorinf}{p}}{n})}{\instConsta{c'}{i}}$.

In addition, $\Pii = \instProca{\ITE{M}{M'}{P'}{P''}}{i} = \ITE{\instTerma{M}{i}}{\instTerma{M'}{i}}{\instProca{P'}{i}}{\instProca{P''}{i}}$.
Similarly, $\Qii = \ITE{\instTerma{N}{i}}{\instTerma{N'}{i}}{\instProca{Q'}{i}}{\instProca{Q''}{i}}$.

We distinguish two cases.

\begin{itemize}
\item \case{If $\oneorinf$ is $1$:}
Then $\instTyp{\LRTn{l}{1}{m}{l'}{1}{p}}{n} = \LRTn{l}{1}{m}{l'}{1}{p}$,
and using $\Pi_{\Gamma'}$, $\Pi'_{1,\Gamma'}$, $\Pi'_{2,\Gamma'}$ and rule \PIfI, we have for all $\Gamma'\in\branch{\Gammain}$
\[\teqP{\Gamma'}{\Deltan}{\En}{\EEn}{\Pii}{\Qii}{C_{\Gamma'}\subseteq C'\subseteq \Cin}.\]
Thus by Lemma~\ref{lem-proof:process-all-branches}, there exists
$C_1 \subseteq \Cin$ such that
\[\teqP{\Gammain}{\Deltan}{\En}{\EEn}{\Pii}{\Qii}{C_1},\]
which proves the claim in this case.

\item \case{If $\oneorinf$ is $\infty$:}
Moreover,
by applying Lemma~\ref{lem-proof:type-terms-branches} to $\Pi'_{2,\Gamma'}$,
there exists a type $T''\in\branch{\instTyp{\LRTn{l}{\infty}{m}{l'}{\infty}{p}}{n}}$, such that there exists a proof $\Pi''_{2,\Gamma'}$
of
$\teqTc{\Gamma'}{\Deltan}{\En}{\EEn}{\instTerma{M'}{i}}{\instTerma{N'}{i}}{T''}{\instConsta{c'}{i}}$.

By definition, $\instTyp{\LRTn{l}{\infty}{m}{l'}{\infty}{p}}{n} = \bigvee_{1\leq j \leq n} \LRTn{l}{1}{m_j}{l'}{1}{p_j}$.
Therefore, by definition of branches, there exists $j$ such that $T'' = \LRTn{l}{1}{m_j}{l'}{1}{p_j}$.

Hence, using $\Pi_{\Gamma'}$, $\Pi'_{1,\Gamma'}$, $\Pi''_{2,\Gamma'}$,
by applying rule \PIfI, we have 
for all $\Gamma'\in\branch{\Gammain}$ that
\[\teqP{\Gamma'}{\Deltan}{\En}{\EEn}{\Pii}{\Qii}{C_{\Gamma'} \subseteq C'\subseteq \Cin}.\]

Thus by Lemma~\ref{lem-proof:process-all-branches}, there exists
$C_1 \subseteq \Cin$ such that
\[\teqP{\Gammain}{\Deltan}{\En}{\EEn}{\Pii}{\Qii}{C_1}\]
which proves the claim in this case.
\end{itemize}

\item \case{\PIfLRinf:}
then $P = \ITE{M_1}{M_2}{P_\top}{P_\bot}$, $Q=\ITE{N_1}{N_2}{Q_\top}{Q_\bot}$ for some messages $M_1$, $N_1$, $M_2$, $N_2$, and some processes $P_\top$, $Q_\top$, $P_\bot$, $Q_\bot$,
and there exist $m, p, l, l'$ such that

\[\Pi=
\inferrule
  {\inferrule*{\Pi_1}{\teqTc{\Gamma}{\Delta}{\E}{\E'}{M_1}{N_1}{\LRTn{l}{\infty}{m}{l'}{\infty}{p}}{\emptyset}}\\
   \inferrule*{\Pi_2}{\teqTc{\Gamma}{\Delta}{\E}{\E'}{M_2}{N_2}{\LRTn{l}{\infty}{m}{l'}{\infty}{p'}}{\emptyset}}\\
   \inferrule*{\Pi_\top}{\teqP{\Gamma}{\Delta}{\E}{\E'}{P_\top}{Q_\top}{C_1}}\\
   \inferrule*{\Pi_\bot}{\teqP{\Gamma}{\Delta}{\E}{\E'}{P_\bot}{Q_\bot}{C_2}}}
  {\teqP{\Gamma}{\Delta}{\E}{\E'}{P}{Q}{C = C_1\cup C_2}}.
\]

By applying the induction hypothesis to $\Pi_\top$,
there exist $C'\subseteq \instCst{C_1}{i}{n} \subseteq \Cin$ and
a proof $\Pi'$ of $\teqP{\Gammain}{\Deltan}{\En}{\EEn}{\instProca{P_\top}{i}}{\instProca{Q_\top}{i}}{C'}$.
Similarly with $\Pi_\bot$,
there exist $C''\subseteq \instCst{C_2}{i}{n} \subseteq \Cin$ and
a proof $\Pi''$ of $\teqP{\Gammain}{\Deltan}{\En}{\EEn}{\instProca{P_\bot}{i}}{\instProca{Q_\bot}{i}}{C''}$.

Hence, by Lemma~\ref{lem-proof:type-processes-branches},
for all $\Gamma'\in\branch{\Gammain}$, there exist $C_{1,\Gamma'}\subseteq C' (\subseteq \Cin)$,
$C_{2,\Gamma'}\subseteq C'' (\subseteq \Cin)$,
and proofs $\Pi_{\top, \Gamma'}$ and $\Pi_{\bot,\Gamma'}$ of
$\teqP{\Gamma'}{\Deltan}{\En}{\EEn}{\instProca{P_\top}{i}}{\instProca{Q_\top}{i}}{C_{1,\Gamma'}}$
and
$\teqP{\Gamma'}{\Deltan}{\En}{\EEn}{\instProca{P_\bot}{i}}{\instProca{Q_\bot}{i}}{C_{2,\Gamma'}}$.

Moreover, by Lemma~\ref{lem-proof:typing-terms-stars} applied to $\Pi_1$, for all $\Gamma'\in\branch{\Gammain}$,
there exists a proof $\Pi'_{1,\Gamma'}$ of
$\teqTc{\Gamma'}{\Deltan}{\En}{\EEn}{\instTerma{M_1}{i}}{\instTerma{N_1}{i}}{\bigvee_{1\leq j \leq n} \LRTn{l}{1}{m_j}{l'}{1}{p_j}}{\instConsta{c}{i}}$.
Similarly,
there exists a proof $\Pi'_{2,\Gamma'}$ of
$\teqTc{\Gamma'}{\Deltan}{\En}{\EEn}{\instTerma{M_2}{i}}{\instTerma{N_2}{i}}{\bigvee_{1\leq j \leq n} \LRTn{l}{1}{m_j}{l'}{1}{p_j}}{\instConsta{c'}{i}}$.

Let $\Gamma'\in\branch{\Gammain}$.
By Lemma~\ref{lem-proof:type-terms-branches}, there exists $T\in\branch{\bigvee_{1\leq j \leq n} \LRTn{l}{1}{m_j}{l'}{1}{p_j}}$
such that there exists a proof $\Pi''_{1, \Gamma'}$ of 
$\teqTc{\Gamma'}{\Deltan}{\En}{\EEn}{\instTerma{M_1}{i}}{\instTerma{N_1}{i}}{T}{\instConsta{c}{i}}$.

Similarly, there exists $T'\in\branch{\bigvee_{1\leq j \leq n} \LRTn{l}{1}{m_j}{l'}{1}{p_j}}$
such that there exists a proof $\Pi''_{2, \Gamma'}$ of 
$\teqTc{\Gamma'}{\Deltan}{\En}{\EEn}{\instTerma{M_2}{i}}{\instTerma{N_2}{i}}{T'}{\instConsta{c}{i}}$.

By definition of branches, there exist $j, j'$ such that
$T = \LRTn{l}{1}{m_j}{l'}{1}{p_j}$ and $T' = \LRTn{l}{1}{m_{j'}}{l'}{1}{p_{j'}}$.

In addition, $\Pii = \instProca{\ITE{M_1}{M_2}{P_\top}{P_\bot}}{i} = \ITE{\instTerma{M_1}{i}}{\instTerma{M_2}{i}}{\instProca{P_\top}{i}}{\instProca{P_\bot}{i}}$.
Similarly, $\Qii = \ITE{\instTerma{N_1}{i}}{\instTerma{N_2}{i}}{\instProca{Q_\top}{i}}{\instProca{Q_\bot}{i}}$.

Therefore, using $\Pi_{\top,\Gamma'}$, $\Pi_{\bot,\Gamma'}$, $\Pi''_{1,\Gamma'}$, $\Pi''_{2,\Gamma'}$ and rule \PIfLR,
either $j = j'$ and we have
\[\teqP{\Gamma'}{\Deltan}{\En}{\EEn}{\Pii}{\Qii}{C_{1,\Gamma'} (\subseteq \Cin)}\]
or $j \neq j'$ and we have
\[\teqP{\Gamma'}{\Deltan}{\En}{\EEn}{\Pii}{\Qii}{C_{2,\Gamma'} (\subseteq \Cin)}.\]

This holds for any $\Gamma'\in\branch{\Gammain}$.

Thus by Lemma~\ref{lem-proof:process-all-branches}, there exists
$C' \subseteq \Cin$ such that
\[\teqP{\Gammain}{\Deltan}{\En}{\EEn}{\Pii}{\Qii}{C'}\]
which proves the claim in this case.

\item \case{\PIfLRp:}
then $P = \ITE{M_1}{M_2}{P_\top}{P_\bot}$, $Q=\ITE{N_1}{N_2}{Q_\top}{Q_\bot}$ for some messages $M_1$, $N_1$, $M_2$, $N_2$, and some processes $P_\top$, $Q_\top$, $P_\bot$, $Q_\bot$,
and there exist names $m,p,m', p'$ such that

\[\Pi=
\inferrule
  {\inferrule*{\Pi_1}{\teqTc{\Gamma}{\Delta}{\E}{\E'}{M_1}{N_1}{\LRTn{l}{\oneorinf}{m}{l'}{\oneorinf}{p}}{\emptyset}}\\
   \inferrule*{\Pi_2}{\teqTc{\Gamma}{\Delta}{\E}{\E'}{M_2}{N_2}{\LRTn{l''}{\oneorinf'}{m'}{l'''}{\oneorinf'}{p'}}{\emptyset}}\\
   \inferrule*{\Pi'}{\teqP{\Gamma}{\Delta}{\E}{\E'}{P_\bot}{Q_\bot}{C}}}
  {\teqP{\Gamma}{\Delta}{\E}{\E'}{P}{Q}{C}}.
\]

By applying the induction hypothesis to $\Pi'$,
there exist $C'\subseteq \Cin$ and
a proof $\Pi''$ of $\teqP{\Gammain}{\Deltan}{\En}{\EEn}{\instProca{P_\bot}{i}}{\instProca{Q_\bot}{i}}{C'}$.

Hence, by Lemma~\ref{lem-proof:type-processes-branches},
for all $\Gamma'\in\branch{\Gammain}$, there exist $C_{\Gamma'}\subseteq C' (\subseteq \Cin)$,
and a proofs$\Pi_{\Gamma'}$ of
$\teqP{\Gamma'}{\Deltan}{\En}{\EEn}{\instProca{P_\bot}{i}}{\instProca{Q_\bot}{i}}{C_{\Gamma'}}$.

Moreover, by Lemma~\ref{lem-proof:typing-terms-stars} applied to $\Pi_1$, for all $\Gamma'\in\branch{\Gammain}$,
there exists a proof $\Pi'_{1,\Gamma'}$ of
$\teqTc{\Gamma'}{\Deltan}{\En}{\EEn}{\instTerma{M_1}{i}}{\instTerma{N_1}{i}}{\instTyp{\LRTn{l}{\oneorinf}{m}{l'}{\oneorinf}{p}}{n}}{\instConsta{c}{i}}$.
Similarly,
there exists a proof $\Pi'_{2,\Gamma'}$ of
$\teqTc{\Gamma'}{\Deltan}{\En}{\EEn}{\instTerma{M_2}{i}}{\instTerma{N_2}{i}}{\instTyp{\LRTn{l''}{\oneorinf'}{m'}{l'''}{\oneorinf'}{p'}}{n}}{\instConsta{c'}{i}}$.

We distinguish several cases, depending on $\oneorinf$ and $\oneorinf'$.
\begin{itemize}
\item \case{if $\oneorinf$ and $\oneorinf'$ are both $1$:}
Then this rule is a particular case of rule \PIfLR, and the result is proved in a similar way.

\item \case{if $\oneorinf$ is $1$ and $\oneorinf'$ is $\infty$:}
Then $\instTyp{\LRTn{l}{\oneorinf}{m}{l'}{\oneorinf}{p}}{n} = \instTyp{\LRTn{l}{1}{m}{l'}{1}{p}}{n}$,
and $\instTyp{\LRTn{l''}{\oneorinf}{m'}{l'''}{\oneorinf}{p'}}{n} = \bigvee_{1\leq j \leq n}
\LRTn{l''}{1}{m'_j}{l'''}{1}{p_j}$.

Let $\Gamma'\in\branch{\Gammain}$.

By Lemma~\ref{lem-proof:type-terms-branches}, using $\Pi'_{2,\Gamma'}$,
there exists $j \in\llbracket 1, n\rrbracket$ such that
there exists a proof $\Pi''_{2,\Gamma'}$ of
$\teqTc{\Gamma'}{\Deltan}{\En}{\EEn}{\instTerma{M_2}{i}}{\instTerma{N_2}{i}}{\LRTn{l''}{1}{m'_j}{l'''}{1}{p'_j}}{\instConsta{c'}{i}}$.

In addition, $\Pii = \instProca{\ITE{M_1}{M_2}{P_\top}{P_\bot}}{i} = \ITE{\instTerma{M_1}{i}}{\instTerma{M_2}{i}}{\instProca{P_\top}{i}}{\instProca{P_\bot}{i}}$.
Similarly, $\Qii = \ITE{\instTerma{N_1}{i}}{\instTerma{N_2}{i}}{\instProca{Q_\top}{i}}{\instProca{Q_\bot}{i}}$.

For any $j\in\llbracket1, n\rrbracket$, $\noncetypelab{l}{1}{m}\neq \noncetypelab{l''}{1}{m'_j}$;
and $\noncetypelab{l'}{1}{p}\neq \noncetypelab{l'''}{1}{p'_j}$.

Therefore, using $\Pi_{\Gamma'}$, $\Pi'_{1,\Gamma'}$, $\Pi''_{2,\Gamma'}$ and rule \PIfLR,
we have
\[\teqP{\Gamma'}{\Deltan}{\En}{\EEn}{\Pii}{\Qii}{C_{\Gamma'} (\subseteq \Cin)}.\]

This holds for any $\Gamma'\in\branch{\Gammain}$.

Thus by Lemma~\ref{lem-proof:process-all-branches}, there exists
$C' \subseteq \Cin$ such that
\[\teqP{\Gammain}{\Deltan}{\En}{\EEn}{\Pii}{\Qii}{C'}\]
which proves the claim in this case.

\item \case{if $\oneorinf$ is $\infty$ and $\oneorinf'$ is $1$:}
This case is similar to the symmetric one.

\item \case{if $\oneorinf$ and $\oneorinf'$ both are $\infty$:}
This case is similar to the case where $\oneorinf$ is $1$ and $\oneorinf'$ is $\infty$.
\end{itemize}
\end{itemize}
\end{proof}


\begin{theorem}[Typing $n$ sessions]
\label{thm-proof:typing-n-sessions}
For all $\Gamma$, 
$P$, $Q$ and $C$, such that 
\[\teqP{\Gamma}{\Delta}{\E}{\E'}{P}{Q}{C}\] 
then for all $n\in\mathbb{N}$, there exists $C'\subseteq \UnionCart_{1\leq i \leq n}\instCst{C}{i}{n}$
such that

\[\teqP{\instGG{\Gamma}{n}}{\instD{\Delta}{n}}{\instE{\E}{n}}{\instE{\E'}{n}}
{\instProca{P}{1}\PAR \dots \PAR \instProca{P}{n}}{\instProca{Q}{1}\PAR\dots\PAR\instProca{Q}{n}}{C'}\] 
where $\instGG{\Gamma}{n}$ is defined as $\bigcup_{1\leq i\leq n} \instG{\Gamma}{i}{n}$.
\end{theorem}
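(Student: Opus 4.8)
The plan is to derive the typing of the $n$-fold parallel composition directly, by instantiating the single-session expansion result once per session and then gluing the resulting derivations with the rule \PPar; since the expanded types depend on $n$, I keep $n$ fixed throughout rather than inducting on it. First I would apply Theorem~\ref{lem-proof:typing-process-star} once for each $i$ with $1\le i\le n$, all with the same $n$: from $\teqPnew{\Gamma}{P}{Q}{C}$ this yields, for every $i$, a set $C_i\subseteq\instCst{C}{i}{n}$ together with a derivation $\teqPnew{\instG{\Gamma}{i}{n}}{\instProca{P}{i}}{\instProca{Q}{i}}{C_i}$ (in the environments $\instD{\Delta}{n}$, $\instE{\E}{n}$, $\instE{\E'}{n}$). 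The obstacle to combining them immediately is that each is typed in a different environment $\instG{\Gamma}{i}{n}$, whereas \PPar demands a single shared environment, namely $\instGG{\Gamma}{n}=\bigcup_i\instG{\Gamma}{i}{n}$.

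Second I would lift each single-session judgement to the common environment $\instGG{\Gamma}{n}$. I write $\instGG{\Gamma}{n}=\instG{\Gamma}{i}{n}\cup\Gamma_i$, where $\Gamma_i$ collects the entries of the $\instG{\Gamma}{j}{n}$ with $j\ne i$ that are absent from $\instG{\Gamma}{i}{n}$; concretely $\Gamma_i$ consists only of indexed variables $x_j$ and indexed nonces $m_j$ for $j\ne i$, since the keys and the finite (unindexed) nonces are shared by all $\instG{\Gamma}{j}{n}$ with identical, non-union types there. This also shows the $\instG{\Gamma}{j}{n}$ are pairwise compatible, so $\instGG{\Gamma}{n}$ is well defined. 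The disjointness hypothesis of Lemma~\ref{lem-proof:typing-contextinclusion} holds because every bound variable and bound name of $\instProca{P}{i}$ and $\instProca{Q}{i}$ carries the index $i$, while $\dom{\Gamma_i}$ consists of index-$j$ symbols with $j\ne i$. The third point of Lemma~\ref{lem-proof:typing-contextinclusion} then gives $\teqPnew{\instGG{\Gamma}{n}}{\instProca{P}{i}}{\instProca{Q}{i}}{C_i'}$ with $C_i'=\{(c,\Gamma_c\cup\Gamma'')\mid(c,\Gamma_c)\in C_i\wedge\Gamma''\in\branch{\Gamma_i}\}$, each constraint now carrying a branch of the full environment.

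With all $n$ premises in the single environment $\instGG{\Gamma}{n}$, iterating \PPar $n-1$ times yields
\[\teqPnew{\instGG{\Gamma}{n}}{\instProca{P}{1}\PAR\dots\PAR\instProca{P}{n}}{\instProca{Q}{1}\PAR\dots\PAR\instProca{Q}{n}}{C'}\]
with $C'=C_1'\UnionCart\dots\UnionCart C_n'$ (well-formedness of the composition follows from the renaming, which tags every bound variable and every bound infinite-type nonce with its session index, so the parallel components have disjoint bound symbols; this is where I rely on $P,Q$ binding only infinite nonce types, as in the intended use of the theorem). It remains to check $C'\subseteq\UnionCart_{1\le i\le n}\instCst{C}{i}{n}$, the delicate step. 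Take an element $(\bigcup_i c_i,\bigcup_i(\Gamma_{c_i}\cup\Delta_i))$ of $C'$, with $(c_i,\Gamma_{c_i})\in C_i\subseteq\instCst{C}{i}{n}$ and $\Delta_i\in\branch{\Gamma_i}$, the environments pairwise compatible. The point is that the $\Delta_i$ are redundant in the union: for $j\ne i$ the only index-$j$ entries of $\Gamma_{c_i}\cup\Delta_i$ come from $\Delta_i$, and compatibility with $\Gamma_{c_j}\cup\Delta_j$ (whose index-$j$ entries are exactly those of $\Gamma_{c_j}$) forces $\Delta_i$ to agree with $\Gamma_{c_j}$ on index-$j$ symbols. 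Hence $\bigcup_i(\Gamma_{c_i}\cup\Delta_i)=\bigcup_i\Gamma_{c_i}$, so the element equals $(\bigcup_i c_i,\bigcup_i\Gamma_{c_i})$, which lies in $\UnionCart_i\instCst{C}{i}{n}$ since the $(c_i,\Gamma_{c_i})$ are elements of $\instCst{C}{i}{n}$ that are pairwise compatible (they agree on the shared keys and finite nonces). This yields the required inclusion, with $C'$ as the witness.

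I expect this last matching of constraint sets to be the main obstacle: the context-extension lemma necessarily fattens every constraint environment with branches of the other sessions' fragments, and one must argue that these additions collapse under the product union so that the collected constraints stay inside $\UnionCart_i\instCst{C}{i}{n}$ rather than a strictly larger set. The only other delicate bookkeeping is the disjointness of the renamed processes' bound symbols, which is what simultaneously justifies Lemma~\ref{lem-proof:typing-contextinclusion} and the well-formed application of \PPar.
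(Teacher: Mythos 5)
Your proof is correct and follows essentially the same route as the paper's: instantiate Theorem~\ref{lem-proof:typing-process-star} once for each session $i$ (with $n$ fixed), lift each judgement to the common environment via Lemma~\ref{lem-proof:typing-contextinclusion}, glue the $n$ derivations with \PPar, and then show that the branch-extended environments produced by the context-extension lemma collapse under the product union, since compatibility forces each added branch to coincide with the index-$j$ entries already present in the other sessions' constraint environments. If anything, your decomposition $\instGG{\Gamma}{n}=\instG{\Gamma}{i}{n}\cup\Gamma_i$ with $\Gamma_i$ containing both the indexed variables and the indexed nonces $m_j$ is slightly more careful than the paper's, which writes $\instGG{\Gamma}{n} = \instG{\Gamma}{i}{n} \uplus \bigcup_{j\neq i}\onlyvar{(\instG{\Gamma}{j}{n})}$ (variables only), tacitly relying on the intended application where $\Gamma$ contains no infinite nonce types.
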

\begin{proof}
\newcommand{\Pii}[1]{\instProca{P}{#1}}
\newcommand{\Qii}[1]{\instProca{Q}{#1}}
\newcommand{\Gammain}{\instG{\Gamma}{i}{n}}
\newcommand{\Gamman}{\instGG{\Gamma}{n}}

Let us assume $\Gamma$, 
$P$, $Q$ and $C$
are such that 
\[\teqP{\Gamma}{\Delta}{\E}{\E'}{P}{Q}{C}.\]
Let $n\in\mathbb{N}$.

Note that the union $\bigcup_{1\leq i\leq n} \instG{\Gamma}{i}{n}$ is well-defined, as for $i\neq j$,
$\dom{\instG{\Gamma}{i}{n}} \cap \dom{\instG{\Gamma}{j}{n}} \subseteq \K\cup\N$,
and the types associated to keys and nonces are the same in each $\instG{\Gamma}{i}{n}$.

\medskip

The property follows from Theorem~\ref{lem-proof:typing-process-star}.
Indeed, this theorem guarantees that for all $i\in\llbracket 1, n\rrbracket$, there exists
$C_i\subseteq \Cin$ such that
\[\teqP{\Gammain}{\Deltan}{\En}{\EEn}{\Pii{i}}{\Qii{i}}{C_i}.\] 

By construction, all variables in $\dom{\Gammain}$ are indexed with $i$, and as we mentioned earlier,
for all $i, j$, $\Gammain$ and $\instG{\Gamma}{j}{n}$ have the same values on their common domain.

Hence we have $\Gamman = \Gammain \uplus (\bigcup_{j\neq i} \onlyvar{(\instG{\Gamma}{j}{n})})$.
Therefore, by Lemma~\ref{lem-proof:typing-contextinclusion}, we have for all $i\in\llbracket 1, n\rrbracket$
\[\teqP{\Gamman}{\Deltan}{\En}{\EEn}{\Pii{i}}{\Qii{i}}{C'_i}\] 
where
\[C'_i = \{(c,\Gamma'\cup\Gamma'')|(c, \Gamma')\in C_i \;\wedge\;\Gamma''\in\branch{\bigcup_{j\neq i} \onlyvar{(\instG{\Gamma}{j}{n})}}\}\]

Thus, by applying rule \PPar $n-1$ times, we have
\[\teqP{\Gamman}{\Deltan}{\En}{\EEn}
{\Pii{1}\PAR \dots \PAR \Pii{n}}{\Qii{1}\PAR\dots\Qii{n}}{\UnionCart_{1\leq i \leq n}C'_i}.\] 

It only remains to be proved that $\UnionCart_{1\leq i \leq n}C'_i \subseteq \UnionCart_{1\leq i \leq n}\instCst{C}{i}{n}$.
Since for all $i\in\llbracket 1, n\rrbracket$ we have $C_i\subseteq \Cin$, by Lemma~\ref{lem-proof:cons-subset}
we know that 
$\UnionCart_{1\leq i \leq n}C_i \subseteq \UnionCart_{1\leq i \leq n}\instCst{C}{i}{n}$.

Hence it suffices to show that $\UnionCart_{1\leq i \leq n}C'_i \subseteq \UnionCart_{1\leq i \leq n}C_i$.

Let $(c, \Gamma')\in\UnionCart_{1\leq i \leq n}C'_i$.
By definition there exist $(c_1,\Gamma_1)\in C'_1,\dots,(c_n,\Gamma_n)\in C'_n$ such that
$c = \bigcup_{1\leq i \leq n} c_i$, $\Gamma' = \bigcup_{1\leq i \leq n} \Gamma_i$, and for all $i\neq j$, $\Gamma_i$ and $\Gamma_j$ are compatible.

For all $i$, $(c_i,\Gamma_i)\in C'_i$. Thus by definition of $C'_i$ there exist
$\Gamma'_i$ and $\Gamma''_i$ such that $(c_i, \Gamma'_i)\in C_i$, $\Gamma''_i\in\branch{\bigcup_{j\neq i} \onlyvar{(\instG{\Gamma}{j}{n})}}$, and $\Gamma_i = \Gamma'_i \cup \Gamma''_i$.
Since for all $i\neq j$, $\Gamma_i$ and $\Gamma_j$ are compatible, we know that $\Gamma'_i$ and $\Gamma'_j$ also are, as well as
$\Gamma'_i$ and $\Gamma''_j$.

Hence, 
$\Gamma' = \bigcup_{1\leq i \leq n} \Gamma_i = \bigcup_{1\leq i \leq n} (\Gamma'_i\cup\Gamma''_i)
= (\bigcup_{1\leq i \leq n} \Gamma'_i) \cup (\bigcup_{1\leq i \leq n} \Gamma''_i)$.

Moreover, $(\bigcup_{1\leq i \leq n} \Gamma''_i) = (\bigcup_{1\leq i \leq n} \Gamma'_i)$.
Indeed, they have the same domain, \ie $\{x_i\;|\; x\in\dom{\Gamma}\wedge 1\leq i \leq n\}$,
and are compatible since for all $i\neq j$, $\Gamma'_i$ and $\Gamma''_j$ are compatible.

Thus $\Gamma' = (\bigcup_{1\leq i \leq n} \Gamma'_i)$, and since the $\Gamma'_i$ are all pairwise compatible,
and for all $i$, $(c_i,\Gamma'_i)\in C_i$, we have
$(c, \Gamma')\in\UnionCart_{1\leq i \leq n}C_i$.

This proves that $\UnionCart_{1\leq i \leq n}C'_i \subseteq \UnionCart_{1\leq i \leq n}C_i$, which concludes the proof.
\end{proof}

\bigskip

This next theorem corresponds to Theorem~\ref{thm:typing-sound-replicated}:

\begin{theorem}
\label{thm-proof:typing-sound-replicated}
\newcommand{\Gamman}{\instGG{\Gamma}{n}}
Consider $P$, $Q$, $P'$ ,$Q'$, $C$, $C'$,
such that $P$, $Q$ and $P'$, $Q'$ do not share any variable.
Consider $\Gamma$, containing only keys and nonces with types of the form $\noncetypelab{l}{1}{n}$.

Assume that $P$ and $Q$ only bind nonces with infinite nonce types,
\ie using $\NEWnew{m}{\noncetypelab{l}{\infty}{m}}$ for some label $l$;
while $P'$ and $Q'$ only bind nonces with finite types, \ie using $\NEWnew{m}{\noncetypelab{l}{1}{m}}$.

Let us abbreviate by $\NEWN{\nm}$ the sequence of declarations of each nonce $m\in\dom{\Gamma}$.
If
\begin{itemize}
\item $\teqPnew{\Gamma}{P}{Q}{C}$,
\item $\teqPnew{\Gamma}{P'}{Q'}{C'}$,
\item $C'\UnionCart(\UCn)$ is consistent for all $n$,
\end{itemize}
then \hfill$\NEWN{\nm}. \;((!P)\PAR P') \equivTrace
\NEWN{\nm}. \;((!Q)\PAR Q')$.\hfill~
\end{theorem}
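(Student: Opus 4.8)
The plan is to reduce the replicated equivalence to a family of finitary equivalences, one per number of sessions, and then invoke the already-established soundness result for replication-free processes, namely Theorem~\ref{thm-proof:typing-sound}. The first step is the standard observation that trace equivalence commutes with replication expansion in the following sense: since there is no typing rule for replication, I want to show that
\[
\NEWN{\nm}.\;((!P)\PAR P') \equivTrace \NEWN{\nm}.\;((!Q)\PAR Q')
\]
holds as soon as, for every $n$, the $n$-fold parallel unfolding
\[
\NEWN{\nm}.\;(\instProca{P}{1}\PAR\dots\PAR\instProca{P}{n}\PAR P')
\]
is trace equivalent to the corresponding unfolding for $Q$. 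This is because any finite trace of the replicated process only ever spawns finitely many copies of $P$, so it is a trace of some finite unfolding; conversely every trace of a finite unfolding is a trace of the replicated process. I would make this precise by a straightforward induction on the length of the trace, using the operational semantics of $!P$ (the \textsc{Repl} rule) together with $\alpha$-renaming to match the freshly created nonces $m$ of each replica to the indexed names $m_i$ guaranteed to exist by our convention. The renaming $\instProca{\cdot}{i}$ is exactly designed to realize this correspondence, so the expanded process and the $i$-th dynamically spawned copy are identical up to the bijective renaming of bound names and variables.

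The second step is to type-check each finite unfolding. Here I would apply Theorem~\ref{thm-proof:typing-n-sessions} to the hypothesis $\teqPnew{\Gamma}{P}{Q}{C}$, which gives, for each $n$, a constraint set $C_n \subseteq \UCn$ such that
\[
\teqPnew{\instGG{\Gamma}{n}}{\instProca{P}{1}\PAR\dots\PAR\instProca{P}{n}}{\instProca{Q}{1}\PAR\dots\PAR\instProca{Q}{n}}{C_n}.
\]
Since $P'$, $Q'$ bind only finite nonce types and share no variables with $P$, $Q$, their typing derivation $\teqPnew{\Gamma}{P'}{Q'}{C'}$ survives the renaming unchanged (its $\instCstr{\cdot}{1}$ is essentially itself, and by Lemma~\ref{lem-proof:typing-contextinclusion} it can be typed in the enlarged environment $\instGG{\Gamma}{n}$). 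Then rule \PPar combines the two derivations, yielding a constraint set contained in $C' \UnionCart (\UCn)$ for the full unfolding $(\instProca{P}{1}\PAR\dots\PAR\instProca{P}{n}\PAR P')$. By Lemma~\ref{lem-proof:cons-subset}, consistency of $C'\UnionCart(\UCn)$ (our third hypothesis) implies consistency of this subset.

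The third step is then purely mechanical: since the environment $\instGG{\Gamma}{n}$ contains only keys and nonces (all process variables are introduced as bound variables inside the processes, not in $\Gamma$), Theorem~\ref{thm-proof:typing-sound} applies directly and gives trace equivalence of the $n$-fold unfolding for each $n$. Wrapping the nonce declarations $\NEWN{\nm}$ back around both sides is harmless, as the two processes share the same declaration prefix and these are handled identically by rule \PNew and the \textsc{New} reduction. Assembling the three steps yields the claimed equivalence.

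The main obstacle I expect is the first step, namely rigorously justifying that trace equivalence of all finite unfoldings entails trace equivalence of the replicated process. The subtlety is twofold: first, I must ensure that the correspondence between dynamically spawned replicas and the statically indexed copies $\instProca{P}{i}$ respects the \emph{freshness} of nonces, so that the well-formedness assumptions (disjoint bound names across copies, the availability of fresh indexed names $m_i$) used throughout the proofs of Theorems~\ref{thm-proof:typing-n-sessions} and \ref{thm-proof:typing-sound} are genuinely met; second, for the converse inclusion I must check that a single trace of the replicated process, which interleaves actions of arbitrarily many replicas, can always be re-indexed to match one canonical expansion $\instProca{P}{1},\dots,\instProca{P}{n}$ for an appropriate $n$ equal to the number of replicas actually used. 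Both points are essentially bookkeeping on the operational semantics and $\alpha$-equivalence, but getting the quantifier order right — that a \emph{single} $n$ suffices for each fixed trace, even though $n$ is unbounded across the trace set — is where care is needed.
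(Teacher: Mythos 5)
Your proposal follows essentially the same route as the paper's own proof: expand to $n$ indexed parallel copies via Theorem~\ref{thm-proof:typing-n-sessions}, type $P'$, $Q'$ alongside them (the paper uses Theorem~\ref{lem-proof:typing-process-star} with Lemma~\ref{lem-proof:typing-contextinclusion}, which is what your renaming-invariance remark amounts to), combine with \PPar{} and \PNew, check consistency of the resulting subset, and invoke Theorem~\ref{thm-proof:typing-sound} on each unfolding. The one step you rightly flag as delicate --- that trace equivalence of every finite unfolding entails trace equivalence of the replicated processes, since each finite trace spawns only finitely many replicas matched up to $\alpha$-renaming with $\instProca{P}{i}$ --- is exactly how the paper concludes as well, though it states this almost without comment.
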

\begin{proof}
\newcommand{\Gammain}{\instG{\Gamma}{i}{n}}
\newcommand{\Gamman}{\instGG{\Gamma}{n}}

Note that since $\Gamma$ only contains keys and nonces with finite types,
for all $i$, $\instProc{P}{i}{\Gamma} = \instProc{P}{i}{\emptyset}$ is just $P$ where all variables and some names have been $\alpha$-renamed,
and similarly for $Q$. 
Since $P'$, $Q'$ only contain nonces with finite types, $\instProc{P'}{1}{\Gamma}$ and $\instProc{Q'}{1}{\Gamma}$
are $P'$, $Q'$ where all variables have been $\alpha$-renamed.

By Theorem~\ref{thm-proof:typing-n-sessions},
we know that for all $i, n$,
\[\teqP{\instGG{\Gamma}{n}}{\instD{\Delta}{n}}{\instE{\E}{n}}{\instE{\E'}{n}}
{\instProca{P}{1}\PAR \dots \PAR \instProca{P}{n}}{\instProca{Q}{1}\PAR\dots\PAR\instProca{Q}{n}}{C''}\] 
where $\instGG{\Gamma}{n}=\bigcup_{1\leq i\leq n} \instG{\Gamma}{i}{n}$,
and $C''\subseteq \UnionCart_{1\leq i \leq n}\instCst{C}{i}{n}$.

By Theorem~\ref{lem-proof:typing-process-star}, there also exists $C'''\subseteq \instCst{C'}{1}{n}$, such that
\[\teqP{\instG{\Gamma}{1}{n}}{\instD{\Delta}{n }}{\instE{\E}{n}}{\instE{\E'}{n}}
{\instProca{P'}{1}}{\instProca{Q'}{1}}{C'''}.\] 

Therefore, by Lemma~\ref{lem-proof:typing-contextinclusion}, we have
\[\teqP{\Gamman}{\Deltan}{\En}{\EEn}{\instProca{P'}{1}}{\instProca{Q'}{1}}{C''''}\] 
where $C''''$ is $C'''$ where all the environments have been extended with $\bigcup_{1\leq i \leq n}
\onlyvar{(\Gammain)}$
(note that this environment still only contains nonces and keys).

Therefore, by rules \PPar and \PNew,
\[\teqP{\Gamma'}{\instD{\Delta}{n}}{\instE{\E}{n}}{\instE{\E'}{n}}
{\NEWN{\nm}. \;(\instProca{P}{1}\PAR \dots \PAR \instProca{P}{n}) \PAR \instProca{P'}{1}}{\NEWN{\nm}. \;(\instProca{Q}{1}\PAR\dots\PAR\instProca{Q}{n})\PAR \instProca{Q'}{1}}{C''\UnionCart C''''}\]
where $\Gamma'$ is the restriction of $\Gamman$ to keys.

If $\instConst{C'}{1}{n}\UnionCart(\UCn)$ is consistent,
similarly to the reasoning in the proof of Theorem~\ref{thm-proof:typing-n-sessions},
$C''\UnionCart C''''$ also is.

Then, by Theorem~\ref{thm-proof:typing-sound},
\[\NEWN{\nm}. \;(\instProca{P}{1}\PAR \dots \PAR \instProca{P}{n}) \PAR \instProca{P'}{1} \equivTrace \NEWN{\nm}. \;(\instProca{Q}{1}\PAR\dots\PAR\instProca{Q}{n})\PAR \instProca{Q'}{1}\]
which implies (since $\instProca{P'}{1}$ is just a renaming of the variables in $P'$) that
\[\NEWN{\nm}. \;(\instProca{P}{1}\PAR \dots \PAR \instProca{P}{n}) \PAR P' \equivTrace \NEWN{\nm}. \;(\instProca{Q}{1}\PAR\dots\PAR\instProca{Q}{n})\PAR Q'\]
Since $\instProc{P}{i}{\Gamma}$ and $\instProc{Q}{i}{\Gamma}$ are just $\alpha$-renamings of $P$, $Q$,
this implies that for all $n$,
\[\NEWN{\nm}. \;(P_1\PAR \dots \PAR P_n) \PAR P' \equivTrace \NEWN{\nm}. \;(Q_1\PAR\dots\PAR Q_n)\PAR Q'\]
where $P_1 = \dots = P_n = P$, and $Q_1 = \dots = Q_n = Q$.
Therefore
\[\NEWN{\nm}. \;((! P) \PAR P') \equivTrace \NEWN{\nm}. \;((!Q)\PAR Q').\]
\end{proof}

\subsection{Checking consistency}

In this subsection, we first recall the $\checkconststar$ procedure from Section~\ref{sec:consistency-check}, in more detail, and prove its correctness in the non-replicated case.

\bigskip
\bigskip

For a constraint $c$ and an environment $\Gamma$, let
\[\stepI_\Gamma(c) := (\inst{c}{\sigma_F}{\sigma_F'}, \Gamma'),\]
where
\[F = \{x\in\dom{\Gamma} \;|\;\exists m,n,l,l'.\;\Gamma(x) = \LRTnewnew{\noncetypelab{l}{1}{m}}{\noncetypelab{l'}{1}{n}}\},
\]
$\sigma_F,\sigma_F'$ are the substitutions defined by
\begin{itemize}
\item $\dom{\sigma_F} = \dom{\sigma_F'} = F$
\item $\forall x \in F.\;\forall m,n,l,l'. \LRTnewnew{\noncetypelab{l}{1}{m}}{\noncetypelab{l'}{1}{n}} \Rightarrow \sigma_F(x) = m\;\wedge\;\sigma_F'(x) = n,$
\end{itemize}
and $\Gamma'$ is the environment obtained by
extending the restriction of $\Gamma$ to $\dom{\Gamma}\backslash F$ with $\Gamma'(n) = \noncetypelab{l}{1}{n}$
for all nonce $n$ such that $\noncetypelab{l}{1}{n}$ occurs in $\Gamma$.
This is well defined, since by assumption on the well-formedness of the processes and by definition of the processes,
a name $n$ is always associated with the same label.

\bigskip
\bigskip

Let $\splitrd$ be the reduction relation defined on couples of sets of constraints by (all variables are universally quantified)
\[
\begin{array}{r@{\quad}c@{\quad}l}
(\{\PAIR{M}{N} \eqC \PAIR{M'}{N'}\} \cup c, c') & \splitrd & (\{M\eqC M', N\eqC N'\} \cup c, c')\\
[1em]
(\{\ENC{M}{k} \eqC \ENC{M'}{k}\} \cup c, c') & \splitrd & (\{M\eqC M'\} \cup c, c')\\
\multicolumn{3}{r}{\text{if $\Gamma(k) = \skey{\L}{T}$ for some $T$}} \\
[1em]
(\{\AENC{M}{\PUBK{k}} \eqC \AENC{M'}{\PUBK{k}}\} \cup c, c') & \splitrd & (\{M\eqC M'\} \cup c, c')\\
\multicolumn{3}{r}{\text{if $\Gamma(k) = \skey{\L}{T}$ for some $T$}} \\
[1em]
(\{\SIGN{M}{k} \eqC \SIGN{M'}{k}\} \cup c, c') & \splitrd & (\{M\eqC M'\} \cup c, c')\\
\multicolumn{3}{r}{\text{if $\Gamma(k) = \skey{\L}{T}$ for some $T$}} \\
[1em]
(\{\SIGN{M}{k} \eqC \SIGN{M'}{k}\} \cup c, c') & \splitrd & (\{M\eqC M'\} \cup c, \{\SIGN{M}{k} \eqC \SIGN{M'}{k}\} \cup c')\\
\multicolumn{3}{r}{\text{if $\Gamma(k) = \skey{\S}{T}$ for some $T$}} \\
\end{array}
\]

Let then $\stepII_\Gamma(c) = c_1 \cup c_2$ where $(c_1, c_2)$ is the normal form of $(\c, \emptyset)$ for $\splitrd$.
This definition is equivalent to the one described in Section~\ref{sec:consistency-check},
but more practical for the proofs.

\bigskip
\bigskip

We define the condition $\stepIII_\Gamma(c)$ as: check that
$c$ only contains elements of the form $M \eqC N$ where $M$ and $N$ are both
\begin{itemize}
\item a key $k\in\K$ such that $\exists T. \Gamma(k)=\skey{\L}{T}$;
\item nonces $m,n\in\N$ such that $\Gamma(n) = \noncetypelab{\L}{\oneorinf}{n}\;\wedge\;\Gamma(m)=\noncetypelab{\L}{\oneorinf}{n}$,
\item or public keys, verification keys, or constants;
\item or $\ENC{M'}{k}$, $\ENC{N'}{k}$ such that $\exists T. \Gamma(k)=\skey{\L}{T}$;
\item or either hashes $\HASH{M'}$, $\HASH{N'}$ or encryptions $\AENC{M'}{\PUBK{k}}$, $\AENC{N'}{\PUBK{k}}$ with a honest key $k$, \ie such that $\exists T. \Gamma(k)=\skey{\S}{T}$; 
such that $M'$ and $N'$ contain directly under pairs a nonce $n$ such that $\Gamma(n) = \noncetypelab{\S}{a}{n}$ or a secret key $k$ such that $\exists T. \Gamma(k)=\skey{\S}{T}$;
\item or signatures $\SIGN{M'}{k}$, $\SIGN{N'}{k}$ with honest keys, such that $\exists T. \Gamma(k)=\skey{\S}{T}$;
\end{itemize}
$\stepIII_\Gamma(c)$ returns \texttt{true} if this check succeeds and \texttt{false} otherwise.

\bigskip
\bigskip

We then proceed to $\stepIV$. We define condition $\stepIV_\Gamma(c)$ as follows.
We consider all $M \eqC M' \in c$ and $N \eqC N'\in c$, such that $M$, $N$ are unifiable with a most general unifier $\mu$,
and such that
\[\forall x\in\dom{\mu}.\; \forall l, l', m, n.\; 
(\Gamma(x)=\LRTnewnew{ \noncetypelab{l}{\infty}{m}}{ \noncetypelab{l'}{\infty}{p}}) \Rightarrow (x\mu\in\X\;\vee\;\exists i.\; x\mu=m_i).
\]

We then define the substitution $\theta$, over all variables $x\in\dom{\mu}$ such that
$\Gamma(x)=\LRTnewnew{ \noncetypelab{l}{\infty}{m}}{ \noncetypelab{l'}{\infty}{p}}$
by

\[
\forall x\in\dom{\mu}.\; \forall l, l', m, p, i.\; 
\quad(\Gamma(x)=\LRTnewnew{ \noncetypelab{l}{\infty}{m}}{ \noncetypelab{l'}{\infty}{p}} \;\wedge\;\mu(x)=m_i)\Rightarrow\theta(x)=p_i
\]
and $\theta(x)=x$ otherwise.

Let then $\alpha$ be the restriction of $\mu$ to
$\{x\in\dom{\mu}\;|\; \Gamma(x)=\L \;\wedge\; \mu(x)\in\N\}$.

We then check that $M'\alpha\theta = N'\alpha\theta$.

\medskip 

Similarly, we check that the symmetric condition, when $M'$ and $N'$ are unifiable, holds for all
$M \eqC M' \in c$ and $N \eqC N'\in c$.

\medskip

If all these checks succeed, $\stepIV_\Gamma(c)$ returns $\mathtt{true}$.

\bigskip
\bigskip

Finally, $\checkconststar(C)$ is computed by considering all $(c,\Gamma)\in C$.
We let $(\c,\Gammao) = \stepI_\Gamma(c)$, and $\cc = \stepII_{\Gammao}(\c)$.
We then check that $\stepIII_{\Gammao}(\cc) = \mathtt{true}$ and $\stepIV_{\Gammao}(\cc) = \mathtt{true}$.
If this check succeeds for all $(c,\Gamma)\in C$, we say that $\checkconststar(C) = \mathtt{true}$.

\bigskip
\bigskip
\bigskip

%
%
%
%
%
%
%
%
%
%
%
%
%
%

\bigskip\bigskip

Note that we only consider constraints obtained by typing, and therefore such that
there exists $c_\phi$ such that $\teqTc{\Gamma}{\Delta}{\E}{\E'}{\phiL{c}}{\phiR{c}}{\L}{c_\phi}$.

Indeed, it is clear by induction on the typing rules for terms that:
\[\forall \Gamma, 
M, N, T, c.\quad
\teqTcGDE{M}{N}{T}{c} \Longrightarrow (\forall u\eqC v\in c.\quad \exists c'.\quad \teqTcGDE{u}{v}{\L}{c'}).
\]
From this result, and using Lemmas~\ref{lem-proof:term-branch} and~\ref{lem-proof:env-const-branch}, it follows clearly by induction on the typing rules for processes that
\[
\forall \Gamma, 
P, Q, C.\quad 
\teqPGDE{P}{Q}{C} \Longrightarrow
(\forall (c,\Gamma') \in C.\quad \forall u\eqC v\in c.\quad \exists c'.\quad
\teqTc{\Gamma'}{\Delta}{\E}{\E'}{u}{v}{\L}{c'}).
\]

\bigskip\bigskip
Let us now prove 
that the procedure is correct for constraints without infinite nonce types, \ie constraint sets $C$ such that
\[\forall (c,\Gamma)\in C.\;\forall l, l', m, n.\;
 \Gamma(x) \neq \LRTn{l}{\infty}{m}{l'}{\infty}{n}.\]

We fix such a constraint set $C$ (obtained by typing).

Let $(c, \Gamma)\in C$.
Let $(\c,\Gammao) = \stepI_\Gamma(c)$, and $\cc = \stepII_{\Gammao}(\c)$.
Let us assume that $\stepIII_{\Gammao}(\cc) = \mathtt{true}$ and $\stepIV_{\Gammao}(\cc) = \mathtt{true}$.

\bigskip

\begin{lemma}
\label{lem-proof:co-c-const}
If $\c$ is consistent in $\Gammao$, 
then $c$ is consistent in $\Gamma$.
\end{lemma}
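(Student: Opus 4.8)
The plan is to prove that the substitution performed in $\stepI$ preserves consistency, by showing that any attacker distinguishing attack on the instantiated constraint $c$ can be transformed into one on $\c$. Recall that $\stepI_\Gamma(c) = (\inst{c}{\sigma_F}{\sigma_F'}, \Gammao)$, where $F$ collects the variables carrying a finite refinement type $\LRTnewnew{\noncetypelab{l}{1}{m}}{\noncetypelab{l'}{1}{n}}$, and $\sigma_F$, $\sigma_F'$ replace each such $x$ by $m$ on the left and $n$ on the right. The key observation is that $\sigma_F$ and $\sigma_F'$ are \emph{forced}: by Lemma~\ref{lem-proof:lr-ground}, any well-typed ground substitution $\sigma$, $\sigma'$ for the variables in $F$ must send such an $x$ to exactly $m$ and $n$ respectively. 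So replacing $x$ by its only possible left/right value loses no generality.

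First I would unfold the definition of consistency of $c$ in $\Gamma$: we take arbitrary subsets $c'\subseteq c$ and $\Gamma'\subseteq\Gamma$ with $\novar{\Gamma'}=\novar{\Gamma}$ and $\var{c'}\subseteq\dom{\Gamma'}$, and arbitrary ground substitutions $\sigma$, $\sigma'$ well-typed in $\Gamma'$, and must show that $\NEWN{\EG}.(\phiEE\cup\phiL{c'}\sigma)$ and $\NEWN{\EG}.(\phiEE\cup\phiR{c'}\sigma')$ are statically equivalent. The idea is to split $\sigma$ into its behaviour on $F$ and on the remaining variables. For $x\in F$, Lemma~\ref{lem-proof:lr-ground} pins down $\sigma(x)=\sigma_F(x)$ and $\sigma'(x)=\sigma_F'(x)$; hence $\phiL{c'}\sigma = \phiL{c'\sigma_F}\,\sigma_{\mathrm{rest}}$ and symmetrically on the right, where $\sigma_{\mathrm{rest}}$ is the restriction of $\sigma$ to $\dom{\sigma}\setminus F$. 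Using the fact that $\sigma_F$, $\sigma_F'$ commute appropriately with instantiation (as in Lemma~\ref{lem-proof:cons-subset}, point~6), I would rewrite the two frames as instantiations of $\phiL{\c'}$ and $\phiR{\c'}$ by $\sigma_{\mathrm{rest}}$, for a suitable subset $\c'\subseteq\c$ corresponding to $c'$. The environments match because $\Gammao$ is obtained from $\Gamma$ by dropping the variables of $F$ (whose values are now substituted in) while keeping all nonces, keys, and the nonce declarations $\noncetypelab{l}{1}{n}$ that the substituted values require; in particular $\EG=\EGO$ and $\phiEE=\phiEEO$ since nonce and key types are untouched.

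Then I would invoke the assumed consistency of $\c$ in $\Gammao$: the substitution $\sigma_{\mathrm{rest}}$ is well-typed in the appropriate subset of $\Gammao$ (its variables are exactly those of $\Gamma'$ outside $F$, with the same types, since $\stepI$ does not change their types), so consistency of $\c$ yields static equivalence of the rewritten frames, which are precisely the frames we needed for $c$. The main obstacle I anticipate is the bookkeeping around subsets: consistency quantifies over $c'\subseteq c$ and $\Gamma'\subseteq\Gamma$, and I must check that every such pair for $c$ corresponds, under the substitution, to a legitimate pair for $\c$ — in particular that $\var{\c'}\subseteq\dom{\Gammao'}$ still holds after the variables of $F$ have been eliminated, and that $\novar{\Gammao'}=\novar{\Gammao}$ is preserved. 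This requires carefully tracking that substituting a variable of $F$ by a nonce $m$ (with $\Gammao(m)=\noncetypelab{l}{1}{m}$ already present) keeps the nonce part of the environment intact and that the frame $\phiEEO$ is unchanged. Once this correspondence is set up cleanly, the static equivalence transfers verbatim, establishing that $c$ is consistent in $\Gamma$.
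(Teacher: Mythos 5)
Your proposal is correct and follows essentially the same route as the paper's proof: both pin down the values of the variables in $F$ via Lemma~\ref{lem-proof:lr-ground}, split $\sigma$, $\sigma'$ into their $F$-part (which coincides with $\sigma_F$, $\sigma_F'$) and the rest, use the composition of instantiations (Lemma~\ref{lem-proof:cons-subset}) to rewrite the frames as instantiations of a subset $c''\subseteq\c$, and then transfer static equivalence from the assumed consistency of $\c$ in $\Gammao$ after checking the same subset and well-typedness bookkeeping. The only cosmetic difference is that you assert $\phiEE=\phiEEO$ where the paper merely uses the inclusion $\phiEEO\subseteq\phiEE$, which changes nothing in the argument.
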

\begin{proof}
Let $c'$ be a set of constraints and $\Gamma'$ be a typing environment such that
$c' \subseteq c$, $\Gamma' \subseteq \Gamma$, $\novar{\Gamma'} = \novar{\Gamma}$ and $\var{c'}\subseteq \dom{\Gamma'}$.
Let $\sigma$, $\sigma'$ be two substitutions such that $\wtc{\Delta}{\E}{\E'}{\sigma}{\sigma'}{\Gamma'}{c_\sigma}$
(for some set of constraints $c_\sigma$).

To prove the claim, we need to show that the frames $\NEWN{\E_\Gamma}.(\phiEE \cup \phiL{\inst{c'}{\sigma}{\sigma'}})$ and $\NEWN{\E_\Gamma}.(\phiEE \cup \phiR{\inst{c'}{\sigma}{\sigma'}})$ are statically equivalent.
Let $D$ denote $\dom{\onlyvar{\Gamma'}} (= \dom{\sigma} = \dom{\sigma'})$.

For all $x\in F\cap D$, by definition of $F$, there exist 
$m,n,l,l'$ such that $\Gamma(x) = \LRTnewnew{\noncetypelab{l}{1}{m}}{\noncetypelab{l'}{1}{n}}$.
Thus, by well-typedness of $\sigma$, $\sigma'$, there exists $c_x$ such that $\teqTc{\Gamma}{\Delta}{\E}{\E'}{\sigma(x)}{\sigma'(x)}{\LRTnewnew{\noncetypelab{l}{1}{m}}{\noncetypelab{l'}{1}{n}}}{c_x}$.
Hence, by Lemma~\ref{lem-proof:lr-ground}, since $\sigma$, $\sigma'$ are ground, we have $\sigma(x) = m$ and $\sigma'(x) = n$.
Therefore, $\sigma|_{D\cap F} = \sigma_F|_D$ and $\sigma'|_{D\cap F} = \sigma'_F|_D$.

Let $c''$ be the set $\inst{c'}{\sigma|_{D\cap F}}{\sigma'|_{D\cap F}}$.
By Lemma~\ref{lem-proof:cons-subset}, we have $\inst{c'}{\sigma}{\sigma'} = \inst{c''}{\sigma|_{D\backslash F}}{\sigma'|_{D\backslash F}}$.
We also have $c'' = \inst{c'}{\sigma_F|_D}{\sigma'_F|_D}$, which is equal to $\inst{c'}{\sigma_F}{\sigma'_F}$ since
$\var{c'} \subseteq D$.
Hence $c'' \subseteq \c$.

Let $\Gamma'' = \Gamma'|_{\dom{\Gamma'}\backslash F}$. We have $\Gamma'' \subseteq \Gammao$.

Moreover, since $\wtc{\Delta}{\E}{\E'}{\sigma}{\sigma'}{\Gamma'}{c_\sigma}$,
it is clear from the definition of well-typedness for substitutions that we also have $\wtc{\Delta}{\E}{\E'}{\sigma|_{D\backslash F}}{\sigma'|_{D\backslash F}}{\Gamma''}{c'_\sigma}$ for some $c'_\sigma$.
Finally, $\var{c''} \subseteq \var{c'}\backslash F$ by definition of instantiation, thus $\var{c''}\subseteq \dom{\Gamma''}$.

We have established that $c''\subseteq \c$, $\Gamma''\subseteq \Gammao$, $\var{c''}\subseteq \dom{\Gamma''}$,
and $\wtc{\Delta}{\E}{\E'}{\sigma|_{D\backslash F}}{\sigma'|_{D\backslash F}}{\Gamma''}{c'_\sigma}$.
Therefore, by definition of the consistency of $\c$ in $\Gammao$, 
the frames
$\NEWN{\E_{\Gammao}}.(\phiEEO \cup \phiL{\inst{c''}{\sigma|_{D\backslash F}}{\sigma'|_{D\backslash F}}})$
and $\NEWN{\E_{\Gammao}}.(\phiEEO \cup \phiR{\inst{c''}{\sigma|_{D\backslash F}}{\sigma'|_{D\backslash F}}})$
are statically equivalent.

Since $\phiEEO \subseteq \phiEE$, that is to say that
$\NEWN{\E_\Gamma}.(\phiEE \cup \phiL{\inst{c'}{\sigma}{\sigma'}})$ and $\NEWN{\E_\Gamma}.(\phiEE \cup \phiR{\inst{c'}{\sigma}{\sigma'}})$
are statically equivalent.
This proves the consistency of $c$ in $\Gamma$.
\end{proof}

\begin{lemma}
\label{lem-proof:split-recipe}
If $(c_1, c_2) \splitrd^* (c_1', c_2')$ then for all $x\in \dom{\phiL{c_1'\cup c_2'}}$ there exists a recipe $R$ such that
\begin{itemize}
\item $\var{R}\subseteq \dom{\phiEE\cup\phiL{c_1\cup c_2}}$
\item $\phiL{c_1'\cup c_2'}(x) = \eval{R(\phiEE\cup\phiL{c_1\cup c_2})}$
\item $\phiR{c_1'\cup c_2'}(x) = \eval{R(\phiEE\cup\phiR{c_1\cup c_2})}$.
\end{itemize}

Conversely,
if $(c_1, c_2) \splitrd^* (c_1', c_2')$ then for all $x\in \dom{\phiL{c_1\cup c_2}}$ there exists a recipe $R$ without
destructors, \ie in which $\DECNA$, $\ADECNA$, $\CHECKNA$, $\FSTNA$, $\SNDNA$ 
do not appear,
such that
\begin{itemize}
\item $\var{R}\subseteq \dom{\phiEE\cup\phiL{c_1'\cup c_2'}}$
\item $\phiL{c_1\cup c_2}(x) = R(\phiEE\cup\phiL{c_1'\cup c_2'})$
\item $\phiR{c_1\cup c_2}(x) = R(\phiEE\cup\phiR{c_1'\cup c_2'})$.
\end{itemize}
\end{lemma}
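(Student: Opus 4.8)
The plan is to prove both statements by induction on the length $n$ of the reduction sequence $(c_1,c_2) \splitrd^* (c_1',c_2')$, after first isolating the single-step case. In the base case $n=0$ we have $(c_1',c_2') = (c_1,c_2)$, and for each position $x$ the recipe $R = x$ works trivially: it is destructor-free, its variables lie in the relevant domain, and $\eval{x(\phiEE \cup \phiL{c_1\cup c_2})} = \phiL{c_1\cup c_2}(x)$, and likewise on the right. So the work is concentrated in a single-step claim together with a step that composes recipes across one $\splitrd$ step with the inductive hypothesis. Throughout, recall that a recipe accesses a key $k$, a public key $\PUBK{k}$ or a verification key $\VK{k}$ only through the corresponding entry of $\phiEE$, so ``using $k$'' below always means using the frame position of $\phiEE$ holding it.

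For the single-step claim I would proceed by case analysis on the five rules defining $\splitrd$. Write $(c_1,c_2)\splitrd(d_1,d_2)$; every position of $c_1\cup c_2$ that survives unchanged is handled, in both directions, by its own variable. For the rewritten constraint the left and right sides share the same top structure and, for encryptions and signatures, the same key, so a single symmetric recipe handles both frames at once. In the forward direction the newly created positions are recovered by a destructor applied to the old position: $\FST{\cdot}$ and $\SND{\cdot}$ for the pair rule, $\DEC{\cdot}{k}$ (resp.\ $\ADEC{\cdot}{k}$) for the symmetric (resp.\ asymmetric) low-key encryption rules, and $\CHECK{\cdot}{\VK{k}}$ for both signature rules; the decryption key $k$ sits in $\phiEE$ precisely because $\Gamma(k)=\skey{\L}{T}$ in the low-key rules, while $\VK{k}\in\phiEE$ for every $k\in\dom{\Gamma}$ covers extraction of the payload under an honest signing key. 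In the converse direction the old position is rebuilt from the new ones by a constructor: $\PAIR{\cdot}{\cdot}$ for the pair rule, and $\ENC{\cdot}{k}$, $\AENC{\cdot}{\PUBK{k}}$ or $\SIGN{\cdot}{k}$ for the low-key rules (using $k$ or $\PUBK{k}$ from $\phiEE$); for the honest-key signature rule the original signature is retained verbatim in $c_2$, so the recipe is again just its variable. All converse recipes are thus destructor-free, as required.

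Finally I would glue one step to the rest of the derivation. For the forward direction, writing $(c_1,c_2)\splitrd(d_1,d_2)\splitrd^*(c_1',c_2')$, the inductive hypothesis gives for each $x\in\dom{\phiL{c_1'\cup c_2'}}$ a recipe $R_x$ over $\dom{\phiEE\cup\phiL{d_1\cup d_2}}$, and the single-step claim gives for each $y\in\dom{\phiL{d_1\cup d_2}}$ a recipe $S_y$ over $\dom{\phiEE\cup\phiL{c_1\cup c_2}}$; substituting each $y$ in $R_x$ by $S_y$ (and leaving $\phiEE$-variables fixed) yields the sought recipe. The converse composes symmetrically, and since substituting destructor-free recipes into destructor-free recipes stays destructor-free, the converse recipes remain constructor-only. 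The main obstacle, and the only genuinely delicate point, is the compositionality step in the forward direction: one must show that $\eval{}$ commutes with recipe substitution, i.e.\ that replacing $y$ by $S_y$ and then evaluating against $\phiEE\cup\phiL{c_1\cup c_2}$ produces the same message as evaluating $R_x$ against the intermediate frame whose $y$-entry is $\eval{S_y(\phiEE\cup\phiL{c_1\cup c_2})}$. This is a standard property of evaluation on ground terms, but it must be invoked with care because intermediate destructor applications could a priori fail; here they never return $\bot$, since by construction each $\eval{S_y(\phiEE\cup\phiL{c_1\cup c_2})}$ equals the actual message $\phiL{d_1\cup d_2}(y)$. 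The converse sidesteps this entirely, as constructor-only recipes applied to frames of messages never trigger a failing evaluation.
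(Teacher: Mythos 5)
Your proposal is correct and follows essentially the same route as the paper's proof, which likewise reduces to a single $\splitrd$ step (handled by case analysis on the five rules, with destructors $\FSTNA$, $\SNDNA$, $\DECNA$, $\ADECNA$, $\CHECKNA$ forward and constructors plus the retained honest signature in $c_2$ backward) and then composes recipes across the reduction sequence. In fact you spell out details the paper leaves implicit, in particular that $\VK{k}\in\phiEE$ for honest $k$ justifies the forward direction for honest signatures, and that the evaluation--substitution commutation in the composition step is safe because the intermediate recipes evaluate to actual frame messages and never to $\bot$.
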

\begin{proof}
For both directions, 
it suffices to prove that the claim holds for all $c_1$, $c_2$, $c_1'$, $c_2'$ such that $(c_1, c_2) \splitrd (c_1', c_2')$.
Indeed, in that case we prove the result for $\splitrd^*$ by composing all the recipes.
The proof for one reduction step is 
clear by examining the cases for the reduction $\splitrd$.

\end{proof}

\begin{lemma}
\label{lem-proof:cc-co-const}
If $\cc$ is consistent in $\Gammao$, 
then $\c$ is consistent in $\Gammao$.
\end{lemma}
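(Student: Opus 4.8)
The plan is to leverage Lemma~\ref{lem-proof:split-recipe}, which records that the reduction $\splitrd$ underlying $\stepII$ relates $\c$ and $\cc$ by recipes in \emph{both} directions. Writing $(\c,\emptyset)\splitrd^*(c_1,c_2)$ for the reduction computing $\cc=c_1\cup c_2$, the lemma supplies, for each entry of $\phiL{\cc}$, a recipe $R$ that produces $\phiL{\cc}$ and $\phiR{\cc}$ from $\phiEEO\cup\phiL{\c}$ and $\phiEEO\cup\phiR{\c}$ respectively; and conversely, for each entry of $\phiL{\c}$, a \emph{destructor-free} recipe recovering it (and the matching right entry) from $\phiEEO\cup\phiL{\cc}$ (resp. $\phiEEO\cup\phiR{\cc}$). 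The decisive point is that the \emph{same} recipe works on the left and right frames, which is exactly what is needed to transport static equivalence across the two constraint sets.

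First I would unfold the definition of consistency: fix $c'\subseteq\c$, an environment $\Gamma'$ with $\novar{\Gamma'}=\novar{\Gammao}$ and $\var{c'}\subseteq\dom{\Gamma'}$, and ground substitutions $\sigma,\sigma'$ well-typed in $\Gamma'$; the goal is static equivalence of $\NEWN{\EGO}.(\phiEEO\cup\phiL{c'}\sigma)$ and $\NEWN{\EGO}.(\phiEEO\cup\phiR{c'}\sigma')$. Since $\splitrd$ acts on one constraint at a time, independently of the rest, reducing $c'$ yields some $\cc'\subseteq\cc$ standing in the same recipe correspondence with $c'$. Instantiating the recipes of Lemma~\ref{lem-proof:split-recipe} by $\sigma$ on the left and $\sigma'$ on the right (using that $\phiEEO$ is ground and that recipe application commutes with the message substitutions $\sigma,\sigma'$), each entry of $\phiL{\cc'}\sigma$ becomes deducible from $\phiEEO\cup\phiL{c'}\sigma$, and $\phiR{\cc'}\sigma'$ from $\phiEEO\cup\phiR{c'}\sigma'$, by a common recipe; and symmetrically each entry of $\phiL{c'}\sigma$ is recovered from $\phiEEO\cup\phiL{\cc'}\sigma$ by a common destructor-free recipe.

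These two correspondences make the $c'$-frames and the $\cc'$-frames interchangeable for static equivalence, by a standard saturation argument: any test $R_1\stackrel{?}{=}R_2$ on $\phiEEO\cup\phiL{c'}\sigma$ can be rewritten, by substituting the destructor-free recipes for the frame variables pointing into $\phiL{c'}\sigma$, into a test over $\phiEEO\cup\phiL{\cc'}\sigma$ of the same truth value, the rewriting being sound simultaneously on the right because the recipes coincide. Since $\cc'\subseteq\cc$, consistency of $\cc$ yields, through Lemma~\ref{lem-proof:cons-subset}, static equivalence of the $\cc'$-frames, so the rewritten test has the same outcome on the right; rewriting back with the forward recipes transfers the outcome to $\phiEEO\cup\phiR{c'}\sigma'$. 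Hence the $c'$-frames are statically equivalent, which establishes consistency of $\c$ in $\Gammao$.

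The main obstacle will be the bidirectional recipe transfer \emph{at the level of instantiated frames}, rather than the syntactic constraints on which $\stepII$ operates: one must check that the destructor-free recipes survive instantiation as genuine syntactic equalities (so that equality tests, not merely their evaluations, are preserved), and handle the honest-signature case, where $\stepII$ simultaneously keeps $\SIGN{M}{k}\eqC\SIGN{M'}{k}$ in $c_2$ and its opened payload $M\eqC M'$ in $c_1$. There the payload must be recovered \emph{forward} by $\CHECK{\cdot}{\VK{k}}$ using the public verification key in $\phiEEO$, while \emph{backward} the retained signature makes the payload available without destructors. Both directions are exactly what Lemma~\ref{lem-proof:split-recipe} provides, so the argument goes through once the instantiation bookkeeping is carried out.
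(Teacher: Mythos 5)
Your proof is correct and follows essentially the same route as the paper's: both unfold the definition of consistency and transfer static equivalence from the $\cc$-frames to the $\c$-frames via the recipe correspondence of Lemma~\ref{lem-proof:split-recipe}, whose decisive feature is that the same (destructor-free) recipe works on the left and on the right. If anything, your version is slightly more careful than the paper's three-line argument, since you restrict to the reduct $\cc'\subseteq\cc$ of the chosen subset $c'$ (invoking Lemma~\ref{lem-proof:cons-subset}) and explicitly check that the recipes survive instantiation by $\sigma,\sigma'$ as syntactic equalities, steps the paper leaves implicit.
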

\begin{proof}
This follows directly from Lemma~\ref{lem-proof:split-recipe}.

Let $c'$ be a set of constraints and $\Gamma'$ be a typing environment such that
$c' \subseteq \c$, $\Gamma' \subseteq \Gammao$, $\novar{\Gamma'} = \novar{\Gammao}$ and $\var{c'}\subseteq \dom{\Gamma'}$.
Let $\sigma$, $\sigma'$ be two substitutions such that $\wtc{\Delta}{\E}{\E'}{\sigma}{\sigma'}{\Gamma'}{c_\sigma}$
(for some set of constraints $c_\sigma$).

To prove the claim, we need to show that the frames $\NEWN{\EGO}.(\phiEEO \cup \phiL{\inst{c'}{\sigma}{\sigma'}})$ and $\NEWN{\EGO}.(\phiEEO \cup \phiR{\inst{c'}{\sigma}{\sigma'}})$ are statically equivalent.

Since $\cc$ is consistent in $\Gammao$, we know that
the frames $\NEWN{\EGO}.(\phiEEO \cup \phiL{\inst{\cc}{\sigma}{\sigma'}})$ and $\NEWN{\EGO}.(\phiEEO \cup \phiR{\inst{\cc}{\sigma}{\sigma'}})$ are statically equivalent.

By Lemma~\ref{lem-proof:split-recipe}, the frames $\NEWN{\EGO}.(\phiEEO \cup \phiL{\inst{c'}{\sigma}{\sigma'}})$ and $\NEWN{\EGO}.(\phiEEO \cup \phiR{\inst{c'}{\sigma}{\sigma'}})$ can be written as a recipe on
the frames
$\NEWN{\EGO}.(\phiEEO \cup \phiL{\inst{\cc}{\sigma}{\sigma'}})$ and $\NEWN{\EGO}.(\phiEEO \cup \phiR{\inst{\cc}{\sigma}{\sigma'}})$.

Therefore, they are also statically equivalent, which proves the claim.
\end{proof}

%
%

\begin{lemma}
\label{lem-proof:co-low}
There exists $c_\phi$ such that $\teqTc{\Gammao}{\Delta}{\E}{\E'}{\phiEE \cup \phiL{\c}}{\phiEE\cup \phiR{\c}}{\L}{c_\phi}$.
\end{lemma}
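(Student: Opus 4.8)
The plan is to obtain the required typing of the instantiated frames by pushing the typing of the original frames $\phiL{c},\phiR{c}$ through the substitutions $\sigma_F,\sigma_F'$ introduced by $\stepI$, and then to adjoin the public material $\phiEE$ separately. First I would record the starting point: since $(c,\Gamma)$ is a constraint produced by typing, the observation stated just before this lemma gives, for every pair $u\eqC v\in c$, a set of constraints $c'$ with $\teqTc{\Gamma}{\Delta}{\E}{\E'}{u}{v}{\L}{c'}$. As $\phiL{c}$ and $\phiR{c}$ are precisely the frames of the left and right members of $c$, collecting these derivations componentwise yields some $c_\phi'$ with $\teqTc{\Gamma}{\Delta}{\E}{\E'}{\phiL{c}}{\phiR{c}}{\L}{c_\phi'}$, in the sense of the definition of $\L$-typed substitutions.

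Next I would split the environment to match the hypotheses of Lemma~\ref{lem-proof:subst-typing}. Write $\onlyvar\Gamma = \Gamma_F \cup \Gamma_V$, where $\Gamma_F$ is the restriction of $\onlyvar\Gamma$ to $F$ and $\Gamma_V$ its restriction to the remaining variables. By the very definition of $F$, no variable in $\dom{\Gamma_V}$ carries a refinement type $\LRTnewnew{\noncetypelab{l}{1}{m}}{\noncetypelab{l'}{1}{n}}$, which is exactly the side condition the lemma requires. Take $\Gamma'' := \novar{\Gammao}$; by the construction of $\stepI$ this is $\novar\Gamma$ enriched with a singleton type $\noncetypelab{l}{1}{n}$ for each such type occurring in $\Gamma$, it is a well-formed extension of $\novar\Gamma$ obtained by \GNonce steps, and $\Gamma_V \cup \Gamma'' = \Gammao$. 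The substitutions $\sigma_F,\sigma_F'$ are ground, and for each $x\in F$ with $\Gamma(x)=\LRTnewnew{\noncetypelab{l}{1}{m}}{\noncetypelab{l'}{1}{n}}$ rule \TLRone, applied in $\Gamma''$ (which supplies the nonce types of $m$ and $n$, or treats them as free names/constants of label $\L$), derives $\teqTc{\Gamma''}{\Delta}{\E}{\E'}{m}{n}{\LRTnewnew{\noncetypelab{l}{1}{m}}{\noncetypelab{l'}{1}{n}}}{\emptyset}$; hence $\sigma_F,\sigma_F'$ are well-typed in $\Gamma_F\cup\Gamma''$ with empty constraint. After first extending the derivation of the previous paragraph from $\Gamma$ to $\onlyvar\Gamma\cup\Gamma''$ by Lemma~\ref{lem-proof:typing-contextinclusion} (the genuinely new nonces being disjoint from $\dom\Gamma$), Lemma~\ref{lem-proof:subst-typing} yields componentwise a derivation $\teqTc{\Gammao}{\Delta}{\E}{\E'}{\phiL{c}\sigma_F}{\phiR{c}\sigma_F'}{\L}{c_\phi^0}$; since $\phiL{\c}=\phiL{c}\sigma_F$ and $\phiR{\c}=\phiR{c}\sigma_F'$ by definition of $\c=\inst{c}{\sigma_F}{\sigma_F'}$, this is the desired typing of the constraint part.

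Finally I would type the public frame: every entry of $\phiEE$ is a low nonce (\TNonceL), a low key (\TKey followed by \TSub through \SKey), or a public or verification key (\TPubkey, \TVkey), each with domain symbol still in $\dom\Gammao$, whence $\teqTc{\Gammao}{\Delta}{\E}{\E'}{\phiEE}{\phiEE}{\L}{\emptyset}$. Because $\phiEE$ and $\phiL{\c}$ (respectively $\phiR{\c}$) have disjoint domains, the union of two $\L$-typed substitutions is again $\L$-typed with the union of their constraints, so $\teqTc{\Gammao}{\Delta}{\E}{\E'}{\phiEE\cup\phiL{\c}}{\phiEE\cup\phiR{\c}}{\L}{c_\phi}$ with $c_\phi=c_\phi^0$, which is the statement. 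The main obstacle is the bookkeeping in the middle step: checking that the decomposition $\Gamma_F\cup\Gamma_V\cup\novar{\Gammao}$ exactly meets all four hypotheses of Lemma~\ref{lem-proof:subst-typing} — in particular that $\Gammao$ coincides with $\Gamma_V\cup\novar{\Gammao}$ and that the singleton nonce types added by $\stepI$ are what make $\sigma_F,\sigma_F'$ well-typed — rather than any conceptually hard argument.
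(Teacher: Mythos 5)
Your proposal is correct and follows essentially the same route as the paper's proof: the same starting observation that constraints obtained by typing give $\teqTc{\Gamma}{\Delta}{\E}{\E'}{\phiL{c}}{\phiR{c}}{\L}{c'_\phi}$, the same combination of Lemma~\ref{lem-proof:typing-contextinclusion} and Lemma~\ref{lem-proof:subst-typing} to push this typing through $\sigma_F,\sigma_F'$ into $\Gammao$, and the same direct typing of $\phiEE$ by the rules for low nonces and keys. In fact your middle step is more explicit than the paper's, which merely asserts the decomposition $\Gamma = \Gamma_F \uplus \Gammao'$ and the well-typedness of $\sigma_F,\sigma'_F$, whereas you verify the hypotheses of Lemma~\ref{lem-proof:subst-typing} (including that $\Gamma_V \cup \novar{\Gammao} = \Gammao$ and that \TLRone{} with the nonce entries added by $\stepI$ makes the substitutions well-typed with empty constraint) in detail.
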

\begin{proof}
As explained previously, 
there exists $c'_\phi$ such that $\teqTc{\Gamma}{\Delta}{\E}{\E'}{\phiL{c}}{\phiR{c}}{\L}{c'_\phi}$.

Moreover, we have by definition $\Gamma = \Gamma_F \uplus \Gammao'$, where $F$ is defined as in $\stepI$
and $\Gamma_F$ is the restriction of $\Gamma$ to $F$, and for some $\Gammao'\subseteq\Gammao$.
In addition $\wtc{\Delta}{\E}{\E'}{\sigma_F}{\sigma'_F}{\Gamma_F}{c'}$ for some $c'$.
By definition of $F$, and since the refinement types in $\Gamma$ only contain ground terms by assumption,
we also know that $\Gammao$ does not contain refinement types.
Hence, by Lemma~\ref{lem-proof:subst-typing}, and Lemma~\ref{lem-proof:typing-contextinclusion},
there exists $c_\phi$
such that $\teqTc{\Gammao}{\Delta}{\E}{\E'}{\phiL{c}\sigma_F}{\phiR{c}\sigma'_F}{\L}{c_\phi}$.
Since $\c = \inst{c}{\sigma_F}{\sigma'_F}$, this proves that
$\teqTc{\Gammao}{\Delta}{\E}{\E'}{\phiL{\c}}{\phiR{\c}}{\L}{c_\phi}$.
Besides, it is clear from the definition of $\phiEE$ and rules \TCst, \TNonceL, \TKey, \TPubkey, \TVkey that
$\teqTc{\Gammao}{\Delta}{\E}{\E'}{\phiEE}{\phiEE}{\L}{\emptyset}$.

These two results prove the lemma.
\end{proof}

\begin{lemma}
\label{lem-proof:cc-low}
There exists $c_\phi$ such that
$\teqTc{\Gammao}{\Delta}{\E}{\E'}{\phiEE \cup \phiL{\cc}}{\phiEE\cup \phiR{\cc}}{\L}{c_\phi}$.
\end{lemma}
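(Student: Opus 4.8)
The plan is to obtain this lemma directly from the preceding Lemma~\ref{lem-proof:co-low}, which already establishes the $\L$-typing of the frames built from $\c$, together with the observation that $\stepII$ only ever \emph{decomposes} terms by applying destructors (projections, decryption with low keys, signature extraction). Consequently every component of $\cc$ is an attacker recipe evaluated on the frames of $\c$, and $\L$-typing is transported along such recipes by Lemma~\ref{lem-proof:l-type-recipe}.

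First I would unfold the definition of $\stepII$: by construction $\cc = c_1 \cup c_2$, where $(c_1,c_2)$ is the normal form of $(\c,\emptyset)$ under $\splitrd$, so that $(\c,\emptyset) \splitrd^* (c_1,c_2)$. Applying the first direction of Lemma~\ref{lem-proof:split-recipe} with $(c_1,c_2) := (\c,\emptyset)$ and $(c_1',c_2') := (c_1,c_2)$, I get for every $x \in \dom{\phiL{\cc}}$ a recipe $R_x$ with $\var{R_x} \subseteq \dom{\phiEE \cup \phiL{\c}}$ such that $\phiL{\cc}(x) = \eval{R_x(\phiEE \cup \phiL{\c})}$ and $\phiR{\cc}(x) = \eval{R_x(\phiEE \cup \phiR{\c})}$.

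Next I would invoke Lemma~\ref{lem-proof:co-low} to obtain some $c_\phi^0$ with $\teqTc{\Gammao}{\Delta}{\E}{\E'}{\phiEE \cup \phiL{\c}}{\phiEE \cup \phiR{\c}}{\L}{c_\phi^0}$. Since $\phiL{\c}$ and $\phiR{\c}$ have the same domain and $\phiEE$ is common to both sides, the frames $\NEWN{\EGO}.(\phiEE \cup \phiL{\c})$ and $\NEWN{\EGO}.(\phiEE \cup \phiR{\c})$ have equal domains, so Lemma~\ref{lem-proof:l-type-recipe} applies to each $R_x$. It yields, for every $x$, either a set $c_x \subseteq c_\phi^0$ with $\teqTc{\Gammao}{\Delta}{\E}{\E'}{\eval{R_x(\phiEE \cup \phiL{\c})}}{\eval{R_x(\phiEE \cup \phiR{\c})}}{\L}{c_x}$, or that both evaluations equal $\bot$; the latter is impossible since $\eval{R_x(\phiEE \cup \phiL{\c})} = \phiL{\cc}(x)$ is an actual frame entry. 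Rewriting with the equalities above gives $\teqTc{\Gammao}{\Delta}{\E}{\E'}{\phiL{\cc}(x)}{\phiR{\cc}(x)}{\L}{c_x}$ for each $x$, and collecting these produces the componentwise $\L$-typing of $\phiL{\cc}$, $\phiR{\cc}$ with constraint set $\bigcup_x c_x$. Finally, exactly as noted in the proof of Lemma~\ref{lem-proof:co-low}, the common part $\phiEE$ types at $\L$ with no constraint via rules \TCst, \TNonceL, \TKey, \TPubkey, \TVkey; combining the two typings yields the desired $c_\phi$.

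The main obstacle is bookkeeping rather than a genuinely new idea: one must verify that the recipes supplied by Lemma~\ref{lem-proof:split-recipe} have all their variables in the domain of the frames of $\c$ (so that Lemma~\ref{lem-proof:l-type-recipe} is applicable) and that the relevant evaluations never collapse to $\bot$. Both points are immediate from the split-recipe representation, as each $\phiL{\cc}(x)$ is by definition a genuine message occurring in the opened constraint, so once they are in place the conclusion follows mechanically.
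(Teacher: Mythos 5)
Your proof is correct and follows essentially the same route as the paper's own: both start from Lemma~\ref{lem-proof:co-low}, represent each entry of $\phiL{\cc}$, $\phiR{\cc}$ as a recipe evaluated on the frames of $\c$ via Lemma~\ref{lem-proof:split-recipe}, transport the $\L$-typing through Lemma~\ref{lem-proof:l-type-recipe}, and conclude with the trivial typing of $\phiEE$ by rules \TCst, \TNonceL, \TKey, \TPubkey, \TVkey. Your explicit discharge of the $\bot$-alternative of Lemma~\ref{lem-proof:l-type-recipe} (because each $\phiL{\cc}(x)$ is an actual message in the constraint) is a bookkeeping detail the paper leaves implicit, but it is the same argument.
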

\begin{proof}
By Lemma~\ref{lem-proof:co-low}, 
there exists $c'_\phi$ such that $\teqTc{\Gammao}{\Delta}{\E}{\E'}{\phiEEO\cup\phiL{\c}}{\phiEEO\cup\phiR{\c}}{\L}{c'_\phi}$.

Moreover, by definition, there exist $c_1$, $c_2$ such that $(\c, \emptyset) \splitrd^* (c_1, c_2)$
and $\cc = c_1 \cup c_2$.
Hence, we know by Lemma~\ref{lem-proof:split-recipe} that for all $x\in \dom{\phiL{\cc}} (=\dom{\phiR{\cc}})$ there exists a 
recipe $R$ such that
$\names{R} = \emptyset$,
$\phiL{\cc}(x) = \eval{R(\phiEEO\cup\phiL{\c})}$ and 
$\phiR{\cc}(x) = \eval{R(\phiEEO\cup\phiR{\c})}$.
Thus, by Lemma~\ref{lem-proof:l-type-recipe}, there exists $c_\phi$ such that
$\teqTc{\Gammao}{\Delta}{\E}{\E'}{\phiL{\cc}}{\phiR{\cc}}{\L}{c_\phi}$.
Besides, it is clear from the definition of $\phiEEO$ and rules \TCst, \TNonceL, \TKey, \TPubkey, \TVkey that
$\teqTc{\Gammao}{\Delta}{\E}{\E'}{\phiEEO}{\phiEEO}{\L}{\emptyset}$.

These two results prove the lemma.
\end{proof}

\bigskip\bigskip

We now assume that $\cc$ satisfies the condition $\stepIII_{\Gammao}(\cc)$.

Note that we write $\splitrd$ for $\splitr{\Gammao}$ as these relations are equal.

\begin{lemma}
\label{lem-proof:split-sign-recipe}
If $(\c, \emptyset) \splitrd^* (c_1, c_2)$ and $\SIGN{M}{k} \eqC \SIGN{N}{k'} \in c_2$ then
there exists a recipe $R$ without
destructors, \ie in which $\DECNA$, $\ADECNA$, $\CHECKNA$, $\FSTNA$, $\SNDNA$ 
do not appear,
such that
\begin{itemize}
\item $\var{R}\subseteq \dom{\phiEEO\cup\phiL{c_1\cup c_2}}$
\item $M = R(\phiEEO\cup\phiL{c_1\cup c_2})$
\item $N = R(\phiEEO\cup\phiR{c_1\cup c_2})$.
\end{itemize}
\end{lemma}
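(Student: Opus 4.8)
The plan is to exploit the fact that the accumulator $c_2$ grows monotonically: starting from $\emptyset$, it is only ever modified by the last rule of $\splitrd$, which moves a secret-key signature into the second component while simultaneously adding its opened content to the first component, and no rule ever removes an element from the second component. First I would argue, by an easy induction on the length of the reduction $(\c,\emptyset)\splitrd^* (c_1,c_2)$, that every element of the final $c_2$ is introduced at some reduction step. In particular there exist constraint sets $d$, $e$ such that the reduction factors as
\[
(\c, \emptyset) \splitrd^* (\{\SIGN{M}{k}\eqC\SIGN{N}{k}\}\cup d,\; e) \splitrd (\{M\eqC N\}\cup d,\; \{\SIGN{M}{k}\eqC\SIGN{N}{k}\}\cup e) \splitrd^* (c_1, c_2),
\]
where the middle step is precisely the application of the secret-key signature rule that inserts $\SIGN{M}{k}\eqC\SIGN{N}{k}$ into the accumulator. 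Note here that $k=k'$, since that rule only ever produces signatures carrying the same key on both sides.

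Next I would observe that immediately after the middle step, the content $M$ of the signature occurs as the left side of the constraint $M \eqC N$ in the first component. Since $\phiL{\cdot}$ and $\phiR{\cdot}$ list the left and right sides of the constraints in the same order, there is an index $x_0$ in the domain of $\phiL{(\{M\eqC N\}\cup d)\cup(\{\SIGN{M}{k}\eqC\SIGN{N}{k}\}\cup e)}$ whose left value is $M$ and whose right value is $N$.

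The decisive step is then a direct appeal to the converse direction of Lemma~\ref{lem-proof:split-recipe}, applied to the trailing reduction
\[
(\{M\eqC N\}\cup d,\; \{\SIGN{M}{k}\eqC\SIGN{N}{k}\}\cup e) \splitrd^* (c_1, c_2).
\]
That direction provides, for the index $x_0$ above, a destructor-free recipe $R$ with $\var{R}\subseteq \dom{\phiEEO\cup\phiL{c_1\cup c_2}}$ and such that the left value at $x_0$ equals $R(\phiEEO\cup\phiL{c_1\cup c_2})$ while the right value equals $R(\phiEEO\cup\phiR{c_1\cup c_2})$. Since those values are exactly $M$ and $N$, we obtain $M = R(\phiEEO\cup\phiL{c_1\cup c_2})$ and $N = R(\phiEEO\cup\phiR{c_1\cup c_2})$, which is the claim; the destructor-freeness of $R$ and the bound on its variables are inherited directly from the converse direction of Lemma~\ref{lem-proof:split-recipe}.

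The only genuinely delicate point is the bookkeeping of the first paragraph: one must argue that the second component never loses elements and hence that each secret-key signature in the final $c_2$ does correspond to a factorisation of the reduction as above, with its opened content $M \eqC N$ sitting in the first component right after insertion. Everything afterwards is a mechanical instantiation of the already-established Lemma~\ref{lem-proof:split-recipe}, so I expect no further difficulty; the induction tracking which step first places the given signature into the accumulator is the main, but routine, obstacle.
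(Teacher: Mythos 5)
Your proof is correct, and it rests on the same two pillars as the paper's own argument: the observation that the second component of the pair is only ever extended by the secret-key signature rule of $\splitrd$ (which simultaneously deposits the opened content $M \eqC N$ in the first component and forces $k=k'$), and the converse (second) part of Lemma~\ref{lem-proof:split-recipe}. The organization differs, though. The paper proceeds by induction on the length of the reduction, peeling off the \emph{last} step: if that step is not the insertion of the given signature, it applies the induction hypothesis to obtain a recipe over the frames of the penultimate state and then transports it across the final step via the second part of Lemma~\ref{lem-proof:split-recipe}, composing recipes; if it is the insertion, then $M \eqC N$ already lies in the final first component and the recipe is simply the corresponding frame variable. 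You instead factorize the reduction at the insertion step and apply the $\splitrd^*$ form of the converse lemma \emph{once} to the trailing suffix, which renders both the induction and the per-step recipe composition unnecessary, since that lemma is already stated for the reflexive-transitive closure. The two arguments are equivalent in content — the monotonicity of the accumulator and the factorisation that you flag as the "routine obstacle" are exactly what the paper's backward induction encodes implicitly — but your version is arguably tidier, trading an induction with a two-way case split at every step for a single well-chosen invocation of the multi-step lemma.
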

\begin{proof}
We prove this property by induction on the length of the reduction.
It trivially holds if no reduction step is performed since in that case $c_2 = \emptyset$.
Otherwise there exist $c_1'$, $c_2'$ such that
$(\c, \emptyset) \splitrd^* (c_1', c_2') \splitrd (c_1, c_2)$.

If $(c_1', c_2') \splitrd (c_1, c_2)$ is any case except the honest signature case,
we have $c_2 = c_2'$.
Thus if $\SIGN{M}{k} \eqC \SIGN{N}{k'} \in c_2$, then by the induction hypothesis
there exists $R'$ without destructors such that
\begin{itemize}
\item $\var{R'}\subseteq \dom{\phiEEO\cup\phiL{c_1'\cup c_2'}}$
\item $M = R'(\phiEEO\cup\phiL{c_1'\cup c_2'})$
\item $N = R'(\phiEEO\cup\phiR{c_1'\cup c_2'})$.
\end{itemize}
We then prove the claim by applying (the second part of) Lemma~\ref{lem-proof:split-recipe} 
and composing the recipes.

If $(c_1', c_2') \splitrd (c_1, c_2)$ corresponds to the honest signature case,
we have $c_1' = c_1'' \cup \{\SIGN{M'}{k''} \eqC \SIGN{N'}{k''}\}$, $c_1 = c_1'' \cup \{M' \eqC N'\}$, and
$c_2 = c_2' \cup \{\SIGN{M'}{k''} \eqC \SIGN{N'}{k''}\}$
for some $c_1''$, $M'$, $N'$, $k''$, $T$ such that $\Gamma(k') = \skey{\S}{T}$.
If $(\SIGN{M'}{k''}, \SIGN{N'}{k''})\neq (\SIGN{M}{k}, \SIGN{N}{k'})$, then the same proof as in the previous case shows the claim.
Otherwise, $M \eqC N \in c_1$, and therefore the claim trivially holds.
\end{proof}

\begin{lemma}
\label{lem-proof:cc-sign-recipe}
If $\SIGN{M}{k} \eqC \SIGN{N}{k'} \in \cc$ then
there exists a recipe $R$ without
destructors, \ie in which $\DECNA$, $\ADECNA$, $\CHECKNA$, $\FSTNA$, $\SNDNA$ 
do not appear,
such that
\begin{itemize}
\item $\names{R} = \emptyset$,
\item $\var{R}\subseteq \dom{\phiEEO\cup\phiL{\cc}}$
\item $M = R(\phiEEO\cup\phiL{\cc})$
\item $N = R(\phiEEO\cup\phiR{\cc})$.
\end{itemize}
\end{lemma}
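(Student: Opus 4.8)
The plan is to obtain this lemma as an almost immediate corollary of Lemma~\ref{lem-proof:split-sign-recipe}, which already manufactures a destructor-free recipe recovering the two sides of a signature that has been collected in the right-hand component of a $\splitrd$-normal form. Two points remain to be bridged: first, that the constraint $\SIGN{M}{k}\eqC\SIGN{N}{k'}$ really lies in the collected component $c_2$ of the normal form (so that Lemma~\ref{lem-proof:split-sign-recipe} is applicable), and second, that the recipe it yields additionally satisfies $\names{R}=\emptyset$, which is the one clause not present in the statement of Lemma~\ref{lem-proof:split-sign-recipe}.

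First I would unfold the definition of $\cc$: by construction $\cc=c_1\cup c_2$, where $(c_1,c_2)$ is the normal form of $(\c,\emptyset)$ for $\splitrd$, so in particular $(\c,\emptyset)\splitrd^*(c_1,c_2)$. To locate the signature constraint I use the standing assumption that $\stepIII_{\Gammao}(\cc)$ holds. By the very definition of $\stepIII$, every signature constraint occurring in $\cc$ has the form $\SIGN{M'}{k}\eqC\SIGN{N'}{k}$ with $k$ honest, \ie $\Gammao(k)=\skey{\S}{T}$ for some $T$; in particular $k=k'$. It then suffices to rule out that such a constraint survives in the working component $c_1$: inspecting the rules defining $\splitrd$, any constraint $\SIGN{M'}{k}\eqC\SIGN{N'}{k}$ is reducible — it is opened to $M'\eqC N'$ when $\Gammao(k)=\skey{\L}{T}$, and moved into the collected component when $\Gammao(k)=\skey{\S}{T}$ — so a component $c_1$ in normal form contains no same-key signature. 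Hence $\SIGN{M}{k}\eqC\SIGN{N}{k}\in c_2$.

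With this in hand I apply Lemma~\ref{lem-proof:split-sign-recipe} to the reduction $(\c,\emptyset)\splitrd^*(c_1,c_2)$, obtaining a destructor-free recipe $R$ with $\var{R}\subseteq\dom{\phiEEO\cup\phiL{c_1\cup c_2}}$, $M=R(\phiEEO\cup\phiL{c_1\cup c_2})$ and $N=R(\phiEEO\cup\phiR{c_1\cup c_2})$; since $\cc=c_1\cup c_2$, this gives exactly the variable, evaluation, and destructor-freeness conditions. For the remaining clause $\names{R}=\emptyset$ I would argue that the recipes built in Lemmas~\ref{lem-proof:split-recipe} and~\ref{lem-proof:split-sign-recipe} are assembled solely from frame variables and applications of constructors, with keys accessed only through their positions in $\phiEEO$ rather than written literally; in the decisive honest-signature step the content $M'\eqC N'$ was pushed into the working component, so the recipe for it is simply the corresponding frame variable, and in the other steps reconstruction of an encryption or signature uses the $\L$-key available as a frame position. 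None of these operations introduces a name, so $\names{R}=\emptyset$ follows by a routine induction on the length of the reduction.

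The main obstacle is precisely this last bookkeeping: $\names{R}=\emptyset$ is not part of Lemma~\ref{lem-proof:split-sign-recipe} as stated, so the cleanest route is to strengthen that lemma (and, if necessary, the converse part of Lemma~\ref{lem-proof:split-recipe}) with the extra clause $\names{R}=\emptyset$, so that it propagates automatically here; alternatively one verifies directly that every atom occurring in the composed recipe is either a frame variable or a key position supplied by $\phiEEO$. Everything else — unfolding $\cc$, invoking $\stepIII$ to fix the key, and the normal-form irreducibility argument placing the constraint in $c_2$ — is straightforward once the rules of $\splitrd$ are displayed.
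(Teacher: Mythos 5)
Your proof is correct and follows essentially the same route as the paper's: the paper likewise applies Lemma~\ref{lem-proof:split-sign-recipe} to the reduction $(\c,\emptyset)\splitrd^*(c_1,c_2)$ with $\cc=c_1\cup c_2$, observing that the signature constraint must lie in $c_2$ because otherwise a further $\splitrd$ step would apply. You are in fact somewhat more careful than the paper's terse proof, which silently relies on $\stepIII_{\Gammao}(\cc)$ to guarantee $k=k'$ with an honest key (without which a cross-key signature could sit irreducibly in $c_1$) and never explicitly discharges the extra clause $\names{R}=\emptyset$, which, as you note, holds because the recipes of Lemmas~\ref{lem-proof:split-recipe} and~\ref{lem-proof:split-sign-recipe} are assembled solely from frame variables and constructors.
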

\begin{proof}
This property directly follows from Lemma~\ref{lem-proof:split-sign-recipe}, applied to $(\c, \emptyset) \splitrd^* (c_1, c_2)$
such that $\cc = c_1\cup c_2$.
Indeed, since $(c_1, c_2)$ is a normal form for $\splitrd$, if $\SIGN{M}{k} \eqC \SIGN{N}{k'} \in \cc = c_1\cup c_2$,
then $\SIGN{M}{k} \eqC \SIGN{N}{k'} \in c_2$, as if it was an element of $c_1$ then another reduction step would be possible.
\end{proof}

\begin{lemma}
\label{lem-proof:cc-recipe-dest}
For all recipe $R$ such that $\var{R}\subseteq \dom{\phiEEO\cup\phiL{\cc}}$, 
and $\eval{R(\phiEEO\cup\phiL{\cc})} \neq \bot$ or $\eval{R(\phiEEO\cup\phiR{\cc})} \neq \bot$,
there exists a recipe $R'$ without destructors, \ie in which $\DECNA$, $\ADECNA$, $\CHECKNA$, $\FSTNA$, $\SNDNA$, 
do not appear,
such that
\begin{itemize}
\item $\var{R'}\subseteq \dom{\phiEEO\cup\phiL{\cc}}$,
\item $\eval{R(\phiEEO\cup\phiL{\cc})} = R'(\phiEEO\cup\phiL{\cc})$,
\item $\eval{R(\phiEEO\cup\phiR{\cc})} = R'(\phiEEO\cup\phiR{\cc})$.
\end{itemize}
\end{lemma}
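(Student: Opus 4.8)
The plan is to prove the statement by structural induction on the recipe $R$. First I would reduce to the case where both $\eval{R(\phiEEO\cup\phiL{\cc})}$ and $\eval{R(\phiEEO\cup\phiR{\cc})}$ are non-$\bot$: by Lemma~\ref{lem-proof:cc-low} we have $\teqTc{\Gammao}{\Delta}{\E}{\E'}{\phiEEO \cup \phiL{\cc}}{\phiEEO\cup \phiR{\cc}}{\L}{c_\phi}$ for some $c_\phi$, so Lemma~\ref{lem-proof:l-type-recipe} guarantees that $R$ evaluates to $\bot$ on one side iff it does on the other. Hence I may assume both evaluations succeed, and since evaluation is strict, every subrecipe of $R$ then evaluates to a message on both sides, which legitimises applying the induction hypothesis to subrecipes. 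The one structural fact I would extract upfront is that, because $\cc=\stepII_{\Gammao}(\c)$ results from saturating $\c$ under $\splitrd$, no term of $\cc$ (and no term of $\phiEEO$) has at its head a pair, or an encryption $\ENC{\cdot}{k}$, $\AENC{\cdot}{\PUBK{k}}$, or signature $\SIGN{\cdot}{k}$ with $\Gammao(k)=\skey{\L}{T}$; equivalently, no frame element can be successfully destructed with a key the attacker owns, except signatures, whose verification keys are always public.

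The base and constructor cases are routine. If $R$ is a variable, a name of $\FN$, or a constant, it is already destructor-free and $R'=R$ works. If $R=f(R_1,\dots,R_n)$ with $f$ a constructor, strictness gives that each $R_i$ evaluates successfully on both sides, so the induction hypothesis yields destructor-free $R_i'$ agreeing with $R_i$ on both frames, and $R'=f(R_1',\dots,R_n')$ then agrees with $R$ on both sides with $\var{R'}\subseteq\dom{\phiEEO\cup\phiL{\cc}}$.

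The destructor cases are the heart of the argument; in each, I first apply the induction hypothesis to the arguments, reducing to analysing the head of a destructor-free recipe whose left evaluation has a prescribed shape. For $R=\FST{S}$ or $\SND{S}$, the argument recipe $S'$ is destructor-free and $S'(\phiEEO\cup\phiL{\cc})$ is a pair; since no frame element is a pair, $S'$ must have $\PAIRNA$ at its head, say $S'=\PAIR{S_1'}{S_2'}$, and I take $R'=S_1'$ (resp. $S_2'$). For $R=\DEC{S}{K}$ and $R=\ADEC{S}{K}$, Lemma~\ref{lem-proof:type-key-nonce} forces $K'$ to evaluate to a low key $k$ (keys are non-constructible and all frame keys have type $\skey{\L}{T}$), so the attacker owns $k$; since no low-key encryption survives in the saturated frame, $S'$ must itself carry the matching encryption constructor at its head, and I take its plaintext sub-recipe as $R'$. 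Finally, $R=\CHECK{S}{K}$ is the genuinely hard case: here $S'(\phiEEO\cup\phiL{\cc})=\SIGN{t}{k}$, which may arise either because $S'$ has $\SIGNNA$ at its head (then I take its payload sub-recipe) or because $S'$ is a variable pointing to a signature in the frame. In the latter case $\SIGN{t}{k}\eqC N\in\cc$ with $N=\SIGN{t'}{k}$ by Lemmas~\ref{lem-proof:cc-low} and~\ref{lem-proof:l-same-head}, and I invoke Lemma~\ref{lem-proof:cc-sign-recipe} to obtain a destructor-free recipe $R'$ with $t=R'(\phiEEO\cup\phiL{\cc})$ and $t'=R'(\phiEEO\cup\phiR{\cc})$, which reproduces $\eval{R(\phiEEO\cup\phiL{\cc})}=t$ and $\eval{R(\phiEEO\cup\phiR{\cc})}=t'$.

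The crux, and the step I expect to require the most care, is exactly this signature-checking case: it is the only place where the attacker can destruct a frame element, and its soundness rests entirely on the saturation performed by $\stepII$, which copies the payload of every honest signature into the constraint so that Lemma~\ref{lem-proof:cc-sign-recipe} can recover it destructor-free. The decryption cases look superficially similar but are easier, since there the relevant keys are low and the corresponding frame encryptions have already been opened by $\splitrd$, ruling out the variable subcase. Throughout, the three output conditions follow by construction: the variable-domain condition is inherited from the subrecipes and from Lemma~\ref{lem-proof:cc-sign-recipe}, and the two evaluation equalities are verified by matching the head reductions on each side, using the $\bot$-synchronisation from Lemma~\ref{lem-proof:l-type-recipe} to guarantee that the key and shape tests succeeding on the left also succeed on the right.
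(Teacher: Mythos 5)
Your proposal is correct and follows essentially the same route as the paper's proof: induction on $R$, with $\bot$-synchronisation between the two frames obtained from Lemmas~\ref{lem-proof:cc-low} and~\ref{lem-proof:l-type-recipe}, and in each destructor case an application of the induction hypothesis followed by a shape analysis of the resulting destructor-free recipe, where $\stepIII_{\Gammao}(\cc)$ rules out the variable subcase for pairs and low-key encryptions. You also correctly identified the signature-verification case as the one genuinely delicate step, resolved exactly as in the paper via Lemma~\ref{lem-proof:cc-sign-recipe}.
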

\begin{proof}

Let us first note that
\[\eval{R(\phiEEO\cup\phiL{\cc})} \neq \bot\text{ or } \eval{R(\phiEEO\cup\phiR{\cc})} \neq \bot\]
is equivalent to 
\[\eval{R(\phiEEO\cup\phiL{\cc})} \neq \bot \text{ and }\eval{R(\phiEEO\cup\phiR{\cc})} \neq \bot.\]
This follows from Lemmas~\ref{lem-proof:cc-low} and~\ref{lem-proof:l-type-recipe}.

We prove the property by induction on $R$.

\begin{itemize}
\item \case{If $R = n \in \N$ or $R = x\in\AX$ or $R = a\in\CST\cup\FN$} then the claim holds with $R' = R$.
\item \case{If the head symbol of $R$ is a constructor},\ie if there exist $R_1$, $R_2$ such that
$R = \PUBK{R_1}$ or $R = \VK{R_1}$ or $R = \ENC{R_1}{R_2}$ or $R = \AENC{R_1}{R_2}$ or $R = \SIGN{R_1}{R_2}$ or $R = \PAIR{R_1}{R_2}$ or $R = \HASH{R_1}$, we may apply the induction hypothesis to
$R_1$ (and $R_2$ when it is present). All these case are similar, we write the proof generically for $R = f(R_1,R_2)$.
By the induction hypothesis, there exist $R_1'$, $R_2'$ such that
\begin{itemize}
\item $\var{R_1'}\cup\var{R_2'}\subseteq \dom{\phiEEO\cup\phiL{\cc}}$,
\item for all $i\in\{1, 2\}$, $\eval{R_i(\phiEEO\cup\phiL{\cc})} = R_i'(\phiEEO\cup\phiL{\cc})$,
\item for all $i\in\{1, 2\}$, $\eval{R_i(\phiEEO\cup\phiR{\cc})} = R_i'(\phiEEO\cup\phiR{\cc})$.
\end{itemize}

Let $R'$ be $f(R_1',R_2')$. The first two points imply that $R'$ satisfies the conditions on variables.
Since $\eval{R(\phiEEO\cup\phiL{\cc})} = f(\eval{R_1(\phiEEO\cup\phiL{\cc})}, \eval{R_2(\phiEEO\cup\phiL{\cc})})$,
the third point implies that $\eval{R(\phiEEO\cup\phiL{\cc})} = R'(\phiEEO\cup\phiL{\cc})$.
Similarly, $\eval{R(\phiEEO\cup\phiR{\cc})} = R'(\phiEEO\cup\phiR{\cc})$, and the claim holds.

\item \case{If $R = \DEC{S}{K}$ for some recipes $S$, $K$}, then since 
$\eval{R(\phiEEO\cup\phiL{\cc})} \neq \bot$, we have $\eval{K(\phiEEO\cup\phiL{\cc})} = k$ for some $k\in K$,
and $\eval{S(\phiEEO\cup\phiL{\cc})} = \ENC{M}{k}$, where $M = \eval{R(\phiEEO\cup\phiL{\cc})}$.

Similarly, there exists $k'\in\K$ such that
$\eval{K(\phiEEO\cup\phiR{\cc})} = k'$ and
$\eval{S(\phiEEO\cup\phiR{\cc})} = \ENC{N}{k'}$, where $N = \eval{R(\phiEEO\cup\phiR{\cc})}$.

In addition, by Lemma~\ref{lem-proof:cc-low}, there exists $c'$ such that
$\teqTc{\Gammao}{\Delta}{\E}{\E'}{\phiEEO \cup \phiL{\cc}}{\phiEEO\cup \phiR{\cc}}{\L}{c'}$.
Thus by Lemma~\ref{lem-proof:l-type-recipe}, there exists $c''$ such that
$\teqTc{\Gammao}{\Delta}{\E}{\E'}{\eval{K(\phiEEO\cup\phiL{\cc})}}{\eval{K(\phiEEO\cup\phiR{\cc})}}{\L}{c''}$, which is to say
$\teqTc{\Gammao}{\Delta}{\E}{\E'}{k}{k'}{\L}{c''}$.
Hence by Lemma~\ref{lem-proof:type-key-nonce}, $k = k'$ and $\Gamma(k) = \skey{\L}{T}$ for some type $T$.

Since $\eval{S(\phiEEO\cup\phiL{\cc})} = \ENC{M}{k} \neq \bot$, 
by the induction hypothesis, there exists $S'$ such that 
$\var{S'}\subseteq \var{S}$,
$S'(\phiEEO\cup\phiL{\cc}) = \eval{S(\phiEEO\cup\phiL{\cc})} = \ENC{M}{k}$, and
$S'(\phiEEO\cup\phiR{\cc}) = \eval{S(\phiEEO\cup\phiR{\cc})} = \ENC{N}{k}$.

It is then clear that
either $S' = x$ for some variable $x\in\AX$, or $S' = \ENC{S''}{K'}$ for some $S''$, $K'$.
The first case is impossible, since we have already shown that $\Gamma(k) = \skey{\L}{T}$,
and since by $\stepIII_{\Gammao}(\cc)$, 
$\cc$ only contains messages encrypted with keys $k''$ such that 
$\Gamma(k'') = \skey{\S}{T'}$ for some $T'$.

Hence there exist $S''$, $K'$ such that $S' = \ENC{S''}{K'}$.
Since $S'(\phiEEO\cup\phiL{\cc}) = \ENC{M}{k}$, we have $S''(\phiEEO\cup\phiL{\cc}) = M$.
Hence $\eval{R(\phiEEO\cup\phiL{\cc})} = M = S''(\phiEEO\cup\phiL{\cc})$, and similarly for $\phiR{\cc}$.
Moreover, $S''$ being a subterm of $S'$ it also satisfies the conditions on the domains, and thus
the property holds with $R' = S''$.

\item \case{If $R = \ADEC{S}{K}$ for some recipes $S$, $K$}: this case is similar to the symmetric case.

\item \case{If $R = \CHECK{S}{K}$ for some recipes $S$, $K$}: then since 
$\eval{R(\phiEEO\cup\phiL{\cc})} \neq \bot$, we have $\eval{K(\phiEEO\cup\phiL{\cc})} = \VK{k}$ for some $k\in K$,
and $\eval{S(\phiEEO\cup\phiL{\cc})} = \SIGN{M}{k}$, where $M = \eval{R(\phiEEO\cup\phiL{\cc})}$.

Similarly, there exists $k'\in\K$ such that
$\eval{K(\phiEEO\cup\phiR{\cc})} = \VK{k'}$ and
$\eval{S(\phiEEO\cup\phiR{\cc})} = \SIGN{N}{k'}$, where $N = \eval{R(\phiEEO\cup\phiR{\cc})}$.

Since $\eval{S(\phiEEO\cup\phiL{\cc})} = \SIGN{M}{k} \neq \bot$, 
by the induction hypothesis, there exists $S'$ such that 
$\var{S'}\subseteq \var{S}$,
$S'(\phiEEO\cup\phiL{\cc}) = \eval{S(\phiEEO\cup\phiL{\cc})} = \SIGN{M}{k}$, and
$S'(\phiEEO\cup\phiR{\cc}) = \eval{S(\phiEEO\cup\phiR{\cc})} = \SIGN{N}{k}$.

Since $S'(\phiEEO\cup\phiL{\cc}) = \SIGN{M}{k}$, it is clear from the definition of $\evalNA$
that either $S' = x$ for some $x\in\AX$, or $S' = \SIGN{S''}{K'}$ for some $S''$, $K'$.

In the first case, we therefore have $\SIGN{M}{k} \eqC \SIGN{N}{k'} \in \cc$, and Lemma~\ref{lem-proof:cc-sign-recipe} directly proves the claim.

In the second case, there exist $S''$, $K'$ such that $S' = \SIGN{S''}{K'}$.
Since $S'(\phiEEO\cup\phiL{\cc}) = \SIGN{M}{k}$, we have $S''(\phiEEO\cup\phiL{\cc}) = M$.
Hence $\eval{R(\phiEEO\cup\phiL{\cc})} = M = S''(\phiEEO\cup\phiL{\cc})$, and similarly for $\phiR{\cc}$.
Moreover, $S''$ being a subterm of $S'$ it also satisfies the conditions on the domains, and thus
the property holds with $R' = S''$.

\item \case{If $R = \FST{S}$ for some recipe $S$} then since 
$\eval{R(\phiEEO\cup\phiL{\cc})} \neq \bot$, we have 
$\eval{S(\phiEEO\cup\phiL{\cc})} = \PAIR{M_1}{M_2}$, where $M_1 = \eval{R(\phiEEO\cup\phiL{\cc})}$, and $M_2$ is a message.

Similarly,
$\eval{S(\phiEEO\cup\phiR{\cc})} = \PAIR{N_1}{N_2}$, where $N_1 = \eval{R(\phiEEO\cup\phiR{\cc})}$, and $N_2$ is a message.

Since $\eval{S(\phiEEO\cup\phiL{\cc})} = \PAIR{M_1}{M_2} \neq \bot$, 
by the induction hypothesis, there exists $S'$ such that 
$\var{S'}\subseteq \var{S}$,
$S'(\phiEEO\cup\phiL{\cc}) = \eval{S(\phiEEO\cup\phiL{\cc})} = \PAIR{M}{k}$, and
$S'(\phiEEO\cup\phiR{\cc}) = \eval{S(\phiEEO\cup\phiR{\cc})} = \PAIR{N}{k}$.

Since $S'(\phiEEO\cup\phiL{\cc}) = \PAIR{M_1}{M_2}$, it is clear from the definition of $\evalNA$
that either $S' = x$ for some $x\in\AX$, or $S' = \PAIR{S_1}{S_2}$ for some $S_1$, $S_2$.

The first case is impossible, since by $\stepIII_{\Gammao}(\cc)$, 
$\cc$ does not contain pairs.

In the second case, there exist $S_1$, $S_2$ such that $S' = \PAIR{S_1}{S_2}$.
Since $S'(\phiEEO\cup\phiL{\cc}) = \PAIR{M_1}{M_2}$, we have $S_1(\phiEEO\cup\phiL{\cc}) = M_1$.
Hence $\eval{R(\phiEEO\cup\phiL{\cc})} = M_1 = S_1(\phiEEO\cup\phiL{\cc})$, and similarly for $\phiR{\cc}$.
Moreover, $S_1$ being a subterm of $S'$ it also satisfies the conditions on the domains, and thus
the property holds with $R' = S_1$.

\item \case{If $R = \SND{S}$ for some $S$}: this case is similar to the
$\FSTNA$ case.

\end{itemize}
\end{proof}

\begin{lemma}
\label{lem-proof:subst-red}
For all term $t$ and substitution $\sigma$ containing only messages, if $\eval{t}\neq\bot$, then $\eval{(t\sigma)} = (\eval{t})\sigma$.
\end{lemma}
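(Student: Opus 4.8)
Looking at the final statement, it is the lemma:

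\begin{lemma}
For all term $t$ and substitution $\sigma$ containing only messages, if $\eval{t}\neq\bot$, then $\eval{(t\sigma)} = (\eval{t})\sigma$.
\end{lemma}

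The plan is to prove this by structural induction on the term $t$, following the recursive definition of the evaluation function $\evalNA$ given earlier in the paper. The key observation driving the whole argument is that $\sigma$ maps variables to \emph{messages}, which are already fully evaluated ground-or-constructor terms with atomic keys; consequently applying $\sigma$ commutes with the constructor clauses of $\evalNA$ and does not interfere with the side-conditions (such as "$\eval{t}\in\K$" or "$\eval{t_2}=\PUBK{k}$") that govern when evaluation succeeds.

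First I would treat the base case where $t \in \N\cup\V\cup\K\cup\CST$. If $t$ is a name, key, or constant, then $t\sigma = t$ and $\eval{t}=t$, so both sides coincide trivially. If $t=x$ is a variable, then $\eval{t}=x$ and $t\sigma = \sigma(x)$; since $\sigma(x)$ is a message, it is invariant under $\evalNA$, so $\eval{(t\sigma)}=\sigma(x)=(\eval{t})\sigma$, establishing the base case.

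For the inductive step I would go clause by clause through the definition of $\evalNA$, in each case using the induction hypothesis on the immediate subterms. The constructor cases (pairing, hashing, symmetric/asymmetric encryption, signatures, public and verification keys) are the routine ones: for instance for $t=\PAIR{t_1}{t_2}$, the hypothesis $\eval{t}\neq\bot$ forces $\eval{t_1}\neq\bot$ and $\eval{t_2}\neq\bot$, so the IH gives $\eval{(t_i\sigma)}=(\eval{t_i})\sigma$; pushing $\sigma$ through the pairing constructor then yields the claim. The subtlety in the encryption/key cases is that the side-condition is stated in terms of $\eval{t_2}$ (e.g. $\eval{t_2}\in\K$ or $\eval{t_2}=\PUBK{k}$); here I would note that by the IH $\eval{(t_2\sigma)}=(\eval{t_2})\sigma$, and since $\eval{t_2}$ is already a key (resp. a public key), applying $\sigma$ leaves it a key (resp. a public key) of the same form, so the same evaluation clause fires on $t\sigma$. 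The destructor cases ($\FSTNA$, $\SNDNA$, $\DECNA$, $\ADECNA$, $\CHECKNA$) are where care is needed: from $\eval{t}\neq\bot$ we know the subterm evaluates to the appropriate shape (a pair, an encryption, or a signature), and the IH transports that shape through $\sigma$; the matching of decryption/verification keys is preserved because, again, keys are atomic and fixed under $\sigma$.

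The main obstacle I anticipate is bookkeeping around the destructor clauses, specifically ensuring that the \emph{same} branch of the definition of $\evalNA$ is selected for $t\sigma$ as for $t$. For example, in the $\DECNA$ case one must check both that $\eval{t_1\sigma}$ is still an encryption with the correct key \emph{and} that the key-equality condition "$t_4=\eval{t_2}$" is preserved after substitution; this requires combining the IH on $t_1$ and $t_2$ with the fact that atomic keys are unaffected by $\sigma$. None of these steps is mathematically deep, but they must be carried out uniformly for each destructor to guarantee that success of evaluation and the resulting value are both preserved. I would close the argument by remarking that since every defining clause of $\evalNA$ has been matched under $\sigma$, the equality $\eval{(t\sigma)}=(\eval{t})\sigma$ holds for all $t$ with $\eval{t}\neq\bot$.
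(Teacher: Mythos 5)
Your proposal is correct and follows essentially the same route as the paper's proof: structural induction on $t$, handling the variable case via the fact that messages are fixed points of $\evalNA$, treating constructor clauses generically, and checking in the destructor clauses that the same evaluation branch fires after substitution because atomic keys are unchanged by $\sigma$. The only difference is presentational — you spell out the side-condition bookkeeping (e.g. for $\DECNA$) that the paper compresses into a single sentence.
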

\begin{proof}
This property is easily proved by induction on $t$.
In the base case where $t$ is a variable $x$, by definition of $\evalNA$, since $\sigma(x)$ is a messages, $\eval{\sigma(x)} = \sigma(x)$ and the claim holds.
In the other base cases where $t$ is a name, key or constant the claim trivially holds.
We prove the case where $t$ starts with a constructor other than $\ENCNA$, $\AENCNA$, $\SIGNNA$ generically for
$t = f(t_1,t_2)$. We then have $\eval{t_1\sigma}\neq \bot$ and $\eval{t_2\sigma}\neq \bot$, and
$\eval{t\sigma}=f(\eval{t_1\sigma},\eval{t_2\sigma})$, which, by the induction hypothesis, is equal to
$f(\eval{t_1}\sigma,\eval{t_2}\sigma)$, \ie to $\eval{f(t_1,t_2)}\sigma$.
The case where $f$ is $\ENCNA$, $\AENCNA$ or $\SIGNNA$ is similar, but we in addition know that $\eval{t_2}$ is a key.

Finally if $t$ starts with a destructor, $t = d(t_1,t_2)$, we know that $\eval{t_1}$ starts with the corresponding constructor $f$: $\eval{t_1} = f(t_3,t_4)$. In the case of encryptions and signatures we know in addition that $\eval{t_4}$
and $\eval{t_2}$ are the same key (resp. public key/verification key).
We then have $\eval{t} = \eval{t_3}$, and $\eval{t\sigma} = \eval{t_3\sigma}$ (or $t_4$ in the case of the second projection $\SND$). Hence by the induction hypothesis, $\eval{t\sigma} = \eval{t_3}\sigma = \eval{t}\sigma$ and the claim holds.

\end{proof}

\begin{lemma}
\label{lem-proof:red-inst}
For all $\sigma$, $\sigma'$, 
for all recipe $R$ such that $\var{R}\subseteq \dom{\phiEEO\cup\phiL{\cc}}$, 
if $\eval{R(\phiEEO\cup\phiL{\inst{\cc}{\sigma}{\sigma'}})} \neq \bot$ then
$\eval{R(\phiEEO\cup\phiL{\cc})} \neq \bot$;
and similarly
if $\eval{R(\phiEEO\cup\phiR{\inst{\cc}{\sigma}{\sigma'}})} \neq \bot$ then
$\eval{R(\phiEEO\cup\phiR{\cc})} \neq \bot$.
\end{lemma}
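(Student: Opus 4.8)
The plan is to prove the implication by induction on the structure of the recipe $R$, after first reducing everything to the non-instantiated frame. The key observation is the syntactic identity $R(\phiEEO\cup\phiL{\inst{\cc}{\sigma}{\sigma'}}) = (R(\phiEEO\cup\phiL{\cc}))\sigma$, which holds because $\phiEEO$ is ground, the skeleton of $R$ contains only axiom variables, names and function symbols (no process variables touched by $\sigma$), and $\phiL{\inst{\cc}{\sigma}{\sigma'}}$ is exactly $\phiL{\cc}$ with $\sigma$ applied entrywise. Writing $\psi = \phiEEO\cup\phiL{\cc}$, the goal becomes: if $\eval{(R\psi)\sigma}\neq\bot$ then $\eval{R\psi}\neq\bot$. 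The statement for the right frame is symmetric, using $\sigma'$ and $\phiR{\cc}$, so I would only treat the left case.

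First I would dispatch the base cases ($R$ an axiom variable, a free name, or a constant): the relevant frame entry is a constructor or atomic term, hence evaluates to itself and is never $\bot$, so $\eval{R\psi}\neq\bot$ holds unconditionally. For a recipe whose head is a constructor, $R = f(R_1,\dots)$, the claim follows from the induction hypothesis applied to the immediate subrecipes together with Lemma~\ref{lem-proof:subst-red}: from $\eval{R_i\psi}\neq\bot$ (the hypothesis) we obtain $\eval{R_i(\psi\sigma)} = \eval{R_i\psi}\sigma$, so any side condition required after instantiation — for instance that the key argument of an encryption evaluates into $\K$, or that an asymmetric-key argument has the shape $\PUBK{k}$ — is transported back, using the rigidity argument described next.

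The main obstacle, and the only non-routine part, is the destructor cases $R = \DEC{S}{K}$, $\ADEC{S}{K}$, $\CHECK{S}{K}$, $\FST{S}$, $\SND{S}$. Here $\eval{(R\psi)\sigma}\neq\bot$ tells us that $\eval{(S\psi)\sigma}$ has a specific head (an encryption, signature, or pair, with a specific key), and I must transport this head structure back to $\eval{S\psi}$. The induction hypothesis gives $\eval{S\psi}\neq\bot$, and Lemma~\ref{lem-proof:subst-red} gives $\eval{(S\psi)\sigma} = \eval{S\psi}\sigma$. The crucial point is that $\eval{S\psi}$ can never be a bare process variable: by Lemma~\ref{lem-proof:cc-recipe-dest} there is a destructor-free recipe $R'$ with $\eval{S\psi} = R'\psi$, and a destructor-free recipe applied to $\psi$ yields either a frame entry — which, by the shape guaranteed by $\stepIII$ (a low key, a low nonce, a public or verification key, a constant, a secret-key encryption, a hash, or a secret-key signature), is never a bare variable — a free name, a constant, or a constructor-headed term. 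Since $\eval{S\psi}$ is therefore not a variable while $\eval{S\psi}\sigma$ is, say, an encryption $\ENC{M}{k}$, the term $\eval{S\psi}$ must itself be an encryption $\ENC{M_0}{k}$ with the same key; the same argument forces $\eval{K\psi}$ to be exactly the atomic key $k$. Hence the destructor already succeeds on $\psi$, giving $\eval{R\psi}\neq\bot$. The asymmetric-decryption, signature-check, and projection cases are identical, the extracted payload being handled again by the bare-variable exclusion.

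Intuitively, the entire argument rests on the fact that $\stepIII$ makes the attacker-reachable skeleton of the constraints rigid: instantiating the variables can only fill in positions that are already opaque, never create new opportunities to decompose a term. The delicate bookkeeping — that no position reachable by the attacker is a bare variable — is exactly what Lemma~\ref{lem-proof:cc-recipe-dest} packages, so beyond the destructor case analysis above I do not expect further complications.
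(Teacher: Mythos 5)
Your proposal is correct and follows essentially the same route as the paper: induction on $R$, with the destructor (and key-argument constructor) cases resolved by combining Lemma~\ref{lem-proof:subst-red}, Lemma~\ref{lem-proof:cc-recipe-dest}, and the shape constraints enforced by $\stepIII$ to show that the head structure needed for evaluation is already present before instantiation. Your uniform ``no attacker-reachable position is a bare variable, hence heads are rigid under $\sigma$'' formulation is exactly the mechanism the paper implements by case analysis on the destructor-free recipe (either $S'=x$ pointing to a frame entry, whose top-level shape $\stepIII$ constrains, or a constructor-headed recipe whose key position is a variable pointing into $\phiEEO$).
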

\begin{proof}
We prove the property for $\phiL{\cc}$, as the proof for $\phiR{\cc}$ is similar.

We prove this by induction on $R$.

\begin{itemize}
\item \case{if $R = n \in \N$ or $R = x\in\AX$ or $R = a\in\CST\cup\FN$}: then
$\eval{R(\phiEEO\cup\phiL{\cc})} \neq \bot$ is trivial.

\item \case{If the head symbol of $R$ is $\PAIRNA$, 
$\HASHNA$}:
all these cases are similar, we detail here the pair case.
We have $R = \PAIR{R_1}{R_2}$. By assumption, we have
$\eval{R_1(\phiEEO\cup\phiL{\inst{\cc}{\sigma}{\sigma'}})} \neq \bot$ and
$\eval{R_2(\phiEEO\cup\phiL{\inst{\cc}{\sigma}{\sigma'}})} \neq \bot$.
By the induction hypothesis, we thus have
$\eval{R_1(\phiEEO\cup\phiL{\cc})} \neq \bot$ and $\eval{R_2(\phiEEO\cup\phiL{\cc})} \neq \bot$, and therefore
$\eval{R(\phiEEO\cup\phiL{\cc})} \neq \bot$.

\item \case{if the head symbol of $R$ is $\ENCNA$, $\AENCNA$, $\SIGNNA$, $\PUBKNA$, $\VKNA$}:
all these cases are similar, we detail the proof for the asymmetric encryption case.
We have $R = \AENC{S}{K}$ for some recipes $S$, $K$.
Since $\eval{R(\phiEEO\cup\phiR{\inst{\cc}{\sigma}{\sigma'}})} \neq \bot$,
we have $\eval{S(\phiEEO\cup\phiR{\inst{\cc}{\sigma}{\sigma'}})} \neq \bot$,
and $\eval{K(\phiEEO\cup\phiR{\inst{\cc}{\sigma}{\sigma'}})} = \PUBK{k}$ for some $k\in\K$. 

By the induction hypothesis, we thus have
$\eval{S(\phiEEO\cup\phiL{\cc})} \neq \bot$ and $\eval{K(\phiEEO\cup\phiL{\cc})} \neq \bot$.
Hence, by Lemma~\ref{lem-proof:cc-recipe-dest}, there exists a recipe $K'$ without destructors, such that
$\eval{K(\phiEEO\cup\phiL{\cc})} = K'(\phiEEO\cup\phiL{\cc})$.
Hence, using Lemma~\ref{lem-proof:subst-red}, we have
\[\PUBK{k} = \eval{K(\phiEEO\cup\phiR{\inst{\cc}{\sigma}{\sigma'}})} =
\eval{K((\phiEEO\cup\phiR{\cc})\sigma)} = (\eval{K(\phiEEO\cup\phiL{\cc})})\sigma
= K'(\phiEEO\cup\phiL{\cc})\sigma\]

Since $K'(\phiEEO\cup\phiL{\cc})\sigma = \PUBK{k}$, there exists a variable $x$ such that
\begin{itemize}
\item either $K' = x$, and then since $\phiL{\cc}$ does not contain variables by $\stepIII_{\Gammao}(\cc)$, we have
$(\phiEEO\cup\phiL{\cc})(x) = \PUBK{k}$.
If $x\in\dom{\phiL{\cc}}$, since by Lemma~\ref{lem-proof:cc-low}, there exists $c_x$ such that
$\teqTc{\Gammao}{\Delta}{\E}{\E'}{\phiL{\cc}(x)}{\phiR{\cc}(x)}{\L}{c_x}$, we know by Lemma~\ref{lem-proof:type-key-nonce}
that $\phiR{\cc}(x) = \PUBK{k}$.

\item or $K' = \PUBK{x}$, and then 
since $\phiL{\cc}$ does not contain variables by $\stepIII_{\Gammao}(\cc)$, we have
$(\phiEEO\cup\phiL{\cc})(x) = k$.
If $x\in\dom{\phiL{\cc}}$, since by Lemma~\ref{lem-proof:cc-low}, there exists $c_x$ such that
$\teqTc{\Gammao}{\Delta}{\E}{\E'}{\phiL{\cc}(x)}{\phiR{\cc}(x)}{\L}{c_x}$, we know by Lemma~\ref{lem-proof:type-key-nonce}
that $\phiR{\cc}(x) = k$.
\end{itemize}

In any case $K'(\phiEEO\cup\phiL{\cc})\sigma = K'(\phiEEO\cup\phiL{\cc})$, and therefore
\[\eval{K(\phiEEO\cup\phiR{\cc})} = K'(\phiEEO\cup\phiL{\cc}) = \PUBK{k}\]
which, together with the fact that $\eval{S(\phiEEO\cup\phiL{\cc})} \neq \bot$, implies that
$\eval{R(\phiEEO\cup\phiL{\cc})} \neq \bot$.

\item \case{if the head symbol of $R$ is $\DECNA$, $\ADECNA$, $\CHECKNA$}:
all these cases are similar, we detail the proof for the signature verification case.
We have $R = \CHECK{S}{K}$ for some recipes $S$, $K$.
Since $\eval{R(\phiEEO\cup\phiR{\inst{\cc}{\sigma}{\sigma'}})} \neq \bot$,
we know that $\eval{S(\phiEEO\cup\phiR{\inst{\cc}{\sigma}{\sigma'}})} = \SIGN{M}{k}$ for some message $M$ and some $k\in\K$,
and $\eval{K(\phiEEO\cup\phiR{\inst{\cc}{\sigma}{\sigma'}})} = \VK{k}$.

By the induction hypothesis, we thus have
$\eval{S(\phiEEO\cup\phiL{\cc})} \neq \bot$ and $\eval{K(\phiEEO\cup\phiL{\cc})} \neq \bot$.
Hence, by Lemma~\ref{lem-proof:cc-recipe-dest}, there exist recipes $S'$, $K'$ without destructors, such that
$\eval{S(\phiEEO\cup\phiL{\cc})} = S'(\phiEEO\cup\phiL{\cc})$ and
$\eval{K(\phiEEO\cup\phiL{\cc})} = K'(\phiEEO\cup\phiL{\cc})$.

Hence, using Lemma~\ref{lem-proof:subst-red}, we have
\[\SIGN{M}{k} = \eval{S(\phiEEO\cup\phiR{\inst{\cc}{\sigma}{\sigma'}})} =
\eval{S((\phiEEO\cup\phiR{\cc})\sigma)} = (\eval{S(\phiEEO\cup\phiL{\cc})})\sigma
= S'(\phiEEO\cup\phiL{\cc})\sigma\]

Since $S'(\phiEEO\cup\phiL{\cc})\sigma = \SIGN{M}{k}$, and considering that $\stepIII_{\Gammao}(\cc)$ holds
there exists a variable $x$ such that
either $S' = x$ and $\phiL{\cc}(x) = \SIGN{M'}{k}$ for some message $M'$;
or $S' = \SIGN{S''}{x}$ (for some recipe $S''$) and $\phiEEO(x) = k$ (similarly to the previous case).

In any case $S'(\phiEEO\cup\phiL{\cc})$ is a signature by $k$.
That is to say $\eval{S(\phiEEO\cup\phiL{\cc})}$ is a signature by $k$.

Similarly to the asymmetric encryption case, we can also show that $\eval{K(\phiEEO\cup\phiL{\cc})} = \VK{k}$.

Therefore, $\eval{R(\phiEEO\cup\phiL{\cc})} = \eval{(\CHECK{S}{K})(\phiEEO\cup\phiL{\cc})} \neq \bot$.

\item \case{if the head symbol of $R$ is $\FSTNA$, $\SNDNA$:}
all these cases are similar, we detail the proof for the $\FSTNA$ case.
We have $R = \FST{S}$ for some recipe $S$.
Since $\eval{R(\phiEEO\cup\phiR{\inst{\cc}{\sigma}{\sigma'}})} \neq \bot$,
we know that $\eval{S(\phiEEO\cup\phiR{\inst{\cc}{\sigma}{\sigma'}})} = \PAIR{M}{N}$ for some messages $M$, $N$.

By the induction hypothesis, we thus have
$\eval{S(\phiEEO\cup\phiL{\cc})} \neq \bot$.
Hence, by Lemma~\ref{lem-proof:cc-recipe-dest}, there exists a recipe $S'$ without destructors, such that
$\eval{S(\phiEEO\cup\phiL{\cc})} = S'(\phiEEO\cup\phiL{\cc})$.

Hence, using Lemma~\ref{lem-proof:subst-red}, we have
\[\PAIR{M}{N} = \eval{S(\phiEEO\cup\phiR{\inst{\cc}{\sigma}{\sigma'}})} =
\eval{S((\phiEEO\cup\phiR{\cc})\sigma)} = (\eval{S(\phiEEO\cup\phiL{\cc})})\sigma
= S'(\phiEEO\cup\phiL{\cc})\sigma\]

Since $S'(\phiEEO\cup\phiL{\cc})\sigma = \PAIR{M}{N}$, and cconsidering that $\stepIII_{\Gammao}(\cc)$ holds
there exist some recipes $S''$ and $S'''$ such that $S' = \PAIR{S''}{S'''}$.
That is to say $\eval{S(\phiEEO\cup\phiL{\cc})}$ is a pair.

Therefore, $\eval{R(\phiEEO\cup\phiL{\cc})} = \eval{(\FST{S})(\phiEEO\cup\phiL{\cc})} \neq \bot$.
\end{itemize}
\end{proof}

\begin{lemma}
\label{lem-proof:cc-inst-recipe-dest}
For all $\sigma$, $\sigma'$, 
for all recipe $R$ such that $\var{R}\subseteq \dom{\phiEEO\cup\phiL{\cc}}$, 
$\eval{R(\phiEEO\cup\phiL{\inst{\cc}{\sigma}{\sigma'}})} \neq \bot$ and
$\eval{R(\phiEEO\cup\phiR{\inst{\cc}{\sigma}{\sigma'}})} \neq \bot$,
there exists a recipe $R'$ without destructors, \ie in which $\DECNA$, $\ADECNA$, $\CHECKNA$, $\FSTNA$, $\SNDNA$,
do not appear,
such that
\begin{itemize}
\item $\var{R'}\subseteq \dom{\phiEEO\cup\phiL{\cc}}$,
\item $\eval{R(\phiEEO\cup\phiL{\inst{\cc}{\sigma}{\sigma'}})} = R'(\phiEEO\cup\phiL{\inst{\cc}{\sigma}{\sigma'}})$,
\item $\eval{R(\phiEEO\cup\phiR{\inst{\cc}{\sigma}{\sigma'}})} = R'(\phiEEO\cup\phiR{\inst{\cc}{\sigma}{\sigma'}})$.
\end{itemize}
\end{lemma}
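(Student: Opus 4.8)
The plan is to mirror the proof of Lemma~\ref{lem-proof:cc-recipe-dest}, proceeding by induction on the recipe $R$, with Lemma~\ref{lem-proof:red-inst} and Lemma~\ref{lem-proof:subst-red} used to transfer everything between the instantiated frames and $\cc$. The pivotal observation is that for any recipe $S'$ without destructors with $\var{S'}\subseteq\dom{\phiEEO\cup\phiL{\cc}}$, frame application commutes with instantiation: since $\phiEEO$ is ground and $\phiL{\inst{\cc}{\sigma}{\sigma'}} = \phiL{\cc}\sigma$, we have $S'(\phiEEO\cup\phiL{\inst{\cc}{\sigma}{\sigma'}}) = (S'(\phiEEO\cup\phiL{\cc}))\sigma$, and symmetrically on the right with $\sigma'$. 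I would use this to reduce the shape analysis of the instantiated evaluation to a shape analysis of the un-instantiated term $S'(\phiEEO\cup\phiL{\cc})$.

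First I would dispatch the base cases ($R$ a name, an axiom variable, or a constant) by taking $R' = R$, and the constructor cases (head $\PUBKNA$, $\VKNA$, $\ENCNA$, $\AENCNA$, $\SIGNNA$, $\PAIRNA$, $\HASHNA$) exactly as in Lemma~\ref{lem-proof:cc-recipe-dest}: the hypotheses that both instantiated evaluations are non-$\bot$ force the corresponding evaluations of the immediate subterms to be non-$\bot$ on both sides, the induction hypothesis yields destructor-free recipes for them, and reassembling under the same constructor gives $R'$.

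The substantive work is in the destructor cases; I treat $R = \DEC{S}{K}$ and indicate how the others follow. Here both evaluations being non-$\bot$ forces $\eval{S(\phiEEO\cup\phiL{\inst{\cc}{\sigma}{\sigma'}})} = \ENC{M}{k}$ and $\eval{K(\phiEEO\cup\phiL{\inst{\cc}{\sigma}{\sigma'}})} = k$, and likewise on the right. The induction hypothesis applied to $S$ and to $K$ (both non-$\bot$ on both sides) yields destructor-free recipes $S'$, $K''$ realising these values on the instantiated frames. Writing $t = S'(\phiEEO\cup\phiL{\cc})$, the commutation observation gives $t\sigma = \ENC{M}{k}$; since after $\stepIII$ no entry of $\phiEEO\cup\phiL{\cc}$ is a bare variable, $t$ is not a bare variable, so its $\sigma$-instance can be an encryption only if $t$ itself has head $\ENCNA$. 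Hence $S'$ is either $\ENC{S''}{K'}$, in which case I set $R' = S''$ and check that $S''$ realises $M$ on both instantiated frames, or $S'$ is an axiom variable whose frame value is an encryption $\ENC{M'}{k'}$. I would rule out the latter: by $\stepIII$ every symmetric encryption in $\phiL{\cc}$ is under a secret key, so $\Gamma(k') = \skey{\S}{T'}$; but applying Lemma~\ref{lem-proof:subst-red} to $K''$ shows $k = (K''(\phiEEO\cup\phiL{\cc}))\sigma$ is a key of $\phiEEO\cup\phiL{\cc}$, which by Lemma~\ref{lem-proof:cc-low}, Lemma~\ref{lem-proof:l-type-recipe} and Lemma~\ref{lem-proof:type-key-nonce} must satisfy $\Gamma(k) = \skey{\L}{T}$, contradicting $k = k'$. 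The cases $R = \ADEC{S}{K}$ and $R = \FST{S}$, $R = \SND{S}$ are handled the same way, the impossible subcase being excluded respectively because asymmetric honest ciphertexts require the secret key (again unobtainable by a recipe) and because $\cc$ contains no pairs. For $R = \CHECK{S}{K}$, the analysis again gives $S' = \SIGN{S''}{K'}$ (take $R' = S''$) or $S'$ an axiom variable whose value is a signature $\SIGN{M'}{k''}$ under an honest key; in this last subcase, rather than a contradiction, I would invoke Lemma~\ref{lem-proof:cc-sign-recipe} to obtain a destructor-free recipe extracting $M'$ on $\cc$, whose instantiation extracts $M = M'\sigma$ on both instantiated frames, and take that as $R'$.

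The main obstacle is precisely this shape analysis in the destructor cases: I must guarantee that the constructor which lets the destructor fire is already present in $S'(\phiEEO\cup\phiL{\cc})$ \emph{before} applying $\sigma$, rather than being created by $\sigma$ instantiating a variable inside a frame entry. This is exactly what the structural guarantees of $\stepIII$ provide---no frame entry is a bare variable, symmetric ciphertexts use secret keys, honest asymmetric ciphertexts and signatures use honest keys, and no pairs occur---combined with the fact (from Lemma~\ref{lem-proof:cc-low} and Lemma~\ref{lem-proof:l-type-recipe}) that a recipe over a low frame can only ever produce a low key, so secret decryption keys are never available to the attacker.
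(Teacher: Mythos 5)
Your proof is correct, but it takes a markedly heavier route than the paper's. The paper does not redo the induction at all: it first invokes Lemma~\ref{lem-proof:red-inst} to obtain $\eval{R(\phiEEO\cup\phiL{\cc})}\neq\bot$ and $\eval{R(\phiEEO\cup\phiR{\cc})}\neq\bot$, then applies Lemma~\ref{lem-proof:cc-recipe-dest} as a black box to obtain a destructor-free $R'$ for the \emph{un-instantiated} frames, and finally transfers this same $R'$ verbatim to the instantiated frames via Lemma~\ref{lem-proof:subst-red}: $\eval{R(\phiEEO\cup\phiL{\inst{\cc}{\sigma}{\sigma'}})} = \eval{(R(\phiEEO\cup\phiL{\cc}))\sigma} = (\eval{R(\phiEEO\cup\phiL{\cc})})\sigma = (R'(\phiEEO\cup\phiL{\cc}))\sigma = R'(\phiEEO\cup\phiL{\inst{\cc}{\sigma}{\sigma'}})$, and symmetrically on the right. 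Note that the last equality is precisely your own ``pivotal observation'' that frame application commutes with instantiation for destructor-free recipes; applied once, at top level, to the $R'$ supplied by Lemma~\ref{lem-proof:cc-recipe-dest}, it closes the proof in a few lines, with no case analysis and no fresh appeal to $\stepIII$. By re-running the induction inside the instantiated frames you instead duplicate the entire shape analysis of Lemma~\ref{lem-proof:cc-recipe-dest} (including the key-lowness contradiction for $\DECNA$ and $\ADECNA$ and the detour through Lemma~\ref{lem-proof:cc-sign-recipe} for $\CHECKNA$), and you must additionally argue that $\sigma$ cannot manufacture the destructor-enabling constructor --- your point that no entry of $\phiEEO\cup\phiL{\cc}$ is a bare variable, which you state for $t$ but also silently need for the key recipe $K''$ when you conclude that $(K''(\phiEEO\cup\phiL{\cc}))\sigma$ being an atomic key forces $K''(\phiEEO\cup\phiL{\cc})$ to be that key. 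Everything you do is sound, and your version has the minor virtue of being self-contained in the destructor cases; but all of that work is already packaged in the earlier lemma, and the intended argument is simply its composition with Lemmas~\ref{lem-proof:red-inst} and~\ref{lem-proof:subst-red}.
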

\begin{proof}

By Lemma~\ref{lem-proof:red-inst},
$\eval{R(\phiEEO\cup\phiL{\cc})} \neq \bot$ and
$\eval{R(\phiEEO\cup\phiR{\cc})} \neq \bot$.

Hence we may apply Lemma~\ref{lem-proof:cc-recipe-dest}, and there exists a recipe $R'$ without destructors, such that
\begin{itemize}
\item $\var{R'}\subseteq \dom{\phiEEO\cup\phiL{\cc}}$,
\item $\eval{R(\phiEEO\cup\phiL{\cc})} = R'(\phiEEO\cup\phiL{\cc})$,
\item $\eval{R(\phiEEO\cup\phiR{\cc})} = R'(\phiEEO\cup\phiR{\cc})$.
\end{itemize}

By Lemma~\ref{lem-proof:subst-red}, we have 
\[\eval{(R(\phiEEO\cup\phiL{\inst{\cc}{\sigma}{\sigma'}}))} = \eval{(R(\phiEEO\cup\phiL{\cc})\sigma)}
= (\eval{R(\phiEEO\cup\phiL{\cc})})\sigma\]
Hence
\[\eval{(R(\phiEEO\cup\phiL{\inst{\cc}{\sigma}{\sigma'}}))} = R'(\phiEEO\cup\phiL{\cc})\sigma
= R'(\phiEEO\cup\phiL{\inst{\cc}{\sigma}{\sigma'}}).\]

Similarly we can show that 
\[\eval{(R(\phiEEO\cup\phiR{\inst{\cc}{\sigma}{\sigma'}}))} = R'(\phiEEO\cup\phiR{\inst{\cc}{\sigma}{\sigma'}}).\]

This proves the claim.
\end{proof}

\begin{lemma}
\label{lem-proof:recipe-variable}
For all $\sigma$, $\sigma'$, for all recipe $R$ such that $\var{R}\subseteq \dom{\phiEEO\cup\phiL{\cc}}$, 
for all $x\in \dom{\phiEEO\cup\phiL{\cc}}$,
if $R(\phiEEO\cup\phiL{\inst{\cc}{\sigma}{\sigma'}} = (\phiEEO\cup\phiL{\inst{\cc}{\sigma}{\sigma'}})(x)$ then
$R$ is a variable $y\in\dom{\phiEEO\cup\phiL{\cc}}$, or $R\in\CST\cup\FN$.

Similarly,
if $R(\phiEEO\cup\phiR{\inst{\cc}{\sigma}{\sigma'}} = (\phiEEO\cup\phiR{\inst{\cc}{\sigma}{\sigma'}})(x)$ then
$R$ is a variable $y\in\dom{\phiEEO\cup\phiR{\cc}}$ or $R\in\CST\cup\FN$.
\end{lemma}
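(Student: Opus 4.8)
The plan is to prove both halves of the statement (the left-frame and right-frame cases) by the same argument, proceeding by induction on the structure of the recipe $R$ and exploiting the saturation guaranteed by $\stepIII_{\Gammao}(\cc)$. First I would observe that $R$ must be destructor-free: since $R(\phiEEO\cup\phiL{\inst{\cc}{\sigma}{\sigma'}})$ is obtained by a pure substitution (there is no $\evalNA$ in the statement) and equals the message $(\phiEEO\cup\phiL{\inst{\cc}{\sigma}{\sigma'}})(x)$ --- write $\phi$ for this instantiated frame --- any destructor symbol occurring in $R$ would survive the substitution and prevent the result from being a message. Hence $R$ is built only from variables, names in $\FN$, constants, and the constructors $\PUBKNA,\VKNA,\ENCNA,\AENCNA,\SIGNNA,\PAIRNA,\HASHNA$. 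The two base cases ($R$ a variable in the domain, or $R\in\CST\cup\FN$) are exactly the conclusion, so it remains to rule out every genuine constructor at the head of $R$. Throughout I rely on the typing of the frame from Lemma \ref{lem-proof:cc-low}, on the characterization of low atomic values from Lemma \ref{lem-proof:type-key-nonce}, and on the description of $\phi(x)$ given by $\stepIII$: for $x\in\dom{\phiEEO}$ the value is a low nonce, a low key, or a public/verification key, while for $x\in\dom{\phiL{\cc}}$ it is (after applying $\sigma$) a low atomic term, a public/verification key, a constant, an encryption or signature under a secret key, or a hash carrying secret material.

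The core of the argument is that a destructor-free recipe over the frame cannot synthesize the ``opaque'' frame values. For a head $\PAIRNA$, the result $R(\phi)$ would be a pair; but no frame value is a pair at its root, since $\stepII$ has already opened every pair and every low-key encryption, and $\sigma$ cannot create a pair at the root of a low atomic value, so this case is impossible. For heads $\ENCNA$ and $\SIGNNA$, $\stepIII$ forces the key to be a secret key $k$ with $\Gamma(k)=\skey{\S}{T}$; a matching recipe would need a sub-recipe computing $k$ exactly, but $k$ occurs neither in $\phiEEO$ nor at the top level of $\cc$, and --- being destructor-free --- the recipe cannot extract it from under an encryption, so by the induction hypothesis (or directly via Lemma \ref{lem-proof:type-key-nonce}) no sub-recipe yields $k$, a contradiction. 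For heads $\HASHNA$ and $\AENCNA$ the key may be public, so I instead use the secret-payload condition of $\stepIII$: the content contains a literal secret nonce (typed $\S$) or secret key directly under pairs, and this position survives the instantiation $\sigma$ because the secret is a literal and not a variable. I would then argue that a destructor-free recipe cannot place secret material at a position reachable through pairs from the root, since every frame value it may reference contains secret material only underneath an opaque (encryption/hash/signature) layer, never directly under pairs, and no frame value equals a bare secret nonce or key; this again yields a contradiction.

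The delicate cases are the public-key and verification-key heads $\PUBK{R_1}$ and $\VK{R_1}$, where $\phi(x)$ is $\PUBK{k}$ or $\VK{k}$ and the sub-recipe must satisfy $R_1(\phi)=k$. Here I would use Lemma \ref{lem-proof:cc-low} and Lemma \ref{lem-proof:type-key-nonce} to show that $R_1$ computes exactly the atomic key $k$: when $k$ is a secret key the synthesis argument above applies and gives a contradiction, so the only surviving situation is a low key, which by Lemma \ref{lem-proof:type-key-nonce} forces $R_1$ to reduce to a direct reference to that low key. I expect this step --- precisely pinning down which recipes can reproduce a frame value whose head is a public constructor, and reconciling it with the intended conclusion that $R$ is a bare variable or constant --- to be the main obstacle, because it is exactly here that the ``opacity under secret keys'' intuition does not by itself close the case and one must appeal to the low-key typing together with the shape constraints imposed by $\stepIII$.

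Once every constructor head has been eliminated, only the variable and constant/free-name base cases remain, which establishes the claim for the left frame; the statement for the right frame follows by the symmetric argument, invoking the right-hand typing of the frame provided by Lemma \ref{lem-proof:cc-low} and the symmetric parts of Lemma \ref{lem-proof:type-key-nonce}.
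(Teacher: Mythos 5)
On every case you actually close, your argument coincides with the paper's own proof: the paper also treats this as a case analysis on the head of $R$ under purely syntactic substitution (no $\evalNA$), dismisses destructor heads and $\PAIRNA$ because $\stepIII_{\Gammao}(\cc)$ guarantees no frame entry is a pair or a destructor-headed term, handles $\ENCNA$/$\SIGNNA$ by observing that the key subrecipe would have to be a variable $z$ with $\phi(z)=k$ for a \emph{secret} $k$, contradicting the fact that all bare keys in $\phiEEO\cup\phiL{\cc}$ are low (Lemma~\ref{lem-proof:type-key-nonce}, definition of $\phiEEO$, $\stepIII$), and handles $\AENCNA$/$\HASHNA$ via the secret-nonce-or-key-directly-under-pairs condition of $\stepIII$, which survives instantiation because the secret is a literal; the recipe would then need a variable mapped to that bare secret, again a contradiction. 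Your destructor-freeness observation is a mild strengthening of what the paper states, and is sound.

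The case you flag as the main obstacle --- heads $\PUBKNA$ and $\VKNA$ --- is a genuine gap, but it is not yours alone: the paper's proof \emph{silently omits} these two heads from its case list, and your suspicion that the low-key subcase cannot be reconciled with the stated conclusion is correct, because the lemma is false as literally written there. Take any $k$ with $\Gammao(k)=\skey{\L}{T}$: by definition $\phiEEO$ contains both $k$ (at some position $z$) and $\PUBK{k}$ (at some position $x$), so $R=\PUBK{z}$ satisfies $R(\phiEEO\cup\phiL{\inst{\cc}{\sigma}{\sigma'}})=\PUBK{k}=(\phiEEO\cup\phiL{\inst{\cc}{\sigma}{\sigma'}})(x)$, yet $R$ is neither a variable nor in $\CST\cup\FN$ (when $k$ is secret, your synthesis argument does close the case, since $k$ itself is then unobtainable). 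The repair is to weaken the conclusion to additionally allow $R$ of the form $\PUBK{y}$ or $\VK{y}$ for a frame variable $y$ whose value is a (low) key. This weaker statement suffices for the lemma's only use, in the proof of Lemma~\ref{lem-proof:inst-stateq}: there one needs that when one side of a disequation of recipes is a variable, the other side evaluates to the same value on the left and right frames; for $R=\PUBK{y}$ or $\VK{y}$ this still holds, because low keys lie in $\phiEEO$, which is common to both frames (and Lemma~\ref{lem-proof:type-key-nonce} transfers $\PUBK{k}$-valued entries across the two frames exactly as it does constants), so the downstream static-equivalence argument goes through unchanged.
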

\begin{proof}
We only detail the proof for $\phiL{\inst{\cc}{\sigma}{\sigma'}}$, as the proof for $\phiR{\cc}$ is similar.

We distinguish several cases for $R$.

\begin{itemize}

\item \case{If $R = a\in\CST\cup\FN$}: the claim clearly holds.

\item \case{If $R = x\in\AX$} then the claim trivially holds.

\item \case{If $R = \ENC{S}{K}$ or $\SIGN{S}{K}$ for some recipes $S$, $K$}:
these two cases are similar, we only detail the encryption case.
$(\phiEEO\cup\phiL{\inst{\cc}{\sigma}{\sigma'}})(x)$ is then an encrypted message, which, because of the form of $\cc$,
implies that $K(\phiEEO\cup\phiL{\inst{\cc}{\sigma}{\sigma'}}) = k$ for some $k\in\K$ such that $\Gamma(k) = \skey{\S}{T}$ for some $T$.
This is only possible if there exists a variable $z$ such that $K = z$
and $(\phiEEO\cup\phiL{\inst{\cc}{\sigma}{\sigma'}})(z) = k$, which, by $\stepIII_{\Gammao}(\cc)$ and the definition of $\phiEEO$, implies that $\Gamma(k) = \skey{\L}{T'}$ for some $T'$, which is contradictory.

\item \case{If $R = \AENC{S}{K}$ or $\HASH{S}$ for some recipes $S$, $K$}:
these two cases are similar, we only detail the encryption case.
$(\phiEEO\cup\phiL{\inst{\cc}{\sigma}{\sigma'}})(x)$ is then an asymmetrically encrypted message, which, because of the form
of $\cc$, implies that
$S(\phiEEO\cup\phiL{\inst{\cc}{\sigma}{\sigma'}})$ contains directly under pairs a nonce $n$ such that $\Gammao(n) = \noncetypelab{\S}{a}{n}$, or a key $k\in\K$ such that $\Gammao(k) = \skey{\S}{T}$ for some $T$.
This is only possible if there exists a recipe $S'$ such that $S(\phiEEO\cup\phiL{\inst{\cc}{\sigma}{\sigma'}}) = n$ (resp. $k$).

Since $R$ can only contain names from $\FN$, this implies that there exists a variable $z$ such that
$(\phiEEO\cup\phiL{\inst{\cc}{\sigma}{\sigma'}})(z) = n$ (resp. $k$), which, by $\stepIII_{\Gammao}(\cc)$, and the definition of $\phiEEO$, implies that $\Gammao(n) = \noncetypelab{\L}{a}{n}$ (resp. $\Gammao(k) = \skey{\L}{T'}$ for some $T'$), which is contradictory.

\item Finally, the head symbol of $R$ cannot be $\PAIRNA$, 
$\DECNA$, $\ADECNA$, $\CHECKNA$, $\FSTNA$, $\SNDNA$, 
because of the form of $\cc$ ($\stepIII_{\Gammao}(\cc)$).
\end{itemize}

\end{proof}

\bigskip\bigskip
We now assume that $\stepIV_{\Gammao}(\cc)$ holds.
Note that since $\Gamma$, and hence $\Gammao$, do not contain refinements or nonces with infinite nonce types, the
$\stepIV_{\Gammao}(\cc)$ condition is simpler than in the general case.
Indeed the condition on the most general unifier $\mu$ is always trivially satisfied,
and the substitution $\theta$ is the identity.

\begin{lemma}
\label{lem-proof:inst-stateq}
For all $\sigma$, $\sigma'$ such that $\exists \Gamma'\subseteq \Gammao. \;\exists c_\sigma.\;\wtc{\Delta}{\E}{\E'}{\sigma}{\sigma'}{\Gamma'}{c_\sigma}$,
the frames $\NEWN{\EGO}.\;\phiEEO \cup \phiL{\inst{\cc}{\sigma}{\sigma'}}$ and $\NEWN{\EGO}.\;\phiEEO \cup \phiR{\inst{\cc}{\sigma}{\sigma'}}$
are statically equivalent.
\end{lemma}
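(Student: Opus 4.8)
The plan is to establish the two defining conditions of static equivalence for the frames $\phi_L := \phiEEO \cup \phiL{\inst{\cc}{\sigma}{\sigma'}}$ and $\phi_R := \phiEEO \cup \phiR{\inst{\cc}{\sigma}{\sigma'}}$: that they have the same domain (immediate, since $\phiEEO$ is identical on both sides and $\phiL{\inst{\cc}{\sigma}{\sigma'}}$, $\phiR{\inst{\cc}{\sigma}{\sigma'}}$ list left and right terms in the same order) and that they satisfy the same tests. First I would reduce to low-typed frames. By Lemma~\ref{lem-proof:cc-low} the frames of $\cc$ are low-typed in $\Gammao$, and since $\Gammao$ carries no refinement type on any variable (the finite ones are eliminated by $\stepI$ and we are in the case with no infinite ones), Lemma~\ref{lem-proof:subst-typing} transports this to the instantiated frames, yielding some $c$ with $\teqTc{\Gammao}{\Delta}{\E}{\E'}{\phi_L}{\phi_R}{\L}{c}$. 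Applying Lemma~\ref{lem-proof:l-type-recipe} to an arbitrary recipe $R$ then gives $\eval{R\phi_L} = \bot \iff \eval{R\phi_R} = \bot$, together with low-typing of the results when both are defined; this discharges all failure tests at once.

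For the equality tests it remains to show, for all recipes $R, S$ with $\eval{R\phi_L}, \eval{S\phi_L} \neq \bot$, that $\eval{R\phi_L} = \eval{S\phi_L} \iff \eval{R\phi_R} = \eval{S\phi_R}$. Using Lemma~\ref{lem-proof:cc-inst-recipe-dest} I would first replace $R$ and $S$ by destructor-free recipes computing the same values on both frames, so that it suffices to treat constructor-only recipes. I would then proceed by induction on the pair $(R, S)$, peeling off matching head constructors. Because the members of $\cc$ are, by the condition $\stepIII_{\Gammao}(\cc)$, opaque blobs (encryptions and signatures under secret keys, hashes of secrets) that the attacker cannot decompose, Lemma~\ref{lem-proof:recipe-variable} forces any constructor recipe that lands exactly on such a blob to be a variable. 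Hence the recursion bottoms out either in the public part $\phiEEO$, which is identical on both sides and so transfers equalities directly, or in variables pointing into $\cc$.

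The crux, and the step I expect to be the main obstacle, is the base case where both recipes bottom out at frame variables $x, y$ referring to constraints $M_x \eqC M_x'$ and $M_y \eqC M_y'$ of $\cc$ with $M_x\sigma = M_y\sigma$ (an equality observed on the left): I must deduce $M_x'\sigma' = M_y'\sigma'$. The equality $M_x\sigma = M_y\sigma$ exhibits $M_x, M_y$ as unifiable, so the test $\stepIV_{\Gammao}(\cc)$, which in this non-replicated setting reduces to checking $M_x'\alpha = M_y'\alpha$ for $\alpha$ the restriction of the most general unifier $\mu$ to low variables mapped to nonces, applies. The difficulty is to lift this purely syntactic test on $\mu$ to the concrete well-typed substitutions $\sigma, \sigma'$. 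The plan is to factor $\sigma$ through $\mu$ and to exploit well-typedness: low variables whose image is a nonce receive, by Lemma~\ref{lem-proof:type-key-nonce}, the same value on the left and on the right, so $\alpha$ agrees with the relevant part of both $\sigma$ and $\sigma'$, whereas every remaining variable is high-typed and therefore occurs only inside a sealed subterm that does not influence the top-level match; the symmetric clause of $\stepIV$ yields the converse implication. Carefully managing this correspondence between the most general unifier, the instantiations $\sigma, \sigma'$, and the placement of high-typed variables is the technical heart of the argument, and the remaining inductive cases for non-variable recipes are routine once it is settled.
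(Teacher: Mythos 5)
Your proposal is correct and follows essentially the same route as the paper's proof: the failure tests are handled via low typing of the frames and Lemma~\ref{lem-proof:l-type-recipe}, equality tests are reduced to destructor-free recipes via Lemma~\ref{lem-proof:cc-inst-recipe-dest}, the induction uses Lemma~\ref{lem-proof:recipe-variable} to force recipes hitting opaque constraint entries to be variables, and the variable--variable base case is settled exactly as in the paper by factoring $\sigma = \mu\theta$ through the mgu, invoking $\stepIV_{\Gammao}$, and using Lemma~\ref{lem-proof:type-key-nonce} to get $\alpha\sigma' = \sigma'$. One gloss is inaccurate but immaterial: the variables outside $\dom{\alpha}$ need not be high-typed and no ``sealed subterm'' argument is needed (nor available, since $\mu$ may send a low variable to a compound term or another variable) --- because $\stepIV_{\Gammao}$ yields the \emph{syntactic} identity $M'\alpha = N'\alpha$, applying $\sigma'$ to both sides and using $\alpha\sigma' = \sigma'$ concludes directly, which is precisely the paper's argument.
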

\begin{proof}
Let $R$, $S$ be two recipes such that
$\var{R}\cup\var{S} \subseteq \dom{\phiEEO\cup\phiL{\inst{\cc}{\sigma}{\sigma'}}} = \dom{\phiEEO\cup\phiR{\inst{\cc}{\sigma}{\sigma'}}}$.

Let us show that
\[\eval{(R(\phiEEO\cup\phiL{\inst{\cc}{\sigma}{\sigma'}}))} = \eval{(S(\phiEEO\cup\phiL{\inst{\cc}{\sigma}{\sigma'}}))}
\quad\Longleftrightarrow\quad
\eval{(R(\phiEEO\cup\phiR{\inst{\cc}{\sigma}{\sigma'}}))} = \eval{(S(\phiEEO\cup\phiR{\inst{\cc}{\sigma}{\sigma'}}))}\]

We only detail the proof for the $(\Rightarrow)$, as the other direction is similar.
We then assume that $\eval{(R(\phiEEO\cup\phiL{\inst{\cc}{\sigma}{\sigma'}}))} = \eval{(S(\phiEEO\cup\phiL{\inst{\cc}{\sigma}{\sigma'}}))}$.

Let us first note that
\[\eval{(R(\phiEEO\cup\phiL{\cc}))} \neq \bot \quad \Longleftrightarrow\quad \eval{(R(\phiEEO\cup\phiR{\cc}))} \neq \bot.\]
This follows from Lemmas~\ref{lem-proof:cc-low} and~\ref{lem-proof:l-type-recipe}.

Since $\wtc{\Delta}{\E}{\E'}{\sigma}{\sigma'}{\Gamma'}{c_\sigma}$, as by definition we only type messages,
$\sigma$ and $\sigma'$ only contain messages.
Hence, by Lemmas~\ref{lem-proof:red-inst} and~\ref{lem-proof:subst-red},
\[\eval{R(\phiEEO\cup\phiL{\inst{\cc}{\sigma}{\sigma'}})} \neq \bot \quad\Longleftrightarrow\quad \eval{R(\phiEEO\cup\phiL{\cc})} \neq \bot\]
and 
\[\eval{R(\phiEEO\cup\phiR{\inst{\cc}{\sigma}{\sigma'}})} \neq \bot \quad\Longleftrightarrow\quad \eval{R(\phiEEO\cup\phiR{\cc})} \neq \bot.\]
Hence, by chaining all these equivalences, we have
\[\eval{R(\phiEEO\cup\phiL{\inst{\cc}{\sigma}{\sigma'}})} \neq \bot \quad\Longleftrightarrow\quad \eval{R(\phiEEO\cup\phiR{\inst{\cc}{\sigma}{\sigma'}})} \neq \bot.\]

Similarly, we can show that
%
\[\eval{S(\phiEEO\cup\phiL{\inst{\cc}{\sigma}{\sigma'}})} \neq \bot \quad\Longleftrightarrow\quad \eval{S(\phiEEO\cup\phiR{\inst{\cc}{\sigma}{\sigma'}})} \neq \bot.\]

Therefore,
if $\eval{R(\phiEEO\cup\phiL{\inst{\cc}{\sigma}{\sigma'}})} = \bot$, \ie $\eval{S(\phiEEO\cup\phiL{\inst{\cc}{\sigma}{\sigma'}})} = \bot$, then
$\eval{R(\phiEEO\cup\phiR{\inst{\cc}{\sigma}{\sigma'}})} = \eval{S(\phiEEO\cup\phiR{\inst{\cc}{\sigma}{\sigma'}})} = \bot$,
and the claim holds.

Let us now assume that $\eval{R(\phiEEO\cup\phiL{\inst{\cc}{\sigma}{\sigma'}})} \neq \bot$,
then $\eval{R(\phiEEO\cup\phiR{\inst{\cc}{\sigma}{\sigma'}})} \neq \bot$ and $\eval{S(\phiEEO\cup\phiR{\inst{\cc}{\sigma}{\sigma'}})} \neq \bot$.

By Lemma~\ref{lem-proof:cc-inst-recipe-dest}, there exist $R'$, $S'$ without destructors such that
\begin{itemize}
\item $\var{R'}\subseteq \dom{\phiEEO\cup\phiL{\cc}}$ and $\var{S'}\subseteq \dom{\phiEEO\cup\phiL{\cc}}$,
\item $\eval{R(\phiEEO\cup\phiL{\inst{\cc}{\sigma}{\sigma'}})} = R'(\phiEEO\cup\phiL{\inst{\cc}{\sigma}{\sigma'}})$,
\item $\eval{R(\phiEEO\cup\phiR{\inst{\cc}{\sigma}{\sigma'}})} = R'(\phiEEO\cup\phiR{\inst{\cc}{\sigma}{\sigma'}})$.
\item $\eval{S(\phiEEO\cup\phiL{\inst{\cc}{\sigma}{\sigma'}})} = S'(\phiEEO\cup\phiL{\inst{\cc}{\sigma}{\sigma'}})$,
\item $\eval{S(\phiEEO\cup\phiR{\inst{\cc}{\sigma}{\sigma'}})} = S'(\phiEEO\cup\phiR{\inst{\cc}{\sigma}{\sigma'}})$.
\end{itemize}

Since $\eval{(R(\phiEEO\cup\phiL{\inst{\cc}{\sigma}{\sigma'}}))} = \eval{(S(\phiEEO\cup\phiL{\inst{\cc}{\sigma}{\sigma'}}))}$,
we have $R'(\phiEEO\cup\phiL{\inst{\cc}{\sigma}{\sigma'}}) = S'(\phiEEO\cup\phiL{\inst{\cc}{\sigma}{\sigma'}})$.

We show that $R'(\phiEEO\cup\phiR{\inst{\cc}{\sigma}{\sigma'}}) = S'(\phiEEO\cup\phiR{\inst{\cc}{\sigma}{\sigma'}})$.
by induction on the recipes $R'$, $S'$.
Since
$R'(\phiEEO\cup\phiL{\inst{\cc}{\sigma}{\sigma'}}) = S'(\phiEEO\cup\phiL{\inst{\cc}{\sigma}{\sigma'}})$,
we can distinguish four cases for $R'$ and $S'$.

\begin{itemize}
\item \case{If they have the same head symbol},
either this symbol is a nonce or constant and the claim is trivial, or it is a variable, and we handle this case later,
or it is a destructor or constructor.
We write the proof for this last case generically for $R' = f(R'')$ and $S' = f(S'')$.
We have necessarily
$R''(\phiEEO\cup\phiL{\inst{\cc}{\sigma}{\sigma'}}) = S''(\phiEEO\cup\phiL{\inst{\cc}{\sigma}{\sigma'}})$.
It then follows by applying the induction hypothesis to $R''$ and $S''$ that
$R''(\phiEEO\cup\phiR{\inst{\cc}{\sigma}{\sigma'}}) = S''(\phiEEO\cup\phiR{\inst{\cc}{\sigma}{\sigma'}})$.
The claim follows by applying $f$ on both sides of this equalities.

\item \case{If $R'$ is a variable and not $S'$:} then by Lemma~\ref{lem-proof:recipe-variable}, $S'\in\CST\cup\FN$.
Let us denote $R' = x$.
By Lemma~\ref{lem-proof:cc-low}, there exists $c_x$ such that
$\teqTc{\Gammao}{\Delta}{\E}{\E'}{(\phiEE\cup\phiL{\cc})(x)}{(\phiEE\cup \phiR{\cc})(x)}{\L}{c_x}$.
Since $(\phiEE\cup\phiL{\cc})(x) = R'(\phiEE\cup\phiL{\cc}) = S'(\phiEE\cup\phiL{\cc}) \in \CST\cup\FN$,
by Lemma~\ref{lem-proof:type-key-nonce}, we have $(\phiEE\cup\phiR{\cc})(x) = S'(\phiEE\cup\phiL{\cc})$,
\ie $R'(\phiEE\cup\phiR{\cc}) = S'(\phiEE\cup\phiR{\cc})$.

\item \case{If $S'$ is a variable and not $R'$:} this case is similar to the previous one.

\item \case{If $R'$, $S'$ are two variables $x$ and $y$,} we have
$(\phiEEO\cup\phiL{\inst{\cc}{\sigma}{\sigma'}})(x) = (\phiEEO\cup\phiL{\inst{\cc}{\sigma}{\sigma'}})(y)$.
We can then prove $(\phiEEO\cup\phiR{\inst{\cc}{\sigma}{\sigma'}})(x) = (\phiEEO\cup\phiR{\inst{\cc}{\sigma}{\sigma'}})(y)$.
Indeed:
\begin{itemize}
\item if $x,y\in\dom{\phiEEO}$, this follows from the definition of $\phiEEO$.
\item if $x\in\dom{\phiEEO}$ and $y\in\dom{\phiL{\inst{\cc}{\sigma}{\sigma'}}}$:
then by definition of $\phiEEO$, 
$R'(\phiEEO\cup\phiL{\inst{\cc}{\sigma}{\sigma'}}) = \phiEEO(x)$ is a nonce, key, public key, or verification key.
Hence $\phiL{\inst{\cc}{\sigma}{\sigma'}}(y)$ is also a nonce, key, public key or verification key.
By $\stepIII_{\Gammao}(\cc)$, $\phiL{\inst{\cc}{\sigma}{\sigma'}}(y) = \phiL{\cc}(y)$.
This implies that $\phiL{\cc}(y)$ is also a nonce, key, public key or verification key.
By Lemma~\ref{lem-proof:cc-low}, there exists $c_y$ such that
$\teqTc{\Gammao}{\Delta}{\E}{\E'}{\phiL{\cc}(y)}{\phiR{\cc}(y)}{\L}{c_y}$,
and hence by Lemma~\ref{lem-proof:type-key-nonce}, $\phiR{\cc}(y) = \phiL{\cc}(y)$.
That is to say
\[R'(\phiEEO\cup\phiR{\inst{\cc}{\sigma}{\sigma'}} = (\phiEEO\cup\phiR{\inst{\cc}{\sigma}{\sigma'}}(x) =
\phiEEO(x) = (\phiEEO\cup\phiL{\inst{\cc}{\sigma}{\sigma'}})(y) = \phiL{\cc}(y) = \phiR{\cc}(y) = 
S'(\phiEEO\cup\phiR{\inst{\cc}{\sigma}{\sigma'}}.\]

\item if $x, y \in \dom{\phiL{\inst{\cc}{\sigma}{\sigma'}}}$:
then there exist $M\eqC M'\in \cc$, $N\eqC N'\in\cc$ such that
$\phiL{\inst{\cc}{\sigma}{\sigma'}}(x)=M\sigma$,
$\phiR{\inst{\cc}{\sigma}{\sigma'}}(x)=M'\sigma'$,
$\phiL{\inst{\cc}{\sigma}{\sigma'}}(y)=N\sigma$,
$\phiR{\inst{\cc}{\sigma}{\sigma'}}(y)=N'\sigma'$.
Since $M\sigma = N\sigma$, $M$, $N$ are unifiable,
let $\mu$ be their most general unifier. There exists $\theta$ such that $\sigma = \mu\theta$.

Let then $\alpha$ be the restriction of $\mu$ to
$\{x\in\var{M}\cup\var{N}\;|\; \Gammao(x)=\L \;\wedge\; \mu(x)\in\N \text{ is a nonce}\}$.

By $\stepIV_{\Gammao}(\cc)$, we have $M'\alpha = N'\alpha$.

\medskip

Let $x\in\dom{\alpha}$. Then $\mu(x) = n$ for some $n\in\N$.
Thus $\sigma(x)=\mu(x)\theta = n$.

Since, by assumption, $\exists \Gamma'\subseteq \Gammao. \;\exists c_\sigma.\;\wtc{\Delta}{\E}{\E'}{\sigma}{\sigma'}{\Gamma'}{c_\sigma}$, there exists some $c_x$ such that
$\teqTc{\emptyset}{\Delta}{\E}{\E'}{n}{\sigma'(x)}{\L}{c_x}$.

Hence by Lemma~\ref{lem-proof:type-key-nonce}, $\sigma'(x)= n = \mu(x) = \alpha(x)$.

Since this holds for any $x\in\dom{\alpha}$, we have $\alpha\sigma'=\sigma'$.

\medskip
Therefore, since $M'\alpha = N'\alpha$, we have $M'\alpha\sigma' = N'\alpha\sigma'$, \ie $M'\sigma' = N'\sigma'$.
This proves the claim.
\end{itemize}
\end{itemize}

Finally, since
$R'(\phiEEO\cup\phiL{\inst{\cc}{\sigma}{\sigma'}}) = S'(\phiEEO\cup\phiL{\inst{\cc}{\sigma}{\sigma'}})$,
we have
$\eval{R(\phiEEO\cup\phiR{\inst{\cc}{\sigma}{\sigma'}})} = \eval{S(\phiEEO\cup\phiR{\inst{\cc}{\sigma}{\sigma'}})}$.

This proves the property.
\end{proof}

\begin{lemma}
$c$ is consistent in $\Gamma$. 
\end{lemma}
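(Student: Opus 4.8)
The plan is to assemble the final statement from the three preceding lemmas, reading the chain of reductions from $c$ to $\c$ to $\cc$ backwards. The only genuinely remaining work is to establish that $\cc$ is consistent in $\Gammao$; once that is available, Lemma~\ref{lem-proof:cc-co-const} yields that $\c$ is consistent in $\Gammao$, and Lemma~\ref{lem-proof:co-c-const} then gives that $c$ is consistent in $\Gamma$, which is exactly the claim. So the proof reduces to showing consistency of $\cc$, for which I would invoke Lemma~\ref{lem-proof:inst-stateq}.

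To prove that $\cc$ is consistent in $\Gammao$, I would unfold the definition of consistency: fix an arbitrary subset $c' \subseteq \cc$, an environment $\Gamma' \subseteq \Gammao$ with $\novar{\Gamma'} = \novar{\Gammao}$ and $\var{c'}\subseteq\dom{\Gamma'}$, and two ground substitutions $\sigma$, $\sigma'$ well-typed in $\Gamma'$. The goal is the static equivalence of the frames $\NEWN{\EGO}.(\phiEEO \cup \phiL{c'}\sigma)$ and $\NEWN{\EGO}.(\phiEEO \cup \phiR{c'}\sigma')$. Since $\Gamma'\subseteq\Gammao$ and $\sigma$, $\sigma'$ are well-typed in $\Gamma'$, the hypothesis of Lemma~\ref{lem-proof:inst-stateq} is satisfied, so that lemma provides static equivalence of the corresponding \emph{full} frames $\NEWN{\EGO}.(\phiEEO \cup \phiL{\inst{\cc}{\sigma}{\sigma'}})$ and $\NEWN{\EGO}.(\phiEEO \cup \phiR{\inst{\cc}{\sigma}{\sigma'}})$, where $\phiL{\inst{\cc}{\sigma}{\sigma'}} = \phiL{\cc}\sigma$ and similarly on the right.

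It then remains to descend from these full frames to the sub-frames induced by $c' \subseteq \cc$, and I expect this to be the main (though routine) obstacle. The observation is that the frame associated to $c'$ exposes a subset of the axioms exposed by the frame associated to $\cc$, so every pair of recipes $R$, $S$ over the restricted domain is in particular a pair of recipes over the full domain using only those axioms. Consequently any distinguishing test on the restricted frames is also a distinguishing test on the full frames; since the full frames are statically equivalent, so are the restricted ones. This argument relies on $\phiEEO$ being unchanged under restriction and on the private names $\EGO$ being the same, both of which hold because we work throughout in $\Gammao$ rather than passing between environments. Having shown static equivalence for arbitrary $c'$, $\Gamma'$, $\sigma$, $\sigma'$, we conclude that $\cc$ is consistent in $\Gammao$, and the chain through Lemma~\ref{lem-proof:cc-co-const} and Lemma~\ref{lem-proof:co-c-const} completes the proof.
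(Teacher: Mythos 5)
Your proposal matches the paper's proof essentially verbatim: the paper likewise reduces the claim via Lemmas~\ref{lem-proof:co-c-const} and~\ref{lem-proof:cc-co-const} to the consistency of $\cc$ in $\Gammao$, establishes static equivalence of the full instantiated frames by Lemma~\ref{lem-proof:inst-stateq}, and then passes to the sub-frames of $c'\subseteq\cc$ by observing they are subsets of statically equivalent frames. Your spelled-out recipe-restriction argument for that last step is exactly the justification the paper leaves implicit, so the proposal is correct and takes the same route.
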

\begin{proof}
By Lemmas~\ref{lem-proof:co-c-const} and~\ref{lem-proof:cc-co-const}, it suffices to show that 
$\cc$ is consistent in $\Gammao$. 

Let $c'\subseteq \cc$, $\Gamma'\subseteq\Gammao$ be such that $\var{c'}\subseteq\dom{\Gamma'}$.
Let $\sigma$, $\sigma'$ be such that $\wtc{\Delta}{\E}{\E'}{\sigma}{\sigma'}{\Gamma'}{c_\sigma}$ for some $c_\sigma$.


By Lemma~\ref{lem-proof:inst-stateq},
the frames $\NEWN{\EGO}.\;\phiEEO \cup \phiL{\inst{\cc}{\sigma}{\sigma'}}$ and $\NEWN{\EGO}.\;\phiEEO \cup \phiR{\inst{\cc}{\sigma}{\sigma'}}$
are statically equivalent.

Since the frames $\NEWN{\EGO}.\;\phiEEO \cup \phiL{\inst{c'}{\sigma}{\sigma'}}$ and $\NEWN{\EGO}.\;\phiEEO \cup \phiR{\inst{c'}{\sigma}{\sigma'}}$
are subsets of these frames, they also are statically equivalent.

Therefore $\cc$ is consistent in $\Gammao$. 
\end{proof}

This next theorem corresponds to Theorem~\ref{thm:soundness-proc-consistency}.

\begin{theorem}[Soundness of the procedure]
\label{thm-proof:soundness-proc-consistency}
Let $C$ be a constraint set without infinite nonce types, \ie
\[\forall (c,\Gamma)\in C.\;\forall l, l', m, n. 
 \Gamma(x) \neq \LRTn{l}{\infty}{m}{l'}{\infty}{n}.\]
If $\checkconst(C)$ succeeds,
then $C$ is consistent. 
\end{theorem}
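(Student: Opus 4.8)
The plan is to reduce the global statement to a purely local, per-constraint claim and then to peel off the four steps of $\checkconststar$ one at a time, invoking the lemmas already established. First I would observe that, by definition, a constraint set $C$ is consistent exactly when every $(c,\Gamma)\in C$ is consistent, i.e.\ when each $c$ is consistent in its environment $\Gamma$. Hence it suffices to fix an arbitrary $(c,\Gamma)\in C$ and prove that $c$ is consistent in $\Gamma$. Unfolding the hypothesis $\checkconststar(C)=\mathtt{true}$ on this pair, and writing $(\c,\Gammao)=\stepI_\Gamma(c)$ and $\cc=\stepII_{\Gammao}(\c)$, I may assume $\stepIII_{\Gammao}(\cc)=\mathtt{true}$ and $\stepIV_{\Gammao}(\cc)=\mathtt{true}$; these are precisely the standing assumptions that the chain of lemmas below requires. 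The hypothesis that $C$ contains no type of the form $\LRTn{l}{\infty}{m}{l'}{\infty}{n}$ is what lets me use $\stepIV$ in its simplified shape, where the condition on the most general unifier is trivially met and the auxiliary substitution is the identity.

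The second part of the plan is to climb back up the pipeline to reduce everything to a single static-equivalence claim about $\cc$. I would use Lemma~\ref{lem-proof:co-c-const} to reduce consistency of $c$ in $\Gamma$ to consistency of $\c$ in $\Gammao$, discharging $\stepI$: since $\stepI$ merely replaces the finite-refinement variables by their unique left/right values, and any well-typed ground substitution must send such a variable to exactly those values by Lemma~\ref{lem-proof:lr-ground}, this substitution can be folded into the instantiation. Next I would use Lemma~\ref{lem-proof:cc-co-const} to reduce consistency of $\c$ in $\Gammao$ to consistency of $\cc$ in $\Gammao$, discharging $\stepII$: every term of $\c$ is reconstructible by a destructor-free recipe on the opened constraint $\cc$ (Lemma~\ref{lem-proof:split-recipe}), so static equivalence transfers through the opening. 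At this point the whole problem has been reduced to showing that $\cc$ is consistent in $\Gammao$.

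The heart of the argument, and the step I expect to be hardest, is this remaining static-equivalence claim, namely Lemma~\ref{lem-proof:inst-stateq}: for all $\sigma,\sigma'$ well-typed in a subenvironment of $\Gammao$, the frames $\NEWN{\EGO}.(\phiEEO\cup\phiL{\inst{\cc}{\sigma}{\sigma'}})$ and $\NEWN{\EGO}.(\phiEEO\cup\phiR{\inst{\cc}{\sigma}{\sigma'}})$ are statically equivalent. I would prove it by taking two recipes $R,S$ and showing the test $R=S$ returns the same verdict on both frames. The first ingredient is that the opened frames are themselves low (Lemmas~\ref{lem-proof:co-low} and~\ref{lem-proof:cc-low}), so by Lemma~\ref{lem-proof:l-type-recipe} a recipe fails on the left iff it fails on the right, and by Lemma~\ref{lem-proof:red-inst} this is insensitive to the instantiation, synchronising failure across the two sides. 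The second ingredient is saturation: $\stepIII$ guarantees that $\cc$ contains only terms the attacker cannot decompose further (secret-key encryptions and signatures, honest hashes over secret content, low atoms), so any non-failing recipe can be rewritten as a destructor-free one by Lemmas~\ref{lem-proof:cc-recipe-dest} and~\ref{lem-proof:cc-inst-recipe-dest}. I would then compare the two destructor-free recipes by structural induction; the only delicate base case is when both reduce to variables $x,y$ pointing into the constraint, where I must show $\phiL{\inst{\cc}{\sigma}{\sigma'}}(x)=\phiL{\inst{\cc}{\sigma}{\sigma'}}(y)$ forces the analogous equality on the right.

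It is exactly this variable--variable case that forces the use of $\stepIV$, and it is the subtle point of the proof. Writing the two relevant constraint entries as $M\eqC M'$ and $N\eqC N'$ in $\cc$, the left equality says $M,N$ are unifiable with most general unifier $\mu$, with $\sigma$ factoring as $\mu\theta$; the $\stepIV$ check then yields $M'\alpha=N'\alpha$, where $\alpha$ is the restriction of $\mu$ to the low nonce variables. The obstacle is to lift this syntactic equality to $M'\sigma'=N'\sigma'$: I would argue that for each $x\in\dom{\alpha}$ well-typedness of $\sigma,\sigma'$ together with Lemma~\ref{lem-proof:type-key-nonce} forces $\sigma'(x)=\alpha(x)$, whence $\alpha\sigma'=\sigma'$ and therefore $M'\sigma'=M'\alpha\sigma'=N'\alpha\sigma'=N'\sigma'$. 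Once Lemma~\ref{lem-proof:inst-stateq} is available, consistency of $\cc$ in $\Gammao$ is immediate (restrict to the relevant sub-constraint), composing the three reductions gives consistency of $c$ in $\Gamma$, and ranging over all pairs $(c,\Gamma)\in C$ yields that $C$ is consistent, as required.
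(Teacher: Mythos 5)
Your proposal is correct and follows essentially the same route as the paper's own proof: the same per-constraint reduction, the same peeling of $\stepI$ and $\stepII$ via Lemmas~\ref{lem-proof:co-c-const} and~\ref{lem-proof:cc-co-const}, and the same core static-equivalence argument (Lemma~\ref{lem-proof:inst-stateq}) built from lowness of the opened frames, synchronized failure, destructor elimination, and the variable--variable case where $\stepIV$'s syntactic equality $M'\alpha = N'\alpha$ is lifted to $M'\sigma' = N'\sigma'$ via well-typedness and Lemma~\ref{lem-proof:type-key-nonce}. You even identify, as the paper does, that the absence of infinite nonce types trivializes the unifier condition in $\stepIV$ and makes the auxiliary substitution the identity.
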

\begin{proof}
The previous lemmas directly imply that for all $(c,\Gamma)\in C$, $c$ is consistent in $\Gamma$.
This proves the theorem.
\end{proof}

\subsection{Consistency for replicated processes}

In this subsection, we prove the results regarding the procedure when checking consistency in the replicated case.

In this subsection, we only consider constraints obtained by typing processes (with the same key types).
Notably, by the well-formedness assumptions on the processes, this means that a nonce $n$ is always associated with the same nonce type.

This theorem corresponds to Theorem~\ref{thm:consistency-two-to-n}.

\begin{theorem}
\label{thm-proof:consistency-two-to-n}
Let 
$C$ and $C'$ be two constraint sets such that
\[\forall (c,\Gamma)\in C.\; \forall (c',\Gamma')\in C'.\; \dom{\onlyvar{\Gamma}}\cap\dom{\onlyvar{\Gamma'}} = \emptyset.\]

For all $n\in\mathbb{N}$, 
if $\checkconststar(\instCst{C}{1}{n} \UnionCart \instCst{C}{2}{n} \UnionCart \instCst{C'}{1}{n}) = \mathtt{true}$,
then $\checkconststar(\UnionCart_{1\leq i \leq n}\instCst{C}{i}{n})\UnionCart\instCst{C'}{1}{n}) = \mathtt{true}$.
\end{theorem}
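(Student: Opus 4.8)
The plan is to argue by contraposition: assuming $\checkconststar$ returns $\mathtt{false}$ on $(\UnionCart_{1\leq i \leq n}\instCst{C}{i}{n})\UnionCart\instCst{C'}{1}{n}$, I would exhibit an element of $\instCst{C}{1}{n} \UnionCart \instCst{C}{2}{n} \UnionCart \instCst{C'}{1}{n}$ on which $\checkconststar$ also fails. The structural fact driving the whole argument is that any element $(c,\Gamma)$ of the big union decomposes as $(\bigcup_{i=1}^{n} c_i \cup c',\; \bigcup_{i=1}^{n}\Gamma_i \cup \Gamma')$ with $(c_i,\Gamma_i)\in\instCst{C}{i}{n}$ and $(c',\Gamma')\in\instCst{C'}{1}{n}$, and that each $\instCst{C}{i}{n}$ is obtained from a fixed $(\tilde c,\tilde\Gamma)\in C$ purely by reindexing its infinite-type nonces and its variables with the session index $i$ (finite nonces, keys, and free names being shared across sessions). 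Hence the session-$i$ fragment is the exact image of the session-$1$ or session-$2$ fragment under an injective index renaming $\rho$ that maps $m_k \mapsto m_{\rho(k)}$ and $x_k \mapsto x_{\rho(k)}$ while fixing everything else.

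First I would dispatch $\stepI$, $\stepII$ and $\stepIII$. These act element-wise on the constraints ($\stepII$ recurses, but only within a single constraint $M\eqC N$), and they depend only on term structure and on the labels recorded in the environment, both of which are preserved by $\rho$. Thus a failure of any of these three steps on $(c,\Gamma)$ localises to a single constraint coming from some session $i$ or from $c'$; relocating it to session $1$ or $2$ via $\rho$ yields a matching failing constraint inside a suitably completed, environment-compatible element of the two-session union, contradicting the assumption. This part is routine.

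The real work is $\stepIV$. A failure here produces constraints $M\eqC M'$ and $N\eqC N'$ in the $\stepII$-processed $c$ such that $M,N$ (or, in the symmetric test, $M',N'$) unify with an mgu $\mu$ meeting the index condition, yet the required equality $M'\alpha\theta = N'\alpha\theta$ (respectively $M\alpha\theta = N\alpha\theta$) fails. Say the two constraints lie in the session-$i$ and session-$j$ fragments, where either may instead come from $c'$, which is pinned to session $1$. I would choose a permutation $\rho$ of the session indices sending $\{i,j\}$ into $\{1,2\}$: mapping $i\mapsto 1, j\mapsto 2$ when both come from $C$ and $i\neq j$; collapsing to index $1$ when $i=j$; and, in the mixed case, fixing the $C'$ fragment at index $1$ and sending the $C$ fragment to the other slot. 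The crux is to verify that $\rho$ transports the entire failure: since $\rho$ is injective on the names and variables involved, $\rho(M),\rho(N)$ unify with the $\rho$-transported unifier $\mu' = \rho\circ\mu\circ\rho^{-1}$, the index condition is preserved (a witness $x\mu = m_k$ becomes $\rho(x)\mu' = m_{\rho(k)}$), and the induced substitutions $\alpha$ (the restriction of $\mu$ to low nonce variables) and $\theta$ transform covariantly, so the inequality survives. The relocated pair then sits inside a single element of $\instCst{C}{1}{n} \UnionCart \instCst{C}{2}{n} \UnionCart \instCst{C'}{1}{n}$, whose $\stepIV$ therefore also fails.

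I expect the main obstacle to be the faithful transport of the $\stepIV$ failure across $\rho$, for two reasons. First, the mixed case forces $\rho$ to keep the $C'$ fragment at session $1$ without colliding with the relocated $C$ fragment; this uses precisely the hypothesis that $C$ and $C'$ share no variables, together with the disjointness of the indexed name and variable families. Second, because it is the exact pattern of which indices coincide and which differ across the two chosen constraints that decides whether $M'\alpha\theta = N'\alpha\theta$ holds (finite nonces such as a shared $p$ being fixed while the infinite-type nonces $m_i, m_j$ stay distinct images $m_1,m_2$), I must check that $\alpha$ and $\theta$ commute with $\rho$ so that this pattern is reproduced verbatim in sessions $1$ and $2$. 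Once these commutations are established, the contrapositive conclusion follows.
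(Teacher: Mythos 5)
Your proposal is, up to the contrapositive framing, the paper's own proof: it too decomposes each element of the $n$-fold product into per-session fragments plus the $C'$ fragment, transports the one or two relevant constraints into sessions $1$ and $2$ via index transpositions $\delta^1_i\delta^2_j$ (your single permutation $\rho$), observes that $\stepI$ and $\stepII$ commute with the renaming and that the $\stepIII$ shape conditions are preserved by it, pads with an arbitrary element of the missing session (whence the paper's separate treatment of the case where $C$ is empty) so as to land inside one element of $\instCst{C}{1}{n}\UnionCart\instCst{C}{2}{n}\UnionCart\instCst{C'}{1}{n}$, and transports the mgu and $\alpha$ covariantly exactly as you sketch (here $\theta$ is in fact the identity, since the expanded environments contain no infinite nonce types). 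One small imprecision, not a gap: the fragments are not \emph{purely} indexed by their session --- environment branches may pick refinement indices from any session, so in the mixed case the renaming genuinely acts on the $C'$ fragment (the paper replaces its environment by its image under $\delta^1_i$, which is again a branch of the same expansion and hence still yields an element of $\instCst{C'}{1}{n}$) --- but since your $\rho$ is a global index permutation and the expanded union types are permutation-invariant, your argument goes through unchanged.
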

\begin{proof}
\newcommand{\Gammain}{\instG{\Gamma}{i}{n}}
\newcommand{\Cpi}{\instCst{C'}{1}{n}}
\newcommand{\Cudp}{\Ci\UnionCart\Cii\UnionCart\Cpi}
\newcommand{\UCnp}{(\UCn)\UnionCart\Cpi}
\newcommand{\bc}{\overline{c}}
\newcommand{\bci}[1]{\overline{c_{#1}}}
\newcommand{\cci}[1]{\widetilde{c_{#1}}}
\newcommand{\cij}[3]{c^{#2,#3}_#1}
\newcommand{\bcij}[3]{\overline{c}^{#2,#3}_#1}
\newcommand{\ccij}[3]{\widetilde{c}^{#2,#3}_#1}
\newcommand{\Gammaij}[3]{\Gamma^{#2,#3}_#1}
\newcommand{\Gammaoij}[3]{\Gammao^{#2,#3}_#1}
\newcommand{\bcp}[1]{\overline{c'}^{#1}}
\newcommand{\ccp}[1]{\widetilde{c'}^{#1}}
\newcommand{\Gammap}[1]{{\Gamma'}^{#1}}
\newcommand{\Gammaop}[1]{\overline{\Gamma'}^{#1}}

Let $n\in\mathbb{N}$.
Let $C$ be such that $\forall (c,\Gamma)\in C.\; \forall (c',\Gamma')\in C'.\; \dom{\Gamma}\cap\dom{\Gamma'} = \emptyset$.

Let $(c,\Gamma)\in\UCnp$.
By definition of $\UnionCart$, there exists $(c',\Gamma')\in\Cpi$,
and for all $i\in\unn$, there exists $(c_i, \Gamma_i)\in\Cin$, such that
\begin{itemize}
\item $c = (\cup_{1\leq i \leq n} c_i)\cup c'$;
\item $\Gamma = (\uplus_{1\leq i \leq n} \Gamma_i)\uplus \Gamma'$.
\end{itemize}

\medskip

Let $i\in\unn$.
Since $(c_i, \Gamma_i)\in\Cin$, by definition of $\Cin$
there exists $(c'_i, \Gamma'_i)\in C$ such that
\begin{itemize}
\item $c_i = \instConst{c'_i}{i}{\Gamma_i}$;
\item $\Gamma_i \in \branch{\instG{\Gamma'_i}{i}{n}}$.
\end{itemize}
Note that this implies $\dom{\onlyvar{\Gamma_i}}$ only contains variables indexed by $i$,
and, from the assumption
that $\var{c'_i}\subseteq\dom{\onlyvar{\Gamma'_i}}$, that $\var{c_i}\subseteq\dom{\onlyvar{\Gamma_i}}$.


\medskip

For all $i\in\unn$, let $\delta^1_i$ denote the function on terms
which consists in exchanging all occurrences of the indices $i$ and 1,
\ie replacing any occurrence of $m_i$ (for all nonce $m$ with an infinite nonce type) with $m_1$,
any occurrence of $m_1$ with $m_i$,
any occurrence of $x_i$ with $x_1$ (for all variable $x$),
and any occurrence of $x_1$ with $x_i$ (also for all variable $x$).

We also (abusing notations) apply this function to constraints, types, typing environments and constraint sets.
In the case of types we use it to denote the replacement of nonces appearing in the refinements.
In the case of typing  environments it denotes the replacement of nonces appearing in the types, and of nonces and variables in the domain of the environment, \ie $(\delta^1_i(\Gamma))(x_1) = \delta^1_i (\Gamma(x_i))$.
In the case of constraint sets it denotes the application of the function to each constraint and environment in the constraint set.


Similarly, we denote $\delta^2_i$ the function exchanging indices $i$ and 2.

For all $h\in\unn$ and all $i\neq j\in\unn$, such that $i\neq 2$ and $j\neq 1$,
let \[(\cij{h}{i}{j}, \Gammaij{h}{i}{j}) = (c_h, \Gamma_h)\delta^1_i\delta^2_j.\]

%

Similarly, for all $h\in\unn$ and all $i\in\unn$,
let
\[(\cij{h}{i}{i}, \Gammaij{h}{i}{i}) = (c_h, \Gamma_h)\delta^1_i.\]

Finally, for all $i\in\unn$, let $\Gammap{i}$ be the typing environment such that
$\dom{\Gammap{i}}=\dom{\Gamma'}$ and $\forall x\in\dom{\Gammap{i}}.\; \Gammap{i}(x)=\Gamma'(x)\delta^1_i$.

\medskip

Since $(c_i, \Gamma_i)\in\instCst{C}{i}{n}$,
we can show that $(\cij{i}{i}{j}, \Gammaij{i}{i}{j})\in\instCst{C}{1}{n}$.
Indeed, recall that 
there exists $(c'_i, \Gamma'_i)\in C$ such that
$c_i = \instConst{c'_i}{i}{\Gamma_i}$ and $\Gamma_i \in \branch{\instG{\Gamma'_i}{i}{n}}$.
$c_i$ only contains variables and names indexed by $i$, hence it is clear that 
$\cij{i}{i}{j} = \instConst{c'_i}{1}{\Gamma_i}$.
Moreover, since $\Gamma_i \in \branch{\instG{\Gamma'_i}{i}{n}}$,
it is clear that
$\Gammaij{i}{i}{j}\in \branch{\instG{\Gamma'_i}{i}{n}\delta^1_i\delta^2_j}$.
By definition, indexed nonces or variables appear in $\instG{\Gamma'_i}{i}{n}$ only in its domain, and as parts
of union types of the form $\LRT{m_1}{p_1}\orT\ldots\LRT{m_n}{p_n}$.
This union type is left unchanged by $\delta^1_i\delta^2_j$: since $i\neq j$,
$i\neq 2$, and $j\neq 1$, $\delta^1_i\delta^2_j$ is indeed only performing a permutation of the indices.
Hence, $\instG{\Gamma'_i}{i}{n}\delta^1_i\delta^2_j = \instG{\Gamma'_i}{1}{n}$.
Thus $\Gammaij{i}{i}{j}\in \branch{\instG{\Gamma'_i}{1}{n}}$.
Therefore, $(\cij{i}{i}{j}, \Gammaij{i}{i}{j})\in\instCst{C}{1}{n}$.


Note that $\dom{\Gammaij{i}{i}{j}}$ only contains variables indexed by $1$;
and that, since $\var{c_i}\subseteq\dom{\Gamma_i}$, we have $\var{\cij{i}{i}{j}}\subseteq\dom{\Gammaij{i}{i}{j}}$.

Similarly, if $j\neq i$, $i \neq 2$ and $j\neq 1$, $(\cij{j}{i}{j}, \Gammaij{j}{i}{j})\in\instCst{C}{2}{n}$.
Note that $\dom{\Gammaij{j}{i}{j}}$ only contains variables indexed by 2;
and that $\var{\cij{j}{i}{j}}\subseteq\dom{\Gammaij{j}{i}{j}}$.

Similarly, we also have $(c', \Gammap{i})\in \instCst{C'}{1}{n}$.

\medskip

Note that for all $(c'',\Gamma'')\in\instCst{C}{1}{n}$, and all $(c''',\Gamma''')\in\instCst{C'}{1}{n}$,
and since by assumption:
\[\forall (c,\Gamma)\in C.\; \forall (c',\Gamma')\in C'.\; \dom{\Gamma}\cap\dom{\Gamma'} = \emptyset,\]
we know that $\Gamma''$ and $\Gamma'''$ are compatible.
In particular this applies to all the $\Gammaij{i}{i}{j}$ and $\Gamma'$ (as well as $\Gammaij{i}{i}{j}$ and $\Gammap{i}$).

Moreover, for all $(c'',\Gamma'')\in\instCst{C}{2}{n}$, and all $(c''',\Gamma''')\in\instCst{C'}{1}{n}$,
since $\dom{\Gamma'''}\subseteq\{x_1\;|\; x\in\X\}$, and $\dom{\Gamma''}\subseteq\{x_2\;|\; x\in\X\}$,
$\Gamma''$ and $\Gamma'''$ are also compatible.
This in particular applies to $\Gammaij{j}{i}{j}$ for all $i\neq j\in\unn$ and $\Gamma'$
(as well as $\Gammaij{j}{i}{j}$ and $\Gammap{i}$).

\medskip

If $C$ is empty, then so is $\UCn$ and the claim clearly holds.
Let us now assume that $C$ is not empty. Hence for all $i\in\unn$, $\Cin$ is not empty.

The procedure for $c, \Gamma$ 
is as follows:
\begin{enumerate}
\item We compute $\stepI_\Gamma(c)$.
Following the notations used in the procedure, we have
\[F = \{x\in\dom{\Gamma} \;|\;\exists m,n,l,l'.\;\Gamma(x) = \LRTnewnew{\noncetypelab{l}{1}{m}}{\noncetypelab{l'}{1}{n}}\},
\]
and we write
$(\bc,\Gammao)\eqdef\stepI_\Gamma(c)$.

For all $i\in\unn$, let $(\bci{i}, \Gammao_i) \eqdef \stepI_{\Gamma_i}(c_i)$.
Let also $(\bc', \Gammao')\eqdef \stepI_{\Gamma'}(c')$.
We have $\bc=(\cup_{1\leq i\leq n} \bci{i})\cup \bc'$, and $\Gammao = (\cup_{1\leq i \leq n} \Gammao_i)\cup\Gammao'$.

\medskip

For all $h, i, j\in\unn$, such that either $i\neq j$ and $i\neq 2$ and $j\neq 1$, or $i=j$,
let also $(\bcij{h}{i}{j}, \Gammaoij{h}{i}{j})\eqdef\stepI_{\Gammaij{h}{i}{j}}(\cij{h}{i}{j})$.
Similarly, for all $i\in\unn$, let also $(\bcp{i},\Gammaop{i})\eqdef\stepI_{\Gammap{i}}(c')$.
Since, for $i\neq j$, $(\cij{h}{i}{j}, \Gammaij{h}{i}{j}) = (c_h,\Gamma_h)\delta^1_i\delta^2_j$,
it can easily be shown (by induction on the terms) that
$(\bcij{h}{i}{j}, \Gammaoij{h}{i}{j}) = (\bci{h}, \Gammao_h)\delta^1_i\delta^2_j$.
Similarly, 
$(\bcij{h}{i}{i}, \Gammaoij{h}{i}{i}) = (\bci{h}, \Gammao_h)\delta^1_i$.
Finally, we similarly also have 
$(\bcp{i},\Gammaop{i})=(\bc',\Gammao')\sigma^1_p$.

\bigskip
\item\label{item:splitting} We compute $\cc \eqdef \stepII_{\Gammao}(\bc)$.

Note that, by the assumption that $C$ and $C'$ are obtained by typing processes, all the environments
$\Gamma$, $\Gamma_i$, $\Gamma'$, $\Gammaij{h}{i}{j}$, $\Gammaij{h}{i}{i}$, $\Gamma'_i$ contain the same keys,
associated with the same labels.
The same is thus true for $\Gammao$, $\Gammao_i$, $\Gammao'$, $\Gammaoij{h}{i}{j}$, $\Gammaoij{h}{i}{i}$, $\Gammao'_i$.
Hence, $\stepII_{\Gammao}$, $\stepII_{\Gammao_i}$, $\stepII_{\Gammao'}$, $\stepII_{\Gammaoij{h}{i}{j}}$, $\stepII_{\Gammaoij{h}{i}{i}}$, $\stepII_{\Gammao'_i}$
all denote the same function.

For all $i\in\unn$, let $\cci{i}\eqdef\stepII_{\Gammao}(\bci{i})$.
Similarly, let $\cc'=\stepII_{\Gammao}(\bc')$.
It is clear that $\cc = (\cup_{1\leq i \leq n} \cci{i})\cup\cc'$.

Similarly, for all $h, i, j\in\unn$, such that either $i\neq j$ and $i\neq 2$ and $j\neq 1$, or $i=j$,
let $\ccij{h}{i}{j}\eqdef\stepII_{\Gammao}(\bcij{h}{i}{j})$.
Let also $\ccp{i}\eqdef\stepII_{\Gammao}(\bcp{i})$.

It can easily be seen 
that
for all $h\in\unn$, all $i\neq j\in\unn$ such that $i\neq 2$ and $j\neq 1$, since $\bcij{h}{i}{j} = \bci{h}\delta^1_i\delta^2_j$,
we have $\ccij{h}{i}{j} = \cci{h}\delta^1_i\delta^2_j$.
Similarly, $\ccij{h}{i}{i} = \cci{h}\delta^1_i$.
Finally, we similarly also have $\ccp{i} = \cc'\delta^1_i$.

\bigskip
\item We check that $\stepIII_{\Gammao}(\cc)$ holds, \ie that each $M\eqC N\in\cc$ has the correct form (with respect to the definition of $\stepIII$).

If $M\eqC N\in \cc$, either $M\eqC N\in \cc'$, or there exists $i\in\unn$ such that $M\eqC N\in\cci{i}$.

\begin{itemize}
\item In the first case, $M \eqC N\in\cc'$.
By assumption,
$\instCst{C}{1}{n}$ and $\instCst{C}{2}{n}$ are not empty.
Hence there exist $(c'', \Gamma'')\in\instCst{C}{1}{n}$ and $(c''', \Gamma''')\in\instCst{C}{2}{n}$.
Thus, $(c''\cup c'''\cup c', \Gamma''\cup \Gamma''' \cup \Gamma') \in \Cudp$
(as noted previously, $\Gamma''$, $\Gamma'''$, and $\Gamma'$ are compatible).
Hence, by assumption, $\checkconst(\{(c''\cup c'''\cup c', \Gamma''\cup \Gamma''' \cup \Gamma')\})$ succeeds.

If $\cc''=\stepII_{\Gammao}(fst(\stepI_{\Gamma''}(c'')))$, and $\cc'''=\stepII_{\Gammao}(fst(\stepI_{\Gamma'''}(c''')))$,
then $\cc''\cup \cc'''\cup \cc' = \stepII_{\Gammao}(fst(\stepI_{\Gamma'' \cup \Gamma''' \cup \Gamma'}(c''\cup c'''\cup c')))$.
Therefore, $\stepIII_{\Gammao}(\cc''\cup \cc'''\cup \cc') = \mathtt{true}$.

In particular, $M \eqC N\in\cc'$ has the correct form.

\item In the second case, $M\eqC N\in \cci{i}$ for some $i\in\unn$.

Let $M'= M\delta^1_i$ and $N'=N\delta^1_i$.
Since $\ccij{i}{i}{i} = \cci{i}\delta^1_i$, we have $M'\eqC N'\in\ccij{i}{i}{i}$.

By assumption,
$\instCst{C}{2}{n}$ is not empty, hence there exists $(c'', \Gamma'')\in\instCst{C}{2}{n}$.
Thus, $(\cij{i}{i}{i}\cup c''\cup c', \Gammaij{i}{i}{i} \cup \Gamma'' \cup \Gamma') \in \Cudp$
(as noted previously, $\Gammaij{i}{i}{i}$, $\Gamma''$, and $\Gamma'$ are compatible).
Hence, by assumption, $\checkconst(\{(\cij{i}{i}{i}\cup c''\cup c', \Gammaij{i}{i}{i} \cup \Gamma'' \cup \Gamma')\})$ succeeds.
If $\cci{}''=\stepII_{\Gammao}(fst(\stepI_{\Gamma''}(c'')))$,
then $\ccij{i}{i}{i}\cup \cci{}''\cup \cc'=\stepII_{\Gammao}(fst(\stepI_{\Gamma''_{i,1} \cup \Gamma'' \cup \Gamma'}(\ccij{i}{i}{i}\cup c''\cup c'))$.
Therefore, $\stepIII_{\Gammao}(\ccij{i}{i}{i}\cup \cci{}''\cup \cc')$ holds.

In particular, $M'\eqC N'\in\ccij{i}{i}{i}$, has the correct form.
It follows by examining all the cases 
and using the fact that for all $m_i$, $m_j$, if $m_i$ is associated with the type $\noncetypelab{l}{a}{m_i}$ and $m_j$ with
$\noncetypelab{l'}{a}{m_j}$ then $l=l'$,
that $M\eqC N$ also has the correct form.
\end{itemize}

Therefore, $\stepIII_{\Gammao}(\cc)$ holds.

\bigskip
\item Finally, we check that $\stepIV_{\Gammao}(\cc)$ holds.
Let $M_1\eqC N_1 \in \cc$ and $M_2\eqC N_2 \in \cc$.
Let us prove the property in the case where $M_1$ and $M_2$ are unifiable with a most general unifier $\mu$.
The case where $N_1$ and $N_2$ are unifiable is similar.

Let then $\alpha$ be the restriction of $\mu$ to
$\{x\in\var{M_1}\cup\var{M_2}\;|\; \Gammao(x)=\L \;\wedge\; \mu(x)\in\N\}$.

We have prove that $N_1\alpha = N_2\alpha$.

\medskip

Since we already have $\cc = (\cup_{1\leq i \leq n} \cci{i})\cup\cc'$, we know that:
\begin{itemize}
\item either there exists $i\in\unn$ such that $M_1\eqC N_1\in\cci{i}$;
\item or $M_1\eqC N_1\in\cc'$;
\end{itemize}
and
\begin{itemize}
\item either there exists $j\in\unn$ such that $M_2\eqC N_2\in\cci{j}$ ;
\item or $M_2\eqC N_2\in\cc'$.
\end{itemize}

Let us first prove the case where there exist $i, j\in\unn$ such that $M_1\eqC N_1\in\cci{i}$ and $M_2\eqC N_2\in\cci{j}$.
We distinguish two cases.
\begin{itemize}
\item\case{if $i\neq j$:}
The property to prove is symmetric between $M_1\eqC N_1 \in \cc$ and $M_2\eqC N_2 \in \cc$.
Hence without loss of generality we may assume that $i\neq 2$ and $j\neq 1$.
Indeed, if we assume that the property can be proved in that case,
then in the case where $i = 2$ or $j = 1$, we may exchange the two constraints.
The property holds for $M_2\eqC N_2 \in \cci{j}$ and $M_1\eqC N_1 \in \cci{i}$:
as $i\neq j$, and $i=2$ or $j=1$, we know that $i\neq 1$ and $j\neq 2$.
Then by symmetry it also holds for $M_1\eqC N_1\in\cci{i}$ and $M_2\eqC N_2\in\cci{j}$.

Let us hence assume that $i\neq 2$ and $j\neq 1$.

Let then $M'_1= M_1\delta^1_i\delta^2_j$, $N'_1=N_1\delta^1_i\delta^2_j$, $M'_2= M_2\delta^1_i\delta^2_j$,
$N'_2=N_2\delta^1_i\delta^2_j$.

Since $\ccij{i}{i}{j} = \cci{i}\delta^1_i\delta^2_j$, we have $M'_1\eqC N'_1\in\ccij{i}{i}{j}$.
Similarly, $M'_2\eqC N'_2\in\ccij{j}{i}{j}$.

Since $M_1$ and $M_2$ are unifiable,
then so are $M'_1$ and $M'_2$, with a most general unifier $\mu'$ which satisfies
$\mu(x)=t \Leftrightarrow \mu'(x\delta^1_i\delta^2_j) = t\delta^1_i\delta^2_j$.

Let then $\alpha'$ be the restriction of $\mu'$ to
$\{x\in\var{M_1'}\cup\var{M_2'}\;|\; (\Gammaoij{i}{i}{j} \cup \Gammaoij{j}{i}{j})(x)=\L \;\wedge\; \mu'(x)\in\N \text{ is a nonce}\}$.

Similarly $\alpha'$ is such that $\forall x\in\dom{\alpha'}.\forall n.\; \alpha(x) = n \Leftrightarrow \alpha'(x\delta^1_i\delta^2_j) = n\delta^1_i\delta^2_j$, \ie
$\delta^1_i\delta^2_j\alpha'\delta^1_i\delta^2_j = \alpha$.

By assumption, $\checkconst(\{(\cij{i}{i}{j}\cup \cij{j}{i}{j}\cup c', \Gammaij{i}{i}{j} \cup \Gammaij{j}{i}{j} \cup \Gamma')\})$ succeeds
since $(\cij{i}{i}{j}\cup \cij{j}{i}{j}\cup c', \Gammaij{i}{i}{j} \cup \Gammaij{j}{i}{j} \cup \Gamma')\in \Cudp$.

Thus, $\stepIV_{\Gammaoij{i}{i}{j} \cup \Gammaoij{j}{i}{j} \cup \Gammao'}(\ccij{i}{i}{j}\cup\ccij{j}{i}{j}\cup\cc')$ holds,
and since $\{M'_1\eqC N'_1, M'_2\eqC N'_2\}\subseteq\ccij{i}{i}{j}\cup\ccij{j}{i}{j}\cup\cc'$, we know that
since $M'_1$, $M'_2$ are unifiable, $N'_1\alpha' = N'_2\alpha'$.

Thus $N'_1\alpha'\delta^1_i\delta^2_j = N'_2\alpha'\delta^1_i\delta^2_j$, \ie, since $i\neq j$, and $\delta^1_i\delta^2_j\alpha'\delta^1_i\delta^2_j = \alpha$, $N_1\alpha = N_2\alpha$.
Therefore the claim holds in this case.

\item\case{if $i = j$}
then let $M'_1= M_1\delta^1_i$, $N'_1=N_1\delta^1_i$, $M'_2= M_2\delta^1_i$, $N'_2=N_2\delta^1_i$.
Since $\ccij{i}{i}{i} = \cci{i}\delta^1_i$, we have $M'_1\eqC N'_1\in\ccij{i}{i}{i}$.
Similarly, $M'_2\eqC N'_2\in\ccij{i}{i}{i}$.

Since $M_1$ and $M_2$ are unifiable,
then so are $M'_1$ and $M'_2$, with a most general unifier $\mu'$ which satisfies 
$\mu(x)=t \Leftrightarrow \mu'(x\delta^1_i) = t\delta^1_i$.

Let then $\alpha'$ be the restriction of $\mu'$ to
$\{x\in\var{M_1'}\cup\var{M_2'}\;|\; \Gammaoij{i}{i}{i} (x)=\L \;\wedge\; \mu'(x)\in\N \text{ is a nonce}\}$.

Similarly $\alpha'$ is such that $\forall x\in\dom{\alpha'}.\forall n.\; \alpha(x) = n \Leftrightarrow \alpha'(x\delta^1_i) = n\delta^1_i$, \ie
$\delta^1_i\alpha'\delta^1_i = \alpha$.

By assumption,
$\instCst{C}{2}{n}$ is not empty, hence there exists $(c'', \Gamma'')\in\instCst{C}{2}{n}$.
Thus, $(\cij{i}{i}{i}\cup c'' \cup c', \Gammaij{i}{i}{i} \cup \Gamma'' \cup \Gamma') \in \Cudp$.
Hence, by assumption, $\checkconst(\{(\cij{i}{i}{i}\cup c'' \cup c', \Gammaij{i}{i}{i} \cup \Gamma'' \cup \Gamma')\})$ succeeds.
If $\cci{}''=\stepII_{\Gammao}(fst(\stepI_{\Gamma''}(c'')))$,
then $\ccij{i}{i}{i}\cup \cci{}'' \cup \cc'=\stepII_{\Gammao}(fst(\stepI_{\Gammaij{i}{i}{i} \cup \Gamma'' \cup \Gamma'}(\cij{i}{i}{i}\cup c''\cup c')))$.

Thus, $\stepIV_{\Gammaoij{i}{i}{i} \cup \Gammao'' \cup \Gammao'}(\ccij{i}{i}{i}\cup \cci{}'' \cup \cc')$ holds, and 
since $\{M'_1\eqC N'_1, M'_2\eqC N'_2\}\subseteq\ccij{i}{i}{i}\cup\cci{}''\cup \cc'$, we know that
$N'_1\alpha' = N'_2\alpha'$.

Thus $N'_1\alpha'\delta^1_i = N'_2\alpha'\delta^1_i$, \ie, since $\delta^1_i\alpha'\delta^1_i$,
$N_1\alpha = N_2\alpha$.
Therefore the claim holds in this case.
\end{itemize}

\medskip

Let us now prove the case where there exists $i\in\unn$ such that $M_1\eqC N_1\in\cci{i}$, and $M_2\eqC N_2\in\cc'$.
The symmetric case, where $M_1\eqC N_1\in\cc'$ and there exists $j\in\unn$ such that $M_2\eqC N_2\in\cci{j}$, is similar.

Let then $M'_1= M_1\delta^1_i$, $N'_1=N_1\delta^1_i$, $M'_2= M_2\delta^1_i$, $N'_2=N_2\delta^1_i$.
Since $\ccij{i}{i}{i} = \cci{i}\delta^1_i$, we have $M'_1\eqC N'_1\in\ccij{i}{i}{i}$.
Similarly, $M'_2\eqC N'_2\in\ccp{i}$.

Since $M_1$ and $M_2$ are unifiable,
then so are $M'_1$ and $M'_2$, with a most general unifier $\mu'$ which satisfies 
$\mu(x)=t \Leftrightarrow \mu'(x\delta^1_i) = t\delta^1_i$.

Let then $\alpha'$ be the restriction of $\mu'$ to
$\{x\in\var{M_1'}\cup\var{M_2'}\;|\; (\Gammaoij{i}{i}{i}\cup\Gammap{i})(x)=\L \;\wedge\; \mu'(x)\in\N \text{ is a nonce}\}$.

Similarly $\alpha'$ is such that $\forall x\in\dom{\alpha'}.\forall n.\; \alpha(x) = n \Leftrightarrow \alpha'(x\delta^1_i) = n\delta^1_i$, \ie
$\delta^1_i\alpha'\delta^1_i = \alpha$.

By assumption,
$\instCst{C}{2}{n}$ is not empty, hence there exists $(c'', \Gamma'')\in\instCst{C}{2}{n}$.
Moreover, as noted previously, $(c',\Gammap{i})\in\instCst{C'}{1}{n}$.
Thus, $(\cij{i}{i}{i}\cup c'' \cup c', \Gammaij{i}{i}{i} \cup \Gamma'' \cup \Gammap{i}) \in \Cudp$.

Hence, by assumption,
$\checkconst(\{(\cij{i}{i}{i}\cup c'' \cup c', \Gammaij{i}{i}{i} \cup \Gamma'' \cup \Gammap{i})\})$ succeeds.
If $\cci{}''=\stepII_{\Gammao}(fst(\stepI_{\Gamma''}(c'')))$,
then $\ccij{i}{i}{i}\cup \cci{}'' \cup \ccp{i}=\stepII_{\Gammao}(fst(\stepI_{\Gammaij{i}{i}{i} \cup \Gamma'' \cup \Gammap{i}}(\cij{i}{i}{i}\cup c''\cup c')))$.

Thus, $\stepIV_{\Gammaoij{i}{i}{i} \cup \Gammao'' \cup \Gammaop{i}}(\ccij{i}{i}{i}\cup \cci{}'' \cup \ccp{i})$ holds,
and since $\{M'_1\eqC N'_1, M'_2\eqC N'_2\}\subseteq\ccij{i}{i}{i}\cup\cci{}''\cup \ccp{i}$, we know that
$N'_1\alpha' = N'_2\alpha'$.

Thus $N'_1\alpha'\delta^1_i = N'_2\alpha'\delta^1_i$, \ie, since $\delta^1_i\alpha'\delta^1_i=\alpha$,
$N_1\alpha = N_2\alpha$.

\bigskip

Finally, only the case where $M_1\eqC N_1\in \cc'$ and $M_2\eqC N_2\in\cc'$ remains.
By assumption,
$\instCst{C}{1}{n}$ and $\instCst{C}{2}{n}$ are not empty, hence there exist
$(c'', \Gamma'')\in\instCst{C}{1}{n}$ and $(c''', \Gamma''')\in\instCst{C}{2}{n}$.
Thus, $(c''\cup c''' \cup c', \Gamma'' \cup \Gamma''' \cup \Gamma') \in \Cudp$.

Hence, by assumption, $\checkconst(\{(c''\cup c''' \cup c', \Gamma'' \cup \Gamma''' \cup \Gamma')\})$ succeeds.
If $\cc''=\stepII_{\Gammao}(fst(\stepI_{\Gamma''}(c'')))$ and $\cc'''=\stepII_{\Gammao}(fst(\stepI_{\Gamma'''}(c''')))$,
then $\cc''\cup \cc''' \cup \cc'=\stepII_{\Gammao}(fst(\stepI_{\Gamma'' \cup \Gamma''' \cup \Gamma'}(c''\cup c'''\cup c')))$.

Thus, $\stepIV_{\Gammao'' \cup \Gammao''' \cup \Gammao'}(\cc''\cup \cc''' \cup \cc')$ holds, and since $\{M_1\eqC N_1, M_2\eqC N_2\}\subseteq\cc''\cup\cc'''\cup \cc'$, we know that
$N_1\alpha =  N_2\alpha$.

Therefore the claim holds in this case, which concludes the proof that $\stepIV_{\Gammao}(\cc)$ holds.

\end{enumerate}

Therefore, for every $(c,\Gamma)\in(\UCn)\UnionCart \instCst{C'}{1}{n}$, $\checkconst(\{(c,\Gamma)\})$ succeeds,
which proves the claim.
\end{proof}

\bigskip

\bigskip
\bigskip

This next lemma is a more general version of Theorem~\ref{lem:consistency-stars-sound}.

\begin{lemma}
\label{lem-proof:consistency-stars-sound}
For all 
$(c,\Gamma)$ such that
$\var{c} \subseteq \dom{\Gamma}$ which only contains variables indexed by 1 or 2, and all names in $c$ have finite nonce types,
if $\checkconststar(\{(c,\Gamma)\})$ succeeds,
then for all $\Gamma''\in\branch{\Gamma'}$,
where $\Gamma' = \Gamma[\bigvee_{1\leq i\leq n} \LRTn{l}{1}{m_i}{l'}{1}{p_i} \;/\; \LRTn{l}{\infty}{m}{l'}{\infty}{p}]_{m,p\in\N}$,
$\checkconst(\{(c,\Gamma'')\})$ succeeds.
\end{lemma}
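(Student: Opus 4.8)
The plan is to compare the run of $\checkconststar$ on $(c,\Gamma)$ with its run on $(c,\Gamma'')$, exploiting that $c$ is unchanged and that $\Gamma''$ differs from $\Gamma$ only in that every infinite refinement type $\LRTn{l}{\infty}{m}{l'}{\infty}{p}$ occurring in $\Gamma$ has been replaced by a single finite refinement $\LRTn{l}{1}{m_i}{l'}{1}{p_i}$, namely the disjunct picked by the branch; in particular $\Gamma''$ contains no infinite refinement type at all. Write $G$ for the set of variables carrying an infinite refinement type in $\Gamma$: these are exactly the variables that $\stepI_\Gamma$ leaves untouched but that $\stepI_{\Gamma''}$ now substitutes away (in $\Gamma''$ they carry a finite refinement). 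Let $\sigma_G,\sigma_G'$ be the left/right substitutions doing this, i.e. $\sigma_G(x)=m_i$ and $\sigma_G'(x)=p_i$ when $\Gamma''(x)=\LRTn{l}{1}{m_i}{l'}{1}{p_i}$.

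First I would dispatch $\stepI$, $\stepII$, $\stepIII$. Write $(\bc,\Gammao)=\stepI_\Gamma(c)$ and $(\bc'',\Gammao'')=\stepI_{\Gamma''}(c)$. Since the two $\stepI$ substitutions agree on their common variables and name-substitutions compose (Lemma~\ref{lem-proof:cons-subset}), one gets $\bc''=\inst{\bc}{\sigma_G}{\sigma_G'}$, while $\Gammao$ and $\Gammao''$ assign the same types to all keys and to all names (expanding a refinement alters neither key nor name types, and $\Gammao''$ gives the fresh $m_i,p_i$ the same labels $l,l'$ as the infinite refinement they came from). Because substituting a variable by a name never changes a term's head symbol nor the key decorating an encryption or signature, $\stepII$ makes identical opening decisions in both environments, so $\cc''=\inst{\cc}{\sigma_G}{\sigma_G'}$ with $\cc=\stepII_{\Gammao}(\bc)$. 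Then $\stepIII_{\Gammao''}(\cc'')$ follows from $\stepIII_{\Gammao}(\cc)$: every top-level element of $\cc$ is already a non-variable (a bare variable fails $\stepIII$), so head shapes survive $\sigma_G,\sigma_G'$; and the witnessing ``secret nonce or key directly under pairs'' demanded in the $\HASHNA$/$\AENCNA$/$\SIGNNA$ cases is always a name or key symbol, hence untouched by $\sigma_G$, with its secret label preserved by the label agreement noted above.

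The crux is $\stepIV$. Take $M_1\eqC N_1,M_2\eqC N_2\in\cc$ such that in $\cc''$ the left sides $M_1\sigma_G,M_2\sigma_G$ are unifiable with mgu $\mu''$; I must derive $N_1\sigma_G'\alpha''=N_2\sigma_G'\alpha''$, where $\alpha''$ is the $\stepIV$-restriction of $\mu''$, from $\stepIV_{\Gammao}(\cc)$. Since $\sigma_G\mu''$ unifies $M_1$ and $M_2$, their mgu $\mu$ exists and $\sigma_G\mu''=\mu\lambda$ for some $\lambda$. The key computation is that $\mu$ satisfies the $\stepIV_{\Gammao}$ side condition: for $x\in G\cap\dom\mu$ we have $x\sigma_G=m_i$, hence $m_i=x\sigma_G\mu''=x\mu\lambda$, and as $m_i$ is atomic $x\mu$ is either the indexed name $m_i$ or a variable, which are exactly the permitted shapes. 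Thus $\stepIV_{\Gammao}(\cc)$ gives $N_1\alpha\theta=N_2\alpha\theta$, with $\theta$ sending $x\mapsto p_i$ whenever $\mu(x)=m_i$ and $\alpha$ the low-nonce part of $\mu$. It then suffices to produce one substitution $\rho$ fixing all names with $N\sigma_G'\alpha''=N\alpha\theta\rho$ for $N\in\{N_1,N_2\}$, since post-composing $\rho$ with $N_1\alpha\theta=N_2\alpha\theta$ yields the goal. I would define $\rho$ by variable class: for $x\in G$ set $\rho(x)=p_i$ (matching $\sigma_G'(x)=p_i$ against $\theta$, which already produces $p_i$ precisely when unification pins $x$ to $m_i$); for $y$ with $\Gammao(y)=\L$, use $y\sigma_G=y$ with $\sigma_G\mu''=\mu\lambda$ to show $\mu(y)$ is a nonce iff $\mu''(y)$ is the same nonce, so $\alpha$ and $\alpha''$ coincide there and we set $\rho(y)=\alpha''(y)$; for all other variables $\alpha,\alpha'',\theta,\sigma_G'$ act as the identity and we set $\rho(y)=y$.

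The main obstacle is exactly this final matching in $\stepIV$: reconciling the left-index-to-right-index translation performed by $\theta$ in the $\Gamma$-world with the concrete right substitution $\sigma_G'$ in the $\Gamma''$-world (and $\alpha$ with $\alpha''$), while correctly handling the variables that unification leaves free. The saving observation is that $N_1\alpha\theta=N_2\alpha\theta$ is preserved under post-composition with an \emph{arbitrary} $\rho$, so $\rho$ need only reproduce $\sigma_G'\alpha''$ from $\alpha\theta$ term by term and need \emph{not} be injective; this is what makes the possibly index-colliding branch choice in $\Gamma''$ harmless, and it is the point where the case analysis on variable types must be carried out carefully.
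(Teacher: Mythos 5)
Your proof is correct and follows essentially the same route as the paper's: you too commute $\stepI$ and $\stepII$ with the left/right instantiation (your $\sigma_G,\sigma_G'$ are the paper's $\sigma_\Lleft,\sigma_\Rright$, and your $\cc''=\inst{\cc}{\sigma_G}{\sigma_G'}$ is the paper's point of departure), transfer $\stepIII$ shape by shape, and in $\stepIV$ factor the expanded-world unifier through the star-world one ($\sigma_G\mu''=\mu\lambda$), verify the side condition via atomicity of $m_i$, invoke the star-world check, and post-compose a substitution to recover the expanded equation --- the paper merely packages your single $\rho$ as the factorization $\sigma_\Rright=\theta'\tau'$ followed by an extension of the star-world restriction to the expanded one. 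One caution: your ``iff'' for low variables holds only left-to-right ($\mu''(y)$ can be a nonce while $\mu(y)$ is a variable that $\lambda$ sends to that nonce, so the star-world restriction is merely included in, not equal to, the expanded one --- exactly the one-sided inclusion the paper proves), but because you define $\rho(y)=\alpha''(y)$ rather than $\alpha(y)$, this extra case is absorbed and your per-variable verification goes through unchanged.
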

\begin{proof}

\newcommand{\bc}{\overline{c}}
\newcommand{\bci}[1]{\overline{c_{#1}}}
\newcommand{\cci}[1]{\widetilde{c_{#1}}}

Let $n\in\mathbb{N}$.

Let $(c,\Gamma)$ be as assumed in the statement of the lemma.

Let us assume that $\checkconststar(\{(c,\Gamma)\})$ succeeds.
Let $\Gamma' = \Gamma[\bigvee_{1\leq i\leq n} \LRTn{l}{1}{m_i}{l'}{1}{p_i} \;/\; \LRTn{l}{\infty}{m}{l'}{\infty}{p}]_{m,p\in\N}$,
and let $\Gamma''\in\branch{\Gamma'}$.

\medskip

The procedure $\checkconst(\{(c, \Gamma'')\})$ is as follows:
\begin{enumerate}
%
%
%
%

\bigskip
\item\label{item:starinstantiation} We compute $(\c,\Gammao'')=\stepI_{\Gamma''}(c)$.
Following the notations in the procedure, we denote
\[F = \{x\in\dom{\Gamma''} \;|\;\exists m,p,l,l'.\;\Gamma''(x) = \LRTnewnew{\noncetypelab{l}{1}{m}}{\noncetypelab{l'}{1}{p}}\}.
\]
Let $F' = \{x\in\dom{\Gamma} \;|\;\exists m,p,l,l'.\;\Gamma(x) = \LRTnewnew{\noncetypelab{l}{1}{m}}{\noncetypelab{l'}{1}{p}}\}$;
and $F'' = \{x\in\dom{\Gamma} \;|\;\exists m,p,l,l'.\;\Gamma(x) = \LRTnewnew{\noncetypelab{l}{\infty}{m}}{\noncetypelab{l'}{\infty}{p}}\}$.

It is easily seen from the definition of $\Gamma'$ that $F = F'\uplus F''$.

By definition of $\stepI_{\Gamma''}(c)$, $\Gammao''$ contains $\Gamma''|_{\dom{\Gamma''}\backslash F}$.

Let $(\c',\Gammao) = \stepI_{\Gamma}(c)$.

It is clear from the definitions of $\Gamma'$ and $\Gamma''$
that for all $x\in F''$, there exists $i\in\unn$ and $m, p, l, l'$ such that $\Gamma(x) = \LRTnewnew{\noncetypelab{l}{\infty}{m}}{\noncetypelab{l'}{\infty}{p}}$
and $\Gamma''(x)=\LRTnewnew{\noncetypelab{l}{1}{m_i}}{\noncetypelab{l'}{1}{p_i}}$.
Let $\sigma_\Lleft$ and $\sigma_\Rright$ be the substitutions defined by
\[\dom{\sigma_\Lleft}=\dom{\sigma_\Rright} = F''\]
and
\[\forall x\in F''.\forall m,p \in \N.\;\forall l,l'.\; \forall i\in\unn.\; \Gamma''(x)=\LRTnewnew{\noncetypelab{l}{1}{m_i}}{\noncetypelab{l'}{1}{p_i}} \Rightarrow
(\sigma_\Lleft(x)=m_i \;\wedge\;\sigma_\Rright(x)=p_i).\]

It is clear from the definition of $\c$ and $\c'$
that $\c = \inst{\c'}{\sigma_\Lleft}{\sigma_\Rright}$.

\bigskip
\item\label{item:starsplitting} We compute $\cc\eqdef\stepII_{\Gammao}(\c)$.

Similarly, let $\cc'\eqdef\stepII_{\Gammao}(\c')$.

It can easily be seen by induction on the reduction $(\c,\emptyset)\splitr{\Gammao}^*(c_1,c_2)$
(using the fact that $\splitr{\Gammao} = \splitrd$)
that $\cc = \inst{\cc'}{\sigma_\Lleft}{\sigma_\Rright}$.

\bigskip
\item We check that $\stepIII_{\Gammao''}(\cc)$ holds.

Let $u\eqC v\in\cc$.
Since $\cc = \inst{\cc'}{\sigma_\Lleft}{\sigma_\Rright}$,
there exists $u'\eqC v' \in\cc'$ such that $u=u'\sigma_\Lleft$ and $v=v'\sigma_\Rright$.

Since $\checkconststar(\{(c,\Gamma)\})=\mathtt{true}$,
we know that $u'$ and $v'$ have the required form.
Note that by definition of $\Gammao''$,
the keys which are low in $\Gammao''$, \ie the keys $k\in\K$ such that there exist $T$ such that $\Gammao''(k)=\skey{\L}{T}$,
are exactly the keys which are low in $\Gammao$.

It clearly follows, by examining all cases for $u'$ and $v'$,
that $u'\sigma_\Lleft$ and $v'\sigma_\Rright$, \ie $u$ and $v$,
also have the required form.

Therefore, $\stepIII_{\Gammao''}(\cc)$ holds.

\bigskip
\item Finally, we check the condition $\stepIV_{\Gammao''}(\cc)$.

Let $M_1\eqC N_1 \in \cc$ and $M_2\eqC N_2 \in \cc$.
Since $\cc = \inst{\cc'}{\sigma_\Lleft}{\sigma_\Rright}$,
there exist $M_1'\eqC N_1'\in \cc'$ and $M_2'\eqC N_2'\in \cc'$ such that
$M_1=M_1'\sigma_\Lleft$, $N_1=N_1'\sigma_\Rright$,
$M_2=M_2'\sigma_\Lleft$, and $N_2=N_2'\sigma_\Rright$.

Let us prove the first direction of the equivalence, \ie the case where $M_1$, $M_2$ are unifiable.
The proof for the case where $N_1$, $N_2$ are unifiable is similar.

If $M_1$, $M_2$ are unifiable, let $\mu$ be their most general unifier.
We have $M_1\mu = M_2\mu$, \ie $(M_1'\sigma_\Lleft)\mu = (M_2'\sigma_\Lleft)\mu$.

Let $\tau$ denote the substitution $\sigma_\Lleft\mu$.
Since $M_1'\tau = M_2'\tau$, $M_1'$ and $M_2'$ are unifiable.
Let $\mu'$ be their most general unifier.
There exists $\theta$ such that $\tau = \mu'\theta$.

Let also $\alpha$ be the restriction of $\mu$ to
$\{x\in\var{M_1}\cup\var{M_2}\;|\; \Gammao''(x)=\L \;\wedge\; \mu(x)\in\N\}$.

Note that $\Gammao''(x) = \L \Leftrightarrow \Gammao(x)=\L$.

We have to prove that $N_1\alpha = N_2\alpha$.

\medskip

Let $x\in\var{M_1'}\cup\var{M_2'}$ such that there exist $m, p, l, l'$ such that $\Gammao(x)=\LRTn{l}{\infty}{m}{l'}{\infty}{p}$, \ie $x\in F''$.
By definition of $\sigma_\Lleft$ (point~\ref{item:starinstantiation}),
there exists $i\in\unn$ such that $x\sigma_\Lleft = m_i$ (and $x\sigma_\Rright=p_i$).
Hence, we have
\[(x\mu')\theta = x\tau = x\sigma_\Lleft\mu = (x\sigma_\Lleft)\mu = m_i\mu = m_i.\]
Thus, $x\mu'$ can only be either a variable $y$ such that $y\theta=m_i$,
or the nonce $m_i$.

Therefore, $\mu'$ satisfies the conditions on the most general unifier expressed in $\stepIVstar_{\Gammao}(\cc')$.

\medskip

Let $x\in\var{M_1}\cup\var{M_2}$ such that $\Gammao''(x)=\L$ and $\mu(x)\in\N$.
We have $(x\mu')\theta = x\tau = x\sigma_\Lleft\mu = (x\sigma_\Lleft)\mu = x\mu = \mu(x)\in\N$.
Thus, $x\mu'$ can only be either a variable $y$ (such that $y\theta=\mu(x)$),
or the nonce $\mu(x)$.

Conversely, let $x\in\var{M_1'}\cup\var{M_2'}$ such that $\Gammao(x)=\L$ and $\mu'(x)\in\N$.
We have $x\mu = (x\sigma_\Lleft)\mu = x\tau = (x\mu')\theta = \mu'(x)$.

\medskip

Let then $\theta'$ be the substitution with domain 
$\{x\in\var{M_1'}\cup\var{M_2'}\;|\; \exists m, p, l, l'.\exists i\in\llbracket 1, n\rrbracket.\;\Gammao(x)=\LRTn{l}{\infty}{m}{l'}{\infty}{p} \;\wedge\; \mu'(x)=m_i\}$
such that $\forall x\in\dom{\theta'}.\;\theta'(x)=p_i$ if $\mu'(x)=m_i$ and $\Gammao(x)=\LRTn{l}{\infty}{m}{l'}{\infty}{p}$.

Let also $\alpha'$ be the restriction of $\mu'$ to
$\{x\in\var{M_1'}\cup\var{M_2'}\;|\; \Gammao(x)=\L \;\wedge\; \mu'(x)\in\N\}$.

Since $\checkconststar(\{(c,\Gamma)\}) = \mathtt{true}$, we know that
$\stepIVstar_{\Gammao}(\cc')$ holds.
Since $M_1'\eqC N_1'\in\cc'$, and $M_2'\eqC N_2'\in\cc'$,
this implies that $N_1'\alpha'\theta' = N_2'\alpha'\theta'$.

\medskip

As we have just shown, for all $x\in\dom{\theta'}$, there exists $i\in\unn$ such that $x\sigma_\Lleft = m_i$ and
$x\sigma_\Rright=p_i$, and $\mu'(x)$ is either $m_i$ or a variable.
By definition of $\dom{\theta'}$, only the case where $\mu'(x)=m_i$ is actually possible, and we have $\theta'(x)=p_i$.

Thus, $\forall x\in\dom{\theta'}.\;\sigma_\Rright(x)=\theta'(x)$.

It then is clear from the definitions of the domains of $\theta'$ and $\sigma_\Rright$ that there exists $\tau'$ such that
$\sigma_\Rright = \theta'\tau'$.

\medskip

Thus, since we have shown that $N_1'\alpha'\theta' = N_2'\alpha'\theta'$, we have $(N_1'\alpha'\theta')\tau' = (N_2'\alpha'\theta')\tau'$, that is to say $N_1'\alpha'\sigma_\Rright = N_2'\alpha'\sigma_\Rright$,
\ie, since $\alpha'$ and $\sigma_\Rright$ have disjoint domains, and are both ground, $N_1\alpha' = N_2\alpha'$.

\medskip

Moreover, we have shown that 
for all $x\in\var{M_1'}\cup\var{M_2'}$ such that $\Gammao(x)=\L$ and $\mu'(x)\in\N$, $\mu(x) = \mu'(x)$.
That is to say that
for all $x\in\dom{\alpha'}$, $\mu(x) = \alpha'(x)$.

In addition, it is clear from the definition of $\sigma_\Lleft$ that
\[\{x\in\var{M_1'}\cup\var{M_2'}\;|\; \Gammao(x)=\L\} = \{x\in\var{M_1}\cup\var{M_2}\;|\; \Gammao''(x)=\L\}.\]

Hence
\begin{align*}
\dom{\alpha} &= \{x\in\var{M_1}\cup\var{M_2}\;|\; \Gammao''(x)=\L \;\wedge\;\mu(x)\in\N\}\\
&= \{x\in\var{M_1'}\cup\var{M_2'}\;|\; \Gammao(x)=\L\;\wedge\;\mu(x)\in\N\}\\
&\supseteq \{x\in\var{M_1'}\cup\var{M_2'}\;|\; \Gammao(x)=\L\;\wedge\;\mu(x)\in\N\;\wedge\;\mu'(x)\in\N\}\\
&= \{x\in\var{M_1'}\cup\var{M_2'}\;|\; \Gammao(x)=\L\;\wedge\;\mu'(x)\in\N\}\\
&=\dom{\alpha'}.
\end{align*}

Therefore, $\forall x\in\dom{\alpha'}.\; x\in\dom{\alpha}\;\wedge\;\alpha'(x)=\alpha(x)$.
Thus there exists $\alpha''$ such that $\alpha = \alpha'\alpha''$.

\medskip
Since we already have $N_1\alpha' = N_2\alpha'$, this implies that $N_1\alpha = N_2\alpha$, 
which concludes the proof that $\stepIV_{\Gammao''}(\cc)$ holds.
Hence, $\checkconst(\{(c,\Gamma'')\}) = \mathtt{true}$.
\end{enumerate}
\end{proof}

\bigskip
\bigskip
We can now prove the following theorem:

\begin{theorem}
Let $C$, and $C'$ be two constraint sets without any common variable.
\[\checkconststar(\instCstr{C}{1}\UnionCart\instCstr{C}{2}\UnionCart\instCstr{C'}{1}) = \mathtt{true}\;\Rightarrow
\forall n. \;\instCst{C'}{1}{n}\UnionCart(\UCn)\text{ is consistent.}
\]
\end{theorem}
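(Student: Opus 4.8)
The plan is to chain the three preceding results of this subsection---reduction of the size of types, ``two constraints suffice'', and soundness of the procedure---in that order, bridging the product-union structure with one bookkeeping step. Concretely, assuming $\checkconststar(\instCstr{C}{1}\UnionCart\instCstr{C}{2}\UnionCart\instCstr{C'}{1}) = \mathtt{true}$, I would first fix $n$ and prove $\checkconststar(\instCst{C}{1}{n}\UnionCart\instCst{C}{2}{n}\UnionCart\instCst{C'}{1}{n}) = \mathtt{true}$; then invoke Theorem~\ref{thm:consistency-two-to-n} to pass from two sessions to $n$; and finally invoke Theorem~\ref{thm:soundness-proc-consistency} to turn success of the procedure into actual consistency.

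For the first step, I would argue element-wise. Since $\checkconststar$ on a constraint set is by definition the conjunction of its four checks over every element, the hypothesis says these checks succeed on each star element $(c,\Gamma)\in\instCstr{C}{1}\UnionCart\instCstr{C}{2}\UnionCart\instCstr{C'}{1}$. Now take any $(c,\Gamma'')\in\instCst{C}{1}{n}\UnionCart\instCst{C}{2}{n}\UnionCart\instCst{C'}{1}{n}$ and write it as $(c_1\cup c_2\cup c',\,\Gamma_1''\cup\Gamma_2''\cup\Gamma_3'')$ with $(c_1,\Gamma_1'')\in\instCst{C}{1}{n}$, $(c_2,\Gamma_2'')\in\instCst{C}{2}{n}$, $(c',\Gamma_3'')\in\instCst{C'}{1}{n}$. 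Each $\Gamma_j''$ is, by definition of $\instCst{\cdot}{\cdot}{n}$, a branch of the expansion $\instG{\cdot}{\cdot}{n}$ of the corresponding session-renamed star environment $\instGr{\cdot}{\cdot}$, and $(c,\Gamma)\eqdef(c_1\cup c_2\cup c',\,\instGr{\cdot}{1}\cup\instGr{\cdot}{2}\cup\instGr{\cdot}{1})$ is the matching star element. Using Lemma~\ref{lem-proof:inst-branch} (distributivity of $\branch{\cdot}$ over expansion) together with the disjointness of the variable domains of $C$-session-$1$, $C$-session-$2$ and $C'$-session-$1$ (which follows from $C$ and $C'$ sharing no variables and from the session indexing), I would conclude that $\Gamma''$ is itself a branch of the expansion $\Gamma'$ of $\Gamma$. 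Since $c$ then contains only variables indexed by $1$ or $2$ and only finite nonce types, the general reduce-size lemma~\ref{lem-proof:consistency-stars-sound} applies to $(c,\Gamma)$ and gives that $\checkconststar(\{(c,\Gamma'')\})$ succeeds. As this holds for every element, $\checkconststar$ succeeds on the whole expanded product union.

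Applying Theorem~\ref{thm:consistency-two-to-n} to this set (its side condition, that $C$ and $C'$ share no variables, is exactly our hypothesis) yields $\checkconststar((\UCn)\UnionCart\instCst{C'}{1}{n}) = \mathtt{true}$. Finally, because expanding each infinite refinement $\LRTnewnew{\noncetypelab{l}{\infty}{m}}{\noncetypelab{l'}{\infty}{p}}$ into the finite disjunction $\bigvee_{j=1}^{n}\LRTnewnew{\noncetypelab{l}{1}{m_j}}{\noncetypelab{l'}{1}{p_j}}$ and then branching leaves every environment with only finite nonce types, the set $\instCst{C'}{1}{n}\UnionCart(\UCn)$---which equals the one above up to commutativity of $\UnionCart$---contains no infinite nonce type, so Theorem~\ref{thm:soundness-proc-consistency} converts the success of $\checkconststar$ into genuine consistency, as required.

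The main obstacle is the first step. The reduce-size result is phrased for a single $\instCstr{C}{i}$, whereas the input here is a product union mixing two sessions of $C$ with one session of $C'$; the real work is to match each element of the expanded product union with its star counterpart and to check that the hypotheses of Lemma~\ref{lem-proof:consistency-stars-sound} (finite names, variables indexed by $1$ or $2$, and $\Gamma''\in\branch{\Gamma'}$) hold for the \emph{combined} environment, which rests on the distributivity of branching over expansion and product union and on the three variable domains being pairwise disjoint. The remaining two steps are essentially direct citations of Theorems~\ref{thm:consistency-two-to-n} and~\ref{thm:soundness-proc-consistency}.
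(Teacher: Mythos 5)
Your proposal is correct and follows essentially the same route as the paper's proof: the paper chains exactly the same three results (the element-wise reduce-size lemma applied to the combined star element $(c_1\cup c_2\cup c',\,\instGr{\Gamma_1}{1}\cup\instGr{\Gamma_2}{2}\cup\instGr{\Gamma_3}{1})$, then Theorem~\ref{thm:consistency-two-to-n}, then Theorem~\ref{thm:soundness-proc-consistency}), merely presented as a backward chain of ``it suffices to show'' reductions where you argue forward. Your explicit verification of the lemma's side conditions (variables indexed by $1$ or $2$, finite nonce types after renaming, and $\Gamma''\in\branch{\Gamma'}$ via distributivity of branching over expansion) spells out bookkeeping the paper leaves implicit, and is accurate.
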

\begin{proof}
Assume $\checkconststar(\instCstr{C}{1}\UnionCart\instCstr{C}{2}\UnionCart\instCstr{C'}{1}) = \mathtt{true}$.
Let $n>0$. Let us show that $\instCst{C'}{1}{n}\UnionCart(\UCn)$ is consistent.

By Theorem~\ref{thm-proof:soundness-proc-consistency}, it suffices to show that
$\checkconst(\instCst{C'}{1}{n}\UnionCart(\UCn)) = \mathtt{true}$.

By Theorem~\ref{thm-proof:consistency-two-to-n}, it suffices to show that 
$\checkconst(\Ci\UnionCart \Cii\UnionCart\instCst{C'}{1}{n}) = \mathtt{true}$.

By assumption, we know that
$\checkconststar(\instCstr{C}{1}\UnionCart\instCstr{C}{2}\UnionCart\instCstr{C'}{1}) = \mathtt{true}$.

That is to say, for each $(c_1,\Gamma_1)\in C$, $(c_2,\Gamma_2)\in C$, $(c_3,\Gamma_3)\in C'$,
if $c' = \instConst{c_1}{1}{\Gamma_1} \cup \instConst{c_2}{2}{\Gamma_2} \cup \instConst{c_3}{1}{\Gamma_3}$,
and $\Gamma' = \instGr{\Gamma_1}{1}\cup\instGr{\Gamma_2}{2}\cup\instGr{\Gamma_3}{1}$,
$\checkconststar(\{(c',\Gamma')\}) = \mathtt{true}$.

Thus, by Lemma~\ref{lem-proof:consistency-stars-sound},
for all $(c_1,\Gamma_1)\in C$, $(c_2,\Gamma_2)\in C$, $(c_3,\Gamma_3)\in C'$,
if $c' = \instConst{c_1}{1}{\Gamma_1} \cup \instConst{c_2}{2}{\Gamma_2} \cup \instConst{c_3}{1}{\Gamma_3}$,
and $\Gamma' = \instG{\Gamma_1}{1}{n}\cup\instG{\Gamma_2}{2}{n}\cup\instG{\Gamma_3}{1}{n}$,
$\checkconststar(\{(c',\Gamma')\}) = \mathtt{true}$.

That is to say, $\checkconststar(\instCst{C}{1}{n}\UnionCart\instCst{C}{2}{n}\UnionCart\instConst{C'}{1}{n}) = \mathtt{true}$,
which concludes the proof.

\end{proof}

\end{appendices}}
{}
\end{document}